\newcommand\Koppa{\text{\begingroup\fontencoding{LGR}\selectfont\char21\endgroup}}
\DeclareFontFamily{U}{mathx}{}
\DeclareFontShape{U}{mathx}{m}{n}{<-> mathx10}{}
\DeclareSymbolFont{mathx}{U}{mathx}{m}{n}
\DeclareMathAccent{\widehat}{0}{mathx}{"70}
\DeclareMathAccent{\widecheck}{0}{mathx}{"71}
\newcommand{\ph}{{\widehat{u}}}
\newcommand{\Rs}{R^\star}
\newcommand{\Wsv}{W^\natural}
\newcommand{\Ws}{W^\star}
\newcommand{\Wsc}{W{}}
\newcommand{\Wsl}{W^\sharp}
\newcommand{\f}{f_\natural}
\newcommand{\h}{h_\natural}
\newcommand{\re}{r_\natural}
\newcommand{\Gz}{\Gamma_\natural}
\newcommand{\Dz}{\Delta_\natural}
\newcommand{\Gp}{\Gamma_+}
\newcommand{\PH}{\mathcal{P}_H}
\newcommand{\HB}{H_{\BfB}}
\newcommand{\Hchi}{H_\chi}
\newcommand{\PHB}{\mathcal{P}_{\HB}}
\newcommand{\PHchi}{\mathcal{P}_{\Hchi}}
\newcommand{\Pam}{\mathcal{P}_{a, M}}
\newcommand{\Pl}{\mathcal{P}_\ell}
\newcommand{\Ph}{\mathcal{P}_h}
\newcommand{\rs}{r^\star\!}
\newcommand{\BfB}{{\bf B}}
\newcommand{\PX}{\mathcal{P}_\chi}
\newcommand{\Ltheta}{\mathcal{L}_{\theta}}
\newcommand{\ve}{\gamma_\natural}
\newcommand{\wchi}{\widetilde{\chi}}
\newcommand{\rweight}{{\langle r \rangle}_{a}}
\newcommand{\unochi}{{\mathbf{u}_{\mathcal{T}}}}
\newcommand{\bfuT}{\mathbf{u}_{\mathcal{T}}}
\newcommand{\wRs}{\widetilde{R}^\star}
\newcommand{\Ks}{K^\star}
\newcommand{\wpa}{\widetilde{\partial}}
\newcommand{\mcA}{\mathcal{A}}
\newcommand{\mcAs}{\mathcal{A}^\star}
\newcommand{\FM}{{\mathcal{FM}}}
\newcommand{\FMs}{{\mathcal{FM}^\star}}
\newcommand{\WM}{{\mathcal{WM}}}
\newcommand{\WMs}{{\mathcal{WM}^\star}}
\newcommand{\Wdot}[1][-\!q]{{\mathbb{W}^{1\!,1\!;{#1}}_{\!\omega\!, m}}}
\newcommand{\Wbb}[1][s]{{\mathbb{W}[{#1}]}}
\newcommand{\Hdot}[1][q]{{\mathbb{H}^{0;{#1}}_{\!\omega\!, m}}}
\newcommand{\BT}{\mathbf{B}_{\mathcal{T}}}
\newcommand{\BThat}{\widehat{\mathbf{B}}_{\mathcal{T}}}
\newcommand{\Ssharp}{{\mathcal{S}^\sharp_1}}
\newcommand{\ximin}{\xi_\ell}
\newcommand{\ximax}{\xi_r}
\newcommand{\etamin}{\eta_{min}}
\newcommand{\etamax}{\eta_{max}}
\newcommand{\chisharp}{\chi_\sharp}
\newcommand{\rsharp}{r^\sharp}
\newcommand{\rmax}{\overline{r}^\sharp}
\newcommand{\rmin}{\underline{r}^\sharp}
\newcommand{\wtau}{{\widetilde{\tau}}}
\newcommand{\wSigtau}{{\widetilde{\Sigma}_{\widetilde{\tau}}}}
\newcommand{\wphi}{{\widetilde{\phi}}}
\newcommand{\oW}{W_0}
\newcommand{\wfe}{\widetilde{f}_{\natural}}
\newcommand{\Wm}{W_-}
\newcommand{\Wp}{W_+}
\newcommand{\Wpm}{W_{\pm}}
\newcommand{\Uplus}{\mathbf{U}_+}
\newcommand{\Uminus}{\mathbf{U}_-}
\newcommand{\fsharp}{f_\sharp}
\newcommand{\hsharp}{h_\sharp}
\newcommand{\uchi}{\widehat{\mathbf{u}}_\chi}
\newcommand \la \langle
\newcommand \ra \rangle
\newcommand \del	\partial
\newcommand \Ical 	{\mathcal I}
\newcommand \eps 	\epsilon 
\newcommand \be 		{\begin{equation}}
\newcommand \ee 		{\end{equation}}
 \def\pa{\partial}
\newcommand{\Hchicheck}{\check{H}_\chi}
\newtheorem{theorem}{Theorem}[section]
\newtheorem{proposition}[theorem]{Proposition} 
\newtheorem{lemma}[theorem]{Lemma}
\newtheorem{corollary}[theorem]{Corollary} 
\newtheorem{definition}[theorem]{Definition} 
\newtheorem{remark}[theorem]{Remark}
\numberwithin{equation}{section}
\DeclareSymbolFont{bbold}{U}{bbold}{m}{n}
\DeclareSymbolFontAlphabet{\mathbbold}{bbold}
\let\oldmarginpar\marginpar
\renewcommand\marginpar[1]{\-\oldmarginpar[\raggedleft\footnotesize #1]%
{\raggedright\footnotesize #1}}
\title{The wave equation on subextremal Kerr spacetimes \\ with small non-decaying first order terms}
\author[1,2]{Gustav Holzegel\thanks{gholzegel@uni-muenster.de}}
\author[1]{Christopher Kauffman\thanks{ckauffma@uni-muenster.de}}
\affil[1]{\small Westf\"alische Wilhelms-Universit\"at M\"unster,
Mathematisches~Institut,~Einsteinstrasse~62,~48149~M\"unster,~Bundesrepublik~Deutschland \vskip.2pc \ }
\affil[2]{\small Imperial College London,
Department of Mathematics,
South~Kensington~Campus,~London~SW7~2AZ,~United~Kingdom}
\begin{document}
\maketitle
\begin{abstract}
We consider the perturbed covariant wave equation
$\Box_{g_{M,a}} \Psi = \varepsilon \BfB \Psi$ on the exterior of a fixed subextremal Kerr spacetime $\left(\mathcal{M},g_{M,a}\right)$. Here $\BfB$ is a suitably regular first order differential operator respecting the symmetries of Kerr whose coefficients are assumed to decay in space but not in time. We establish integrated decay estimates for solutions of the associated Cauchy problem. The proof adapts the framework introduced by Dafermos--Rodnianski--Shlapentokh-Rothman \cite{DRSR} in the $\varepsilon=0$ case. We combine their estimates with a new global pseudodifferential commutator estimate, which generalises our previous work in the Schwarzschild case. The construction of the commutator exploits the central observation of \cite{DRSR} that superradiant frequencies are not trapped. A further major technical ingredient of the proof consists in establishing appropriate convolution estimates for expressions arising from the interplay of the $\BfB$-term and the time cutoffs in the microlocal framework.
\end{abstract}
\setcounter{tocdepth}{2}
\tableofcontents
\section{Introduction}
		This paper investigates the global in time behaviour of smooth solutions to scalar linear wave equations of the form  
\begin{equation}  \label{weintro}
\Box_{g_{a,M}} \Psi  = \varepsilon \BfB\Psi,   \ \ \textrm{with data} \ \ \ \Psi|_{\widetilde\Sigma_0} = \Psi_0 \ \ \ , \ \ \ n_{\widetilde\Sigma} \Psi |_{\widetilde\Sigma_0}= \Psi_1,
\end{equation}
on the domain of outer communications $\mathcal{R}$ of the Kerr manifold $(\mathcal{M},g_{a,M})$ with $|a|<M$ (i.e.~the sub-extremal case), where $\BfB$ is a suitably regular first order differential operator on $\mathcal{M}$ which commutes with the Killing vector fields $T = \partial_t, \Phi = \partial_\phi$ and $\Psi_0, \Psi_1$ are suitably regular functions defined on a spacelike hypersurface $\widetilde\Sigma_0$ which terminates at $\Ical^+$, as indicated in Figure \ref{fig:Penrose} on page \pageref{fig:Penrose} below.

For this introductory section we will assume familiarity with Boyer-Lindquist coordinates $(t,r,\theta, \phi)$ on $(\mathcal{R},g_{a,M})$, as well as regular coordinates $(\widetilde{\tau},r,\theta,\tilde{\phi})$ arising from the above slicing.  See Section \ref{sec:PhysicalSpaceDefs} for geometric background and Section \ref{sec:hyperboloidal} for the precise relation of $(t,r,\theta, \phi)$ and $(\widetilde{\tau},r,\theta,\tilde{\phi})$ coordinates.

\subsection{The main theorem}
We shall prove the following theorem:
\begin{theorem}\label{thm:Main}
Let $\BfB$ be a first order differential operator on $(\mathcal{R},g_{a,M})$ which is $C^1$ (including at the future event horizon $\mathcal{H}^+$), satisfies $\mathcal{L}_T \BfB=\mathcal{L}_\Phi \BfB=0$ and when expressed in the regular $(\tilde{\tau},r,\theta,\tilde{\phi})$-coordinates,
\begin{equation}
\BfB\Psi = \BfB^{\tilde{\tau}}T \Psi+\BfB^r\widetilde{\partial}_r\Psi+\BfB^\theta\widetilde{\partial}_\theta\Psi+\BfB^{\tilde{\phi}}\Phi \Psi+\BfB^0\Psi \, ,
\end{equation}
also the bounds
\begin{subequations}
\begin{align}
|\BfB^{\tilde{\tau}}| + |r\wpa_r\BfB^{\tilde{\tau}}| +|\BfB^r| + |r\wpa_r\BfB^r|+|r\BfB^0| + |r^2\wpa_r\BfB^0| &\leq C r^{-1-\alpha},  \label{Bassumption1} \\
|\BfB^{\tilde{\phi}}| + |r\wpa_r\BfB^{\tilde{\phi}}|+|\BfB^\theta| + |r\wpa_r\BfB^\theta| &\leq C r^{-2-\alpha}, \label{Bassumption2}
\end{align}
\end{subequations}
for some $\alpha>0$.
Then there exists an $\varepsilon_0>0$ (depending only on $\BfB$,$M$,$a$) such that for all $|\varepsilon|<\varepsilon_0$
 the following statement is true. Smooth solutions of (\ref{weintro}) satisfy, for any $\mathsf{T}>0$, the estimate 
 \begin{align} \label{pse}
 \int_0^{\mathsf{T}} d\widetilde{\tau}  \int_{\widetilde{\Sigma}_{\widetilde{\tau}}} r^2 dr \sin \theta d\theta d\widetilde{\phi}  \left\{ \frac{|T \Psi|^2 + |\widetilde{\partial}_r \Psi|^2}{r^{1+\alpha}}+ \frac{|\widetilde{\partial}_\theta \Psi|^2  + \frac{1}{\sin^2 \theta} |\Phi \Psi|^2}{r^3} + \frac{|\Psi|^2}{r^{3+\alpha}}  \right\}  \leq C E_0^1[\Psi] \, 
 \end{align}
 for a ${C}>0$ depending only on $\BfB$, $M$, $a$. Here the initial data energy on the right is defined by 
 \begin{align} \label{e10def}
E_0^1[\Psi] := E_0[\Psi] + E_0[W_0 \Psi] + E_0[T\Psi]+ E_0[\Phi\Psi] \, , 
 \end{align}
where
 \begin{align} \label{basicdataenergy}
E_\tau [\Psi] =  \int_{\widetilde{\Sigma}_\tau} r^2 dr \sin \theta d\theta d\widetilde{\phi} \left\{ \frac{1}{\Delta} |W_0 \Psi|^2 + \frac{1}{r^2}| T\Psi|^2 + \frac{1}{r^2} \left( |\widetilde{\partial}_\theta \Psi|^2  + \frac{1}{\sin^2 \theta} |\Phi \Psi|^2 + |\Psi|^2 \right) \right\} \, ,
 \end{align}
 $W_0$ is a spacetime vector field defined in (\ref{W0def}) and $\Delta=r^2-2Mr+a^2$.
\end{theorem}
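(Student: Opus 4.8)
The plan is to adapt the frequency-localised scheme of Dafermos--Rodnianski--Shlapentokh-Rothman \cite{DRSR} for the homogeneous equation, treating $\varepsilon\BfB\Psi$ as a forcing term, and to absorb this forcing by combining the $r$-decay in (\ref{Bassumption1})--(\ref{Bassumption2}) with a new global pseudodifferential commutator near the trapped set and with convolution estimates for the errors generated by a time cutoff. First I would fix a smooth cutoff $\chi=\chi_{\mathsf T}(\widetilde\tau)$ equal to $1$ on $[0,\mathsf T]$ and supported in a slightly larger interval, and set $\Psichi=\chi\Psi$. Since $\BfB$ is first order with $\mathcal{L}_T\BfB=0$, only its $\BfB^{\widetilde\tau}T$ component meets the cutoff, so
\begin{equation*}
\Box_{g_{a,M}}\Psichi=\varepsilon\BfB\Psichi-\varepsilon(\BfB^{\widetilde\tau}\dot\chi)\Psi+\frf_\chi,\qquad \frf_\chi:=[\Box_{g_{a,M}},\chi]\Psi,
\end{equation*}
where $\frf_\chi$ and the $\dot\chi$-term are supported near $\widetilde\tau\approx0$ and $\widetilde\tau\approx\mathsf T$ and involve only $\Psi$ and $T\Psi$ there. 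Because $\Psichi$ is compactly supported in time, I can pass to Boyer--Lindquist coordinates, take the Fourier transform in $t$, and decompose onto the $a\omega$-oblate spheroidal harmonics as in \cite{DRSR}, reducing matters, for each frequency triple $(\omega,m,\ell)$, to a radial ODE in $r^\star$ whose right-hand side is the transform of $\varepsilon\BfB\Psichi-\varepsilon(\BfB^{\widetilde\tau}\dot\chi)\Psi+\frf_\chi$.

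Second I would run the DRSR current constructions — the frequency-dependent virial and redshift multipliers together with the superradiant modification near $\omega m\omega_+<0$ — on this ODE. Re-summing in $(\omega,m,\ell)$ via Plancherel, and absorbing the boundary terms at $\mathcal{H}^+$ (redshift) and towards $\Ical^+$ (an $r^p$-hierarchy), yields a \emph{degenerate} integrated local energy estimate
\begin{equation*}
\|\Psichi\|_{\mathrm{deg}}^2\;\leq\;C\,E_0^1[\Psi]+C\big|\langle\mathrm{currents},\varepsilon\BfB\Psichi-\varepsilon(\BfB^{\widetilde\tau}\dot\chi)\Psi\rangle\big|+C\big|\langle\mathrm{currents},\frf_\chi\rangle\big|,
\end{equation*}
where $\|\cdot\|_{\mathrm{deg}}^2$ degenerates in the $T$ and $\widetilde\partial_r$ derivatives precisely on the Kerr trapped set. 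The term $\langle\mathrm{currents},\frf_\chi\rangle$ is controlled, by the support of $\dot\chi$, by $E_0[\Psi]+E_0[T\Psi]$ plus an analogous quantity on $\widetilde\Sigma_{\mathsf T}$; the latter is removed using the uniform boundedness of the energy, which the same scheme delivers. It remains to bound the $\varepsilon$-forcing terms.

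Third, and this is the crux, I would split these terms at some radius $r=R_0$ lying outside the trapped region. For $r\geq R_0$ the hypotheses (\ref{Bassumption1})--(\ref{Bassumption2}) supply gains of $r^{-1-\alpha}$ and $r^{-2-\alpha}$, so Cauchy--Schwarz against the non-degenerate bulk available away from trapping, together with a Hardy inequality for the $\BfB^0$-contribution, absorbs this piece into $\|\Psichi\|_{\mathrm{deg}}^2$ at cost $\varepsilon C$. For $r\leq R_0$ there is no spatial smallness, and a direct Cauchy--Schwarz would lose a derivative at trapping, since only a degenerate bulk is controlled there. This is where the global pseudodifferential commutator enters: one constructs an order-zero operator $\mathcal{P}$, realised as an $(\omega,m)$-multiplier that microlocally equals the identity away from trapping and degenerates to a carefully chosen symbol across the trapped frequencies, so that the commutator contribution $\langle[\widehat{\Box},\mathcal{P}]\widehat{\Psichi},\widehat{\Psichi}\rangle$ regenerates, frequency by frequency, exactly the non-degenerate derivative control needed to dominate $\varepsilon\langle\mathcal{P}\widehat{\Psichi},\widehat{\BfB\Psichi}\rangle$ with no loss of regularity. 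The construction uses the observation of \cite{DRSR} that superradiant frequencies are not trapped, which makes the symbol of $\mathcal{P}$ smooth across the superradiant regime and allows it to be defined globally on the two-parameter Kerr frequency plane; this generalises the corresponding Schwarzschild construction. Finally, the $\dot\chi$-terms force one to relate the Fourier-space quantities $\omega\widehat{\Psichi}$ to the cut-off physical derivatives $\widehat{\chi\,T\Psi}$, and the resulting errors, being convolutions with the rapidly decaying $\widehat{\dot\chi}$ in $\omega$, are controlled by dedicated convolution estimates. The outcome of this step is $\varepsilon|\langle\mathrm{currents},\BfB\Psichi-(\BfB^{\widetilde\tau}\dot\chi)\Psi\rangle|\leq\varepsilon C(\|\Psichi\|_{\mathrm{ndeg}}^2+E_0^1[\Psi])$, now with a \emph{non-degenerate} bulk norm on the right.

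Finally I would close the estimate. Since the right-hand side now carries $\|\Psichi\|_{\mathrm{ndeg}}^2$, which the degenerate left-hand side cannot absorb, I would commute the entire scheme with $T$, with $\Phi$ and with the vector field $W_0$ of (\ref{W0def}): the first two commute with $\BfB$ by hypothesis, and $[\BfB,W_0]$ is again a first order operator whose coefficients obey (\ref{Bassumption1})--(\ref{Bassumption2}) up to one $r$-weight, so its contribution is handled exactly as in the third step. Adding these higher-order estimates upgrades the degenerate bulk to the non-degenerate one, at the price of the data quantities $E_0[W_0\Psi]+E_0[T\Psi]+E_0[\Phi\Psi]$ that make up $E_0^1[\Psi]$ in (\ref{e10def}). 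Choosing $\varepsilon_0$ so small that every $\varepsilon C$ factor is at most $\tfrac12$, absorbing, and then restricting the cutoff to $[0,\mathsf T]$ and transferring back to the $(\widetilde\tau,r,\theta,\widetilde\phi)$-foliation gives precisely (\ref{pse}). The main obstacle is the third step: designing $\mathcal{P}$ so that its positive commutator contribution matches the first order loss at trapping while remaining bounded across both the superradiant and non-superradiant parts of the frequency plane — with the convolution estimates for the $\dot\chi$-generated terms the other substantial technical ingredient.
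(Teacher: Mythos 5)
Your outline reproduces some genuine ingredients of the argument (time cutoff plus convolution estimates, the DRSR degenerate frequency-localised estimate, Plancherel, the non-trapping of superradiant frequencies, the loss of one derivative encoded in $E_0^1[\Psi]$), but two of its load-bearing steps do not work. First, you cut off $\Psi$ itself to the past \emph{and} to the future, producing an error $[\Box_{g_{a,M}},\chi]\Psi$ supported near $\widetilde\tau\approx\mathsf T$, and you discard it by appealing to ``uniform boundedness of the energy, which the same scheme delivers.'' That is precisely what is unavailable: by \cite{Sb15} uniform boundedness for general $\BfB$ and $\varepsilon\neq0$ fails without derivative loss, and since your scheme must itself be commuted (your fourth step), you would need a uniform boundedness statement for the \emph{top-order, commuted} energy at time $\mathsf T$, which nothing in the argument provides. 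This is exactly why the paper does not cut off the solution to the future: it cuts off the right-hand side of the equation, solving $\Box_{g_{a,M}}\Psi_{\mathcal T}=\varepsilon\chi_{\mathcal T}\BfB\Psi_{\mathcal T}$ as in \eqref{def:PhysCutoffEq} and multiplying only by a past cutoff, so that $\Psi_{\chi,\mathcal T}$ solves the free wave equation for $\widetilde\tau\geq2\mathcal T$ (future integrability via \cite{DRSR}) and the inhomogeneity $F$ in \eqref{fui} is supported only in $\widetilde\tau\in[0,1]$.

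Second, your central mechanism is an order-zero frequency multiplier $\mathcal P$ whose positive commutator is claimed to ``regenerate \dots non-degenerate derivative control \dots with no loss of regularity'' across the trapped frequencies. No such operator can exist: non-degenerate integrated control of first derivatives at trapping without losing a derivative is precluded (this is the reason \eqref{pse} carries $E_0^1[\Psi]$, i.e.\ one extra derivative of data, on the right). The paper's route is structurally different: it commutes with a \emph{first-order}, frequency-dependent operator $W=\widetilde\chi_1 W^\natural+(1-\widetilde\chi_1)\chi_\sharp W^\sharp$ modelled on the Schwarzschild field $W_S$, obtains non-degenerate control of $W\widehat u$ from the commuted $T$-energy plus Lagrangian estimates, absorbs the $\varepsilon\BfB W\widehat u$ error at that level, and only then recovers non-degenerate lower-order control through the frequency-localised Hardy inequality of Lemma~\ref{lem:DBoundtoNDBound}, which borrows a small multiple of the commuted energy. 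Relatedly, your closing step of commuting with the physical-space field $W_0$ of \eqref{W0def} cannot upgrade the degenerate bulk in Kerr: the problem is not $[\BfB,W_0]$ but $[\Box_{g_{a,M}},W_0]$, which for $a\neq0$ and $m\neq0$ produces general second-derivative terms with no favourable sign at the frequency-dependent trapping radii $r_{trap}\neq r_0$, and superradiance deprives the commuted $T$-estimate of coercive boundary terms; this is exactly why the paper's commutator must be pseudodifferential, with the separate construction $\chi_\sharp W^\sharp$ and a localised Lagrangian estimate in the superradiant regime. (You also leave untreated the bounded-frequency horizon term $\mathbbm{1}_{bfr}|\widehat u(-\infty)|^2$ in the degenerate estimate, which requires the quantitative mode stability input of Appendix~\ref{appendix:yakov}, though that is a secondary omission.)
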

 \begin{remark}
In the $\varepsilon=0$ case, the estimate \eqref{pse} follows from the celebrated \cite{DRSR}, which forms the endpoint of numerous previous works on the problem -- see the references in \cite{DRSR}. Both \cite{DRSR} and our proof of Theorem \ref{thm:Main} establish more refined statements than \eqref{pse}, including control on certain second order derivatives on the left. However, the physical space integrated local energy decay statement of \eqref{pse} lies at the heart of the matter and makes manifest that solutions indeed decay on the black hole exterior provided one allows for a loss of derivatives. The necessity of such a loss is well-known \cite{Sb15}.
\end{remark}
\begin{remark}
For $\varepsilon$ large one cannot expect the estimate \eqref{pse} to hold without exploiting further structure in the $\BfB$-term. See \cite{MoschidisSR} for very general results. The Teukolsky equation \cite{Teuko} is an example where such further structure is indeed present allowing one to prove (higher order) analogues of \eqref{pse}. See \cite{DHR, DHRteuk, Ma, RitaShlap}.
\end{remark}

\begin{remark}
The assumption that $\mathcal{L}_T \BfB=\mathcal{L}_\Phi \BfB=0$ is mainly for technical reasons (as it allows for simpler convolution estimates in the proof) and consistent with what one expects to see in applications. Our actual assumptions on $\BfB$ used in Theorem \ref{thm:MainFreqSpace}, which is a stronger frequency space version of Theorem \ref{thm:Main}, are significantly weaker and expressed entirely in Fourier space via uniform boundedness of the norm (\ref{Bnorm}) for a cut off version of $\BfB$.
\end{remark}

\subsection{Review of the Schwarzschild case} \label{sec:reviews}

The reason that Theorem \ref{thm:Main} does not directly follow perturbatively from  the $\varepsilon=0$ case, even for $a=0$, is rooted in the existence of trapped null geodesics on the black hole exterior. The $\BfB$-term cannot simply be absorbed in the estimates because general first derivatives of $\Psi$ are not controlled without degeneration in the region where trapped null geodesics exist, even for $\varepsilon=0$. In fact, it follows from the results of \cite{Sb15} that, for general $\BfB$ and $|\varepsilon|>0$, uniform boundedness (in the form $ E_{\mathsf{T}}[\Psi] \leq C E_0[\Psi]$ for all $\mathsf{T}>0$ with $C$ independent of $\mathsf{T}$) cannot hold without a loss of derivatives.\footnote{We note that $E_T[\Psi]$ could be added to the left hand side of (\ref{pse}) as the right hand side already loses derivatives.} Therefore, one cannot follow the well-established strategy \cite{DRNotes}  from the $\varepsilon=0$ case to prove uniform boundedness and integrated decay.
  
 In the $a=0$ case, the resolution offered in  \cite{HK20} was to commute (\ref{weintro}) with the spacetime vector field (expressed in the familiar Schwarzschild $(t,r,\theta,\phi)$-coordinates)
 \begin{align} \label{Wsdef}
 W_{S} =  \frac{r}{\sqrt{1-\frac{2M}{r}}} \left(\left(1-\frac{2M}{r}\right)\partial_{r} + h(r) \partial_t \right)  \ \ \ \textrm{with} \ \ \ h(r)=\left(1-\frac{3M}{r}\right)\sqrt{1+\frac{6M}{r}} \, ,
 \end{align}
which has the property that the commutator generates a term on the right hand side of the equation for $\Box_g W_S\Psi$, which, when applying the $\partial_t$-energy estimate, produces a non-degenerate spacetime term of the right sign of the form
 \begin{align} \label{wui}
 \int_0^T d\tau \int_{\Sigma_{\tau}} \frac{1}{r} \frac{1}{1-\frac{2M}{r}} |\partial_t W_S\Psi|^2
 \end{align}
 at the cost of only first derivative terms on the right hand side of $\Box_g W_S\Psi$. A Lagrangian estimate then allowed us to use (\ref{wui}) to control \emph{all} second derivatives of the form $DW_S\Psi$ non-degenerately (in spacetime $L^2$) in terms of lower order $D\Psi$-terms. In particular, the $\BfB W_S\Psi$-term on the right hand side could then be absorbed. Finally, a Hardy-type-estimate controlling non-degenerately $|D\Psi|^2$ using only a small portion of $|D W_S\Psi|^2$ and a large portion of the degenerate (at $r=3M$) energy allowed to absorb the first derivative error-terms using a large portion of the uncommuted energy and a small portion of the $W_S$ commuted energy. See \cite{HK20} for the details.

In the Kerr case, the above strategy does not work for two main reasons: (1) The $\partial_t$-estimate does not produce coercive boundary terms because of the well-known phenomenon of superradiance.\ (2) There does not exist a global vector field which inherits the good commutation properties of the Schwarzschild case. 

\subsection{Physical space estimates}
Before we discuss the resolution of the above difficulties in the next subsection we make two remarks about physical space estimates in the Kerr setting. First of all, for \emph{axisymmetric} solutions $\Psi$ of (\ref{weintro}), the proof in the Schwarzschild case generalises as superradiance is absent and trapping happens at a single radial value of $r$. One may replace $W_S$ by the vector field $W_0$ defined in (\ref{W0def}) and  the proof outlined in the previous section carries through. The second observation is that, in general, the good properties of $W_S$ can be salvaged in physical space in the Kerr case at least \emph{near the event horizon} and \emph{near infinity} in the following sense\footnote{Note that $\varepsilon$ does not have to be small in Proposition \ref{prop:ps}. However, as $\varepsilon$ becomes large, the parameters $R$ and $\tilde{\delta}$ would need to depend on $\varepsilon$. Finally, one may reasonably expect Proposition \ref{prop:ps} to hold for a large class of asymptotically flat spacetimes with non-degenerate event horizon but we will not pursue this direction here.}:
\begin{proposition} \label{prop:ps}
Consider solutions to (\ref{weintro}) with $\BfB$ satisfying the assumptions of Theorem \ref{thm:Main}. Let
\begin{align} \label{WKerr}
W_{\mathcal{I}^+} := \frac{{r^2+a^2}}{\sqrt{\Delta}} \left(R^\star + T + \frac{a}{r^2+a^2}\Phi\right) \ \ \ , \ \ \ W_{\mathcal{H}^+} := \frac{{r^2+a^2}}{\sqrt{\Delta}} \left(R^\star - T - \frac{a}{r^2+a^2}\Phi\right) \, 
\end{align} 
and define $|D\psi|^2= |L\psi|^2 + |\underline{L} \psi|^2 + |\slashed{\nabla}\psi|^2$ and $|\tilde{D}\psi|^2= |L\psi|^2 + |\frac{r^2+a^2}{\Delta} \underline{L} \psi|^2 + |\slashed{\nabla}\psi|^2$ with $L$ and $\underline{L}$ defined in (\ref{def:LLbar}) and $|\slashed{\nabla}\psi|^2$ denoting the norm of the gradient induced on the spheres of constant $(\tilde{\tau},r)$.  \\
Then
there exists an $R>0$ and a $\tilde{\delta}>0$ (both depending on $\BfB$, $a$, $M$) such that the following estimates hold for any $\mathsf{T}>0$:
\begin{align} \label{largeRps}
\int_{0}^{\mathsf{T}} d\tau \int_{\widetilde{\Sigma}_{\tau} \cap \{r \geq R\}} r^2 dr d\sigma\ \frac{|DW_{\mathcal{I}^+} \Psi|^2 }{r^{1+\alpha}} \lesssim &
 \int_{0}^{\mathsf{T}} d\tau \int_{\widetilde{\Sigma}_{\tau} \cap \{R-M \leq r \leq R\}} r^2 dr d\sigma \Big\{ |DW_{\mathcal{I}^+} \Psi|^2 \Big\}+ E_0[\chi_R W_{\mathcal{I}^+}\Psi]  \nonumber \\
&+ E_0[\Psi]  + \int_{0}^{\mathsf{T}} d\tau \int_{\widetilde{\Sigma}_{\tau} \cap \{r \geq R-M\}} r^2 dr d\sigma  \frac{ |D\Psi|^2 + r^{-2}|\Psi|^2}{r^{1+\alpha}} \, , 
\end{align}
where $d\sigma=\sin \theta d\theta d\widetilde{\phi} $, $\chi_R$ is a smooth radial cut-off equal to $1$ for $r\geq R$ and zero for $r \leq R-M$, and
\begin{align} \label{hozi}
\int_{0}^{\mathsf{T}} d\tau \int_{\widetilde{\Sigma}_{\tau} \cap \{r \leq r_++\tilde{\delta}\}} r^2 dr d\sigma  |DW_{\mathcal{H}^+}\Psi|^2 \lesssim &\int_{0}^{\mathsf{T}} d\tau \int_{\widetilde{\Sigma}_{\tau} \cap \{r_++\tilde{\delta}\leq r \leq r_++2\tilde{\delta}\}} r^2 dr d\sigma |DW_{\mathcal{H}^+}\Psi|^2 + E_0[\chi_{\tilde{\delta}}W_{\mathcal{H}^+}\Psi] \nonumber \\
&+ E_0[\Psi]  + \int_{0}^{\mathsf{T}} d\tau \int_{\widetilde{\Sigma}_{\tau} \cap \{ r \leq r_++2\tilde{\delta}\}} r^2 dr d\sigma \left(  |\tilde{D}\Psi|^2 + |\Psi|^2\right) \, ,
\end{align}
where $\chi_{\tilde{\delta}}$ is a smooth radial cut-off equal to $1$ for $r\leq r_++\tilde{\delta}$ and equal to zero for $r \geq r_++2\tilde{\delta}$. \\ Finally, the $\lesssim$ depends on $R, \BfB, a, M$ in the first estimate and on $\tilde{\delta}, \BfB,a,M$ in the second.
\end{proposition}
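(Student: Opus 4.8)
We sketch the argument. The two estimates are proved by the same mechanism, applied near null infinity for \eqref{largeRps} and near the future horizon for \eqref{hozi}; they are dual to one another under the interchange $R^\star+T+\tfrac{a}{r^2+a^2}\Phi\leftrightarrow R^\star-T-\tfrac{a}{r^2+a^2}\Phi$. The relevant geometry is that $R^\star+T+\tfrac{a}{r^2+a^2}\Phi$ and $R^\star-T-\tfrac{a}{r^2+a^2}\Phi$ equal $\tfrac{\Delta}{r^2+a^2}$ times the outgoing, resp.\ ingoing, principal null direction, so that, in terms of the null frame $(L,\underline L)$ of \eqref{def:LLbar}, $W_{\mathcal{I}^+}\sim r\,L$ as $r\to\infty$ (the natural vector field for an $r^p$-type analysis near $\mathcal{I}^+$), while $W_{\mathcal{H}^+}$ is, up to the degenerate weight $\sqrt{\Delta}$, the regular null direction transverse to $\mathcal{H}^+$ (proportional to $\underline L$), adapted to a red-shift--type estimate. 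Crucially, neither $\{r\geq R\}$ (for $R$ large) nor $\{r\leq r_++\tilde\delta\}$ (for $\tilde\delta$ small) contains trapped null geodesics, so one expects full non-degenerate control of the commuted quantities $W_{\mathcal{I}^+}\Psi$, $W_{\mathcal{H}^+}\Psi$ in these regions --- which is the reason Proposition \ref{prop:ps} holds although the global statement of Theorem \ref{thm:Main} is more subtle.

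\emph{The estimate \eqref{largeRps}.} First I would commute the equation with $\chi_R W_{\mathcal{I}^+}$; the cut-off commutator $[\Box_{g_{a,M}},\chi_R]W_{\mathcal{I}^+}\Psi$ is supported in $\{R-M\leq r\leq R\}$ and accounts, after the multiplier estimates below, for the collar bulk term and the flux $E_0[\chi_R W_{\mathcal{I}^+}\Psi]$ on the right-hand side. Since $T$ and $\Phi$ are Killing and (by hypothesis) commute with $\BfB$, the real content of the commutator is that of $\Box_{g_{a,M}}$ with the radial part of $W_{\mathcal{I}^+}$ and with its $r$-dependent coefficients: one obtains, for $u:=W_{\mathcal{I}^+}\Psi$, an identity of the schematic form $\Box_{g_{a,M}}u=\varepsilon\,W_{\mathcal{I}^+}(\BfB\Psi)+\mathsf{M}[u]+\mathsf{Err}$, where $\mathsf{M}[u]$ is first order in $u$ and contributes, against the $r^p$/Morawetz multipliers, a bulk term of favourable sign (the near-$\mathcal{I}^+$ analogue of the Schwarzschild gain \eqref{wui} of \cite{HK20}), and $\mathsf{Err}$ is genuinely lower order, $|\mathsf{Err}|\lesssim r^{-1-\alpha}|D\Psi|+r^{-3-\alpha}|\Psi|$. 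The $\BfB$-source is controlled using \eqref{Bassumption1}--\eqref{Bassumption2} together with $\mathcal{L}_T\BfB=\mathcal{L}_\Phi\BfB=0$: the factor $r$ that $W_{\mathcal{I}^+}$ carries as $r\to\infty$ is compensated exactly by the bounds on $|r\wpa_r\BfB^{\widetilde\tau}|$, $|r\wpa_r\BfB^r|$, $|r^2\wpa_r\BfB^0|$ (and by the vanishing of the $T,\Phi$-parts on the coefficients of $\BfB$), so that $|W_{\mathcal{I}^+}(\BfB\Psi)|\lesssim r^{-1-\alpha}\big(|Du|+|D\Psi|+r^{-1}|\Psi|\big)$. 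I would then run the standard near-$\mathcal{I}^+$ multiplier machinery on $u$ --- $T$-energy ($T$ being timelike there), $r^p$-hierarchy, and Morawetz estimate, all of which close unconditionally in the absence of trapping --- followed, if the last derivative is not directly captured, by a transport estimate along the outgoing null rays with data on $\{r=R\}$, to reach the full bulk $\int r^{-1-\alpha}|Du|^2$ on the left of \eqref{largeRps}. In the process the $\BfB$-source, being smaller by a power of $r$ than the favourable bulk from $\mathsf{M}[u]$, is absorbed for $r\geq R$ with $R$ chosen large (depending only on $\BfB,a,M$ when $|\varepsilon|$ is bounded, on $\varepsilon$ as well otherwise), and the residual $\varepsilon\int r^{-1-\alpha}|D\Psi|^2$ together with $\mathsf{Err}$ is exactly of the form of the last term on the right of \eqref{largeRps}. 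In particular nothing is ever absorbed into the dangerous trapped part of the energy, which is why $\varepsilon$ need not be small.

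\emph{The estimate \eqref{hozi}.} The scheme is identical with $W_{\mathcal{I}^+}$ replaced by $W_{\mathcal{H}^+}$ and the $r^p$-hierarchy replaced by a red-shift--type multiplier estimate: commuting with $\chi_{\tilde\delta}W_{\mathcal{H}^+}$ produces the collar term over $\{r_++\tilde\delta\leq r\leq r_++2\tilde\delta\}$ and the flux $E_0[\chi_{\tilde\delta}W_{\mathcal{H}^+}\Psi]$, while $[\Box_{g_{a,M}},W_{\mathcal{H}^+}]$ produces a first-order expression in $W_{\mathcal{H}^+}\Psi$ whose principal part has a sign governed by the positive surface gravity $\kappa>0$ and favourable for the red-shift multiplier --- and, importantly, this positivity is insensitive to superradiance (it does not use $T$ being timelike, which fails near $\mathcal{H}^+$). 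The red-shift estimate then yields the bulk $\int|DW_{\mathcal{H}^+}\Psi|^2$ near $\mathcal{H}^+$, and a transport estimate along the horizon generators, with data on $\{r=r_++\tilde\delta\}$, supplies any remaining derivative. Since the coefficients of $\BfB$ are merely bounded for $r$ near $r_+$, there is no $r$-decay to exploit; instead one absorbs the $\varepsilon\BfB$-error by taking $\tilde\delta$ small, which localises the estimate to the region where the red-shift bulk dominates any fixed-size first-order error (hence $\tilde\delta$ depends on $\BfB,a,M$, and on $\varepsilon$ for $|\varepsilon|$ large). The uncommuted first-order error generated is the term $\int(|\tilde D\Psi|^2+|\Psi|^2)$ on the right of \eqref{hozi}; it appears in the $\tilde D$-norm, with the weight $\tfrac{r^2+a^2}{\Delta}$ on $\underline L\Psi$, because that is the non-degenerate derivative adapted to $\mathcal{H}^+$ and is exactly the input the red-shift and transport arguments require.

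\emph{Main obstacle.} The principal difficulty is the commutator computation in the rotating Kerr metric: one must verify that $[\Box_{g_{a,M}},W_{\mathcal{I}^+}]$ and $[\Box_{g_{a,M}},W_{\mathcal{H}^+}]$ --- including all cross-terms generated by the rotation, by the frame-dragging factor $\tfrac{a}{r^2+a^2}\Phi$, and by the $r$-dependent prefactor $\tfrac{r^2+a^2}{\sqrt\Delta}$ --- split into a favourable-sign principal part plus genuinely lower-order and boundary/collar remainders, and that the subsequent multiplier and transport estimates close without ever appealing to control in the trapped region. The delicate point in \eqref{hozi} is that, absent any $r$-decay of $\BfB$, the absorption of the $\varepsilon\BfB$-source relies entirely on the strength of the red-shift and on the freedom to shrink $\tilde\delta$.
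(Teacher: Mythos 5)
Your overall architecture --- commuting with precisely $W_{\mathcal{I}^+}$, $W_{\mathcal{H}^+}$, a favourable-sign first-order commutator, cut-off multiplier estimates producing the collar and flux terms, and absorption of the $\BfB$-error for $R$ large resp.\ $\tilde\delta$ small --- is the paper's. But the proposition is, in substance, nothing but the commutator computation together with the right choice of multipliers, and you defer exactly that (your ``main obstacle''). In the paper one sets $\Wpm=v^{-1}(\Rs\pm\Ks)$, $\mathbf{U}_\pm=\rweight^{-1}\Wpm(\rweight\Psi)$, and computes that the only decisive piece of the commutator is $\pm\tfrac{2v'}{v^3}\,\Ks\Wpm(\rweight\Psi)$ (all other terms are genuinely lower order or carry $\Phi$), with $\tfrac{2v'}{v^3}\sim -2r$ as $r\to\infty$ and $\sim 2\Delta^{-1}$ as $r\to r_+$; paired with $\chi_R T$ near infinity and with the Hawking field $\chi_{\tilde\delta}K$ near the horizon this single term yields the coercive bulks $\sim r^{-1}|T\Uplus|^2$ and $\sim\Delta^{-1}|K\Uminus|^2$, and the remaining derivatives are recovered by a weighted Lagrangian current alone --- no $r^p$-hierarchy, Morawetz current or transport estimate is needed. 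Note also that near infinity the commutator term sits at the non-degenerate weight $r^{-1}|T\Uplus|^2$, which is \emph{stronger} than the $r^{-1-\alpha}$ bulk in \eqref{largeRps}, so it could never be absorbed as an error even for $r\geq R$ huge; its sign must be used, and the assertion that the standard machinery ``closes unconditionally in the absence of trapping'' obscures the fact that this sign is the whole content.

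More seriously, your near-horizon mechanism would fail as described. The principal commutator term carries the singular coefficient $\sim\Delta^{-1}$ on the tangential derivative $K\Uminus$, and its sign is favourable only against the $K$-component of a multiplier. A red-shift multiplier necessarily contains an $O(1)$ transverse component proportional to the regular direction $\tfrac{r^2+a^2}{\Delta}\underline{L}$ of \eqref{def:LLbar}, and the resulting cross term $\sim\Delta^{-1}\,|K\Uminus|\,\big|\tfrac{r^2+a^2}{\Delta}\underline{L}\Uminus\big|$ is sign-indefinite and not absorbable: the red-shift bulk controls both factors only with $O(1)$ weights, the $K$-gain controls $\Delta^{-1}|K\Uminus|^2$ but not $\Delta^{-2}|K\Uminus|^2$, and integrating by parts only worsens the weight. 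So ``the commutator sign is favourable for the red-shift multiplier'' is not a correct summary, and \eqref{hozi} cannot be closed that way. The paper's proof instead multiplies by $K$ alone --- so that the singular term becomes the coercive $\Delta^{-1}|K\Uminus|^2$, which is the manifestation of the red shift here (cf.\ the footnote following \eqref{hozi}) --- and then recovers the transverse and angular derivatives from a Lagrangian current, whose indefinite ($\Ks$-) part near the horizon carries the same $\Delta^{-1}$ weight and is exactly dominated by that gain; this replaces both your red-shift and your transport steps. Your near-infinity scheme is workable in spirit (the paper uses the lighter combination of cut-off $T$-energy plus a weighted Lagrangian current, the analogue of \eqref{wui} in \cite{HK20}), but there too it rests on the unverified sign computation.
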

In other words, all derivatives of $W_{\mathcal{I}^+}\Psi$ can be controlled $L^2$ in spacetime near infinity if we can control them in a region of bounded $r$ and in addition control lower oder terms. Similarly near the horizon for $W_{\mathcal{H}^+}\Psi$ in (\ref{hozi}). In fact, the above estimates can be viewed as versions of the well-known $r^p$-estimates and redshift estimates respectively, which exploit the fact that asymptotically $W_{\mathcal{I}^+} \sim r L$ and $W_{\mathcal{H}^+} \sim \Delta^{-\frac{1}{2}} \underline{L}$.\footnote{The proof of (\ref{hozi}) actually shows that one can also replace $|DW_{\mathcal{H}^+}\Psi|^2$ by $\frac{1}{\Delta}|KW_{\mathcal{H}^+} \Psi |^2$  on the left hand side of (\ref{hozi}) (where $K=T+\frac{a}{r_+^2+a^2}\Phi$ is the Hawking Killing field), thereby making the redshift property more manifest.} The ``miracle” is that these vector fields have good commutation properties in the sense that a suitably localised energy estimate (based on the cut-off vector fields $\chi_R \, T$ for (\ref{largeRps}) and $\chi_{\tilde{\delta}} (T+\frac{a}{r_+^2+a^2}\Phi)$ for (\ref{hozi})) combined with a straightforward Lagrangian estimate for the commuted equation lead to the above estimates. Proposition \ref{prop:ps} will be proven in Appendix \ref{sec:Prop121Proof}. It is not going to be applied anywhere in this paper.

\subsection{Overview of the proof}
We now turn to some of the details in the proof of Theorem \ref{thm:Main}. In short, the resolution of the obstructions mentioned at the end of Section \ref{sec:reviews} is to construct a \emph{pseudodifferential} operator that mimics the properties of $W_S$ for the part of the solution which is supported on certain (non-superradiant) frequencies, i.e.~we carry out the commutation at the frequency decomposed level with an operator that depends on the frequencies.

\subsubsection{The class of future integrable solutions}
To employ the required frequency analysis, we need to ensure we can take the Fourier transform of the solution of (\ref{weintro}). However, solutions may a priori grow exponentially in time. To resolve this, we follow a variant of \cite{DRSR} to suitably cut-off the solution in time leading to the notion of \emph{future integrable ($L^2$ in time) solutions} from \cite{DRSR}. More specifically, let $\chi : \mathbb{R} \rightarrow \mathbb{R}$ be a smooth cut-off function identically zero for $x \leq 0$ and equal to $1$ for $x \geq 1$. For fixed $\mathcal{T}>2$ we define $\chi_{\mathcal{T}} (\widetilde{\tau})=\chi\big(\tfrac{\wtau}{\mathcal{T}} + 1\big)\chi\big(2-\tfrac{\wtau}{\mathcal{T}}\big)$ so that in particular $\chi_{\mathcal{T}} = 1$ for $\widetilde{\tau} \in [0, \mathcal{T}]$, and $\chi_{\mathcal{T}}$ is compactly supported in $\wtau \in [-\mathcal{T},2\mathcal{T}]$. 

 Let $\Psi$ satisfy  (\ref{weintro}) on $\mathcal{M} \cap \{\widetilde{\tau} \geq 0\}$. We now define $\Psi_{\mathcal{T}}$ to be the solution of
\begin{equation}
\Box_{g_{a, M}}\Psi_{\mathcal{T}} = \varepsilon\chi_{\mathcal{T}}\BfB \Psi_{\mathcal{T}} \, , 
\end{equation}
with initial data $\Psi_{\mathcal{T}}|_{\widetilde\Sigma_0} = \Psi_0$ and $n_{\widetilde\Sigma} \Psi_{\mathcal{T}} |_{\widetilde\Sigma_0}= \Psi_1$ as in (\ref{weintro}). By domain of dependence, $\Psi_{\mathcal{T}}=\Psi$ for $\widetilde{\tau} \in [0,{\mathcal{T}}]$.
Finally, we define
\begin{equation}
 \Psi_{\chi , {\mathcal{T}}}= \chi\Psi_{\mathcal{T}} \, ,
\end{equation}
which is well-defined on all of $\mathcal{R}$ since $\chi$ vanishes identically for all $\widetilde{\tau}\leq 0$. Moreover, $\Psi_{\chi, {\mathcal{T}}}$ satisfies
\begin{equation}\label{fui}
\Box_{g_{a, M}}( \Psi_{\chi , {\mathcal{T}}}) =  F+ \varepsilon\chi_{\mathcal{T}}(\BfB \Psi_{\chi , {\mathcal{T}}})  \ \ \ \textrm{where \ \ \ $F:= 2\nabla^\alpha \chi \nabla_\alpha \Psi_{\mathcal{T}} + (\Box_{g_{a, M}}\chi)\Psi_{\mathcal{T}} - \varepsilon \chi_{\mathcal{T}}\Psi_{\mathcal{T}} (\BfB \chi - \BfB^0 \chi)$,}
\end{equation}
with trivial initial data at $\widetilde{\tau}=0$. Since $\Psi_{\chi,{\mathcal{T}}}$ is equal to the original $\Psi$ for $\widetilde{\tau} \in [1,{\mathcal{T}}]$ proving (\ref{pse}) for ${\Psi}_{\chi,{\mathcal{T}}}$ with the $C$ on the right hand side independent of $\mathcal{T}$, will allow us to infer (\ref{pse}) for $\Psi$. The reason for introducing the cutoff factor $\chi_{\mathcal{T}}$ is that ${\Psi}_{\chi,{\mathcal{T}}}$ is then manifestly future integrable in the sense of \cite{DRSR} since ${\Psi}_{\chi,{\mathcal{T}}}$ satisfies the \emph{free} wave equation for $\widetilde{\tau} \geq 2{\mathcal{T}}$ (whose future integrability is a corollary of \cite{DRSR}) and vanishes identically with all derivatives at $\widetilde{\tau} =0$. Note that the term $F$ on the right hand side of (\ref{fui}) is supported only for $\mathcal{M} \cap \{ \widetilde{\tau} \in [0,1]\}$.
The reader may appreciate here the following technical point: Cutting off in the way described (instead of cutting off $\Psi$ itself to the past and to the future) avoids a future error-term in $F$ coming from the cutoff, which in our setting is problematic as we will not be able to obtain a uniform boundedness statement for the top order energy. The drawback is the cutoff on the first order term in (\ref{fui}), which at the Fourier level requires
some convolution estimates which we discuss below.

\subsubsection{The frequency localised equation}
To convey the main ideas, we restrict in the following to $\BfB= b(r,\theta)\partial_t$ in Boyer-Lindquist coordinates. 
Taking the Fourier transform in $t$ and in the axisymmetric direction $\phi$ (indicated by a hat and the subscript $m$) yields from (\ref{fui}) the separated equation
\begin{align} \label{omegam}
\left(\widehat{u}_{m}\right)^{\prime \prime} + (\omega-\omega_r)^2- \frac{\Delta}{(r^2+a^2)^2} \mathcal{L}_\theta[a\omega,m] \widehat{u}_m -V_{m}(r) \widehat{u}_{m }= \varepsilon  \cdot b(r,\theta) (\widehat{\chi_{\mathcal{T}}} * i\omega \widehat{u})_{m} + H_{m} \, ,
\end{align}
where $\widehat{u}_m=\sqrt{r^2+a^2} (\widehat{\Psi_{\chi,\mathcal{T}}})_m$, $H_m=\frac{\Delta \rho^2}{(r^2+a^2)^\frac{3}{2}} \widehat{F}_m$ and $\mathcal{L}_\theta[a\omega,m]=-\frac{1}{\sin \theta} \partial_{\theta} \left(\sin \theta \partial_\theta \cdot \right) + \left(a\omega \sin \theta -\frac{m}{\sin \theta}\right)^2$ is a second order angular operator that reduces to the spherical Laplacian if $a=0$.

For completeness, we also collect the fully separated form of (\ref{fui}) in the notation of \cite{DRSR}:
\begin{align} \label{fullsep}
\left(\widehat{u}^{(a\omega)}_{m \ell}\right)^{\prime \prime} + (\omega^2-V^{(a\omega)}_{m \ell}(r)) \widehat{u}^{(a\omega)}_{m \ell}=  \varepsilon  \cdot b(r,\theta) (\widehat{\chi_{\mathcal{T}}} * i\omega \widehat{u})^{(a\omega)}_{m \ell} + H^{(a\omega)}_{m \ell} \, ,
\end{align}
where $\widehat{u}^{(a\omega)}_{m \ell}=\sqrt{r^2+a^2}  (\widehat{\Psi_{\chi,\mathcal{T}}})^{(a\omega)}_{m \ell}$, $H^{(a\omega)}_{m \ell}=\frac{\Delta}{(r^2+a^2)^\frac{3}{2}} (\rho^2\widehat{F})^{(a\omega)}_{m\ell}$ and $V^{(a\omega)}_{m \ell}(r)=\frac{4Mr a m\omega -a^2 \omega^2 + \Delta \Lambda}{(r^2 +a^2)^2}+ V_1(r)$ with $\Lambda$ denoting the eigenvalues of the spheroidal Laplacian (see Section \ref{sep:KTensor}) enumerated by $\ell$ and $m$.

The main novel analytic aspects of this paper, including the commutation, will appear at the level of (\ref{omegam}).

\subsubsection{The separated estimate at the lowest order}
From the proof of Theorem 8.1 in \cite{DRSR} we deduce for some $r_{trap}=r_{trap}(M,a,\omega,m,\Lambda)$ with $r_+ < r_1 \leq r_{trap} \leq r_2<\infty$ (and $r_1,r_2$ depending only on the parameters $M$ and $a$)
the estimate
\begin{align} \label{fitt}
\int_{-\infty}^{\infty} dr^\star \frac{\Delta}{r^2} \left[\frac{\big|\widehat{u}^{(a\omega)}_{m \ell}{}^\prime\big|^2}{r^{1+\alpha}} + \left(\left(\frac{\Lambda}{r^3} +\frac{\omega^2}{r^{1+\alpha}}\right)\left(1-\frac{r_{trap}}{r}\right)^2+\frac{1}{r^{3+\alpha}}\right)\big| \widehat{u}^{(a\omega)}_{m \ell}\big|^2\right]  \nonumber \\
\lesssim \varepsilon \int_{-\infty}^{\infty} dr^\star  \frac{\Delta}{r^2}  \frac{\omega^2}{r^{1+\alpha}} \big| \widehat{u}^{(a\omega)}_{m \ell}\big|^2 + \mathbbm{1}_{bfr} |\widehat{u}^{(a\omega)}_{m \ell}(-\infty)|^2 + \textrm{terms involving $H^{(a\omega)}_{m \ell}$}\, .
\end{align}
The second and third term on the right hand side are present already for $\varepsilon=0$. The second term\footnote{For $|a| \ll M$ this term can in fact be dropped, as it can be absorbed in the proof. See \cite{DafRodsmalla}.} is supported on a range of bounded frequencies (see (\ref{mosta1}), (\ref{mosta2}) below and \cite{DRSR} for further discussion) and is controlled (for $\varepsilon=0$) by the initial energy on $\Sigma_0$ using the quantitative mode stability result of \cite{SR14}. As this estimate generalises straightforwardly to incorporate the first order $\BfB$-term, we will not comment further on dealing with this term. See Proposition \ref{prop:estimateh} and Appendix \ref{appendix:yakov}.
 The term involving $H^{(a\omega)}_{m \ell}$ on the right hand side contains products of the various multipliers with the inhomogeneous terms that arise when establishing (\ref{fitt}). When transforming (\ref{fitt}) back to physical space, this term will be either supported for $\widetilde{\tau} \in [0,1]$ only (namely for the $T$-multiplier) and easily controlled from local energy estimates near data, or controlled by borrowing a small amount of the non-degenerate term on the left. We will not comment on it further here. See Section \ref{sec:InitialDataBounds} and also Section 9.6 of \cite{DRSR} for further details.

Most important is the first term on the right hand side of (\ref{fitt}), which arises from the $\BfB$-term. As familiar already from the Schwarzschild case, it cannot be absorbed on the left for all frequencies because of the degeneration due to trapping on the left hand side. It can, however, be immediately absorbed for all frequency triples $(\omega,m,\Lambda)$, for which $r_{trap}=0$. This range includes the range of superradiant frequencies, more specifically the slightly larger range $\mathcal{G} = \Big\{ (\omega, m, \Lambda) \ \ | \ \ m\omega \in \left(-m^2\xi_\ell, \frac{a m^2}{2Mr_+}+m^2\xi_r\right) \Big\} $, where $\xi_\ell, \xi_r>0$ depend only on $M$ and $a$ and are such that $\xi_r \rightarrow 0$ as $|a| \rightarrow M$. Note that the above uses the fundamental insight from \cite{DRSR} that the superradiant frequencies are (quantitatively) not trapped!

\subsubsection{The trapped frequencies} \label{sec:trafeintro}
The strategy to obtain improved control for the solution supported on frequencies in the complement of $\mathcal{G}$ (which includes trapped frequencies) is to commute the separated equation (\ref{omegam}) with a pseudodifferential operator, which mimicks the vector field $W_S$ from (\ref{Wsdef}) in the Schwarzschild case. More specifically, we construct a $W^\natural=W^\natural(m,\omega,r)$ which is a priori defined in $\mathcal{G}^\prime \supset \mathcal{G}^c$ and has good commutation properties ``up to lower order terms”, in the sense that the $T$-estimate for the commuted equation provides control of the pseudo-differential derivative $|\omega W^\natural \psi|^2$ \emph{non-degenerately} in frequency space.\footnote{For concreteness, the set $\mathcal{G}^\prime$ will be defined in (\ref{def:GCalPrimeDef}) as the complement of $\Big\{ (\omega, m, \Lambda) \ | \ m\omega \in \left(-\frac{\xi_\ell}{2}, \frac{a m^2}{2Mr_+}+\frac{\xi_r}{2}\right) \Big\}$ and therefore not contain any superradiant frequencies.} Controlling  $|\omega W^\natural \psi|^2$ provides in turn non-degenerate control for \emph{all} first derivatives of $W^\natural \psi$ (indeed a form of the Lagrangian estimate applies in $\mathcal{G}^\prime$, where frequencies are non-superradiant) allowing to absorb the first term in (\ref{fitt}). In summary, this leads to an estimate of the form
\begin{align} \label{fitt2}
&\int_{-\infty}^{\infty} dr^\star \int \sin \theta d\theta \left[\frac{1}{r^{1+\alpha}} \big|\left(W^\natural\widehat{u}_{m}\right)^\prime\big|^2 +\frac{1}{r^3}\big| \partial_\theta W^\natural\widehat{u}_{m}\big|^2+ \left(\frac{1}{r} \omega^2 +\frac{m^2}{r^3}+\frac{1}{r^{3+\alpha}}\right)\big| W^\natural\widehat{u}_{m}\big|^2\right] \nonumber \\
\lesssim &\int_{-\infty}^{\infty} dr^\star \int \sin \theta d\theta \frac{\Delta}{r^2} \left[\frac{\big|\widehat{u}_{m}{}^\prime\big|^2}{r^{1+\alpha}} +\frac{1}{r^3}\big| \partial_\theta \widehat{u}_{m}\big|^2 + \left(\frac{m^2}{r^3} +\frac{\omega^2}{r^{1+\alpha}}+\frac{1}{r^{3+\alpha}}\right)\big| \widehat{u}_{m}\big|^2\right]  \nonumber \\
+ &\Big|\int_{-\infty}^{\infty} dr^\star  \int \sin \theta d\theta \frac{\Delta}{r^2} \left(\varepsilon  \cdot b(r,\theta) W^\natural(\widehat{\chi_{\mathcal{T}}} \star i\omega \widehat{u})_{m} + W^\natural H_{m}\right)  \left(\frac{1}{r^{1+\alpha}} W^\natural\widehat{u}_{m},\frac{1}{r} W^\natural\widehat{u}_{m}, \omega (W^\natural\widehat{u}_{m})\right) \Big| \, .
\end{align}
For the term involving $W^\natural H_m$ we refer directly to Section \ref{sec:InitialDataBounds}.\footnote{Intuitively, one might expect that after transforming back to physical space, this term will be  supported for $\widetilde{\tau} \in [0,1]$ only and hence easily controlled by initial data. However, there is an important caveat here as $W^\natural$ is not a physical space operator (not even near infinity!) preventing a direct application of Parseval's identity. See also footnote \ref{footnote:DRSRanalogue}.}
In Section \ref{intro:convolve} we will discuss how the term involving the convolution can be either absorbed on left hand side (using that $\varepsilon$ is small) or incorporated into the first term on the right hand side. Assuming this is done, we can easily convert the first line of (\ref{fitt2}) into an expression for the fully separated pieces $\widehat{u}^{(a\omega)}_{m \ell}$ (using that $W^\natural$ only depends on $\omega$ and $m$, not on $\Lambda$). The resulting expression can then be combined with the lowest order estimate (\ref{fitt}) through the means of a Hardy inequality of the form (see Lemma \ref{lem:DBoundtoNDBound} in Section \ref{sec:NDLOT})
\begin{align} \label{fitt3}
\int_{R^\star_1}^{R^\star_2} dr^\star  \omega^2 |\widehat{u}^{(a\omega)}_{m\ell}|^2 \lesssim C_{M,a} \left( \frac{1}{\hat{\delta}} \int_{R^\star_1}^{R^\star_2} dr^\star \left(1-\frac{r_{trap}}{r}\right)^2\omega^2 |\widehat{u}^{(a\omega)}_{m \ell}|^2 + \hat{\delta} \int_{R^\star_1}^{R^\star_2}dr^\star \omega^2 |W^\natural \widehat{u}^{(a\omega)}_{m \ell}|^2 \right)
\end{align}
for $\hat{\delta}>0$ and $-\infty<R^\star_1<r^\star(r_{trap})<R^\star_2<\infty$ with $R^\star_1,R^\star_2$ uniform in the frequency parameters, i.e.~also depending only on $M$ and $a$. Combining (\ref{fitt}), (\ref{fitt2}) and (\ref{fitt3}) allows to finally absorb the first term on the right in (\ref{fitt}) choosing $\hat{\delta}$ sufficiently small and then $\varepsilon \ll \hat{\delta}$, in complete analogy with the Schwarzschild case. Transforming the resulting estimate back to physical space then in principle leads to (\ref{pse}) for the trapped frequencies, provided we can control the terms in the last line of (\ref{fitt2}) as mentioned after (\ref{fitt2}).

We remark that the operator $W^\natural$ used above will be frequency dependent \emph{globally} in $r$. This seems necessary as cutting it off to a physical space multiplier near infinity would introduce general second derivatives which we generally do not know how to control (even away from trapping), not even in the Schwarzschild case! However, near infinity $W^\natural$ approximates the physical space vector field $W_{\mathcal{I}^+}$ up to terms that can be bounded by an integrated decay estimate. Similarly, near the horizon $W_{\mathcal{H}^+}$ approximates $W^\natural$.

\subsubsection{The superradiant frequencies}
Before we come to the important convolution estimates in Section \ref{intro:convolve}, there is a further technical obstacle to overcome that has to do with the weights in $r$. It turns out that if we simply cut off the vector field $W^\natural$ in $\mathcal{G} \cap \mathcal{G}^\prime$ so as to vanish in $(\mathcal{G}^\prime)^c$, say, then 
the resulting convolution estimates discussed below (as well as the error estimates on the cut-off term involving $W^\natural H$ in (\ref{fitt3})) do not close because they would exhibit bad decay in $r$.\footnote{This issue is somewhat present already in \cite{DRSR} and resolved there by employing only multipliers which are frequency {independent up to exponentially decaying terms} near infinity. We do not have this option here in view of the remarks at the end of Section \ref{sec:trafeintro}. \label{footnote:DRSRanalogue}} This is caused by the fact that the microlocal commutator now takes a very different form near $r=\infty$ for different frequencies  (namely vanishing identically for superradiant frequencies!).  We can remedy this by commuting the wave equation with an operator that -- while still frequency dependent -- looks \emph{the same asymptotically near infinity for all frequencies}. In this spirit we commute the equation for frequencies in $\mathcal{G}$ with an operator that for large $r$ agrees with $W_{\mathcal{I}^+}$. It turns out that in $\mathcal{G}$ one can commute directly with the operator $\chi_\sharp \cdot W_{\mathcal{I}^+}$, where $\chi_\sharp$ is a smooth cut-off in space depending on the frequency but the cut-off region being uniformly away from the horizon and infinity. This exploits a globally good sign of the commutator term produced by $W_{\mathcal{I}^+}$ for superradiant frequencies, coupled with a suitable Lagrangian estimate in the cut-off region. Details can be found in Proposition \ref{prop:LE} and the reader should also compare with the $\Koppa$-current in \cite{DRSR}. Finally, for technical reasons, we actually first define an interpolated $W^\sharp$ which is equal to $W_{\mathcal{I}^+}$ in $(\mathcal{G}^\prime)^c$
and equal to $W^\natural$ in $\mathcal{G} \setminus (\mathcal{G}^{\prime \prime})^c$ with $(\mathcal{G}^{\prime \prime})^c = \Big\{ (\omega, m) \ | \ m\omega \in (-\frac{3\xi_\ell}{4}m^2, -\frac{\xi_\ell}{2}m^2) \cup (\frac{am^2}{2Mr_+}+\frac{\xi_r}{2}m^2, \frac{am^2}{2Mr_+}+\frac{3\xi_r}{4}m^2 )\Big\}$ and then define the commuting operator to be $\chi_\sharp W^\sharp$. See Section \ref{sec:constructionsummary}. In summary, we obtain an estimate analogous to (\ref{fitt2}) for $\chi_\sharp W^\sharp$ instead of $W^\natural$, with $\chi_\sharp W^\sharp$ being defined a priori only for frequencies in $\mathcal{G}$.

\subsubsection{The convolution estimates} \label{intro:convolve}
Combining the results of the previous section, we can finally define the following global operator:
\begin{align}
W :=  \widetilde{\chi} W^\natural + (1-\widetilde{\chi}) \chi_\sharp W^\sharp \, ,
\end{align}
where $\widetilde{\chi}=\widetilde{\chi}\left(\frac{\omega}{m}\right)$ is a cut-off in frequency equal to $1$ in $\mathcal{G}^c$ and equal to zero in $(\mathcal{G}^{\prime \prime})^c$. Commuting with $W$ we produce an estimate analogous to (\ref{fitt2}) (now for $W$ instead of $W^\natural$) and our final task is 
to estimate the term involving the convolution in (\ref{fitt2}). Recall that we are going to integrate (\ref{fitt2}) over $\int_{-\infty}^\infty d\omega$ (and sum in $m$) and eventually transform the resulting integrals back to honest spacetime integrals using the Parseval identity. It is then clear from (\ref{fitt2}) that we will need to control (in $L^2_\omega \ell^2_m$) an error term of size $\varepsilon$ of the form (note $\chi_{\mathcal{T}}$ is axisymmetric, hence $\widehat{\chi_{\mathcal{T}}}$ is supported on $m=0$ only)
\begin{align} \label{errorconvolution}
 b(r,\theta)  W (\widehat{\chi_{\mathcal{T}}} * i \omega  \widehat{u})_{m} =  b(r,\theta)   (\widehat{\chi_{\mathcal{T}}} * i \omega  W \widehat{u}_{m} )+ (b(r,\theta)   W \widehat{\chi_{\mathcal{T}}} * i \omega   \widehat{u}_m) + \textrm{error} \, ,
\end{align}
where the error can be interpreted as a measure of the failure of the ``product rule” to hold. Note in particular that if $W$ arose from a physical space multiplier, then the error in (\ref{errorconvolution}) would vanish as $W$ would be linear in $\omega$ and $m$.\footnote{This is actually the case when $m=0$ (axisymmetry). Here superradiance is not present (hence $W^\sharp=0$) and $W^\natural$ can be chosen to be the global, physical space vector field $W_0$.}
The first term on the right hand side of (\ref{errorconvolution}) can be absorbed on the left in the estimate (\ref{fitt2}) and the second term incorporated in the first term on the right, as it is lower order in derivatives. This follows directly from Parseval’s identity (or by Young’s inequality noting that $\widehat{\chi_{\mathcal{T}}}$ and $W \widehat{\chi_{\mathcal{T}}}$ are bounded in $L^1_\omega$ uniformly in ${\mathcal{T}}$). This leaves analysing the error in (\ref{errorconvolution}), which is carried out in detail in Section \ref{sec:CMConvolution}. A weak Coifman-Meyer-type estimate allows us to bound the error in \eqref{errorconvolution} in a suitable weighted $L^2$ norm by energies which contain only lower order (i.e.~up to first) derivatives of $\ph$. In order for this error to have sufficient decay in space, we must tailor our estimate to take advantage of the notion that $\Wsc$ approaches the physical space operator $\oW$ (see (\ref{W0def})) at spatial infinity. The required bounds are stated already in Section \ref{sec:proofofconvolution} (see Proposition \ref{lem:CMBound}) and proven in Section \ref{sec:CMConvolution} for a larger class of products than required here.

\subsection{Related work and final comments}
We note the paper \cite{Mavro1}, which generalises the commutator vector field $W_S$ introduced in \cite{HK20} in the Schwarzschild case to the setting of the Schwarzschild-de Sitter black hole. In the latter case, because of the compactness of the spacelike slices, closing the resulting energy estimates at the level of the $W_S$-commuted energy allows one to prove  \emph{exponential decay} for the $W_S$-commuted energy without loss of derivatives. As a corollary, \cite{Mavro1} establishes exponential decay of the energy (with loss of derivatives) by physical space methods. Previously, exponential decay had been proven through spectral techniques \cite{haefner, dyatlov2011quasi, vasy}, while physical space methods had established decay ``faster than any polynomial” \cite{DRdS}. For quasi-linear applications of the commutator vector field in the context of a positive cosmological constant, see \cite{Mavro2}. Finally, it would be interesting to generalise the result presented here to the Kerr-de Sitter case.

While of independent interest, we were mainly led to consider the problem (\ref{weintro}) by studying
 generalisations from Schwarzschild to Kerr of the physical space Chandrasekhar transformation theory for the Teukolsky equation \cite{Teuko, Chandraschw, DHR, DHRteuk, Ma}, which appears in the context of the black hole stability problem. The Teukolsky equation on Kerr can be viewed as the covariant wave equation with additional (not necessarily small) first order terms. These first order terms can be removed by applying suitable differential operators  to the Teukolsky equation thereby transforming it into so-called Regge-Wheeler type equations. One interpretation of the results of the present paper is that this transformation theory is sufficiently robust in that one can actually allow for (small) first order terms to appear in the Regge-Wheeler type equations and still prove essentially the same decay results for the solution. In a similar direction, and more interestingly perhaps, the techniques introduced here may suggest a new approach to study the stability of slowly rotating Kerr black holes in harmonic gauge, see \cite{HaefnerHintzVasy}. As it is known that a form of decoupling happens in the Schwarzschild case \cite{Johnson18},  the generalisation to $|a| \ll M$ may lead to equations modelled by the toy problem studied here.
 
We finally note that our techniques should also be applicable to a large class of non-linear problems. A natural starting point would be to prove small data global existence for semi-linear toy problems of the form
\[
\Box_{g_{a,M}} \Psi  = \varepsilon \BfB( \Psi) + \mathcal{N}(\partial \psi)  \ \ \textrm{with data} \ \ \ \Psi|_{\Sigma} = \Psi_0 \ \ \ , \ \ \ n_{\Sigma} \Psi |_{\Sigma}= \Psi_1
\]
where $\varepsilon<\varepsilon_0$ (the $\varepsilon_0$ being that of Theorem \ref{thm:Main}) and $\mathcal{N}(\partial \Psi)$ is an at least quadratic non-linearity satisfying Klainerman’s classical null condition \cite{KlNull}. See \cite{LukKerr} for the slowly rotating case and $\varepsilon=0$. We note that -- at least in the Schwarzschild case -- using the vector field $W$ may lead to some immediate simplifications of the known proofs for $\varepsilon=0$: Since $W$ behaves like $r \partial_v$ near infinity and like $\frac{1}{\sqrt{1-\frac{2M}{r}}} \partial_u$ near the horizon it already incorporates a version of the redshift and of the $r^p$-weighted multiplier in one global vector field, which may help to reduce the number of commutator vector fields required to close the non-linear estimates.

 \subsection{Structure of the paper}
 In Section \ref{sec:prelim} we recall the Kerr metric, the hyperboloidal foliation of the exterior and define some of the vector fields to be used in the sequel. We also provide the reduction to future integrable solutions. In Section \ref{sec:choices}, a number of parameters are fixed (which if chosen in the right order all depend on $M$, $a$ only) that are important later in order to establish coercivity properties of appropriate multipliers and commutators. In Section \ref{sec:comc}, we construct the microlocal  commutator using the parameters fixed in the previous section and establish some of its main properties. Section \ref{sec:CEE} contains the proof of the main theorem following the outline given above up to proving a certain error estimate stated as Theorem \ref{thm:RHSBoundMain} whose proof has been moved to its own section, Section \ref{sec:proofofconvolution}. The error estimate in turn relies on two (more general) convolution estimates stated in Section \ref{sec:convo} whose proof has again been moved to its own section, Section \ref{sec:CMConvolution}. We believe that this structure makes the paper more readable. Finally, for completeness we have included two appendices, one containing some elementary physical space estimates (including a proof of Proposition \ref{prop:ps} above) and the other containing a mode stability result that is invoked for bounded frequencies in the proof of the main theorem.

\subsection{Acknowledgements}
G.H.~acknowledges support by the Alexander von Humboldt Foundation in the framework of the Alexander von Humboldt Professorship endowed by the Federal Ministry of Education and Research. Both authors acknowledge funding through ERC Consolidator Grant 772249 as well as Germany’s Excellence Strategy EXC 2044 390685587, Mathematics M\"unster: Dynamics–Geometry–Structure.

\section{Preliminaries} \label{sec:prelim}
		\subsection{The Kerr spacetime and the vector field $W_0$}\label{sec:PhysicalSpaceDefs}
For real numbers $a, M$ satisfying $M > 0$, $|a| < M$, we consider the domain of outer communications $\mathcal{R}$ of the Kerr metric $g_{a, M}$. In Boyer-Lindquist coordinates $\left(t,r,\theta,\phi\right)$ we have $\mathcal{R}=(-\infty,\infty)_t \times (r_+, \infty)_r \times \mathbb{S}^2_{\theta,\phi}$ with
\begin{equation} \label{Kerrmetric}
g_{a, M} = -\frac{\Delta}{\rho^2}\big(dt - a\sin^2\theta d\phi\big)^2 + \frac{\rho^2}{\Delta}dr^2 + \rho^2 d\theta^2 + \frac{\sin^2\theta}{\rho^2}\big(a\, dt - (r^2+a^2)d\phi)\big)^2,
\end{equation}
where $r_+ = M + \sqrt{M^2-a^2}$ and
\begin{equation}
\Delta = r^2-2Mr+a^2, \qquad \rho^2 = r^2+a^2\cos^2\theta.
\end{equation}
Using the invariance of (\ref{Kerrmetric}) under $\phi \rightarrow -\phi$ and $a \rightarrow -a$, we will assume $a>0$ without loss of generality.
 \begin{figure}
  \begin{center}
\begin{tikzpicture}[scale=0.7,
  mydot/.style={
    circle,
    fill=white,
    draw,
    outer sep=0pt,
    inner sep=1.5pt
  }
]
\filldraw[fill=lightgray, draw=black, dashed] (6,0)--node[midway, above, sloped] {$\mathcal{I}^+$}(3,3)--(0,0)--(3,-3)--node[midway, below, sloped] {$\mathcal{I}^-$}(6,0);
\draw (3,1) node[below]{$\widetilde{\Sigma}_{\widetilde\tau_2}$} arc (270:315:2) (3,1) arc (270:225:2);
\draw (3,0) node[below]{$\widetilde{\Sigma}_{\widetilde\tau_1}$} arc (270:315:3) (3,0) arc (270:225:3);
\draw (0,0) -- node[midway,above, sloped] {$\mathcal{H}^+$} (3,3) (0,0) --node[midway,below, sloped] {$\mathcal{H}^-$} (3,-3);
\draw [solid] (6,0) node[mydot] {}node[right]{$i^0$} (3,3)node[mydot]{}node[above]{$i^+$}(3,-3)node[mydot]{}node[below]{$i^-$};

\end{tikzpicture}
\end{center}
\caption{The Penrose diagram for $\mathcal{R}$}
\label{fig:Penrose}
\end{figure}
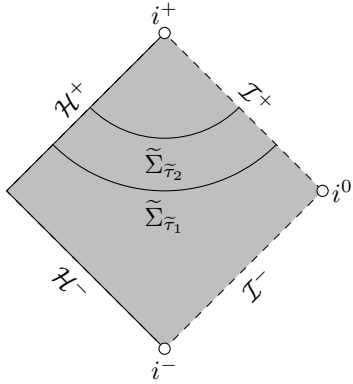


As indicated in the Penrose diagram, one may attach a null boundary $\mathcal{H}^+ \cup \mathcal{H}^-$ (known as the future and past event horizon respectively) to $\mathcal{R}$, to which the Kerr metric extends smoothly. See for instance \cite{DafRodsmalla} for a detailed construction starting from the manifold with boundary in regular coordinates. 

We next define the quantities 
\begin{align} \label{vdef}
\rweight = (r^2+a^2)^{1/2}\index{ra@$\rweight$}, \qquad v = \frac{\Delta^{1/2}}{r^2+a^2}\index{v@$v$}.
\end{align}
We define $v_0(a, M), r_0(a, M)$ such that\footnote{One easily computes that $r_0(a,M)$ is the unique real zero of the cubic $r^3-3Mr^2+ra^2+Ma^2$ in the interval $[r_+,\infty)$ and that the supremum of $v$ is attained in $(r_+,\infty)$ for $|a|<M$.}
\begin{equation}
v_0(a, M) = \sup_{r\in(r_+, \infty)}v\index{v0@$v_0$}, \qquad v  = v_0(a, M) \text{ at } r = r_0(a, M) \, .\index{r0@$r_0$}
\end{equation}
We will in general suppress the dependence of $v_0$, $r_0$ on $a, M$. We also define $h_0$ to be the increasing function satisfying
\begin{equation}\label{def:h0}
h_0^2[a, M](r) = 1-v_0^{-2}v^2, \index{h0@$h_0$}
\end{equation}
and consequently
\begin{equation}\label{def:f0}
f_0 = \frac{h_0}{v}\index{f0@$f_0$} \, .
\end{equation}
The fact that $h_0$ is increasing follows from the fact that $h_0$ vanishes at $r_0$ and that $\partial_r(v^2)$ has only one root in $(r_+, \infty)$. 
\begin{remark}
In the Schwarzschild spacetime, one may directly calculate
\[
r_0(0, M) = 3M, \qquad v_0(0, M) = (27M^2)^{-1/2}, \qquad h_0[0, M](r) = \big(1-\tfrac{3M}{r}\big)\sqrt{1+\tfrac{6M}{r}}.
\]
\end{remark}
We additionally define the radial tortoise coordinate $\rs$, and the corresponding vector field $\Rs$, such that
\begin{equation}
\Rs\index{R@$\Rs$} = \frac{\Delta}{r^2+a^2}\partial_r, \qquad \rs(r_0)\index{r@$\rs$} = 0, \qquad \Rs(\rs) = 1. \label{rrstar}
\end{equation}
We note that $\Rs = \partial_{\rs}$ in the $(t, \rs, \theta, \phi)$ coordinates. We will typically denote $R^\star f = f^\prime$ on functions. This gives a bijective map between $r\in (r_+, \infty)$ and $\rs \in(-\infty, \infty)$. We recall that the vector field $T:=\partial_t$ and $\Phi:=\partial_\phi$ are Killing for (\ref{Kerrmetric}). We recall also the Hawking field $K=T + \frac{a}{r_+^2+a^2} \Phi$ and define the \emph{modified Hawking field}
\begin{equation}\label{def:HField}
\Ks\index{Ks@$\Ks$} = T + \frac{a}{r^2+a^2}\Phi \, , 
\end{equation}
which is null at the horizon and timelike in the exterior (but not Killing). Finally, we define the vector field
\begin{equation} \label{W0def}
\oW\index{Wbar@$\oW$} = v^{-1}\Rs + v^{-1}h_0 \Ks \, ,
\end{equation}
which reduces to (\ref{Wsdef}) in the Schwarzschild case $a=0$. For future reference we will also define the (algebraically special) null vector fields
\begin{align} \label{def:LLbar}
L=T+R^\star+\frac{a}{r^2+a^2}\Phi \textrm{ \ \ \ and \ \ \ } \underline{L}=T-R^\star+\frac{a}{r^2+a^2}\Phi \, .
\end{align} 

\subsection{The hyperboloidal foliation} \label{sec:hyperboloidal}
We define the hyperboloidal foliation variable $\wtau = \wtau(t, r)$ such that
\begin{equation}\label{def:timefoliation}
\wtau(t, r_0)\index{tautilde@$\wtau$} = t, \qquad \Rs\wtau = -h_0, \qquad\partial_\theta\wtau = \partial_\phi\wtau = 0.
\end{equation}
We remark that $\wtau$ is smooth, and that surfaces of constant $\wtau$ connect the future event horizon and future null infinity (as indicated in Figure \ref{fig:Penrose}).
Thus, $\wSigtau$ determines a hyperboloidal foliation of hypersurfaces. We may extend this to a regular coordinate system $(\wtau, r, \theta, \widetilde{\phi})$ by defining $\widetilde{\phi}$ satisfying
\begin{equation}
\widetilde{\phi} = \phi \text{ for } r = r_0, \qquad \widetilde{\phi} = \phi + \overline{\phi}(r) \ \ \textrm{mod $2\pi$}, \qquad \Rs\overline{\phi} = -\frac{a h_0}{r^2+a^2}.
\end{equation}
In the $(\wtau, r, \theta, \widetilde{\phi})$ coordinates, one has the derivative
\begin{equation}\label{def:wpar}
\widetilde{\pa}_{r} = \partial_r + h_0\left(\frac{r^2+a^2}{\Delta}\right)\Ks = \Delta^{-1/2}\oW,
\end{equation}
as well as
\begin{equation}
\widetilde{\pa}_{\wtau} = \partial_t, \qquad \widetilde{\pa}_\theta = \partial_\theta, \qquad \widetilde{\pa}_{\widetilde{\phi}} = \partial_\phi.
\end{equation}
Then,
\begin{align}
g(\widetilde{\pa}_{r}, \widetilde{\pa}_{r}) = (1-h_0^2)\left(\tfrac{\rho^2}{\Delta}\right),  \ \ \ \
g(\widetilde{\pa}_{r}, \widetilde{\pa}_{\wtau}) = -h_0, \ \ \ \
g(\widetilde{\pa}_{r}, \widetilde{\pa}_{\widetilde{\phi}}) = ah_0\sin^2\theta.
\end{align}
All other components of $g$ in these coordinates are equal to their analogous quantities in Boyer-Lindquist coordinates. It follows from global bounds on $h_0$ as well as the definition \eqref{def:h0} that $g_{a, M}$ is regular with respect to the $(\wtau, r,\theta,  \widetilde{\phi})$ coordinates. In particular, in these coordinates the metric extends to the set $\mathcal{H}^+=(-\infty,\infty) \times \{r=r_+\} \times \mathbb{S}^2_{{\theta}, \widetilde{\phi}}$, which is part of the aforementioned null boundary of $\mathcal{R}$. See Appendix \ref{sec:FoliationCalculations} for further computations (volume elements, normal of $\widetilde{\Sigma}_{\tilde{\tau}}$, etc.) in these coordinates.

Finally, for future reference, we denote by $\slashed{g}:=\slashed{g}({\wtau},r)$ and $\slashed{\nabla}$ the induced metric and covariant derivative on the spheres $\mathbb{S}^2_{{\theta}, \widetilde{\phi}}$ of constant $({\wtau},r)$ in $\mathcal{R}$.

\subsection{Reduction to future integrable outgoing solutions} 

To establish the estimate (\ref{pse}) of Theorem \ref{thm:Main} we will make two simplifying assumptions, which are easily removed once (\ref{pse}) has been proven under these additional assumptions. See the very end of Section \ref{sec:MainTheoremProof}. 
\begin{align}
\textrm{(Auxiliary Assumption 1): The initial data for (\ref{weintro}) are smooth and compactly supported.} \label{aux1} \\
\textrm{(Auxiliary Assumption 2):  The operator $\BfB$ in (\ref{weintro}) is uniformly compactly supported in $r$.} \label{aux2}
\end{align}

Assuming now (\ref{aux1}) and (\ref{aux2}), we first construct a modified wave equation whose solutions are a priori future integrable in the sense of \cite{DRSR} allowing us to work with the Fourier transform and separation of variables. Estimates on solutions to the modified equation are easily converted to solutions to the original equation. 

\subsubsection{Constructing the modified wave equation} \label{sec:comof}
%
%
We begin by defining a smooth nondecreasing function
\begin{equation}\label{def:chiOneD}
\chi(y) = \begin{cases}
0 & y \leq 0, \\
1 & y \geq 1.
\end{cases}
\end{equation}
%
For given $\mathcal{T}\index{Taucal@$\mathcal{T}$} >2$ we also define the large-time cutoff
\begin{equation} \label{defchiT}
\chi_{{\mathcal{T}}}\index{chiT@$\chi_{\mathcal{T}}$} ({\wtau})=\chi\big(\tfrac{\wtau}{{\mathcal{T}}} + 1\big)\chi\big(2-\tfrac{\wtau}{{\mathcal{T}}}\big),
\end{equation}
so that in particular $\chi_{\mathcal{T}} = 1$ on $[0, {\mathcal{T}}]$, and $\chi_{\mathcal{T}}$ is compactly supported in $\wtau \in [-{\mathcal{T}},2{\mathcal{T}}]$ for all ${\mathcal{T}}$. We now define $\Psi_{\mathcal{T}}\index{psiT@$\Psi_{\mathcal{T}}$}$ to be the solution (defined in $\mathcal{R} \cap D^+(\widetilde{\Sigma}_0)$) of
\begin{equation}\label{def:PhysCutoffEq}
\Box_{g_{a, M}}\Psi_{\mathcal{T}} = \varepsilon\chi_{\mathcal{T}}\BfB \Psi_{\mathcal{T}}.
\end{equation}
with initial data $\Psi_{\mathcal{T}}|_{\widetilde\Sigma_0} = \Psi_0$ and $n_{\widetilde\Sigma_\tau} \Psi_\mathcal{T} |_{\widetilde\Sigma_0}= \Psi_1$ as in (\ref{weintro}). By domain of dependence, $\Psi_\mathcal{T}=\Psi$ for $\wtau \in [0,\mathcal{T}]$.
Finally, we define
\begin{equation}\label{def:Psichi}
 \Psi_{\chi , \mathcal{T}}\index{psichiT@$ \Psi_{\chi , \mathcal{T}}$}= \chi\Psi_\mathcal{T} \, ,
\end{equation}
which is well-defined and smooth on all of $\mathcal{R}$ since $\chi$ vanishes identically for $\wtau\leq 0$. Moreover, $ \Psi_{\chi , \mathcal{T}}$ satisfies
\begin{equation}\label{def:WaveEqCutoff}
\Box_{g_{a, M}}( \Psi_{\chi , \mathcal{T}}) =  \varepsilon\chi_T(\BfB \Psi_{\chi , \mathcal{T}}) + 2\nabla^\alpha \chi \nabla_\alpha \Psi_\mathcal{T} + (\Box_{g_{a, M}}\chi)\Psi_\mathcal{T} - \varepsilon \chi_\mathcal{T}\Psi_\mathcal{T} (\BfB \chi -\BfB^0 \chi).
\end{equation}
In summary, $ \Psi_{\chi , \mathcal{T}}$ is a smooth solution to the equation \eqref{def:WaveEqCutoff} on $\mathcal{R}$, which
\begin{itemize}
\item vanishes identically on $\mathcal{R} \cap \{ \wtau \leq 0\}$,
\item  is equal to $\Psi_\mathcal{T}$ and hence the original $\Psi$ for $\wtau \in [1,\mathcal{T}]$,
\item satisfies the free homogeneous wave equation for  $\wtau \geq 2\mathcal{T}$. 
\end{itemize}
 Note also that the last three terms on the right hand side of (\ref{def:WaveEqCutoff}) are supported only for $\wtau \in [0,1]$.  \\ 
 
 By the properties of $ \Psi_{\chi , \mathcal{T}}$ is clear that to prove Theorem \ref{thm:Main} it is sufficient to establish (\ref{pse}) for $\Psi_{\chi,\mathcal{T}}$ with constants independent of $\mathcal{T}$ and this is what we will do.

\subsubsection{Establishing future integrability}

The following estimates for $ \Psi_{\chi , \mathcal{T}}$ are standard (since the constant $C_{\mathsf{T}, \BfB}$ below is allowed to depend on $\mathsf{T}$) and follow from the properties of $\BfB$ and Gronwall's inequality. See Appendix \ref{sec:FoliationCalculations} for the necessary background on doing non-degenerate energy estimates for the slices $\tilde{\Sigma}_\tau$. 

\begin{proposition} \label{prop:basicgronwall}
Recall from (\ref{basicdataenergy}) the energy $E_\tau[\Psi]$ on constant $\tau=\wtau$ slices for $\wtau \geq 0$ and consider (\ref{weintro}) with the assumptions (\ref{aux1}) and (\ref{aux2}).
Then, solutions to (\ref{def:WaveEqCutoff}) satisfy for all $\mathsf{T}>0$ the estimate
\begin{align} \label{basiclocal}
E_\mathsf{T} [\Psi_{\chi,\mathcal{T}}] \leq C_{\mathsf{T}, \BfB} E_0  [\Psi_\mathcal{T}]  \, .
\end{align}
We also have for any $N\geq 0$ the higher order estimate
\begin{align} \label{higherlocal}
 E_\mathsf{T} [\mathfrak{D}^N \Psi_{\chi,\mathcal{T}}] \leq C_{\mathsf{T},\BfB, N}  E_0 [\mathfrak{D}^N \Psi_\mathcal{T}] \, ,
\end{align}
where $E_\tau [\mathfrak{D}^n \psi]$ is defined as 
\begin{align}
E_\tau [\mathfrak{D}^n \psi] =\sum_{0 \leq i_1+i_2+i_3\leq n} E_\tau \left[ (\slashed{\nabla})^{i_1} T^{i_2} (\widetilde{\partial}_r)^{i_3} \psi\right] \, .
\end{align}
\end{proposition}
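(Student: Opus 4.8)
The plan is to run a standard energy estimate on the slices $\widetilde\Sigma_\tau$ for the cutoff equation \eqref{def:WaveEqCutoff} and close it via Gronwall's inequality, absorbing the $\varepsilon \chi_{\mathcal T}\BfB\Psi_{\chi,\mathcal T}$ term and the lower-order cutoff terms supported in $\wtau\in[0,1]$. First I would fix the energy estimate apparatus: take the energy-momentum tensor $\mathbb T_{\mu\nu}[\psi]$ of the wave operator $\Box_{g_{a,M}}$, contract with the timelike-near-the-horizon vector field built from the foliation (the normal $n_{\widetilde\Sigma_\tau}$, or a fixed globally timelike multiplier of the form $N = T + (\text{redshift modification near }\mathcal H^+) + (r^p\text{-type weight near }\Ical^+)$ so that the boundary flux on $\widetilde\Sigma_\tau$ is coercive and comparable to $E_\tau[\psi]$ as defined in \eqref{basicdataenergy}), and integrate the divergence identity over the spacetime slab between $\widetilde\Sigma_0$ and $\widetilde\Sigma_{\mathsf T}$. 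The bulk term $\dive(\mathbb T[\psi]\cdot N)$ has a piece from the deformation tensor of $N$ — which is bounded pointwise by the energy density, possibly with a favourable sign near $\mathcal H^+$ from the redshift — and a piece $N(\psi)\cdot\Box_{g}\psi$ coming from the inhomogeneity. I would refer to Appendix \ref{sec:FoliationCalculations} for the precise volume elements, the form of $n_{\widetilde\Sigma_\tau}$, and the (uniform in $\tau$) equivalence of the geometric flux with $E_\tau$.

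The main estimate then reads, schematically,
\begin{align*}
E_{\mathsf T}[\Psi_{\chi,\mathcal T}] \;\lesssim\; E_0[\Psi_{\chi,\mathcal T}] + \int_0^{\mathsf T} d\tau\, E_\tau[\Psi_{\chi,\mathcal T}] + \int_0^{\mathsf T}\!\! d\tau \int_{\widetilde\Sigma_\tau} \big| N(\Psi_{\chi,\mathcal T})\big|\,\big|\Box_{g_{a,M}}\Psi_{\chi,\mathcal T}\big|\,,
\end{align*}
where the first bulk term absorbs the deformation tensor contribution. For the last term I insert \eqref{def:WaveEqCutoff}: the term $\varepsilon\chi_{\mathcal T}\BfB\Psi_{\chi,\mathcal T}$ is a first-order operator with coefficients bounded by the decay assumptions \eqref{Bassumption1}--\eqref{Bassumption2} (so in particular uniformly bounded on any compact $r$-range, and here $\BfB$ is compactly supported in $r$ by \eqref{aux2}), hence $|N(\Psi_{\chi,\mathcal T})|\,|\varepsilon\chi_{\mathcal T}\BfB\Psi_{\chi,\mathcal T}| \lesssim_{\BfB} |\varepsilon|\cdot(\text{energy density of }\Psi_{\chi,\mathcal T})$, which feeds into $\int_0^{\mathsf T} E_\tau[\Psi_{\chi,\mathcal T}]\,d\tau$ with a constant depending on $\BfB$ (not needing $\varepsilon$ small, since we only want a $\mathsf T$-dependent bound). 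The three remaining terms on the right of \eqref{def:WaveEqCutoff} ($2\nabla^\alpha\chi\nabla_\alpha\Psi_{\mathcal T}$, $(\Box\chi)\Psi_{\mathcal T}$, $-\varepsilon\chi_{\mathcal T}\Psi_{\mathcal T}(\BfB\chi-\BfB^0\chi)$) are all supported in $\wtau\in[0,1]$ and involve at most first derivatives of $\Psi_{\mathcal T}=\Psi$; by Cauchy--Schwarz and the elementary local-in-time well-posedness bound for $\Psi_{\mathcal T}$ on $\wtau\in[0,1]$, these are controlled by $C_{\BfB}\,E_0[\Psi_{\mathcal T}]$. Finally, since $\Psi_{\chi,\mathcal T}$ vanishes identically for $\wtau\le 0$, the data term $E_0[\Psi_{\chi,\mathcal T}]=0$; collecting everything and applying Gronwall's inequality in $\mathsf T$ yields $E_{\mathsf T}[\Psi_{\chi,\mathcal T}]\le C_{\mathsf T,\BfB}\,E_0[\Psi_{\mathcal T}]$, which is \eqref{basiclocal}.

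For the higher-order estimate \eqref{higherlocal} I would commute \eqref{def:WaveEqCutoff} with the collection of vector fields $\{\slashed\nabla, T, \widetilde\partial_r\}$ up to order $N$. Commuting $\Box_{g_{a,M}}$ with $T$ and $\Phi$ is clean since $T,\Phi$ are Killing (and $\Lie_T\BfB=\Lie_\Phi\BfB=0$), producing no new top-order error from the $\BfB$-term; commuting with $\slashed\nabla$ and $\widetilde\partial_r$ generates lower-order terms with coefficients controlled on the compact $r$-support of $\BfB$ (and with the stated $r$-weights near $\Ical^+$, which are harmless for a $\mathsf T$-dependent bound), so that $\Box_{g_{a,M}}(\mathfrak D^N\Psi_{\chi,\mathcal T})$ is a sum of $\varepsilon$-times-first-order-in-$\mathfrak D^{\le N}\Psi_{\chi,\mathcal T}$ terms plus terms supported in $\wtau\in[0,1]$ involving $\mathfrak D^{\le N}\Psi_{\mathcal T}$. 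Running the same energy estimate and Gronwall argument for the vector $(E_\tau[\mathfrak D^i\Psi_{\chi,\mathcal T}])_{i\le N}$ — inducting on $N$ so that the commutator errors of order $<N$ are already controlled — gives \eqref{higherlocal}. The only mildly delicate point, and the step I expect to require the most care, is the bookkeeping of the commutator coefficients near $\Ical^+$ in the $\widetilde\partial_r$ direction (the metric coefficients $g(\widetilde\partial_r,\cdot)$ and their $r$-derivatives must be tracked) and the corresponding choice of the $r^p$-weighted part of the multiplier $N$ so that the boundary flux stays coercive; but since no uniformity in $\mathsf T$ is demanded here, all such weights can be handled crudely, and this is why the proposition is labelled standard.
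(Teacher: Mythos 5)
Your proposal follows essentially the same route as the paper, which treats this as a standard estimate: a $K^\star$/redshift-multiplier energy identity on the hyperboloidal slices $\widetilde{\Sigma}_{\wtau}$ (Appendix \ref{sec:FoliationCalculations}), absorption of the $\varepsilon\chi_{\mathcal{T}}\BfB$-term and of the $\wtau\in[0,1]$-supported cutoff terms into a Gronwall argument with a $\mathsf{T}$-dependent constant, and commutation for the higher-order statement. The only detail you gloss over is that the zeroth-order term $|\Psi|^2/r^2$ in $E_\tau$ is not controlled by the multiplier flux itself but is recovered via an elementary Hardy inequality on $\widetilde{\Sigma}_\tau$, exactly as the paper indicates.
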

Note that in view of $\Psi_\mathcal{T}|_{\widetilde\Sigma_0} = \Psi_0$ and $n_{\widetilde\Sigma_\tau} \Psi_\mathcal{T} |_{\widetilde\Sigma_0}= \Psi_1$ we have $E_0 [\mathfrak{D}^N \Psi_\mathcal{T}]=E_0 [\mathfrak{D}^N \Psi]$ for all $N\geq 0$ and hence the right hand sides of (\ref{basiclocal}) and (\ref{higherlocal}) are expressible purely in terms of the data $\Psi_0$ and $\Psi_1$ appearing in Theorem \ref{thm:Main}.
%
%
From Proposition \ref{prop:basicgronwall} and the fact that solutions to  the homogeneous wave equation have been shown to be square integrable in time for any fixed $r$ we conclude the following statement whose first part will in particular justify taking the Fourier transform in time and whose second part will allow us to express appropriate boundary conditions at null infinity at the microlocal level.

\begin{proposition} \label{prop:suffi}
Assume $\Psi_{\chi,\mathcal{T}}$ arises from a $\Psi$ satisfying (\ref{weintro}) with (\ref{aux1}) and (\ref{aux2}). Then, for any fixed $\mathcal{T}>2$, the solution to (\ref{def:WaveEqCutoff}) is square integrable in Boyer-Lindquist time  in that (with $d\sigma=\sin \theta d\theta d\tilde{\phi}$)
\begin{align} \label{sqi}
 \sup_{r \in [r_+,A]} \int_{-\infty}^\infty dt \int_{\mathbb{S}^2} d\sigma \sum_{0 \leq i_1+i_2+i_3 \leq N} \Big| (\slashed{\nabla})^{i_1} (T)^{i_2} \left(\widetilde{\partial}_r \right)^{i_3} \Psi_{\chi,\mathcal{T}}\Big|^2_{\slashed{g}} < \infty
\end{align}
holds for any $A< \infty$ and any $N$. 
Moreover, the solution to (\ref{def:WaveEqCutoff}) satisfies for any fixed $\mathcal{T}>2$ the weighted bound
\begin{align} \label{sqi2}
\sup_{r \geq 10M} \int_{-\infty}^\infty dt \int_{\mathbb{S}^2} d\sigma  \left[ \sum_{X \in \{R^\star,T,id\}}  r^{\frac{1}{2}-\delta} \big| X W_0 (r\Psi_{\chi,\mathcal{T}})\big|^2  +\sum_{X,Y \in \{T, \Phi,id\}}  r^{-1-\delta} |XY (r\Psi_{\chi,\mathcal{T}})|^2\right] < \infty \, .
\end{align}
\end{proposition}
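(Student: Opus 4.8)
The plan is to establish \eqref{sqi} first and then derive the weighted bound \eqref{sqi2} by a separate $r$-weighted argument. For \eqref{sqi}, the key structural fact established above is that $\Psi_{\chi,\mathcal{T}}$ vanishes identically for $\wtau \leq 0$, agrees with the original $\Psi$ on $\wtau\in[1,\mathcal{T}]$, and solves the \emph{free} homogeneous wave equation $\Box_{g_{a,M}} \Psi_{\chi,\mathcal{T}}=0$ for $\wtau \geq 2\mathcal{T}$ (the inhomogeneity in \eqref{def:WaveEqCutoff} is supported in $\wtau\in[0,1]$, hence does not contribute for $\wtau \geq 2\mathcal{T}$). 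I would split the time integral in \eqref{sqi} as $\int_{-\infty}^{\infty} = \int_{\{\wtau \leq 2\mathcal{T}\}} + \int_{\{\wtau \geq 2\mathcal{T}\}}$, using that Boyer-Lindquist time $t$ and $\wtau$ differ by a function of $r$ only (so on the compact $r$-range $[r_+,A]$ the two foliations are uniformly comparable and the regions $\{\wtau \lessgtr 2\mathcal{T}\}$ differ from $\{t \lessgtr \mathrm{const}\}$ by a bounded shift). The first piece is over a compact time interval and is finite by the higher-order local energy estimate \eqref{higherlocal} of Proposition \ref{prop:basicgronwall} (applied up to order $N$, together with elliptic estimates near $\mathcal{H}^+$ and a trace/Sobolev inequality to pass from the hypersurface energies $E_\tau[\mathfrak{D}^N \cdot]$ to the pointwise-on-spheres $L^2_t$ quantity; smoothness and the compact support of the data \eqref{aux1} guarantee everything is finite). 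The second piece, over $\{\wtau \geq 2\mathcal{T}\}$, is where we invoke that $\Psi_{\chi,\mathcal{T}}$ solves the free wave equation there with Cauchy data on $\widetilde{\Sigma}_{2\mathcal{T}}$ of finite higher-order energy (again by \eqref{higherlocal}): the cited result of \cite{DRSR} (quantitative integrated decay / future integrability for the homogeneous wave equation on subextremal Kerr, commuted with $T$, $\Phi$ and $\widetilde{\partial}_r$ up to order $N$) furnishes exactly the statement that solutions to the homogeneous equation are square-integrable in time on any fixed $r$-slice, uniformly in $r\in[r_+,A]$.

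For the weighted bound \eqref{sqi2} the same dichotomy applies: on $\{\wtau \geq 2\mathcal{T}\}$ the relevant weighted $L^2_t$ bounds for $XW_0(r\Psi_{\chi,\mathcal{T}})$ and $XY(r\Psi_{\chi,\mathcal{T}})$ with $X,Y$ Killing or $W_0$ are a consequence of the $r^p$-weighted hierarchy of \cite{DRSR} (using $W_0 \sim rL$ near infinity, as noted after Proposition \ref{prop:ps}), applied to the homogeneous equation with finite commuted data on $\widetilde{\Sigma}_{2\mathcal{T}}$ — this is precisely the kind of weighted estimate needed to express radiation-field-type boundary conditions at $\mathcal{I}^+$. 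On the bounded-time part $\{\wtau \leq 2\mathcal{T}\}$ one combines \eqref{higherlocal} with a standard $r^p$-weighted energy estimate (of Dafermos--Rodnianski type, carried out directly for the cut-off equation \eqref{def:WaveEqCutoff} whose inhomogeneity is compactly supported in spacetime) to propagate the weighted norm from data; since the data are compactly supported this starts from a finite quantity and the constant is allowed to depend on $\mathcal{T}$. I would package the near-infinity identities $\widetilde{\partial}_r = \Delta^{-1/2}W_0$ and the asymptotics of $W_0$ (from \eqref{W0def}, \eqref{def:f0}) to translate between the $W_0$-derivatives appearing in \eqref{sqi2} and the $L,\underline{L},\slashed{\nabla}$-derivatives in which the $r^p$-estimates are naturally phrased.

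The main obstacle is the interface step: passing rigorously from the hypersurface energy bounds \eqref{basiclocal}--\eqref{higherlocal} (which hold on the hyperboloidal slices $\widetilde{\Sigma}_{\wtau}$) to the statement that the \emph{fixed-$r$, all-time} integral in \eqref{sqi} is finite, and doing so with control near the horizon $r=r_+$ where the two coordinate systems and the volume forms degenerate relative to Boyer-Lindquist $t$. Concretely, one needs: (i) that $\{r=\mathrm{const}\}\cap\{|t|\leq R\}$ can be foliated by (portions of) finitely many $\widetilde{\Sigma}_{\wtau}$ so that a fixed-$r$ $L^2_t$ norm is dominated by $\sup_{\wtau}$ of commuted hypersurface energies plus an $\int d\wtau$ of them — an application of the coarea formula together with the non-degeneracy of the slicing and the comparability $dt \sim d\wtau$ on compact $r$-ranges; and (ii) near $r=r_+$, replacing $\widetilde{\partial}_r$ derivatives by the regular derivatives $\widetilde{\partial}_r, T, \slashed{\nabla}$ controlled in $E_\tau[\mathfrak{D}^N\cdot]$, which is immediate since these are precisely the regular coordinate derivatives, but one must check the weight $1/\Delta$ in $E_\tau$ does not spoil the fixed-$r$ estimate (it does not, since $r=r_+$ is excluded pointwise and $\Delta$ is bounded below on $[r_+ + \delta, A]$, while near $r_+$ one uses instead the non-degenerate redshift-type control implicit in \eqref{higherlocal}). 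Once this translation is in place, the two cited inputs from \cite{DRSR} — future integrability for the free equation and the $r^p$-weighted hierarchy — close both \eqref{sqi} and \eqref{sqi2}.
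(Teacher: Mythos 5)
Your argument for \eqref{sqi} is essentially the paper's: local (Gronwall) control up to $\widetilde{\tau}=2\mathcal{T}$ via \eqref{basiclocal}--\eqref{higherlocal}, the boundedness and integrated local energy decay statements of \cite{DRSR} for the free equation thereafter, and a mean-value/fundamental-theorem-of-calculus (trace in $r$) step to convert hypersurface and spacetime energies into the fixed-$r$, all-time integral; the interface issue you flag is exactly what the paper dispatches with ``the mean value theorem and the fundamental theorem of calculus,'' so that half is fine.

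For \eqref{sqi2}, however, there is a genuine gap. You assert that the bounds on $XW_0(r\Psi_{\chi,\mathcal{T}})$, $X\in\{R^\star,T,id\}$, ``are a consequence of the $r^p$-weighted hierarchy'' after identifying $W_0\sim rL$ near infinity. A weight count shows this cannot work as stated: the commuted $r^p$ hierarchy, even after the dyadic upgrade to fixed $r$, yields bounds of the schematic form $\sup_r\int dt\,d\sigma\, r^{-1-\delta}\big|(r\slashed{\nabla})^{i_1}(rL)^{i_2}\underline{L}^{i_3}(r\Psi)\big|^2<\infty$ (this is \eqref{rph}), i.e.\ at best weight $r^{1-\delta}$ on $|L\underline{L}(r\Psi)|^2$ and on $|L(r\Psi)|^2$. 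But since $W_0\approx rL$ and $R^\star=\tfrac12(L-\underline{L})$, the term $r^{1/2-\delta}|R^\star W_0(r\Psi)|^2$ contains $r^{5/2-\delta}|\underline{L}L(r\Psi)|^2$ (similarly for $TW_0$), and the $X=id$ term is $r^{5/2-\delta}|L(r\Psi)|^2$ — each off by a factor $r^{3/2}$ from what the hierarchy provides. The missing idea, which is how the paper proceeds (following Theorem 6.1 of \cite{M16}), is to \emph{insert the wave equation}: one writes $R^\star W_0(r\Psi_{\mathcal{T}})=-\tfrac{R^\star v}{v}W_0(r\Psi_{\mathcal{T}})+\tfrac12 v^{-1}(L-\underline{L})\big(L+(h_0-1)K^\star\big)(r\Psi_{\mathcal{T}})$ and uses \eqref{def:PhysCutoffEq} to replace the dangerous $\underline{L}L(r\Psi_{\mathcal{T}})$ combination by angular and lower-order terms carrying an extra $r^{-2}$ of decay; only then do the \eqref{rph}-type bounds close, and the undifferentiated $W_0(r\Psi_{\mathcal{T}})$ bound with weight $r^{1/2-\delta}$ is obtained afterwards by a further dyadic/FTC-in-$r$ argument from the $R^\star W_0$ bound, not directly from the hierarchy. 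Two smaller corrections: the inhomogeneity of \eqref{def:WaveEqCutoff} is \emph{not} compactly supported in spacetime — the cutoff terms $2\nabla^\alpha\chi\nabla_\alpha\Psi_{\mathcal{T}}+(\Box_{g_{a,M}}\chi)\Psi_{\mathcal{T}}$ extend to $r=\infty$ (only their $\widetilde{\tau}$-support is compact) — so you must either verify their $r$-decay inside the $r^p$ estimate or, as the paper does, use $W_0\widetilde{\tau}=0$ to reduce \eqref{sqi2} to $\Psi_{\mathcal{T}}$ and equation \eqref{def:PhysCutoffEq}, whose right-hand side is compactly supported in $r$ by \eqref{aux2}; and no splitting at $\widetilde{\tau}=2\mathcal{T}$ is needed there, since the inhomogeneity is supported in a compact time interval and can be absorbed at the cost of a $\mathcal{T}$-dependent constant, which is permitted.
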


\begin{proof}
Recall that $\psi_{\chi,\mathcal{T}}$ is identically zero to the past of $\widetilde{\Sigma}_0$ and satisfies the \emph{homogeneous} wave equation for $\tau \geq 2\mathcal{T}$. Therefore, by Theorems 3.1 and 3.2 of \cite{DRSR} (and the estimates (\ref{basiclocal}) and (\ref{higherlocal}) for up to time $2\mathcal{T}$) we have for any $\mathsf{T}>0$ the boundedness estimate
\begin{align}
E_\mathsf{T} [\mathfrak{D}^N \Psi_{\chi,\mathcal{T}}] \leq C_{\mathcal{T},\BfB, N} E_0 [\mathfrak{D}^N \Psi_\mathcal{T}] 
\end{align}
and the integrated decay estimate (note the degenerating factor of $r^{-1}$)
\begin{align}
\int_0^\infty  d\tau  E_\mathcal{\tau} [r^{-1} \mathfrak{D}^{N-1} \Psi_{\chi,\mathcal{T}}] \leq C^\prime_{\mathcal{T},\BfB, N} E_0 [\mathfrak{D}^N \Psi_\mathcal{T}] \, .
\end{align}
Using the mean value theorem and the fundamental theorem of calculus it is straightforward to establish (\ref{sqi}), in fact a quantitative statement thereof. 

We sketch the argument for the weighted statement (\ref{sqi2}), which is fairly standard, see for instance Theorem 6.1 of \cite{M16}. We first note that in view of (\ref{def:PhysCutoffEq}) and (\ref{def:Psichi}) and the fact that $W_0 \tilde{\tau}=0$ it suffices to prove (\ref{sqi2}) for $\Psi_\mathcal{T}$ instead of $\Psi_{\chi,\mathcal{T}}$ and with the integrand restricted to $\tilde{\tau}\geq0$. To prove the latter, one applies the $r^p$ method to (\ref{def:PhysCutoffEq}) and establishes for any $\delta>0$ (and any $N$) the spacetime bound 
\begin{align} \label{rph}
 \int_{-\infty}^\infty dt \int_{10M}^\infty dr  \int_{\mathbb{S}^2} d\sigma \sum_{0 \leq i_1+i_2+i_3 \leq N} \mathbf{1}_{\tilde{\tau}\geq 0}  \cdot r^{-1-\delta} \Big| (r \slashed{\nabla})^{i_1} (r L)^{i_2} (\underline{L})^{i_3} (r\Psi_\mathcal{T})\Big|^2_{\slashed{g}} < C \, , 
\end{align}
where the $C$ depends on the (compactly supported on the slice $\widetilde{\Sigma}_0$) data and potentially also on the cut-off time $\mathcal{T}$.
Note here the fact that the right hand side of (\ref{def:PhysCutoffEq})  is supported for $0 \leq \wtau \leq 2\mathcal{T}$ only and that for this compact time interval it can be controlled using $2\mathcal{T}$ times the weighted boundary terms generated by the $r^p$-method and the standard energy on constant $\wtau$-slices. Now (\ref{rph}) holds for $N-1$ without the integration in $r$ by a standard dyadic argument, from which the bounds in the second sum in (\ref{sqi2}) follow directly. For the other bounds one writes $R^\star W_0 (r \Psi_{\mathcal{T}}) = -\frac{R^\star v}{v} W_0 (r \Psi_{\mathcal{T}}) + \frac{1}{2} v^{-1} (L - \underline{L})\left(L+(h_0-1)K^\star\right) (r \Psi_{\mathcal{T}})$ and inserts the wave equation (\ref{def:PhysCutoffEq}) on the right. Using $|h_0-1|\leq Cr^{-2}$ and the bounds (\ref{rph}) for the terms appearing on the right, the desired estimate follows for $ R^\star W_0 (r\Psi_{\mathcal{T}})$. The argument for $TW_0 (r\Psi_{\mathcal{T}})$ is similar. The improved bound for $W_0$ now follows by another dyadic argument using the bound for $R^\star W_0 (r\Psi_{\mathcal{T}})$.
\end{proof}


We finally note that being identically zero to the past of $\widetilde{\Sigma}_0$, the solution $\Psi_{\chi,\mathcal{T}}$ is also manifestly ``outgoing'' in the sense of \cite{DRSR}.

\subsection{Separating the (modified) wave equation}
 We recall that the Kerr metric (\ref{Kerrmetric}) admits two global Killing vectors, $T = \partial_t$ and $\Phi = \partial_\phi$ as well as a Killing tensor, which in combination permit full separation of the wave operator $\Box_{g_{a,M}}$. We would like to apply these facts to the (solution $\Psi_{\chi,\mathcal{T}}$ of) equation (\ref{def:WaveEqCutoff}). In Proposition \ref{prop:suffi} we established that the solution is in $L^2_\phi L^2_t$ for all $\mathcal{T}$ justifying the definition of the Fourier transform of $\Psi_{\chi,\mathcal{T}}$ in $t$ (and $\phi$), thereby separating the solution into Fourier variables $\widehat{\Psi_{\chi,\mathcal{T}}}(\omega,r,\theta,m)$. We provide the details in Section \ref{sec:sepKilling} and note that this (partial) separation is already sufficient to construct our microlocal commutator and to carry out the main part of the analysis. While not needed for the core of our analysis (but for interpolating with the results of \cite{DRSR} in the $\varepsilon=0$ case) we provide the details of the full separation of the wave equation in Section \ref{sep:KTensor}.
 
For notational convenience, we consider for the rest of this section a generic inhomogeneous wave equation
\begin{equation}\label{def:WaveKerrInhom}
\Box_{g_{a, M}}\psi = F \, ,
\end{equation}
where $F$ is also allowed to depend linearly on $\psi$, and $\psi$ is assumed square integrable in the sense of Proposition \ref{prop:suffi}. In Section \ref{sec:RJHSDecomp} we reduce to the actual case of interest, equation (\ref{def:WaveEqCutoff}).
\subsubsection{The Fourier transform in the Kerr spacetime}
 Given a function $\psi$ on $\mathcal{R}$ which is square integrable in the sense of (\ref{sqi}), and $(\omega, m)\in \mathbb{R}\times\mathbb{Z}$, we define the Fourier transform 
\begin{equation}
\widehat{\psi}(\rs, \theta) = \widehat{\psi}_m^\omega(\rs, \theta) = (\mathcal{F}\psi)(\omega, \rs, \theta, m)
\end{equation}
 satisfying
\begin{align}
(\mathcal{F}\psi)(\omega, r, \theta, m) &= (2\pi)^{-1}\int_{-\infty}^\infty \int_0^{2\pi}e^{i(\omega t  - m\phi)}\psi(t, \rs, \theta, \phi)\,d\omega, \\
\psi(t, r, \theta, \phi) &= (2\pi)^{-1}\sum_{m\in\mathbb{Z}}\int_{-\infty}^\infty e^{-i(\omega t  - m\phi)}(\mathcal{F}\psi)(\omega, \rs, \theta, m)\,d\omega.
\end{align}
This gives the identities
\begin{equation}
\widehat{\partial_t u} = -i\omega\ph, \qquad \widehat{\partial_\phi u} = im\ph.
\end{equation}
Additionally, for square integrable functions $\psi, \zeta$, and for all $\rs, \theta$, Plancherel's theorem implies
\begin{subequations}\label{id:Plancherel}
\begin{align}
\int_{-\infty}^\infty \int_0^{2\pi} \psi\overline{\zeta}\, d\phi \, dt &= \sum_m\int_{-\infty}^\infty \widehat{\psi}\overline{\widehat{\zeta}}\, d\omega, \label{id:Plancherel1}\\
\int_{-\infty}^\infty \int_0^{2\pi} |\psi|^2\, d\phi \, dt &= \sum_m\int_{-\infty}^\infty |\widehat{\psi}|^2\, d\omega.\label{id:Plancherel2}
\end{align}
\end{subequations}
Later on it will be useful to define the convolution in frequency
\begin{equation}\label{def:Convolution}\index{*@$*$}
(\widehat{f}*\widehat{g})(\omega, m, \rs, \theta) = \sum_{m'}\int_{-\infty}^\infty \widehat{f}(\omega', m', \rs, \theta) \widehat{g}(\omega-\omega', \rs, \theta, m-m')\, d\omega'.
\end{equation}
\subsubsection{The separation with respect to the Killing vectors} \label{sec:sepKilling}
If $\psi$ is a solution of \eqref{def:WaveKerrInhom}, we define
\begin{equation}\label{def:u}
u = \rweight\psi,
\end{equation}
and similarly
\begin{equation}
\ph(\omega, \rs, \theta, m) = \ph_{m}^{\omega}(\rs, \theta) = \rweight\widehat{\psi}_{m}^\omega \, .\label{def:ph}
\end{equation}
Then, defining
\begin{equation}
H = H_{m}^{\omega} = \frac{\Delta\rho^2 \widehat{F}_m^\omega}{\rweight^3},\label{def:Hpart}
\end{equation}
the equation \eqref{def:WaveKerrInhom} in frequency space takes the form
\begin{equation}\label{def:SepWaveOperator}
\Pam[\omega, m]\ph := \Rs\Rs\ph + (\omega-\omega_r)^2\ph - v^2\mathcal{L}_\theta\ph - V_1\ph = H_m^\omega,
\end{equation}
where
\begin{equation}
\omega_r = \frac{am}{r^2+a^2}, \qquad  V_1 = \frac{\Delta}{(r^2+a^2)^2}\frac{a^2\Delta + 2Mr(r^2-a^2)}{(r^2+a^2)^2},
\end{equation}
and $\Ltheta\ph$ is the modified angular operator
\begin{equation}\label{def:SphLaplacian}\index{Ltheta@$\Ltheta$}
\mathcal{L}_\theta\ph = \mathcal{L}_\theta[m, a\omega](\ph) = -\tfrac{1}{\sin\theta}\partial_\theta(\sin\theta\partial_\theta\ph) + \big(a\omega\sin\theta - \tfrac{m}{\sin\theta}\big)^2\ph.
\end{equation}
We may equivalently write
\begin{equation}\label{def:SepWaveOperator2}
\Box_{g_{a, M}}\psi = \frac{\rweight^3}{\Delta\rho^2}\mathcal{F}^{-1}\left(\Pam\left(\rweight\widehat{\psi}\right)\right) \, .
\end{equation}
For fixed $(\omega, m, \rs)$, we additionally define the restricted angular component of the energy
\begin{align}
|\widehat{\slashed\nabla}\ph|^2 &=|\partial_\theta\ph|^2 +\big(a\omega\sin\theta - \tfrac{m}{\sin\theta}\big)^2|\ph|^2, \label{def:angEnergyPointwise}\\ 
\mathcal{E}_\theta[\ph] &= \int_0^\pi |\widehat{\slashed\nabla}\ph|^2\sin\theta d\theta.\label{def:angEnergy}
\end{align}
Integrating by parts in $\theta$ gives the identity
\begin{equation}\label{def:angEnergy}
\int_0^\pi\mathcal{L}_\theta\ph \overline{\ph} \,\sin\theta\, d\theta = \mathcal{E}_\theta[\ph],
\end{equation}
so consequently all eigenvalues of $\mathcal{L}_\theta$ are non-negative.

Defining 
\begin{equation}
\omega_+ = \frac{am}{r_+^2+a^2} = \frac{am}{2Mr_+},
\end{equation}
we note that $(\omega-\omega_r)^2$ has a unique root in $(r_+, \infty)$ if and only if
\begin{equation}
0 < am\omega < am\omega_+.
\end{equation}

\subsubsection{Additional separation using the Killing tensor}\label{sep:KTensor}
When using results from previous works, in particular in the estimates on lower derivatives of Section \ref{sec:LOT}, we will make use of \emph{Carter's separation} which was developed in \cite{Ca68} and used to great effect by Whiting in \cite{Wh89} to establish mode stability on the Kerr spacetime. This uses the \emph{oblate spheroidal harmonics} $S_{m\ell}(a\omega, \cos\theta)e^{im\phi}$ satisfying the eigenvalue problem
\begin{equation}\label{def:spheroidalharmonics}\index{Sml@$S_{m\ell}$}
L(\nu)(S_{m\ell}(\nu, \cos\theta)e^{im\phi}) = \lambda_{m\ell}^{(\nu)}S_{m\ell}(\nu, \cos\theta)e^{im\phi},
\end{equation}
where
\begin{equation}
L(\nu) u := -\frac{1}{\sin\theta}\big(\sin\theta\partial_\theta u\big) - \frac{1}{\sin^2\theta}\partial_\phi^2 u - \nu^2\cos^2\theta u.
\end{equation}
It follows that
\begin{equation}
(\mathcal{L}_\theta S_{m\ell}(a\omega, \cos\theta))e^{im\phi} = (\lambda_{m\ell}^{(a\omega)} -2ma\omega + a^2\omega^2)(S_{m\ell}(a\omega, \cos\theta)e^{im\phi}).
\end{equation}
We additionally have the well-known inequalities
\begin{equation}\label{est:Lambdalowerbounds}
\Lambda := \lambda_{m\ell}^{(a\omega)}+ a^2\omega^2 \geq2|ma\omega|, \qquad \Lambda \geq m(m+1).
\end{equation}
The former follows from positivity of \eqref{def:angEnergy} and the latter follows from decomposition in $m$ along with the inequality $uL(\nu)u + \nu^2  \geq -u\Delta_\omega u \geq m(m+1)u^2$, where $\Delta_\omega$ is the spherical Laplacian on the round sphere.

In this decomposition \eqref{def:SepWaveOperator} becomes
\begin{equation}\label{eq:WaveGeneric}
(\Pam\ph)^{(a\omega)}_{m\ell} := \Rs\Rs\ph^{(a\omega)}_{m\ell} + (\omega-\omega_r)^2\ph^{(a\omega)}_{m\ell} - v^2(\Lambda - 2am\omega)\ph^{(a\omega)}_{m\ell} - V_1\ph^{(a\omega)}_{m\ell} = H^{(a\omega)}_{m\ell} \, .
\end{equation}
We will generally explicitly state dependence on $\ell$ when in the fully separated regime. 
\begin{remark}\label{rem:DRSR}
We note that in \cite{DRSR} and \cite{SR14} the decomposition \eqref{eq:WaveGeneric} is written as
\begin{equation}
(\Pam\ph)^{(a\omega)}_{m\ell} := \Rs\Rs\ph^{(a\omega)}_{m\ell} + \omega^2\ph^{(a\omega)}_{m\ell} - V_0\ph^{(a\omega)}_{m\ell} - V_1\ph^{(a\omega)}_{m\ell} = H^{(a\omega)}_{m\ell},
\end{equation}
where 
\begin{equation}\index{V0@$V_0$}
V_0 = \frac{\Delta}{(r^2+a^2)^2}(\Lambda - 2am\omega) + 2\omega\omega_r - \omega_r^2 = \frac{\Delta}{(r^2+a^2)^2}(\Lambda - 2am\omega) - (\omega-\omega_r)^2 + \omega^2.
\end{equation}
\end{remark}

\subsubsection{Decomposition of the right hand side}\label{sec:RJHSDecomp} 

We now turn our attention back to the specific equation \eqref{def:WaveEqCutoff}, first taking the separation in $(\omega, m)$.
Define
\begin{equation}\label{def:uchi}\index{uchi@$\uchi$}
\uchi = (r^2+a^2)^{1/2}\mathcal{F}(\Psi_{\chi,\mathcal{T}}), \text{ so } ({\uchi})_m^\omega = (r^2+a^2)^{1/2}\big(\mathcal{F}(\Psi_{\chi,\mathcal{T}})\big)_m^\omega.
\end{equation}
 Plugging the right hand side of \eqref{def:WaveEqCutoff} into \eqref{def:Hpart} gives
\begin{equation}\label{def:WavePerturbedFreq}
\Pam{\uchi} = H_\chi + H_{\BfB},
\end{equation}
with
\begin{subequations}\label{def:HDecomp}
\begin{align}
\Hchi & = \frac{\Delta\rho^2}{\rweight^3}\mathcal{F}\big(2\nabla^\alpha \chi \nabla_\alpha \Psi_T+ (\Box_{g_{a, M}}\chi)\Psi_\mathcal{T} - \chi_\mathcal{T}\Psi_\mathcal{T} \varepsilon \left(\BfB \chi-\BfB^0 \chi\right)\big), \label{def:HDecompchi}\index{HB@$\HB$}\\
\HB &= \frac{\Delta\rho^2}{\rweight^4}\varepsilon(\widehat\chi_\mathcal{T} *(\mathcal{F}(\BfB(\mathbf{u}_\chi)-\rweight^{-1}\Rs\rweight\BfB^r\mathbf{u}_\chi))) =  \frac{\Delta\rho^2}{\rweight^4}\varepsilon\mathcal{F}(\BT(\mathbf{u}_\chi)) \label{def:HDecompB}\index{Hchi@$\Hchi$} ,
\end{align}
\end{subequations}
where ${\bf u}_{\chi}=\sqrt{r^2+a^2} \Psi_{\chi,\mathcal{T}}$ and we have introduced the operator $\BfB_\mathcal{T}$ defined by
\begin{equation}\label{def:BT}\index{BT@$\BT$}
\BT = \chi_\mathcal{T}\BfB -\chi_\mathcal{T}\frac{\Rs\rweight}{\rweight}\BfB^r .
\end{equation}


\subsection{Integration conventions}\label{sec:IntConventions}
As we will be integrating often in frequency space, we briefly detail some of the shorthand we will be using. Given frequency coefficients $(\omega, m)$, we define for $r_+ \leq r_1<r_2\leq \infty$ the restricted annular regions
\begin{subequations}
\begin{align}
\mcA_m^\omega (r_1,r_2) \index{A@$\mcA$}&= \{\omega\}\times (r_1, r_2) \times [0, \pi] \times \{m\} \subset \mathbb{R} \times (r_+, \infty) \times [0, \pi] \times \mathbb{Z}, \\
(\mcAs)_m^\omega  (r^\star_1,r^\star_2)  \index{Astar@$\mcAs$}&=\{\omega\}\times(r^\star_1,r^\star_2)\times [0, \pi] \times \{m\} \subset \mathbb{R}^2 \times [0, \pi] \times \mathbb{Z}.
\end{align}
\end{subequations}
We make the convention $\mcA_m^\omega= \mcA_m^\omega (r_+,\infty)$ and $(\mcAs)_m^\omega = (\mcAs)_m^\omega (-\infty, \infty)$ and we will also often suppress the dependence of $\mcA, \mcAs$ on $(\omega, m)$. For instance, we parametrise the integrals
\begin{subequations}
\begin{align}
\iint_{\mcA(r_1,r_2)}\ph = \int_0^\pi \int_{r_1}^{r_2} \ph(\omega, r, \theta, m) \sin\theta \, dr \, d\theta  \ \ \textrm{and} \ \ 
\iint_{\mcAs}\widehat{w} = \int_0^\pi \int_{-\infty}^\infty\widehat{w}(\omega, \rs, \theta, m)\sin\theta\, d\rs\, d\theta.
\end{align}
\end{subequations}
Note that standard substitution implies the change of variables (recall (\ref{rrstar}))
\begin{equation}
\iint_{\mcA(r_1,r_2)}\frac{r^2+a^2}{\Delta}\ph(\omega, \rs(r), \theta, m)\,  = \iint_{\mcAs (r_1^\star,r_2^\star)}\ph(\omega, \rs, \theta, m).
\end{equation}
We also define the weighted frequency integral over (subsets of) the exterior region of the Kerr spacetime,
\begin{subequations}
\begin{align}
\iiiint_{\FM (r_1,r_2)}\ph \index{FM@$\FM$}&= \sum_{m \in\mathbb{Z}}\int_{-\infty}^\infty\int_0^\pi \int_{r_1}^{r_2} \ph(\omega, r, \theta, m) \sin\theta \, dr \, d\theta\, d\omega, \\
\iiiint_{\FMs (r_1^\star,r_2^\star)}\widehat{w}\index{FMstar@$\FMs$} &= \sum_{m \in\mathbb{Z}}\int_{-\infty}^\infty\int_0^\pi \int_{r_1^\star}^{r_2^\star} \widehat{w}(\omega, \rs, \theta, m) \sin\theta \, d\rs \, d\theta\, d\omega.
\end{align}
\end{subequations}
as well as the weighted spacetime integrals in this region,
\begin{subequations}
\begin{align}
\iiiint_{\WM(r_1,r_2) }\ph  \index{WM@$\WM$}&= \int_{-\infty}^\infty\int_{r_1}^{r_2}\int_0^{2\pi}\int_0^\pi  \ph(\omega, r, \theta, m) \sin\theta \, d\theta\, d\phi\, dr\, dt, \\
\iiiint_{\WMs(r_1^\star,r_2^\star)}\widehat{w}  \index{WMstar@$\WMs$}&= \int_{-\infty}^\infty\int_{r_1^\star}^{r_2^\star} \int_0^{2\pi}\int_0^\pi  \widehat{w}(\omega, \rs, \theta, m) \sin\theta \, d\theta\, d\phi\, d\rs\, dt.
\end{align}
\end{subequations}
We will use $\FM=\FM (r_+,\infty)$, $\FMs =\FMs (-\infty, \infty)$, $\WM=\WM(r_+,\infty)$, $\WMs =\WMs(-\infty, \infty)$. 
\begin{remark}
We do not include the $\rho^2$ which appears in the volume form, as from a practical perspective it will generally appear implicitly in our integrand.
\end{remark}
The Plancherel identities \eqref{id:Plancherel1}, \eqref{id:Plancherel2} imply that, for any (square integrable in the sense of (\ref{sqi})) functions $\psi, \zeta$,
\begin{subequations}
\begin{alignat}{4}
\iiiint_{\FM(r_1,r_2)}\widehat\psi\overline{\widehat\zeta} &= \iiiint_{\WM(r_1,r_2)}\psi\overline{\zeta}, \qquad& \iiiint_{\FMs(r_1^\star,r_2^\star)}\widehat\psi\overline{\widehat\zeta} &= \iiiint_{\WMs(r_1^\star,r_2^\star)}\psi\overline{\zeta}\label{id:Plancherel3} \, , \\
\iiiint_{\FM(r_1,r_2)}|\widehat\psi|^2 &= \iiiint_{\WM(r_1,r_2)}|\psi|^2, \qquad& \iiiint_{\FMs(r_1^\star,r_2^\star)}|\widehat\psi|^2 &= \iiiint_{\WMs(r_1^\star,r_2^\star)}|\psi|^2.\label{id:Plancherel4}
\end{alignat}
\end{subequations}

\subsection{The microlocal currents and energies}
In the $(\omega, m)$ decomposition, we define the \emph{energy current} and \emph{angular energy current}
\begin{equation}\label{def:EnergyCurrent}
Q^T[\ph] = \mathfrak{I}(\ph'\overline{\omega\ph}), \qquad A^T[\ph] = v^2\mathfrak{I}\left((\partial_\theta\ph) \overline{\omega\ph}\right) \, ,
\end{equation}
as well as the \emph{Lagrangian current} and \emph{angular Lagrangian current} for a given (real) function $g(\rs)$
\begin{equation}\label{def:LagrangianCurrent}
Q^{g}[\ph] = g\mathfrak{R}(\ph'\overline{\ph}) - \tfrac12 g'|\ph|^2, \qquad A^g[\ph] = gv^2\mathfrak{R}(\partial_\theta\ph \overline{\ph}) \, .
\end{equation}
With these definitions we have (recalling the separated wave operator (\ref{def:SepWaveOperator})) the identities
\begin{align}
\label{id:StandardMLDiv}\Rs Q^T[\ph] + \frac{\partial_\theta(\sin\theta A^T[\ph])}{\sin\theta} &= \mathfrak{I}(\Pam\ph\overline{\omega\ph}) \, , \\
\Rs Q^g[\ph] + \frac{\partial_\theta(\sin\theta A^g[\ph]}{\sin\theta}) &= g\left(- |(\omega-\omega_r)\ph|^2 + v^2|\widehat{\slashed\nabla}\ph|^2 + |\ph'|^2 + V_1|\ph|^2 + \mathfrak{R}(\Pam\ph\overline{\ph})\right) - \tfrac12 g'' |\ph|^2.\label{Div:Lagrangian}
\end{align}
The fundamental theorem of calculus implies that for sufficiently regular $\ph$,
\begin{align} \label{microlocalE}
\iint_{\mcAs(r_1^\star,r_2^\star)}\mathfrak{I}(\Pam\ph\overline{\omega\ph}) = \int_0^\pi \left(\mathfrak{I}(\ph'\overline{\omega\ph})(r_2^\star, \theta) - \mathfrak{I}(\ph'\overline{\omega\ph})(r_1^\star, \theta)\right) \sin\theta\, d\theta \, , 
\end{align}
\begin{align} \label{microlocalg}
\iint_{\mcAs(r_1^\star,r_2^\star)} g\left(- |(\omega-\omega_r)\ph|^2 + v^2|\widehat{\slashed\nabla}\ph|^2 + |\ph'|^2 + V_1|\ph|^2 \right) - \tfrac12 g'' |\ph|^2  =   -\iint_{\mcAs(r_1^\star,r_2^\star)} g \mathfrak{R}(\Pam\ph\overline{\ph}) \nonumber \\
+ \int_0^\pi \left(
\left(g\mathfrak{R}(\ph'\overline{\ph}) - \tfrac12 g'|\ph|^2\right)(r_2^\star, \theta) - \left(g\mathfrak{R}(\ph'\overline{\ph}) - \tfrac12 g'|\ph|^2\right)(r_1^\star, \theta)\right) \sin\theta\, d\theta \, .
\end{align}

\subsection{A word about constants} \label{remark:CDependence}
Unless stated otherwise, all constants $c$, $C$ will depend only on the Kerr parameters $(M, a)$ and the notation $\lesssim$ will be used to express inequalities holding up to such multiplicative constants. In Section \ref{sec:choices} we will determine seven parameters ($\xi_\ell$, $\xi_r$, $\rmax$, $\rmin$, $c_\mathcal{G}$, $c_{\mathcal{G}^\prime}$, $\delta^\sharp$), which when chosen in the right order eventually also depend only on $(M,a)$. After Section \ref{sec:choices}, all these parameters are fixed for the rest of the paper. For the reader's convenience and potential future generalisations of our results we sometimes explicitly remark the dependence of an estimate on the parameters in the later sections.



\section{The frequency ranges $\mathcal{G}$ and $\mathcal{G}'$ and basic coercivity estimates} \label{sec:choices}
		In this section we decompose the frequency space and establish various coercivity estimates for expressions that will appear in the microlocal energy estimates later. As mentioned before, we can and will assume without loss of generality that $a > 0$. To motivate the analysis we recall a key observation of \cite{DRSR}, namely the notion that, for the separated wave equation, trapping and superradiance cause obstructions in disjoint frequency regimes. 
With this in mind, we define the frequency regimes
\begin{align}
\mathcal{G}  &:=\{(\omega, m) | m\omega \in (-\ximin m^2, m\omega_+ + \ximax m^2) \} \label{def:GCalDef}\index{G@$\mathcal{G}, \mathcal{G}'$},\\
\mathcal{G}' &:= \{(\omega, m) | m\omega \notin (-\tfrac12\ximin m^2, m\omega_+ + \tfrac12\ximax m^2) \}\label{def:GCalPrimeDef},
\end{align}
for some small constants $\ximin, \ximax$ which will be fixed in this section (and we recall $\omega_+=\frac{am}{r_+^2+a^2}$). Here we explicitly include the case $m=0$ in $\mathcal{G}'$.
It follows from the analysis of \cite{DRSR} that for sufficiently small $\xi_\ell$, $\xi_r$ depending only on $M$ and $a$, the set of trapped frequencies is contained in $\mathcal{G}^c$ (hence in particular, a neighbourhood of the superradiant regime $\{(\omega, m) | m\omega \in (0, m\omega_+]\}$ is not trapped). A precise version of this statement is given in Proposition \ref{est:TrappingInclusion}.

\subsection{The frequency dependent quantities $r^\sharp$, $R^\sharp$ and the constants $\rmax$, $\rmin$}

Besides $\xi_\ell$ and $\xi_r$, we will have to fix constants $\overline{r}^\sharp$ (large) and $\underline{r}^\sharp$ (close to $r_+$) defined abstractly by  
\begin{equation}
r_+ < \rmin <\rmax < \infty.
\end{equation}
To understand their relevance, we define for $m\omega \in (0, m\omega_+)$ the quantity $\rsharp$ to be the positive value satisfying
\begin{equation}
\omega-\frac{am}{{\rsharp}^2+a^2} = 0 \, , 
\end{equation}
so that $\omega - \omega_r = 0$ when $r=\rsharp$.\footnote{We notice already that $(\omega-\omega_r)^2$ is the only term ``with the wrong sign” in (\ref{def:SepWaveOperator}) suggesting that for fixed frequency $(\omega,m)$ one should be able to prove good estimates near $r^\sharp$ multiplying (\ref{def:SepWaveOperator}) by $\widehat{u}$ and integrating. Near the horizon and infinity one needs to be careful with weights, which is why we define $R^\sharp$ in (\ref{def:Rsharp}).} We immediately note the limits
\begin{equation}
\lim_{m\omega \to 0}\rsharp = \infty, \qquad \lim_{m\omega\to m\omega_+} \rsharp = r_+.
\end{equation}
and that $\rsharp$ is homogeneous of degree 0 in $(\omega, m)$, i.e.~$\rsharp(\omega,m)=\rsharp(k\omega, k m)$ for $k \in \mathbb{Z} \setminus \{0\}$.

Now given $\ximin, \ximax, \rmin, \rmax$, we define a cut-off version of $\rsharp$ as follows:
\begin{equation}\label{def:Rsharp}\index{Rsharp@$R^\sharp$}
R^\sharp =
\begin{cases}
\rmax & -\ximin m^2 < m\omega\leq\frac{am^2}{{\rmax}^2+a^2}, \\
\rsharp, & \frac{am^2}{{\rmax}^2+a^2}\leq m\omega\leq \frac{am^2}{{\rmin}^2+a^2}, \\
\rmin & \frac{am^2}{{\rmin}^2+a^2}\leq m\omega < m\omega_+ + \ximax m^2.
\end{cases}
\end{equation}
$R^\sharp$ is homogeneous of degree 0 in $(\omega, m)$ and -- unlike $r^\sharp$ -- defined and bounded on all of $\mathcal{G}$. The behaviour of $R^\sharp$ is shown in Figure \ref{fig:Rsharp}. Loosely speaking, therefore, $\rmin, \rmax$ provide quantitative bounds on how close we go to the boundaries of the superradiant regime from the interior of that regime. We remark that the function $R^\sharp$ is only continuous. A smooth version of $R^\sharp$, denoted $R^{\musDoubleSharp}$, is defined in (\ref{def:RDoubleSharp}) after having fixed a further constant $\delta^\sharp$ in Section \ref{sec:LocalLagrangian}.
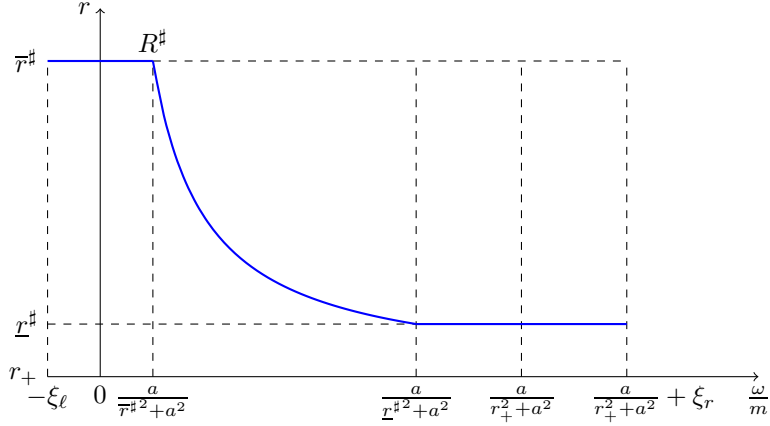
\begin{figure}
 \begin{center}
\begin{tikzpicture}[scale=0.7]
\draw [->](-1,0)node[left]{$r_+$}--(0,0)node[below]{$0$}--(12.5,0);
\draw (12.5, 0)node[below, black]{$\tfrac{\omega}{m}$};
\draw [->](0,0)--(0,7)node[left]{$r$};
\draw [black, dashed](-1,1)node[left, black]{$\rmin$}--(6,1);
\draw [black, dashed](1,6)--(10,6);
\draw [black, dashed](-1,0)node[below, black]{$-\ximin$}--(-1,6);
\draw [black, dashed](1,0)node[below, black]{$\frac{a}{{\rmax}^2+a^2}$}--(1,6)node[above, black]{$R^\sharp$};
\draw [black, dashed](6,0)node[below, black]{$\frac{a}{{\rmin}^2+a^2}$}--(6,6);
\draw [black, dashed](8,0)--(8,6);
\draw (8, 0)node[below, black]{$\scriptstyle \tfrac{a}{r_+^2+a^2}$};
\draw [black, dashed](10,0)node[below, black]{$\quad\quad \tfrac{a}{r_+^2+a^2}+ \ximax $}--(10,6);
\draw [blue, thick](-1,6)node[left, black]{$\rmax$}--(1,6);
\draw [domain=1:6, smooth, variable=\x, blue, thick] plot({\x}, {6/\x});
\draw [blue, thick](6,1)--(10,1);
\end{tikzpicture}
\caption{$R^\sharp$ as a function of $\omega/m$}
\label{fig:Rsharp}
\end{center}
\end{figure}

\subsection{The potential trapping radius $\re$}
We finally define the radius at which trapping (potentially) happens:
\begin{definition}
For all frequency pairs $(\omega, m)$ such that both $\omega \neq 0$ and $m\omega \notin (0, m\omega_+]$ (hence including the $m=0$ frequencies), we define the \underline{potential trapping radius} $\re = \re(\omega, m)$\index{re@$\re$} to be the (unique) value $r \in(r_+, \infty)$ such that
\begin{equation} \label{def:rnatural}
\Rs\left(\frac{(\omega-\omega_r)^2}{v^2}\right) = 0 \text{ at } r = \re.
\end{equation}
\end{definition}

The existence and uniqueness claim of the definition is shown immediately in Proposition \ref{est:rnaturalbounds} below.
Note that $\re$ is in particular defined for all $(\omega, m)\in\mathcal{G}' \setminus \{\omega=0\}$. 

\begin{proposition}\label{est:rnaturalbounds}
Let $(\omega, m)$ be such that both $\omega \neq 0$ and $m\omega \notin (0, m\omega_+]$ hold. Then the potential trapping radius $\re$ exists and is unique. Moreover, $\re=\re(\omega,m)$ is a homogeneous function of $(\omega,m)$ of degree zero. Finally, $\re$ satisfies the bound
\begin{align} \label{basicineq} 
\left(r_+^2+ \left(r_+^2 + a^2 -\eta \right)\right)^{1/2} \leq \re \leq r_+ + \left(r_+^2 + a^2 - \eta\right)^{1/2}  \, ,
\end{align}
where we have defined
\begin{align} \label{def:eta}
\eta\index{eta@$\eta$} = \frac{am}{\omega} \, .
 \end{align}
\end{proposition}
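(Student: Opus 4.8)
The strategy is to study the function $r \mapsto (\omega-\omega_r)^2/v^2$ directly, exploiting its explicit form in Boyer--Lindquist-type coordinates. First I would write out
\[
\frac{(\omega-\omega_r)^2}{v^2} = \left(\omega - \frac{am}{r^2+a^2}\right)^2 \cdot \frac{(r^2+a^2)^2}{\Delta}.
\]
Using $\eta = am/\omega$ (which is well-defined since $\omega \neq 0$) and factoring out $\omega^2$, this is $\omega^2 (r^2+a^2-\eta)^2/\Delta$. So up to the positive constant $\omega^2$, the potential trapping radius is the critical point of the rational function $G(r) := (r^2+a^2-\eta)^2/\Delta$ on $(r_+,\infty)$. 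The first step is thus to reduce \eqref{def:rnatural} to the equation $\Rs(G)=0$, equivalently $G'(r)=0$ since $\Rs = (\Delta/(r^2+a^2))\partial_r$ with $\Delta>0$ on $(r_+,\infty)$.

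Next I would compute $G'(r)$ and locate its zeros. Writing $P(r) = r^2+a^2-\eta$, one has
\[
G'(r) = \frac{2P(r)}{\Delta^2}\Big(2r\,\Delta - P(r)\,\Delta'\Big) = \frac{2P(r)}{\Delta^2}\Big(2r(r^2-2Mr+a^2) - (r^2+a^2-\eta)(2r-2M)\Big).
\]
The factor $P(r)$ vanishes when $r^2 = \eta - a^2$; I would argue that this root, when it exists in $(r_+,\infty)$, is actually a minimum of $G$ (indeed $G\geq 0$ with equality there) and does not correspond to trapping — one should check it is excluded precisely by the hypothesis $m\omega \notin (0, m\omega_+]$, i.e.\ that for the relevant sign of $\eta$ either $P$ has no zero in $(r_+,\infty)$, or if it does the genuine trapping critical point is the other factor. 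The real content is the bracket, which after expanding simplifies to a polynomial; I expect it to reduce to something like $2Mr^2 - 2r(\,\cdots) + 2M(a^2-\eta) + \cdots$, ultimately a quadratic in $r$ whose relevant root gives $\re$. I would verify that on $(r_+,\infty)$ this quadratic has exactly one root, using the boundary behaviour $G(r_+)=+\infty$ (since $\Delta(r_+)=0$ and $P(r_+)\neq 0$ under the hypotheses) and $G(r)\to+\infty$ as $r\to\infty$, together with convexity/sign analysis of $G'$, to conclude existence and uniqueness of the interior critical point. This also immediately yields homogeneity of degree zero in $(\omega,m)$, since $G$ depends on $(\omega,m)$ only through $\eta = am/\omega$, which is itself degree-zero homogeneous.

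For the two-sided bound \eqref{basicineq}, having identified that $\re$ solves an explicit quadratic $ar^2 + br + c = 0$ (with coefficients depending on $M$, $a$, $\eta$), I would simply apply the quadratic formula and then estimate. The claimed bounds
\[
\big(r_+^2 + (r_+^2+a^2-\eta)\big)^{1/2} \le \re \le r_+ + (r_+^2+a^2-\eta)^{1/2}
\]
have the flavour of the elementary inequality $\sqrt{x^2+y^2}\le x+y$ (for $x,y\ge 0$) applied to $x = r_+$ and $y = (r_+^2+a^2-\eta)^{1/2}$; so I anticipate the exact root expression lies between these two natural quantities, and the verification is a monotonicity/algebra check — plausibly one can even locate $\re$ as the root of the quadratic and compare it term-by-term with the two bounding expressions by plugging them into $G'$ and checking signs. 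Note one must also confirm $r_+^2 + a^2 - \eta > 0$ under the hypotheses so the square roots make sense; this should follow from the case analysis on the sign of $m\omega$ relative to $(0,m\omega_+]$.

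\textbf{Main obstacle.} The genuinely delicate point is not the algebra of the quadratic but the case analysis: the hypothesis $m\omega \notin (0,m\omega_+]$ (together with $\omega\neq 0$) must be used to (i) rule out the spurious critical point coming from the factor $P(r) = r^2+a^2-\eta$, (ii) guarantee $P(r_+)\neq 0$ so that $G(r_+)=+\infty$, and (iii) ensure the discriminant of the relevant quadratic is positive and the root lands in $(r_+,\infty)$ rather than outside. Handling the superradiant-boundary frequencies $m\omega = 0$ and $m\omega = m\omega_+$ correctly — which are exactly the ones excluded — and making sure the remaining regime splits cleanly into the sub- and super-superradiant cases is where the care is needed; the rest is a bounded computation.
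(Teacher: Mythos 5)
Your overall strategy coincides with the paper's: reduce to the function $\Gamma^2/\Delta$ with $\Gamma = r^2+a^2-\eta$, note that homogeneity of degree zero is automatic because everything depends on $(\omega,m)$ only through $\eta$, and discard the spurious critical point $\Gamma=0$ by checking that the hypotheses force $\Gamma_+ = r_+^2+a^2-\eta>0$, hence $\Gamma>0$ on all of $(r_+,\infty)$. However, there is a genuine computational error that derails your plan for \eqref{basicineq}. The correct derivative is
\begin{equation*}
\partial_r\!\left(\frac{\Gamma^2}{\Delta}\right) = \frac{2\Gamma}{\Delta^2}\bigl(2r\Delta - (r-M)\Gamma\bigr),
\end{equation*}
whereas your bracket $2r\Delta - \Gamma\Delta' = 2r\Delta - 2(r-M)\Gamma$ carries an extra factor $2$ on the $\Gamma$-term. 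That spurious factor is exactly what makes the cubic terms cancel in your expansion and leaves you with a quadratic; with the correct bracket the critical-point equation is the cubic $P_\eta(r) = r^3 - 3Mr^2 + (a^2+\eta)r + M(a^2-\eta) = 0$ (this is \eqref{def:Peta} in the paper; note that already for $\eta=0$ the trapping radius is the root of the cubic $r^3-3Mr^2+a^2r+Ma^2$, reducing to $3M$ in Schwarzschild, whereas your quadratic $Mr^2-\eta r-M(a^2-\eta)$ would give $\re=a$ there). Consequently the step ``apply the quadratic formula and compare term-by-term'' has no analogue, and the bound \eqref{basicineq} cannot be obtained along that route.

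With the correct cubic, the existence and uniqueness part of your plan survives, and in fact more cleanly than via the boundary behaviour of $G$: one has $P_\eta(r_+) = -(r_+-M)\Gamma_+ < 0$ and $\partial_r^2 P_\eta = 6(r-M) > 0$ on $(r_+,\infty)$, so $P_\eta$ has exactly one zero there, which is $\re$. For the two-sided bound one does not solve the cubic at all; instead one evaluates $P_\eta$ at the two candidate endpoints and reads off the signs,
\begin{equation*}
P_\eta\bigl((r_+^2+\Gamma_+)^{1/2}\bigr) = -2M\bigl((r_+^2+\Gamma_+)^{1/2}-r_+\bigr)^2 < 0, \qquad P_\eta\bigl(r_++\Gamma_+^{1/2}\bigr) = 2\bigl(r_++\Gamma_+^{1/2}\bigr)\Gamma_+^{1/2}(r_+-M) > 0,
\end{equation*}
which together with uniqueness of the root places $\re$ in the claimed interval. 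You should also record explicitly that $\omega\neq0$ and $m\omega\notin(0,m\omega_+]$ give $0<\Gamma_+<\infty$, which is what makes both endpoint expressions well defined and guarantees $\Gamma>0$ throughout $(r_+,\infty)$, as you anticipated in your ``main obstacle'' paragraph.
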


\begin{proof}
We recall that $a>0$ and $v = \frac{\Delta^{1/2}}{r^2+a^2}$. We define for $\omega \neq 0$ the shorthand notation
\begin{equation} \label{def:gamplus}
\Gamma = \Gamma(r, \eta)\index{Gamma@$\Gamma, \Gamma_+$} := (r^2+a^2)\frac{\omega-\omega_r}{\omega} = r^2 + a^2 - \eta,  \qquad \Gamma_+ := \Gamma(r_+, \eta) \, .
\end{equation}
We are concerned with frequencies $(\omega, m)$ such that both $\omega \neq 0$ and $m\omega \notin (0, m\omega_+]$. In particular, this set includes $\mathcal{G}^\prime \setminus \{(0,0)\}$ and we have
\begin{equation}\label{lim:GammaPlus}
0 < \Gamma_+ < \infty, \qquad \lim_{\tfrac{\omega}{m} \to 0^-}\Gamma_+ = \infty, \qquad \lim_{\tfrac{\omega}{m} \to \tfrac{a}{r_+^2+a^2}^+}\Gamma_+ = 0 \, .
\end{equation}
To prove the existence and uniqueness claim, it suffices to prove that the function
\begin{equation}\label{id:fsquared}
\frac{(\omega-\omega_r)^2}{\omega^2 v^2} = \frac{\Gamma^2}{\Delta}
\end{equation}
has a unique critical point in $(r_+,\infty)$, which by the asymptotic behaviour must be a global minimum. To prove this, we first expand
\begin{equation} \label{nifo}
\left(\frac{\Gamma^2}{\Delta}\right)' = \frac{2\Delta\Gamma\Gamma' - \Gamma^2\Delta'}{\Delta^2} = \frac{2\Gamma P_\eta(r) }{(r^2+a^2)\Delta} \, , 
\end{equation}
where 
\begin{equation}\label{def:Peta}\index{Peta@$P_\eta$}
P_\eta(r) := 2r\Delta - (r-M)\Gamma =  r^3 - 3Mr^2 + (a^2+\eta)r + (a^2-\eta)M \, .
\end{equation}
For the frequency range considered we have $\Gamma > 0$ for all $r \in (r_+, \infty)$.
It follows that for $r\in (r_+, \infty)$, $(\Delta^{-1}\Gamma^2)’ = 0$ if and only if $P_\eta(r) = 0$. From \eqref{def:Peta} and \eqref{lim:GammaPlus} one easily sees that  $P_\eta(r_+) < 0$, and $\partial_r^2 P_\eta > 6(r_+-M) > 0$ on $(r_+, \infty)$, $P_\eta$ has precisely one root in this region, hence $\re$ is uniquely defined. The homogeneity claim easily follows from the fact that everything depends only on $\eta$.


The estimate (\ref{basicineq}) is proven by showing that $P_\eta(r)$ changes sign in $\left[(r_+^2+ \Gamma_+)^{1/2},r_+ + \Gamma_+^{1/2}\right]$.
%
First,
\begin{alignat}{4}
\Gamma((r_+^2+ \Gamma_+)^{1/2}) &= 2\Gamma_+, \qquad & \Gamma(r_+ + \Gamma_+^{1/2}) &= 2\Gamma_+^{1/2}(r_+ + \Gamma_+^{1/2}), \\
\Delta((r_+^2+ \Gamma_+)^{1/2})&=\Gamma_+ + 2Mr_+ - 2M(r_+^2 + \Gamma_+)^{1/2} , \qquad & \Delta(r_+ + \Gamma_+^{1/2}) &= \Gamma_+^{1/2}(\Gamma_+^{1/2} + 2r_+ - 2M).
\end{alignat}
Consequently, applying the first part of \eqref{def:Peta},
\begin{align*}
P_\eta((r_+^2+ \Gamma_+)^{1/2} ) &= -2M((r_+^2 + \Gamma_+)^{1/2}- r_+)^2,\\
P_\eta(r_+ + \Gamma_+^{1/2} ) &= 2(r_+ + \Gamma_+^{1/2})\Gamma_+^{1/2}(\Gamma_+^{1/2} + 2r_+ - 2M) - 2(r_+ + \Gamma_+^{1/2})\Gamma_+^{1/2}(\Gamma_+^{1/2}+r_+-M).
\end{align*}
One notes that $P_\eta((r_+^2+ \Gamma_+)^{1/2} ) < 0 < P_\eta(r_+ + \Gamma_+^{1/2} )$, and \eqref{basicineq} follows.
\end{proof}
We have the following Corollary of Proposition \ref{est:rnaturalbounds}:
\begin{corollary}\label{cor:rnaturalasymptotic}
For sufficiently small $\widetilde{\ximin} > 0$ depending only on $a, M$, the inequality
\begin{equation}\label{est:rnaturalasymptotic}
\frac12 \left(-\frac{am}{\omega}\right)^{1/2} <\re < 2\left(-\frac{am}{\omega}\right)^{1/2}
\end{equation}
holds whenever $m\omega \in (-\widetilde{\ximin}  m^2, 0)$. Additionally, for any $\delta > 0$, there exists an $\widetilde{\ximax}$ depending only on $a, M, \delta$ such that
\begin{equation}\label{est:rnaturallimit}
\re < r_+ + \delta
\end{equation}
holds whenever $m\omega \in (m\omega_+, m\omega_+ + \widetilde{\ximax} m^2)$.
\end{corollary}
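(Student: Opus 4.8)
The plan is to read off both inequalities directly from the two-sided bound on $\re$ proved in Proposition \ref{est:rnaturalbounds}, namely
\begin{align} \label{plan:sandwich}
\left(2r_+^2 + a^2 - \eta\right)^{1/2} \leq \re \leq r_+ + \left(r_+^2 + a^2 - \eta\right)^{1/2}, \qquad \eta = \frac{am}{\omega},
\end{align}
after translating the hypotheses on $m\omega$ into bounds on $\eta$. First note that both frequency intervals $(-\widetilde{\ximin}\,m^2, 0)$ and $(m\omega_+, m\omega_+ + \widetilde{\ximax}\,m^2)$ are empty when $m=0$, so throughout we may assume $m \neq 0$ and write $\eta = \frac{am^2}{m\omega}$ with $a>0$; in both cases $m\omega \notin (0, m\omega_+]$ and $\omega \neq 0$, so $\re$ and the bound \eqref{plan:sandwich} are available.

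For \eqref{est:rnaturalasymptotic}: the hypothesis $-\widetilde{\ximin}\,m^2 < m\omega < 0$ gives $-\eta = \frac{am^2}{-m\omega} > \frac{a}{\widetilde{\ximin}} > 0$, so $-\eta$ is made arbitrarily large by shrinking $\widetilde{\ximin}$. The lower bound $\re > \tfrac12(-\eta)^{1/2}$ is then immediate from the left inequality in \eqref{plan:sandwich}, which already gives $\re \geq (2r_+^2 + a^2 - \eta)^{1/2} > (-\eta)^{1/2}$ with no smallness needed. For the upper bound I would use concavity in the form $(r_+^2+a^2-\eta)^{1/2} = (-\eta)^{1/2}\big(1 + \tfrac{r_+^2+a^2}{-\eta}\big)^{1/2} \leq (-\eta)^{1/2} + \tfrac{r_+^2+a^2}{2(-\eta)^{1/2}}$, so the right inequality in \eqref{plan:sandwich} gives $\re \leq r_+ + (-\eta)^{1/2} + \tfrac{r_+^2+a^2}{2(-\eta)^{1/2}}$, which is $< 2(-\eta)^{1/2}$ precisely when $(-\eta)^{1/2}$ exceeds a threshold depending only on $r_+$ and $a$ (explicitly, when $2(-\eta) - 2r_+(-\eta)^{1/2} - (r_+^2+a^2) > 0$). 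Since $-\eta > a/\widetilde{\ximin}$, this is secured by choosing $\widetilde{\ximin} = \widetilde{\ximin}(a,M)$ small enough.

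For \eqref{est:rnaturallimit}: by the right inequality in \eqref{plan:sandwich} it suffices to arrange $(r_+^2 + a^2 - \eta)^{1/2} < \delta$, i.e.\ $\eta > r_+^2 + a^2 - \delta^2$ (and we may assume $\delta^2 < r_+^2+a^2$, since otherwise $\eta > 0 \geq r_+^2+a^2-\delta^2$ holds for free in this regime). The hypothesis $m\omega_+ < m\omega < m\omega_+ + \widetilde{\ximax}\,m^2$ together with $m\omega_+ = \frac{am^2}{r_+^2+a^2}$ gives $\eta = \frac{am^2}{m\omega} > \dfrac{a}{\frac{a}{r_+^2+a^2} + \widetilde{\ximax}}$, whose right-hand side increases to $r_+^2+a^2$ as $\widetilde{\ximax}\to 0$. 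Solving $\dfrac{a}{\frac{a}{r_+^2+a^2} + \widetilde{\ximax}} \geq r_+^2 + a^2 - \delta^2$ for $\widetilde{\ximax}$ shows it suffices to take $\widetilde{\ximax} < \dfrac{a\,\delta^2}{(r_+^2+a^2)(r_+^2+a^2-\delta^2)}$, a quantity depending only on $a, M, \delta$; then $\eta > r_+^2+a^2-\delta^2$ and \eqref{est:rnaturallimit} follows.

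The only points requiring care are the sign bookkeeping relating the $m\omega$-intervals to the range of $\eta$ (and the harmless exclusion of $m=0$); there is no genuine analytic obstacle, since everything reduces to the explicit sandwich \eqref{plan:sandwich} of Proposition \ref{est:rnaturalbounds} together with elementary estimates.
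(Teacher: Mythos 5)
Your proposal is correct and follows essentially the same route as the paper: both parts are read off from the sandwich bound \eqref{basicineq} of Proposition \ref{est:rnaturalbounds} after translating the $m\omega$-hypotheses into bounds on $\eta$, with the paper phrasing the large-$|\eta|$ and $\eta\to(r_+^2+a^2)^-$ steps as limits where you give explicit quantitative thresholds for $\widetilde{\ximin}$ and $\widetilde{\ximax}$. The sign bookkeeping and the exclusion of $m=0$ are handled correctly, so there is nothing to fix.
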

\begin{proof}
The inequality $m\omega \in (-\widetilde{\ximin}  m^2, 0)$ directly implies
$
\eta < -\frac{a}{\widetilde{\ximin}}.
$
The estimate \eqref{est:rnaturalasymptotic} follows from applying the bounds \eqref{basicineq} and noting the limits
\begin{equation}
\lim_{\eta \to -\infty} (-\eta)^{-1/2}(r_+^2+ \Gamma_+)^{1/2} = \lim_{\eta \to -\infty} (-\eta)^{-1/2} (r_+ + \Gamma_+^{1/2}) = 1.
\end{equation}
The statement \eqref{est:rnaturallimit} follows from Proposition \eqref{est:rnaturalbounds} along with the limit
$
\lim_{\eta \to (r_+^2+a^2)^-}\Gamma_+ = 0.
$
\end{proof}


\subsection{Outline of the selection of frequency constants}\label{sec:freqranges}
We now first outline how we will determine $\rmin$, $\rmax$, $\ximin$, $\ximax$, as well as two auxiliary constants, $c_{\mathcal{G}}$ and $c_{\mathcal{G}'}$. 
\begin{enumerate}
\item \textbf{Absolute bounds on $\ximin, \ximax, \rmin, \rmax$:} 
We show that there exist constants $\widetilde{\xi}_\ell$, $\widetilde{\xi}_{r}$, $\widetilde{\overline{r}}^\sharp$, $\widetilde{\underline{r}}^\sharp$ depending only on $M$ and $a$ such that (\ref{est:rnaturalasymptotic}) as well as the estimates of Proposition \ref{prop:AngularTermsLowerBound}, Corollary \ref{cor:rminmaxbounds} and
 Proposition \ref{est:TrappingInclusion}
hold for all
\begin{align}
\xi_\ell \leq \widetilde{\xi}_\ell \ \ \ , \ \ \ \xi_r \leq \widetilde{\xi}_r \ \ \ , \ \ \ {\overline{r}}^\sharp \geq \widetilde{\overline{r}}^\sharp \ \ \ , \ \ \ {\underline{r}}^\sharp \leq \widetilde{\underline{r}}^\sharp \, .
\end{align}
\item\textbf{Fixing $\ximax$ and $\rmin$:} We next fix once and for all $\rmin<\widetilde{\underline{r}}^\sharp$, and consequently choose $\ximax$ (also once and for all) such that Proposition \ref{prop:LowerBoundSuperrad} holds.
\item\textbf{We fix $\ximin, \rmax, c_{\mathcal{G}}, c_{\mathcal{G}'}$ once and for all such that the following relations hold:} 
\begin{subequations}
\begin{align}
c_{\mathcal{G}} &= c_0 c_{\mathcal{G}'}, \label{id:csharpRelation}\\
c_{\mathcal{G}'} &< \frac{1}{8}\min\left(\frac{r_+^2+a^2}{a}\ximin, 1\right),  \label{se2} \\
\ximin &= \frac{1}{8\rmax}, \label{se3} \\
\frac{1}{\rmax} &< \frac{c_{\mathcal{G}}^{1/2}}{2C_1}. \label{se4}
\end{align}
\end{subequations}
where $c_0 = \frac{r_+^2}{r_+^2+a^2}\left(\tfrac{r_+-M}{r_+}\right)^3$ and $C_1$ are explicit constants depending only on $M$ and $a$.\footnote{More specifically, $C_1$ is determined in Proposition \ref{Cor:PositiveEELagrange} below.}
 
We do this as follows. Given $\overline{r}^\sharp > \frac{r_+^2+a^2}{8a}$ we define $\xi_\ell$ by (\ref{se3}) and $c_{\mathcal{G}^\prime}=\frac{1}{80} \frac{r_+^2+a^2}{a} \frac{1}{\overline{r}^\sharp}$ so that (\ref{se2}) holds. We also define $c_\mathcal{G}$ by (\ref{id:csharpRelation}). We claim that we can choose $\overline{r}^\sharp$ large enough such that also (\ref{se4}) holds which — after the above choices — boils down to choosing $\overline{r}^\sharp$ to satisfy
\[
\frac{1}{\overline{r}^\sharp} < \frac{1}{2C_1} \sqrt{\frac{c_0}{80} \frac{r_+^2+a^2}{a} \frac{1}{\overline{r}^\sharp}} 
\]
which is indeed easily achieved. 

\end{enumerate}
With these choices of constants, the main results of this section are the coercivity estimate (\ref{est:ELCombined}) for frequencies in $\mathcal{G}$ localised in $ r\in [R^\sharp,\infty)$ and the spatially global estimate (\ref{est:ELCombined2}) for frequencies in $\mathcal{G}^\prime$.

\subsection{Absolute bounds on the frequency constants}\label{sec:FrequencyConstantsAbsoluteBounds}
In this section we set some preliminary bounds on $\ximin, \ximax, \rmin, \rmax$.




%

\begin{proposition}\label{prop:AngularTermsLowerBound}
Let $\widetilde{\xi}_r$ satisfy 
\begin{equation}\label{est:ximaxabsolutebound}
\frac{a^2}{r_+^2+a^2}+\widetilde{\xi}_r < \frac12.
\end{equation}
Then the following estimate holds whenever $m \omega  \leq m\omega_+ + \widetilde{\xi}_r m^2$: 
\begin{equation}\label{est:AngularTermsLB}
\left(\frac{m}{\sin\theta} - a\omega\sin\theta\right)^2 \geq \frac{m^2}{4\sin^2\theta} \, .
\end{equation}
If in addition $m\omega < 0$, this bound may be improved to
\begin{equation}\label{est:AngularTermsBetter}
\left(\frac{m}{\sin\theta} - a\omega\sin\theta\right)^2 \geq \frac{m^2}{\sin^2\theta}.
\end{equation}
\end{proposition}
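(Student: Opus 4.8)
\emph{Plan.} The idea is to clear the $\sin\theta$-denominators and reduce both inequalities to a one-variable estimate. Write $s=\sin\theta\in(0,1]$. First I would dispose of $m=0$ separately: there the right-hand side of \eqref{est:AngularTermsLB} vanishes while the left-hand side equals $a^2\omega^2 s^2\ge 0$, so nothing is to be proved (and the extra hypothesis $m\omega<0$ in the second statement already forces $m\neq 0$). Thus assume $m\neq 0$ from now on.

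For $m\neq 0$, dividing \eqref{est:AngularTermsLB} by $m^2/s^2>0$ and taking square roots shows it is equivalent to $|1-t|\ge \tfrac12$, and \eqref{est:AngularTermsBetter} is equivalent to $|1-t|\ge 1$, where I set $t:=\dfrac{a\omega s^2}{m}=a s^2\cdot\dfrac{\omega}{m}$. So everything reduces to locating $t$ relative to the interval $\bigl(\tfrac12,\tfrac32\bigr)$ (resp.\ $(0,2)$). The next step is to feed in the frequency restriction: dividing $m\omega\le m\omega_+ + \widetilde{\xi}_r m^2$ by $m^2>0$ and recalling $\omega_+=\tfrac{am}{r_+^2+a^2}$ gives $\dfrac{\omega}{m}\le \dfrac{a}{r_+^2+a^2}+\widetilde{\xi}_r$. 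Since $0<s^2\le 1$ and $a>0$, this yields $t\le \dfrac{a^2}{r_+^2+a^2}+a\widetilde{\xi}_r$ when $\omega/m\ge 0$, and $t\le 0$ when $\omega/m\le 0$.

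Finally I would split on the sign of $\omega/m$. If $\omega/m\le 0$ then $t\le 0<\tfrac12$, so $|1-t|=1-t\ge 1>\tfrac12$ and \eqref{est:AngularTermsLB} holds; and if moreover $m\omega<0$ then $\omega/m<0$, hence $t<0$ and $|1-t|=1-t>1$, which is exactly \eqref{est:AngularTermsBetter}. If instead $\omega/m>0$, then $0<t\le \tfrac{a^2}{r_+^2+a^2}+a\widetilde{\xi}_r$, and the smallness assumption \eqref{est:ximaxabsolutebound} on $\widetilde{\xi}_r$ (which keeps $\tfrac{a^2}{r_+^2+a^2}+a\widetilde{\xi}_r$ below $\tfrac12$ for $\widetilde{\xi}_r$ small) gives $t<\tfrac12$, so $|1-t|=1-t>\tfrac12$ and \eqref{est:AngularTermsLB} follows.

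I do not expect a genuine obstacle here: once the substitution $t=a\sin^2\theta\,(\omega/m)$ is made, the rest is two lines of arithmetic and a sign case-split. The only conceptually relevant point is that, despite the singular weight $1/\sin\theta$, the regime $\sin\theta\to 0$ is the benign one (there the $m^2/\sin^2\theta$ term dominates everything), so the extremal configuration is $\sin\theta=1$ — precisely where one needs the full strength of the restriction $m\omega\le m\omega_+ + \widetilde{\xi}_r m^2$ reaching only slightly past the superradiant threshold $m\omega=m\omega_+$.
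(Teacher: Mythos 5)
Your argument is correct and is essentially the paper's proof in normalized form: both reduce the claim to the frequency bound $\tfrac{\omega}{m}\le \tfrac{a}{r_+^2+a^2}+\widetilde{\xi}_r$ combined with \eqref{est:ximaxabsolutebound} to force the term $a\sin^2\theta\,\tfrac{\omega}{m}$ below $\tfrac12$ (the paper splits off $\tfrac{m^2}{2\sin\theta}$ instead of normalizing by $\tfrac{m}{\sin\theta}$, which is the same estimate), and both get \eqref{est:AngularTermsBetter} from the sign of the cross term when $m\omega<0$. The one small point — that what you actually obtain is $\tfrac{a^2}{r_+^2+a^2}+a\widetilde{\xi}_r<\tfrac12$ rather than literally \eqref{est:ximaxabsolutebound} — is shared by the paper's own proof and is harmless since $\widetilde{\xi}_r$ is a parameter chosen small depending on $a,M$.
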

\begin{proof}
The bound \eqref{est:AngularTermsBetter} is trivial (recalling that $a>0$). In order to prove \eqref{est:AngularTermsLB}, we write
\begin{equation}\label{est:PositivePotential}
\frac{m^2}{\sin\theta} - am\omega\sin\theta = \frac{m^2}{2\sin\theta} + \left(\frac{m^2}{2\sin\theta}-am\omega\sin\theta\right) \geq \frac{m^2}{2\sin\theta},
\end{equation}
which comes from the identity
\begin{equation}
\frac{m^2}{2\sin\theta}-am\omega\sin\theta \geq \frac{m^2}{2\sin\theta} - \left(\frac{a^2}{r_+^2+a^2} m^2+\widetilde{\xi}_r m^2\right)\sin\theta.
\end{equation}
Using our choice of $\widetilde{\xi}_r$ and squaring \eqref{est:PositivePotential} gives our result.
\end{proof}

\begin{corollary}\label{cor:rminmaxbounds}
Suppose $\widetilde{\xi}_r$ satisfies \eqref{est:ximaxabsolutebound} and $m \omega  \leq m\omega_+ + \widetilde{\xi}_r m^2$.
There exist quantities $\widetilde{\underline{r}}^\sharp$ and $\widetilde{\overline{r}}^\sharp$ depending only on $a$ and $M$, such that
\begin{equation}
v^2\left(\frac{m}{\sin\theta} - a\omega\sin\theta\right)^2 \geq \frac{v^2 m^2}{8\sin^2\theta} + \left(\frac{2am(r^2-r_+^2)}{(r^2+a^2)(r_+^2+a^2)}\right)^2 \text{ for } r < \widetilde{\underline{r}}^\sharp\label{est:rminbound}
\end{equation}
and
\begin{equation}\label{est:rmaxbound}
v^2\left(\frac{m}{\sin\theta} - a\omega\sin\theta\right)^2 \geq \frac{v^2 m^2}{8\sin^2\theta} + \left(\frac{m}{8r} + \frac{am}{r^2+a^2}\right)^2 \text{ for } r> \widetilde{\overline{r}}^\sharp
\end{equation}
hold.
\end{corollary}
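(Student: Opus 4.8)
The strategy is to split the left-hand side and use the already-established pointwise bounds of Proposition \ref{prop:AngularTermsLowerBound}, treating the ranges $r < \widetilde{\underline{r}}^\sharp$ and $r > \widetilde{\overline{r}}^\sharp$ separately. In both cases the first summand $\frac{v^2 m^2}{8\sin^2\theta}$ will simply be peeled off from $\frac{v^2 m^2}{4\sin^2\theta}$ (available from \eqref{est:AngularTermsLB}), leaving $\frac{v^2 m^2}{8\sin^2\theta}$ to dominate the remaining square term. So the real content is: for $r$ small (resp.\ large), one has $\frac{v^2 m^2}{8\sin^2\theta} \geq (\text{the displayed square})$, and this should follow because the displayed squares are built precisely so that their magnitude is controlled by $v^2 m^2 / \sin^2\theta$ up to an absolute constant, at least once $r$ is in the appropriate asymptotic regime.

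First I would record that for $m\omega \leq m\omega_+ + \widetilde{\xi}_r m^2$ with $\widetilde{\xi}_r$ as in \eqref{est:ximaxabsolutebound}, we have $|a\omega| \leq \frac{a^2}{r_+^2+a^2}|m| + \widetilde{\xi}_r |m| \leq C|m|$ for an $a,M$-dependent constant, so $|\omega| \leq C|m|$ throughout; this lets me bound every occurrence of $\omega$ by $|m|$ times a constant. Then for the near-horizon estimate \eqref{est:rminbound}: write $\frac{2am(r^2-r_+^2)}{(r^2+a^2)(r_+^2+a^2)}$ and note that on a bounded $r$-interval $(r_+, \widetilde{\underline{r}}^\sharp)$ one has $v^2 = \Delta/(r^2+a^2)^2 \sim (r - r_+)$ up to constants, while $r^2 - r_+^2 \sim (r-r_+)$ as well; hence the square term is $O((r-r_+)^2) m^2$ whereas $\frac{v^2 m^2}{8\sin^2\theta} \gtrsim (r-r_+) m^2 / \sin^2\theta \geq c (r-r_+) m^2$, so for $r$ sufficiently close to $r_+$ — i.e.\ $\widetilde{\underline{r}}^\sharp$ chosen small enough depending only on $a,M$ — the linear-in-$(r-r_+)$ term beats the quadratic one. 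For the near-infinity estimate \eqref{est:rmaxbound}: $v^2 \to 1$ and $\frac{v^2 m^2}{8\sin^2\theta} \geq \frac{m^2}{16\sin^2\theta} \geq \frac{m^2}{16}$ for $r$ large, while $\left(\frac{m}{8r}+\frac{am}{r^2+a^2}\right)^2 \leq \left(\frac{m}{8r} + \frac{m}{r}\right)^2 = O(m^2/r^2) \to 0$; so again choosing $\widetilde{\overline{r}}^\sharp$ large enough in terms of $a,M$ makes the second term smaller than $\frac{v^2 m^2}{8\sin^2\theta}$ with room to spare (in fact one should arrange $\frac{m}{8r} + \frac{am}{r^2+a^2} \leq \frac{m}{8r'}$-type crude bounds and compare to $\frac{v^2 m}{\sqrt{8}\sin\theta}$ directly).

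Concretely I would carry it out as follows. Step 1: from Proposition \ref{prop:AngularTermsLowerBound}, $v^2\left(\frac{m}{\sin\theta} - a\omega\sin\theta\right)^2 \geq \frac{v^2 m^2}{4\sin^2\theta} = \frac{v^2 m^2}{8\sin^2\theta} + \frac{v^2 m^2}{8\sin^2\theta}$. Step 2 (near horizon): using $\sin^2\theta \leq 1$ and the expansions $\Delta = (r-r_+)(r-r_-)$ with $r_- = M - \sqrt{M^2 - a^2}$, show $\frac{v^2}{8\sin^2\theta} \geq \frac{v^2}{8} = \frac{(r-r_+)(r-r_-)}{8(r^2+a^2)^2} \geq c_0'(r-r_+)$ for $r_+ < r < \widetilde{\underline{r}}^\sharp$, while $\left(\frac{2a(r^2-r_+^2)}{(r^2+a^2)(r_+^2+a^2)}\right)^2 \leq C_0'(r-r_+)^2 \leq C_0'(\widetilde{\underline{r}}^\sharp - r_+)(r - r_+)$; choosing $\widetilde{\underline{r}}^\sharp - r_+ \leq c_0'/C_0'$ gives the second half of \eqref{est:AngularTermsLB} dominating the square term, proving \eqref{est:rminbound}. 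Step 3 (near infinity): for $r \geq \widetilde{\overline{r}}^\sharp$ with $\widetilde{\overline{r}}^\sharp \geq 4M$ say, one has $v^2 = 1 - \frac{2M}{r} + \frac{a^2}{r^2} \geq \frac12$, hence $\frac{v^2 m^2}{8\sin^2\theta} \geq \frac{m^2}{16\sin^2\theta} \geq \frac{m^2}{16}$, while $\left(\frac{|m|}{8r} + \frac{a|m|}{r^2+a^2}\right)^2 \leq \left(\frac{|m|}{8r} + \frac{a|m|}{r^2}\right)^2 \leq \frac{m^2}{r^2}\left(\frac18 + \frac{a}{r}\right)^2 \leq \frac{m^2}{r^2}$ for $r$ large; requiring $\widetilde{\overline{r}}^\sharp \geq 4$ then gives $\frac{m^2}{r^2} \leq \frac{m^2}{16} \leq \frac{v^2 m^2}{8\sin^2\theta}$, yielding \eqref{est:rmaxbound}. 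All thresholds depend only on $a$ and $M$.

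\textbf{Main obstacle.} There is no deep difficulty here; the statement is a bookkeeping exercise combining \eqref{est:AngularTermsLB} with elementary asymptotics of $v^2$ and $\Delta$. The one point requiring mild care is that the splitting $\frac{m^2}{4} = \frac{m^2}{8} + \frac{m^2}{8}$ must genuinely leave \emph{enough} to absorb the extra square term uniformly in $\theta$ — but since both $\frac{v^2 m^2}{8\sin^2\theta}$ and the competing squares carry the same factor of $m^2$ and the competing squares have no $\sin\theta$ dependence while $\frac{v^2 m^2}{8\sin^2\theta} \geq \frac{v^2 m^2}{8}$, the $\theta$-dependence causes no trouble. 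Thus the only real task is to fix the numerology of $\widetilde{\underline{r}}^\sharp$ and $\widetilde{\overline{r}}^\sharp$, and one chooses them generously (closing in on $r_+$ and pushing to infinity respectively) so that the inequalities hold with margin.
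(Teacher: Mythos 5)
Your overall route is the same as the paper's: start from the lower bound of Proposition \ref{prop:AngularTermsLowerBound} (i.e.\ \eqref{est:AngularTermsLB}, which comes from \eqref{est:PositivePotential}), split $\tfrac{v^2m^2}{4\sin^2\theta}=\tfrac{v^2m^2}{8\sin^2\theta}+\tfrac{v^2m^2}{8\sin^2\theta}$, and absorb the extra squares using the asymptotics of $\Delta$ and $v$ as $r\to r_+$ and $r\to\infty$; your near-horizon Step 2 is correct as written. However, Step 3 contains a genuine error: you use $v^2=1-\tfrac{2M}{r}+\tfrac{a^2}{r^2}$, which is $\Delta/r^2$, whereas in this paper $v=\Delta^{1/2}/(r^2+a^2)$ by \eqref{vdef}, so $v^2=\Delta/(r^2+a^2)^2\sim r^{-2}$ at infinity. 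Consequently $\tfrac{v^2m^2}{8\sin^2\theta}$ is of size $\tfrac{m^2}{8r^2}$, not bounded below by $\tfrac{m^2}{16}$, and your crude estimate $\bigl(\tfrac{m}{8r}+\tfrac{am}{r^2+a^2}\bigr)^2\le \tfrac{m^2}{r^2}$ cannot close the argument, since $\tfrac{m^2}{r^2}$ exceeds the available $\tfrac{m^2}{8r^2}$. The inequality \eqref{est:rmaxbound} is really a comparison of two quantities of the \emph{same} order $m^2/r^2$, and it holds only because of the explicit factor $\tfrac18$ in $\tfrac{m}{8r}$: writing the needed bound as $\tfrac{\Delta r^2}{8(r^2+a^2)^2}\ge \bigl(\tfrac18+\tfrac{ar}{r^2+a^2}\bigr)^2$, the left side tends to $\tfrac18$ and the right side to $\tfrac1{64}$ as $r\to\infty$, so it does hold for $r\ge \widetilde{\overline{r}}^\sharp(a,M)$ sufficiently large — but with far less slack than your "with room to spare" framing suggests, and your written chain does not establish it.

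Two smaller points. First, your preliminary remark that the hypothesis gives $|\omega|\le C|m|$ is false: the assumption $m\omega\le m\omega_++\widetilde{\xi}_r m^2$ is one-sided, so $\omega$ may be arbitrarily large and negative relative to $m$. You never actually need this (the absorbed squares contain no $\omega$, and Proposition \ref{prop:AngularTermsLowerBound} covers $m\omega<0$ via \eqref{est:AngularTermsBetter}), so simply delete it. Second, once Step 3 is redone with the correct $v$, the rest of your argument (the $\theta$-handling via $\sin^2\theta\le 1$, and the choice of $\widetilde{\underline{r}}^\sharp$ via $v^2\gtrsim (r-r_+)$ versus $(r^2-r_+^2)^2\lesssim (r-r_+)^2$ on a bounded interval) matches the paper's one-line proof and is fine.
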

\begin{proof}
This follows from (\ref{est:PositivePotential}), taking into account asymptotics as $r\to r_+$, $r\to\infty$ respectively.
\end{proof}

Finally, we would like to additionally choose our parameters $\ximin, \ximax$ small enough (depending only on $M$, $a$) such that the set of trapping frequencies, as defined in \cite{DRSR}, is wholly contained within $\mathcal{G}^c$.
\begin{proposition}\label{est:TrappingInclusion}
Given $\mathcal{G}_\natural$ as defined in Section 8.1 of \cite{DRSR}, then, there exists $\widetilde{\xi}_\ell$, $\widetilde{\xi}_r$ (depending only on $M$ and $a$) such that for $\ximin \leq \widetilde{\xi}_\ell$ and $\ximax \leq \widetilde{\xi}_r$, the inclusion relation $\mathcal{G}_\natural \subset\mathcal{G}^c$ holds.
\end{proposition}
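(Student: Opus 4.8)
The plan is to recall the explicit definition of the trapped frequency set $\mathcal{G}_\natural$ from Section 8.1 of \cite{DRSR} and show that every frequency triple in it satisfies $(\omega,m)\notin\mathcal{G}$, i.e.\ that $m\omega$ lies outside the interval $(-\ximin m^2, m\omega_+ + \ximax m^2)$, once $\ximin$ and $\ximax$ are taken small enough. In \cite{DRSR} the trapped frequencies are characterised (after a normalisation, using homogeneity of degree $0$ under $(\omega,m,\Lambda)\mapsto k(\omega,m,\Lambda)$) by the condition that the potential $V_0$ (or equivalently $(\omega-\omega_r)^2 v^{-2} + v^2\Lambda/\ldots$; see Remark \ref{rem:DRSR}) admit a degenerate critical point, and the key quantitative input of \cite{DRSR} is precisely that such a degenerate critical point cannot occur when the frequency is superradiant, nor even in a fixed neighbourhood of the superradiant regime: there is a constant (depending only on $M,a$) measuring the ``gap'' between the trapping region and the superradiant band $\{m\omega\in[0,m\omega_+]\}$. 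First I would extract from \cite{DRSR} the statement that $\mathcal{G}_\natural$ is contained in $\{(\omega,m,\Lambda) : m\omega \le -c m^2 \text{ or } m\omega \ge m\omega_+ + c m^2\}$ for some $c=c(M,a)>0$, together with the case $m=0$ (where superradiance is absent so there is nothing to exclude on that side, and the $m=0$ trapped frequencies are automatically in $\mathcal{G}^c = \{m=0\}^c{}^c$ --- more precisely, we include $m=0$ in $\mathcal{G}'$ and in $\mathcal{G}^c$ directly by our conventions).

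Given such a $c$, I would then simply set $\widetilde{\xi}_\ell = \widetilde{\xi}_r = c$: for $\ximin \le \widetilde{\xi}_\ell$ and $\ximax \le \widetilde{\xi}_r$ we have
\begin{equation}
(-\ximin m^2, m\omega_+ + \ximax m^2) \subseteq (-c m^2, m\omega_+ + c m^2),
\end{equation}
and therefore any $(\omega,m)$ with $m\omega$ outside the larger interval is a fortiori outside the smaller one, i.e.\ $\mathcal{G}_\natural \subset \mathcal{G}^c$. The homogeneity of degree $0$ of both $\mathcal{G}_\natural$ and the defining inequalities for $\mathcal{G}$ in $(\omega,m)$ ensures it suffices to check this on normalised frequencies, so no issue arises from the scaling. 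For the $m=0$ slice one notes directly that $\mathcal{G}$ contains no frequency with $m=0$ (the interval $(-\ximin\cdot 0, 0 + \ximax\cdot 0) = (0,0)$ is empty), so all $m=0$ trapped frequencies lie in $\mathcal{G}^c$ trivially.

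The only genuine content, and hence the main point to get right, is the precise bookkeeping of how $\mathcal{G}_\natural$ is defined in \cite{DRSR} and the extraction of the uniform gap constant $c(M,a)$ --- this is exactly the ``superradiant frequencies are not trapped, quantitatively'' observation of \cite{DRSR} that the whole paper rests on, so the argument here is really a citation plus an elementary inclusion of intervals rather than a new estimate. I expect the write-up to consist of (i) quoting the relevant characterisation from \cite[Section 8.1]{DRSR}; (ii) reading off the gap constant; (iii) the one-line interval inclusion above; (iv) the trivial remark on $m=0$. No step should present a real obstacle beyond faithfully transcribing the \cite{DRSR} conventions.
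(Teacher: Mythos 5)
Your proposal is correct and follows essentially the same route as the paper: quote the characterisation of $\mathcal{G}_\natural$ from \cite{DRSR}, reduce the claim to an elementary inclusion of intervals in $m\omega$, and dispose of $m=0$ trivially. The one step you gloss over is that \cite{DRSR} states the characterisation in terms of $\Lambda$ (namely $\epsilon_{\text{width}}\Lambda<\omega^2$ and $m\omega\notin(0,m\omega_++\alpha'\Lambda]$), so the $m^2$-gap you propose to ``read off'' is obtained in the paper by the additional one-line use of $\Lambda\geq m(m+1)\geq m^2$ together with the first condition to handle the negative-frequency side.
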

\begin{proof}
By the characterization of $\mathcal{G}_\natural$ given in \cite{DRSR}, there exist fixed positive constants $\epsilon_{\text{width}}$ and $\alpha' > 0$ (depending only on $M$ and $a$) such that, for $(\omega, m, \Lambda) \in \mathcal{G}_\natural$,
\begin{equation}
\epsilon_{\text{width}}\Lambda < \omega^2, \qquad m\omega\notin (0, m\omega_++\alpha'\Lambda].
\end{equation}
It follows from the inequality $\Lambda > m^2$ that
\begin{equation}
m\omega \geq m\omega_+ + \alpha'm^2 \text{ or } m\omega \leq -\epsilon_{\text{width}}^{1/2}m^2.
\end{equation}
Then, for 
\begin{equation}\label{est:xiboundsDRSR}
\widetilde{\xi}_\ell <2\epsilon_{\text{width}}^{1/2}, \qquad \widetilde{\xi}_r  < 2\alpha',
\end{equation}
the desired inclusion holds.
\end{proof}
We recall already here the definition of the trapping radius $r_{trap} = r_{trap} (a, M, \omega, m, \Lambda)$ from \cite{DRSR}, which will appear in Proposition \ref{thm:DRSRMain}. In particular, we have $r_{trap}=0$ for $(\omega, m) \in \mathcal{G}$ by Proposition \ref{est:TrappingInclusion}.
\begin{remark}\label{rem:Trapping}
The trapping radius $r_{trap}$, as defined in \cite{DRSR} (and appearing in the estimate \eqref{est:DRSRMain} below), does not necessarily agree with the $\re$ defined above. We relate the quantities through two observations. First, the estimate \eqref{est:DRSRMain} holds if $r_{trap}$ is replaced by $r^0_{max}$, the point at which $V_0$ achieves its maximum (following from Lemma 8.6.1 in \cite{DRSR}). Second, at honest trapped frequencies $V_0 - \omega^2$ and $\left(V_0-\omega^2\right)^\prime$ vanish. Since $(v^{-2}(V_0 - \omega^2))'$ then vanishes at $\re$, it follows that $\re$ and $r^0_{max}$ coincide for these frequencies.
\end{remark}

\subsection{Fixing the quantities $\rmin$ and $\ximax$}\label{sec:SuperradiantEstimates}

We now fix  the quantities $\rmin$ and $\ximax$. The next proposition determines how to fix $\ximax$ from $\rmin$.


\begin{proposition}\label{prop:LowerBoundSuperrad}
Given $\rmin \in (r_+, \widetilde{\underline{r}}^\sharp)$, there exists a $\ximax > 0$ and a constant $c_{aux}>0$ depending on $\rmin$, $M$, $a$ such that the inequality
\begin{equation}\label{est:LowerBoundSuperrad}
\omega v\Rs\left(\frac{\omega-\omega_r}{v}\right) \geq c_{aux}\left( \frac{\omega^2}{r} + \frac{a m\omega}{r^3}\right)
\end{equation}
holds  for $r\in(\rmin, \infty)$ and $m\omega \in [0, m\omega_+ + \ximax m^2)$. Moreover, it is possible to choose $\ximax$ satisfying the bound
\begin{equation}\label{est:ximaxBoundRelative}
\ximax < \frac{a({\rmin}^2-r_+^2)}{({\rmin}^2+a^2)(r_+^2+a^2)},
\end{equation}
and that in addition, the inequality
 \begin{equation} \label{cond:rermin}
\re < \rmin
\end{equation}
holds for $m\omega \in (m\omega_+, m\omega_+ + \ximax m^2)$.
\end{proposition}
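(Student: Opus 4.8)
The plan is to establish \eqref{est:LowerBoundSuperrad} by a direct computation of the left hand side, exploiting the definition \eqref{def:rnatural} of $\re$ together with the bounds \eqref{basicineq}. First I would expand
\[
\omega v \Rs\!\left(\frac{\omega-\omega_r}{v}\right) = \omega\left(\Rs(\omega-\omega_r) - (\omega-\omega_r)\frac{\Rs v}{v}\right) = \omega \Rs\left(-\omega_r\right) - \omega(\omega-\omega_r)\frac{\Rs v}{v} \, ,
\]
and observe that $\Rs(\omega-\omega_r) = -\Rs \omega_r = \frac{2amr\Delta}{(r^2+a^2)^3}$ is manifestly nonnegative for $am\ge0$, contributing a term comparable to $\frac{am\omega}{r^3}$ near infinity. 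The remaining term $-\omega(\omega-\omega_r)\Rs v /v$ needs more care: writing $\Rs(v^{-2}(\omega-\omega_r)^2) = v^{-2}\left(2(\omega-\omega_r)\Rs(\omega-\omega_r) - 2(\omega-\omega_r)^2 \Rs v/v\right)$ and recalling that this vanishes at $r=\re$, one sees that on $(\re,\infty)$ the quantity $\Rs(v^{-2}(\omega-\omega_r)^2)$ has a fixed sign (positive, since $v^{-2}(\omega-\omega_r)^2$ has a global minimum at $\re$ by the proof of Proposition \ref{est:rnaturalbounds}). This is exactly the combination appearing in $\omega v \Rs((\omega-\omega_r)/v)$ up to a factor $\tfrac{1}{2}v^2/(\omega-\omega_r)$ and hence, provided we restrict to radii where $\re < \rmin \le r$, the whole expression is nonnegative; the task is then to make it \emph{quantitatively} bounded below by the stated weights.

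To obtain the quantitative lower bound I would split into the two frequency subregimes. For $m\omega \in [0, \text{small})$, i.e.\ $\eta$ large and negative or moderate, Corollary \ref{cor:rnaturalasymptotic} and \eqref{basicineq} place $\re$ in a controlled range; since $P_\eta(r)$ (defined in \eqref{def:Peta}) is the relevant cubic controlling the sign of $(\Gamma^2/\Delta)'$ and $P_\eta$ is increasing past its unique root, one gets $P_\eta(r) \gtrsim (r-\re)\cdot(\text{something positive})$ for $r$ bounded away from $\re$, which after dividing by $(r^2+a^2)\Delta$ and multiplying by the appropriate powers yields the $\omega^2/r$ behaviour near infinity. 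For the genuinely superradiant part $m\omega \in [m\omega_+ - \text{small}, m\omega_+ + \ximax m^2)$, the condition \eqref{cond:rermin} must be arranged so that $\re < \rmin$; by \eqref{est:rnaturallimit} of Corollary \ref{cor:rnaturalasymptotic} (applied with $\delta = \rmin - r_+$) there is a $\widetilde{\ximax}$ making this hold for $m\omega \in (m\omega_+, m\omega_+ + \widetilde{\ximax} m^2)$, and one additionally shrinks $\ximax$ to also satisfy \eqref{est:ximaxBoundRelative}, which is what guarantees $(\omega - \omega_r)$ does not change sign on $(\rmin,\infty)$ and keeps $am\omega$ comparable to $\omega^2$ times a bounded factor. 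The constant $c_{aux}$ is then extracted by a compactness/continuity argument on the remaining bounded intermediate frequency range $m\omega \in [\text{small}, m\omega_+ - \text{small}]$, where $\re$ ranges over a compact subset of $(r_+,\infty)$, $\rmin$ is already fixed below $\re$ is \emph{not} required there, and the left hand side of \eqref{est:LowerBoundSuperrad} is a continuous strictly positive function of $(r,\eta)$ on the relevant compact set.

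I expect the main obstacle to be the bookkeeping that makes the choices of $\ximax$ simultaneously consistent: we must have $\ximax$ small enough for \eqref{cond:rermin} (via Corollary \ref{cor:rnaturalasymptotic}), small enough for \eqref{est:ximaxBoundRelative}, and small enough that the coercive lower bound \eqref{est:LowerBoundSuperrad} with a \emph{uniform} $c_{aux}$ survives all the way up to the upper endpoint $m\omega_+ + \ximax m^2$ — the delicate point being that as $m\omega \uparrow m\omega_+$ the radius $\re \downarrow r_+$ and the weight $am\omega/r^3$ competes with $\omega^2/r$, so one needs the comparison $am\omega \le (a^2/(r_+^2+a^2) + \ximax)m^2 \cdot (r_+^2+a^2)/a$ from \eqref{est:ximaxBoundRelative} to keep the right hand side genuinely controlled. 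A secondary technical point is verifying the sign and size of $-\omega(\omega-\omega_r)\Rs v / v$ near $r=\rmin$ where $r-\re$ is only bounded below by the fixed gap $\rmin - \re > 0$; here I would use that $\Rs v / v$ is a smooth bounded function on $[\rmin,\infty)$ with known asymptotics ($\sim -1/r$ near infinity, where it combines favourably, and bounded near $\rmin$), so no cancellation is lost. Once these are in hand, \eqref{est:LowerBoundSuperrad} follows on $(\rmin,\infty)$ by combining the two explicitly positive pieces.
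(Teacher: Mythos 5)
There is a genuine gap, and it sits exactly in the frequency range the proposition is really about. Your strategy hinges on $\re$: you invoke the monotonicity of $(\omega-\omega_r)^2/v^2$ past its minimum at $\re$, and you cite \eqref{basicineq} and Corollary \ref{cor:rnaturalasymptotic} to "place $\re$ in a controlled range" for small $m\omega$. But $\re$ is only defined for $\omega\neq 0$ and $m\omega\notin(0,m\omega_+]$ (see the definition before Proposition \ref{est:rnaturalbounds}), i.e.\ it does not exist on the superradiant bulk $0<m\omega<m\omega_+$, which is most of the interval $[0,m\omega_++\ximax m^2)$. Moreover, for $m\omega\to 0^+$ one has $\eta=am/\omega\to+\infty$ (not large negative, as you write), the relevant radius is $\rsharp$ (where $\omega-\omega_r$ vanishes), and $\rsharp\to\infty$; so "restrict to radii where $\re<\rmin\le r$" is vacuous there. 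Worse, your claimed split into "two explicitly positive pieces" fails in that regime: for $\max(r_0,\rmin)<r<\rsharp$ one has $\omega(\omega-\omega_r)<0$ and $\Rs v/v<0$, so $-\omega(\omega-\omega_r)\Rs v/v<0$, with magnitude comparable to $am\omega/r^3$, i.e.\ comparable to your first (positive) piece $\omega\Rs(-\omega_r)=2am\omega r\Delta/(r^2+a^2)^3$. The positivity of the sum is a genuine cancellation that must be tracked with explicit constants; term-by-term positivity is simply false. This is why the paper works instead with the exact identity \eqref{id:fsharpprime}, which gives $\omega v\Rs\big(\tfrac{\omega-\omega_r}{v}\big)=\omega^2 P_\eta(r)/(r^2+a^2)^2$ with $P_\eta=2r\Delta-(r-M)\Gamma$ as in \eqref{def:Peta}, and then bounds $P_\eta$ below directly via the splitting $\Gamma=(r^2+a^2-2Mr_+)-(\tfrac{am}{\omega}-2Mr_+)$, using $2r\Delta-(r-M)(r^2+a^2-2Mr_+)\ge(r-r_+)^3$ and treating the three subranges $0\le m\omega<\tfrac12 m\omega_+$, $\tfrac12 m\omega_+\le m\omega\le m\omega_+$, and $m\omega_+\le m\omega<m\omega_++\ximax m^2$ separately; no reference to $\re$ is needed for \eqref{est:LowerBoundSuperrad}.

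Two secondary points. First, your "compactness/continuity" argument on the intermediate frequency band cannot be run as stated, because the estimate must hold on all of $(\rmin,\infty)$, which is not compact in $r$; you need the uniform cubic growth of $P_\eta$ (or equivalent asymptotics) as $r\to\infty$, uniformly over the compact set of $\eta$, not just strict positivity on a compact set. Second, \eqref{est:ximaxBoundRelative} does not "guarantee that $\omega-\omega_r$ does not change sign on $(\rmin,\infty)$": for $m\omega>m\omega_+$ one has $\Gamma>0$ on all of $(r_+,\infty)$ automatically, and for $m\omega<m\omega_+$ the sign change at $\rsharp$ occurs regardless; in the paper \eqref{est:ximaxBoundRelative} is simply an additional smallness condition imposed on $\ximax$ (used later, in Proposition \ref{prop:DominantLagrangian}), not an ingredient of \eqref{est:LowerBoundSuperrad}. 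Your treatment of \eqref{cond:rermin} via \eqref{est:rnaturallimit} with $\delta=\rmin-r_+$ is correct and matches the paper.
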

\begin{proof}

We first establish \eqref{est:LowerBoundSuperrad}. We write
\begin{equation}\label{id:fsharpprime}
\omega v\Rs\left(\frac{\omega-\omega_r}{v}\right) = \frac{2r\Delta \omega^2 - (r-M)(r^2+a^2)(\omega-\omega_r)\omega}{(r^2+a^2)^2}.
\end{equation}
We then expand
\begin{equation}\label{def:SuperDerExpansion}
(r^2+a^2)(\omega-\omega_r)\omega = \omega^2(r^2+a^2-\tfrac{am}{\omega}) = (r^2+a^2-2Mr_+)\omega^2 - (\tfrac{am}{\omega} - 2Mr_+)\omega^2.
\end{equation}
Consequently,
\begin{equation}
2r\Delta - (r-M)(r^2+a^2 - 2Mr_+) = (r-r_+)(2r(r_+-M)+(r-M)(r-r_+))\geq (r-r_+)^3 \, .
\end{equation}
For $\frac12m\omega_+ \leq m\omega \leq m\omega_+$ we ignore the $\tfrac{am}{\omega} - 2Mr_+$ term in \eqref{def:SuperDerExpansion}, as it is positive, and we use the inequality
\begin{equation}
\tfrac{r-r_+}{r} \geq \tfrac{\rmin-r_+}{\rmin} >0,
\end{equation}
as this function is increasing on $(r_+, \infty)$. The bound follows by noting $\frac{a}{2(r_+^2+a^2)} \leq \frac{\omega}{m} \leq \frac{a}{r_+^2+a^2}$ in this range. Likewise, in the region $m\omega_+ \leq m\omega \leq m\omega_+ + \ximax m^2$ we may write $|\tfrac{am}{\omega} - 2Mr_+|\leq C\ximax$, so for suitably small $\ximax$ depending on $\rmin$ we have
\begin{equation}
\left(2r\Delta - (r-M)(r^2+a^2 - \tfrac{am}{\omega})\right) \geq (r-r_+)^3 + (r-M)(\tfrac{am}{\omega} - 2Mr_+) \geq cr^3
\end{equation}
for some $c > 0$.
 For $0 \leq m\omega < \frac12m\omega_+$ we use the relation $\tfrac{am}{\omega} - 2Mr_+ \geq d \tfrac{am}{\omega}$ (for $d$ depending only on $M$, $a$) and carry out the analogous calculations. This proves \eqref{est:LowerBoundSuperrad}.
 
We may directly choose $\ximax$ to be small enough to satisfy \eqref{est:ximaxBoundRelative}. 
The bound \eqref{cond:rermin} follows directly from the statement that, for any $\delta > 0$, there exists an $\widetilde{\xi}_r$ such that
\begin{equation}
\re < r_+ + \delta
\end{equation}
holds whenever $m\omega \in (m\omega_+, m\omega_+ + \widetilde{\xi}_r m^2)$,
which in turn follows from Corollary \ref{cor:rnaturalasymptotic}.
\end{proof}
At this point we fix our values of $\rmin, \ximax$ as follows: By the results of Section \ref{sec:FrequencyConstantsAbsoluteBounds}, we take constants $\widetilde{\xi}_r$ satisfying \eqref{est:ximaxabsolutebound} and \eqref{est:xiboundsDRSR} and $\widetilde{\underline{r}}^\sharp$ satisfying \eqref{est:rminbound}, both of which depend only on $a, M$. Then, we choose $\rmin < \widetilde{\underline{r}}^\sharp$ arbitrary, and consequently fix $\ximax$ satisfying \eqref{est:LowerBoundSuperrad}, \eqref{est:ximaxBoundRelative} and \eqref{cond:rermin} by applying Proposition \ref{prop:LowerBoundSuperrad}. In particular, the $c_{aux}$ in (\ref{est:LowerBoundSuperrad}) is now also fixed.

\subsection{Fixing the quantities $\rmax$ and $\ximin$}\label{sec:SuperradiantEstimates2}
We first prove a preliminary estimate which is valid for all frequencies. Afterwards, we will use it in conjunction with Proposition \ref{prop:LowerBoundSuperrad} to obtain our main coercivity bound valid in all of $\mathcal{G}$ in Proposition \ref{prop:SuperradiantPositivity}.
\begin{proposition}\label{Cor:PositiveEELagrange}
There exists an explicit constant $C_1$ depending only on $M$ such that, for any $\tilde{c} \in (0,1)$,
\begin{equation}\label{est:PositiveEELagrange}
\omega v\Rs\left(\frac{\omega-\omega_r}{v}\right) - \frac{1}{4r}(\omega-\omega_r)^2 \geq \frac{1}{4r}\omega^2 - \frac{\tilde{c}}{4}\frac{v^2}{r}m^2,
\end{equation}
whenever
\begin{equation}\label{est:rCsharpcsmall}
 r > C_1 \tilde{c}^{-1/2}.
\end{equation}
\end{proposition}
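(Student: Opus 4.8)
The plan is to work directly with the explicit expression \eqref{id:fsharpprime} for $\omega v \Rs\bigl(\tfrac{\omega-\omega_r}{v}\bigr)$ and to expand everything in powers of $r$, treating $\omega_r = \tfrac{am}{r^2+a^2}$ as a lower-order correction when $r$ is large. First I would write
\begin{equation*}
\omega v\Rs\left(\frac{\omega-\omega_r}{v}\right) - \frac{1}{4r}(\omega-\omega_r)^2 = \frac{2r\Delta\,\omega^2 - (r-M)(r^2+a^2)(\omega-\omega_r)\omega}{(r^2+a^2)^2} - \frac{(\omega-\omega_r)^2}{4r},
\end{equation*}
and then expand $(r^2+a^2)(\omega-\omega_r)\omega = (r^2+a^2)\omega^2 - am\omega$, so that the $\omega_r$-dependence enters only through the single term $am\omega$. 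The leading behaviour as $r\to\infty$ of the coefficient of $\omega^2$ is
\begin{equation*}
\frac{2r\Delta - (r-M)(r^2+a^2)}{(r^2+a^2)^2} - \frac{1}{4r} = \frac{1}{r} + O(r^{-2}),
\end{equation*}
since the numerator $2r\Delta - (r-M)(r^2+a^2) = r^3 + O(r^2)$ while $(r^2+a^2)^2 = r^4 + O(r^2)$. So the $\omega^2$-part of the left-hand side is $\tfrac{1}{r}\omega^2$ up to an $O(r^{-2})\omega^2$ error, which we absorb into the $\tfrac14\tfrac{1}{r}\omega^2$ we want on the right, provided $r$ exceeds some absolute constant depending only on $M$.

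Next I would handle the cross term in $am\omega$. Collecting, the $am\omega$ contributions are $\tfrac{(r-M)}{(r^2+a^2)^2}am\omega$ from the first fraction and $\tfrac{\omega_r}{2r}(\omega-\tfrac12\omega_r)$-type terms from expanding $-\tfrac{(\omega-\omega_r)^2}{4r} = -\tfrac{\omega^2}{4r} + \tfrac{\omega\omega_r}{2r} - \tfrac{\omega_r^2}{4r}$. All of these are bounded in absolute value by $C\,r^{-3}|m||\omega| + C r^{-5} m^2$ (using $|\omega_r|\le \tfrac{a|m|}{r^2+a^2}\le C r^{-2}|m|$). By the elementary inequality $|m\omega|\, r^{-3} \le \tfrac{\tilde c}{8}\,r^{-3}\omega^2\cdot\tilde c^{-1} \cdots$ — more carefully, I would use $2|am\omega|\le \tilde c^{-1}r\cdot\text{(something)} + \ldots$; the clean way is: the whole $am\omega$-block is bounded by $\tfrac{Cr^{-3}}{2}\bigl(\mu\,\omega^2 + \mu^{-1}a^2m^2\bigr)$ for any $\mu>0$ via Young's inequality, and choosing $\mu = r^2$ gives a bound $\tfrac{C}{2}\bigl(r^{-1}\omega^2 + r^{-5}a^2m^2\bigr)$. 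The first piece is again absorbed into the spare $\tfrac{1}{r}\omega^2$ once $r$ is large, and the second piece is $\lesssim r^{-5}m^2$, which is certainly dominated by $\tfrac{\tilde c}{4}\tfrac{v^2}{r}m^2 \sim \tfrac{\tilde c}{4}\,r^{-1}m^2$ as long as $r^{-4}\lesssim \tilde c$, i.e.\ $r \gtrsim \tilde c^{-1/4}$; since $\tilde c < 1$ this is implied by $r > C_1\tilde c^{-1/2}$.

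There is one more point: after absorbing everything above one is left with an inequality of the shape $\text{LHS} \ge \tfrac{1}{r}\omega^2 - (\text{positive})\,r^{-1}\omega^2 - C r^{-5}m^2$, and one must see where the $-\tfrac{\tilde c}{4}\tfrac{v^2}{r}m^2$ on the right genuinely helps rather than just the trivial $m^2$-term. The subtlety is that $v^2 = \Delta/(r^2+a^2)^2 \le r^{-2}$, so $\tfrac{\tilde c}{4}\tfrac{v^2}{r}m^2 \le \tfrac{\tilde c}{4}r^{-3}m^2$ — the right-hand side's $m^2$ allowance is actually quite small, of order $\tilde c\, r^{-3}m^2$, not $\tilde c\,r^{-1}m^2$. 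So I need the genuine $m^2$-error on the left to be $o(r^{-3}m^2)$ after extracting a piece comparable to $\tilde c r^{-3}m^2$. This is where the threshold $r > C_1\tilde c^{-1/2}$ is sharp: the $m^2$-error is $\le C r^{-5}m^2 = (C r^{-2})\cdot r^{-3}m^2 \le \tfrac{\tilde c}{4}r^{-3}m^2$ precisely once $r^2 \ge \tfrac{4C}{\tilde c}$, i.e.\ $r \ge 2\sqrt{C}\,\tilde c^{-1/2}$, so we take $C_1 = 2\sqrt{C}$ (adjusted upward to also dominate the $O(r^{-2})\omega^2$ errors, which only require $r$ above an absolute constant and hence pose no competition once $\tilde c<1$). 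I expect the main obstacle to be exactly this bookkeeping of the $m^2$-terms: one must be careful to split each $am\omega$ cross term by Young's inequality with the weight chosen so that the $\omega^2$-debris is absorbable by the large $\tfrac{1}{r}\omega^2$ term while the $m^2$-debris lands at order $r^{-5}m^2$ (not, say, $r^{-3}m^2$ with an $O(1)$ constant, which would be fatal), and then to read off the correct power $\tilde c^{-1/2}$ of $\tilde c$ in the radius threshold. All constants produced are explicit functions of $M$ alone (the $a^2$-contributions are lower order in $r$ and only help), consistent with the statement.
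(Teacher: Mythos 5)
Your proposal is correct and follows essentially the same route as the paper's own proof: expand the derivative term (the paper invokes the very identity \eqref{id:fsharpprime} you start from), extract the leading $\tfrac{3}{4r}\omega^2$ coefficient (your ``$\tfrac1r+O(r^{-2})$'' should read $\tfrac{3}{4r}+O(r^{-2})$, which is harmless), absorb the $am\omega$ cross terms by Young's inequality into a small multiple of $\tfrac{\omega^2}{r}$ plus an error of order $\tfrac{m^2}{r^5}$, and dominate that error by $\tfrac{\tilde c}{4}\tfrac{v^2}{r}m^2$ using $v^2\gtrsim r^{-2}$, which is exactly where the threshold $r>C_1\tilde c^{-1/2}$ enters. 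The only repair needed is the one you already flag: choose the Young weight as $\mu=\delta r^2$ with $\delta$ a fixed constant depending on $M$ (rather than $\mu=r^2$ literally), so the $\omega^2$-debris sits below the spare $\tfrac{1}{2r}\omega^2$, which is precisely how the paper lands at $\tfrac{1}{4r}\omega^2 - C m^2 r^{-5}$.
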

\begin{proof} 
We write
\begin{equation}
\omega\Rs\left(\frac{\omega-\omega_r}{v}\right) = \omega(\omega-\omega_r)\Rs(v^{-1}) - v^{-1}\Rs(\omega_r).
\end{equation}
An expansion as in \eqref{id:fsharpprime} together with the asymptotic bounds $\Rs(v^{-1}) = 1 + O(r^{-1})$, $v^{-1}\Rs(\frac{a}{r^2+a^2}) = O(r^{-2})$ imply that for sufficiently large $r$, there exists a constant $C_0$ depending only on $M$ such that
\begin{equation}
\omega v\Rs\left(\frac{\omega-\omega_r}{v}\right) \geq  \frac{3}{4r}\omega^2  - \frac{C_0}{r^3}|am\omega|.
\end{equation}
It follows that
\begin{equation}
\omega v\Rs\left(\frac{\omega-\omega_r}{v}\right) - \frac{1}{4r}(\omega-\omega_r)^2 \geq\frac{1}{2r}\omega^2 - v\frac{C_0}{r^2}|am\omega| - \frac{1}{2r}|\omega\omega_r| - \frac{a^2m^2}{4r(r^2+a^2)^2}.
\end{equation}
Additionally, since $C_0$ depends only on $M$, there exists a $C_0'$ depending only on $M$ such that
\begin{equation}
v\frac{C_0}{r^2}|am\omega| + \frac{1}{2r}|\omega\omega_r| \leq \frac{1}{4r}\omega^2 + \frac{C_0'a^2m^2}{r^5},
\end{equation}
so, for a new constant $C'$ depending only on $M$,
\begin{equation}
\omega v\Rs\left(\frac{\omega-\omega_r}{v}\right) - \frac{1}{4r}(\omega-\omega_r)^2 \geq\frac{1}{4r}\omega^2  - \frac{C_0''a^2m^2}{r^5}.
\end{equation}
We remark that \eqref{est:rCsharpcsmall} directly implies $r > C_1$. We may therefore select $C_1$ satisfying the inequality $C_1^2 \geq 8C_0''a^2$ and such that the bound $r^{-5} \leq 2 v^2r^{-3}$ holds for $r > C_1$. Consequently, for $r$ satisfying \eqref{est:rCsharpcsmall} it follows that
\begin{equation}
\frac{C_0''a^2m^2}{r^5} \leq \frac{2C\tilde{c}}{C_1^2}\frac{2v^2}{r}m^2\leq \frac{\tilde{c}}{4}\frac{v^2}{r}m^2.
\end{equation}
The bound \eqref{est:PositiveEELagrange} directly follows.
\end{proof}
Proposition \ref{Cor:PositiveEELagrange} may be combined with Proposition \ref{prop:LowerBoundSuperrad} to prove a bound valid in the entirety of $\mathcal{G}$:
\begin{proposition}\label{prop:SuperradiantPositivity}
Recall that we fixed $\rmin, \ximax$, $c_{aux}$ at the end of Section \ref{sec:SuperradiantEstimates}. For any $c_{\mathcal{G}}$ with $0< c_{\mathcal{G}} < \min \left(\frac{1}{4}, \frac{c_{aux}}{2}\right)$, for all $\ximin \leq \widetilde{\xi}_\ell$ and for all $\rmax < \infty$ satisfying
\begin{equation}\label{def:rmaxconditions}
\rmax \geq 2C_1 c_{\mathcal{G}}^{-1/2},
\end{equation}
where $C_1$ is as defined in Proposition \ref{Cor:PositiveEELagrange}, the inequality
\begin{equation}\label{est:ELCombined}
\omega v \Rs\left(\frac{\omega-\omega_r}{v}\right) - \frac{c_{\mathcal{G}}}{r}(\omega-\omega_r)^2 + \frac{c_{\mathcal{G}} v^2}{2r}\left(\frac{m}{\sin\theta} - a\omega\sin\theta\right)^2 \geq \frac{c_{\mathcal{G}}}{4r}\omega^2 + \frac{c_{\mathcal{G}} v^2}{16r}\left(\frac{m}{\sin\theta}\right)^2
\end{equation}
holds for all $(\omega, m)$ satisfying $m \omega \in (-{\xi}_\ell m^2, m \omega_+ + {\xi}_r m^2)$ and all $r\in[R^\sharp, \infty)$.
\end{proposition}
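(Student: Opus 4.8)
The plan is to prove \eqref{est:ELCombined} by partitioning the admissible range $m\omega \in (-\ximin m^2, m\omega_+ + \ximax m^2)$ along the three branches of $R^\sharp$ in \eqref{def:Rsharp}, and in each piece combining one of the two coercivity inputs already available — Proposition~\ref{prop:LowerBoundSuperrad} (valid once $m\omega\geq 0$ and $r>\rmin$) and Proposition~\ref{Cor:PositiveEELagrange} (the Lagrangian-type estimate, valid for $r>C_1\tilde c^{-1/2}$) — with the angular lower bound \eqref{est:AngularTermsLB}, sharpened to \eqref{est:AngularTermsBetter} when $m\omega<0$. Throughout one uses $a>0$, that $m\neq 0$ (the interval is empty for $m=0$), that $c_\mathcal{G}<\min(\tfrac14,\tfrac{c_{aux}}{2})$, and that $\rmax\geq 2C_1 c_\mathcal{G}^{-1/2}$, so every $r\geq R^\sharp\geq\rmax$ lies in the domain of validity of Proposition~\ref{Cor:PositiveEELagrange} for any $\tilde c\in[\tfrac{c_\mathcal{G}}{4},\,c_\mathcal{G}]$.

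\emph{The branches $R^\sharp=\rsharp$ and $R^\sharp=\rmin$} (equivalently $\tfrac{am^2}{{\rmax}^2+a^2}\leq m\omega<m\omega_++\ximax m^2$). Here $m\omega>0$, and from \eqref{def:Rsharp} one checks $r\geq R^\sharp\geq\rsharp$, so $\omega-\omega_r$ and $\omega$ have the same sign with $|\omega-\omega_r|\leq|\omega|$, hence $(\omega-\omega_r)^2\leq\omega^2$. Since also $r>\rmin$, Proposition~\ref{prop:LowerBoundSuperrad} gives $\omega v\Rs\!\left(\tfrac{\omega-\omega_r}{v}\right)\geq c_{aux}\!\left(\tfrac{\omega^2}{r}+\tfrac{am\omega}{r^3}\right)\geq\tfrac{c_{aux}}{r}\omega^2$, whence $\omega v\Rs\!\left(\tfrac{\omega-\omega_r}{v}\right)-\tfrac{c_\mathcal{G}}{r}(\omega-\omega_r)^2\geq\tfrac{c_{aux}-c_\mathcal{G}}{r}\omega^2\geq\tfrac{c_\mathcal{G}}{4r}\omega^2$; and \eqref{est:AngularTermsLB} gives $\tfrac{c_\mathcal{G} v^2}{2r}\left(\tfrac{m}{\sin\theta}-a\omega\sin\theta\right)^2\geq\tfrac{c_\mathcal{G} v^2 m^2}{8r\sin^2\theta}\geq\tfrac{c_\mathcal{G} v^2}{16r}\left(\tfrac{m}{\sin\theta}\right)^2$. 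Adding the two yields \eqref{est:ELCombined}.

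\emph{The branch $R^\sharp=\rmax$} (equivalently $-\ximin m^2<m\omega\leq\tfrac{am^2}{{\rmax}^2+a^2}$, so $r\geq\rmax$), split by the sign of $m\omega$. If $m\omega<0$, apply Proposition~\ref{Cor:PositiveEELagrange} with $\tilde c=c_\mathcal{G}$; since $c_\mathcal{G}<\tfrac14$ the term $-\tfrac{c_\mathcal{G}}{r}(\omega-\omega_r)^2$ is controlled by the $\tfrac1{4r}(\omega-\omega_r)^2$ already subtracted on the left of \eqref{est:PositiveEELagrange}, leaving $\omega v\Rs\!\left(\tfrac{\omega-\omega_r}{v}\right)-\tfrac{c_\mathcal{G}}{r}(\omega-\omega_r)^2\geq\tfrac{c_\mathcal{G}}{4r}\omega^2-\tfrac{c_\mathcal{G}}{4}\tfrac{v^2}{r}m^2$; the improved angular bound \eqref{est:AngularTermsBetter} then contributes $\tfrac{c_\mathcal{G} v^2 m^2}{2r\sin^2\theta}$, which dominates both $\tfrac{c_\mathcal{G}}{4}\tfrac{v^2}{r}m^2$ and $\tfrac{c_\mathcal{G} v^2}{16r}\left(\tfrac{m}{\sin\theta}\right)^2$ (using $\sin^2\theta\leq1$). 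If $0\leq m\omega\leq\tfrac{am^2}{{\rmax}^2+a^2}$, apply Proposition~\ref{Cor:PositiveEELagrange} with the smaller choice $\tilde c=\tfrac{c_\mathcal{G}}{4}$, which by \eqref{def:rmaxconditions} is admissible for every $r\geq\rmax$: again $-\tfrac{c_\mathcal{G}}{r}(\omega-\omega_r)^2$ is absorbed into $\tfrac1{4r}(\omega-\omega_r)^2$, the resulting error $\tfrac{\tilde c}{4}\tfrac{v^2}{r}m^2=\tfrac{c_\mathcal{G} v^2}{16r}m^2$ is now small enough to be soaked up by the angular gain $\tfrac{c_\mathcal{G} v^2}{2r}\left(\tfrac{m}{\sin\theta}-a\omega\sin\theta\right)^2\geq\tfrac{c_\mathcal{G} v^2 m^2}{8r\sin^2\theta}$ (the net of the angular terms being $\tfrac{c_\mathcal{G} v^2 m^2}{16r}(\tfrac1{\sin^2\theta}-1)\geq0$), and the remaining $\tfrac1{4r}\omega^2\geq\tfrac{c_\mathcal{G}}{4r}\omega^2$ matches the frequency term on the right of \eqref{est:ELCombined}.

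\emph{Main obstacle.} The delicate case is the last one, $R^\sharp=\rmax$ with $0\leq m\omega\leq\tfrac{am^2}{{\rmax}^2+a^2}$. Mere positivity of $\omega v\Rs\!\left(\tfrac{\omega-\omega_r}{v}\right)$ (from Proposition~\ref{prop:LowerBoundSuperrad}) does not suffice, because on this branch $(\omega-\omega_r)^2$ can be as large as a fixed positive multiple of $m^2$ independent of $r$, so $\tfrac{c_\mathcal{G}}{r}(\omega-\omega_r)^2$ decays only like $r^{-1}$, whereas the angular surplus available at large $r$ decays like $r^{-3}$; the wrong-sign term $-\tfrac{c_\mathcal{G}}{r}(\omega-\omega_r)^2$ can therefore only be absorbed using the Lagrangian contribution $\tfrac1{4r}(\omega-\omega_r)^2$ produced by Proposition~\ref{Cor:PositiveEELagrange}. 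One is then left with the $\Delta$-weighted error $\tfrac{\tilde c v^2}{4r}m^2$, which must be reabsorbed by the angular term uniformly in $r\in[\rmax,\infty)$ and $\theta\in(0,\pi)$; this forces $\tilde c$ to be a suitably small multiple of $c_\mathcal{G}$, and it is precisely the hypothesis $\rmax\geq 2C_1 c_\mathcal{G}^{-1/2}$ (equivalently \eqref{se4}) that makes $\tilde c=\tfrac{c_\mathcal{G}}{4}$ simultaneously admissible for all $r\geq\rmax$ (the constraint being $r>C_1\tilde c^{-1/2}$) and small enough. Every remaining step is an elementary manipulation of the weights $v$, $\Delta$ together with the bound $(\omega-\omega_r)^2\leq 2\omega^2+2\omega_r^2$ and $|\omega_r|\leq\tfrac{a|m|}{{\rmax}^2+a^2}$ valid on this branch.
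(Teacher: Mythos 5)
Your proposal is correct and follows essentially the same route as the paper: the same partition along the branches of $R^\sharp$, with Proposition~\ref{prop:LowerBoundSuperrad} plus the bound $|\omega-\omega_r|\leq|\omega|$ and \eqref{est:AngularTermsLB} handling $m\omega\geq\frac{am^2}{\rmax^2+a^2}$, and Proposition~\ref{Cor:PositiveEELagrange} with $\tilde c$ a small multiple of $c_\mathcal{G}$ (admissible on $[\rmax,\infty)$ precisely because of \eqref{def:rmaxconditions}) absorbing $-\frac{c_\mathcal{G}}{r}(\omega-\omega_r)^2$ on the $R^\sharp=\rmax$ branch. The only cosmetic differences are that you further split that branch by the sign of $m\omega$ (using $\tilde c=c_\mathcal{G}$ with \eqref{est:AngularTermsBetter} when $m\omega<0$, where the paper simply keeps $\tilde c=c_\mathcal{G}/4$ throughout) and merge the paper's two positive-frequency regimes into one.
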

\begin{proof}
We split the bound \eqref{est:ELCombined} into the three cases $m\omega \in (-\ximin m^2, \frac{am^2}{{\rmax}^2+a^2})$, $m\omega\in [\frac{am^2}{{\rmax}^2+a^2}, m\omega_+)$, and $m\omega \in [m\omega_+, m \omega_+ + {\xi}_r m^2)$.

For $m\omega \in (-\ximin m^2, \frac{am^2}{{\rmax}^2+a^2})$, we apply Proposition \ref{Cor:PositiveEELagrange}, with $\tilde{c} = c_{\mathcal{G}}/4$, to get
\begin{equation}
\omega v \Rs\left(\frac{\omega-\omega_r}{v}\right) - \frac{c_{\mathcal{G}}}{r}(\omega-\omega_r)^2 \geq \frac{c_{\mathcal{G}}}{r}\omega^2 - \frac{c_{\mathcal{G}} v^2}{16r}\left(am\right)^2 \, .
\end{equation}
 The inequality \eqref{est:ELCombined} follows from Proposition \ref{prop:AngularTermsLowerBound}. In particular, this holds on $[\rmax, \infty)$ for $\rmax$ satisfying \eqref{def:rmaxconditions}, using the same $C_1$ as in Proposition \ref{Cor:PositiveEELagrange}.

For $m\omega\in [\frac{am^2}{{\rmax}^2+a^2}, m\omega_+)$, the bound $|\omega-\omega_r| \leq |\omega|$ holds on $[\rsharp, \infty)$, which includes the region $[R^\sharp, \infty)$. The inequality \eqref{est:ELCombined} follows from Proposition \ref{prop:LowerBoundSuperrad} combined with Proposition \ref{prop:AngularTermsLowerBound}. In particular, \eqref{est:ELCombined} holds since we have $c_{\mathcal{G}} < \tfrac{c_{aux}}{2}$.

For $m\omega \in  [m\omega_+, m \omega_+ + {\xi}_r m^2)$, the inequality $|\omega-\omega_r| \leq |\omega|$ holds for all $r > r_+$. The inequality \eqref{est:ELCombined} follows again from Proposition \ref{prop:LowerBoundSuperrad} combined with Proposition \ref{prop:AngularTermsLowerBound}.
\end{proof}
We may establish an analogous estimate in $\mathcal{G}'$:
\begin{proposition}\label{prop:FirstLagrangianGPrime}
For any $\ximin >0$, and for any constant $c_{\mathcal{G}'}$ satisfying
\begin{equation}\label{est:cximinBound}
0 < c_{\mathcal{G}'} < \frac{1}{8}\min\left(\frac{r_+^2+a^2}{a}\ximin, 1\right), 
\end{equation}
the estimate
\begin{equation} \label{est:ELCombined2}
\omega(\omega-\omega_r) + c_{\mathcal{G}'}\left(-(\omega-\omega_r)^2 + \frac{v^2}{2}\left(\frac{m}{\sin\theta} - a\omega\sin\theta \right)^2\right) \geq c_{\mathcal{G}'}\left((\omega-\omega_r)^2 + \frac{v^2}{2}\left(\frac{m}{\sin\theta} - a\omega\sin\theta \right)^2\right)
\end{equation}
holds for all $(\omega, m)\in\mathcal{G}'$.
\end{proposition}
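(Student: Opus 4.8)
The plan is to reduce \eqref{est:ELCombined2} to an elementary sign assertion. The angular term $c_{\mathcal{G}'}\tfrac{v^2}{2}\big(\tfrac{m}{\sin\theta}-a\omega\sin\theta\big)^2$ occurs with the same coefficient on both sides and cancels, so \eqref{est:ELCombined2} is equivalent to $\omega(\omega-\omega_r)\geq 2c_{\mathcal{G}'}(\omega-\omega_r)^2$, i.e., factoring out $\omega-\omega_r$,
\[
(\omega-\omega_r)\big((1-2c_{\mathcal{G}'})\omega+2c_{\mathcal{G}'}\omega_r\big)\geq 0 .
\]
Thus it suffices to show that, for every $(\omega,m)\in\mathcal{G}'$ and every $r>r_+$, the two factors above carry the same sign (each being allowed to vanish only when $\omega=0$). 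I would use two elementary facts: $\omega_r=\tfrac{am}{r^2+a^2}$ has the sign of $m$ and satisfies $|\omega_r|\leq|\omega_+|=\tfrac{a|m|}{r_+^2+a^2}$ for all $r>r_+$; and, by \eqref{def:GCalPrimeDef}, for $m\neq 0$ either $m\omega\leq-\tfrac12\ximin m^2$ (``Case I'') or $m\omega\geq m\omega_++\tfrac12\ximax m^2$ (``Case II'').

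First I would dispose of the degenerate configurations: if $\omega=0$ then, since $m\omega_+>0$ for $m\neq 0$, membership in $\mathcal{G}'$ forces $m=0$; and if $m=0$ then $\omega_r=0$ and the displayed inequality reads $(1-2c_{\mathcal{G}'})\omega^2\geq0$, which holds since $c_{\mathcal{G}'}<\tfrac12$. This leaves $\omega\neq0,\ m\neq0$. In Case II one has $m\omega>m\omega_+>0$, so $\omega$ and $\omega_r$ share the sign of $m$ while $|\omega|>|\omega_+|\geq|\omega_r|$; hence $\omega-\omega_r$ has the sign of $\omega$, and $(1-2c_{\mathcal{G}'})\omega+2c_{\mathcal{G}'}\omega_r$ is a combination with positive coefficients of two quantities of that same sign, so it too has the sign of $\omega$ and the product is positive. (Case II uses nothing of $c_{\mathcal{G}'}$ beyond $0<c_{\mathcal{G}'}<\tfrac12$.)

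The one step with real content is Case I. There $m\omega<0$, so $\omega_r$ has the opposite sign to $\omega$; consequently $\omega-\omega_r$ has the sign of $\omega$ (indeed $|\omega-\omega_r|=|\omega|+|\omega_r|$). It remains to check that $(1-2c_{\mathcal{G}'})\omega+2c_{\mathcal{G}'}\omega_r$ has the sign of $\omega$ as well, i.e.\ $(1-2c_{\mathcal{G}'})|\omega|>2c_{\mathcal{G}'}|\omega_r|$, and this is exactly where the hypothesis on $c_{\mathcal{G}'}$ enters. From $m\omega\leq-\tfrac12\ximin m^2$ one gets $|\omega|\geq\tfrac12\ximin|m|$, and $|\omega_r|\leq\tfrac{a|m|}{r_+^2+a^2}$; using $c_{\mathcal{G}'}<\tfrac18$ (so $1-2c_{\mathcal{G}'}>\tfrac34$) together with $c_{\mathcal{G}'}<\tfrac18\tfrac{r_+^2+a^2}{a}\ximin$ one obtains
\[
(1-2c_{\mathcal{G}'})|\omega|>\tfrac{3}{8}\ximin|m|>\tfrac14\ximin|m|>2c_{\mathcal{G}'}\,\tfrac{a|m|}{r_+^2+a^2}\geq 2c_{\mathcal{G}'}|\omega_r| ,
\]
as desired; hence both factors have the sign of $\omega$, the displayed inequality holds, and \eqref{est:ELCombined2} follows.

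The only ``obstacle'' worth flagging is the bookkeeping: one must verify that the two smallness conditions bundled into $c_{\mathcal{G}'}<\tfrac18\min\big(\tfrac{r_+^2+a^2}{a}\ximin,1\big)$ are precisely those that get used --- the ``$1$'' to keep $1-2c_{\mathcal{G}'}$ bounded below, and the ``$\tfrac{r_+^2+a^2}{a}\ximin$'' so that the frequency lower bound $|\omega|\geq\tfrac12\ximin|m|$ beats the largest possible value $|\omega_+|$ of $|\omega_r|$. Everything else is sign-tracking, and the argument is uniform in $r$ since the $v$-dependent (angular) term has dropped out.
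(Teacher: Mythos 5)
Your proof is correct and follows essentially the same route as the paper: after cancelling the angular term, both arguments reduce to the scalar inequality $\omega(\omega-\omega_r)\geq 2c_{\mathcal{G}'}(\omega-\omega_r)^2$ and check it in the two frequency regimes defining $\mathcal{G}'$, using $|\omega-\omega_r|\leq|\omega|$ (same sign) when $m\omega>m\omega_+$, and the bounds $|\omega|\geq\tfrac12\ximin|m|$, $|\omega_r|\leq \tfrac{a|m|}{r_+^2+a^2}$ together with \eqref{est:cximinBound} when $m\omega\leq-\tfrac12\ximin m^2$. Your factored form of the quadratic and the explicit treatment of the (trivial) $m=0$ case are only presentational differences from the paper's argument.
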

\begin{proof}
For $m\omega > m\omega_+$, this follows directly from the bounds
\begin{equation}
(\omega-\omega_r)^2 \leq \omega(\omega-\omega_r), c_{\mathcal{G}'}<\frac18.
\end{equation}
For $m\omega < -\frac12\ximin m^2$, we note that 
\begin{equation}
c_{\mathcal{G}'} \omega_r^2 < \frac18 \frac{r_+^2+a^2}{a}\frac{a^2}{(r^2+a^2)^2}\ximin m^2\leq -\frac14 \omega\omega_r.
\end{equation}
It follows from the inequality $-\omega\omega_r > 0$ that
\begin{equation}
\omega^2-\omega\omega_r + 2c_{\mathcal{G}'}(-\omega^2 + 2\omega\omega_r - \omega_r^2) \geq 0.
\end{equation}
\end{proof}

Before we fix our constants, we prove one last bound which will be used in Section \ref{sec:IntersectionCommutator} to prove bounds in the interpolated regime $\mathcal{G}\cap\mathcal{G}'$.

\begin{proposition}\label{prop:rInterpolationInequality}
Setting 
\begin{equation}\label{id:relationthree}
 \ximin = \frac{1}{8\rmax},
\end{equation}
then, for sufficiently large $\rmax$ depending only on $a, M$, the inequality
\begin{equation}\label{cond:rermax} 
\re < \rmax
\end{equation}
holds for $m\omega \in (-\ximin m^2, -\tfrac12\ximin m^2]$.
\end{proposition}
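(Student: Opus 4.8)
The plan is to combine the two-sided bound on the potential trapping radius from Proposition~\ref{est:rnaturalbounds} with a simple scaling argument in $\rmax$. First I would note that the hypothesis $m\omega \in (-\ximin m^2, -\tfrac12\ximin m^2]$ forces $m\neq 0$ and $\omega\neq 0$, so that $\re$ is defined and \eqref{basicineq} applies, and moreover dividing by $m^2>0$ yields $\tfrac{\omega}{m}\in(-\ximin, -\tfrac12\ximin]$. Inverting this range (keeping track of the fact that $x\mapsto a/x$ is decreasing on the negative axis), the quantity $\eta = \tfrac{am}{\omega}$ from \eqref{def:eta} then satisfies $-\eta \in (a/\ximin, \, 2a/\ximin]$. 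Substituting the relation \eqref{id:relationthree}, i.e.\ $\ximin = \tfrac{1}{8\rmax}$, this gives $-\eta \le 16a\,\rmax$.

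Next I would feed this into the upper bound $\re \le r_+ + (r_+^2+a^2-\eta)^{1/2}$ supplied by \eqref{basicineq}, obtaining $\re \le r_+ + (r_+^2+a^2+16a\,\rmax)^{1/2}$. Using subadditivity of the square root, $\re \le r_+ + (r_+^2+a^2)^{1/2} + 4\sqrt{a\,\rmax}$. Since the right-hand side grows only like $\rmax^{1/2}$ as $\rmax\to\infty$ while we are comparing against $\rmax$, dividing the desired inequality $\re < \rmax$ through by $\rmax$ shows it is equivalent to $\tfrac{r_+ + (r_+^2+a^2)^{1/2}}{\rmax} + \tfrac{4\sqrt{a}}{\sqrt{\rmax}} < 1$, which holds as soon as $\rmax$ exceeds a threshold determined solely by $r_+$ and $a$, hence only by $M$ and $a$. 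This is exactly the claim.

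I do not expect a genuine obstacle here; the statement is elementary once Proposition~\ref{est:rnaturalbounds} is available. The only points requiring mild care are the direction of the inequalities when passing from the range of $\omega/m$ to the range of $\eta$, and the observation that $r_+=r_+(M,a)$, so that the resulting lower bound on $\rmax$ depends only on the Kerr parameters and is uniform in the frequency pair $(\omega,m)$ — which is precisely what permits $\rmax$ to be fixed once and for all as in Section~\ref{sec:freqranges}.
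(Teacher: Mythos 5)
Your proof is correct and follows essentially the same route as the paper: both convert the frequency hypothesis into the bound $-\eta \le 2a/\ximin = 16a\,\rmax$ and then conclude $\re = O(\rmax^{1/2}) < \rmax$ for $\rmax$ large depending only on $M$, $a$. The only cosmetic difference is that you apply the two-sided bound \eqref{basicineq} directly together with subadditivity of the square root, whereas the paper routes through Corollary \ref{cor:rnaturalasymptotic} (itself a consequence of \eqref{basicineq}), so your version sidesteps that corollary's ``sufficiently small $\ximin$'' caveat.
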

\begin{proof}
We may rewrite the condition $-\ximin m^2 < m\omega < -\frac12\ximin m^2$ as
\begin{equation}
8 a\rmax = \frac{a}{\ximin} < -\frac{am}{\omega} < \frac{2a}{\ximin} =  16a\rmax.
\end{equation}
Additionally, given $(\omega, m)$, with $0 < -\omega m <  \ximin m^2$, Corollary \ref{cor:rnaturalasymptotic} implies
\begin{equation}
r_+ < \re \leq 2 \left(-\tfrac{am}{\omega}\right)^{1/2}
\end{equation}
holds for sufficiently small $\ximin$.
Consequently, there exists a constant $C$ depending only on $a$ and $M$ such that
\begin{equation}
\re \leq C{\rmax}^{1/2}.
\end{equation}
For $\rmax \geq C^2$, the bound \eqref{cond:rermax} follows.
\end{proof}

We can now we fix $c_{\mathcal{G}}$, $c_{\mathcal{G}'}$, $\ximin$, $\rmax$ as described in Section \ref{sec:freqranges} below (\ref{se4}). We recall that $c_0, C_1$ are explicit constants depending only on $a$ and $M$. Following the argument of Section \ref{sec:freqranges} it is possible to choose $c_{\mathcal{G}}$, $c_{\mathcal{G}'}$, $\ximin$, $\rmax$, such that the relations \eqref{def:rmaxconditions}, \eqref{est:cximinBound}, and \eqref{id:relationthree}, as well as
\begin{equation}\label{id:relationfour}
c_{\mathcal{G}} = c_0 c_{\mathcal{G}'}
\end{equation}
are simultaneously satisfied as long as
\begin{equation}
 \frac{1}{\overline{r}^\sharp} < \frac{1}{2C_1} \sqrt{\frac{c_0}{80} \frac{r_+^2+a^2}{a} \frac{1}{\overline{r}^\sharp}} \, .
\end{equation}
 Additionally, if $c_{\mathcal{G}}$, $c_{\mathcal{G}'}$, $\ximin$, $\rmax$ satisfy simultaneously the relations \eqref{def:rmaxconditions}, \eqref{est:cximinBound}, \eqref{id:relationthree}, and \eqref{id:relationfour} for given values of $C_1$ and $c_0$, it follows that $\delta c_{\mathcal{G}}$, $\delta c_{\mathcal{G}'}$, $\delta \ximin$, $\frac{1}{\delta}\rmax$ also satisfy these relations for any $\delta \in (0,1)$, so we may in particular choose $\rmax$ to be arbitrarily large, or $c_{\mathcal{G}}$, $c_{\mathcal{G}'}$, $\ximin$ to be arbitrarily small. We therefore choose $c_{\mathcal{G}}$, $c_{\mathcal{G}'}$, $\ximin$, $\rmax$ such that the bounds \eqref{est:rnaturalasymptotic}, \eqref{est:rmaxbound}, \eqref{est:xiboundsDRSR}, and the bounds on $\rmax$ given in Proposition \ref{prop:rInterpolationInequality} simultaneously hold. Note that with these choices, in particular the bound
 \begin{equation} \label{rsharpnatural}
 \re \leq R^\sharp
 \end{equation}
 holds for $(\omega, m)\in\mathcal{G}\cap\mathcal{G}'$. 
 
 In summary, we have now fixed the parameters in such a way that:
 \begin{itemize}
\item For frequencies in $\mathcal{G}$ we have the coercivity estimate (\ref{est:ELCombined}), valid for $r \leq [R^\sharp, \infty)$.
\item For frequencies in $\mathcal{G}^\prime$ we have the coercivity estimate (\ref{est:ELCombined2}) valid globally in $r$.
 \end{itemize}
 
 We conclude this section by showing that for frequencies in $\mathcal{G}$ our choice of constants allows us to derive a coercive Lagrangian estimate in a neighborhood of $R^\sharp$.

 
\subsection{A coercivity estimate near $R^\sharp$ for superradiant frequencies}\label{sec:LocalLagrangian}
At this point we have fixed our regions $\mathcal{G}, \mathcal{G}'$ (i.e.~$\xi_\ell$ and $\xi_r$), as well as numbers $\rmin, \rmax$ and $c_\mathcal{G}$, $c_{\mathcal{G}^\prime}$ depending only on $a, M$. Propositions \ref{prop:FirstLagrangianGPrime} and \ref{prop:SuperradiantPositivity} will be central to obtaining a commuted spacetime energy estimate. However, in $\mathcal{G}$, we will have to contend with the fact that Proposition \ref{prop:SuperradiantPositivity} does not hold in a neighborhood of $r_+$. In this frequency regime, we will therefore prove an estimate via an appropriate current which is cut off away from the horizon. However, this will introduce higher order error terms in a small (frequency-dependent) interval in $r$. We prove here an estimate which will be used to bound these error terms, and which serves as an analogue of the $\Koppa^h$ current in \cite{DRSR}.

\begin{proposition}\label{prop:DominantLagrangian}
There exists a uniform constant $\delta^\sharp > 0$ depending on $\ximin, \ximax, \rmin, \rmax$, which we have fixed previously, such that, for some $c' > 0$ depending on $\ximin, \ximax, \rmin, \rmax$, and for all $(\omega, m)\in\mathcal{G}$, the estimate
\begin{equation}\label{est:DominantLagrangian}
 - \frac{1}{r}(\omega-\omega_r)^2 + \frac{v^2}{r}\left(\frac{m}{\sin\theta} - a\omega\sin\theta\right)^2 \geq c'\frac{m^2v^2}{r\sin^2\theta}.
\end{equation}
holds for $r\in [R^\sharp - \delta^\sharp, R^\sharp + \delta^\sharp]$.
\end{proposition}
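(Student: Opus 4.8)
The plan is to reduce \eqref{est:DominantLagrangian} to a one-parameter family of pointwise inequalities and to exploit that the only ``wrong-sign'' term $(\omega-\omega_r)^2$ is quantitatively small near $r=R^\sharp$, while the angular term carries a definite lower bound --- the crude one from Proposition~\ref{prop:AngularTermsLowerBound} in the bulk and the sharper ones from Corollary~\ref{cor:rminmaxbounds} near $r_+$ and near infinity. Concretely, multiplying \eqref{est:DominantLagrangian} by $r$ it suffices to establish
\[
v^2\Big(\tfrac{m}{\sin\theta}-a\omega\sin\theta\Big)^2 - c'\,\tfrac{m^2v^2}{\sin^2\theta} \;\geq\; (\omega-\omega_r)^2 \qquad\text{on } r\in[R^\sharp-\delta^\sharp,R^\sharp+\delta^\sharp].
\]
Every term here is even in $(\omega,m)$ and homogeneous of degree $2$, while $R^\sharp$ is homogeneous of degree $0$ and $m=0\notin\mathcal{G}$; so by $(\omega,m)\mapsto(-\omega,-m)$ and homogeneity I may assume $m=1$, and the statement then depends on $\omega$ only through $\omega=x\in(-\xi_\ell,\,\omega_+ +\xi_r)$. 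For each such $x$ one has $R^\sharp(x)\in[\rmin,\rmax]$, a fixed compact subinterval of $(r_+,\infty)$, so I would first impose on $\delta^\sharp$ the conditions $[\rmin-\delta^\sharp,\rmin+\delta^\sharp]\subset(r_+,\widetilde{\underline{r}}^\sharp)$ and $[\rmax-\delta^\sharp,\rmax+\delta^\sharp]\subset(\widetilde{\overline{r}}^\sharp,\infty)$; on the relevant $r$-intervals, which then lie in a fixed compact subset of $(r_+,\infty)$, the quantities $v$, $v^{-1}$ and $|\Rs\omega_r|$ are bounded by constants depending only on $a,M$ and the previously fixed parameters, and Corollary~\ref{cor:rminmaxbounds} applies in the first and third branches of \eqref{def:Rsharp} below. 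A small $c'>0$ (e.g.\ $c'=\tfrac1{16}$) is fixed at the end. The middle branch $R^\sharp=\rsharp$ is immediate: there $\omega-\omega_r$ vanishes at $r=R^\sharp$, hence $|\omega-\omega_r|\leq L\delta^\sharp$ on the interval with $L=\sup|\Rs\omega_r|$, whereas Proposition~\ref{prop:AngularTermsLowerBound} and the lower bound on $v$ give $v^2(\tfrac{m}{\sin\theta}-a\omega\sin\theta)^2-c'\tfrac{m^2v^2}{\sin^2\theta}\geq(\tfrac14-c')\tfrac{m^2v^2}{\sin^2\theta}\geq(\tfrac14-c')v_-^2$, which dominates $L^2(\delta^\sharp)^2$ once $\delta^\sharp$ is small.

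For the branch $R^\sharp=\rmin$ (near the horizon), since $\omega\leq\omega_+ +\xi_r$ and $r$ stays within $\delta^\sharp$ of $\rmin$, one gets $|\omega-\omega_r|\leq\xi_r+\big(\tfrac{a}{r_+^2+a^2}-\tfrac{a}{{\rmin}^2+a^2}\big)+O(\delta^\sharp)$. On the other hand the improved angular bound \eqref{est:rminbound} contributes the extra positive term $\big(\tfrac{2a(r^2-r_+^2)}{(r^2+a^2)(r_+^2+a^2)}\big)^2$, which on the interval is at least $\big(2-o(1)\big)^2\big(\tfrac{a}{r_+^2+a^2}-\tfrac{a}{{\rmin}^2+a^2}\big)^2$ as $\delta^\sharp\to0$. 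The crucial point is that $\xi_r$ was fixed in Section~\ref{sec:SuperradiantEstimates} to obey \eqref{est:ximaxBoundRelative}, that is, precisely $\xi_r<\tfrac{a}{r_+^2+a^2}-\tfrac{a}{{\rmin}^2+a^2}$; hence for $\delta^\sharp$ small the extra angular term alone already exceeds $(\omega-\omega_r)^2$, while the residual $\big(\tfrac18-c'\big)\tfrac{m^2v^2}{\sin^2\theta}\geq0$ (for $c'\leq\tfrac18$) is harmless.

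The branch $R^\sharp=\rmax$ (near infinity) is the delicate one --- precisely the ``bad decay in $r$'' flagged in the introduction, since $v^2\sim r^{-2}$ there. Here $|\omega-\omega_r|\leq\xi_\ell+\tfrac{a}{(\rmax-\delta^\sharp)^2+a^2}$, and using $\xi_\ell=\tfrac1{8\rmax}$ from \eqref{id:relationthree} together with the fact that $\rmax$ may be taken arbitrarily large (so the second summand is $o(\xi_\ell)$) this is $\leq(1+\epsilon)\tfrac1{8\rmax}$. The improved bound \eqref{est:rmaxbound} supplies the positive term $\big(\tfrac{m}{8r}+\tfrac{am}{r^2+a^2}\big)^2\geq\big(\tfrac1{8r}\big)^2\geq(1-\epsilon)^2\big(\tfrac1{8\rmax}\big)^2$, so the deficit is only $\big[(1-\epsilon)^2-(1+\epsilon)^2\big]\big(\tfrac1{8\rmax}\big)^2$, which is absorbed by $\big(\tfrac18-c'\big)\tfrac{v^2}{\sin^2\theta}\gtrsim(\rmax)^{-2}$ once $\epsilon$ (hence $\delta^\sharp/\rmax$) and $c'$ are chosen small enough. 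I expect this branch to be the main obstacle: it is where one must simultaneously match the $r^{-2}$ weights and verify that the tuned relation $\xi_\ell=\tfrac1{8\rmax}$, together with the $\tfrac{m}{8r}$-term in \eqref{est:rmaxbound}, makes the leading coefficients line up in the favourable direction. Collecting the three branches and shrinking $\delta^\sharp$ (and $c'$) to meet the finitely many smallness conditions imposed above then yields \eqref{est:DominantLagrangian}.
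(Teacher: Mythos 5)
Your proof is correct and follows essentially the same route as the paper's: the same three-branch case analysis according to the definition of $R^\sharp$ in \eqref{def:Rsharp}, with Proposition \ref{prop:AngularTermsLowerBound} handling the middle branch (where $\omega-\omega_r$ vanishes at $R^\sharp=\rsharp$ and is $O(\delta^\sharp)$ on the interval) and Corollary \ref{cor:rminmaxbounds} together with the tuned relations \eqref{id:relationthree} and \eqref{est:ximaxBoundRelative} handling the branches $R^\sharp=\rmax$ and $R^\sharp=\rmin$; your normalisation to $m=1$ is a cosmetic use of the degree-$0$ homogeneity of $R^\sharp$ that the paper exploits implicitly through its frequency-regime split. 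The one substantive deviation is in the $R^\sharp=\rmax$ branch: you discard the $\frac{am}{r^2+a^2}$ part of the extra term in \eqref{est:rmaxbound}, keeping only $\bigl(\tfrac{m}{8r}\bigr)^2$, and then must demand $\frac{a}{\rmax^2+a^2}=o(\xi_\ell)$, i.e.\ an additional largeness condition on $\rmax$ beyond the relations actually fixed in Section \ref{sec:choices}. This is admissible in spirit, since every constraint there is a lower bound on $\rmax$ and the paper notes it may be taken arbitrarily large, but it means retro-fitting the parameter choice rather than working with the parameters ``fixed previously''. The cleaner route — and the paper's — is to keep the full term $\bigl(\tfrac{m}{8r}+\tfrac{am}{r^2+a^2}\bigr)^2$ from \eqref{est:rmaxbound}, so that the $\tfrac{1}{8\rmax}+\tfrac{a}{\rmax^2+a^2}$ contributions on the two sides match to leading order, the only deficit is $O(\delta^\sharp)$, and it is absorbed by $\bigl(\tfrac18-c'\bigr)\tfrac{v^2m^2}{\sin^2\theta}$ by shrinking $\delta^\sharp$ alone, without revisiting $\rmax$.
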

\begin{proof}
We first assume
\begin{equation}
\delta^\sharp < \frac12\rmin, \qquad \delta < \min(|\rmin - \widetilde{\underline{r}}^\sharp|, |\rmax - \widetilde{\overline{r}}^\sharp|)
\end{equation}
so that in particular $r\in [R^\sharp - \delta^\sharp, R^\sharp + \delta^\sharp]$ implies
\begin{equation}
\frac12R^\sharp < r < 2R^\sharp.
\end{equation}
and that the estimates \eqref{est:rminbound} and \eqref{est:rmaxbound} hold in the intervals $(\rmin-\delta^\sharp, \rmin + \delta^\sharp)$ and $(\rmax-\delta^\sharp, \rmax+\delta^\sharp)$ respectively.

In order to prove the estimate \eqref{est:DominantLagrangian}, we divide $\mathcal{G}$ into three subregimes, depending on how $R^\sharp$ is defined in \eqref{def:Rsharp}, and prove the bound separately in each regime.

We first define the quantities $\omega_{\rsharp}, \omega_{\rmin}, \omega_{\rmax}$ to be the values of $\omega_r$ at $\rsharp, \rmin, \rmax$. We may rewrite the frequency subregimes in \eqref{def:Rsharp} as $m\omega \in (-\ximin m^2, m\omega_{\rmax}]$, $m\omega \in [m\omega_{\rmax}, m\omega_{\rmin}]$, and $m\omega \in [m\omega_{\rmin}, m_+ + \ximax m^2)$.

For $m\omega \in [m\omega_{\rmax}, m\omega_{\rmin}]$, the definition of $\rsharp$ implies $\omega-\omega_{\rsharp} = 0$, so
\begin{equation}
|\omega-\omega_r| =| (\omega-\omega_r)-(\omega-\omega_{\rsharp})| = \left|\frac{am(r^2-{\rsharp}^2)}{(r^2+a^2)({\rsharp}^2+a^2)}\right|\leq \frac{C\delta^\sharp}{r^3}|m|.
\end{equation}
For sufficiently small $\delta^\sharp > 0$, we may combine this with Proposition \ref{prop:AngularTermsLowerBound} to prove \eqref{est:DominantLagrangian} in this regime, noting that, as we have fixed $\rmin$ we may say $r^{-2}\leq Cv^2$ for some $C$ depending only on $M$ and $a$. 

Next, for $m\omega \in [-\ximin m^2, m\omega_{\rmax}]$ , we write
\begin{equation}
|m(\omega-\omega_{\rmax})| = \left|m\omega-\frac{am^2}{{\rmax}^2+a^2}\right|\leq \left|\ximin + \frac{a}{{\rmax}^2+a^2}\right|m^2.
\end{equation}
In this regime, $R^\sharp = \rmax$, so $|r-\rmax|< \delta$, which implies
\begin{equation}\label{est:deltasharpvariation}
\left|\left(\omega-\omega_r\right)-\left(\omega-\omega_{\rmax}\right)\right| = \left|\frac{am(r^2-{\rmax}^2)}{(r^2+a^2)({\rmax}^2+a^2)}\right|\leq \frac{C\delta^\sharp}{r^3}|m|.
\end{equation}
The triangle inequality along with the relation \eqref{id:relationthree} imply
\begin{equation}
|\omega-\omega_r|\leq \left(\frac{C\delta^\sharp}{{\rmax}^3} + \frac{1}{8\rmax} + \frac{a}{{\rmax}^2+a^2}\right)|m|.
\end{equation}
For $\delta^\sharp$ satisfying $C\delta^\sharp/{\rmax}^3 < c_1/\rmax$, the bound \eqref{est:DominantLagrangian} follows from Corollary \ref{cor:rminmaxbounds}.

Finally, in the regime $m\omega \in (m\omega_{\rmin}, m_+ + \ximax m^2)$, we  write
\begin{equation}
|m(\omega-\omega_{\rmin})| = \left|am\omega-\frac{am^2}{{\rmin}^2+a^2}\right|\leq \left|\frac{a}{r_+^2+a^2} - \frac{a}{{\rmin}^2+a^2} +\ximax\right|m^2.
\end{equation}
In this region, $R^\sharp =\rmin$, so $r\in [\rmin - \delta^\sharp, \rmin + \delta^\sharp]$. This implies the inequality
\begin{equation}
\left|\left(\omega-\omega_r\right)-\left(\omega-\omega_{\rmin}\right)\right| = \left|\frac{am(r^2-{\rmin}^2)}{(r^2+a^2)({\rmin}^2+a^2)}\right|\leq \frac{C\delta^\sharp}{r^3} |m|.
\end{equation}
Therefore,
\begin{equation}
|\omega-\omega_r|\leq \left(\frac{C\delta^\sharp}{{\rmin}^3} + \ximax +\frac{a({\rmin}^2-r_+^2)}{({\rmin}^2+a^2)(r_+^2+a^2)}\right)|m|.
\end{equation}
For sufficiently small $\delta^\sharp$, \eqref{est:DominantLagrangian} follows from \eqref{est:rminbound} along with the assumption \eqref{est:ximaxBoundRelative}.
\end{proof}
We fix $\delta^\sharp$ once and for all according to Proposition \ref{prop:DominantLagrangian}.

\section{Construction of the microlocal commutator} \label{sec:comc}
		In this section we define and analyse the microlocal commutator $W$.

Section \ref{sec:cutoffs} introduces various frequency cutoff functions to be used in the construction of $W$, which itself is carried out in detail in Section \ref{sec:constructionsummary}. Basic properties of the commutator, such as positive signs for certain terms and the structure of lower order terms, are collected in Section \ref{sec:GPrimeBounds}. The regularity properties of the commutator both in $r$ and in frequency $(\omega,m)$ are studied in Sections \ref{sec:fbasicregularity} and \ref{sec:addregest}. The regularity properties established here will play a key role in the convolution estimates later.
\subsection{The frequency cut-offs $\wchi_1$, $\wchi_2$ and the frequency dependent radial cut-off $\chi_\sharp$} \label{sec:cutoffs}
 We recall from (\ref{def:GCalDef})--(\ref{def:GCalPrimeDef}) the definition of $\mathcal{G}$ and $\mathcal{G}^\prime$ and define smooth frequency cutoffs $\wchi_1$, $\wchi_2$ such that
\begin{subequations}\label{def:wchi12}
\begin{equation}\index{chitildei@$\wchi_1, \wchi_2$}
\wchi_i := \begin{cases}
\chi_i\left(\frac{\omega}{m}\right) & m\neq 0, \\
1 & m = 0,
\end{cases}
\end{equation}
where
\begin{equation}  {\chi}_1(y) = \begin{cases}
1 & y \leq -\ximin,\\
0 & -\tfrac34\ximin \leq y \leq \tfrac{a}{r_+^2+a^2} + \tfrac{3}{4}\ximax \\
1 & y \geq \tfrac{a}{r_+^2+a^2}+\ximax,
\end{cases}, \quad {\chi}_2(y) = \begin{cases}
1 & y \leq -\tfrac34\ximin,\\
0 & -\tfrac12\ximin \leq y \leq \tfrac{a}{r_+^2+a^2} + \tfrac{1}{2}\ximax, \\
1 & y \geq \tfrac{a}{r_+^2+a^2}+\tfrac{3}{4}\ximax.
\end{cases}
\end{equation}
\end{subequations}
We recall from Section \ref{sec:choices} that with our choice of $\xi_\ell, \xi_r$, trapped frequencies exist only in $\mathcal{G}^c \subset\subset \mathcal{G}'$, for which $1-\wchi_1 = 1-\wchi_2 = 0$, and superradiant frequencies exist only in $(\mathcal{G}')^c\subset\subset \mathcal{G}$, for which $\wchi_1 = \wchi_2 = 0$. 

Recalling the function $R^\sharp$ defined in \eqref{def:Rsharp} and the $\delta^\sharp$ of Proposition \ref{prop:DominantLagrangian}, we may define a smooth function $R^{\musDoubleSharp}$, which is homogeneous in $(\omega, m)$ and which satisfies
\begin{equation}\index{RDoubleSharp@$R^{\musDoubleSharp}$}\label{def:RDoubleSharp}
R^\sharp \leq R^{\musDoubleSharp} < R^\sharp+\frac14\delta^\sharp.
\end{equation}
We remark that Proposition \ref{prop:SuperradiantPositivity} continues to hold if we replace $R^\sharp$ with $R^{\musDoubleSharp}$.
Consequently, at each frequency $(\omega, m) \in\mathcal{G}$ we define a smooth cutoff function $\chisharp$ which is homogeneous of degree 0 in $(\omega, m)$ satisfying
\begin{equation}\label{def:chisharp}\index{chisharp@$\chisharp$}
\chisharp = \begin{cases}
0 & r\leq R^{\musDoubleSharp}+\frac14\delta^\sharp\\
1 & r \geq R^{\musDoubleSharp} +  \frac{1}{2}\delta^\sharp
\end{cases}
\end{equation}
and $0 \leq \chisharp' \leq C$ for a constant $C$ independent of frequency. Note that since $\chisharp = 1 \text{ for }r > R^\sharp +\frac34\delta^\sharp$ and $\qquad\chisharp = 0 \text{ for } r < R^\sharp+\frac14\delta^\sharp$,
in particular the bound \eqref{est:DominantLagrangian} holds in the support of $\chisharp'$ (and hence good coercivity estimates will be available in the space region where $\chisharp^\prime$ is non-trivial and frequencies are in $\mathcal{G}$).

\subsection{Definition of the commutator vector field} \label{sec:constructionsummary}

Establishing global energy bounds relies on the construction of a global microlocal commutator of the form
\begin{equation}\label{def:CommutatorAbstract}
\Wsc = \wchi_1\Wsv + (1-\wchi_1)\chisharp\Wsl \index{W@$\Wsc$}.
\end{equation}
We define $\Wsv, \Wsl$ such that, for some functions $\f, \h$ defined in $\mathcal{G}'$ and functions $f_\sharp, h_\sharp$ defined in $\mathcal{G}$,
\begin{align}
\Wsv &= v^{-1}\Rs - i\omega\f = v^{-1}(\Rs - i(\omega-\omega_r)\h), \label{Wnaturalrel}\index{Wnat@$\Wsv$} \\
\Wsl &= v^{-1}\Rs - i\omega f_\sharp = v^{-1}(\Rs - i(\omega-\omega_r)h_\sharp). \label{Wsharprel}\index{Wsharp@$\Wsl$}
\end{align}
We will fix $\f$, $\fsharp$ (via $h_\sharp$) explicitly in \eqref{def:f2} and \eqref{def:hsharp} respectively.
For ease of calculation, we will often use $\f$ when working with $\Wsv$ and $\hsharp$ when working with $W^\sharp$. 

In Section \ref{sec:overviewf} we first provide an overview on how we choose $\f$ and $f_\sharp$. After deriving some general algebraic identities for the commutator in Section \ref{sec:WECommuted} we finally define $\f$ and $f_\sharp$ in Sections \ref{sec:CommDef} and \ref{sec:IntersectionCommutator}.

It will later be useful to look at $\Wsc$ acting on functions in physical space. Given a function $u(t, r, \theta, \phi)$ with Fourier transform $\ph(\omega, r, \theta, m)$, we define $(\Wsc u)(t, r, \theta, \phi)$ to be the function satisfying the relation
\begin{equation}
\mathcal{F}(\Wsc u) = \Wsc\ph
\end{equation}


\subsubsection{Overview} \label{sec:overviewf}
We will construct $\f, f_\sharp$ such that the commutator satisfies the following conditions:
\begin{enumerate}
\item\label{def:WscRegion1} For $m\omega < -\ximin m^2$ or $m\omega > m\omega_+ + \ximax m^2$, $\wchi_1 = 1$ and hence $W=\Wsv$. The $f_\natural$ defining $\Wsv$ is chosen such that $\Wsv$ commutes nicely through the microlocal wave operator for all $r$. In particular, general second derivatives in the commutator $[\Pam, \Wsv]$ vanish, with the exception of a term which will produce an additional positive definite energy quantity. This regime includes all trapped frequencies.
\item\label{def:WscRegion2} For $-\ximin m^2 < m\omega <-\frac34\ximin m^2 $ or $\frac34\ximax \leq m\omega \leq m\omega_+ + \ximax m^2$, $\Wsl = \Wsv$ and $W= \wchi_1\Wsv + (1-\wchi_1)\chi_\sharp\Wsv$. In this regime, the commutator $[\Pam, \Wsc]$ contains error terms containing second derivatives, which arise from differentiating $\wchi_1 + \chisharp(1-\wchi_1)$ in the commutation. By construction, however, these appear only in the support of $\chisharp'$, where we have coercivity by a Lagrangian estimate (see Section \ref{sec:LocalLagrangian}).
\item\label{def:WscRegion3} For $-\frac34\ximin m^2 < m\omega <-\frac12\ximin m^2 $ or $m\omega_+ +\frac12\ximax \leq m\omega \leq m\omega_+ + \frac34\ximax m^2$, $\wchi_1 = 0$ and hence $W=\chi_\sharp W^\sharp$. The $h_\sharp$ defining the operator $\Wsl$ is chosen such that $\Wsl$ interpolates between $\Wsv$ and a local operator $\Wp$ defined in \eqref{def:Wplus} \emph{in such a way that general second order derivatives are avoided in the commutator outside the support of $\chi_\sharp^\prime$}. 
\item\label{def:WscRegion4} For $-\frac12\ximin m^2 \leq m\omega \leq m\omega_+ +\frac12\ximax$, $\wchi_1 = 0$ and $W = \chi_\sharp\Wp$. This regime includes all superradiant frequencies, and may be handled directly using Proposition \ref{prop:SuperradiantPositivity}. Again errors only appear where $\chi_\sharp^\prime$ is non-trivial and a coercive Lagrangian estimate is available for general second derivatives.
\end{enumerate}
	\subsubsection{The structure of the commuted wave equation}\label{sec:WECommuted}
		For a generic function $\ph$, we write the commutation identity
\begin{equation}
\Pam\Wsc\ph = v^2\Wsc(v^{-2}\Pam\ph) + v^2\left([v^{-2}\Pam, \Wsc]\ph\right).
\end{equation}
If $[\Wsc, \Ltheta]= 0$, then the presence of $v^{-2}$ ensures that no problematic derivatives in $\theta$ appear in the commutator. For a function $\ph$ satisfying
\begin{equation}\label{eq:WaveFreqGeneric}
\Pam\ph = H,
\end{equation}
and an operator $\Ws$ of the form
\begin{equation}
\Ws = v^{-1}\left(\Rs - i\omega f\right),
\end{equation}
where $f$ is a function of $(\omega, m, \rs)$ only, we have the commutation identity
\begin{equation}\label{def:CommIdentityBasic}
\Pam \Ws\ph = -2i\omega v f' W^\star\ph + \Ph^\star\ph + \Pl^\star\ph + \PH^\star,
\end{equation}
where $f' := \Rs f$, and
\begin{subequations}
\begin{align}
\Ph^\star &= v\left(\omega^2 f^2 - \tfrac{(\omega-\omega_r)^2}{v^2}\right)', \\
\Pl^\star &= -i\omega v\Rs\left(v^{-2}\Rs(v f)\right) - v\left(\tfrac{v'}{v^2}\right)'\Ws + v\left(\tfrac{V_1}{v^2}\right)', \\
\PH^\star &= v^2 W^\star\left(\tfrac{H}{v^2}\right).
\end{align}
\end{subequations}
Multiplying $\Ws$ by a cutoff function $\chi = \chi(r)$ gives the cut-off commutator identity
\begin{equation}\label{def:CommIdentityBasic2}
\Pam (\chi W^\star\ph) = -2i\omega v \chi f' W^\star\ph + \chi \Ph^\star\ph + \chi \Pl^\star\ph + \chi \PH^\star + \PX\Ws\ph,
\end{equation}
where 
\begin{equation}
\PX = \chi'' + 2\chi' \Rs.
\end{equation}
Commuting $\Wsc$ as defined in \eqref{def:CommutatorAbstract} through the wave equation using \eqref{def:CommIdentityBasic}, \eqref{def:CommIdentityBasic2} gives the (global) expansion
\begin{align}\label{def:CommutedBoxSimp}
\Pam(\Wsc\ph) & = -2i\omega v\wchi_1 \f' \Wsv\ph - 2i\omega v (1-\wchi_1)\chisharp f_\sharp'\Wsl\ph + \wchi_1\Ph^\natural\ph + (1-\chi_1)\chisharp\Ph^\sharp\ph + \\
&\qquad+ \wchi_1\Pl^\natural\ph + (1-\wchi_1)\chisharp\Pl^\sharp\ph +\PH + (1-\wchi_1)\mathcal{P}_{\chisharp}\Wsl\ph\nonumber \, , 
\end{align}
where
\begin{subequations}\label{def:CommutedTerms}
\begin{align}
\index{PhNatural@$\Ph^\natural$}\Ph^\natural & = v(\omega^2\f^2-\tfrac{(\omega-\omega_r)^2}{v^2})',\label{def:Phnatural}\\
\index{PlNatural@$\Pl^\natural$}\Pl^\natural &= -i\omega \f'' - v(\tfrac{v'}{v^2})'(\Wsv + i\omega \f) + v(\tfrac{V_1}{v^2})',\label{def:Pl}\\
\index{PhSharp@$\Ph^\sharp$}\Ph^\sharp &= v\left(\omega^2\fsharp^2 - \tfrac{(\omega-\omega_r)^2}{v^2}\right)', \\
\index{PlSharp@$\Pl^\sharp$}\Pl^\sharp &=-i\omega f_\sharp'' - v\left(\tfrac{v'}{v^2}\right)(\Wsl + i\omega f_\sharp)+ v(\tfrac{V_1}{v^2})', \label{def:Plsharp} \\
\index{PchiSharp@$\mathcal{P}_{\chisharp}$}\mathcal{P}_{\chisharp} &= 2\chisharp'\Rs  + \chisharp'', \label{Pchisubsharp}\\
\index{PH@$\PH$}\PH & = v^2\Wsc\left(\tfrac{H}{v^2}\right). \label{PHdef}
\end{align}
\end{subequations}
In the following two subsections we will define $\f$ and $\fsharp$ respectively, so that their derived quantities appearing in \eqref{def:CommutedBoxSimp} have certain desired properties, which will be summarised in Section \ref{sec:GPrimeBounds}. 

We note already that it will follow from  \eqref{def:f2} and \eqref{def:hsharp} that $\f =\fsharp$ when $0< \wchi_1 < 1$, so that we can write
\begin{equation} \label{freli}
-2i\omega v\wchi_1 \f' \Wsv\ph - 2i\omega v (1-\wchi_1)\chisharp f_\sharp'\Wsl\ph = -2i\omega vf'\Wsc\ph.
\end{equation}
for the global function
\begin{equation}\label{def:fcombined}\index{f@$f$} 
f = \wchi_1\f + (1-\wchi_1)\fsharp \, ,
\end{equation}
which simplifies the expression in the first line of (\ref{def:CommutedBoxSimp}).

\subsubsection{Defining $\f$ (or equivalently $h_\natural$)}\label{sec:CommDef}
For $(\omega, m)$ such that both $\omega \neq 0$ and $m\omega \notin (0, m\omega_+]$ (i.e.~including the set $\mathcal{G}^\prime \setminus \{(0,0)\}$), we define
\begin{align}\label{def:f2}
\f\index{feta@$\f$} = \f(\omega,m,r) &:= \begin{cases}
-\left(\frac{(\omega-\omega_r)^2}{\omega^2 v^2} - \inf_{r\in (r_+, \infty)} \frac{(\omega-\omega_r)^2}{\omega^2 v^2}\right)^{1/2} & r< \re,\\
\left(\frac{(\omega-\omega_r)^2}{\omega^2 v^2} - \inf_{r\in (r_+, \infty)} \frac{(\omega-\omega_r)^2}{\omega^2 v^2}\right)^{1/2} & r\geq \re,
\end{cases} \\
\h\index{heta@$\h$} = \h(\omega,m,r) &:= \begin{cases}
-\left(1 - \frac{v^2}{(\omega-\omega_r)^2}\inf_{r\in (r_+, \infty)} \frac{(\omega-\omega_r)^2}{v^2}\right)^{1/2} & r < \re,\\
\left(1 - \frac{v^2}{(\omega-\omega_r)^2}\inf_{r\in (r_+, \infty)} \frac{(\omega-\omega_r)^2}{v^2}\right)^{1/2} & r \geq \re,
\end{cases}
\end{align}
where $\re$ was defined in (\ref{def:rnatural}). We define a frequency-space version of the field $\oW$ defined in (\ref{W0def}), overloading notation slightly, via the formula
\begin{equation}\label{def:oWfreq}
(\oW \ph)_m^\omega = v^{-1} (\Rs\ph_m^\omega  -i(\omega-\omega_r)\ph_m^\omega).
\end{equation}
One easily sees that $(\oW\ph)_m^\omega = (\mathcal{F}(\oW u))_m^\omega$, and that $(\Wsv\ph)_0^\omega = (\oW\ph)_0^\omega$. With the above definition one has 
\begin{equation}\label{def:fsquaredabstract}
\f^2 = \frac{(\omega-\omega_r)^2}{\omega^2v^2}-\inf_{r\in (r_+, \infty)}\left(\frac{(\omega-\omega_r)^2}{\omega^2v^2}\right) \, ,
\end{equation}
which is a smooth function in $r$.
Since the last term is a (frequency dependent) constant, this clearly implies $\Ph^\natural = 0$. Next note $\f^2$ goes to $\infty$ as $r\to r_+$ and $r\to\infty$. By Proposition \ref{est:rnaturalbounds}, for fixed frequencies the function $\f^2$ assumes a unique global minimum at $\re = \re(\omega,m) \in (r_+,\infty)$, which by definition is zero. It follows that $\f$ as defined above is smooth in $r \in (r_+,\infty)$, monotonically increasing ($\f^\prime >0$) and vanishing linearly at $\re$. 
%





\subsubsection{Defining $f_\sharp$ (or equivalently $h_\sharp$)}\label{sec:IntersectionCommutator}
We first define the function $h_+=1$
and the corresponding (physical space) operator
\begin{equation}\label{def:Wplus}
W_+ = v^{-1}\Rs - i\frac{(\omega-\omega_r)}{v}. \index{Wplus@$\Wp$} 
\end{equation}
Note that $W_+$ is not regular at the horizon.

We then define the function $h_\sharp=h_\sharp(r, \omega,m)$ for frequencies in $\mathcal{G}$ by:
\begin{align} \label{def:hsharp}\index{hsharp@$\hsharp$}
h_\sharp = 
\begin{cases}
h_+=1 & \textrm{for $(\omega,m) \in \mathcal{G}\cap (\mathcal{G}^\prime)^c$}, \\
\left(1 - \widetilde{\chi}_2 \frac{v^2}{(\omega-\omega_r)^2}\inf_{r\in (r_+, \infty)} \frac{(\omega-\omega_r)^2}{v^2}\right)^{1/2}& \textrm{for $(\omega,m) \in \mathcal{G} \cap \mathcal{G}^\prime$ where $\widetilde{\chi}_2(\frac{\omega}{m}) < 1$}, \\
h_\natural & \textrm{for $(\omega,m) \in \mathcal{G}\cap \mathcal{G}^\prime$ where $\widetilde{\chi}_2(\frac{\omega}{m}) =1$ }.
\end{cases}
\end{align}
Recall also that $\widetilde{\chi}_2(\frac{\omega}{m})=0$ in $\mathcal{G}\cap (\mathcal{G}^\prime)^c$.
By defining $h_\sharp$ in this way we have that $\left(\tfrac{\omega-\omega_r}{v}\right)^2 - \left(\tfrac{(\omega-\omega_r)h_\sharp}{v}\right)^2$ is constant in $r$ for fixed frequencies $(\omega,m)$, which ensures that $P^\sharp_h \hat{u}=0$ for the commutator and hence the avoidance of general second derivatives.

The function $f_\sharp$ is now also defined from $h_\sharp$ by the relation (\ref{Wsharprel}), at least if $\omega \neq 0$. (For $\omega = 0$, we must define $\Wsl$ using $\hsharp$.)
Note that the only way $W^\sharp$ enters the definition of $W$ in (\ref{def:CommutatorAbstract}) is with the cut-off $\chi_\sharp$ and that $\chi_\sharp W^\sharp$ vanishes in a neighbourhood $[r_+, \rmin)$ of the horizon.




	\subsection{Basic properties of the commutator}\label{sec:GPrimeBounds}

\subsubsection{Commutation properties of $W^\natural$}
We summarise the main properties of $\f$ and the corresponding vector field $W^\natural$ in the following proposition:
\begin{proposition}[Properties of the commutator $\Wsv$]\label{thm:WProperties}
Given spacetime parameters $a, M$, with $0 \leq a < M$, and frequency parameters $(\omega, m)\in \mathbb{R}\times\mathbb{Z}$, with
\begin{equation}\label{def:trappedext}
(\omega, m) \in \mathcal{G}',
\end{equation}
 the functions $\f(\omega, m, \rs)$ and $\h(\omega, m, \rs)$ defined in (\ref{def:f2}), \eqref{Wnaturalrel}, are homogeneous of degree 0 in $(\omega, m)$ and satisfy the following properties:
 \begin{enumerate}[(1)]
 \item\label{fNaturalCondition1} {\bf Lower bounds on commutator term}\label{Thm:flowerbound} Recalling
$
\Wsv = v^{-1}\Rs - i\omega\f,
$
we have that the bound
\begin{equation}\label{est:fprimebound}
\omega^2v \f' \geq \frac{c_0}{r}\omega(\omega-\omega_r)
\end{equation}
 holds for all $(\omega, m) \in \mathcal{G}'$, where
\[
c_0 = \tfrac{r_+^2}{r_+^2+a^2}\left(\tfrac{r_+-M}{r_+}\right)^3.
\]
\item\label{fNaturalCondition2} {\bf General second derivatives vanish in the commutator} The identity
\begin{equation}
\Ph^\natural \ph = 0
\end{equation}
holds for all $r$.
 \item\label{fNaturalCondition3} {\bf Bounds on lower order terms}\label{Thm:Plupperbound} Under the separation \eqref{def:CommutedBoxSimp}, one has the pointwise bound
 \begin{equation}\label{est:Plupperbound}
|\Pl^\natural\ph |\leq C\big( r^{-3}|\Wsv\ph| +  v^{3}(|\omega\ph| + |\ph|)\big) \, .
 \end{equation}
 \end{enumerate}
\end{proposition}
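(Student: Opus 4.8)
All three statements follow directly from the explicit formula \eqref{def:fsquaredabstract}, together with the analysis of $\re$ in Proposition \ref{est:rnaturalbounds} and the bounds on $P_\eta$ derived there. The plan is to treat the three items in order.

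For (2), this is immediate: by \eqref{def:fsquaredabstract} the quantity $\omega^2\f^2 - \tfrac{(\omega-\omega_r)^2}{v^2} = -\omega^2\inf_{r}\left(\tfrac{(\omega-\omega_r)^2}{\omega^2 v^2}\right)$ is a constant in $r$ (for fixed frequency), so its $\Rs$-derivative vanishes, i.e. $\Ph^\natural = v(\omega^2\f^2 - \tfrac{(\omega-\omega_r)^2}{v^2})' = 0$. For (1), I would compute $\omega^2 v \f'$ directly. Differentiating \eqref{def:fsquaredabstract} gives $2\omega^2\f\f' = \left(\tfrac{(\omega-\omega_r)^2}{v^2}\right)'$, hence $\omega^2\f' = \tfrac{1}{2\f}\cdot\tfrac{1}{\omega}\left(\tfrac{\Gamma^2}{\Delta}\right)'\cdot$(suitable normalization), but more cleanly one can use $\Wsv = v^{-1}(\Rs - i(\omega-\omega_r)\h)$ and the identity $\h^2 = 1 - \tfrac{v^2}{(\omega-\omega_r)^2}\inf(\cdots)$, so that $\omega^2 v\f' = \omega v \left(v\cdot\tfrac{\omega-\omega_r}{v}\cdot\h\right)'$ — the point is to reduce everything to the sign and size of $\left(\tfrac{\Gamma^2}{\Delta}\right)'$, which by \eqref{nifo} equals $\tfrac{2\Gamma P_\eta(r)}{(r^2+a^2)\Delta}$. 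Since we are in $\mathcal{G}'$ we have $\Gamma>0$ and $P_\eta$ has a single root at $\re$, and $\f$ vanishes linearly there. The lower bound \eqref{est:fprimebound} then amounts to showing $\tfrac{1}{2\f}\left(\tfrac{(\omega-\omega_r)^2}{\omega^2 v^2}\right)' \geq \tfrac{c_0}{r}\tfrac{\omega-\omega_r}{\omega}$, which after clearing $\f = \left(\tfrac{(\omega-\omega_r)^2}{\omega^2 v^2}-\inf\right)^{1/2}$ and using $\left(\tfrac{(\omega-\omega_r)^2}{\omega^2 v^2}\right)' = \tfrac{2\Gamma P_\eta}{\omega^2(r^2+a^2)\Delta}$ reduces to a polynomial inequality in $r$; the explicit value $c_0 = \tfrac{r_+^2}{r_+^2+a^2}\left(\tfrac{r_+-M}{r_+}\right)^3$ should come out of comparing $P_\eta(r)$ and $\Delta$ near the worst case (the horizon), using $2r\Delta - (r-M)\Gamma = P_\eta$ and the factorization $\Gamma = \Gamma_+ + (r^2 - r_+^2)$ together with $\Delta = (r-r_+)(r - r_-)$ where $r_- = M - \sqrt{M^2-a^2}$.

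For (3), I would expand $\Pl^\natural$ as given in \eqref{def:Pl}: it has three pieces, $-i\omega\f''$, $-v(\tfrac{v'}{v^2})'(\Wsv + i\omega\f)$, and $v(\tfrac{V_1}{v^2})'$. The middle term is $-v(\tfrac{v'}{v^2})'\Wsv\ph$ up to a multiple of $i\omega\f\ph$; one checks $v(\tfrac{v'}{v^2})' = O(r^{-3})$ from the asymptotics $v \sim 1$, $v' \sim O(r^{-2})$ near infinity and smoothness near the horizon, so this contributes $C r^{-3}|\Wsv\ph| + C v^3|\omega\ph|$ (absorbing $\f$, which is bounded on $\mathcal{G}'$ — actually $\f$ grows, but $v(\tfrac{v'}{v^2})'\f$ is controlled since $v^{-1}\to 1$). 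The term $v(\tfrac{V_1}{v^2})'$ is a fixed frequency-independent function; since $V_1 = \tfrac{\Delta}{(r^2+a^2)^2}\cdot(\text{bounded})$ and $v^2 = \tfrac{\Delta}{(r^2+a^2)^2}$, one has $\tfrac{V_1}{v^2}$ smooth and $v(\tfrac{V_1}{v^2})' = O(v^3)$, giving the $v^3|\ph|$ term. The genuinely delicate piece is $-i\omega\f''$: one must show $|\omega\f''| \lesssim r^{-3}|\Wsv\ph|/|\ph| \cdot|\ph|$-type bound, i.e. that $\omega\f''$ can be written as $O(r^{-3})$ times something controlled by $\Wsv\ph$ plus $O(v^3)(|\omega\ph|+|\ph|)$. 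This is the main obstacle: $\f''$ is a second derivative of an implicitly-defined (square-root) function, and near $\re$ where $\f$ vanishes linearly one must check $\f''$ stays bounded (and in fact decays in $r$ with the claimed weights). I would differentiate $2\omega^2\f\f' = \left(\tfrac{(\omega-\omega_r)^2}{v^2}\right)'$ once more to get $2\omega^2((\f')^2 + \f\f'') = \left(\tfrac{(\omega-\omega_r)^2}{v^2}\right)''$ and solve for $\f''$, then use homogeneity in $(\omega,m)$ and the uniform (in $\mathcal{G}'$) lower bound $\f' > 0$ away from $\re$, combined with Taylor expansion at $\re$, to bound $\f''$ by $C r^{-3}$ uniformly; the factor $\Wsv\ph$ enters because one rewrites $i\omega\f\ph$ in terms of $\Rs\ph - v\Wsv\ph$. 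The key quantitative input throughout is the sharp control of $\re$ and $P_\eta$ from Proposition \ref{est:rnaturalbounds}, which localizes the zero of $\f$ and guarantees the frequency-uniform weights.
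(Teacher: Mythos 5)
Your item (2) is correct and is exactly the paper's argument (constancy in $r$ of $\omega^2\f^2-\tfrac{(\omega-\omega_r)^2}{v^2}$). The problems are in items (3) and, to a lesser extent, (1).

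For item (3) your plan of estimating the three pieces of \eqref{def:Pl} separately — in particular bounding $-i\omega\f''$ on its own by $O(r^{-3})|\omega\ph|$ — fails, because the asserted uniform bound $|\f''|\lesssim r^{-3}$ is false. From \eqref{def:fetaTrue} one has $\f=\Delta^{-1/2}(r-\re)\ve$, so $\f = r+M+O(r^{-1})$ as $r\to\infty$ and hence $\f'' = \tfrac{2M}{r^2}+O(r^{-3})$, while $v\big(\tfrac{v'}{v^2}\big)'\f = -\tfrac{2M}{r^2}+O(r^{-3})$; each term separately decays only like $r^{-2}$ (and near the horizon each diverges like $\Delta^{-1/2}$, since $\f\sim\Delta^{-1/2}$ there). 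Only their \emph{sum} has the claimed $O(v^3)$ size, and this cancellation is the whole content of the paper's proof: it rewrites $-i\omega\big(v(\tfrac{v'}{v^2})'\f+\f''\big)=-i\omega\, v\Rs\big(v^{-2}\Rs(v\f)\big)$ as in \eqref{est:PLfnatTerms}, notes $v^{-2}\Rs=(r^2+a^2)\partial_r$, and invokes the regularity bound $|\partial_r^k(v\f)|\lesssim r^{-2-k}$ of Proposition \ref{prop:vfDerUpperBound} (whose proof in turn needs the regularity of $\re$ and $\ve$, not just Proposition \ref{est:rnaturalbounds}). Your worry about $\f''$ near $\re$ is misplaced — $\f$ is smooth there because the quantity under the square root vanishes quadratically — whereas the genuine issues at $r=r_+$ and $r=\infty$ cannot be handled term by term. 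Without grouping the two terms into $v\Rs(v^{-2}\Rs(v\f))$ (or an equivalent device exploiting decay of $\partial_r^k(v\f)$ rather than of $\f''$), the estimate \eqref{est:Plupperbound} is not reachable by your route.

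For item (1) your reduction is in the right direction — $\omega^2 v\f'$ is indeed governed by $\big(\tfrac{\Gamma^2}{\Delta}\big)'=\tfrac{2\Gamma P_\eta}{(r^2+a^2)\Delta}$ divided by $2\f$ — but the step you call "a polynomial inequality in $r$" is precisely where all the work lies, and you do not supply the mechanism that resolves the $0/0$ at $r=\re$ (both $P_\eta$ and $\f$ vanish there) uniformly in $\eta$. The paper does this by the explicit cancellations leading to $\f=\Delta^{-1/2}(r-\re)\ve$ together with the two-sided bound \eqref{est:vebound} on $\ve$, and then obtains \eqref{est:fprimebound} by a fourth-order Taylor expansion of $F_\natural=\Delta^{3/2}\ve\,\partial_r\f$ at $r=r_+$ compared term by term against the expansion of $r(r+\re)(r^2-r_+^2+\Gamma_+)\geq r\ve\Gamma$; the constant $c_0$ comes out of that comparison, not from a local analysis at the horizon alone. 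As written, your sketch asserts the conclusion of this computation rather than proving it, so item (1) is incomplete even though the strategy is compatible with the paper's.
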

We prove only the first two statements for now and postpone the proof of Statement \eqref{fNaturalCondition3} to Section \ref{sec:fnatUpperBound}, as it requires the use of some technical results which in turn use the bound \eqref{est:fprimebound}. 
\begin{proof}[Proof of Statements \eqref{fNaturalCondition1} and \eqref{fNaturalCondition2}]
Condition \eqref{fNaturalCondition2} was already observed after \eqref{def:fsquaredabstract}. For the proof of Condition (\ref{fNaturalCondition1}) we recall the notation $\Gamma$ and the polynomial $P_\eta$ from (\ref{def:gamplus}) and (\ref{def:Peta}) respectively. From $P_\eta(\re) = 0$, we have writing $\Gz:=\Gamma(\re)$ and $\Dz:=\Delta (\re)$:
\begin{subequations}\label{id:difference}
\begin{align}
2\re\Dz &= (\re - M)\Gz \, , \label{MinimumId}\\
\Gamma - \Gz &=  (r-\re)^2 + 2\re(r-\re), \label{GamDif}\\
\Delta - \Dz &=  (r-\re)^2 + 2(\re - M)(r-\re)\label{DelDif}.
\end{align}
\end{subequations}
An expansion of \eqref{def:fsquaredabstract} gives
\begin{equation}\label{id:fsquared}
\f^2 = \frac{\Gamma^2}{\Delta} - \frac{\Gz^2}{\Dz}= \frac{(\Gamma - \Gz)(\Gamma\Dz + \Gz\Delta) - (\Delta-\Dz)\Gamma\Gz}{\Delta\Dz},
\end{equation}
and subsequently applying \eqref{GamDif}, \eqref{DelDif} gives
\begin{equation}
\f^2 = \frac{(r-\re)^2(\Gamma\Dz - \Gamma\Gz + \Delta\Gz)}{\Delta\Dz} + \frac{(r-\re)(2\re(\Gamma\Dz + \Gz\Delta) - 2(\re-M)\Gamma\Gz)}{\Delta\Dz}.
\end{equation}
Additional careful cancellations using \eqref{MinimumId} give
\begin{align}
2\re(\Gamma\Dz + \Gz\Delta) - 2(\re-M)\Gamma\Gz &= 2\re\Gz\Delta - (\re - M)\Gamma\Gz, \\
&= \Gz(P_\eta(\re) + 2\re(\Delta-\Dz) - (\re - M)(\Gamma-\Gz)).\nonumber
\end{align}
The identities \eqref{GamDif}, \eqref{DelDif} combined with $P_\eta(\re) = 0$ give
\begin{equation}
\f^2 = \frac{(r-\re)^2}{\Delta\Dz}\big(- \Gz^2 + 2\Dz\Gz + (\re+ r)^2\Dz\big).
\end{equation}
Then, \eqref{MinimumId} gives
\begin{equation}
- \Gz^2 + 2\Dz\Gz + (\re+ r)^2\Dz = \frac{(-4\re^2 + 4\re(\re-M))\Dz^2+ (\re-M)^2(\re+r)^2\Dz}{(\re - M)^2}.
\end{equation}
Consequently, we may write
\begin{equation}  \label{def:fetaTrue}
\f = \frac{r-\re}{(\re - M)\Delta^{1/2}}((\re-M)^2(\re+r)^2-4M\re\Dz)^{1/2} = \Delta^{-1/2}(r-\re)\ve \, , 
\end{equation}
where we have defined
\begin{equation}\label{def:veta}
\ve(r, \eta) = \left((r+\re)^2 - 4M\re\frac{\Dz}{(\re-M)^2}\right)^{1/2},
\end{equation}
where $\re, \Dz$ are to be taken as functions of $\eta$.
Since $\Dz \leq (\re - M)^2$, we may again use \eqref{MinimumId} to write
\begin{equation}
\ve^2 \geq (r+\re)^2 - 4M\re = (r-M)^2 + (\re - M)^2 + 2(r-M)(\re + M).
\end{equation}
Since $\frac{r-M}{r} \geq \frac{r_+-M}{r_+}$ for $r\in(r_+, \infty)$, it follows that
\begin{equation}\label{est:vebound}
\left(\tfrac{r_+-M}{r_+}\right)(r+\re) \leq \ve \leq r+\re.
\end{equation}
Additionally,  \eqref{basicineq} implies
\begin{equation} \label{est:vebound2}
\ve(r_+) \geq \tfrac{r_+-M}{r_+}\Gamma_+^{1/2}, \qquad (\re - r_+)\ve(r_+) \geq \tfrac{r_+-M}{r_+}(\re^2 - r_+^2)\geq \tfrac{r_+-M}{r_+} \Gamma_+.
\end{equation}
We now bound $v\Rs\f$, via bounds on $\partial_r \f$. We may expand
\begin{equation}\label{id:partialfe}
\partial_r\f = \frac{-(r-M)(r-\re)\ve^2 + \Delta\ve^2 + (r^2-\re^2)\Delta}{\Delta^{3/2}\ve}.
\end{equation}
In order to bound this below, we take the Taylor expansion at $r=r_+$. Writing
\begin{equation}
F_\natural =\Delta^{3/2}\ve(\partial_r\f) =  -(r-M)(r-\re)\ve^2 + \Delta(\ve^2 + r^2-\re^2),
\end{equation}
we see
\begin{subequations}
\begin{align}
\partial_r F_\natural &= (\re - M)\ve^2 + (4r+2\re)\Delta, \\
\partial_r^2 F_\natural &= 2(\re - M)(r+\re) + 4\Delta + (8r+4\re)(r-M), \\
\partial_r^3 F_\natural &= 2(\re - M)+ 16(r-M) + 8r+4\re, \\
\partial_r^4 F_\natural &= 24
\end{align}
\end{subequations}
and consequently
\begin{subequations}
\begin{align}
F_\natural(r_+) &= (r_+-M)(\re - r_+)(\ve^2(r_+)),\\
\partial_r F_\natural(r_+) &= (\re - M)\ve^2(r_+)\\
\partial_r^2 F_\natural(r_+) &=2(\re-M)(\re + r_+)+ (8r_+ + 4\re)(r_+-M), \\
\partial_r^3 F_\natural(r_+) &= 2(\re-M) + 16(r_+-M)+8r_++4\re, \\
\partial_r^4 F_\natural(r_+) &=24 \, .
\end{align}
\end{subequations}
In order to bound this below, we note that \eqref{est:vebound} implies
\begin{equation}
r\ve\Gamma \leq r(r+\re)(r^2-r_+^2+\Gp).
\end{equation}
Expanding in a polynomial in $(r-r_+)$ gives
\begin{align}
r(r+\re)(r^2-r_+^2+\Gp) &= (r-r_+)^4 + (\re + 4r_+)(r-r_+)^3 + (\Gp + 5r_+^2 + 3\re r_+)(r-r_+)^2 + \\
&\quad+ (2r_+^2\re + 2r_+^3 + \re\Gp + 2 r_+\Gp)(r-r_+) + \Gp r_+(\re+r_+).\nonumber
\end{align}
A term-by-term comparison along with the bounds \eqref{est:vebound},  \eqref{est:vebound2} and \eqref{basicineq}, and the inequality $r-M > \frac{r_+-M}{r_+}r$ implies
\begin{equation}
\partial_r\f \geq \left(\frac{r_+-M}{r_+}\right)^3\frac{r\Gamma}{\Delta^{3/2}} \, .
\end{equation}
The bound \eqref{est:fprimebound} directly follows.

\end{proof}

\subsubsection{Commutation properties of $W^\sharp$}
We summarise the main properties of $f_\sharp$ and the corresponding vector field $W^\sharp$ in the following proposition:

\begin{proposition}[Properties of the commutator $W^\sharp$]\label{thm:WnatProperties}
Given spacetime parameters $a, M$, with $0 \leq a < M$, and frequency parameters $(\omega, m)\in \mathbb{R}\times\mathbb{Z}$, with
\begin{equation}
(\omega, m) \in \mathcal{G},
\end{equation}
 the functions $f_\sharp(\omega, m, \rs)$ and $h_\sharp(\omega, m, \rs)$ defined in (\ref{def:hsharp}), \eqref{Wsharprel}, are homogeneous of degree 0 in $(\omega, m)$ and satisfy the following properties:
 \begin{enumerate}[(1)]
 \item\label{fSharpCondition1} {\bf  Lower bounds on commutator term} The inequality
\begin{equation}\label{est:hSharpDerivativeIneq}
\omega^2 v f_\sharp^\prime = \omega v \Rs\left(\tfrac{(\omega-\omega_r)\hsharp}{v}\right) \geq \omega^2 v \Rs\left(\tfrac{\omega-\omega_r}{v}\right)
\end{equation}
holds in the support of $\chisharp$.
 \item\label{fSharpCondition2}{\bf General second derivatives vanish in the commutator} The identity
\begin{equation} \label{sharpvanco}
\Ph^\sharp \ph = 0
\end{equation}
holds for all $r$.
 \item\label{fSharpCondition3} {\bf Bounds on lower order terms}\label{Thm:Plupperbound} Under the separation \eqref{def:CommutedBoxSimp}, one has the pointwise bound
\begin{equation}\label{est:PLSuperradBounds}
|\Pl^\sharp\ph|\lesssim r^{-3}|\Wsl\ph| + v^3(|m\ph|+|\ph|)
\end{equation}
in the support of $\chi_\sharp$.
 \end{enumerate}
\end{proposition}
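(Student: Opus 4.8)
\textbf{Plan for the proof of Proposition \ref{thm:WnatProperties}.}

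The strategy is to treat the three claimed properties of $h_\sharp$ (equivalently $f_\sharp$) separately, exploiting the piecewise definition \eqref{def:hsharp} and the fact that $h_\sharp$ interpolates between the constant function $h_+=1$ and $h_\natural$ through the cutoff $\widetilde\chi_2$. Throughout, the key structural observation is that by construction $\left(\tfrac{\omega-\omega_r}{v}\right)^2 - \left(\tfrac{(\omega-\omega_r)h_\sharp}{v}\right)^2$ is constant in $r$ for fixed $(\omega,m)$; in the first branch this difference is $0$, in the third it equals $\inf_r \tfrac{(\omega-\omega_r)^2}{v^2}$ (the same constant that made $\Ph^\natural=0$ in Proposition \ref{thm:WProperties}), and in the middle branch it equals $\widetilde\chi_2 \cdot \inf_r\tfrac{(\omega-\omega_r)^2}{v^2}$, which is again $r$-independent since $\widetilde\chi_2$ depends only on $\omega/m$. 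Statement \eqref{fSharpCondition2} then follows immediately: from \eqref{def:Phnatural}-type algebra, $\Ph^\sharp = v\big(\omega^2 f_\sharp^2 - \tfrac{(\omega-\omega_r)^2}{v^2}\big)'$, and $\omega^2 f_\sharp^2 = \tfrac{(\omega-\omega_r)^2h_\sharp^2}{v^2}$, so the bracket is (minus) the $r$-constant quantity above and its $\Rs$-derivative vanishes identically. This is exactly parallel to the argument given after \eqref{def:fsquaredabstract}.

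For Statement \eqref{fSharpCondition1}, I would again split into the three branches of \eqref{def:hsharp}, but observe that only the support of $\chi_\sharp$ matters. In the first branch $h_\sharp = 1$, so $f_\sharp' = \Rs\big(\tfrac{\omega-\omega_r}{v}\big)$ and \eqref{est:hSharpDerivativeIneq} holds with equality. In the third branch $h_\sharp = h_\natural$, and the inequality becomes $\omega^2 v \f' \geq \omega^2 v \Rs\big(\tfrac{\omega-\omega_r}{v}\big)$; here I would differentiate the identity $\big(\tfrac{(\omega-\omega_r)h_\natural}{v}\big)^2 = \tfrac{(\omega-\omega_r)^2}{v^2} - \inf_r\tfrac{(\omega-\omega_r)^2}{v^2}$, i.e. $h_\natural^2 \leq 1$ pointwise, and use the sign of $h_\natural$ (negative for $r<\re$, positive for $r\geq\re$, vanishing at $\re$) together with monotonicity of $\tfrac{(\omega-\omega_r)^2}{v^2}$ away from $\re$; multiplying the derivative of the squared identity by the appropriate sign and dividing carefully (the point $r=\re$ needs the linear vanishing of $h_\natural$ established in Proposition \ref{thm:WProperties}) yields the claimed inequality. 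The middle branch is handled identically with the extra constant factor $\widetilde\chi_2 \in [0,1]$, which only strengthens the inequality (replacing $\inf_r$ by $\widetilde\chi_2\inf_r$ makes the subtracted constant smaller, hence $h_\sharp^2$ closer to $1$). The appearance of $\chi_\sharp$ in the statement is essentially because $W^\sharp$ is only ever used multiplied by $\chi_\sharp$, which is supported away from $r_+$, so no regularity issue at the horizon arises.

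Statement \eqref{fSharpCondition3} is the routine-but-bookkeeping-heavy part and is the main obstacle. From \eqref{def:Plsharp}, $\Pl^\sharp = -i\omega f_\sharp'' - v\big(\tfrac{v'}{v^2}\big)(\Wsl + i\omega f_\sharp) + v\big(\tfrac{V_1}{v^2}\big)'$, so one must bound $f_\sharp''$, $f_\sharp$ (equivalently $(\omega-\omega_r)h_\sharp/v$), and the explicitly computable terms $v(v'/v^2)'$ and $v(V_1/v^2)'$. The last two decay like $v^3$ times powers of $r$ by direct computation (this is the source of the $v^3(|m\ph|+|\ph|)$ term, using $|\omega|\lesssim |m|$ in $\mathcal{G}$ away from superradiance — actually in $\mathcal{G}$ one has $|\omega|\lesssim |m|/\text{const}$ from the frequency bounds, so $\omega\ph$ is controlled by $m\ph$). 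The $v(v'/v^2)'\Wsl\ph$ term gives the $r^{-3}|\Wsl\ph|$ contribution after checking $|v(v'/v^2)'|\lesssim r^{-3}$ in the support of $\chi_\sharp$. The genuinely delicate piece is $\omega f_\sharp''$: using $f_\sharp = \tfrac{(\omega-\omega_r)h_\sharp}{\omega v}$ and the explicit formula for $h_\sharp$ (which in the $r<\re$/$r\geq\re$ regions has the same closed form $\Delta^{-1/2}(r-\re)\gamma_\natural$-type structure as in \eqref{def:fetaTrue}, or a $\widetilde\chi_2$-weighted analogue), one differentiates twice and uses the bounds \eqref{est:vebound}, \eqref{est:vebound2} and \eqref{basicineq} on $\gamma_\natural$ and $\re$ from the proof of Proposition \ref{thm:WProperties}, plus \eqref{est:fprimebound}, to extract the stated pointwise bound; the middle ($\widetilde\chi_2<1$) branch additionally produces $r$-derivatives of the $\omega/m$-dependent prefactor, but those are harmless since $\widetilde\chi_2$ is $r$-independent. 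I would defer the detailed second-derivative estimate, like the analogous Statement \eqref{fNaturalCondition3} of Proposition \ref{thm:WProperties}, and note that it reduces to the same computations carried out there together with the explicit form \eqref{def:hsharp}.
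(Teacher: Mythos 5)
Your proposal is correct and follows essentially the same route as the paper: Statement (2) from the $r$-constancy of $\left(\tfrac{\omega-\omega_r}{v}\right)^2-\left(\tfrac{(\omega-\omega_r)h_\sharp}{v}\right)^2$, Statement (1) by differentiating that identity and comparing using $0<h_\natural\leq h_\sharp\leq 1$, the sign of $\omega(\omega-\omega_r)$, and monotonicity for $r\geq\re$ (your worry about $r=\re$ is moot, since $\re\leq R^\sharp$ keeps it outside the support of $\chisharp$), and Statement (3) by reducing to second-derivative regularity of $f_\sharp$. The only packaging difference is that the paper proves (3) by rewriting $-i\omega f_\sharp''-v\big(\tfrac{v'}{v^2}\big)'i\omega f_\sharp=-i\omega v\Rs\big(v^{-2}\Rs(vf_\sharp)\big)$ and invoking the already-established bound \eqref{drf2} on $\partial_r^k(vf_\sharp)$ (Proposition \ref{prop:vfDerUpperBound}), rather than re-differentiating the explicit formula as you sketch, but these amount to the same computation.
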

As before, we prove Statements \eqref{fSharpCondition1} and \eqref{fSharpCondition2} here and postpone the proof of Statement \eqref{fSharpCondition3} until Section \ref{sec:fnatUpperBound}.
\begin{proof}[Proof of Statements \eqref{fSharpCondition1} and \eqref{fSharpCondition2}]
We start with (\ref{est:hSharpDerivativeIneq}). This is of course true for $\hsharp = 1$, so it suffices to prove this bound in the frequency regime $\mathcal{G}\cap\mathcal{G}'$, for which $\h$ and $\re$ are also defined. It follows from the definition of $h_\sharp$ and $\h$ that
\begin{equation}\label{id:fsquaredconstant}
\omega^2\Rs\left(\tfrac{((\omega-\omega_r)\h)^2}{v^2}\right) = 
\omega^2\Rs\left(\tfrac{((\omega-\omega_r)h_\sharp)^2}{v^2}\right) = 
\omega^2\Rs\left(\tfrac{(\omega-\omega_r)^2}{v^2}\right).
\end{equation}
Additionally, for $r \geq \re$ (and therefore in the support of $\chisharp$),
\begin{equation}
\Rs\left(\tfrac{(\omega-\omega_r)^2}{v^2}\right) \geq 0.
\end{equation}
From the definitions \eqref{def:f2} \eqref{def:hsharp}, one can see that $0 < \h \leq \hsharp \leq 1$ in the support of $\chisharp$. Additionally, $\omega(\omega-\omega_r) > 0$ for $(\omega, m)\in \mathcal{G}'\setminus(0,0)$. It follows that
\begin{equation}\label{est:fsquaredmonotone}
0 < \omega\left(\tfrac{(\omega-\omega_r)\h}{v}\right) \leq \omega\left(\tfrac{(\omega-\omega_r)h_\sharp}{v}\right) \leq \omega\left(\tfrac{\omega-\omega_r}{v}\right).
\end{equation}
Expanding \eqref{id:fsquaredconstant} and applying the inequality \eqref{est:fsquaredmonotone} gives \eqref{est:hSharpDerivativeIneq}. We remark that, for $(\omega, m)\in \mathcal{G}\cap\mathcal{G}'$, we also have the upper bound
\begin{equation}\label{est:hSharpSandwich}
\omega\Rs\left(\tfrac{(\omega-\omega_r)\hsharp}{v}\right) \leq \omega\Rs\left(\tfrac{(\omega-\omega_r)\h}{v}\right) \,
\end{equation}
in the support of $\chisharp$.

The condition (\ref{sharpvanco}) was shown immediately after the definition (\ref{def:hsharp}). 
\end{proof}

\subsection{The regularity of $\f$ and $f_\sharp$}\label{sec:fbasicregularity}
The following proposition summarises the basic regularity statements for $\f=\f(r, \eta)$ and $f_\sharp(r,\eta)$, both in $r$ and with respect to the frequency parameter $\eta=\frac{am}{\omega}$. We will only need these statements for $k\leq 2$.

\begin{proposition} \label{prop:vfDerUpperBound}
Recall $h_0$ from (\ref{def:h0}). For any $k\geq 0$ there exists a constant $C=C(a, M, \ximin, \ximax,k)$ such that we have 
\begin{enumerate}
\item For frequencies in $\mathcal{G}^\prime \setminus \{(0,0)\}$ the bounds
\begin{alignat}{4}
|\partial_r^k(v\f)| &\leq C r^{-2-k} \label{est:drf} \qquad&  \textrm{for } k& \geq 1, \\
\left|\partial_\eta^k \f\right| &\leq C \frac{1}{\Delta^{1/2}} \qquad&   \textrm{for } k& \geq 1,  \label{est:detafPrelim} \\
\left|\partial_\eta^{k} \left(\f - \frac{r^2+a^2-\eta}{\Delta^{1/2}}h_0\right)\right|\ &\leq C \frac{\Delta^{1/2}}{r^2+a^2} \qquad&   \textrm{for } k& \geq 0, \label{est:detafFinal} \\
\left|\partial_{\rs}^{k} \left(\f - \frac{r^2+a^2-\eta}{\Delta^{1/2}}h_0\right)\right|\ &\leq C \frac{\Delta^{1/2}}{r^{2+k}} \qquad&   \textrm{for } k& \geq 0. \label{est:drfFinal} 
\end{alignat}
\item In the support of $\chi_\sharp$ for frequencies in $\mathcal{G}$ the bounds\footnote{Note that weights in $\Delta$ are irrelevant in the support of $\chi_\sharp$, as $\chi_\sharp$ is supported uniformly away from the horizon.}
\begin{alignat}{4}
|\partial_r^k(vf_\sharp )| &\leq C r^{-2-k}|\eta| \qquad&   \textrm{for } k& \geq 1,\label{drf2}\\
\left|\partial_\eta^k f_\sharp \right| &\leq C r^{-1} \qquad&   \textrm{for } k& \geq 1,\label{est:detafPrelim2} \\
\left|\partial_\eta^{k} \left(f_\sharp - \frac{r^2+a^2-\eta}{\Delta^{1/2}}h_0\right)\right|\ &\leq C r^{-1} \qquad&   \textrm{for } k& \geq 0,\label{est:detafFinal2} \\
\left|\partial_{\rs}^{k} \left(f_\sharp - \frac{r^2+a^2-\eta}{\Delta^{1/2}}h_0\right)\right|\ &\leq C r^{-k-1}|\eta| \qquad&   \textrm{for } k& \geq 0.\label{est:drfFinal2} 
\end{alignat}
\end{enumerate}
\end{proposition}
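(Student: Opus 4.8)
The plan is to base everything on the closed-form expression \eqref{def:fetaTrue}, $\f = \Delta^{-1/2}(r-\re)\ve$ with $\ve$ as in \eqref{def:veta}, rather than on the defining formula \eqref{def:f2}. Using $\omega-\omega_r = \omega\Gamma/(r^2+a^2)$ and the notation $\Gamma = r^2+a^2-\eta$, $\Gz=\Gamma(\re)$, $\Dz=\Delta(\re)$ of \eqref{def:gamplus}, one checks that \eqref{def:fsquaredabstract} is equivalent to the factorised form
\[
\f = \frac{\Gamma}{\Delta^{1/2}}\,\h,\qquad \h^2 = 1 - \frac{\Gz^2\,\Delta}{\Gamma^2\,\Dz},
\]
and the comparison field appearing in \eqref{est:detafFinal}--\eqref{est:drfFinal} has the \emph{same} prefactor, $\frac{\Gamma}{\Delta^{1/2}}h_0$ with $h_0$ from \eqref{def:h0}. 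The proposition therefore reduces to: (i) $r$- and $\eta$-derivative bounds for $v\f = (r-\re)\ve/(r^2+a^2)$ and for $\f$; (ii) the same for $\h-h_0$; and (iii) the parallel statements for $f_\sharp$, which by \eqref{def:hsharp} equals $\frac{\Gamma}{\Delta^{1/2}}h_\sharp$ with $h_\sharp$ either $\equiv 1$, or $\equiv\h$, or a $\wchi_2$-interpolation of the two.

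First I would record uniform control of the auxiliary quantities. On $\mathcal{G}'$ the parameter $\eta=am/\omega$ lies in a fixed compact interval determined by $\ximin,\ximax$; since $P_\eta(\re)=0$ with $\partial_r P_\eta(\re)>0$ (from the proof of Proposition \ref{est:rnaturalbounds}), the implicit function theorem gives $\re$ as a smooth function of $\eta$ with $\re$ and all $\partial_\eta^k\re$ bounded, and \eqref{basicineq} keeps $\re$ away from $r_+$ and $\infty$; hence $\Dz$, $\Gz$, $\ve$ (via \eqref{est:vebound}) and their $\eta$-derivatives are bounded above and below. On $\mathcal{G}$ in the support of $\chi_\sharp$ the same holds where $\wchi_2\equiv1$ (there $h_\sharp=\h$); in the transition region $0<\wchi_2<1$ the cutoff pins $\omega/m$, hence $\eta$, to a compact set bounded away from $0$, so $\wchi_2(\omega/m)=\chi_2(a/\eta)$ has bounded $\eta$-derivatives; and on $(\mathcal{G}')^c$ one has $h_\sharp\equiv1$, $f_\sharp=(r^2+a^2-\eta)\Delta^{-1/2}$, $vf_\sharp=1-\eta/(r^2+a^2)$, from which the $|\eta|$-weighted bounds \eqref{drf2}, \eqref{est:drfFinal2} and (noting that $\partial_\eta$ annihilates the $\eta$-dependence, and $1-h_0 = v_0^{-2}v^2/(1+h_0) = O(v^2)$ on $\mathrm{supp}\,\chi_\sharp$) the bounds \eqref{est:detafPrelim2}, \eqref{est:detafFinal2} follow directly, using that weights in $\Delta$ are irrelevant on $\mathrm{supp}\,\chi_\sharp$.

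Given this, the $r$-derivative bounds \eqref{est:drf}, \eqref{drf2} follow from the large-$r$ expansion $v\f = 1 - (\re^2+a^2+\tfrac12 c(\eta))/(r^2+a^2) + O(r^{-3})$, with $c(\eta)=4M\re\Dz/(\re-M)^2$: each $\partial_r$ kills the leading constant and costs one power of $r$, while near $r_+$ the expression $(r-\re)\ve/(r^2+a^2)$ is manifestly smooth and bounded. The $\eta$-derivative bounds \eqref{est:detafPrelim}, \eqref{est:detafPrelim2} follow because differentiating in $\eta$ annihilates the $O(r^2)$ top order of $(r-\re)\ve$ and of $\Gamma$, leaving $O(1)$, which after the $\Delta^{-1/2}$ prefactor is $O(\Delta^{-1/2})$. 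For the comparison bounds \eqref{est:detafFinal}, \eqref{est:drfFinal} the key structural identity is
\[
\h^2-h_0^2 = v_0^{-2}v^2 - \frac{\Gz^2\Delta}{\Gamma^2\Dz} = \Delta\Big(\frac{1}{v_0^2(r^2+a^2)^2} - \frac{\Gz^2}{\Gamma^2\Dz}\Big) =: \Delta\,G(r,\eta),
\]
where $G$ is smooth in both variables, is $O((r^2+a^2)^{-2})$ with $\partial_{\rs}$ gaining the expected powers of $r$, and — crucially — vanishes identically at $\eta=0$ (where $\re=r_0$, $\Gz=r_0^2+a^2$, $\Dz=\Delta(r_0)$ and $v_0^{-2}=(r_0^2+a^2)^2/\Delta(r_0)$, so $\h\equiv h_0$). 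Dividing by $\h+h_0$ gives $\h-h_0 = O(\Delta(r^2+a^2)^{-2})$ with the corresponding derivative bounds, and multiplying back by $\frac{\Gamma}{\Delta^{1/2}}$ yields \eqref{est:detafFinal}, \eqref{est:drfFinal}; the $f_\sharp$ statements reduce either to this, to the explicit $\frac{\Gamma}{\Delta^{1/2}}(1-h_0)$ where $h_\sharp=1$, or to a $\wchi_2$-weighted combination.

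The main obstacle is the bookkeeping behind (ii)--(iii), i.e.\ showing the comparison with $\frac{\Gamma}{\Delta^{1/2}}h_0$ is sharp \emph{uniformly in $\eta$}: one needs $\h-h_0$ to carry the full weight $\Delta(r^2+a^2)^{-2}$ both as $r\to r_+$ (where $\h+h_0\to-2$, so the rate is that of $\h^2-h_0^2=\Delta G$) and as $r\to\infty$ (where $G=O(r^{-4})$ forces $\h-h_0=O(r^{-2})$). The genuinely delicate point is that $\h+h_0$ can vanish, at a single radius lying between $\re$ and $r_0$ where $\h$ and $h_0$ respectively change sign; dividing $\h^2-h_0^2$ by $\h+h_0$ there must be handled with care, using that near that radius both $|\h|$ and $|h_0|$ are small, of size comparable to $|\re-r_0|$ (hence to $|\eta|$), which is exactly the size of $G$ there. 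Everything else — repeatedly differentiating the implicitly defined $\re(\eta)$ and applying Leibniz, and tracking $\Delta$-weights near the horizon, which drop out on $\mathrm{supp}\,\chi_\sharp$ — is routine, and the footnote restriction $k\le 2$ keeps the computation manageable.
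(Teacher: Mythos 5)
Your overall route is the paper's route: the same factorisation $\f=\Delta^{-1/2}(r-\re)\ve$ from \eqref{def:fetaTrue}, the same auxiliary regularity inputs ($\eta$ confined to a compact set by Lemma \ref{prop:GpBounds}, $\re$ via the implicit function theorem as in Proposition \ref{prop:rderiv1} and Corollary \ref{prop:rebounds}, $\ve$ as in Propositions \ref{prop:drGamma} and \ref{prop:vetabounds}), and the same quotient identity for the comparison bounds — your $\frac{\Gamma}{\Delta^{1/2}}\cdot\frac{\h^2-h_0^2}{\h+h_0}$ is exactly the paper's \eqref{id:fnatminusf0}. The problem is precisely at the point you single out as the ``genuinely delicate point''. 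Your proposed resolution does not work: near the radius $r_*$ where $\h+h_0$ vanishes, knowing that $|\h|$, $|h_0|$ and $G$ are all of size $O(|\eta|)$ gives no control of $\Delta G/(\h+h_0)$, because $\h+h_0$ vanishes \emph{in $r$} at $r_*$ for every fixed $\eta\neq 0$; the smallness of the numerator in $\eta$ is not matched against the vanishing of the denominator in $r$, so the quotient cannot be bounded by a size comparison of this kind (the identity is a genuine $0/0$ at $r_*$, and extracting $\h-h_0$ from it there is circular). The correct — and much simpler — handling is the paper's: since $\re$ and $r_0$ both lie in the fixed compact set $\Itrap$, the zero $r_*$ does too, and on (a slight enlargement of) $\Itrap$ the target weight $\Delta^{1/2}/(r^2+a^2)$ is bounded below, so \eqref{est:detafFinal} and \eqref{est:drfFinal} there follow directly from \eqref{est:detafPrelim} and trivial bounds on the $h_0$-term, with no division at all; the quotient identity is only invoked for $r\notin\Itrap$, where $v\f$ and $\frac{\Gamma}{r^2+a^2}h_0$ have the same sign and $|h_0|$ is uniformly bounded away from zero, so the denominator is $\geq c_1>0$.

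Two smaller omissions you should also repair. First, in the interpolation regime $0<\wchi_2<1$ you must differentiate the square root $\hsharp=\big((1-\wchi_2)+\wchi_2\h^2\big)^{1/2}$ up to two times with uniform constants; regularity of the cutoff in $\eta$ is not enough — you need $\hsharp$ (equivalently $v\f$, $v\fsharp$) bounded above \emph{and below} on the support of $\chisharp$, which the paper obtains by combining $\re\leq R^\sharp$ (see \eqref{rsharpnatural}), the $\tfrac14\delta^\sharp$ gap built into \eqref{def:chisharp}, and the lower bound \eqref{est:fprimebound} on $\f'$, integrated from $\re$. Second, in the regimes where $\hsharp=1$ or $\hsharp=\h$, the $|\eta|$-weighted bounds \eqref{drf2} and \eqref{est:drfFinal2} contain an $\eta$-independent piece (e.g.\ $\frac{r^2+a^2}{\Delta^{1/2}}(1-h_0)=O(r^{-1})$), and absorbing it into the right-hand side requires the observation that $|\eta|$ is bounded below on the relevant frequency sets — this is exactly how the paper disposes of it, and it does not follow from anything you wrote.
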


We begin with the bounds for $\f$. We recall that in \eqref{def:fetaTrue} we already derived the algebraic relation 
\[
\f = \Delta^{-1/2}(r-\re)\ve \, ,
\]
with $\ve$ defined in (\ref{def:veta}). Establishing regularity for $\f$ therefore boils down to understanding regularity of $\re$ and $\ve$. We begin with $\re$ recalling again the notation $\Gamma$ and the polynomial $P_\eta$ from (\ref{def:gamplus}) and (\ref{def:Peta}).

\subsubsection{Regularity of $\re$}

We now show (Corollary \ref{prop:rebounds} below) that $\re$ is contained in a compact subset of $(r_+,\infty)$. In Proposition \ref{prop:rderiv1} we obtain the desired uniform bounds on derivatives of $\re$. We start with two preliminary bounds in the relevant frequency regime.

\begin{lemma}[Absolute bounds for $\eta$]\label{prop:GpBounds}
There exist quantities $\etamin, \etamax$ depending on $\ximin, \ximax$ such that
\begin{equation}\label{est:etaBounds}
-\infty < \etamin < \eta = \frac{am}{\omega} < \etamax < r_+^2 + a^2 \, 
\end{equation}
holds for $(\omega, m) \in \mathcal{G}^\prime \setminus \{(0,0)\}$.
Therefore, there exists a $C>0$ depending on $\ximin, \ximax$ such that for these frequencies
\begin{equation}\label{est:GpBounds}
C^{-1} \leq \Gamma_+ = r_+^2+a^2- \eta \leq C \, .
\end{equation}
\end{lemma}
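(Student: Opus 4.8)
The plan is to read off both displayed bounds directly from the definition \eqref{def:GCalPrimeDef} of $\mathcal{G}'$, treating the axisymmetric frequencies $m=0$ and the genuinely rotating ones $m\neq 0$ separately, and then to obtain \eqref{est:GpBounds} as a trivial algebraic rearrangement of \eqref{est:etaBounds}. No auxiliary results are needed beyond the explicit form of $\mathcal{G}'$ and the identity $m\omega_+ = \tfrac{am^2}{r_+^2+a^2}$.

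First I would record that $\eta = am/\omega$ is well defined on $\mathcal{G}'\setminus\{(0,0)\}$: if $m\neq 0$ and $(\omega,m)\in\mathcal{G}'$ then $m\omega\le -\tfrac12\ximin m^2<0$ or $m\omega\ge m\omega_+ + \tfrac12\ximax m^2>0$, so in particular $\omega\neq 0$; and if $m=0$ then $(\omega,0)\in\mathcal{G}'\setminus\{(0,0)\}$ forces $\omega\neq0$, with $\eta=0$ in that case. For $m\neq 0$, dividing the two alternatives by $m^2>0$ gives $\tfrac{\omega}{m}\le -\tfrac12\ximin$ or $\tfrac{\omega}{m}\ge \tfrac{a}{r_+^2+a^2}+\tfrac12\ximax$; taking reciprocals and multiplying by $a>0$ yields $\eta\in[-\tfrac{2a}{\ximin},0)$ in the first case and $\eta\in\big(0,\tfrac{a(r_+^2+a^2)}{a+\tfrac12\ximax(r_+^2+a^2)}\big]$ in the second. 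Hence \eqref{est:etaBounds} holds with any $\etamin$ chosen strictly below $-\tfrac{2a}{\ximin}$ and any $\etamax$ chosen in the interval $\big(\tfrac{a(r_+^2+a^2)}{a+\tfrac12\ximax(r_+^2+a^2)},\,r_+^2+a^2\big)$, which is nonempty precisely because $\tfrac{a(r_+^2+a^2)}{a+\tfrac12\ximax(r_+^2+a^2)}<r_+^2+a^2$, i.e.\ because $a<a+\tfrac12\ximax(r_+^2+a^2)$; this is exactly where positivity of $\ximax$ is used. The $m=0$ case is then covered since $\etamin<0<\etamax$. This observation — that strict positivity of $\ximax$ is what keeps $\eta$ bounded away from $r_+^2+a^2$ — is really the only substantive point; everything else is bookkeeping.

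Finally, \eqref{est:GpBounds} follows by writing $\Gamma_+=r_+^2+a^2-\eta$: the bound $\eta<\etamax$ gives $\Gamma_+>r_+^2+a^2-\etamax$, a strictly positive constant by the choice of $\etamax$, and $\eta>\etamin$ gives $\Gamma_+<r_+^2+a^2-\etamin$, so one may take $C=\max\big(r_+^2+a^2-\etamin,\ (r_+^2+a^2-\etamax)^{-1}\big)$. There is no genuine obstacle in this argument; the only thing to be slightly careful about is that the interval defining $\mathcal{G}'$ is open, so the extreme values of $\eta$ above are attained, and the strict inequalities asserted in the statement are bought precisely by the slack left in the choices of $\etamin,\etamax$.
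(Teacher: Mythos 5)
Your proof is correct and follows essentially the same route as the paper: both read the bounds on $\eta=am/\omega$ directly off the definition of $\mathcal{G}'$ (with $\ximax>0$ providing the strict gap below $r_+^2+a^2$ and $\ximin>0$ the lower bound $-2a/\ximin$), and then obtain \eqref{est:GpBounds} by trivially rearranging $\Gamma_+=r_+^2+a^2-\eta$. Your treatment is merely more explicit about the $m=0$ case and the slack in choosing $\etamin,\etamax$, which the paper leaves implicit.
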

\begin{proof} For $m\omega > m\omega_+ + \frac12\ximax m^2$ (with $a>0$),
\[
0 < \eta < \left(\frac{1}{ r_+^2 + a^2}+\frac{\ximax}{2a}\right)^{-1} <  r_+^2 + a^2.
\]
Likewise, when $\eta < 0$, similar calculations give
$
\eta > -\frac{2a}{\ximin}.
$
The bound \eqref{est:etaBounds} directly follows. The bound \eqref{est:GpBounds} then follows from the definition of $\Gamma_+$.
\end{proof}

\begin{corollary}[Absolute bounds for $\omega-\omega_r$]\label{prop:TrapInterior}
There exists a constant $C > 0$ depending on $\ximin, \ximax$ such that, for all $(\omega, m)\in \mathcal{G}' \setminus \{(0,0)\}$ and for all $r \in (r_+, \infty)$,
\begin{equation}
C^{-1}|\omega|\leq |\omega-\omega_r|\leq C|\omega|.
\end{equation}
\end{corollary}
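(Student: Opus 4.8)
The plan is to prove Corollary \ref{prop:TrapInterior} directly from the definition of $\omega_r$ together with the absolute bounds on $\eta$ furnished by Lemma \ref{prop:GpBounds}. Recall that $\omega_r = \tfrac{am}{r^2+a^2}$, and hence
\[
\omega - \omega_r = \omega\left(1 - \frac{am/\omega}{r^2+a^2}\right) = \omega\,\frac{r^2+a^2-\eta}{r^2+a^2} = \omega\,\frac{\Gamma(r,\eta)}{r^2+a^2},
\]
using the notation $\Gamma(r,\eta) = r^2+a^2-\eta$ from \eqref{def:gamplus}. So it suffices to show that $\frac{\Gamma(r,\eta)}{r^2+a^2}$ is bounded above and below by positive constants (depending only on $\ximin,\ximax$) uniformly in $r\in(r_+,\infty)$ and in $(\omega,m)\in\mathcal{G}'\setminus\{(0,0)\}$.

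First I would establish the upper bound. By Lemma \ref{prop:GpBounds} we have $\eta < \etamax < r_+^2+a^2$ for all frequencies in the relevant range. Split into two cases. If $\eta \leq 0$, then $\Gamma(r,\eta) = r^2+a^2-\eta \geq r^2+a^2$, so $\frac{\Gamma}{r^2+a^2}\geq 1$; also $\frac{\Gamma}{r^2+a^2} = 1 - \frac{\eta}{r^2+a^2} \leq 1 + \frac{|\etamin|}{r_+^2+a^2}$, using $\eta > \etamin$ and $r^2+a^2 > r_+^2+a^2$. If $0 < \eta < \etamax$, then $0 \leq \frac{\eta}{r^2+a^2} < \frac{\etamax}{r_+^2+a^2} < 1$, so $0 < 1 - \frac{\etamax}{r_+^2+a^2} \leq \frac{\Gamma}{r^2+a^2} \leq 1$. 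In either case $\frac{\Gamma}{r^2+a^2}$ lies between two explicit positive constants depending only on $\etamin,\etamax$, hence only on $\ximin,\ximax$ (and of course $a,M$). Taking $C$ to be the maximum of this upper constant and the reciprocal of the lower constant yields $C^{-1}|\omega| \leq |\omega-\omega_r| \leq C|\omega|$.

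There is essentially no obstacle here: the whole statement is an immediate consequence of the already-proven uniform bounds on $\eta$ in Lemma \ref{prop:GpBounds} and the elementary identity $\omega-\omega_r = \omega\,\Gamma(r,\eta)/(r^2+a^2)$. The only mild care needed is to keep the constant independent of $r$, which works because the denominator $r^2+a^2$ only ever helps (it is bounded below by $r_+^2+a^2$ and grows to infinity), so the quotient $\Gamma/(r^2+a^2)$ stays comfortably away from $0$ and $\infty$. One should also note that the case $m=0$ (which is in $\mathcal{G}'$) is excluded by the hypothesis $(\omega,m)\neq(0,0)$ only in the sense that $\eta=0$ is then allowed, but $\eta=0$ is already covered by the $\eta\leq 0$ case above and gives exactly $\omega-\omega_r=\omega\cdot\frac{r^2+a^2}{r^2+a^2}\cdot\frac{r^2+a^2}{r^2+a^2}$... more precisely $\omega-\omega_r = \omega$ when $m=0$, which trivially satisfies the claimed two-sided bound.
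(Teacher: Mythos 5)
Your proof is correct and is essentially the paper's own argument: the paper likewise writes $\frac{\omega-\omega_r}{\omega} = 1-\frac{\eta}{r^2+a^2}$ and concludes directly from the uniform bounds $\etamin < \eta < \etamax < r_+^2+a^2$ of Lemma \ref{prop:GpBounds}; you have merely spelled out the elementary case distinction in $\eta$ that the paper leaves implicit.
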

\begin{proof}
We expand
\begin{equation}
\frac{\omega-\omega_r}{\omega} = 1-\frac{\eta}{r^2+a^2}.
\end{equation}
The result follows directly from Lemma \ref{prop:GpBounds}
\end{proof}

The bounds \eqref{basicineq} and Lemma \ref{prop:GpBounds} imply the following:
\begin{corollary}[Compact support of $\re$ for frequencies in $\mathcal{G}' \setminus \{(0,0)\}$]\label{prop:rebounds}
For all $\ximin, \ximax > 0$, there exists a compact subset $I^1_{trap} \subset (r_+, \infty)$ depending on $\ximin, \ximax$ such that $\re(\omega, m) \in I^1_{trap}$ for all $(\omega, m) \in\mathcal{G}' \setminus \{(0,0)\}$.
\end{corollary}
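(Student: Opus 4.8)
The plan is to read off the statement directly from the two ingredients flagged before the corollary: the two-sided bound \eqref{basicineq} on $\re$ from Proposition \ref{est:rnaturalbounds}, and the uniform control of $\Gp = r_+^2+a^2-\eta$ from Lemma \ref{prop:GpBounds}. Both of these are already available for frequencies in $\mathcal{G}' \setminus \{(0,0)\}$, so essentially no new work is required; the only thing to check is that the resulting interval is genuinely contained in the \emph{open} interval $(r_+,\infty)$, i.e.\ that its left endpoint is strictly larger than $r_+$ and its right endpoint is finite.

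Concretely, I would proceed as follows. First, recall \eqref{basicineq}, which gives
\[
\left(r_+^2 + \Gp\right)^{1/2} \leq \re \leq r_+ + \Gp^{1/2} \, .
\]
Next, invoke Lemma \ref{prop:GpBounds}: there is a constant $C = C(\ximin,\ximax) > 0$ with $C^{-1} \leq \Gp \leq C$ for all $(\omega,m)\in\mathcal{G}'\setminus\{(0,0)\}$. Substituting the lower bound $\Gp \geq C^{-1}$ into the left-hand inequality yields $\re \geq (r_+^2 + C^{-1})^{1/2}$, which is strictly greater than $r_+$ precisely because $C^{-1} > 0$. Substituting the upper bound $\Gp \leq C$ into the right-hand inequality yields $\re \leq r_+ + C^{1/2} < \infty$. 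Hence one may simply set
\[
\Itrap := \left[\,(r_+^2 + C^{-1})^{1/2},\ r_+ + C^{1/2}\,\right] \subset (r_+,\infty),
\]
which is compact, depends only on $\ximin,\ximax$ (through $C$), and contains $\re(\omega,m)$ for every $(\omega,m)\in\mathcal{G}'\setminus\{(0,0)\}$.

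There is no real obstacle here — the corollary is a one-line consequence of the quoted results. The only point meriting a word of care is that the lower endpoint of $\Itrap$ must be bounded \emph{away} from $r_+$ (not merely $\geq r_+$), which is exactly what the strict positivity of $C^{-1}$ in Lemma \ref{prop:GpBounds} provides; morally, this reflects the fact that on $\mathcal{G}'$ one stays quantitatively away from the superradiant boundary $\omega/m = a/(r_+^2+a^2)$, where $\Gp \to 0$ and $\re \to r_+$.
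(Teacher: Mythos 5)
Your argument is correct and is exactly the paper's: the corollary is stated there as an immediate consequence of the two-sided bound \eqref{basicineq} together with the uniform bounds $C^{-1}\leq\Gp\leq C$ of Lemma \ref{prop:GpBounds}, which is precisely the substitution you carry out, yielding $\Itrap=[(r_+^2+C^{-1})^{1/2},\,r_++C^{1/2}]\subset(r_+,\infty)$. Your additional remark that strict positivity of $C^{-1}$ is what keeps the left endpoint away from $r_+$ is the right point of care and matches the paper's intent.
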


We now estimate derivatives of $\re$ with respect to $\eta$. We note that $(\omega, m)\in \mathcal{G}' \setminus \{(0,0)\}$, $\re$ is smooth and decreasing in $\eta$ and quantify this with the following bound:
\begin{proposition}[Regularity of $\re$]\label{prop:rderiv1}
Given $(\omega, m) \in \mathcal{G}' \setminus \{(0,0)\}$, 
there exists a constant $C > 0$ depending on $a, M, \ximin, \ximax$  such that, viewing $\re$ as a function of $\eta$, the bound
\begin{equation}\label{est:rderiv2}
-C \leq \partial_\eta\re \leq -C^{-1}
\end{equation}
holds.
Additionally, for any $k > 0$, there exists a constant $C_k$ such that
\begin{equation}\label{est:rderiv3}
|\partial^{k}_\eta \re| \leq C_k \, .
\end{equation}
\end{proposition}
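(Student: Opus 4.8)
The plan is to establish \eqref{est:rderiv2} and \eqref{est:rderiv3} by implicit differentiation of the defining relation $P_\eta(\re) = 0$, using the quantitative lower bound on $\partial_r P_\eta$ at the root that was essentially already extracted in the proof of Proposition \ref{est:rnaturalbounds}. Recall from \eqref{def:Peta} that $P_\eta(r) = r^3 - 3Mr^2 + (a^2+\eta)r + (a^2-\eta)M$, so $\partial_\eta P_\eta(r) = r - M$ and $\partial_r P_\eta(r) = 3r^2 - 6Mr + a^2 + \eta$. By Corollary \ref{prop:rebounds}, $\re$ lies in a compact subset $I^1_{trap} \subset (r_+,\infty)$ for all $(\omega,m)\in\mathcal{G}'\setminus\{(0,0)\}$, and by Lemma \ref{prop:GpBounds} the parameter $\eta$ ranges over a compact subset of $(-\infty, r_+^2+a^2)$; hence $\re - M \geq r_+ - M > 0$ is bounded above and below, and $\eta$ is bounded.

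First I would show that $\partial_r P_\eta$ is strictly positive at $r = \re$, uniformly in the frequency range. From the proof of Proposition \ref{est:rnaturalbounds} we know $P_\eta(r_+) < 0$, that $\partial_r^2 P_\eta = 6(r - M) > 0$ on $(r_+,\infty)$, and that $\re$ is the unique root in $(r_+,\infty)$; since $P_\eta$ is convex there and crosses zero from below at $\re$, we get $\partial_r P_\eta(\re) > 0$. To make this quantitative and uniform I would argue: $\partial_r P_\eta$ is itself increasing on $(r_+,\infty)$ (its derivative is $6(r-M)>0$), it is negative at $r_+$ since $P_\eta(r_+)<0$ and $P_\eta(\re) = 0$ with $\partial_r^2 P_\eta > 0$ forcing $\partial_r P_\eta(r_+) < 0 < \partial_r P_\eta(\re)$ — more directly, by the mean value theorem $0 = P_\eta(\re) - P_\eta(r_+) - (\re - r_+)\cdot(\text{something})$, but cleanest is to use $\partial_r P_\eta(\re) = (\re - M)^2 + 2(\re - M)\cdot\text{(positive)} \geq \ldots$ after completing the square: $3r^2 - 6Mr + a^2 + \eta = 3(r-M)^2 + a^2 + \eta - 3M^2$, and then use $\Gamma(\re,\eta) = \re^2 + a^2 - \eta > 0$ together with the relation \eqref{id:difference}/\eqref{def:Peta} $2\re\Delta(\re) = (\re - M)\Gamma(\re)$ to obtain a clean positive lower bound; the compactness of $\re$ and $\eta$ then gives a uniform constant $c > 0$ with $\partial_r P_\eta(\re) \geq c$.

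Granted the uniform bound $\partial_r P_\eta(\re) \geq c > 0$, the implicit function theorem gives $\partial_\eta \re = -\dfrac{\partial_\eta P_\eta(\re)}{\partial_r P_\eta(\re)} = -\dfrac{\re - M}{\partial_r P_\eta(\re)}$, and since $0 < r_+ - M \leq \re - M \leq C$ and $c \leq \partial_r P_\eta(\re)$ is bounded above on the compact range as well, \eqref{est:rderiv2} follows with an explicit constant depending only on $a, M, \ximin, \ximax$. For the higher derivatives \eqref{est:rderiv3} I would differentiate the identity $P_\eta(\re(\eta)) = 0$ repeatedly in $\eta$: each differentiation produces $\partial_r P_\eta(\re)\cdot\partial_\eta^k \re$ plus a polynomial expression in lower-order derivatives $\partial_\eta^j\re$ ($j<k$), in $\re$ itself, and in $\eta$, all of which are controlled inductively together with the fact that all partial derivatives of the polynomial $P_\eta$ in $(r,\eta)$ are themselves polynomials and hence bounded on the compact set where $(\re,\eta)$ lives. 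Dividing through by $\partial_r P_\eta(\re) \geq c$ closes the induction and yields \eqref{est:rderiv3}.

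The main obstacle is purely the uniform positivity of $\partial_r P_\eta(\re)$: one must rule out degeneration as the frequency parameters approach the boundary of the admissible region (in particular as $\eta \to (r_+^2+a^2)^-$, where $\Gamma_+ \to 0$ and $\re \to r_+$, i.e.\ the minimum of $\Gamma^2/\Delta$ migrates toward the horizon). This is exactly where Corollary \ref{prop:rebounds} — the statement that $\re$ stays in a \emph{compact} subset of the open interval $(r_+,\infty)$ for frequencies in $\mathcal{G}'\setminus\{(0,0)\}$ (as opposed to all admissible frequencies) — is essential: it is the restriction to $\mathcal{G}'$ that keeps $\Gamma_+$ bounded below by Lemma \ref{prop:GpBounds}, which in turn keeps $\re$ away from $r_+$ and makes the convexity argument quantitative. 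Once that compactness input is in hand the rest is routine implicit-differentiation bookkeeping.
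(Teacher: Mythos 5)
Your proposal is correct and follows essentially the same route as the paper: implicit differentiation of $P_\eta(\re)=0$ to get $\partial_\eta\re = -(\re-M)/\partial_r P_\eta(\re)$, a quantitative positive lower bound on $\partial_r P_\eta(\re)$ from strict convexity of $P_\eta$ together with the compactness of the ranges of $\re$ and $\eta$ (the paper makes this explicit via the mean value theorem, $\partial_r P_\eta(\re)\geq -P_\eta(r_+)/(\re-r_+)\gtrsim \Gamma_+^{1/2}$, plus the upper bound $\partial_r P_\eta(\re)\leq 2\re(\re-M)$), and an induction on repeated differentiation of the implicit relation for the higher-order bounds.
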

\begin{proof}
We recall from (\ref{def:Peta}) that $P_\eta(\re) = 0$. Treating $\re$ as an implicit function of $\eta$, 
\begin{equation}\label{rderivimplicit}
(\partial_r P_\eta)(\re) \partial_\eta(\re) + (\re - M) = 0.
\end{equation}
It follows that
\begin{equation}\label{id:rimplicitderivative}
\partial_\eta(\re) = -\frac{\re - M}{(\partial_r P_\eta)(\re)} = \frac{\re - M}{3\re^2 - 6M\re + (a^2+\eta)\re}.
\end{equation}
Since $\partial_r^2 P_\eta > 0$, the mean value theorem combined with the identity $P_\eta(\re) = 0$ and Proposition \ref{est:rnaturalbounds} imply
\begin{equation}\label{est:dPboundlower}
(\partial_r P_\eta)(\re) \geq \frac{-P_\eta(r_+)}{\re - r_+} = \frac{(\re - M)\Gamma_+}{\re - r_+} \geq (\re - M)\Gamma_+^{1/2}.
\end{equation}
Additionally, \eqref{MinimumId} implies the relation $2\Dz < \Gz$, so
\begin{equation}\label{est:dPboundupper}
(\partial_r P_\eta)(\re) = 2\Gz + 2\re(\re-M) - \Gz \leq 2\re(\re-M).
\end{equation}
Therefore, \eqref{id:rimplicitderivative} implies
\begin{equation}
-\frac{1}{\Gp^{1/2}} \leq \partial_\eta(\re) \leq -\frac{1}{2\re}.
\end{equation}
The inequality \eqref{est:rderiv2} then follows from Proposition \ref{prop:GpBounds} and Corollary \ref{prop:rebounds}.

 The bound \eqref{est:rderiv3} follows inductively from the applying the chain rule to \eqref{id:rimplicitderivative} via the characterization
 \begin{equation}
 \partial_\eta^k\re = \frac{P_{k}(\re(\eta), \eta)}{[(\partial_r P_\eta)(\re)]^{2k+1}}
 \end{equation}
for some set of polynomials $\{P_k\}$. Then, $P_k$ may be bounded using Proposition \ref{prop:GpBounds} and Corollary \ref{prop:rebounds}, and $(\partial_r P_\eta)^{2k+1}$ may be bounded using \eqref{est:dPboundlower}, \eqref{est:dPboundupper}.
\end{proof}

\subsubsection{Regularity of $\ve$}
We next prove a regularity result for $\ve$ in both physical and frequency space. For $(r,s)\in (r_+, \infty)\times I^1_{trap}$, we define
\begin{equation}\label{def:gamma}\index{gamma@$\gamma$}
\gamma(r, s) = \left((r+s)^2 - 4Ms\frac{s^2 - 2Ms + a^2}{(s-M)^2}\right)^{1/2},
\end{equation}
so that, for $\ve$ defined in \eqref{def:veta},
$\ve(r, \eta) = \gamma(r, \re(\eta))$.

\begin{proposition}[Regularity of $\ve$]\label{prop:drGamma}
For $\gamma$ as defined in \eqref{def:gamma}, and $(r, s)\in (r_+, \infty)\times I^1_{trap}$, $\gamma$ satisfies the uniform bounds
\begin{subequations}
\begin{align}
\label{est:dreGamma}|\partial_{s}^k\gamma| &\leq\begin{cases}
C_k & k = 1, \\
C_k r^{-1} & k \geq 2,
\end{cases} \\
\label{est:drGamma}|\partial_r^k \gamma| &\leq C_k r^{1-k},
\end{align}
\end{subequations}
for $k \geq 1$, where $C_k$ is a constant depending on $a, M, k$, as well as bounds on $I^1_{trap}$.
\end{proposition}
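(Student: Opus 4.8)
The plan is to prove Proposition \ref{prop:drGamma} by direct differentiation of the explicit formula \eqref{def:gamma}, using the already-established compactness of $I^1_{trap}$ (Corollary \ref{prop:rebounds}) to treat all $s$-dependent quantities as bounded above and below. First I would write $\gamma^2 = (r+s)^2 - 4Ms\,\Delta(s)/(s-M)^2 =: Q(r,s)$, where $\Delta(s) = s^2-2Ms+a^2$, and observe that $Q$ is a polynomial in $r$ of degree $2$ (with leading coefficient $1$) whose coefficients are smooth functions of $s$ that are uniformly bounded on $I^1_{trap}$, since $s-M$ is bounded away from $0$ there. Moreover from the lower bound \eqref{est:vebound}, namely $\gamma = \ve \geq \big(\tfrac{r_+-M}{r_+}\big)(r+s) \gtrsim r$, we have a uniform lower bound $\gamma \gtrsim r$, and an obvious matching upper bound $\gamma \lesssim r$, so $\gamma \sim r$ on the relevant region. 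This comparability is the engine that converts derivatives of $\gamma^2$ into derivatives of $\gamma$.

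Next I would handle the $r$-derivatives. Since $Q$ is quadratic in $r$, one has $\partial_r Q \lesssim r$, $\partial_r^2 Q \lesssim 1$ (uniformly in $s\in I^1_{trap}$) and $\partial_r^k Q = 0$ for $k\geq 3$. Writing $\gamma = Q^{1/2}$ and using Faà di Bruno (or just an induction), $\partial_r^k \gamma$ is a sum of terms of the form $Q^{1/2 - j}\prod_{i}\partial_r^{k_i} Q$ with $\sum_i k_i = k$ and $j$ the number of factors; each such term is bounded by $r^{1-2j}\cdot r^{\#\{k_i = 1\}}$, and since at most one $k_i$ can be $\geq 2$ while the rest are $1$ (as $k_i\geq 1$), a counting argument gives $r^{\#\{k_i=1\}} \leq r^{k - (j-1)\cdot 1}\cdots$; the upshot after bookkeeping is $|\partial_r^k\gamma|\lesssim r^{1-k}$, which is \eqref{est:drGamma}. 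For the $s$-derivatives I would argue similarly: $\partial_s Q$ and all higher $\partial_s^k Q$ are $\lesssim r$ for $k=1$ (the cross term $2(r+s)$ contributes an $r$) but $\lesssim 1$ for $k\geq 2$ (since $\partial_s^k (r+s)^2 = 0$ for $k\geq 3$ and $= 2$ for $k = 2$, while the $s$-only factor $4Ms\Delta(s)/(s-M)^2$ has all derivatives bounded on $I^1_{trap}$). Feeding this into the Faà di Bruno expansion with $\gamma \sim r$ gives $|\partial_s\gamma|\lesssim 1$ and $|\partial_s^k\gamma|\lesssim r^{-1}$ for $k\geq 2$, which is \eqref{est:dreGamma}.

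The only mild subtlety, and the step I expect to require the most care, is verifying that the $s$-only factor $g(s) := 4Ms\,\Delta(s)/(s-M)^2$ has all derivatives uniformly bounded on $I^1_{trap}$; this is immediate once one notes $I^1_{trap}\subset\subset (r_+,\infty)$ and $r_+ > M$ (strict subextremality), so $|s-M|$ is bounded below by a positive constant depending only on $a,M,\ximin,\ximax$, making $g$ smooth with bounded derivatives there. After that the argument is purely a matter of organizing the Leibniz/Faà di Bruno bookkeeping, tracking the power of $r$ produced by each differentiation and using $\gamma\sim r$ to absorb negative powers. I would present the $k=1$ cases explicitly (where $\partial_s\gamma = \tfrac{1}{2\gamma}(2(r+s) + g'(s))$ and $\partial_r\gamma = \tfrac{r+s}{\gamma}$, both manifestly $O(1)$ and $O(r\cdot r^{-1}) = O(1)$ respectively — note $|\partial_r\gamma|\lesssim r^0 = r^{1-1}$ as claimed), and then indicate the inductive step for higher $k$ rather than writing it out in full.
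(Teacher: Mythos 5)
Your proof is correct and is essentially the paper's own argument: the paper likewise writes $\partial_s\gamma=\partial_s(\gamma^2)/(2\gamma)$, $\partial_r\gamma=\partial_r(\gamma^2)/(2\gamma)$ and establishes by induction a structural formula for $\partial_s^k\gamma$, $\partial_r^k\gamma$ as sums of rational functions of $s$ (bounded on the compact set $I^1_{trap}$, where $s-M$ is bounded below) times powers of $r$ divided by powers of $\gamma$, and then concludes exactly as you do from $\gamma\gtrsim r$, so your Fa\`a di Bruno bookkeeping is just that induction written explicitly. One throwaway remark should be dropped: the claim that at most one $k_i$ can be $\geq 2$ is false (e.g.\ the partition $2+2$ of $k=4$), but it is also unnecessary, since with $p$ parts equal to $1$ and $q$ parts equal to $2$ one has $j=p+q$, $k=p+2q$, so each term is $O(r^{1-2j+p})=O(r^{1-k})$ for the $r$-derivatives, and the same count gives $O(r^{1-2j+p})\leq O(r^{-1})$ for the $s$-derivatives once $k\geq 2$.
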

\begin{proof}
In each case, this follows from writing
\begin{equation}
\partial_{s}\gamma = \frac{\partial_{s}(\gamma^2)}{2\gamma}= \frac{r+Q^1(s)}{\gamma}, \qquad \partial_r\gamma = \frac{\partial_r(\gamma^2)}{2\gamma} = \frac{r+Q^2(s)}{\gamma},
\end{equation}
where $Q_1, Q_2$ are rational functions which are defined everywhere in $[r_+, \infty)$. The result in each case follows from an inductive approach similar to the proof of the bound \eqref{est:rderiv3}. To prove \eqref{est:dreGamma} it suffices to show that derivatives in $s$ may be written in the form
\begin{equation}
\partial_{s}^k\gamma = \sum_{j=2}^k\sum_{i=2}^j\frac{Q^1_{i,j,k}(s)r^{j}}{\gamma^{2j-1}} + \sum_{j=1}^k\sum_{i=1}^j \frac{\widetilde{Q}^1_{i,j,k}(s) r^{i-1}}{\gamma^{2j-1}},
\end{equation}
for $k \geq 2$, where $Q^1_{i,j,k}(s)$, $\widetilde{Q}^1_{i,j,k}(s)$ are rational functions which are continuous for $s \in [r_+, \infty)$ and therefore bounded for $s \in I^1_{trap}$. Similarly, to prove \eqref{est:drGamma} we write
\begin{equation}
\partial_r^k\gamma = \sum_{0\leq i \leq j \leq k}\frac{Q^2_{i,j,k}(s)r^{i}}{\gamma^{k+j-1}},
\end{equation}
where $Q^2_{i,j,k}(s)$ are finite rational functions which are also bounded in magnitude for $s \in I^1_{trap}$. Both results then follow from a straightforward inductive argument.
\end{proof}
\begin{proposition}[Asymptotic bounds on $\ve$ for frequencies in $\mathcal{G}^\prime \setminus \{(0,0)\}$]\label{prop:vetabounds}
Writing $\ve = \ve(r, \eta)$ as defined in \eqref{def:veta}, $\ve$ satisfies the uniform bounds
\begin{equation}
|\partial_r^k(\ve - (r+\re))|\leq C_k r^{-1-k}
\end{equation}
and 
\begin{equation}
|\partial_\eta^k(\ve - (r+\re))|\leq C_k r^{-1}
\end{equation}
\end{proposition}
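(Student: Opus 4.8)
The plan is to reduce the whole statement to the single algebraic identity
\[
\ve(r,\eta) - (r+\re) \;=\; \frac{\ve^2 - (r+\re)^2}{\ve + (r+\re)} \;=\; \frac{-\,4M\re\,\Dz\,(\re-M)^{-2}}{\ve + (r+\re)} \;=:\; \frac{-\,\mathcal{B}(\eta)}{\ve + (r+\re)} \,,
\]
which follows immediately from the definition \eqref{def:veta} of $\ve$ together with the notation $\Dz = \Delta(\re)$. Two observations make this useful. First, $\mathcal{B}(\eta) = 4M\re\,\Dz\,(\re-M)^{-2}$ does not depend on $r$, and by Corollary \ref{prop:rebounds} it is a smooth function of $\eta$ on a compact set, with $\re$ taking values in the compact subset $I^1_{trap}$ of $(r_+,\infty)$; since $(\re-M)^2$ is therefore bounded below, $\mathcal{B}$ and all its $\eta$-derivatives are bounded uniformly over $(\omega,m)\in\mathcal{G}'\setminus\{(0,0)\}$. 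Second, the lower bound \eqref{est:vebound} gives $\ve \geq \big(\tfrac{r_+-M}{r_+}\big)(r+\re)$, so the denominator $G := \ve + (r+\re)$ obeys $c\,r \leq G \leq C\,r$ for constants depending only on $a,M$; moreover, applying Propositions \ref{prop:drGamma} and \ref{prop:rderiv1} through $\ve(r,\eta) = \gamma(r,\re(\eta))$ — and using the $\gamma$-bounds \eqref{est:dreGamma}, \eqref{est:drGamma} and the $\re$-bounds \eqref{est:rderiv2}, \eqref{est:rderiv3} — one gets $|\partial_r^k G| \leq C_k r^{1-k}$ and $|\partial_\eta^k G| \leq C_k$ for all $k \geq 1$ (the only $\partial_s^i\gamma$-term not gaining a factor $r^{-1}$ occurs at $i=1$, and it is harmless because it is multiplied by bounded $\re$-derivatives).

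For the radial derivatives I would differentiate $\ve - (r+\re) = -\mathcal{B}/G$ in $r$, using that $\partial_r\mathcal{B}=0$. By the Fa\`a di Bruno formula, $\partial_r^k(1/G)$ is a finite sum of terms $(\partial_r^{a_1}G)\cdots(\partial_r^{a_p}G)\,G^{-(p+1)}$ with each $a_i \geq 1$ and $\sum_i a_i = k$; each numerator is then $\lesssim \prod_i r^{1-a_i} = r^{p-k}$ and the denominator is $\gtrsim r^{p+1}$, so $|\partial_r^k(1/G)| \lesssim r^{-1-k}$ and hence $|\partial_r^k(\ve - (r+\re))| = \mathcal{B}\,|\partial_r^k(1/G)| \lesssim r^{-1-k}$. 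The case $k=0$ is even more direct: $|\ve - (r+\re)| = \mathcal{B}/G \lesssim r^{-1}$.

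For the frequency derivatives I would expand $\partial_\eta^k(\mathcal{B}/G) = \sum_{j=0}^k \binom{k}{j}(\partial_\eta^j\mathcal{B})\,\partial_\eta^{k-j}(1/G)$. Since $|\partial_\eta^j\mathcal{B}| \leq C_j$, and since Fa\`a di Bruno together with $|\partial_\eta^i G| \leq C_i$ ($i\geq1$) and $G \gtrsim r$ gives $|\partial_\eta^l(1/G)| \lesssim r^{-1}$ for $l=0$ and $|\partial_\eta^l(1/G)| \lesssim r^{-2}$ for $l\geq1$, every term in the sum is $\lesssim r^{-1}$, yielding $|\partial_\eta^k(\ve - (r+\re))| \lesssim r^{-1}$. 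The one genuinely delicate point throughout is the bookkeeping of the chain rule through the implicitly defined $\re(\eta)$ when transferring the derivative bounds of Propositions \ref{prop:drGamma} and \ref{prop:rderiv1} to $G$, but this is an inductive argument of exactly the same shape as the ones already carried out in the proofs of \eqref{est:rderiv3} and Proposition \ref{prop:drGamma}, so no new ideas are required.
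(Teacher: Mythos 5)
Your proposal is correct and follows essentially the same route as the paper: the paper's proof is precisely an induction based on the identity \eqref{id:gammaAsymptotic} (your formula for $\ve-(r+\re)$) combined with Propositions \ref{prop:rebounds}, \ref{prop:rderiv1} and \ref{prop:drGamma}, and your Leibniz/Fa\`a di Bruno bookkeeping is just that induction written out explicitly.
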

\begin{proof}
In each case this is a straightforward induction proof using the identity
\begin{equation}\label{id:gammaAsymptotic}
\ve - (r+\re) = \frac{-4M\re\tfrac{\Dz}{(\re - M)^2}}{\ve + (r+\re)},
\end{equation}
along with Propositions \ref{prop:rebounds}, \ref{prop:rderiv1}, and \ref{prop:drGamma}.
\end{proof}


\subsubsection{Concluding the proof of Proposition \ref{prop:vfDerUpperBound} for $\f$}\label{sec:fnatUpperBound}
From \eqref{def:fetaTrue} one easily obtains
\begin{equation}
v\f = \frac{r^2-\re^2}{r^2+a^2} + \frac{(r-\re)(\ve-(r+\re))}{r^2+a^2}.
\end{equation}
The estimates (\ref{est:drf}) and \eqref{est:detafPrelim} now follow directly from elementary calculus using the bounds from Corollary \ref{prop:rebounds}, Proposition \ref{prop:rderiv1} and Proposition \ref{prop:vetabounds}.

To prove \eqref{est:detafFinal}, we note that when $r \in I^1_{trap}$, the result follows from \ref{est:detafPrelim}, along with compactness of $I^1_{trap}$. For $r \notin I^1_{trap}$, we take the expansion
\begin{equation}\label{id:fnatminusf0}
\f - \frac{\Gamma}{\Delta^{1/2}}h_0 = \frac{(1-h_0^2)\tfrac{\Gamma^2}{\Delta} - \tfrac{\Gz^2}{\Dz}}{\f + \tfrac{\Gamma}{\Delta^{1/2}}h_0} = v\left(\frac{v_0^{-2}\left(1-\tfrac{\eta}{r^2+a^2}\right)^2 - \tfrac{\Gz^2}{\Dz}}{v\f + \left(1-\tfrac{\eta}{r^2+a^2}\right)h_0}\right).
\end{equation}
It follows from Corollary \ref{prop:rebounds} that $r_0$, the point at which $h_0$ vanishes, is in the interior of $I^1_{trap}$. Since $h_0' > 0$, it follows from Lemma \ref{prop:GpBounds} that there exists a constant $c_1>0$ depending on $a, M, \ximin, \ximax$ such that, for $r\notin I^1_{trap}$, 
\begin{equation}
\left|\left(1-\tfrac{\eta}{r^2+a^2}\right)h_0\right|\geq c_1
\end{equation}
Since $v\f$ and $h_0$ have the same sign outside $I^1_{trap}$,
\begin{equation}
\left|v\f + \left(1-\tfrac{\eta}{r^2+a^2}\right)h_0\right| \geq c_1 \, .
\end{equation}
Additionally, compact support of $I^1_{trap}$ implies that for some $c_2>0$ depending on $a, M, \ximin, \ximax$,
\begin{equation}
\Dz > c_2.
\end{equation}
An inductive argument using Corollary \ref{prop:rebounds}, Proposition \ref{prop:rderiv1}, Proposition \ref{prop:vetabounds}, and the bound \eqref{est:detafPrelim} implies
\begin{equation}
\left|\partial_\eta^k\left(\frac{v_0^{-2}\left(1-\tfrac{\eta}{r^2+a^2}\right)^2 - \tfrac{\Gz^2}{\Dz}}{v\f + \left(1-\tfrac{\eta}{r^2+a^2}\right)h_0}\right)\right|\leq C_k \, .
\end{equation}
To prove \eqref{est:drfFinal2}, we again restrict to the exterior of $I^1_{trap}$, expand \eqref{id:fnatminusf0} and apply the bounds
\begin{align}
|\partial_{\rs}^k v| &\lesssim \frac{\Delta^{1/2}}{r^{2+k}},\label{est:drsv} \\
\big|\partial_{\rs}^k \big(v_0^{-2}\big(1-\tfrac{\eta}{r^2+a^2}\big)^2 - \tfrac{\Gz^2}{\Dz}\big)\big| & \lesssim r^{-2- k}, \\
\big|\partial_{\rs}^k \big(v\f + \big(1-\tfrac{\eta}{r^2+a^2}\big)h_0\big)\big| &\lesssim r^{-2-k},
\end{align}
for $k \geq 1$, the latter two of which follow from \eqref{est:drf}.

\subsubsection{Concluding the proof of Proposition \ref{prop:vfDerUpperBound} for $f_\sharp$}

We first note that, if $\hsharp = 1$, the bounds \eqref{drf2}-\eqref{est:drfFinal2} directly follow from the identity
\begin{equation}
\fsharp = \frac{1}{v}\left(1-\frac{\eta}{r^2+a^2}\right) = \frac{r^2+a^2-\eta}{\Delta^{1/2}}
\end{equation}
and the inequality
\begin{equation}
|1-h_0|\lesssim Cr^{-2}
\end{equation}
away from the horizon. Additionally, for $\hsharp = \h$, the bounds \eqref{drf2}-\eqref{est:drfFinal2} follow from the corresponding bounds \eqref{est:drf}-\eqref{est:drfFinal}, as Lemma \ref{prop:GpBounds} implies that $|\eta|$ is bounded above and below in $\mathcal{G}\cap\mathcal{G}'$. It therefore suffices to prove the result for the case $0 < \wchi_2 < 1$. 

In the support of $\chisharp$ and for frequencies in $\mathcal{G}\cap\mathcal{G}'$, $\f$ and $\fsharp$ may be uniformly bounded below by a constant $c>0$ which is independent of frequency via the inequality
\begin{equation}
\fsharp \geq \f \geq \f\big|_{r = \re + \tfrac14\delta^\sharp} \geq \tfrac14\delta^\sharp\left(\inf_{r \leq \rmax+\frac14\delta^\sharp}\partial_r \f \right) \geq c,
\end{equation}
which we get by integrating $\f'$ from $\re$, noting that $\f$ vanishes at $\re$ and that \eqref{est:fprimebound} along with Lemma \ref{prop:GpBounds} gives an absolute lower bound on $\partial_r\f$. It follows that there exists a constant $C > 0$ such that
\begin{equation}
C^{-1} \leq v\f \leq C, \qquad C^{-1}\leq v\fsharp \leq C
\end{equation}
in the support of $\chisharp$ and for frequencies in $\mathcal{G}\cap\mathcal{G}'$.

 Consequently, \eqref{drf2} follows from an inductive argument using the expansion
\begin{equation}
(v\fsharp)' = \frac{v\f}{v\fsharp}(v\f)' + \frac12\frac{1}{v\fsharp}(v^2C_\eta)',
\end{equation}
and the estimate \eqref{est:drf}, where 
\begin{equation}
C_\eta = (1-\wchi_2)\inf_{r\in (r_+, \infty)}\left(\frac{(\omega-\omega_r)^2}{\omega^2v^2}\right), \qquad \fsharp^2 = \f^2 + C_\eta.
\end{equation}

The bound \eqref{est:detafPrelim2} follows from a similar argument using the expansion
\begin{equation}
\partial_\eta \fsharp = \frac{v\f}{v\fsharp}\partial_\eta\f + \frac{v}{2v\fsharp}\partial_\eta C_\eta,
\end{equation}
using the fact that $C_\eta$ may be written as a smooth function of $\eta$ and $\re$, and that Lemma \ref{prop:GpBounds} and Proposition \ref{basicineq} ensure that $\eta$ and therefore $\re$ are taken in a bounded domain, so in particular $C_\eta$ and its derivatives are globally bounded.

For $k \geq 1$, the estimate \eqref{est:detafFinal2} follows immediately from \eqref{est:detafPrelim2}. For $k=0$, \eqref{est:detafFinal2} follows from \eqref{est:detafFinal} and the bound
\begin{equation}\label{est:fsharpminusfexpansion}
\fsharp - \f = \frac{vC_\eta}{v(\fsharp+\f)} \lesssim v.
\end{equation}

Finally, the estimate \eqref{est:drfFinal2} follows from \eqref{est:drfFinal} and \eqref{est:fsharpminusfexpansion}, using the inequalities \eqref{est:drsv} as well as \eqref{est:drf} and \eqref{drf2}.
\subsection{Additional regularity results}\label{sec:addregest}

We now prove some auxiliary bounds, including Statement \eqref{fNaturalCondition3} in Proposition \ref{thm:WProperties} and in \ref{thm:WnatProperties}, as well as a bound which we will use in Section \ref{sec:boundaryterms} to establish boundary conditions on the separated equation.

\subsubsection{Completing the proof of Propositions \ref{thm:WProperties} and \ref{thm:WnatProperties}}

In order to establish suitable regularity at the horizon, we write \eqref{def:Phnatural} as
\begin{equation}\label{est:PLfnatTerms}
-i\omega\Big(v\big(\tfrac{v'}{v^2}\big)'\f + \f'' \Big) = -i\omega v \Rs\Big(v^{-2}\Rs(v\f)\Big).
\end{equation}
Observing $v^{-2}\Rs = (r^2+a^2)\partial_r$ and applying the bound \eqref{est:drf} gives the bound
\begin{equation}
\Big|v\omega\big(\tfrac{v'}{v^2}\big)'\f + \omega\f''\Big|\leq C |\omega|v^3 \, ,
\end{equation}
and similarly, \eqref{drf2} implies
\begin{equation}
\Big|v\omega\big(\tfrac{v'}{v^2}\big)'\fsharp + \omega\fsharp''\Big|\leq C |m|v^3 \, .
\end{equation}
Direct calculation gives the bounds
\begin{equation}\label{est:PlBoundnof}
\big|v\big(\tfrac{v'}{v^2}\big)'\big|\leq C r^{-3}, \qquad \big|v\big(\tfrac{V_1}{v^2}\big)'\big| \leq C v^3 \, .
\end{equation}
The estimates \eqref{est:Plupperbound} and \eqref{est:PLSuperradBounds} directly follow.
%
%
\subsubsection{The asymptotic behavior of $\Wsc$}
It will be necessary later to prove that $\Wsc$ approaches a first-order linear differential operator at spatial infinity, up to a pseudodifferential error term which decays in $r$. In order to do this, we now show pointwise error bounds on the difference between $\Wsc-\oW$ in frequency space.
\begin{corollary} \label{cor:asymptoticWinfinity}
For $r \geq \rmax+\delta^\sharp$, the operator $\Wsc-\oW$ satisfies for each $(\omega, m)$ the bounds
\begin{align}
|\Wsc\ph-\oW\ph|&\lesssim\frac{1}{r}(|m|+|\omega|)|\ph|, \\
|\Rs\Wsc\ph - \Rs\oW\ph|&\lesssim \frac{1}{r}(|m|+|\omega|)|\Rs\ph| + \frac{1}{r^2}(|m|+|\omega|)|\ph| \, .
\end{align}
\end{corollary}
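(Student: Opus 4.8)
The plan is to reduce the statement to the pointwise bounds on $\f$ (and on $f_\sharp$) already established in Proposition \ref{prop:vfDerUpperBound}, using the fact that in the region $r \geq \rmax+\delta^\sharp$ the cutoffs appearing in the definition of $\Wsc$ have collapsed to something explicit. Concretely, for $r \geq \rmax+\delta^\sharp$ we have $\chi_\sharp = 1$, so from \eqref{def:CommutatorAbstract}, \eqref{Wnaturalrel}, \eqref{Wsharprel} and the relation \eqref{freli}--\eqref{def:fcombined} one may write $\Wsc\ph = v^{-1}\Rs\ph - i\omega f\ph$ with $f = \wchi_1\f + (1-\wchi_1)\fsharp$, a function of $(\omega,m,\rs)$ only. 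On the other hand $\oW\ph = v^{-1}\Rs\ph - i(\omega-\omega_r)\ph = v^{-1}\Rs\ph - i\omega\,\tfrac{r^2+a^2-\eta}{r^2+a^2}\ph$, which we rewrite (using $v = \Delta^{1/2}/(r^2+a^2)$) as $\oW\ph = v^{-1}\Rs\ph - i\omega\, v\,\tfrac{r^2+a^2-\eta}{\Delta^{1/2}}\ph$. Thus the whole difference is carried by a single scalar coefficient:
\begin{equation}\label{pf:Wdiff}
\Wsc\ph - \oW\ph = -i\omega\left(f - v\,\tfrac{r^2+a^2-\eta}{\Delta^{1/2}}\right)\ph \, .
\end{equation}

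The first step is therefore to bound $\big|f - v\,\tfrac{r^2+a^2-\eta}{\Delta^{1/2}}\big|$ for $r \geq \rmax+\delta^\sharp$. Since $|1-h_0|\lesssim r^{-2}$ away from the horizon (as used in the proof above), it suffices to control $\big|f - v\,\tfrac{r^2+a^2-\eta}{\Delta^{1/2}}h_0\big|$ and absorb the remaining $v\,\tfrac{r^2+a^2-\eta}{\Delta^{1/2}}(1-h_0) = \tfrac{r^2+a^2-\eta}{r^2+a^2}(1-h_0)$, which is $\lesssim r^{-2}$. For the main term I would split on the cutoff $\wchi_1$: where $\wchi_1 = 1$ we have $f = \f$ and invoke \eqref{est:detafFinal} (case $k=0$), giving $|f - \tfrac{\Gamma}{\Delta^{1/2}}h_0| \lesssim \tfrac{\Delta^{1/2}}{r^2+a^2} \lesssim r^{-1}$; where $0 \leq \wchi_1 < 1$ we are in $\mathcal{G}\cap\mathcal{G}'$, so $f$ is a convex combination of $\f$ and $\fsharp$, and both \eqref{est:detafFinal} and \eqref{est:detafFinal2} (with $|\eta|$ bounded on $\mathcal{G}\cap\mathcal{G}'$ by Lemma \ref{prop:GpBounds}) give the same $\lesssim r^{-1}$ bound, which is preserved under taking the convex combination. (One checks that the superradiant regime $(\mathcal{G}')^c$, where $\wchi_1 = 0$ and $\hsharp = 1$, is irrelevant here because there $\eta$ is bounded and $f - \tfrac{\Gamma}{\Delta^{1/2}}h_0 = \tfrac{\Gamma}{\Delta^{1/2}}(1-h_0)$ is again $\lesssim r^{-1}$; alternatively one notes $\rmax+\delta^\sharp$ lies beyond $R^\sharp$.) Plugging into \eqref{pf:Wdiff} and crudely bounding $|\omega| \leq |m|+|\omega|$ yields the first claimed inequality.

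For the derivative estimate, apply $\Rs$ to \eqref{pf:Wdiff}:
\begin{equation}\label{pf:Wdiffder}
\Rs\Wsc\ph - \Rs\oW\ph = -i\omega\left(f - v\,\tfrac{r^2+a^2-\eta}{\Delta^{1/2}}\right)\Rs\ph \;-\; i\omega\,\Rs\!\left(f - v\,\tfrac{r^2+a^2-\eta}{\Delta^{1/2}}\right)\ph \, .
\end{equation}
The first term on the right is handled exactly as before, producing $\lesssim \tfrac1r(|m|+|\omega|)|\Rs\ph|$. For the second term I would use the $k=1$ version of the same estimates: \eqref{est:drfFinal} gives $\big|\Rs(\f - \tfrac{\Gamma}{\Delta^{1/2}}h_0)\big| \lesssim \tfrac{\Delta^{1/2}}{r^3}\lesssim r^{-2}$, \eqref{est:drfFinal2} gives the same for $\fsharp$ (up to the bounded factor $|\eta|$), and the derivative of the combination $\wchi_1\f + (1-\wchi_1)\fsharp$ produces an extra $\Rs\wchi_1 \cdot (\f - \fsharp)$ term — but $\Rs\wchi_1 = 0$ since $\wchi_1$ depends only on $\omega/m$, so nothing new appears; the leftover pieces $\Rs\big(v\tfrac{r^2+a^2-\eta}{\Delta^{1/2}}(1-h_0)\big)$ are likewise $\lesssim r^{-2}$ by direct calculus. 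Hence the second term is $\lesssim \tfrac{1}{r^2}(|m|+|\omega|)|\ph|$, and adding the two gives the second inequality.

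The only mild subtlety — and the step I would be most careful about — is bookkeeping the cutoff structure of $\Wsc$ in this asymptotic region: one must confirm that for $r\geq\rmax+\delta^\sharp$ the factor $\chi_\sharp$ is identically $1$ (immediate from \eqref{def:chisharp} and \eqref{def:RDoubleSharp} since $R^{\musDoubleSharp} < \rmax + \tfrac14\delta^\sharp$), and that writing $\Wsc\ph = v^{-1}\Rs\ph - i\omega f\ph$ with $f$ as in \eqref{def:fcombined} is legitimate there, which follows from \eqref{freli}. Once this reduction is in place everything is elementary calculus plus the already-proven decay rates of Proposition \ref{prop:vfDerUpperBound}, with no genuine obstacle.
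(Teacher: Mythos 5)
There is a genuine gap, and it sits exactly where you flagged the "mild subtlety": the identification of $\oW$. You take $\oW\ph = v^{-1}\Rs\ph - i(\omega-\omega_r)\ph$, so that your identity \eqref{pf:Wdiff} compares $f$ with $v\,\tfrac{r^2+a^2-\eta}{\Delta^{1/2}} = \tfrac{r^2+a^2-\eta}{r^2+a^2}$, a \emph{bounded} function of $r$. But $f$ itself grows linearly: already for $m=0$ one has $\f = v^{-1}h_0 = \tfrac{(r^2+a^2)h_0}{\Delta^{1/2}} \sim r$, so with your $\oW$ the difference is $-i\omega(\f-1)\ph$, of size $\sim r\,|\omega|\,|\ph|$ rather than $\lesssim r^{-1}|\omega||\ph|$; the corollary is simply false for that operator. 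Your argument only appears to close because of a silent switch in the key step: your decomposition requires a bound on $\big|f - v\,\tfrac{r^2+a^2-\eta}{\Delta^{1/2}}h_0\big|$, but you then invoke \eqref{est:detafFinal}/\eqref{est:drfFinal}, which control $\big|f - \tfrac{r^2+a^2-\eta}{\Delta^{1/2}}h_0\big|$ — a different quantity, off by a factor $v\sim r^{-1}$ on the subtracted term. The operator the corollary (and its application in Proposition \ref{prop:boundaryterms}, via \eqref{sqi2} and Plancherel) needs is the Fourier transform of the physical vector field $W_0$ of \eqref{W0def}, namely $\oW\ph = v^{-1}\Rs\ph - i\,\tfrac{h_0(\omega-\omega_r)}{v}\ph = v^{-1}\Rs\ph - i\omega\,\tfrac{(r^2+a^2-\eta)h_0}{\Delta^{1/2}}\ph$; note both the $v^{-1}$ weight on the zeroth-order part and the $h_0$. (The display \eqref{def:oWfreq} omits the $h_0$ and may have misled you, but even that version keeps the $v^{-1}$, which is what your formula drops and what kills the decay.)

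Once $\oW$ is taken correctly, the proof is essentially the one the paper gives, and it is shorter than your detour: for $r\geq\rmax+\delta^\sharp$ one has $\chisharp=1$, so by \eqref{def:CommutatorAbstract}, \eqref{Wnaturalrel}, \eqref{Wsharprel} the difference is $\Wsc\ph-\oW\ph = -i\omega\big[\wchi_1\big(\f-\tfrac{(\omega-\omega_r)h_0}{\omega v}\big)+(1-\wchi_1)\big(\fsharp-\tfrac{(\omega-\omega_r)h_0}{\omega v}\big)\big]\ph$, and the subtracted symbol is exactly $\tfrac{(r^2+a^2-\eta)h_0}{\Delta^{1/2}}$, i.e.\ precisely the quantity estimated in \eqref{est:drfFinal} and \eqref{est:drfFinal2} — no triangle inequality through $1-h_0$ is needed. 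The $k=0$ cases give the first bound and, since $\Rs\wchi_1=0$, the product rule plus the $k=1$ cases give the second; the factor $|\eta|$ in \eqref{est:drfFinal2} is converted into $|m|$ using $|\omega\eta|=a|m|$ on the support of $1-\wchi_1\subset\mathcal{G}$, which is where the $(|m|+|\omega|)$ on the right-hand side comes from (a point your bookkeeping of "bounded $\eta$" also glosses over, since $\eta$ is not uniformly bounded on all of $\mathcal{G}$). Your remaining structural steps (splitting on $\wchi_1$, using $\chisharp=1$, differentiating the coefficient) match the paper; the fix needed is the correct $\oW$ and the removal of the inconsistent $v$ in the comparison term.
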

\begin{proof}
It follows from the definition \eqref{def:CommutatorAbstract} that
\begin{equation}\label{id:WdifferenceBasic}
\Wsc\ph-\oW\ph = \wchi_1(\Wsv-\oW) + (1-\wchi_1)\chisharp\Wsl - (1-\wchi_1)\oW.
\end{equation}
For $r \geq \rmax+\delta^\sharp$, $\chisharp = 1$, so in this range the identity \eqref{id:WdifferenceBasic} becomes
\begin{equation}
\Wsc\ph-\oW\ph = -i\omega\wchi_1\left(\f -\frac{(\omega-\omega_r)h_0}{\omega v}\right)\ph  -i\omega(1-\wchi_1)\left(\fsharp -\frac{(\omega-\omega_r)h_0}{\omega v}\right)\ph.
\end{equation}
The result then follows directly from \eqref{est:drfFinal} and \eqref{est:drfFinal2} 
\end{proof}
We may also use the expansion \eqref{id:WdifferenceBasic} along with Proposition \ref{prop:vfDerUpperBound} to show a pointwise bound in frequency space for $r < \rmin$, noting that here $\chisharp = 0$ for all frequencies in $\mathcal{G}$. 
\begin{corollary}\label{prop:WMinusWZeroBounds}
For $r < \rmin$, the operator $\Wsc$ satisfies the pointwise bounds
\begin{align}\label{est:WMinusWZeroBounds}
|\Wsc\ph -\oW\ph|&\lesssim v(|\omega|+|m|)|\ph| + |\oW\ph|, \\
|\Rs\Wsc\ph -\Rs\oW\ph|&\lesssim |\Rs\oW\ph| + v(|\omega|+|m|)|\Rs\ph| +  \tfrac{v}{r}(|\omega|+|m|)|\ph|.
\end{align}
\end{corollary}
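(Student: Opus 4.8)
The plan is to chase the definition of $\Wsc$ and recycle the algebraic identity \eqref{id:WdifferenceBasic} established in the proof of Corollary \ref{cor:asymptoticWinfinity}, namely
\[
\Wsc\ph-\oW\ph = \wchi_1(\Wsv-\oW)\ph + (1-\wchi_1)\chisharp\Wsl\ph - (1-\wchi_1)\oW\ph .
\]
The key point is that for $r<\rmin$ the cutoff $\chisharp$ vanishes identically at \emph{every} frequency in $\mathcal{G}$ (as recorded just before the statement), while at every frequency where $\wchi_1<1$ one necessarily has $(\omega,m)\in\mathcal{G}$ — this is immediate on comparing the supports of $\wchi_1$ in \eqref{def:wchi12} with \eqref{def:GCalDef}. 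Hence the middle term drops out for $r<\rmin$, leaving
\[
\Wsc\ph-\oW\ph = \wchi_1(\Wsv-\oW)\ph - (1-\wchi_1)\oW\ph ,
\]
and, since $\wchi_1$ depends only on $(\omega,m)$, also $\Rs(\Wsc\ph-\oW\ph)=\wchi_1\,\Rs\big((\Wsv-\oW)\ph\big)-(1-\wchi_1)\Rs(\oW\ph)$. The $(1-\wchi_1)$-terms are bounded by $|\oW\ph|$ and $|\Rs\oW\ph|$ respectively, which already produces the corresponding contributions on the two right-hand sides.

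It remains to estimate $\wchi_1(\Wsv-\oW)\ph$, where we may assume $\wchi_1\neq 0$, so $(\omega,m)\in\mathcal{G}'$ and the bounds of Proposition \ref{prop:vfDerUpperBound} for $\f$ are available. Exactly as in the proof of Corollary \ref{cor:asymptoticWinfinity}, one has, using $\tfrac{\omega-\omega_r}{\omega v}=\tfrac{r^2+a^2-\eta}{\Delta^{1/2}}$,
\[
(\Wsv-\oW)\ph = -i\omega\Big(\f-\tfrac{(\omega-\omega_r)h_0}{\omega v}\Big)\ph = -i\omega\Big(\f-\tfrac{r^2+a^2-\eta}{\Delta^{1/2}}h_0\Big)\ph .
\]
By \eqref{est:drfFinal} with $k=0$ the coefficient is $\lesssim \Delta^{1/2}r^{-2}\lesssim v$ (using that $r$ is bounded on $\{r<\rmin\}$), so $|(\Wsv-\oW)\ph|\lesssim v|\omega||\ph|\le v(|\omega|+|m|)|\ph|$, which is the first claimed bound. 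For the $\Rs$-derivative we distribute $\Rs$ onto the coefficient and onto $\ph$, estimating the coefficient by \eqref{est:drfFinal} with $k=1$ (so $\lesssim \Delta^{1/2}r^{-3}\lesssim v r^{-1}$) and with $k=0$ respectively; this yields $|\Rs((\Wsv-\oW)\ph)|\lesssim v(|\omega|+|m|)|\Rs\ph|+\tfrac{v}{r}(|\omega|+|m|)|\ph|$. Adding the two contributions from the previous paragraph proves both inequalities.

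This is essentially a bookkeeping argument and there is no substantial obstacle; the only step requiring attention is the reduction above — verifying that $\chisharp$ and the $\Wsl$-branch of $\Wsc$ are genuinely inert for $r<\rmin$, so that $\h$ (equivalently $\f$) and hence Proposition \ref{prop:vfDerUpperBound} are the only regularity inputs needed. This in turn rests on the inclusion of the support of $1-\wchi_1$ in $\mathcal{G}$ together with the horizon-avoiding support of $\chisharp$.
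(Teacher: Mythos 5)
Your proof is correct and follows the same route the paper indicates: the expansion \eqref{id:WdifferenceBasic}, the vanishing of $\chisharp$ (hence of the $\Wsl$-branch) for $r<\rmin$ at all frequencies in $\mathcal{G}$, and the bounds \eqref{est:drfFinal} of Proposition \ref{prop:vfDerUpperBound} with $k=0,1$ applied to $\omega\wfe=\omega\f-(\omega-\omega_r)h_0/v$, noting $\Delta^{1/2}r^{-2}\lesssim v$ on the bounded region $r<\rmin$. The only (harmless) edge case not spelled out is $(\omega,m)=(0,0)$, where both $\Wsc$ and $\oW$ reduce to $v^{-1}\Rs$ so the estimates hold trivially.
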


\section{The proof of the main theorem}\label{sec:CEE}
	In this section we prove Theorem \ref{thm:Main}. We first prove (\ref{pse}) with the additional assumptions (\ref{aux1}) and (\ref{aux2}) and remove these assumptions at the very end of Section \ref{sec:MainTheoremProof} below. We recall the cut-off version $\Psi_{\chi,\mathcal{T}}$ (defined in (\ref{def:WaveEqCutoff})) and that we established future integrability of $\Psi_{\chi,\mathcal{T}}$ in Proposition \ref{prop:suffi}. By the remark at the end of Section \ref{sec:comof}  it suffices to prove  (\ref{pse}) for $\Psi_{\chi,\mathcal{T}}$ instead of $\Psi$ with a constant independent of $\mathcal{T}$ and this is what we will do.

\subsection{The microlocal energies} \label{sec:energies}

In this subsection, we define the microlocal energies used in the proof. The energies introduced in Section \ref{sec:energy3} have simple physical space analogues in the Schwarzschild case. The auxiliary energy $S_1^\sharp[\widehat{u}]$, defined in Section \ref{sec:auxenergy}, is specific to the Kerr case and controls second order derivatives both localised in space and in (superradiant) frequency.

\subsubsection{The (non)-degenerate energies and the commuted energy} \label{sec:energy3}
 For $\widehat{u}=\widehat{{\bf u}}_\chi=\sqrt{r^2+a^2} \mathcal{F}(\Psi_{\chi,\mathcal{T}})^\omega_m$, we define the following microlocal energies. We first recall the $\alpha > 0$ fixed in Theorem \ref{thm:Main}. For fixed $(\omega, m)$ we then define
\begin{equation}\label{def:S0Local}\index{S@$\mathcal{S}[\ph]$}
\mathcal{S}[\ph]_m^\omega (r_1,r_2)= \iint_{\mcA(r_1,r_2)} \big(r^{-1-\alpha}\big(\big(\tfrac{r}{r-r_+}\big)^2|\Rs\ph +i(\omega-\omega_r)\ph|^2 + |\omega\ph|^2 +|\partial_r\ph|^2 + r^{-2}|\ph|^2\big) + r^{-3}|\widehat{\slashed\nabla}\ph|^2 \big),
\end{equation}
as well as the ``$W$-commuted” energy
\begin{equation}\label{def:S1WLocal}\index{SW@$\mathcal{S}_1^W[\ph]$}
\mathcal{S}^W_1[\ph]_m^\omega (r_1,r_2) = \iint_{\mcA(r_1,r_2)} \frac{r^2+a^2}{\Delta}\left( r^{-1}|\omega\Wsc\ph|^2 +  r^{-1-\alpha}|\Rs \Wsc\ph|^2 + gv^2r^{-1}|\widehat{\slashed\nabla}\Wsc\ph|^2 \right).
\end{equation}
We need a further energy, the degenerate microlocal energy, $(\mathcal{S}_{trap})[\ph]_{m\ell}^{(a\omega)} (r_1,r_2)$, defined for the fully decomposed pieces $\widehat{u}^{(a\omega)}_{m \ell}$. We recall (\ref{eq:WaveGeneric}) and from \cite{DRSR} the function $r_{trap}=r_{trap}(\omega, m, \Lambda=\Lambda_{m\ell})$, which is uniformly supported in a compact subset $[\bar{r}_1, \bar{r}_2] \subset (r_+, \infty)$ with $\bar{r}_1, \bar{r}_2$ depending only on the parameters $M$, $a$. Defining the trapping cutoff $
\chi_{trap} = (1-\tfrac{r_{trap}}{r})^2$, we define
\begin{equation}
(\mathcal{S}_{trap})[\ph]_{m\ell}^{(a\omega)}(r_1,r_2) = \int_{r_1}^{r_2}\frac{|\partial_r\ph|^2}{r^{1+\alpha}} + \frac{|\ph|^2}{r^{3+\alpha}} + \chi_{trap} \left(\frac{\omega^2}{r^{1+\alpha}} + \frac{\Lambda}{r^3}\right) |\ph|^2 + \mathbbm{1}_{r \leq \bar{r}_1} \Delta^{-2} | u^\prime - i (\omega-\omega_r) u |^2 \big)\,dr.\\
\end{equation}
We can extend the energies to spacetime energies. We define
\begin{equation}\label{def:SGlobal}
\mathcal{S}[\ph] (r_1,r_2)= \sum_m\int_{\mathbb{R}}\mathcal{S}[\ph]_m^\omega (r_1,r_2)\, d\omega \, , 
\end{equation}
\begin{equation}\label{def:S1WGlobal}
\mathcal{S}^W_1[\ph](r_1,r_2) = \sum_m\int_{\mathbb{R}}\mathcal{S}^W_1[\ph]_m^\omega (r_1,r_2)\, d\omega \, , 
\end{equation}
\begin{equation}\label{def:S1Strapped}
\mathcal{S}_{trap} [\ph](r_1,r_2) = \sum_{m, \ell} \int_{\mathbb{R}}\mathcal{S}_{trap}[\ph]^{(a\omega)}_{m \ell} (r_1,r_2)\, d\omega \, .
\end{equation}
Now by Plancherel  (\ref{id:Plancherel4}) we have for $r_+ \leq r_1 < r_2 \leq \infty$,
\begin{equation} \label{plai}
\mathbb{I}[\Psi_{\chi,\mathcal{T}}](r_1,r_2) \lesssim \mathcal{S}[\ph] (r_1,r_2) \, , 
\end{equation}
where
\begin{equation} \label{Imaindef}
\mathbb{I}[\Psi](r_1,r_2)  = \int_{-\infty}^\infty dt \int_{r_1}^{r_2} dr \int d\omega  \left\{ r^{1-\alpha}\left(|L \Psi|^2 + \Big|\frac{r^2+a^2}{\Delta} \underline{L} \Psi\Big|^2 \right)+ r |{\slashed{\nabla}} \Psi|_{\slashed{g}}^2 +r^{-1-\alpha} |\Psi|^2 \right\} \, .
\end{equation}
Note that, by the remark at the end of Section \ref{sec:comof}, proving the  main theorem requires controlling $\mathbb{I}[\Psi_{\chi,\mathcal{T}}]= \mathbb{I}[\Psi_{\chi,\mathcal{T}}](r_+,\infty)$ by initial data (independently of $\mathcal{T}$!) with a certain loss of derivatives. In view of (\ref{plai}), we will achieve this by controlling the microlocal energy $\mathcal{S}[\ph] (r_+,\infty)$.

\subsubsection{Definition of the auxiliary energy $S_1^\sharp[\widehat{u}]$} \label{sec:auxenergy}
Let us recall from (\ref{def:RDoubleSharp}) the definition of $R^{\musDoubleSharp}$ and $\delta^\sharp$.
%
We define $g_\sharp$ to be a smooth nonnegative function of $r-R^{\musDoubleSharp}$ satisfying
\begin{equation}\label{def:gsharpdef}
g_\sharp\index{gsharp@$g_\sharp$} =  \begin{cases}
1 &   \tfrac{1}{4}\delta^\sharp \leq r-R^{\musDoubleSharp}  \leq \tfrac{1}{2}\delta^\sharp ,\\
0 & r-R^{\musDoubleSharp}\leq \tfrac{1}{8}\delta^\sharp \text{ or }r-R^{\musDoubleSharp} \geq \tfrac{5}{8}\delta^\sharp.
\end{cases}
\end{equation}
It follows from the definition \eqref{def:RDoubleSharp} that the bound \eqref{est:DominantLagrangian} holds in the support of $g_\sharp$, and that -- independently of frequency -- $g_\sharp$ is always supported in the compact interval $[\rmin, \rmax+\delta^\sharp]$. Additionally, from (\ref{def:chisharp}),
\begin{equation} \label{gcutoff}
|\chisharp'| + |\chisharp''| \leq Cg_\sharp,
\end{equation}
where $C$ is a constant depending on $\chisharp$. We finally recall from Section \ref{sec:cutoffs} that $1-\widetilde{\chi}_1$ is supported in $\mathcal{G}$ and identically equal to $1$ in $(\mathcal{G}^\prime)^c$ and define the local energies (recall (\ref{def:Wplus}))
\begin{align}
\mathcal{S}^\sharp[\ph]_m^\omega &= \iint_{\mcA} (1-\wchi_1)^2g_\sharp^2\left(|\omega\ph|^2+|m\ph|^2 + |\partial_\theta\ph|^2 + |\widehat{u}^{\prime}|^2 \right) \, ,  \nonumber \\
\Ssharp[\ph]_m^\omega &= \mathcal{S}^\sharp[\omega \ph]_m^\omega +  \mathcal{S}^\sharp[m \ph]_m^\omega +  \mathcal{S}^\sharp[W_+ \ph]_m^\omega + \mathcal{S}^\sharp[\ph]_m^\omega . \label{def:Ssharp}
\end{align}
The energy $\Ssharp[\ph]_m^\omega$ should be viewed as controlling general second derivatives of frequencies in $(\mathcal{G}^\prime)^c$ near $R^{\musDoubleSharp}$. We make the familiar definitions
\begin{align}
\mathcal{S}^\sharp[\ph] &= \sum_m\int_{\mathbb{R}}\mathcal{S}^\sharp[\ph]_m^\omega\, d\omega \, , \\
\Ssharp[\ph] &= \sum_m\int_{\mathbb{R}}\Ssharp[\ph]_m^\omega\, d\omega \, .
\end{align}


	\subsection{The degenerate microlocal estimate}\label{sec:LOT}

We recall the separated equations (\ref{def:WavePerturbedFreq}) for $(\widehat{\mathbf{u}}_\chi)^\omega_m=\sqrt{r^2+a^2} \mathcal{F}(\Psi_{\chi,\mathcal{T}})^\omega_m$ and the fully separated (\ref{eq:WaveGeneric}) for $(\widehat{\mathbf{u}}_\chi)^{(a\omega)}_{m \ell}$. Repeating the proof of Theorem 8.1 of \cite{DRSR} produces the following estimate:

\begin{proposition} \label{thm:DRSRMain}
With $r_{trap}= r_{trap} (M,a,\omega,m, \Lambda)$ defined as in \cite{DRSR}, the following estimate holds\footnote{The estimate in \cite{DRSR} is stated for a compact range of integration from parameters $R^\star_-$ close to the horizon and $R^\star_+$ close to infinity and without weights in the integrand. Using standard local estimates near the horizon and infinity, it is straightforward to extend the integrals all the way to the boundary and to include the appropriate weights.} for the $\widehat{u}=(\widehat{\mathbf{u}}_\chi)^{(a\omega)}_{m \ell}$ associated with the solution $\Psi_{\chi,\mathcal{T}}$:
\begin{align} \label{est:DRSRMain} 
(\mathcal{S}_{trap})[\ph]_{m\ell}^{(a\omega)}(r_+,\infty)  
\lesssim \varepsilon \int_{r_+}^\infty dr   \left(\frac{\Lambda}{r^3} +\frac{\omega^2}{r^{1+\alpha}}+\frac{1}{r^{3+\alpha}}\right) \big| \widehat{u}^{(a\omega)}_{m \ell}\big|^2 + \mathfrak{H}_1  + \mathfrak{H}_2 \,,
\end{align}
where $ \mathfrak{H}_1$ and $ \mathfrak{H}_2$ are defined as follows. We have
\begin{align}
 \mathfrak{H}_1 := \int_{-\infty}^\infty d\rs \,  \Hchi\cdot (f, h, y, \chi)\cdot (\ph, \ph') \, , 
 \end{align}
the right hand side being defined by equation (71) in \cite{DRSR}, where $(\Hchi)^{(a\omega)}_{m\ell}$ is the decomposition of the right hand side $\Hchi$, as appearing in \eqref{def:HDecomp}, into spheroidal harmonics.
Furthermore,
\begin{align} \label{mosta1}
\mathfrak{H}_2 := \mathbbm{1}_{bfr}|\uchi(-\infty)|^2, 
\end{align}
where, for some parameters $\omega_{\text{low}}, \omega_{\text{high}}, \epsilon_{\text{width}}$ depending only on the black hole parameters $M$ and $a$
\begin{equation} \label{mosta2}
\mathbbm{1}_{bfr} = \begin{cases}
1 & \omega_{\text{low}} \leq |\omega| \leq \omega_{\text{high}} \text{ and } \Lambda \leq \epsilon_{\text{width}}^{-1}\omega_{\text{high}}^2, \\
0 & \text{otherwise}.
\end{cases}
\end{equation}
\end{proposition}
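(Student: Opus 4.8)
The plan is to re-run the frequency-localised multiplier argument of \cite{DRSR} (the proof of their Theorem~8.1) essentially verbatim, now applied to the \emph{inhomogeneous} separated equation \eqref{eq:WaveGeneric}, that is to $\Rs\Rs\widehat{u}^{(a\omega)}_{m\ell} + (\omega^2 - V_0 - V_1)\widehat{u}^{(a\omega)}_{m\ell} = H^{(a\omega)}_{m\ell}$ with $H^{(a\omega)}_{m\ell} = (H_\chi)^{(a\omega)}_{m\ell} + (H_{\BfB})^{(a\omega)}_{m\ell}$, while keeping track of three extra features: the additional $r$-weights appearing in $(\mathcal{S}_{trap})[\widehat{u}]^{(a\omega)}_{m\ell}$, the boundary terms generated by the inhomogeneity $H$, and the genuinely new $\BfB$-dependent piece. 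Recall that for $\varepsilon=0$ and $H=0$ the \cite{DRSR} argument assembles, for each frequency triple $(\omega,m,\Lambda)$, a current built from multiplier functions $(f,h,y)$ together with a modification cutoff, selected by a case distinction over the frequency regimes, whose divergence dominates a coercive bulk term that degenerates precisely at $r=r_{trap}$ (equivalently at the maximum $r^0_{max}$ of $V_0$, cf.\ Remark~\ref{rem:Trapping}), up to a boundary term on $\mathcal{H}^+$; the boundary term at $\mathcal{I}^+$ is annihilated by the fact that $\Psi_{\chi,\mathcal{T}}$ is outgoing, and the observation that superradiant frequencies are quantitatively untrapped (our choice of $\xi_\ell,\xi_r$ and Proposition~\ref{est:TrappingInclusion}) is exactly what makes the construction possible in the superradiant regime.

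First I would import the weights and the full range of integration: the estimate in \cite{DRSR} is stated on a compact interval of $\rs$ and without powers of $r$, and extending it to all of $\rs\in(-\infty,\infty)$ with the weights of $(\mathcal{S}_{trap})[\widehat{u}]^{(a\omega)}_{m\ell}$ is routine, using a standard $r^p$-weighted multiplier near infinity and a $\Koppa^{h}$-type redshift current near the horizon, both already part of the \cite{DRSR} toolbox; this is the content of the footnote to the statement. Next, for the inhomogeneity $H_\chi$, pairing $H$ against the multiplier combination $(f,h,y,\chi)$ and integrating by parts produces exactly the expression $\int_{-\infty}^\infty d\rs\, (H_\chi)^{(a\omega)}_{m\ell}\cdot(f,h,y,\chi)\cdot(\widehat{u},\widehat{u}')$ of \cite{DRSR} eq.~(71), which we record as $\mathfrak{H}_1$; once $(H_\chi)^{(a\omega)}_{m\ell}$ is identified with the spheroidal decomposition of the time-cutoff term in \eqref{def:HDecompchi}, this term is handled exactly as there. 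The horizon boundary term is then absorbed into the coercive bulk outside the bounded-frequency range, and on the range $\{\omega_{\text{low}}\leq|\omega|\leq\omega_{\text{high}},\ \Lambda\leq\epsilon_{\text{width}}^{-1}\omega_{\text{high}}^2\}$ it is retained as $\mathfrak{H}_2=\mathbbm{1}_{bfr}|\widehat{u}(-\infty)|^2$; its eventual control by the initial data, via the quantitative mode stability of \cite{SR14}, belongs to Proposition~\ref{prop:estimateh} and Appendix~\ref{appendix:yakov} and is not part of the present estimate.

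The only genuinely new work is the contribution of $H_{\BfB}$ from \eqref{def:HDecompB}. Pairing $(H_{\BfB})^{(a\omega)}_{m\ell}$ against the multipliers yields terms of schematic form $\varepsilon\cdot(\text{coefficients of }\BfB_{\mathcal{T}})\cdot(\widehat{\chi_{\mathcal{T}}}*i\omega\widehat{u})\cdot(f\widehat{u}'+\omega\widehat{u}+\cdots)$, and I would bound these by $\varepsilon$ times the non-degenerate energy $\int_{r_+}^\infty\big(\tfrac{\Lambda}{r^3}+\tfrac{\omega^2}{r^{1+\alpha}}+\tfrac{1}{r^{3+\alpha}}\big)|\widehat{u}^{(a\omega)}_{m\ell}|^2\,dr$ on the right of \eqref{est:DRSRMain}, using the pointwise bounds \eqref{Bassumption1}--\eqref{Bassumption2} to control the coefficients of $\BfB_{\mathcal{T}}$ with the stated powers of $r$, Cauchy--Schwarz in $\rs$ against those weights, and the uniform-in-$\mathcal{T}$ bound $\|\widehat{\chi_{\mathcal{T}}}\|_{L^1_\omega}\lesssim 1$, which by Young's inequality gives $\|\widehat{\chi_{\mathcal{T}}}*i\omega\widehat{u}\|_{L^2_\omega}\lesssim\|\omega\widehat{u}\|_{L^2_\omega}$. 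A technical wrinkle is that the $\omega$-convolution does not respect the spheroidal ($\ell$-)decomposition, which I would circumvent by carrying out the estimate of the $\BfB_{\mathcal{T}}$-contribution at the level of the coarser $(\omega,m)$-decomposition \eqref{def:WavePerturbedFreq} and invoking orthogonality of the angular modes, as in Proposition~\ref{prop:estimateh}. I expect the main obstacle to be bookkeeping rather than anything conceptual: one must revisit the \cite{DRSR} multiplier construction in \emph{every} frequency regime while simultaneously carrying the two inhomogeneous pieces and the new $r$-weights, and, most importantly, make sure the $\BfB$-error lands in the \emph{non-degenerate} part of the energy — so that it can later be absorbed using the commuted estimate and the Hardy inequality \eqref{fitt3} — rather than in the degenerate bulk term, where nothing would be available to remove it.
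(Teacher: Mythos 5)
Your proposal is correct and follows essentially the same route as the paper, whose proof of this proposition is a two-line remark: repeat the multiplier argument of Theorem 8.1 of \cite{DRSR} applied to the fully separated inhomogeneous equation, record the pairings with the $H_\chi$-part of the inhomogeneity as $\mathfrak{H}_1$, and note that the $H_{\BfB}$-part produces the first term on the right-hand side. Your extra care with the $\omega$-convolution (estimating the $\BfB$-contribution at the coarser $(\omega,m)$ level after integration in $\omega$, via Young's inequality and the uniform-in-$\mathcal{T}$ bound $\lVert\widehat{\chi_{\mathcal{T}}}\rVert_{L^1_\omega}\lesssim 1$) is exactly how the estimate is meant to be read, since it is only ever invoked after summation in $m,\ell$ and integration in $\omega$ as in the subsequent Corollary.
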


\begin{proof}
We repeat the proof of Theorem 8.1 of \cite{DRSR}, i.e.~we apply exactly the same microlocal multipliers to the fully separated inhomogeneous wave equation (\ref{def:WavePerturbedFreq}) for $(\widehat{\mathbf{u}}_\chi)^{(a\omega)}_{m\ell}$. Note that $\mathfrak{H}_1$ arises from the various multipliers multiplying the $H_\chi$-part of the inhomogeneity in the proof. The $H_{\BfB}$-part of the inhomogeneity produces the first term on the right hand side of (\ref{est:DRSRMain}).
\end{proof}
We next show how to estimate the terms $\mathfrak{H}_1$ and $\mathfrak{H}_2$ appearing in Proposition \ref{thm:DRSRMain}, which of course already appear in the same form in \cite{DRSR}. To appreciate the statements, recall that the estimate of Proposition \ref{thm:DRSRMain} will eventually be summed in $m, \ell$ and integrated in $\omega$ to transform it into a physical space estimate using Plancherel. 

\begin{proposition} \label{prop:estimateh}
For $\mathfrak{H}_1$ we have for any $\delta>0$ the estimate
\begin{equation}
\int_{-\infty}^\infty \sum_{m, \ell} \mathfrak{H}_1 \leq \delta \int_{-\infty}^\infty \sum_{m, \ell}  \mathcal{S}_{trap} [\uchi]^{(a\omega)}_{m \ell} + C\delta^{-1}E_0[\Psi] \, .
\end{equation}
For $\mathfrak{H}_2$ we have the estimate
\begin{equation} \label{h2estimate}
\int_{-\infty}^\infty\sum_{m, \ell} \mathfrak{H}_2\, d\omega \leq  C\left(E_0[\Psi]+\varepsilon^2 \mathcal{S}[\uchi]\right).
\end{equation}
\end{proposition}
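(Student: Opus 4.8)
\textbf{Proof proposal for Proposition \ref{prop:estimateh}.}

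The plan is to treat the two terms separately, estimating $\mathfrak{H}_1$ by unpacking equation (71) of \cite{DRSR} and $\mathfrak{H}_2$ by invoking the quantitative mode stability result (Appendix \ref{appendix:yakov}), in both cases after summing in $m,\ell$ and integrating in $\omega$ and returning to physical space via Plancherel.

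For $\mathfrak{H}_1$, recall that this term is the contribution of the inhomogeneity $\Hchi$ (defined in \eqref{def:HDecompchi}) paired against the various multipliers $(f,h,y,\chi)$ used in the proof of Theorem 8.1 of \cite{DRSR}; schematically it is a sum of integrals $\int d\rs\, \Hchi \cdot (\text{multiplier weight}) \cdot (\ph,\ph')$. The key structural fact is that $\Hchi$ is supported only in $\widetilde{\tau}\in[0,1]$ (since the terms $2\nabla^\alpha\chi\nabla_\alpha\Psi_\mathcal{T}$, $(\Box\chi)\Psi_\mathcal{T}$, $\varepsilon\chi_\mathcal{T}\Psi_\mathcal{T}(\BfB\chi-\BfB^0\chi)$ are all supported there, as noted after \eqref{def:WaveEqCutoff}), so after transforming back to physical space by Plancherel these contributions are localised near initial data. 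One splits each product by Cauchy--Schwarz with a parameter $\delta$: the factors involving $(\ph,\ph')$ paired with the multiplier weights are absorbed into a $\delta$-multiple of $\mathcal{S}_{trap}[\uchi]^{(a\omega)}_{m\ell}$ (after noting that the $\rs$-weights in the multipliers are, up to constants depending only on $M,a$, dominated by the non-degenerate weights appearing in $\mathcal{S}_{trap}$ on the relevant $r$-ranges), while the factors involving $\Hchi$ contribute $C\delta^{-1}$ times a physical space integral of $\Hchi$ against these weights; by Plancherel and the support property this is controlled by a local energy estimate near $\widetilde{\Sigma}_0$, which by Proposition \ref{prop:basicgronwall} (with $\mathsf{T}=1$, so the constant $C_{1,\BfB}$ depends only on $\BfB$) and the identity $E_0[\mathfrak{D}^N\Psi_\mathcal{T}]=E_0[\mathfrak{D}^N\Psi]$ is bounded by $CE_0^1[\Psi]$ — in fact by $E_0[\Psi]$ at this lowest order. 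This is exactly the argument referenced in Section 9.6 of \cite{DRSR}, and the only new wrinkle is the $\varepsilon\chi_\mathcal{T}\Psi_\mathcal{T}(\BfB\chi-\BfB^0\chi)$ term, which is harmless since it is supported in $\widetilde{\tau}\in[0,1]$ and $\BfB$ has bounded coefficients there by the assumptions \eqref{Bassumption1}--\eqref{Bassumption2}.

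For $\mathfrak{H}_2=\mathbbm{1}_{bfr}|\uchi(-\infty)|^2$, the situation is the one flagged in the discussion after \eqref{fitt}: this is a boundary term at the horizon supported on the bounded frequency range \eqref{mosta2}, and for $\varepsilon=0$ it is controlled by the initial energy on $\widetilde{\Sigma}_0$ via the quantitative mode stability of \cite{SR14}. Here one runs that argument on the equation \eqref{def:WavePerturbedFreq} including the $H_{\BfB}$-term: quantitative mode stability gives $|\uchi(-\infty)|^2\lesssim (\text{data term}) + (\text{contribution of }H_\chi + H_{\BfB})$ on the range where $\mathbbm{1}_{bfr}=1$. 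Summing in $m,\ell$ and integrating in $\omega$, the $H_\chi$-contribution is handled as in the $\mathfrak{H}_1$ case and gives $CE_0[\Psi]$ (the relevant frequencies being bounded, no weights are lost), while the $H_{\BfB}$-contribution — which involves the convolution $\widehat{\chi_\mathcal{T}}*(\cdots)$ from \eqref{def:HDecompB} — is of size $\varepsilon$ and, again using that the frequency range is bounded, is estimated by Young's inequality in $\omega$ (using that $\widehat{\chi_\mathcal{T}}$ is bounded in $L^1_\omega$ uniformly in $\mathcal{T}$) and Plancherel against the lowest-order microlocal energy, yielding $C\varepsilon^2\mathcal{S}[\uchi]$. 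The precise bookkeeping for this term is deferred to Appendix \ref{appendix:yakov}.

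\textbf{Main obstacle.} The delicate point is the $H_{\BfB}$-contribution to $\mathfrak{H}_2$: because of the time cutoff $\chi_\mathcal{T}$, this term is genuinely a frequency convolution rather than a pointwise-in-$\omega$ quantity, so one cannot simply close the mode stability estimate frequency-by-frequency. The resolution — restricting to the bounded frequency window $\mathbbm{1}_{bfr}$, taking an $L^2_\omega$ norm, and using $\|\widehat{\chi_\mathcal{T}}\|_{L^1_\omega}\lesssim 1$ uniformly in $\mathcal{T}$ together with Plancherel — is the same mechanism used for the convolution errors in Sections \ref{sec:proofofconvolution} and \ref{sec:CMConvolution}, but here it is simpler because no derivatives and no $r$-decay are at stake. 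Everything else is a routine repackaging of the $\varepsilon=0$ arguments of \cite{DRSR} and \cite{SR14}.
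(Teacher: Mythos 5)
Your treatment of $\mathfrak{H}_2$ is essentially the paper's: the proof there consists of repeating the quantitative mode stability argument of \cite{SR14} with the extra $H_{\BfB}$ inhomogeneity, the adaptations being exactly those recorded in Appendix \ref{appendix:yakov}, and the outcome is the extra $\varepsilon^2\mathcal{S}[\uchi]$ term on the right of \eqref{h2estimate}. (One cosmetic difference: the appendix does not use Young's inequality on the convolution; it applies Cauchy--Schwarz in $x^\star$ and then Plancherel, dropping the cutoff $\chi_{\mathcal{T}}$ pointwise in physical space, but your route via $\lVert\widehat{\chi_{\mathcal{T}}}\rVert_{L^1_\omega}\lesssim 1$ yields the same bound.) Note also that the paper's proof of the first estimate is simply a citation of the bound (188) of \cite{DRSR}; you instead sketch a re-derivation, which is where the problem lies.

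The gap is in your $\mathfrak{H}_1$ argument: the blanket claim that every $(\ph,\ph')$ factor can be absorbed by Cauchy--Schwarz into $\delta\,\mathcal{S}_{trap}[\uchi]^{(a\omega)}_{m\ell}$ fails for the energy-current contribution, i.e.\ the pairing $\mathfrak{I}\big(\Hchi\,\overline{\omega\ph}\big)$ coming from the (cut-off) $T$-multiplier in the collection $(f,h,y,\chi)$ of \cite{DRSR}. In $\mathcal{S}_{trap}$ the quantities $\omega^2|\ph|^2$ and $\Lambda|\ph|^2$ appear only with the factor $\chi_{trap}=(1-r_{trap}/r)^2$, which vanishes at $r=r_{trap}$, whereas the energy-current weight does not degenerate there; so a frequency-by-frequency Cauchy--Schwarz split of this term cannot be absorbed into $\delta\,\mathcal{S}_{trap}$ no matter how the $r$-weights are distributed. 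Your parenthetical about the $r^\star$-weights addresses the wrong obstruction: the issue is the trapping degeneration, not the powers of $r$. This is precisely why \cite{DRSR} (and the discussion following \eqref{fitt} in the introduction of this paper) treat the $T$-multiplier term differently: one keeps the product intact, sums in $m,\ell$, integrates in $\omega$, and converts it by Plancherel into a physical-space integral of the form $\int F\cdot T\Psi_{\chi,\mathcal{T}}$, which is supported in $\widetilde{\tau}\in[0,1]$ and is then estimated by Cauchy--Schwarz there using local energy bounds near the data (Proposition \ref{prop:basicgronwall}), with no recourse to the degenerate bulk; the $\delta$-absorption is reserved for the $f,h,y$ pairings, which only meet the non-degenerate pieces $|\ph'|^2/r^{1+\alpha}$ and $|\ph|^2/r^{3+\alpha}$ of $\mathcal{S}_{trap}$. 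To make your proof correct you must insert this physical-space step for the $\omega\ph$ pairing (or simply cite the bound (188) of \cite{DRSR}, as the paper does).
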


\begin{proof}
The bound (188) in \cite{DRSR} directly implies the first estimate. The second estimate follows by repeating the argument of \cite{SR14}, which is invoked in \cite{DRSR}.  For convenience, we have included the relevant changes necessary to  adapt \cite{SR14} to our setting with additional first order term in Appendix \ref{appendix:yakov}. 
\end{proof}

\begin{corollary}
The estimate (\ref{est:DRSRMain}) implies
\begin{align}
\mathcal{S}_{trap} [\uchi] \lesssim \varepsilon \mathcal{S}[\uchi] + E_0[\Psi] \, .
\end{align}
\end{corollary}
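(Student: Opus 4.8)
The statement to prove is the Corollary asserting $\mathcal{S}_{trap}[\uchi] \lesssim \varepsilon \mathcal{S}[\uchi] + E_0[\Psi]$, where $\mathcal{S}_{trap}[\uchi] = \mathcal{S}_{trap}[\uchi](r_+,\infty)$ denotes the fully summed degenerate energy defined in \eqref{def:S1Strapped}.

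The plan is to take the frequency-localized estimate \eqref{est:DRSRMain} of Proposition \ref{thm:DRSRMain}, sum it over $m,\ell$ and integrate over $\omega \in \mathbb{R}$, and then control the three terms on the right using Proposition \ref{prop:estimateh} together with a Plancherel/comparison argument for the $\BfB$-term. First I would write, for each fixed triple $(\omega,m,\ell)$,
\begin{equation*}
(\mathcal{S}_{trap})[\ph]_{m\ell}^{(a\omega)}(r_+,\infty) \lesssim \varepsilon \int_{r_+}^\infty dr \left(\tfrac{\Lambda}{r^3}+\tfrac{\omega^2}{r^{1+\alpha}}+\tfrac{1}{r^{3+\alpha}}\right)|\widehat{u}^{(a\omega)}_{m\ell}|^2 + \mathfrak{H}_1 + \mathfrak{H}_2,
\end{equation*}
and then sum in $m,\ell$ and integrate in $\omega$, which by the definitions \eqref{def:S0Local}, \eqref{def:SGlobal}, \eqref{def:S1Strapped} is a legitimate operation since the integrand is nonnegative. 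The second term ($\mathfrak{H}_1$) is handled directly by the first estimate of Proposition \ref{prop:estimateh}: choosing $\delta$ small enough (depending only on $M,a$) allows the resulting $\delta\,\mathcal{S}_{trap}[\uchi]$ to be absorbed on the left, leaving $C\delta^{-1}E_0[\Psi]$ on the right. The third term ($\mathfrak{H}_2$) is controlled by the second estimate of Proposition \ref{prop:estimateh}, giving $C(E_0[\Psi] + \varepsilon^2 \mathcal{S}[\uchi])$; since $\varepsilon^2 \leq \varepsilon_0 \varepsilon \lesssim \varepsilon$ for $\varepsilon$ small this contributes at most $\varepsilon\mathcal{S}[\uchi] + E_0[\Psi]$ to the right-hand side, as required.

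The remaining term is the $\varepsilon$-weighted first term on the right of \eqref{est:DRSRMain}. Summing in $m,\ell$ and using $\sum_{\ell} \Lambda_{m\ell}|\widehat{u}^{(a\omega)}_{m\ell}|^2 = \mathcal{E}_\theta[\ph_m^\omega]$ (recall \eqref{def:angEnergy} and \eqref{est:Lambdalowerbounds}, and that $\Lambda$ absorbs the $a^2\omega^2$ term), this is comparable to
\begin{equation*}
\varepsilon \iiiint_{\FM} \left(\tfrac{|\widehat{\slashed\nabla}\ph|^2}{r^3} + \tfrac{\omega^2}{r^{1+\alpha}}|\ph|^2 + \tfrac{1}{r^{3+\alpha}}|\ph|^2\right),
\end{equation*}
and by inspecting the definition \eqref{def:S0Local} of $\mathcal{S}[\ph]$ one sees that each of these three pieces appears (up to the harmless factor $\Delta/(r^2+a^2)^2$ converting the $dr$-integral to $d\rs$, and the weight $(r/(r-r_+))^2$ only improving matters near the horizon) inside $\mathcal{S}[\ph]$. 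Hence this term is bounded by $C\varepsilon\,\mathcal{S}[\uchi]$. Combining the three bounds, absorbing the small $\mathfrak{H}_1$-contribution, and renaming constants yields the claim.

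I expect the only real subtlety is the bookkeeping in the third paragraph: one must verify that the $\varepsilon$-term of \eqref{est:DRSRMain}, which is stated in terms of the fully $\ell$-decomposed pieces $\widehat{u}^{(a\omega)}_{m\ell}$ with the eigenvalue $\Lambda$, reassembles after summation in $\ell$ into quantities genuinely dominated by $\mathcal{S}[\uchi]$ — in particular that $\sum_\ell \Lambda |\widehat u^{(a\omega)}_{m\ell}|^2$ integrated against $\sin\theta\,d\theta$ reproduces $\mathcal{E}_\theta[\ph]=\iint |\widehat{\slashed\nabla}\ph|^2$ via \eqref{def:angEnergy}, and that the remaining weights in $r$ match those in \eqref{def:S0Local}. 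This is purely routine once the definitions are unwound, so there is no substantive obstacle; the corollary is an immediate consequence of Proposition \ref{thm:DRSRMain} and Proposition \ref{prop:estimateh} after summation and absorption.
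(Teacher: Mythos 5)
Your proposal is correct and follows essentially the same route as the paper: the paper's proof is simply the one-line instruction to sum \eqref{est:DRSRMain} in $m,\ell$, integrate in $\omega$, and invoke Proposition \ref{prop:estimateh} together with the definition \eqref{def:S1Strapped}, with the absorption of the $\delta\,\mathcal{S}_{trap}$ term and the domination of the summed $\varepsilon$-term by $\varepsilon\,\mathcal{S}[\uchi]$ (via the angular eigenvalue comparison, as in the proof of Lemma \ref{lem:DBoundtoNDBound}) left implicit. Your bookkeeping in the third paragraph, including the use of \eqref{est:Lambdalowerbounds} and the weight comparison with \eqref{def:S0Local}, is exactly the routine verification the paper suppresses.
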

\begin{proof}
Sum the estimate (\ref{est:DRSRMain}) in $m$, $\ell$ and integrate in $\omega$. Proposition \ref{prop:estimateh} and (\ref{def:S1Strapped}) imply the result.
\end{proof}

\subsection{A Hardy-type inequality from the commuted energy}\label{sec:NDLOT} We next show how to bound $\mathcal{S}[\ph]$ (the non-degenerate lower order energy) from $\mathcal{S}_{trap}[\ph]$ (the degenerate lower order energy) borrowing a small part of the commuted energy $\mathcal{S}_W[\ph]$. This takes the following form:
 \begin{lemma}\label{lem:DBoundtoNDBound}
For any $\delta' > 0$ there exists a constant $C_{\delta'}$ such that the following inequality holds:
\begin{equation}\label{est:lem:DBoundtoNDBound}
\mathcal{S}[\ph] \leq {\delta'} \, \mathcal{S}_1^W[\ph] + C_{\delta'} \mathcal{S}_{trap}[\ph] \, .
\end{equation}
 \end{lemma}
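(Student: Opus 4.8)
\textbf{Proof strategy for Lemma \ref{lem:DBoundtoNDBound}.}

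The plan is to work at the level of the fully separated pieces $\widehat{u}^{(a\omega)}_{m\ell}$, prove the inequality frequency-by-frequency with constants uniform in $(\omega,m,\ell)$, and then sum in $m,\ell$ and integrate in $\omega$. The only term in $\mathcal{S}[\ph]$ that is \emph{not} already present (up to constants) in $\mathcal{S}_{trap}[\ph]$ is the nondegenerate $|\omega\ph|^2$-term: the $|\partial_r\ph|^2$, $|\ph|^2$ and the horizon-weighted $\left(\tfrac{r}{r-r_+}\right)^2|\Rs\ph + i(\omega-\omega_r)\ph|^2$ terms appear directly in $(\mathcal{S}_{trap})[\ph]^{(a\omega)}_{m\ell}$ (the last one is exactly the $\mathbbm 1_{r\le \bar r_1}\Delta^{-2}|u'-i(\omega-\omega_r)u|^2$ term, after noting $\Delta\sim (r-r_+)$ near the horizon and that away from the horizon this quantity is controlled by $|\partial_r\ph|^2 + |\omega\ph|^2 + |m\ph|^2$ which are themselves bounded; and the $r^{-3}|\widehat{\slashed\nabla}\ph|^2$ angular term is controlled after summing in $\ell$ by $\Lambda r^{-3}|\ph|^2$, which is dominated by $\chi_{trap}\Lambda r^{-3}|\ph|^2$ away from $r_{trap}$ and, near $r_{trap}$, re-expressed via the degenerate bound). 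So the heart of the matter is: bound $\iint r^{-1-\alpha}\frac{\Delta}{r^2}|\omega\ph|^2$ by $\delta'$ times the commuted energy plus $C_{\delta'}$ times the degenerate energy.

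The key step is the frequency-localised Hardy-type inequality \eqref{fitt3} advertised in the introduction: for $\hat\delta>0$ and $R^\star_1 < r^\star(r_{trap}) < R^\star_2$ uniform in the frequency parameters,
\begin{equation*}
\int_{R^\star_1}^{R^\star_2} dr^\star\, \omega^2 |\widehat{u}^{(a\omega)}_{m\ell}|^2 \lesssim \frac{1}{\hat\delta}\int_{R^\star_1}^{R^\star_2} dr^\star \left(1-\tfrac{r_{trap}}{r}\right)^2 \omega^2 |\widehat{u}^{(a\omega)}_{m\ell}|^2 + \hat\delta \int_{R^\star_1}^{R^\star_2} dr^\star\, \omega^2 |\Wsc \widehat{u}^{(a\omega)}_{m\ell}|^2 .
\end{equation*}
I would prove this by writing $\omega^2|\widehat u|^2 = \omega^2\chi_{trap}|\widehat u|^2 + \omega^2(1-\chi_{trap})|\widehat u|^2$; the first piece is already the right-hand side's first term, and for the second piece, since $1-\chi_{trap} = 1-(1-r_{trap}/r)^2$ vanishes quadratically at $r=r_{trap}$ while $\f$ (equivalently $f_\natural$) vanishes only linearly there — this is the content of Proposition \ref{thm:WProperties}, that $\f^\prime>0$ with $\f$ vanishing linearly at $\re$, and of Remark \ref{rem:Trapping} identifying $\re$ with $r^0_{max}$ hence with $r_{trap}$ at honest trapped frequencies — one has a pointwise bound $(1-\chi_{trap}) \lesssim |\f|$ on the compact interval, and then a Cauchy–Schwarz / integration-by-parts argument writing $\omega\widehat u = \tfrac{1}{\f}(\omega\f\,\widehat u)$ and relating $\omega\f\widehat u$ to $\Wsc\widehat u$ and $v^{-1}\Rs\widehat u$. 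The diamagnetic (Lagrangian) inequality in $\mathcal{G}'$ — valid because these frequencies are non-superradiant, so $(\omega-\omega_r)^2$ has no sign obstruction — then converts $|\Rs\Wsc\widehat u|$-control into control of $|\Wsc\widehat u|$ and vice versa, letting the $\omega^2|\Wsc\widehat u|^2$ term on the right be absorbed into $\mathcal{S}_1^W[\ph]$ after inserting the $\tfrac{r^2+a^2}{\Delta}$ and $r$-weights.

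The remaining bookkeeping: on $r^\star \le R^\star_1$ (near the horizon) and $r^\star \ge R^\star_2$ (near infinity) there is no trapping, $r_{trap}=0$ there, so $\chi_{trap}\equiv 1$ and the nondegenerate $\omega^2$-term is literally a summand of $\mathcal{S}_{trap}$; one must also supply the $r$-weights, which cost nothing since $R^\star_1, R^\star_2$ are fixed (so on the trapped region all weights are comparable to constants) and near infinity the $r^{-1-\alpha}$ weight is already built into both energies. Finally, choosing $\hat\delta$ small enough to make the Hardy constant times the degenerate-energy coefficient acceptable, and then relabelling $\delta' = C\hat\delta$, gives \eqref{est:lem:DBoundtoNDBound} with $C_{\delta'} = C/\delta'$ after summing over $m,\ell$ and integrating in $\omega$ (using that all constants were uniform in the frequency parameters, and that $\Wsc$ depends only on $(\omega,m)$ not on $\Lambda$, so the sum in $\ell$ passes through cleanly). \textbf{The main obstacle} I expect is establishing the pointwise comparison $(1-\chi_{trap}) \lesssim |\f|^2$ — or more precisely the correct power — uniformly in frequency near $r_{trap}$, since this requires combining the linear vanishing of $\f$ at $\re$ with the (only a posteriori established) coincidence of $\re$ and $r_{trap}$ at trapped frequencies, and controlling what happens at \emph{nearly} trapped frequencies where $r_{trap}$ and $\re$ may differ; this is precisely where the quantitative bounds of Section \ref{sec:choices} and the DRSR trapping characterisation must be invoked carefully.
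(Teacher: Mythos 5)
Your overall plan (prove the estimate frequency-by-frequency for the fully separated pieces, note that only the nondegenerate terms near the trapping radius are at issue, and sum using that $W$ depends only on $(\omega,m)$) matches the paper, but the central mechanism you propose for the Hardy-type step contains a genuine error. You claim that $1-\chi_{trap}=1-(1-r_{trap}/r)^2$ ``vanishes quadratically at $r=r_{trap}$'' and hence satisfies a pointwise bound $(1-\chi_{trap})\lesssim|\f|$; in fact it is $\chi_{trap}$ that vanishes (quadratically) at $r=r_{trap}$, so $1-\chi_{trap}=1$ there, while $\f$ vanishes at $\re\approx r_{trap}$. The comparison you need is therefore false exactly at the trapping point, which is the only place the degenerate energy fails you, and the companion step of writing $\omega\ph=\tfrac{1}{\f}(\omega\f\,\ph)$ is ill-defined for the same reason: you cannot divide by $\f$ near $\re$. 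No refinement of the DRSR trapping characterisation or of the quantitative bounds of Section \ref{sec:choices} rescues a pointwise domination of this shape, so the route as described does not close. (Secondary issues: your Hardy inequality only addresses the $\omega^2$-term, but the nondegenerate $\Lambda r^{-3}$- and $m^2$-terms require the same treatment near $r_{trap}$ -- ``re-expressed via the degenerate bound'' does not work there -- and the ``diamagnetic/Lagrangian'' conversion between $|\Rs W\ph|$ and $|W\ph|$ is unnecessary, since $\mathcal{S}_1^W$ already contains $r^{-1}|\omega W\ph|^2$ and the angular commuted term.)

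What replaces your comparison in the paper is an integration-by-parts (multiplier) identity in which the degeneration factor is \emph{generated} by Cauchy--Schwarz rather than assumed pointwise. One takes a frequency-independent cutoff $\chi_I$ equal to $1$ on the compact set containing all nontrivial $r_{trap}$, and integrates $\Rs\big(\chi_I\,(r-r_{trap})\,(|\omega\ph|^2+m^2|\ph|^2+\Lambda|\ph|^2)\big)$ over the whole line, using the identity $\Rs(|\ph|^2)=2v\,\mathfrak{R}(\Wsv\ph\,\overline{\ph})$ (valid because $\Wsv-v^{-1}\Rs$ is multiplication by a purely imaginary symbol). The term where $\Rs$ hits $(r-r_{trap})$ produces the desired nondegenerate quantity; the term where it hits $\chi_I$ is supported away from $r_{trap}$ and is controlled by $\mathcal{S}_{trap}$; and the term $2v(r-r_{trap})\chi_I\mathfrak{R}(\cdots)$ is split by Cauchy--Schwarz into $\delta''^{-1}(r-r_{trap})^2(\ldots)|\ph|^2$ -- which carries exactly the $\chi_{trap}$-degeneration and is absorbed into $\mathcal{S}_{trap}$ -- plus $\delta''$ times the commuted quantities, absorbed into $\mathcal{S}_1^W$ on the compact support of $\chi_I$. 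No relation between $\f$ and $\chi_{trap}$, no coincidence of $\re$ with $r_{trap}$, and no division by $\f$ enter at any point; the only input about $W^\natural$ is the algebraic identity above, and frequencies in $\mathcal{G}$ are handled trivially since $r_{trap}=0$ there.
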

 
 \begin{proof}
 We consider first the fully separated $\uchi$, i.e.~$(\uchi)^{(a \omega)}_{m \ell}$ from Section \ref{sep:KTensor}.
 We will prove for any frequency tuple  $(\omega, m, \Lambda=\Lambda_{m \ell})$ the statement 
 \begin{align} \label{imspo}
 \mathcal{S}[\ph]^{(a\omega)}_{m \ell} (r_+,\infty) \leq {\delta'} \, \mathcal{S}_1^W[\ph]^{(a\omega)}_{m \ell}  (r_+,\infty) + C_{\delta'} \mathcal{S}_{trap}[\ph]^{(a\omega)}_{m \ell} (r_+,\infty) \, , 
 \end{align}
 where  now $\widehat{u} = (\uchi)^{(a \omega)}_{m \ell}$ and
 \begin{equation}
\mathcal{S}[\ph]_{m \ell}^{(a \omega)} (r^\star_1,r^\star_2)= \int_{r^\star_1}^{r^\star_2} dr^\star \big(r^{-1-\alpha}\big(\big(\tfrac{r}{r-r_+}\big)^2|\Rs\ph +i(\omega-\omega_r)\ph|^2 + |\omega\ph|^2 +|\partial_r\ph|^2 + r^{-2}|\ph|^2\big) + r^{-3}\Lambda_{m \ell} |\ph|^2 \big), \nonumber
\end{equation}
\begin{equation}
\mathcal{S}^W_1[\ph]_{m \ell}^{(a \omega)} (r^\star_1,r^\star_2) = \int_{r^\star_1}^{r^\star_2} dr^\star r^{-1}|\omega\Wsc\ph|^2 +  r^{-1-\alpha}|\Rs \Wsc\ph|^2 + gv^2r^{-1} \Lambda_{m \ell} |\Wsc\ph|^2. \nonumber
\end{equation}
The above energies should be compared with (\ref{def:S0Local}) and (\ref{def:S1WLocal}) for the $(\uchi)^\omega_m$. Using that for fixed $\omega, m$
\begin{align}
\int_0^\pi \sin \theta \left( |\partial_\theta \widehat{u}^\omega_m|^2 + (m^2 +\omega^2)|\widehat{u}^\omega_m|^2 \right) \lesssim \sum_{\ell} \left(\Lambda_{m \ell} + \omega^2 \right) |\widehat{u}^{(a\omega)}_{m \ell}|^2 \lesssim \int_0^\pi \sin \theta \left( |\partial_\theta \widehat{u}^\omega_m|^2 + (m^2 +\omega^2)|\widehat{u}^\omega_m|^2 \right) \, , \nonumber
\end{align}
and that $W$ does not depend on the $\Lambda_{m \ell}$ (but just $\omega$ and $m$), it is clear that summing the estimate (\ref{imspo}) in $m$ and $\ell$ and integrating in $\omega$ will imply (\ref{est:lem:DBoundtoNDBound}). 
 
 We next note that (\ref{imspo}) holds trivially (without the first term on the right) for $(\omega,m) \in \mathcal{G}$ as $r_{trap}=0$ for these frequencies by Proposition \ref{est:TrappingInclusion} and the sentence following it.
We can therefore restrict to prove (\ref{imspo}) for $(\omega,m) \in \mathcal{G}^c$, where $W=W^\natural$. 

Because non-trivial values of $r_{trap}$ are all contained in a uniformly (in frequency) compact subset of $(r_+,\infty)$, we first define a smooth (frequency independent) radial function $\chi_{I}$ which is compactly supported in $(r_+,\infty)$ and equal to 1 for all non-trivial values of $r_{trap}$. Then,
\begin{equation}
\int_{-\infty}^\infty \Rs\Big(\big(\chi_{I}\cdot(r-r_{trap})\big(|\omega\ph|^2 + m^2|\ph|^2 + \Lambda|\ph|^2 \big)\big)\Big) \,d\rs = 0 \, , 
\end{equation}
where $\widehat{u}=(\uchi)^{(a \omega)}_{m \ell}$. Using the the identity
$
\Rs(|\ph|^2) = 2v\mathfrak{R}(\Wsv\ph\overline\ph),
$
it follows that
\begin{align}
\int_{-\infty}^\infty \frac{\Delta}{r^2+a^2} \chi_{I}\big(|\omega\ph|^2 + m^2|\ph|^2 + \Lambda|\ph|^2 \big) \, d\rs &= -\int_{-\infty}^\infty(r-r_{trap})\Rs(\chi_{I})\big(|\omega\ph|^2 + m^2|\ph|^2 + \Lambda|\ph|^2 \big)\, d\rs \nonumber \\
&-\int_{-\infty}^\infty 2v(r-r_{trap})\chi_{I}\mathfrak{R}(\omega\Wsv\ph\overline{\omega\ph}+m\Wsv\ph\overline{m\ph}+\Lambda\Wsv\ph\overline{\ph}).\nonumber
\end{align}
The quantity $(r-r_{trap})\Rs(\chi_{I})$ is bounded, compactly supported, and vanishes on a set that strictly contains all potential non-trivial values of $r_{trap}$, so
\begin{equation}
\left|\int_{-\infty}^\infty(r-r_{trap})\Rs(\chi_{I})\big(|\omega\ph|^2 + m^2|\ph|^2 + \Lambda|\ph|^2 \big)\, d\rs\right|\lesssim \mathcal{S}_{trap}[\ph]^{(a \omega)}_{m \ell} \, .
\end{equation}
For any $\delta'' > 0$, the Cauchy-Schwarz inequality additionally implies
\begin{align}
v(r-r_{trap})\chi_{I}\mathfrak{R}(\omega\Wsv\ph\overline{\omega\ph}+m\Wsv\ph\overline{m\ph}+\Lambda\Wsv\ph\overline{\ph}) &\leq \frac{v^2}{\delta''}(r-r_{trap})^2\chi_{I}(|\omega\ph|^2 + |m\ph|^2 + \Lambda|\ph|^2) + \\
&\qquad+ \delta''\chi_{I}(|\omega\Wsv\ph|^2 + |m\Wsv\ph|^2 + \Lambda|\Wsv\ph|^2).\nonumber
\end{align}
Compact support of $\chi_{I}$ implies that there exists a constant $C$ (depending only on $M$ and $a$) such that
\begin{equation}
\chi_{I}(|\omega\Wsv\ph|^2 + |m\Wsv\ph|^2 + \Lambda|\Wsv\ph|^2)\leq C\mathcal{S}^W_1[\ph]^{(a\omega)}_{m \ell}.
\end{equation}
The inequality \eqref{imspo} follows from selecting $\delta''$ small enough that $C\delta'' < \delta'$.
\end{proof}
\subsection{A local Lagrangian estimate in $\mathcal{G}$ for general second derivatives} \label{sec:loc2d}

In order to deal with the $\mathcal{P}_{\chisharp}$ terms in the commutator (\ref{Pchisubsharp}), which will appear in our main energy and Lagrangian estimates, we will have to bound the quantity
\[
\sum_m \int d\omega \iint_{\mcAs} \left|(1-\wchi_1)\mathcal{P}_{\chisharp}\Wsl\ph\right|^2.
\]
This contains second-order derivatives which do not appear in the $W$-commuted energy and hence cannot be controlled. However, these terms are spatially localised in the region where the derivative of the cut-off is $\chi_\sharp$ non-trivial and they are frequency localised in $\mathcal{G}$. The use of both restrictions allows us to prove a coercive estimate. 
Recall the energy $S_1^\sharp[\widehat{u}]$ defined in (\ref{def:Ssharp}). We will prove an estimate for fixed $(\omega, m) \in \mathcal{G}$ but it can easily be integrated in $\omega$ and summed in $m$ in view of the presence of the cut-off $\widetilde{\chi}_1$ in all terms.


\begin{proposition} \label{prop:LE2}
For $(\omega, m)\in\mathcal{G}$, we have for $\hat{u}=\widehat{{\bf u}}_\chi$ the estimate
\begin{equation}\label{est:S1SharpBound}
\Ssharp[\ph]_m^\omega \lesssim  (1-\wchi_1)^2\mathcal{S}[\ph]_m^\omega +  \mathcal{N}_{\sharp}[H, H]_m^\omega
\end{equation}
holds, where
\begin{equation}\index{Nsharp@$(\mathcal{N}_{\sharp})_m^\omega$}
\mathcal{N}_{\sharp}[H, H]_m^\omega = \iint_{\mcAs}(1-\wchi_1)^2g_\sharp^2\left(|\omega H^\omega_m|^2+|mH^\omega_m|^2+\left|v^2\Wp \left(\tfrac{H^\omega_m}{v^2}\right)\right|^2\right) \, . \label{Nshp}
\end{equation}
\end{proposition}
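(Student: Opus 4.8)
The plan is to run a frequency-localised Lagrangian (virial) argument for the equation satisfied by $\Wsl\ph$, exploiting that we are in the frequency regime $\mathcal{G}$ where the coercivity of Proposition \ref{prop:DominantLagrangian} is available precisely in the support of $g_\sharp$, and that the cut-off $\chisharp$ makes $\Wsl$ a genuine (well-defined) operator in this regime. First I would commute the wave equation with $\chisharp\Wsl$ using the identity \eqref{def:CommIdentityBasic2}: since $\Ph^\sharp\ph=0$ by Statement \eqref{fSharpCondition2} of Proposition \ref{thm:WnatProperties}, and $\Pl^\sharp\ph$ obeys the pointwise bound \eqref{est:PLSuperradBounds} in the support of $\chisharp$, the only problematic term in $\Pam(\chisharp\Wsl\ph)$ is $\mathcal{P}_{\chisharp}\Wsl\ph$, which is supported in the support of $\chisharp'$. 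But $g_\sharp=1$ there by \eqref{def:gsharpdef}, so up to harmless cut-offs we are really estimating $g_\sharp\Wsl\ph$, and in the support of $g_\sharp$ the key coercivity \eqref{est:DominantLagrangian} holds.

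\textbf{Key steps.} The argument I would carry out is:
\begin{enumerate}
\item Since $g_\sharp$ is supported in the compact interval $[\rmin,\rmax+\delta^\sharp]$ uniformly in frequency, and since $\Delta$ is bounded above and below there, all $\Delta$-weights are irrelevant; replace $\Wsl$ by $W_+$ up to lower-order terms using \eqref{def:hsharp} and the bound \eqref{est:fsharpminusfexpansion} (and note $(1-\wchi_1)(1-\wchi_2)=(1-\wchi_1)$-type relations so only $W_+$ matters at the boundary of $\mathcal{G}$).
\item Apply the Lagrangian current $Q^{\tilde g}[\cdot]$ of \eqref{def:LagrangianCurrent} with a well-chosen nonnegative weight $\tilde g$ adapted to $g_\sharp^2$ to the function $\Wsl\ph$ (really $\chisharp\Wsl\ph$), using the divergence identity \eqref{Div:Lagrangian} and its integrated form \eqref{microlocalg}. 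The bulk term produced is $\tilde g(-|(\omega-\omega_r)\Wsl\ph|^2 + v^2|\widehat{\slashed\nabla}\Wsl\ph|^2 + |(\Wsl\ph)'|^2 + \ldots)$. On the support of $g_\sharp$, Proposition \ref{prop:DominantLagrangian} converts the first two terms into the coercive quantity $c'\,\frac{m^2v^2}{r\sin^2\theta}|\Wsl\ph|^2$; combined with $|(\Wsl\ph)'|^2$ and $V_1$-terms this controls, after also applying $\omega$, $m$ and $W_+$ to the equation as in the definition \eqref{def:Ssharp} of $\Ssharp$, all of $|\omega\ph|^2+|m\ph|^2+|\partial_\theta\ph|^2+|\widehat{u}'|^2$ and their $\omega$-, $m$-, $W_+$-commuted analogues — i.e. $\Ssharp[\ph]_m^\omega$.
\item Bound the error terms. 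The boundary terms in \eqref{microlocalg} vanish (since $g_\sharp$ and $\tilde g$ are compactly supported away from $r_\pm$). The term $g \mathfrak{R}(\Pam(\chisharp\Wsl\ph)\overline{\chisharp\Wsl\ph})$ is expanded via \eqref{def:CommIdentityBasic2}: the $\mathcal{P}_{\chisharp}\Wsl\ph$ contribution is absorbed using Cauchy–Schwarz and the coercive term (choosing constants appropriately — here one uses that $|\chisharp'|+|\chisharp''|\lesssim g_\sharp$ from \eqref{gcutoff}, so this error is itself $\lesssim$ the left side times a small constant plus $(1-\wchi_1)^2\mathcal{S}[\ph]$); the $\Pl^\sharp\ph$ contribution is controlled by \eqref{est:PLSuperradBounds} and Cauchy–Schwarz giving $(1-\wchi_1)^2\mathcal{S}[\ph]_m^\omega$; and the $\PH^\sharp=v^2\Wsl(H/v^2)$ contribution (together with the analogous commutator errors for the $\omega$-, $m$-, $W_+$-commuted equations, and the commutators $[\omega,\cdot]=[m,\cdot]=0$, $[W_+,\Pam]$ localised to $g_\sharp$) produces exactly $\mathcal{N}_\sharp[H,H]_m^\omega$ of \eqref{Nshp}. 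Absorbing the small-constant error on the left yields \eqref{est:S1SharpBound}.
\end{enumerate}

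\textbf{Main obstacle.} The delicate point is the treatment of the second-order error term $\mathcal{P}_{\chisharp}\Wsl\ph = 2\chisharp'\Rs(\Wsl\ph)+\chisharp''\Wsl\ph$: it contains $\Rs(\Wsl\ph)$, a second derivative of $\ph$ that is \emph{not} controlled by the $W$-commuted energy $\mathcal{S}_1^W$, so it cannot simply be moved to the right-hand side. The resolution is precisely the geometry set up in Sections \ref{sec:LocalLagrangian}–\ref{sec:cutoffs}: this term lives in the support of $\chisharp'$, which by \eqref{def:chisharp} and \eqref{def:RDoubleSharp} lies inside $\{R^\sharp+\tfrac14\delta^\sharp\le r\le R^\sharp+\tfrac34\delta^\sharp\}$ where $g_\sharp\equiv 1$ and where \eqref{est:DominantLagrangian} furnishes a coercive $m^2 v^2$-term; so after integration by parts in $\rs$ (to move one derivative off of $\Rs(\Wsl\ph)$ onto the weight, turning it into $|(\Wsl\ph)'|^2$ plus manageable lower-order pieces) one can absorb it against the positive bulk term at the cost of enlarging the implied constant. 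Getting the bookkeeping of these integrations by parts, the interplay of the three cut-offs $\wchi_1,\wchi_2,\chisharp$ near the boundary $\partial\mathcal{G}$, and the $\Delta$-weights near $\rmin$ exactly right is the technical heart of the proof, but each ingredient is already in hand.
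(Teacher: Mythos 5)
Your overall strategy is the paper's: a frequency-localised Lagrangian estimate with the compactly supported weight $g_\sharp^2(1-\wchi_1)^2$, coercivity on $\operatorname{supp}(g_\sharp)$ from Proposition \ref{prop:DominantLagrangian}, applied to $\ph$ and its $\omega$-, $m$- and radially-commuted versions, with the $\Pl^\sharp$-type errors bounded by \eqref{est:PLSuperradBounds} and the inhomogeneity producing $\mathcal{N}_\sharp[H,H]_m^\omega$. However, your specific execution has a genuine gap, and it is exactly at the point you single out as the "main obstacle": you insist on commuting with the cut-off operator $\chisharp\Wsl$. First, the estimate you would then obtain only controls $\chisharp$-weighted quantities, whereas $\Ssharp[\ph]_m^\omega$ contains $\mathcal{S}^\sharp[\Wp\ph]_m^\omega$, i.e.\ $g_\sharp^2$-weighted second-order quantities on the \emph{whole} support of $g_\sharp$; by \eqref{def:gsharpdef} and \eqref{def:chisharp} a nontrivial part of $\operatorname{supp}(g_\sharp)$ lies where $\chisharp=0$ (and a further part where $0<\chisharp<1$), so a Lagrangian estimate for $\chisharp\Wsl\ph$ proves a strictly weaker statement than \eqref{est:S1SharpBound}. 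Second, your proposed absorption of $\mathcal{P}_{\chisharp}\Wsl\ph$ is not coherent as written: integrating by parts in $\rs$ the term $2\chisharp'\,\mathfrak{R}\!\left(\Rs(\Wsl\ph)\overline{\chisharp\Wsl\ph}\right)$ produces \emph{zeroth}-order terms in $\Wsl\ph$ (a derivative falls on the weight), not $|(\Wsl\ph)'|^2$; while if you instead Cauchy--Schwarz without integrating by parts, you are left with $|\Rs\Wsl\ph|^2$ on $\operatorname{supp}(\chisharp')$, which the bulk term $|(\chisharp\Wsl\ph)'|^2$ does not dominate there since $\chisharp<1$ and $(\chisharp\Wsl\ph)'$ carries $\chisharp'\Wsl\ph$ cross-terms.

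The fix, which is what the paper does, is simply to drop the cut-off: since the Lagrangian weight $g_\sharp^2(1-\wchi_1)^2$ is already compactly supported away from the horizon and infinity, the boundary terms in \eqref{microlocalg} vanish and the irregularity of $\Wp$ at $r_+$ is irrelevant, so one may commute $\Pam$ with $\Wp$ directly via \eqref{def:CommIdentityBasic}; no $\mathcal{P}_{\chisharp}$ term ever appears. The only genuinely second-order term in the commuted inhomogeneity, $-2i\omega v f_+'\Wp\ph$, is then handled not by absorption but by ordering the argument: first prove the bounds for $\mathcal{S}^\sharp[\omega\ph]_m^\omega$, $\mathcal{S}^\sharp[m\ph]_m^\omega$, $\mathcal{S}^\sharp[\ph]_m^\omega$ (Lemma \ref{prop:LE} applied to $\omega\ph$, $m\ph$, $\ph$, using that $\omega,m$ commute with $\Pam$), and then use these already-established bounds to control $g_\sharp^2|\omega\Wp\ph|^2\lesssim g_\sharp^2\big(|(\omega\ph)'|^2+m^2|\omega\ph|^2\big)$ in the $\Wp$-commuted step (recall $|\omega|\lesssim|m|$ in $\mathcal{G}$). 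Your step~1 replacement of $\Wsl$ by $\Wp$ up to lower order is fine but unnecessary once you commute with $\Wp$ from the start; you should also note that the $\mathcal{N}_\sharp$ inhomogeneity term is defined with $\Wp$, so no conversion from $\Wsl$ is needed there either.
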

\begin{proof}
We begin with a Lemma, which may be viewed as the ``uncommuted'' Lagrangian estimate.
\begin{lemma} \label{prop:LE}
Let $\ph$ be a $C^2$ solution of \eqref{def:SepWaveOperator}. Then, for $(\omega, m)\in\mathcal{G}$, the following estimate holds:
\begin{equation}
\mathcal{S}^\sharp[\ph]_m^\omega \lesssim \iint_{\mcAs} (1-\wchi_1)^2(g_\sharp^2 + (g_\sharp^2)'')|\ph^\omega_m|^2 + g_\sharp^2(1-\wchi_1)^2|H_m^\omega|^2.
\end{equation}
\end{lemma}
\begin{proof}
We take the Lagrangian current and angular Lagrangian current given in \eqref{def:LagrangianCurrent} for $g = g_\sharp^2(1-\wchi_1)^2$, and integrate the identity \eqref{Div:Lagrangian} in $(\rs, \theta)$, noting that compact support of $g_\sharp$ in $(r_+, \infty)$ implies that the boundary terms vanish. Then, Proposition \ref{prop:DominantLagrangian} gives our result.
\end{proof}

We now prove \eqref{est:S1SharpBound}. The bounds for the terms $\mathcal{S}^\sharp[\omega \ph]_m^\omega$ and  $ \mathcal{S}^\sharp[m \ph]_m^\omega$ appearing  in $\Ssharp[\ph]_m^\omega$ follow  immediately from Lemma \ref{prop:LE} after using the facts that $g_\sharp$ is uniformly compactly supported and that $\omega, m$ both commute through $\Pam$. Consequently, since bounding $\mathcal{S}^\sharp[\ph]_m^\omega$ in terms of $\mathcal{S}[\ph]_m^\omega$ is trivial,
\begin{equation} \label{poj}
\mathcal{S}^\sharp[\omega \ph]_m^\omega + \mathcal{S}^\sharp[m \ph]_m^\omega + \mathcal{S}^\sharp[\ph]_m^\omega \lesssim \mathcal{S}[\ph]^\omega_m + \iint_{\mcAs} (1-\wchi_1)^2g_\sharp^2(|\omega H^\omega_m|^2 + |m H^\omega_m|^2).
\end{equation}
To bound also the term containing $\Wp$, we commute $\Pam$ with $W_+$ and combine Lemma \ref{prop:LE} with the commutator identity \eqref{def:CommIdentityBasic} and (\ref{poj}) (in particular the estimate on $\mathcal{S}^\sharp[\omega \ph]_m^\omega$ to control the (only) second order term on the right hand side). 
\end{proof}

	\subsection{The main microlocal commutator estimate}\label{sec:GlobalEE}
		We now construct our main energy and Lagrangian estimates which are central to the proof of the main theorem. We recall the microlocal energies $\mathcal{S}[\ph]_m^\omega, \mathcal{S}_1^{\Wsc}[\ph]_m^\omega, \Ssharp[\ph]_m^\omega$ as defined in \eqref{def:S0Local}, \eqref{def:S1WLocal}, and \eqref{def:Ssharp}. We additionally recall the consequent global energies \eqref{def:SGlobal}, \eqref{def:S1WGlobal}.
\begin{theorem}\label{thm:CommutedEEMain}
Let the assumptions of Theorem \ref{thm:Main} hold and assume in addition (\ref{aux1}), (\ref{aux2}).
Consider $\Psi_{\chi,\mathcal{T}}$ as defined in Section \ref{sec:comof} and $\uchi, \HB, \Hchi$ as defined in Section \eqref{sec:RJHSDecomp}, with $H = \HB + \Hchi$. Recall (\ref{PHdef}) and define
\begin{equation}\label{def:N}\index{N@$\mathcal{N}$}
\mathcal{N}[\Wsc\uchi, \PH] = \mathcal{N}_T[\Wsc\uchi, \PH] + 2c_{\mathcal{G}} \mathcal{N}_{\alpha}[\Wsc\uchi, \PH],
\end{equation}
where $c_{\mathcal{G}}$ is the constant fixed in Section \ref{sec:SuperradiantEstimates2},
\begin{equation}\index{NT@$\mathcal{N}_T$}
\mathcal{N}_T[\Wsc\uchi, \PH] = \iiiint_{\FMs}\mathfrak{I}\left(\omega\Wsc\uchi\overline{\PH}\right), \label{imterm}
\end{equation}
\begin{equation}\index{Nalpha@$\mathcal{N}_\alpha$}
\mathcal{N}_{\alpha}[\Wsc\uchi, \PH] = \iiiint_\FMs  - g_\alpha\mathfrak{R}\left(\Wsc\uchi\overline{\PH}\right), \label{imterm2}
\end{equation}
for
\begin{equation}\label{def:galpha}\index{galpha@$g_\alpha$}
g_\alpha = \frac{1}{2r} + \frac{M^\alpha}{2r^{1+\alpha}},
\end{equation}
and from (\ref{Nshp})
\begin{equation}\label{def:Nsharp}\index{Nsharp@$\mathcal{N}_\sharp$}
\mathcal{N}_{\sharp}[H, H] = \sum_m \int_\mathbb{R}\mathcal{N}_{\sharp}[H, H]_m^\omega\, d\omega = \iiiint_{\FMs}(1-\wchi_1)^2g_\sharp^2\left(|\omega H|^2+|mH|^2+\left|v^2\Wp \left(\tfrac{H}{v^2}\right)\right|^2\right).
\end{equation}
 Then, there exist positive constants $C$ and $C'$ such that $\uchi$ satisfies the energy bound
\begin{equation}\label{est:EnergyBoundAllFrequencies}
\mathcal{S}_1^{\Wsc}[\uchi] + \Ssharp[\uchi] \leq C\mathcal{N}[\Wsc\uchi, \PH] + C'\left(\mathcal{S}[\uchi] + \mathcal{N}_{\sharp}[H, H]\right).
\end{equation}
\end{theorem}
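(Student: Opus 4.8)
The plan is to prove \eqref{est:EnergyBoundAllFrequencies} by combining a microlocal energy (``$T$-current'') estimate and a microlocal Lagrangian estimate applied to the commuted equation \eqref{def:CommutedBoxSimp}, with the auxiliary localised estimate of Proposition \ref{prop:LE2} used to absorb the $\mathcal{P}_{\chisharp}$-error terms. Concretely, apply the divergence identities \eqref{id:StandardMLDiv}, \eqref{Div:Lagrangian} to $\Pam(\Wsc\uchi)$, with $g = 2c_{\mathcal{G}}\, g_\alpha$ for the Lagrangian current, and integrate over $\mcAs$ (and then over $\omega$, summing in $m$). On the left of the resulting identity we get $\iiiint_{\FMs}$ of (using \eqref{def:CommutedBoxSimp} and \eqref{freli})
\[
2\omega^2 v f' |\Wsc\uchi|^2 - 2 c_{\mathcal{G}}\, g_\alpha\Big( -|(\omega-\omega_r)\Wsc\uchi|^2 + v^2|\widehat{\slashed\nabla}\Wsc\uchi|^2 + |(\Wsc\uchi)'|^2 + V_1|\Wsc\uchi|^2\Big) - c_{\mathcal{G}}\, g_\alpha''|\Wsc\uchi|^2 \, ,
\]
plus the terms coming from $\wchi_1 P^\natural_h + (1-\wchi_1)\chisharp P^\sharp_h$, which vanish by Propositions \ref{thm:WProperties}\eqref{fNaturalCondition2} and \ref{thm:WnatProperties}\eqref{fSharpCondition2}. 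The boundary terms at $r^\star = \pm\infty$ vanish because $\Psi_{\chi,\mathcal{T}}$ is future integrable and outgoing (Proposition \ref{prop:suffi}); this is where the weighted bound \eqref{sqi2} enters, guaranteeing the $r\to\infty$ flux terms decay. On the right we collect $\iiiint_{\FMs}\mathfrak{I}(\Pam\Wsc\uchi\,\overline{\omega\Wsc\uchi})$ and $-2c_{\mathcal{G}}\iiiint_{\FMs} g_\alpha\mathfrak{R}(\Pam\Wsc\uchi\,\overline{\Wsc\uchi})$.

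The coercivity of the left-hand side is the heart of the matter and proceeds by splitting frequencies into $\mathcal{G}^c$ (where $\Wsc = \Wsv$) and $\mathcal{G}$. On $\mathcal{G}^c \subset \mathcal{G}'$ one uses the lower bound \eqref{est:fprimebound}, $\omega^2 v\f' \geq \tfrac{c_0}{r}\omega(\omega-\omega_r)$, together with the global Lagrangian-type positivity \eqref{est:ELCombined2} in $\mathcal{G}'$ (rescaled appropriately, recalling $c_{\mathcal{G}} = c_0 c_{\mathcal{G}'}$ via \eqref{id:csharpRelation}/\eqref{id:relationfour}) and the diamagnetic inequality $|(\Wsc\uchi)'|^2 \geq$ (terms controlling $|\partial_r \Wsc\uchi|^2$ etc.) to produce a full non-degenerate $\mathcal{S}_1^{\Wsc}$-type coercive combination. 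On $\mathcal{G}$ one uses instead the superradiant coercivity estimate \eqref{est:ELCombined} of Proposition \ref{prop:SuperradiantPositivity} (valid on $r \geq R^\sharp$, and for $W_+$-commuted quantities on the support of $\chisharp$ since $\chisharp$ vanishes below $R^\sharp + \tfrac14\delta^\sharp$, combined with $\omega^2 v f_\sharp' \geq \omega^2 v \Rs(\tfrac{\omega-\omega_r}{v})$ from Proposition \ref{thm:WnatProperties}\eqref{fSharpCondition1}). The $V_1$-term and the $g_\alpha''$-term are manifestly lower order in $r$-weight and absorbed. This yields coercivity of the left-hand side by $\mathcal{S}_1^{\Wsc}[\uchi]$, up to the $\Ssharp$-contribution which we add in separately using Proposition \ref{prop:LE2}.

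The remaining task is to control the right-hand-side error terms. Split $\Pam\Wsc\uchi$ according to \eqref{def:CommutedBoxSimp} into: (i) the ``good'' commutator term $-2i\omega v f' \Wsc\uchi$ (already moved to the left), (ii) the lower-order terms $\wchi_1 \Pl^\natural + (1-\wchi_1)\chisharp\Pl^\sharp$, bounded pointwise by \eqref{est:Plupperbound}, \eqref{est:PLSuperradBounds} in terms of $r^{-3}|\Wsc\uchi| + v^3(|\omega\uchi| + |m\uchi| + |\uchi|)$, which via Cauchy–Schwarz are absorbed partly into the coercive left-hand side (borrowing a small multiple of $\mathcal{S}_1^{\Wsc}$) and partly into $C'\mathcal{S}[\uchi]$; (iii) the cutoff-error term $(1-\wchi_1)\mathcal{P}_{\chisharp}\Wsl\uchi$, which contains genuine second derivatives \emph{not} controlled by $\mathcal{S}_1^{\Wsc}$ but, by \eqref{gcutoff}, is supported where $g_\sharp \neq 0$, hence controlled by $\Ssharp[\uchi]$ — adding the Proposition \ref{prop:LE2} estimate for $\Ssharp$ to the energy identity and using $\varepsilon$-Cauchy–Schwarz makes this consistent (note $\Ssharp$ then appears on both sides, absorbed for the appropriate small constant); (iv) the genuine inhomogeneity $\PH = v^2 \Wsc(H/v^2)$, whose contractions against $\omega\Wsc\uchi$ and $g_\alpha \Wsc\uchi$ are exactly the terms collected into $\mathcal{N}_T[\Wsc\uchi,\PH]$ and $\mathcal{N}_\alpha[\Wsc\uchi,\PH]$, i.e.\ $\mathcal{N}[\Wsc\uchi,\PH]$; and the $g_\sharp^2$-weighted $H$-contributions from Proposition \ref{prop:LE2} give $\mathcal{N}_\sharp[H,H]$. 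The main obstacle is bookkeeping the coercivity case-split across $\mathcal{G}$ and $\mathcal{G}^c$ (and the transition regimes $\mathcal{G}\cap\mathcal{G}'$ where $f_\sharp$ interpolates and $\chisharp'$-errors live) so that all error terms with insufficient $r$-decay or second derivatives land in regions where either coercivity (from Proposition \ref{prop:SuperradiantPositivity}/\ref{prop:DominantLagrangian}) or the localised $\Ssharp$-control (Proposition \ref{prop:LE2}) is available; once the constants $c_{\mathcal{G}}$, $c_{\mathcal{G}'}$, $\delta^\sharp$ are used with the relations fixed in Section \ref{sec:choices}, the absorption closes and \eqref{est:EnergyBoundAllFrequencies} follows.
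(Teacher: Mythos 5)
Your proposal is correct and follows essentially the same route as the paper: the actual proof is precisely the combination of the commuted $T$-energy identity and the $2c_{\mathcal{G}}g_\alpha$-Lagrangian identity (Propositions \ref{lem:CommutedEEBasic} and \ref{lem:LagrangianDecayCommuted}), coercivity of the summed bulk terms via \eqref{est:fprimebound}/\eqref{est:ELCombined2} and \eqref{est:hSharpDerivativeIneq}/\eqref{est:ELCombined} (Proposition \ref{prop:fglobal}), boundary terms via Proposition \ref{prop:boundaryterms}, and the $\mathcal{P}_{\chisharp}$-errors together with the addition of $\Ssharp$ handled through Proposition \ref{prop:LE2}, yielding $\mathcal{N}_T$, $\mathcal{N}_\alpha$, $\mathcal{N}_\sharp$ exactly as you describe. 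The only cosmetic difference is that the paper organises the coercivity case split by $\wchi_2=1$ versus $\wchi_2<1$ (so the transition part of $\mathcal{G}\cap\mathcal{G}'$, where $f=\f$ and frequencies lie in $\mathcal{G}'$, is treated with the globally valid \eqref{est:ELCombined2} rather than the $r\geq R^\sharp$-restricted \eqref{est:ELCombined}), which is exactly the bookkeeping issue you flag at the end.
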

\begin{remark}
Note that the quantities $\mathcal{N}_T$ and $\mathcal{N}_\alpha$ correspond to physical space multipliers  and, unlike $\mathcal{N}_\sharp$, do not contain absolute value terms. This will be exploited when we bound terms coming from the initial data via Parseval’s identity, see Section \ref{sec:InitialDataBounds}.
\end{remark}
The proof extends over the following subsections and will conclude in Section \ref{sec:ProofofMainEnergyEstimate}. It is based on applying the microlocal energy identities (\ref{microlocalE}) and (\ref{microlocalg}) with $\Wsc\uchi$ in place of $\ph$, for currents of the form
\begin{equation}
Q^T - 2c_{\mathcal{G}} Q^{g_\alpha}+ k(1-\wchi_1)Q^{g_\sharp^2}, \qquad A^T - 2c_{\mathcal{G}} A^{g_\alpha} + k(1-\wchi_1)A^{g_\sharp^2} \, .
\end{equation}
In particular, these currents are physical, up to a term which is compactly supported in $r$. We first show, in Section \ref{sec:boundaryterms}, that the boundary terms that appear  in (\ref{microlocalE}) and (\ref{microlocalg}) applied with $\Wsc\uchi$ in place of $\hat{u}$ will always vanish after summation in $m$ and integration over $\omega$.

\subsubsection{Controlling the boundary terms}  \label{sec:boundaryterms}
\begin{proposition} \label{prop:boundaryterms}
Under the assumptions of Theorem \ref{thm:CommutedEEMain}, the $\uchi$ satisfies:
 \begin{equation}\label{est:OBCderiv}
\lim_{r\to r_+} \sum_m \int_{-\infty}^\infty d\omega \int_{0}^{\pi} \sin \theta d\theta \left( |W\uchi|^2 + |\Rs W\uchi|^2 \right) =  0
 \end{equation}
and
 \begin{equation}\label{est:OBCderiv2}
\lim_{r \rightarrow \infty}  \sum_m \int_{-\infty}^\infty d\omega \int_{0}^{\pi} \sin \theta d\theta \left( |\Wsc\uchi(r)|^2 + |\Rs\Wsc\uchi(r)|^2 \right) = 0 \, .
 \end{equation}
\end{proposition}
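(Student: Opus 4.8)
The plan is to reduce both limits to the vanishing of boundary terms for the solution $\Psi_{\chi,\mathcal{T}}$ and its $W_0$-image, exploiting the Plancherel identities \eqref{id:Plancherel3}–\eqref{id:Plancherel4}, the asymptotic comparison between $\Wsc$ and $\oW$ from Corollaries \ref{cor:asymptoticWinfinity} and \ref{prop:WMinusWZeroBounds}, and the weighted square-integrability statements \eqref{sqi}, \eqref{sqi2} established in Proposition \ref{prop:suffi}. The key point is that the sums/integrals over $(\omega,m)$ of the pointwise frequency-space quantities in \eqref{est:OBCderiv}–\eqref{est:OBCderiv2} are, by Plancherel, honest spacetime integrals at fixed $r$ of physical-space quantities, and these quantities are controlled near $r_+$ and near $r=\infty$ by the norms in Proposition \ref{prop:suffi}, which a fortiori decay as one approaches the respective boundary (after a dyadic pigeonholing argument).

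\emph{The horizon limit \eqref{est:OBCderiv}.} First I would use Corollary \ref{prop:WMinusWZeroBounds} to write, for $r<\rmin$,
\[
\sum_m\int d\omega\int_0^\pi\sin\theta\,d\theta\,\big(|\Wsc\uchi|^2+|\Rs\Wsc\uchi|^2\big)\lesssim \sum_m\int d\omega\int_0^\pi\sin\theta\,d\theta\,\Big(|\oW\uchi|^2+|\Rs\oW\uchi|^2+v^2(|\omega|^2+|m|^2)|\uchi|^2+v^2(|\omega|^2+|m|^2)|\Rs\uchi|^2+\tfrac{v^2}{r^2}(|\omega|^2+|m|^2)|\uchi|^2\Big).
\]
Since $(\oW\uchi)_m^\omega=\mathcal{F}(\oW(\mathbf{u}_\chi))_m^\omega$ (as noted after \eqref{def:oWfreq}), by Plancherel \eqref{id:Plancherel2} the right-hand side equals a spacetime integral at fixed $r$ of $|\oW(\mathbf u_\chi)|^2+|\Rs\oW(\mathbf u_\chi)|^2+v^2(|T(\mathbf u_\chi)|^2+|\Phi(\mathbf u_\chi)|^2+\dots)$ (with $v^2$ bringing in a factor $\Delta/(r^2+a^2)^2\to0$ at the horizon). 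Recalling $\mathbf u_\chi=\rweight\Psi_{\chi,\mathcal{T}}$ and \eqref{def:wpar} so that $\oW=\Delta^{1/2}\wpa_r$, this is dominated by the non-degenerate energy density appearing in $E_\tau[\Psi]$ of \eqref{basicdataenergy} (indeed $\tfrac1\Delta|\oW|^2=|\wpa_r|^2$), together with the terms $v^2|T\Psi|^2$ etc.\ which carry an extra vanishing power of $\Delta$. By \eqref{sqi} in Proposition \ref{prop:suffi} (applied with sufficiently many derivatives, including $\wpa_r$ and $T,\Phi$), these quantities are integrable in $t$ uniformly for $r\in[r_+,A]$, and a standard pigeonholing/mean-value argument in $r$ shows the fixed-$r$ integral tends to $0$ along a sequence $r\to r_+$; monotonicity of the relevant energy flux then upgrades this to a genuine limit. (Alternatively one notes directly that the integrands carry a factor $\Delta^{1/2}\to0$ or $v\to0$ relative to the finite quantities in \eqref{sqi}.)

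\emph{The infinity limit \eqref{est:OBCderiv2}.} Here I would use Corollary \ref{cor:asymptoticWinfinity}, valid for $r\geq\rmax+\delta^\sharp$, to bound
\[
\sum_m\int d\omega\int_0^\pi\sin\theta\,d\theta\,\big(|\Wsc\uchi|^2+|\Rs\Wsc\uchi|^2\big)\lesssim \sum_m\int d\omega\int_0^\pi\sin\theta\,d\theta\,\Big(|\oW\uchi|^2+|\Rs\oW\uchi|^2+\tfrac{1}{r^2}(|\omega|^2+|m|^2)(|\uchi|^2+|\Rs\uchi|^2)+\tfrac{1}{r^4}(|\omega|^2+|m|^2)|\uchi|^2\Big).
\]
By Plancherel this is a fixed-$r$ spacetime integral, and since $\uchi=\rweight\Psi_{\chi,\mathcal{T}}\sim r\Psi_{\chi,\mathcal{T}}$, up to harmless lower order terms the controlling quantities are exactly those appearing in the weighted bound \eqref{sqi2} of Proposition \ref{prop:suffi}: namely $\sum_{X\in\{R^\star,T,\mathrm{id}\}}|XW_0(r\Psi_{\chi,\mathcal T})|^2$ weighted by $r^{\frac12-\delta}$, and $\sum_{X,Y}r^{-1-\delta}|XY(r\Psi_{\chi,\mathcal T})|^2$. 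Since \eqref{sqi2} asserts that the $r^{\frac12-\delta}$- and $r^{-1-\delta}$-weighted versions are finite (uniformly in $r\geq 10M$), the unweighted fixed-$r$ integrals are $O(r^{-\frac12+\delta}) \to 0$ and $O(r^{-1-\delta})\to 0$ respectively; care must be taken that $\oW$ acting on $\uchi$ versus $W_0$ acting on $r\Psi_{\chi,\mathcal T}$ differ only by terms involving $\oW(\rweight/r)$ which decay like $r^{-2}$, absorbed harmlessly. This yields \eqref{est:OBCderiv2} directly (no pigeonholing needed, as \eqref{sqi2} already provides a uniform-in-$r$ summable bound).

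\textbf{Main obstacle.} The principal technical point is the careful bookkeeping at the horizon: one must check that every term produced by expanding $\Wsc$ via Corollary \ref{prop:WMinusWZeroBounds} — in particular $|\oW\uchi|^2$ and $|\Rs\oW\uchi|^2$ — is, after Plancherel and after restoring $\uchi=\rweight\Psi_{\chi,\mathcal T}$, dominated by quantities that are finite by \eqref{sqi} and that genuinely vanish (or are finite and multiplied by $\Delta^{1/2}$) as $r\to r_+$. The subtlety is that $|\Rs\oW\uchi|^2$ involves a \emph{second} $\wpa_r$-type derivative, which is controlled only because \eqref{sqi} is stated for arbitrarily many derivatives $N$; one must verify the precise weights match, using $\Rs=\tfrac{\Delta}{r^2+a^2}\partial_r$ and $\oW=\Delta^{1/2}\wpa_r$ so that $\Rs\oW$ carries a total factor $\Delta^{3/2}$ relative to $\wpa_r^2$, more than enough to force the boundary limit to vanish. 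The infinity limit, by contrast, is essentially immediate once \eqref{sqi2} is in hand, since that estimate was designed precisely to furnish the required weighted control.
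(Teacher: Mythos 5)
Your argument is correct and is essentially the paper's own proof: split $\Wsc=\oW+(\Wsc-\oW)$, control the difference via Corollaries \ref{cor:asymptoticWinfinity} and \ref{prop:WMinusWZeroBounds}, and convert to physical space with Plancherel, using \eqref{sqi2} at infinity and \eqref{sqi} together with the regularity of $\Delta^{-1/2}W_0$ (i.e.\ the vanishing $\Delta^{1/2}$ and $v$ factors) at the horizon. Two small points: at the horizon you should rely on your parenthetical direct argument rather than the pigeonholing-plus-monotonicity detour, since \eqref{sqi} gives only a $\sup$-in-$r$ bound (not $r$-integrability) and no monotone flux is available here; and at infinity the second contribution is $O(r^{-1+\delta})$ rather than $O(r^{-1-\delta})$, which still tends to zero since $\delta>0$ may be taken small.
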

\begin{proof}
We first prove (\ref{est:OBCderiv2}). For $W$ replaced by the physical space vector field $W_0$, the limit is a direct consequence of (\ref{sqi2}) and the Plancherel identity. Replacing $W$ by $W-W_0$, the result in turn follows from the estimates
$|(W-W_0)\psi|^2 \leq \frac{\omega^2}{r^2} |\psi|^2 +\frac{m^2}{r^2} |\psi|^2$ and $|R^\star (W-W_0)\psi|^2 \leq \frac{\omega^2}{r^2} |R^\star \psi|^2 +\frac{m^2}{r^2} |R^\star \psi|^2 +\frac{\omega^2}{r^2} |\psi|^2 +\frac{m^2}{r^2} |\psi|^2$, valid for $r \geq \overline{r}^\sharp + \delta^\sharp$ from Corollary \ref{cor:asymptoticWinfinity}. Converting the right hand sides to physical space with Plancherel and using again  (\ref{sqi2}) produces the desired result.

We next prove (\ref{est:OBCderiv}). It clearly suffices to prove this separately for $W^\natural$ and $\chi_\sharp W^\sharp$. Since $\chi_\sharp W^\sharp$ is identically zero near $r_+$, it remains to prove it for $W^\natural$. 
Again replacing $W=W-W_0+W_0$ and using the bounds of Corollary \ref{prop:WMinusWZeroBounds} as well as the regularity of $\Delta^{-\frac{1}{2}} W_0$ at the horizon, the result follows.
\end{proof}

\subsubsection{The commuted $T$-energy estimate}
We now prove the basic commuted $T$-energy estimate arising from applying (\ref{microlocalE}) with $\Wsc\uchi$:
\begin{proposition}\label{lem:CommutedEEBasic}
Under the assumptions of Theorem \ref{thm:CommutedEEMain}, for any constant $\delta > 0$, there exists a constant $C_\delta > 0$ such that
\begin{equation}\label{est:BasicEE}
\iiiint_{\FMs} 2vf'|\omega\Wsc\uchi|^2\leq \delta \mathcal{S}_1^{\Wsc}[\uchi] +\mathcal{N}_T[\Wsc\uchi, \PH]+ C_\delta \left( \mathcal{S}[\uchi] + \Ssharp[\uchi] + \mathcal{N}_{\sharp}[H, H]  \right) \, . 
\end{equation}
\end{proposition}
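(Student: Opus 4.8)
\textbf{Proof plan for Proposition \ref{lem:CommutedEEBasic}.}

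The plan is to apply the microlocal $T$-energy identity \eqref{microlocalE} with $\Wsc\uchi$ in place of $\ph$, integrate over $\rs \in (-\infty,\infty)$, sum in $m$ and integrate in $\omega$, and use the commuted wave equation \eqref{def:CommutedBoxSimp} to identify the right-hand side. First I would observe that by Proposition \ref{prop:boundaryterms} (specifically \eqref{est:OBCderiv} and \eqref{est:OBCderiv2}), the boundary terms in \eqref{microlocalE} vanish after summation and integration, so that
\[
\iiiint_{\FMs}\mathfrak{I}\big(\Pam(\Wsc\uchi)\,\overline{\omega\Wsc\uchi}\big) = 0 \, .
\]
Then I would insert the expansion \eqref{def:CommutedBoxSimp} for $\Pam(\Wsc\uchi)$. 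The key term is the first one: using \eqref{freli}, the sum $-2i\omega v\wchi_1\f'\Wsv\uchi - 2i\omega v(1-\wchi_1)\chisharp f_\sharp'\Wsl\uchi = -2i\omega vf'\Wsc\uchi$, and pairing with $\overline{\omega\Wsc\uchi}$ and taking the imaginary part produces exactly $-2vf'|\omega\Wsc\uchi|^2$ integrated over $\FMs$ (note $\mathfrak{I}(-i|z|^2) = -|z|^2$). Moving this to the left-hand side gives the term $\iiiint_{\FMs}2vf'|\omega\Wsc\uchi|^2$ that we want to bound.

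The remaining terms on the right-hand side of \eqref{def:CommutedBoxSimp}, paired with $\overline{\omega\Wsc\uchi}$, must each be absorbed into one of the allowed quantities. The $\PH$ term contributes precisely $\mathcal{N}_T[\Wsc\uchi,\PH]$ by definition \eqref{imterm} (here one uses that $\Wsc$ commutes with the operations and that $\PH = v^2\Wsc(H/v^2)$). The terms $\wchi_1\Ph^\natural\uchi$ and $(1-\wchi_1)\chisharp\Ph^\sharp\uchi$ vanish identically by Statement \eqref{fNaturalCondition2} of Proposition \ref{thm:WProperties} and Statement \eqref{fSharpCondition2} of Proposition \ref{thm:WnatProperties}. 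The lower order terms $\wchi_1\Pl^\natural\uchi$ and $(1-\wchi_1)\chisharp\Pl^\sharp\uchi$ are controlled pointwise by \eqref{est:Plupperbound} and \eqref{est:PLSuperradBounds} respectively: these bound them by $r^{-3}|\Wsc\uchi| + v^3(|\omega\uchi|+|m\uchi|+|\uchi|)$-type quantities, so after Cauchy--Schwarz against $\omega\Wsc\uchi$ (splitting with a small parameter $\delta$) the $\Wsc$-derivative pieces get a $\delta$-multiple of $\mathcal{S}_1^{\Wsc}[\uchi]$ (using the $\Delta^{-1}$ weight and the $r$-weights in \eqref{def:S1WLocal}) and the remaining pieces give $C_\delta\mathcal{S}[\uchi]$. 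Finally, the term $(1-\wchi_1)\mathcal{P}_{\chisharp}\Wsl\uchi$ involves the cutoff derivative and, after Cauchy--Schwarz, is bounded by $\delta$ times $|\omega\Wsc\uchi|^2$-weighted terms (absorbed into $\delta\mathcal{S}_1^{\Wsc}$) plus $C_\delta$ times $\iiiint_{\FMs}(1-\wchi_1)^2|\mathcal{P}_{\chisharp}\Wsl\uchi|^2$; using \eqref{gcutoff} ($|\chisharp'|+|\chisharp''|\lesssim g_\sharp$) this last quantity involves second-order derivatives supported in the $g_\sharp$-region and in $\mathcal{G}$, which is exactly what $\Ssharp[\uchi]$ controls (recall \eqref{def:Ssharp}, and that $\Wsl = W_+$ up to the interpolation, with $W_+\uchi$ appearing in $\Ssharp$).

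The main obstacle I expect is the careful bookkeeping of the $\mathcal{P}_{\chisharp}\Wsl\uchi$ term. Because $\mathcal{P}_{\chisharp} = 2\chisharp'\Rs + \chisharp''$ contains a first derivative $\Rs$ acting on $\Wsl\uchi$, pairing it with $\omega\Wsc\uchi$ produces a genuine second-order derivative $\Rs\Wsl\uchi$ which is \emph{not} among the quantities controlled by $\mathcal{S}_1^{\Wsc}$ (that energy only controls $\Rs\Wsc\uchi$ with $\Wsc = \wchi_1\Wsv + (1-\wchi_1)\chisharp\Wsl$, and the cutoff structure matters). The resolution is that on the support of $\chisharp'$ one has $\wchi_1 \in \{$values making $\Wsc$ comparable to $\chisharp\Wsl\}$ but more to the point $\chisharp'$ is supported where the Lagrangian coercivity of Proposition \ref{prop:DominantLagrangian} holds, and the second-order content is precisely captured by $\Ssharp$ via the $\mathcal{S}^\sharp[W_+\ph]$ and $\mathcal{S}^\sharp[\ph]$ pieces together with Proposition \ref{prop:LE2}. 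One must check that the weights match: $\Ssharp$ has no $r$-decay weights but $g_\sharp$ is compactly supported in a fixed $r$-interval $[\rmin,\rmax+\delta^\sharp]$, so all weights are comparable to constants there, and the bound closes. A secondary subtlety is ensuring the $\delta$-small terms are genuinely absorbed on the left: since $vf'|\omega\Wsc\uchi|^2$ is \emph{not} pointwise comparable to the full integrand of $\mathcal{S}_1^{\Wsc}$ (it only sees the $\omega$-derivative, not $\Rs\Wsc\uchi$ or the angular part), the $\delta\mathcal{S}_1^{\Wsc}$ on the right stays on the right for now and will be absorbed only later once the full Lagrangian estimate is combined with this one; thus this proposition is genuinely an intermediate step, and the statement as written correctly keeps $\delta\mathcal{S}_1^{\Wsc}[\uchi]$ on the right-hand side.
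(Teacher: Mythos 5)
Your proposal follows the same route as the paper: apply the microlocal $T$-energy identity \eqref{microlocalE} to $\Wsc\uchi$, kill the boundary terms via Proposition \ref{prop:boundaryterms}, expand with \eqref{def:CommutedBoxSimp} using \eqref{freli} and the vanishing of $\Ph^\natural,\Ph^\sharp$, Cauchy--Schwarz the $\Pl$-terms into $\delta\mathcal{S}_1^{\Wsc}+C_\delta\mathcal{S}$, and control the $(1-\wchi_1)\mathcal{P}_{\chisharp}\Wsl\uchi$ term through $\Ssharp$ (the paper makes this precise via the pointwise bound $|\Rs\Wsl\ph|\lesssim|\Rs\Wp\ph|+|m\Rs\ph|+|m\ph|$ and the chain \eqref{est:PChiSharpBound}, which is exactly the bookkeeping you flag). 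Your remarks on the weights on the compact support of $g_\sharp$ and on the $\delta\mathcal{S}_1^{\Wsc}$ term being absorbed only later are consistent with the paper's argument, so the proposal is correct and essentially identical in approach.
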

\begin{proof}
Taking the energy currents $Q^T[\Wsc\uchi], A^T[\Wsc\uchi]$ as defined in \eqref{def:EnergyCurrent} and applying the energy identity \eqref{id:StandardMLDiv} gives
\begin{equation}
\iiiint_{\FMs(r_1^\star,r_2^\star)} \Rs Q^T[\Wsc\uchi] + \frac{1}{\sin\theta}\partial_\theta(\sin\theta A^T[\Wsc\uchi]) =\iiiint_{\FMs(r_1^\star,r_2^\star)}\mathfrak{I}\left(\Pam\Wsc\uchi\overline{\omega\Wsc\uchi}\right).
\end{equation}
The left hand side produces two boundary terms (see (\ref{microlocalE})) which both go to zero at $r_1^\star \rightarrow -\infty$ and $r_2^\star \rightarrow 0$ by the boundary conditions \eqref{est:OBCderiv}, \eqref{est:OBCderiv2}. For the remaining expression we 
expand $\Pam\Wsc\uchi$ according to \eqref{def:CommutedBoxSimp} and treat each term separately. Using $\Ph^\natural = \Ph^\sharp = 0$ we have that at each frequency (for the rest of the proof we write $\int = \iiiint_{\FMs}$)
\begin{equation}\label{est:EEFullBasic}
\int 2v f'|\omega\Wsc\uchi|^2\leq \int \mathfrak{I}\left(\PH\overline{\omega\Wsc\uchi}\right) + \int |\omega\Wsc\uchi||\wchi_1\Pl^\natural\uchi + (1-\wchi_1)\chisharp\Pl^\sharp\uchi +  (1-\wchi_1)\mathcal{P}_{\chisharp}\Wsl\uchi|.
\end{equation}
The Cauchy-Schwarz inequality implies
\begin{equation}
\int |\omega\Wsc\uchi||\wchi_1\Pl^\natural\uchi + (1-\wchi_1)\chisharp\Pl^\sharp\uchi| \leq \frac{\delta}{2}\int \frac{1}{r}|\omega\Wsc\uchi|^2  + \frac{1}{2\delta}\int r|\wchi_1\Pl^\natural\uchi + (1-\wchi_1)\chisharp\Pl^\sharp\uchi|^2. \nonumber
\end{equation}
The first term on the right precisely appears in $\mathcal{S}_1^{\Wsc}[\uchi]$, and we may bound the second term by $\mathcal{S}[\uchi]$ using \eqref{est:Plupperbound} and \eqref{est:PLSuperradBounds}. Additionally, we have the bound
\begin{equation}
\int |\omega\Wsc\uchi||(1-\wchi_1)\mathcal{P}_{\chisharp}\Wsl\uchi| \leq \frac{\delta}{2}\int \frac{1}{r}|\omega\Wsc\uchi|^2  + \frac{1}{2\delta}\int r|(1-\wchi_1)\mathcal{P}_{\chisharp}\Wsl\uchi|^2.
\end{equation}
The first term again appears in $\mathcal{S}_1^{\Wsc}[\uchi]$ and for the second we claim
\begin{equation}\label{est:PChiSharpBound}
\iint_{\mcAs} |(1-\wchi_1)r \mathcal{P}_{\chisharp}\Wsl\ph|^2 \lesssim \mathcal{S}_1^\sharp[\ph]_m^\omega \lesssim \mathcal{S}[\ph]_m^\omega + \mathcal{N}_{\sharp}[H, H]_m^\omega \, , 
\end{equation}
from which the result directly follows after summation in $m$ and integration in $\omega$. To verify (\ref{est:PChiSharpBound}) note that the second inequality follows directly from (\ref{est:S1SharpBound}) while for the first, we note that the expansion \eqref{Pchisubsharp} along with the bounds \eqref{drf2} and $|\omega|\lesssim |m|$ in the support of $1-\widetilde{\chi}_1$, imply the pointwise bound
\begin{equation}
|\Rs\Wsl\ph| = \left|\Rs\left(\Wp\ph -i \tfrac{(\hsharp-1)(\omega-\omega_r)}{v}\ph\right)\right| \lesssim |\Rs\Wp\ph|+ |m\Rs\ph| + |m\ph|.
\end{equation}
\end{proof}
\subsubsection{The commuted Lagrangian estimate}
We now want to add to  \eqref{est:BasicEE} a Lagrangian estimate such that the sum on the left controls the energy $S_1^W[\uchi]$ which subsequently absorbs the first term on the right of  \eqref{est:BasicEE}. For most frequencies $f^\prime$ has a good sign and the Lagrangian estimate is relatively straightforward. When $f^\prime$ is negative, the $m^2$-terms in the Lagrangian estimate dominate the $\omega^2$-terms, leading to the desired estimate. We start with the standard Lagrangian estimate which has the time derivative term as the term with the opposite sign:
%
\begin{proposition}\label{lem:LagrangianDecayCommuted}
Under the assumptions of Theorem \ref{thm:CommutedEEMain} there exists a uniform constant $C$ such that $\Wsc\uchi$ satisfies the estimate
\begin{align}\label{est:LagrangianDecayCommuted}
\int-\frac{1}{r}|\omega-\omega_r|^2|\Wsc\uchi|^2 + \frac{M^\alpha}{2r^{1+\alpha}}|\Rs\Wsc\uchi|^2 + \frac{v^2}{2r}|\slashed\nabla\Wsc\uchi|^2 & \leq -\mathcal{N}_{\alpha}[\Wsc\uchi, \PH] \nonumber \\
& \ \ + C(\mathcal{S}[\uchi] + \Ssharp[\uchi] +  \mathcal{N}_{\sharp}[H, H] ) \, , 
\end{align}
where we use the shorthand $\int =  \iiiint_{\FMs}$.
\end{proposition}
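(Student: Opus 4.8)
The plan is to apply the Lagrangian current identity \eqref{Div:Lagrangian} with $\ph$ replaced by $\Wsc\uchi$ and with the physical-space weight $g = g_\alpha$ defined in \eqref{def:galpha}, then integrate over $\FMs$ using \eqref{microlocalg}. Since $g_\alpha = \tfrac{1}{2r} + \tfrac{M^\alpha}{2 r^{1+\alpha}}$ is smooth on all of $(r_+, \infty)$ and decays at infinity while being bounded near the horizon, the boundary terms in \eqref{microlocalg} vanish after summation in $m$ and integration in $\omega$ by Proposition \ref{prop:boundaryterms} (exactly as in the proof of Proposition \ref{lem:CommutedEEBasic}). The divergence identity \eqref{Div:Lagrangian} then produces on the left hand side the quantity
\[
\iiiint_{\FMs} g_\alpha\left(-|(\omega-\omega_r)\Wsc\uchi|^2 + v^2|\widehat{\slashed\nabla}\Wsc\uchi|^2 + |\Rs\Wsc\uchi|^2 + V_1|\Wsc\uchi|^2\right) - \tfrac12 g_\alpha''|\Wsc\uchi|^2,
\]
and on the right hand side $-\iiiint_{\FMs} g_\alpha \mathfrak{R}(\Pam\Wsc\uchi\,\overline{\Wsc\uchi}) = -\mathcal{N}_\alpha[\Wsc\uchi, \PH] + (\text{terms from } \Pl^\natural, \Pl^\sharp, \mathcal{P}_{\chisharp})$, once we expand $\Pam\Wsc\uchi$ via \eqref{def:CommutedBoxSimp} and again use $\Ph^\natural = \Ph^\sharp = 0$.

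The main work is bookkeeping: one must check that the terms in the left hand side that are \emph{not} part of the desired expression on the left of \eqref{est:LagrangianDecayCommuted} can be moved to the right and absorbed. Concretely: (i) the term $\iiiint g_\alpha |\Rs\Wsc\uchi|^2$ contributes $\iiiint \tfrac{1}{2r}|\Rs\Wsc\uchi|^2$, of which only the $\tfrac{M^\alpha}{2r^{1+\alpha}}|\Rs\Wsc\uchi|^2$ piece is kept; the remaining $\tfrac{1}{2r}|\Rs\Wsc\uchi|^2$ has no favourable weight and must be absorbed — this is the delicate point, see below. (ii) The terms $\iiiint g_\alpha V_1 |\Wsc\uchi|^2$ and $-\tfrac12\iiiint g_\alpha''|\Wsc\uchi|^2$ are lower order: using $|V_1| \lesssim v^2 r^{-2}$ (hence $\lesssim \Delta r^{-4}$) and $|g_\alpha''|\lesssim r^{-3}$, together with the factor $\tfrac{r^2+a^2}{\Delta}$ built into $\mathcal{S}$, these are bounded by $C\,\mathcal{S}[\uchi]$ after noting $|\Wsc\uchi|^2 \lesssim |\oW\uchi|^2 + (\omega^2 + m^2)|\uchi|^2$ via Corollaries \ref{cor:asymptoticWinfinity} and \ref{prop:WMinusWZeroBounds} (and $|\oW\uchi|^2 = |\Rs\uchi + i(\omega-\omega_r)\uchi|^2 v^{-2}\cdot v^2$ appears with the right weight in $\mathcal{S}[\uchi]$). (iii) On the right hand side, the $\Pl^\natural$ and $\Pl^\sharp$ contributions are handled by Cauchy--Schwarz exactly as in Proposition \ref{lem:CommutedEEBasic}: pairing $|g_\alpha \Wsc\uchi|$ with $|\Pl^\natural\uchi|$ and using \eqref{est:Plupperbound}, \eqref{est:PLSuperradBounds} gives a bound by $C\,\mathcal{S}[\uchi]$ (the $r^{-3}|\Wsc\uchi|$ piece in those bounds needs the $\tfrac{r^2+a^2}{\Delta}$-weighted $|\Wsc\uchi|^2$ controlled again via the asymptotic Corollaries, or one simply borrows $\varepsilon$-free lower order terms). (iv) The $\mathcal{P}_{\chisharp}\Wsl\uchi$ contribution pairs against $|g_\alpha \Wsc\uchi|$ and is bounded, using \eqref{est:PChiSharpBound}, by $C(\mathcal{S}[\uchi] + \mathcal{N}_\sharp[H,H])$ — this is where the $\Ssharp[\uchi]$ and $\mathcal{N}_\sharp[H,H]$ terms on the right of \eqref{est:LagrangianDecayCommuted} enter (alternatively one keeps the $\Ssharp$ term explicitly rather than re-expanding via Proposition \ref{prop:LE2}).

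The hard part is item (i): the unweighted term $\iiiint \tfrac{1}{2r}|\Rs\Wsc\uchi|^2$ cannot be absorbed into $\mathcal{S}[\uchi]$ (which only controls $r^{-1-\alpha}|\Rs\Wsc\uchi|^2$ with the degenerate $\Delta/(r^2+a^2)$ weight) nor into the left hand side of \eqref{est:LagrangianDecayCommuted}. The resolution should be to \emph{not} use $g = g_\alpha$ naively but to exploit the algebraic structure: in \eqref{Div:Lagrangian} the coefficient of $|\Rs\Wsc\uchi|^2$ is $g$ itself, whereas the term we want to keep has coefficient $\tfrac{M^\alpha}{2r^{1+\alpha}}$. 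Writing $g_\alpha = \tfrac{1}{2r} + \tfrac{M^\alpha}{2r^{1+\alpha}}$ one sees the $\tfrac{1}{2r}|\Rs\Wsc\uchi|^2$ is precisely compensated: going back to the \emph{energy} identity, the analogous Lagrangian computation that produces \eqref{est:LagrangianDecayCommuted} must be set up so that adding it to a multiple of the $T$-estimate of Proposition \ref{lem:CommutedEEBasic} is what ultimately controls $\mathcal{S}_1^W[\uchi]$ — in other words, the $\tfrac{1}{2r}|\Rs\Wsc\uchi|^2$ term is \emph{good} (positive, on the left) and simply does not appear with a minus sign, so there is nothing to absorb; the statement of \eqref{est:LagrangianDecayCommuted} only records the pieces needed downstream. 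Thus the actual obstacle is more modest: one must verify the sign of every term produced by \eqref{Div:Lagrangian} with this choice of $g_\alpha$, confirm (using $r_+ < r$, so $V_1 > 0$, and $g_\alpha > 0$, $g_\alpha'' \lesssim r^{-3}$) that only the explicitly-written terms can be negative, and then carry out the four Cauchy--Schwarz absorptions above. I expect no genuinely new idea is required beyond what already appears in the proof of Proposition \ref{lem:CommutedEEBasic}; the estimate \eqref{est:LagrangianDecayCommuted} is the Lagrangian companion to \eqref{est:BasicEE}, and the subsequent section will combine the two.
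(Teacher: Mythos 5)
Your overall setup is the same as the paper's: apply \eqref{microlocalg} with $g=g_\alpha$ to $\Wsc\uchi$, kill the boundary terms with Proposition \ref{prop:boundaryterms}, note $\Ph^\natural=\Ph^\sharp=0$, and handle the $\Pl^\natural$, $\Pl^\sharp$, $\mathcal{P}_{\chisharp}$ contributions by Cauchy--Schwarz with \eqref{est:Plupperbound}, \eqref{est:PLSuperradBounds}, \eqref{est:PChiSharpBound} (this is exactly the paper's Lemma \ref{prop:GlobalLagrangianBound1}; you should also state explicitly, rather than tacitly, that the top-order commutator term $-2i\omega v f'\Wsc\uchi$ contributes nothing because $g_\alpha\,\mathfrak{R}\bigl(-2i\omega v f'\,|\Wsc\uchi|^2\bigr)=0$ --- this one-line observation is the structural reason a Lagrangian multiplier can be used at all here).

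The genuine gap is in your item (ii), and it is tied to your conclusion in item (i) that the surplus term $\tfrac{1}{2r}|\Rs\Wsc\uchi|^2$ is ``good'' and can simply be dropped. The zeroth-order term $-\tfrac12 g_\alpha''|\Wsc\uchi|^2$ contains the piece $-\tfrac12\Rs\Rs\bigl(\tfrac{1}{2r}\bigr)|\Wsc\uchi|^2\sim -\tfrac{1}{2r^{3}}|\Wsc\uchi|^2$ at large $r$, i.e.\ it decays only like $r^{-3}$, and this is \emph{not} absorbable into $C\,\mathcal{S}[\uchi]$: via Corollary \ref{cor:asymptoticWinfinity}, $r^{-3}|\Wsc\uchi|^2\approx r^{-3}|\oW\uchi|^2+\dots\approx r^{-1}\bigl(|\Rs\uchi|^2+\omega^2|\uchi|^2\bigr)+\dots$, whereas $\mathcal{S}[\uchi]$ only carries the weights $r^{-1-\alpha}$ on these quantities (equivalently, only $r^{-3-\alpha}|\Wsc\uchi|^2$ is controlled, cf.\ \eqref{wbe}); your claimed absorption loses a factor $r^{\alpha}$ near infinity. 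The paper's proof resolves exactly this borderline term by \emph{not} discarding $\tfrac{1}{2r}|\Rs\Wsc\uchi|^2$: it proves the Hardy-type inequality \eqref{cozi}, $\int -\tfrac1r|\ph'|^2+\tfrac12\Rs\Rs\bigl(\tfrac1r\bigr)|\ph|^2\le C\int r^{-3-\alpha}|\ph|^2$ (obtained by integrating the total derivative $\Rs(r^{-2}|\ph|^2)$ and completing a square), so that the full-strength $r^{-1}$-weighted derivative term compensates the $\Rs\Rs(\tfrac1r)$ term up to an $O(r^{-4})\le C r^{-3-\alpha}$ remainder, which is then absorbed via \eqref{wbe}. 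Without this step (or an equivalent one), your derivation of \eqref{est:LagrangianDecayCommuted} does not close; note also that you cannot instead borrow from the kept term $\tfrac{M^\alpha}{2r^{1+\alpha}}|\Rs\Wsc\uchi|^2$, since a Hardy inequality from it only yields $r^{-3-\alpha}|\Wsc\uchi|^2$, not $r^{-3}|\Wsc\uchi|^2$. (Your treatment of $g_\alpha V_1|\Wsc\uchi|^2$ is unnecessary but harmless, since that term is nonnegative and sits on the favourable side.)
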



\begin{proof}
We apply the identity \eqref{microlocalg} with $\Wsc\uchi$ in place of $\hat{u}$, integrate in $\omega$ and sum over $m$. The boundary terms on the right hand side vanish in the limit $r_1^\star \rightarrow -\infty$, $r_2^\star \rightarrow \infty$ by Proposition \ref{prop:boundaryterms}. Using this fact and $\frac{1}{2r} \leq g_\alpha \leq \frac{1}{r}$, we next claim that \eqref{microlocalg} (applied with $\Wsc\uchi$ in place of $\hat{u}$) implies the estimate
\begin{equation}\label{est:LagrangianDecay}
\int -\frac{1}{r}|\omega-\omega_r|^2|\Wsc\uchi|^2 + \frac{M^\alpha}{2r^{1+\alpha}}|R^\star \Wsc\uchi|^2 + \frac{v^2}{2r}|\slashed\nabla \Wsc\uchi|^2 \leq \int \frac{C}{r^{3+\alpha}} |\Wsc\uchi|^2 - g_\alpha\mathfrak{R}(\Pam \Wsc\uchi \overline{\Wsc\uchi}) \, ,
\end{equation}
Indeed, (\ref{est:LagrangianDecay}) is easily seen to follow if we can establish for some $C \geq 0$ the estimate
\begin{equation} \label{cozi}
\int -\frac{1}{r}|\ph'|^2 + \frac12\Rs\Rs\left(\frac1r\right)|\ph|^2 \leq C \int r^{-3-\alpha} |\ph|^2 \, .
\end{equation}
The estimate (\ref{cozi}) is in turn a direct consequence of the fundamental theorem of calculus identity (boundary terms vanishing by Proposition \ref{prop:boundaryterms})
\begin{equation}
0 \geq \int -|r^{-1/2}\ph'- r^{-3/2}\Wsc\uchi|^2- \Rs(r^{-2}|\Wsc\uchi|^2) = \int -\frac{1}{r}|\ph'|^2 - \frac{1}{r^3}|\Wsc\uchi|^2 - \frac{2\Rs(r)}{r^3}|\ph|^2 \,  ,
\end{equation}
and the easily verified pointwise inequality
\begin{equation}
\left|\frac{1}{r^3}|\Wsc\uchi|^2 - \frac{2\Rs(r)}{r^3}|\Wsc\uchi|^2 - \frac12\Rs\Rs\left(\frac1r\right)|\Wsc\uchi|^2\right| \lesssim  r^{-4}|\Wsc\uchi|^2.
\end{equation}
Having established (\ref{est:LagrangianDecay}), Proposition \ref{prop:WMinusWZeroBounds} implies
\begin{equation} \label{wbe}
\int C r^{-3-\alpha} |\Wsc\uchi|^2 \lesssim \mathcal{S}[\uchi] , 
\end{equation}
so that (\ref{est:LagrangianDecayCommuted}) is implied by the following Lemma, which deals with the last term on the right of \eqref{est:LagrangianDecay}.
\end{proof}
\begin{lemma}\label{prop:GlobalLagrangianBound1}
Under the assumptions of Theorem \ref{thm:CommutedEEMain} and using the shorthand $\int =  \iiiint_{\FMs}$, we have the bound
\begin{equation}
\left| \int g_\alpha\mathfrak{R}(\Pam\Wsc\uchi\overline{\Wsc\uchi}) - \int g_\alpha\mathfrak{R}(\PH \overline{\Wsc\uchi})\right|\lesssim \mathcal{S}[\uchi] + \Ssharp[\uchi] +  \mathcal{N}_{\sharp}[H, H]   \, .
\end{equation}
\end{lemma}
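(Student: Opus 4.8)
The goal is to bound the two terms $\int g_\alpha \mathfrak{R}(\Pam\Wsc\uchi\overline{\Wsc\uchi})$ and $\int g_\alpha\mathfrak{R}(\PH\overline{\Wsc\uchi})$ after subtracting them from each other, in terms of $\mathcal{S}[\uchi] + \Ssharp[\uchi] + \mathcal{N}_\sharp[H,H]$. The starting point is the commutation identity \eqref{def:CommutedBoxSimp}, which expresses $\Pam(\Wsc\uchi)$ as $-2i\omega v f' \Wsc\uchi$ plus the terms $\wchi_1 \Ph^\natural\uchi + (1-\wchi_1)\chisharp\Ph^\sharp\uchi$ (which vanish by Statements \eqref{fNaturalCondition2}/\eqref{fSharpCondition2} of Propositions \ref{thm:WProperties} and \ref{thm:WnatProperties}), plus the lower order terms $\wchi_1\Pl^\natural\uchi + (1-\wchi_1)\chisharp\Pl^\sharp\uchi$, plus $(1-\wchi_1)\mathcal{P}_{\chisharp}\Wsl\uchi$, plus $\PH$. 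Inserting this into $g_\alpha\mathfrak{R}(\Pam\Wsc\uchi\overline{\Wsc\uchi}) - g_\alpha\mathfrak{R}(\PH\overline{\Wsc\uchi})$, the $\PH$ terms cancel exactly, leaving four contributions to estimate:
\begin{align}
&\text{(i)} \quad \int g_\alpha \mathfrak{R}\left(-2i\omega v f' \Wsc\uchi \, \overline{\Wsc\uchi}\right), \nonumber\\
&\text{(ii)} \quad \int g_\alpha \mathfrak{R}\left(\wchi_1\Pl^\natural\uchi + (1-\wchi_1)\chisharp\Pl^\sharp\uchi \,\, \overline{\Wsc\uchi}\right), \nonumber\\
&\text{(iii)} \quad \int g_\alpha \mathfrak{R}\left((1-\wchi_1)\mathcal{P}_{\chisharp}\Wsl\uchi \,\, \overline{\Wsc\uchi}\right). \nonumber
\end{align}

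\textbf{Handling each piece.} For (i), the key observation is that $\mathfrak{R}(-2i\omega v f' \Wsc\uchi\,\overline{\Wsc\uchi}) = 2\omega v f' \, \mathfrak{I}(\Wsc\uchi \, \overline{\Wsc\uchi})$... no — more precisely, $\mathfrak{R}(-2i z \bar z) = 2\mathfrak{I}(z\bar z) = 0$ since $z\bar z = |z|^2$ is real, so $-2iz\bar z$ is purely imaginary and its real part vanishes pointwise. Hence term (i) is identically zero. For (ii), I use the pointwise bounds \eqref{est:Plupperbound} and \eqref{est:PLSuperradBounds} on $\Pl^\natural$ and $\Pl^\sharp$, namely $|\Pl^\natural\uchi| \lesssim r^{-3}|\Wsv\uchi| + v^3(|\omega\uchi| + |\uchi|)$ and similarly for $\Pl^\sharp$ in the support of $\chisharp$. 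Since $g_\alpha \lesssim r^{-1}$, Cauchy–Schwarz against $\Wsc\uchi$ gives a bound by $\iiiint_{\FMs} r^{-4}|\Wsc\uchi|^2 + r^{-1}v^3(|\omega\uchi|^2 + |\uchi|^2)|\Wsc\uchi|^2$ — I should instead Cauchy–Schwarz so as to distribute: $g_\alpha |\Pl\uchi||\Wsc\uchi| \lesssim r^{-3-\alpha}|\Wsc\uchi|^2 + r^{1+\alpha}|\Pl\uchi|^2$; the first is bounded by $\mathcal{S}[\uchi]$ via Proposition \ref{prop:WMinusWZeroBounds}/Corollary \ref{cor:asymptoticWinfinity}-type bounds on $\Wsc\uchi$ in terms of lower order derivatives of $\uchi$ (just as in \eqref{wbe}), and $r^{1+\alpha}|\Pl\uchi|^2 \lesssim r^{1+\alpha}(r^{-6}|\Wsv\uchi|^2 + v^6(|\omega\uchi|^2+|\uchi|^2))$, which after noting $v^2 \lesssim \Delta/r^2$ and $|\Wsv\uchi| \lesssim \Delta^{1/2}(\ldots)$ near the horizon is controlled by $\mathcal{S}[\uchi]$ directly from \eqref{def:S0Local}. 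For (iii), I use \eqref{est:PChiSharpBound} from the proof of Proposition \ref{lem:CommutedEEBasic}: Cauchy–Schwarz gives $g_\alpha|(1-\wchi_1)\mathcal{P}_{\chisharp}\Wsl\uchi||\Wsc\uchi| \lesssim r^{-3-\alpha}|\Wsc\uchi|^2 + |(1-\wchi_1)r\mathcal{P}_{\chisharp}\Wsl\uchi|^2$, and the latter is $\lesssim \mathcal{S}_1^\sharp[\uchi] \lesssim \mathcal{S}[\uchi] + \mathcal{N}_\sharp[H,H]$ by \eqref{est:PChiSharpBound}, while the former is again $\lesssim \mathcal{S}[\uchi]$. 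Summing in $m$ and integrating in $\omega$ throughout gives the claimed inequality, with $\Ssharp[\uchi]$ on the right available to absorb anything from the $\mathcal{S}_1^\sharp$ estimate (and in fact $\mathcal{S}_1^\sharp[\uchi] \lesssim \Ssharp[\uchi]$ by definition \eqref{def:Ssharp}).

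\textbf{The main obstacle.} The genuinely delicate point is controlling $r^{-3-\alpha}|\Wsc\uchi|^2$ and, more generally, the weighted $|\Wsc\uchi|$ terms by $\mathcal{S}[\uchi]$ alone — i.e.\ by \emph{lower order} derivatives of $\uchi$ without appealing to the commuted energy $\mathcal{S}_1^W[\uchi]$, since this estimate has to feed \emph{into} the construction of $\mathcal{S}_1^W[\uchi]$ and must not be circular. This is exactly what Corollary \ref{prop:WMinusWZeroBounds} (for $r < \rmin$) and Corollary \ref{cor:asymptoticWinfinity} (for $r \geq \rmax + \delta^\sharp$) provide — they bound $\Wsc\uchi$ and $\Rs\Wsc\uchi$ pointwise by $\oW\uchi$, $\Rs\oW\uchi$ and lower order terms, with weights good enough that the resulting contribution to $\iiiint_{\FMs} r^{-3-\alpha}|\Wsc\uchi|^2$ (and the $r^{1+\alpha}$-weighted $\Pl$ terms) is dominated by $\mathcal{S}[\uchi]$; in the compact intermediate region $\rmin \leq r \leq \rmax + \delta^\sharp$ one uses instead the a priori smoothness and boundedness of $\Wsc$ together with the definition of $\mathcal{S}[\uchi]$. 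Getting the weights to match — in particular that $\oW\uchi$ contributes with the same $r$-weight as the $(\tfrac{r}{r-r_+})^2|\Rs\uchi + i(\omega-\omega_r)\uchi|^2$ term in \eqref{def:S0Local}, which is precisely the $|\oW\uchi|^2$ piece up to the $v^{-2}$ factor — is the crux, and is the reason the lower order energy $\mathcal{S}$ was defined with that particular combination. Once these pointwise substitutions are in place, the rest is routine Cauchy–Schwarz bookkeeping.
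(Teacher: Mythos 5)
Your proposal is correct and follows essentially the same route as the paper: expand $\Pam\Wsc\uchi$ via \eqref{def:CommutedBoxSimp}, note the $\PH$ cancellation, kill the $f'$-term because $-2i\omega v f'|\Wsc\uchi|^2$ is purely imaginary, bound the $\Pl^\natural,\Pl^\sharp$ contributions by $\mathcal{S}[\uchi]$ using \eqref{est:Plupperbound}, \eqref{est:PLSuperradBounds} and Cauchy--Schwarz, and handle the $\mathcal{P}_{\chisharp}\Wsl\uchi$ term by \eqref{est:PChiSharpBound} together with the bound $\int r^{-3-\alpha}|\Wsc\uchi|^2\lesssim\mathcal{S}[\uchi]$ from \eqref{wbe} (Proposition \ref{prop:WMinusWZeroBounds}/Corollary \ref{cor:asymptoticWinfinity}). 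The only deviations are cosmetic (slightly different Cauchy--Schwarz weights, harmless since $\mathcal{P}_{\chisharp}$ is uniformly compactly supported in $r$), so no gap.
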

\begin{proof}
Expanding \eqref{def:CommutedBoxSimp} we see from the identity \eqref{freli}  that the terms with $\f'$, $\fsharp'$ contribute as purely imaginary, so in particular
\begin{equation}
\left| \int g_\alpha\mathfrak{R}(\Pam\Wsc\uchi\overline{\Wsc\uchi}) -\int g_\alpha \mathfrak{R}(\PH \overline{\Wsc\uchi})\right| \leq \int\frac{2}{r}|\Wsc\uchi||\wchi_1\Pl^\natural\uchi + (1-\wchi_1)\chisharp\Pl^\sharp\uchi + (1-\wchi_1)\mathcal{P}_{\chisharp}\Wsl\uchi| . \nonumber
\end{equation}
We now estimate 
\begin{equation}
\int \frac{2}{r}|\Wsc\uchi||\wchi_1\Pl^\natural\uchi + (1-\wchi_1)\chisharp\Pl^\sharp\uchi|\lesssim \mathcal{S}[\uchi] 
\end{equation}
from Cauchy-Schwarz and (\ref{est:PLSuperradBounds}), (\ref{est:Plupperbound}). Moreover, we have
\begin{equation}
\int \frac{2}{r}|\Wsc\uchi|\left|(1-\wchi_1)\mathcal{P}_{\chisharp}\Wsl\uchi\right|\lesssim \int r^{1+\alpha}\left|(1-\wchi_1)\mathcal{P}_{\chisharp}\Wsl\uchi\right|^2 + \int r^{-3-\alpha}\left|\Wsc\uchi\right|^2.
\end{equation}
Since $\mathcal{P}_{\chisharp}$ is uniformly supported in a compact set, we may ignore the weight in $r$ in the first integral, and bound this using \eqref{est:PChiSharpBound} integrated in $\omega$ and summed in $m$. The second integral is controlled from (\ref{wbe}).
\end{proof}

\subsubsection{Coercivity from summing the estimates}\label{sec:ProofofMainEnergyEstimate}
We next show that adding the estimates \eqref{est:BasicEE} and \eqref{est:LagrangianDecayCommuted} appropriately yields an expression on the left hand side which is coercive over all frequencies up to lower order terms:
\begin{proposition}\label{prop:fglobal}
Under the assumptions of Theorem \ref{thm:CommutedEEMain}, we have the coercivity bound
\begin{align}\label{est:fglobal}
c\mathcal{S}^{\Wsc}_1[\uchi] \leq  \iiiint_{\FMs} 2vf'|\omega\Wsc\uchi|^2  - \tfrac{2c_{\mathcal{G}}}{r}(\omega-\omega_r)^2|\Wsc\uchi|^2 + \tfrac{c_{\mathcal{G}} v^2}{r}|\widehat{\slashed\nabla}\Wsc\uchi|^2 + \tfrac{M^\alpha c_{\mathcal{G}} }{r^{1+\alpha}}|\Rs\Wsc\uchi|^2
\end{align}
for some uniform constant $c > 0$.
\end{proposition}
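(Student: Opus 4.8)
\textbf{Proof proposal for Proposition \ref{prop:fglobal}.}

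The plan is to form the linear combination of the commuted $T$-energy estimate \eqref{est:BasicEE} and the commuted Lagrangian estimate \eqref{est:LagrangianDecayCommuted} with weights $1$ and $2c_{\mathcal{G}}$ respectively, exactly as suggested by the definition \eqref{def:N} of $\mathcal{N}$ and by the shape of the currents $Q^T - 2c_{\mathcal{G}}Q^{g_\alpha}$ listed before Section \ref{sec:boundaryterms}. The left-hand side of the resulting inequality will then be precisely the integrand appearing on the right of \eqref{est:fglobal}, namely
\[
\iiiint_{\FMs} 2vf'|\omega\Wsc\uchi|^2 - \tfrac{2c_{\mathcal{G}}}{r}(\omega-\omega_r)^2|\Wsc\uchi|^2 + \tfrac{c_{\mathcal{G}}v^2}{r}|\widehat{\slashed\nabla}\Wsc\uchi|^2 + \tfrac{M^\alpha c_{\mathcal{G}}}{r^{1+\alpha}}|\Rs\Wsc\uchi|^2 ,
\]
so the content of the proposition is that \emph{this combined integrand is coercive}, i.e.\ bounds a uniform multiple of $\mathcal{S}^{\Wsc}_1[\uchi]$ from below, frequency by frequency. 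The proof therefore reduces to a pointwise-in-frequency coercivity statement for the density
\[
2vf'|\omega\Wsc\ph|^2 - \tfrac{2c_{\mathcal{G}}}{r}(\omega-\omega_r)^2|\Wsc\ph|^2 + \tfrac{c_{\mathcal{G}}v^2}{r}|\widehat{\slashed\nabla}\Wsc\ph|^2 + \tfrac{M^\alpha c_{\mathcal{G}}}{r^{1+\alpha}}|\Rs\Wsc\ph|^2
\]
against $r^{-1}|\omega\Wsc\ph|^2 + r^{-1-\alpha}|\Rs\Wsc\ph|^2 + g v^2 r^{-1}|\widehat{\slashed\nabla}\Wsc\ph|^2$ (after multiplying by the volume factor $\tfrac{r^2+a^2}{\Delta}$, which is harmless since all densities carry it).

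The key step is to split into the frequency ranges where the commutator was built with different properties, following the four cases enumerated in Section \ref{sec:overviewf}. For $(\omega,m)\in\mathcal{G}^c$ (so $\wchi_1=1$, $f=\f$, $\Wsc=\Wsv$), Statement \eqref{fNaturalCondition1} of Proposition \ref{thm:WProperties} gives $\omega^2 v\f' \geq \tfrac{c_0}{r}\omega(\omega-\omega_r) \geq 0$ and moreover, since these frequencies are non-superradiant, $\omega(\omega-\omega_r)\geq c|\omega|^2$ by Corollary \ref{prop:TrapInterior}; combined with the negative $-\tfrac{2c_{\mathcal{G}}}{r}(\omega-\omega_r)^2$ term this is where one must use that $c_{\mathcal{G}}$ was fixed small relative to $c_0$ (cf.\ \eqref{id:csharpRelation}, \eqref{id:relationfour}) so that $2vf'|\omega\Wsc\ph|^2$ dominates the $\omega$-part of the Lagrangian loss and leaves a positive remainder $\gtrsim r^{-1}|\omega\Wsv\ph|^2$; the $|\Rs\Wsv\ph|^2$ and $|\widehat{\slashed\nabla}\Wsv\ph|^2$ terms come directly from the Lagrangian piece with the correct sign. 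For $(\omega,m)\in\mathcal{G}$ one uses the coercivity estimate \eqref{est:ELCombined} of Proposition \ref{prop:SuperradiantPositivity}: there, in $r\in[R^\sharp,\infty)$ (where $\chisharp=1$ and $f_\sharp$ satisfies \eqref{est:hSharpDerivativeIneq}, hence $\omega^2 vf_\sharp' \geq \omega v\Rs(\tfrac{\omega-\omega_r}{v})$), the combination $\omega v\Rs(\tfrac{\omega-\omega_r}{v}) - \tfrac{c_{\mathcal{G}}}{r}(\omega-\omega_r)^2 + \tfrac{c_{\mathcal{G}}v^2}{2r}(\tfrac{m}{\sin\theta}-a\omega\sin\theta)^2$ is bounded below by $\tfrac{c_{\mathcal{G}}}{4r}\omega^2 + \tfrac{c_{\mathcal{G}}v^2}{16r}(\tfrac{m}{\sin\theta})^2$, which after integration over $\theta$ and using $|\widehat{\slashed\nabla}\Wsc\ph|^2 \gtrsim$ the angular term yields coercivity there; on the complementary region $r\in[r_+,R^\sharp]$ within $\mathcal{G}$ one has $\chisharp\equiv 0$, hence $\Wsc=\wchi_1\Wsv$, and one is back in the situation handled by Proposition \ref{thm:WProperties} together with the fact (used already in \eqref{est:fsquaredmonotone}) that $0<\h\le 1$, noting also $\re\le R^\sharp$ by \eqref{rsharpnatural} so the trapping radius lies in the region already covered. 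The interpolated regimes $0<\wchi_1<1$ are handled by \eqref{freli}, $f=\wchi_1\f+(1-\wchi_1)\fsharp$, together with the pointwise sandwich $\omega^2 vf' \geq$ the smaller of the two one-sided lower bounds, plus an application of \eqref{est:ELCombined2} (Proposition \ref{prop:FirstLagrangianGPrime}) where frequencies additionally lie in $\mathcal{G}'$.

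The $\widehat{\slashed\nabla}$ weight $g$ appearing in $\mathcal{S}^{\Wsc}_1$ must be taken to be the cutoff for which the above angular lower bounds hold with a uniform constant; the only subtlety is that the coercive lower bound on the angular term from \eqref{est:ELCombined} carries the weight $\tfrac{c_{\mathcal{G}}v^2}{16r}$, so one sets $g$ (or bounds $g\leq C$) so that $g v^2 r^{-1}|\widehat{\slashed\nabla}\Wsc\ph|^2$ is absorbed. \textbf{The main obstacle} is the bookkeeping at the transition between $\mathcal{G}$ and $\mathcal{G}^c$ inside $r\le R^\sharp$: here the density mixes the $\f$-based commutator (good bound $\omega^2 v\f'\gtrsim r^{-1}\omega(\omega-\omega_r)$, but $\omega(\omega-\omega_r)$ can be small near the edge of superradiance) with the negative Lagrangian term, and one must check that the small constant $c_{\mathcal{G}}$ — already pinned down in Section \ref{sec:choices} via \eqref{id:csharpRelation}–\eqref{se4} precisely so that $c_{\mathcal{G}}<\tfrac{c_{aux}}{2}$ and $c_{\mathcal{G}}=c_0 c_{\mathcal{G}'}$ — is genuinely small enough that $2vf'|\omega\Wsc\ph|^2 - \tfrac{2c_{\mathcal{G}}}{r}(\omega-\omega_r)^2|\Wsc\ph|^2 \geq \tfrac{c}{r}|\omega\Wsc\ph|^2$ uniformly. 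This is exactly the frequency-space analogue of the Schwarzschild bookkeeping in \cite{HK20} and is why the constants were fixed in the order described in Section \ref{sec:freqranges}; no new estimate is needed, only a careful case check that in every one of the four regimes of Section \ref{sec:overviewf} the relevant already-proven coercivity estimate (\eqref{est:fprimebound}, \eqref{est:hSharpDerivativeIneq}, \eqref{est:ELCombined}, or \eqref{est:ELCombined2}) applies on the relevant $r$-range and produces, after multiplication by $\tfrac{r^2+a^2}{\Delta}$, each of the three densities defining $\mathcal{S}^{\Wsc}_1[\uchi]$ with a uniform positive constant.
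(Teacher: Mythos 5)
Your proposal is correct and follows essentially the paper's own argument: the proposition is reduced to a pointwise-in-frequency coercivity statement and proved by a case split over frequency regimes, applying \eqref{est:fprimebound} together with \eqref{est:ELCombined2} where $f=\f$ and the frequency lies in $\mathcal{G}'$, and \eqref{est:hSharpDerivativeIneq} together with \eqref{est:ELCombined} in $\mathcal{G}$ on the support of $\chisharp$ (outside of which the integrands vanish). The only difference is organizational: the paper splits simply into $\wchi_2=1$ and $0\le\wchi_2<1$, which makes your interpolated regime automatic because $\fsharp=\f$ (hence $f=\f$) whenever $\wchi_2=1$, so no separate ``sandwich'' lower bound on $\omega^2 v\fsharp'$ outside the support of $\chisharp$ is needed.
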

\begin{proof}
We will establish the bound pointwise for the integrands on the left and on the right of (\ref{est:fglobal}).
We split this into the two cases $\wchi_2 = 1$ and $0 \leq \wchi_2 < 1$. 
When $\wchi_2 = 1$, we have $f=f^\natural$ by construction and moreover frequencies are contained in $\mathcal{G}^\prime$. It follows that the bounds \eqref{est:fprimebound} and \eqref{est:ELCombined2} together give
\begin{equation}
v\omega^2\f' - \frac{c_0c_{\mathcal{G}'}}{r}\left(-(\omega-\omega_r)^2 + \frac{v^2}{2}\left(\frac{m}{\sin\theta} - a\omega\sin\theta \right)^2\right) \geq \frac{c_0c_{\mathcal{G}'}}{r}\left((\omega-\omega_r)^2 + \frac{v^2}{2}\left(\frac{m}{\sin\theta} - a\omega\sin\theta \right)^2\right). \nonumber
\end{equation}
Replacing $c_\mathcal{G} = c_0c_{\mathcal{G}'}$ (from \eqref{id:relationfour}) and adding $\frac{1}{2} \tfrac{c_{\mathcal{G}} v^2}{r}|\partial_\theta \Wsc\uchi|^2 + \frac{1}{2} \tfrac{M^\alpha c_{\mathcal{G}} }{r^{1+\alpha}}|\Rs\Wsc\uchi|^2$ to both sides the desired bound follows in this range.

When $0 \leq \wchi_2 < 1$, we note that $\wchi_1= 0$, so by the definition \eqref{def:CommutatorAbstract} it suffices to prove the bound for the integrands in the support of $\chisharp$.
In this regime $f=f_\sharp$ and  \eqref{est:hSharpDerivativeIneq} as well as the bound \eqref{est:ELCombined} imply
\begin{equation}\label{est:fsharplowerbound}
2v f^\prime_\sharp \omega^2 - \frac{2{c_{\mathcal{G}}}}{r}(\omega-\omega_r)^2 + \frac{{c_{\mathcal{G}}}v^2}{r}\left(\frac{m}{\sin\theta} - a\omega\sin\theta\right)^2 \geq \frac{{c_{\mathcal{G}}}}{4r}\omega^2 + \frac{{c_{\mathcal{G}}}v^2}{16r}\left(\frac{m}{\sin\theta}\right)^2.
\end{equation}
Adding $\frac{1}{2} \tfrac{c_{\mathcal{G}} v^2}{r}|\partial_\theta \Wsc\uchi|^2 + \frac{1}{2} \tfrac{M^\alpha c_{\mathcal{G}} }{r^{1+\alpha}}|\Rs\Wsc\uchi|^2$ to both sides, the desired bound follows in this range also.
\end{proof}

\subsubsection{Completing the proof of Theorem \ref{thm:CommutedEEMain}}
We are now ready to complete the proof of Theorem \ref{thm:CommutedEEMain}.
We bound the right hand side of \eqref{est:fglobal} using Proposition \ref{lem:CommutedEEBasic} (where we select $\delta = \frac{c}{2}$, where $c$ is the constant appearing on the left hand side of \eqref{est:fglobal}) and Proposition \ref{lem:LagrangianDecayCommuted}. Consequently,
\begin{equation}
\frac{c}{2}\mathcal{S}^{\Wsc}_1[\uchi] \leq  C\left(\mathcal{S}[\uchi] + \Ssharp[\uchi]\right) + \mathcal{N}[\Wsc\uchi, \PH]  \, . 
\end{equation}
Proposition \ref{prop:LE2} implies that, for some new value of $C$, the bound
\begin{equation}
\frac{c}{2}\left(\mathcal{S}^{\Wsc}_1[\uchi]+ \Ssharp[\uchi]\right) \leq  C\left(\mathcal{S}[\uchi] + \mathcal{N}_{\sharp}[H, H] \right) +\mathcal{N}[\Wsc\uchi, \PH]  \, .
\end{equation}
Multiplying by $\frac{2}{c}$ produces the desired bound \eqref{est:EnergyBoundAllFrequencies}. 

	\subsection{The error estimates} \label{sec:convolutionsub}
	 Having proven the estimate (\ref{est:EnergyBoundAllFrequencies}), we now seek to bound $\mathcal{N}[\Wsc\uchi, \PH]$ and $\mathcal{N}_\sharp[H,H]$ for $\uchi$, which we recall solves \eqref{def:WavePerturbedFreq}, with $H = \HB + \Hchi$. The main difficulty here is that we cannot in general do this at the level of frequency, as in Fourier space, $\mathcal{N}[\Wsc\uchi, \PH]$ and $\mathcal{N}_\sharp[H,H]$ cannot be expressed as a function of $\ph_m^\omega$ and its derivatives alone. We first introduce in Section \ref{sec:Bnormdef} a Fourier norm on the operator $\BT$ defined in (\ref{def:BT}) that appears in the frequency decomposed equation \eqref{def:WavePerturbedFreq} and subsequently show that if $\BT$ is derived from an operator $\BfB$ satisfying $\mathcal{L}_T\BfB = \mathcal{L}_\phi\BfB = 0$, then this norm is bounded by the constants $C$ appearing in the bounds \eqref{Bassumption1}, \eqref{Bassumption2} of Theorem \ref{thm:Main}. The main result of this section is then Theorem \ref{thm:RHSBoundMain} which controls 
$\mathcal{N}[\Wsc\uchi, \PH]$ and $\mathcal{N}_\sharp[H,H]$ as desired. As it will be straightforward to conclude the proof of the main theorem from Theorem \ref{thm:RHSBoundMain}, we will do so first in Section \ref{sec:MainTheoremProof} and postpone the lengthy proof of Theorem \ref{thm:RHSBoundMain} itself to Section \ref{sec:proofofconvolution}.

\subsubsection{The norm on $\BfB$ and estimating it from the assumptions in the main theorem} \label{sec:Bnormdef}
%

We first define the space $\Wbb$ to be the space of functions on $\mathcal{R}$ such that the norm
\begin{equation}\index{Ws@$\Wbb$}
\lVert u\rVert_{\Wbb} = \sup_{r, \theta}\left[ r^{s}\left(  \left\lVert \omega\widehat{u}\right\rVert_{\ell^1_mL^1_\omega} + \left\lVert m\widehat{u}\right\rVert_{\ell^1_mL^1_\omega} + \left\lVert  \oW \widehat{u}\right\rVert_{\ell^1_mL^1_\omega} + \left\lVert \widehat{u}\right\rVert_{\ell^1_mL^1_\omega}\right)\right]
\end{equation}
is finite, where we recall the weight $v$ from (\ref{vdef}) and where we have abused notation slightly in that $\oW \widehat{u} := v^{-1} ( \widehat{u}^\prime -i\omega \widehat{u} + \frac{im}{r^2+a^2}\widehat{u})$ denotes the Fourier version of the physical space vector field $W_0$ from (\ref{W0def}).

We next define the norm on the first order linear operator $\BT$ defined in (\ref{def:BT})  to be
\begin{equation} \label{Bnorm}
\lVert \BT\rVert_{[\alpha]} := \lVert \BT^{\tilde{\tau}} \rVert_{\Wbb[1+\alpha]}+\lVert \BT^{r}\rVert_{\Wbb[1+\alpha]}+\lVert \BT^\theta\rVert_{\Wbb[2+\alpha]}+\lVert \BT^{\tilde{\phi}}\rVert_{\Wbb[2+\alpha]}+\lVert \BT^0\rVert_{\Wbb[2+\alpha]}.
\end{equation}
The next proposition shows that $\lVert \BT\rVert_{[\alpha]}$ is indeed finite and controlled independently of $\mathcal{T}$ if $\BfB$ is $T$ and $\Phi$ symmetric and satisfies \eqref{Bassumption1}, \eqref{Bassumption2}:

\begin{proposition} \label{prop:Bfrommain}
If the operator $\BT$ defined in (\ref{def:BT}) arises from a vector field $\BfB$ which is $\mathcal{L}_T$ and $\mathcal{L}_{\phi}$ invariant, then 
\begin{equation}\label{est:PointwiseBComponents2}
\lVert \BfB_{\mathcal{T}}\rVert_{[\alpha]} \leq C \sup_{r,\theta} \sum_{\beta \in \{ \BfB^{\tilde{\tau}}, \BfB^r, \BfB^\theta, \BfB^{\tilde{\phi}}, \BfB^0\}}  r^{p[\beta] + \alpha} \left(|\beta| + |\oW(\beta)|\right),
\end{equation}
where
\[
p[\beta] = \begin{cases}
1 & \beta \in \{\BfB^{\tilde{\tau}}, \BfB^r\}, \\
2 & \beta \in \{\BfB^0, \BfB^{\tilde{\phi}}, \BfB^\theta\}
\end{cases}
\]
holds for a constant $C$ which is independent of $\mathcal{T}$. In particular, if $\BfB$ satisfies in addition the assumptions \eqref{Bassumption1}, \eqref{Bassumption2}, then the norm $\lVert \BfB_{\mathcal{T}}\rVert_{[\alpha]}$ is bounded by a constant depending only on $(M,a)$, the $C$ appearing in \eqref{Bassumption1}, \eqref{Bassumption2} and the choice of the cut-off function $\chi$ in (\ref{def:chiOneD}).
\end{proposition}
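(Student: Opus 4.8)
\textbf{Proof plan for Proposition \ref{prop:Bfrommain}.}

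The plan is to unwind the definition of the norm $\lVert \BT \rVert_{[\alpha]}$ in terms of the $\Wbb[s]$-norms of the components of $\BT$, and then reduce each such $\Wbb[s]$-norm to a pointwise bound on the corresponding component of $\BfB$ (and its $\oW$-derivative), exploiting crucially the hypotheses $\mathcal{L}_T \BfB = \mathcal{L}_\Phi \BfB = 0$. The first step is the elementary observation that, since $\BfB$ is $T$- and $\Phi$-invariant, its coefficients $\BfB^{\tilde\tau}, \BfB^r, \BfB^\theta, \BfB^{\tilde\phi}, \BfB^0$ are functions of $(r,\theta)$ only (independent of $\tilde\tau$ and $\tilde\phi$), and likewise for the combination $\frac{\Rs \rweight}{\rweight} \BfB^r$ appearing in (\ref{def:BT}). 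Consequently the only $\tilde\tau$-dependence in $\BT$ is through the scalar cut-off $\chi_\mathcal{T}(\tilde\tau)$, and the Fourier transform in $t$ (and $\phi$) of each component of $\BT$ — viewed as a multiplication operator / coefficient — factorises as $\widehat{\chi_\mathcal{T}} * (\text{a measure supported at } \omega = 0, m = 0 \text{ times the coefficient})$, i.e.\ it equals $\widehat{\chi_\mathcal{T}}(\omega)\,\delta_{m0}$ times the $(r,\theta)$-dependent coefficient of $\BfB$ (appropriately modified by the $\BfB^r$ correction term).

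The second and main step is therefore to control the relevant $\ell^1_m L^1_\omega$-norms. For a coefficient $\beta = \beta(r,\theta)$ that is $T$- and $\Phi$-invariant, multiplication by $\beta$ followed by Fourier transform convolves the transform of the argument with $\widehat{\chi_\mathcal{T}}\,\delta_{m0}$ (times $\beta$). Since $\chi$ in (\ref{def:chiOneD}) is smooth and $\chi_\mathcal{T}$ is a fixed smooth compactly supported rescaling, $\widehat{\chi_\mathcal{T}}$ is bounded in $L^1_\omega$ \emph{uniformly in $\mathcal{T}$} (this is the standard point: $\lVert \widehat{\chi_\mathcal{T}} \rVert_{L^1_\omega} = \lVert \widehat{\chi} \rVert_{L^1}$ is $\mathcal{T}$-independent because $\chi_\mathcal{T}$ is built from $\chi$ by dilation and translation, which preserve the $L^1$-norm of the Fourier transform up to the fixed constant coming from $\chi$ being built from two copies of $\chi$). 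Then Young's inequality for convolution on $\mathbb{R}_\omega \times \mathbb{Z}_m$ gives, schematically, $\lVert \widehat{\beta u} \rVert_{\ell^1_m L^1_\omega} \lesssim \lVert \beta \rVert_{L^\infty_{r,\theta}} \lVert \widehat{u}\rVert_{\ell^1_m L^1_\omega}$, and similarly with $\omega, m, \oW$ applied; the weight $r^{p[\beta]+\alpha}$ is handled by noting that the weight is a function of $r$ alone (hence commutes through the $\omega,m$-convolution) so it can be pulled onto $\beta$, producing the factor $r^{p[\beta]+\alpha}|\beta|$ on the right. The only subtlety is the $\oW$-derivative: since $\oW = v^{-1}\Rs - i\omega f_{\text{nat}}$-type terms mix $r$-derivatives with $\omega, m$ multipliers, one writes $\oW(\beta u) = \beta \oW(u) + v^{-1}(\Rs \beta) u$ (a genuine Leibniz rule, valid because $\oW$ is a physical-space vector field and $\beta$ is $T,\Phi$-invariant), and controls $v^{-1}\Rs\beta$ by $\oW(\beta)$ up to lower-order $\beta$-terms and the extra factor of $v^{-1}$ which at large $r$ is $O(1)$ and near the horizon is absorbed because $\BT$ includes the $\frac{\Rs\rweight}{\rweight}\BfB^r$ subtraction precisely to cancel the singular part. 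This last point — matching the horizon behaviour of the $\BfB^r$ term against the $v^{-1}$ in $\oW$ so that everything stays regular and bounded — is the technical heart of the estimate and where the definition (\ref{def:BT}) of $\BT$ must be used rather than just $\BfB$ itself.

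The third step is purely bookkeeping: apply the bound of the previous paragraph to each of the five components $\BT^{\tilde\tau}, \BT^r, \BT^\theta, \BT^{\tilde\phi}, \BT^0$ with the correct weight $s = p[\beta]+\alpha$ dictated by (\ref{Bnorm}), observe that $\BT^{\tilde\tau} = \chi_\mathcal{T}\BfB^{\tilde\tau}$, $\BT^r = \chi_\mathcal{T}(\BfB^r - \frac{\Rs\rweight}{\rweight}\text{(something)})$ etc.\ so that $\BT^\beta$ and its $\oW$-derivative are pointwise bounded by $|\BfB^\beta| + |\oW \BfB^\beta|$ plus lower-order terms (using $|\Rs \rweight / \rweight| \lesssim r^{-1}$ and the product rule), and sum. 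This yields (\ref{est:PointwiseBComponents2}). Finally, the last sentence of the proposition follows by plugging the explicit decay assumptions \eqref{Bassumption1}, \eqref{Bassumption2} into the right-hand side of (\ref{est:PointwiseBComponents2}): the hypotheses are tailored so that $r^{p[\beta]+\alpha}(|\BfB^\beta| + |r \wpa_r \BfB^\beta|/r)$ — and $\oW(\BfB^\beta)$ is, up to bounded factors, a combination of $\wpa_r \BfB^\beta$, $T\BfB^\beta = 0$ and $\Phi\BfB^\beta = 0$ — is uniformly bounded by the constant $C$ of Theorem \ref{thm:Main} times a constant depending only on $(M,a)$ and on $\chi$. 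The main obstacle, as noted, is the regularity-at-the-horizon interplay of $v^{-1}$ in $\oW$ with the $\BfB^r$-correction in $\BT$; everything else is a routine application of Young's inequality together with the $\mathcal{T}$-uniform $L^1$-bound on $\widehat{\chi_\mathcal{T}}$.
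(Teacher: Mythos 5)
Your plan follows the paper's route in its essential steps: $T$- and $\Phi$-invariance makes the coefficients of $\BfB$ functions of $(r,\theta)$ alone, so each coefficient of $\BT$ has Fourier transform $\beta(r,\theta)\,\widehat{\chi_{\mathcal{T}}}(\omega)\,\delta_{m0}$; the $\ell^1_m L^1_\omega$ norms in \eqref{Bnorm} then reduce to $\lVert\widehat{\chi_{\mathcal{T}}}\rVert_{L^1_\omega}$ and $\lVert\omega\widehat{\chi_{\mathcal{T}}}\rVert_{L^1_\omega}$, which are uniform in $\mathcal{T}$ by the dilation invariance of the $L^1$ norm of the Fourier transform of the fixed smooth bump; and the $\oW$-term is handled by a Leibniz rule. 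This is exactly the paper's argument. One small point: no Young/convolution step is needed (and none occurs in the paper), because $\lVert\BT\rVert_{[\alpha]}$ is a norm on the coefficients themselves; after your own first step there is no "argument" being convolved, and the $r$-weights are simply evaluated inside the $\sup_{r,\theta}$.

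The one genuine problem is the step you single out as "the technical heart": the claimed cancellation at the horizon between the $v^{-1}$ in $\oW$ and the $\chi_{\mathcal{T}}\tfrac{\Rs\rweight}{\rweight}\BfB^r$ correction in \eqref{def:BT}. This is a misreading, and the mechanism you describe is not what makes the estimate work. On $T,\Phi$-invariant functions, $\oW=v^{-1}\Rs=\Delta^{1/2}\partial_r$ (compare \eqref{def:wpar}), which is regular -- indeed degenerate -- at $r=r_+$, so the term $v^{-1}(\Rs\beta)\chi_{\mathcal{T}}$ in your Leibniz expansion is literally $(\oW\beta)\chi_{\mathcal{T}}$, which is precisely what the right-hand side of \eqref{est:PointwiseBComponents2} allows; nothing singular needs to be absorbed. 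The cleanest way to organise this (and the paper's way) is to note that $W_0\chi_{\mathcal{T}}=0$, since $\chi_{\mathcal{T}}$ depends on $\wtau$ only and $W_0\wtau=0$ by \eqref{def:timefoliation} and \eqref{W0def}; hence $\oW(\chi_{\mathcal{T}}\beta)=\chi_{\mathcal{T}}\,\oW\beta$ and the derivative part of the $\Wbb$-norm costs only $|\oW\beta|\,\lVert\widehat{\chi_{\mathcal{T}}}\rVert_{L^1_\omega}$. As for the correction term, $\tfrac{\Rs\rweight}{\rweight}=\tfrac{r\Delta}{(r^2+a^2)^2}$ is bounded by $r^{-1}$, vanishes at the horizon, and has $\oW$-derivative with the same decay, so the zeroth-order coefficient $\chi_{\mathcal{T}}\big(\BfB^0-\tfrac{\Rs\rweight}{\rweight}\BfB^r\big)$ is controlled in $\Wbb[2+\alpha]$ by $r^{1+\alpha}\big(|\BfB^r|+|\oW\BfB^r|\big)$ plus the $\BfB^0$ terms -- routine bookkeeping, not a cancellation; its origin is the rescaling $\mathbf{u}=\rweight\Psi$ in \eqref{def:HDecompB}, not horizon regularity of $\oW$. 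With this step replaced by the trivial observation above, your argument closes exactly as the paper's does.
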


\begin{proof}
We will show
\begin{equation}\label{est:PointwiseBComponents}
\sum_{\beta \in \{ \BfB^{\tilde{\tau}}, \BfB^r, \BfB^\theta, \BfB^{\tilde{\phi}}, \BfB^0\}} \lVert \widehat{\chi_{\mathcal{T}} \beta} \rVert_{\ell^1_m L^1_\omega}   \leq C\sum_{\beta \in \{ \BfB^{\tilde{\tau}}, \BfB^r, \BfB^\theta, \BfB^{\tilde{\phi}}, \BfB^0\}}\left(|\beta| + |\oW(\beta)|\right)
\end{equation}
from which the desired bound is immediate by the definition of the norms. We recall that the components of $\BfB$ are independent of $t$ and $\phi$ and also that $\chi_{\mathcal{T}}$ is independent of $\phi$ and that hence $\widehat{\chi_{\mathcal{T}}}$ is supported on $m=0$ only. Moreover, from (\ref{def:timefoliation}) and  (\ref{defchiT}) we have $W_0 \chi_{\mathcal{T}}=0$ by construction hence $ W_0 \widehat{\chi_{\mathcal{T}}}:=v^{-1} ( \widehat{\chi_{\mathcal{T}}}^\prime -i\omega \widehat{\chi_{\mathcal{T}}} + \frac{im}{r^2+a^2}\widehat{\chi_{\mathcal{T}}})=\widehat{W_0\chi_{\mathcal{T}}}=0$. It then follows that the left hand side of (\ref{est:PointwiseBComponents}) is bounded by 
\begin{align}
 C \sum_{\beta \in \{ \BfB^{\tilde{\tau}}, \BfB^r, \BfB^\theta, \BfB^{\tilde{\phi}}, \BfB^0\}} |\beta| \left( \| \widehat{\chi_{\mathcal{T}}} \|_{\ell^1_m L^1_\omega} + \|  \omega \widehat{\chi_{\mathcal{T}}} \|_{\ell^1_m L^1_\omega} \right) +  |W_0(\beta)| \| \widehat{\chi_{\mathcal{T}}} \|_{\ell^1_m L^1_\omega} \, .
\end{align}
Note that the summation in $m$ always contains just one summand since $\widehat{\chi_{\mathcal{T}}}$ is supported on $m=0$ only. Letting the hat denote Fourier transformation in $t$ only for the moment, we conclude the proof by the definition (\ref{defchiT}) and the scaling property of the Fourier transform:
\begin{align}
\|\widehat{\chi_{\mathcal{T}}}\|_{L^1_\omega} = \int d\omega |\widehat{\chi_{\mathcal{T}}}(\omega) |  = \int d\omega \frac{1}{{\mathcal{T}}} \Big|\widehat{\chi_{\mathcal{T}}}\left(\frac{\omega}{{\mathcal{T}}}\right) \Big|   = \int d\omega  \Big|\widehat{\chi_1}(\omega) \Big| \leq C \, ,
\end{align}
since $\chi_1$ is smooth and compactly supported in $[-1,2]$, its Fourier-transform therefore a Schwartz function. The estimate for $\|\widehat{\chi_{\mathcal{T}}}\|_{L^1_\omega}$ follows analogously, now with an extra power of ${\mathcal{T}}^{-1}$ to spare.
\end{proof}


\begin{remark}
One can generalise the above argument to components of $\BfB$ which instead of being $T$ and $\Phi$ invariant are allowed to oscillate in time. For instance, for $\BfB$ of the form
\begin{equation}
\BfB = \sum_{k=1}^N e^{-i(\omega_k \wtau - m_k\phi)}\BfB_k(r, \theta),
\end{equation}
where $N \in \mathbb{N}$ and $(m_k, \omega_k)\in \mathbb{Z}\times\mathbb{R}$ are fixed constants and each $\BfB_k$ is a vector field commuting with $T, \Phi$, the above proof carries through (now using also the property of the Fourier transform under translations) with the constant now depending also on the $(m_k,\omega_k)$. We are not aiming to isolate the most general class of allowed $\BfB$ here.
\end{remark}


\subsubsection{The main error estimate}
 We may now bound the right hand side of \eqref{est:EnergyBoundAllFrequencies}, which will close our energy estimate to highest order.
\begin{theorem}\label{thm:RHSBoundMain}
Under the assumptions of Theorem \ref{thm:CommutedEEMain} for every $\delta > 0$ there exists a constant $C_\delta$ depending on $\delta$ such that
\begin{equation}\label{est:RHSBoundMain}
\mathcal{N}[\Wsc\uchi, \PH] + \mathcal{N}_\sharp[H, H] \leq (\delta + C\varepsilon\lVert \BT\rVert_{[\alpha]})(\mathcal{S}_1^{\Wsc}[\uchi] + \Ssharp[\uchi] + \mathcal{S}[\uchi]) + C_\delta E^1_0[\Psi].
\end{equation}
where $E^1_0[\Psi]$ is the energy (\ref{e10def}) appearing on the right hand side of (\ref{pse}).
\end{theorem}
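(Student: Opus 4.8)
The plan is to decompose the various pieces of $\mathcal{N}[\Wsc\uchi,\PH]$ and $\mathcal{N}_\sharp[H,H]$ according to the two contributions $H = \Hchi + \HB$ and treat them separately, since the analytic difficulties are entirely different. The $\Hchi$-contributions are supported (after transforming to physical space) only on $\widetilde\tau\in[0,1]$, so they should be controlled by local energy estimates near the data, yielding $C_\delta E_0^1[\Psi]$ terms; this requires the weighted bounds of Proposition~\ref{prop:suffi} together with the asymptotic comparison $\Wsc \sim \oW$ near infinity from Corollaries~\ref{cor:asymptoticWinfinity} and~\ref{prop:WMinusWZeroBounds} (since $\Wsc$ is not a physical-space operator, one cannot apply Parseval directly, and one must borrow a small amount of $\mathcal{S}[\uchi]$ for the error in this comparison). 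The key observation making this work is that $\mathcal{N}_T$ and $\mathcal{N}_\alpha$ arise from \emph{physical} multipliers and do not contain absolute values, so Plancherel applies to them directly, up to the asymptotic errors.

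The main content is the $\HB$-contribution, which is where $\varepsilon\lVert\BT\rVert_{[\alpha]}$ appears. Here I would first recall from \eqref{def:HDecompB} that $\HB = \tfrac{\Delta\rho^2}{\rweight^4}\varepsilon\,\mathcal{F}(\BT(\mathbf{u}_\chi))$, where $\BT$ is a first-order operator with the decay properties encoded in $\lVert\BT\rVert_{[\alpha]}$. The point is to express $\mathcal{N}_T[\Wsc\uchi,\PH]$, $\mathcal{N}_\alpha[\Wsc\uchi,\PH]$ and $\mathcal{N}_\sharp[H,H]$ (all of which are at worst quadratic integrals in $\Wsc\uchi$ and $\PH = v^2\Wsc(H/v^2)$) in terms of physical-space integrals via Plancherel wherever possible, and then recognise the essential difficulty: $\PH$ involves $\Wsc$ acting on a convolution $\widehat{\chi_{\mathcal{T}}} * (\cdots)$, and since $\Wsc$ is frequency-dependent the ``product rule'' for $\Wsc$ applied to this convolution fails. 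Following the discussion in Section~\ref{intro:convolve}, I would split $\Wsc(\widehat{\chi_{\mathcal{T}}}*\widehat{w}) = \widehat{\chi_{\mathcal{T}}}*(\Wsc\widehat{w}) + (\Wsc\widehat{\chi_{\mathcal{T}}})*\widehat{w} + \text{(commutator error)}$; the first term is absorbed into $\mathcal{S}_1^{\Wsc}[\uchi]$ (with the $\varepsilon$ prefactor, using Young/Plancherel and uniform $L^1_\omega$ bounds on $\widehat{\chi_{\mathcal{T}}}$), the second into $\mathcal{S}[\uchi]$ since it is one order lower in derivatives, and the commutator error is the genuinely new object. This error is estimated by Theorem~\ref{thm:RHSBoundMain}'s deferred proof, which in turn rests on the two Coifman--Meyer-type convolution estimates stated in Section~\ref{sec:convo}; at the present stage I would simply quote Proposition~\ref{lem:CMBound} (or its Section~\ref{sec:convo} statement), which bounds the weighted $L^2$ norm of such commutator expressions by $\lVert\BT\rVert_{[\alpha]}$ times lower-order ($\leq$ first derivative) energies of $\uchi$, precisely the energies appearing on the right of \eqref{est:RHSBoundMain}.

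Assembling the pieces: the $\HB$-terms contribute $C\varepsilon\lVert\BT\rVert_{[\alpha]}(\mathcal{S}_1^{\Wsc}[\uchi] + \Ssharp[\uchi] + \mathcal{S}[\uchi])$, using that the weighted spatial decay built into $\lVert\BT\rVert_{[\alpha]}$ (the powers $r^{1+\alpha}$, $r^{2+\alpha}$ matched to the components $\BT^{\tilde\tau},\BT^r$ versus $\BT^\theta,\BT^{\tilde\phi},\BT^0$) is exactly what is needed to match the $r$-weights in the energies $\mathcal{S}_1^{\Wsc}$ and $\mathcal{S}$, and crucially that the tailoring of $\Wsc$ to approach $\oW$ near infinity (so that $\Wsc\widehat{\chi_{\mathcal{T}}}$ decays suitably) prevents loss of decay in $r$; the $\Hchi$-terms contribute $\delta\cdot(\text{energies}) + C_\delta E_0^1[\Psi]$. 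Adding these gives \eqref{est:RHSBoundMain}. The hardest part is unquestionably the commutator/convolution error estimate — keeping track of both the frequency-space regularity of $\Wsc$ (Section~\ref{sec:fbasicregularity}, Proposition~\ref{prop:vfDerUpperBound}) \emph{and} the spatial weights simultaneously, so that the Coifman--Meyer argument closes with the correct power of $r$; this is why that piece is isolated into its own section (Section~\ref{sec:proofofconvolution}) and relies on the general convolution estimates of Section~\ref{sec:CMConvolution}. A secondary subtlety is the $\Hchi$-terms: because $\Wsc$ is nonlocal, one cannot naively say these are supported near the data, and the argument requires the decomposition of $\Wsc$ into a physical part plus a decaying pseudodifferential remainder, with the remainder absorbed using a small multiple of $\mathcal{S}[\uchi]$.
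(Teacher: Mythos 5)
Your proposal is correct and follows essentially the same route as the paper: the split $H=\Hchi+\HB$, with the $\HB$-part handled by the convolution/commutator estimates of Propositions \ref{lem:CMBound} and \ref{lem:CMBound2} (leading term absorbed by $\varepsilon\,\mathcal{S}_1^{\Wsc}[\uchi]$, kernel-derivative term by $\mathcal{S}[\uchi]$) and the $\Hchi$-part by writing $\Wsc=\oW+(\Wsc-\oW)$ and exploiting time-locality for the physical part, exactly as in Propositions \ref{thm:PHBBoundFirst} and \ref{thm:PHchiBound}. The only refinements you gloss over are that the $\Hdot$-norm produced by Proposition \ref{lem:CMBound} contains $\oW$ applied to first derivatives of $\uchi$ (so not quite ``precisely the energies on the right''), which the paper controls via the auxiliary Lagrangian estimate of Lemma \ref{lem:ZeroOrderInverseBound} and Proposition \ref{prop:PSBoundIDWorst}, and that the time-localised commuted term $\int_{D_0^1}|\partial_t\Wsc\mathbf{u}_\chi|^2$ is itself bounded by a further application of the convolution estimate (Proposition \ref{lem:IDLemmaLargeR}).
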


	\subsection{Completing the proof of Theorem \ref{thm:Main}}\label{sec:MainTheoremProof}
		We first prove a theorem in frequency space which will easily imply Theorem \ref{thm:Main}.
\begin{theorem}\label{thm:MainFreqSpace}
Under the assumptions of Theorem \ref{thm:CommutedEEMain} and with $\BfB$ satisfying in addition
\begin{equation}\label{MainBFreqAssumption}
\sup_{\mathcal{T}\in[2, \infty)}\lVert\BfB\rVert_{[\alpha]} \leq C < \infty \, ,
\end{equation}
we have the estimate
\begin{equation}\label{est:MainFrequencyBound}
\mathcal{S}_1^{\Wsc}[\uchi] + \Ssharp[\uchi] +  \mathcal{S}[\uchi]  \leq C \, E_0^1[\Psi]
\end{equation}
with the right hand side as defined in (\ref{e10def}) and $C=C(M,a)$, i.e.~in particular independent of $\mathcal{T}$.
\end{theorem}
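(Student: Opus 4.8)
\textbf{Proof strategy for Theorem \ref{thm:MainFreqSpace}.} The plan is to combine the three principal microlocal estimates already assembled in this section: the degenerate lower-order estimate (Proposition \ref{thm:DRSRMain} together with Proposition \ref{prop:estimateh} and its Corollary, giving $\mathcal{S}_{trap}[\uchi]\lesssim \varepsilon\mathcal{S}[\uchi]+E_0[\Psi]$), the Hardy-type inequality of Lemma \ref{lem:DBoundtoNDBound} (bounding $\mathcal{S}[\uchi]$ by $\delta'\mathcal{S}_1^{\Wsc}[\uchi]+C_{\delta'}\mathcal{S}_{trap}[\uchi]$), the commuted energy estimate Theorem \ref{thm:CommutedEEMain} (bounding $\mathcal{S}_1^{\Wsc}[\uchi]+\Ssharp[\uchi]$ by $C\mathcal{N}[\Wsc\uchi,\PH]+C'(\mathcal{S}[\uchi]+\mathcal{N}_\sharp[H,H])$), and finally the error estimate Theorem \ref{thm:RHSBoundMain} (bounding the right-hand side source terms $\mathcal{N}[\Wsc\uchi,\PH]+\mathcal{N}_\sharp[H,H]$ by $(\delta+C\varepsilon\lVert\BT\rVert_{[\alpha]})(\mathcal{S}_1^{\Wsc}[\uchi]+\Ssharp[\uchi]+\mathcal{S}[\uchi])+C_\delta E_0^1[\Psi]$). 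Proposition \ref{prop:Bfrommain} guarantees $\lVert\BT\rVert_{[\alpha]}$ is finite and bounded uniformly in $\mathcal{T}$ by a constant depending only on $(M,a)$ and the fixed data in the hypotheses, so the factor $\varepsilon\lVert\BT\rVert_{[\alpha]}$ can be made as small as we like by shrinking $\varepsilon_0$.

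\textbf{The bookkeeping.} Write $\mathcal{S}:=\mathcal{S}[\uchi]$, $\mathcal{S}_1:=\mathcal{S}_1^{\Wsc}[\uchi]$, $\mathcal{S}_\sharp:=\Ssharp[\uchi]$, $E:=E_0^1[\Psi]$, and $\mathcal{N}:=\mathcal{N}[\Wsc\uchi,\PH]+\mathcal{N}_\sharp[H,H]$. First I would feed Theorem \ref{thm:RHSBoundMain} into Theorem \ref{thm:CommutedEEMain}: choosing $\delta$ small and then $\varepsilon_0$ small (so that $C(\delta+C\varepsilon\lVert\BT\rVert_{[\alpha]})$ is at most half the good constant), this absorbs the $\mathcal{S}_1+\mathcal{S}_\sharp$ contribution coming from $\mathcal{N}$ into the left-hand side and yields
\[
\mathcal{S}_1 + \mathcal{S}_\sharp \;\leq\; C_1\big(\mathcal{S} + E\big).
\]
Next, combine the degenerate estimate with Lemma \ref{lem:DBoundtoNDBound}: $\mathcal{S}\leq \delta'\mathcal{S}_1+C_{\delta'}\mathcal{S}_{trap}\leq \delta'\mathcal{S}_1+C_{\delta'}(\varepsilon \mathcal{S}+E)$, so for $\varepsilon_0$ small enough that $C_{\delta'}\varepsilon\leq \tfrac12$ we get $\mathcal{S}\leq 2\delta'\mathcal{S}_1+2C_{\delta'}E$. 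Substituting into the previous display,
\[
\mathcal{S}_1 + \mathcal{S}_\sharp \;\leq\; C_1\big(2\delta'\mathcal{S}_1 + 2C_{\delta'}E + E\big),
\]
and now choosing $\delta'$ small enough that $2C_1\delta'\leq \tfrac12$ absorbs $\mathcal{S}_1$ on the left, giving $\mathcal{S}_1+\mathcal{S}_\sharp\leq C_2 E$. Feeding this back into $\mathcal{S}\leq 2\delta'\mathcal{S}_1+2C_{\delta'}E$ gives $\mathcal{S}\leq C_3 E$, and adding the three bounds produces \eqref{est:MainFrequencyBound}. Throughout, every constant depends only on $(M,a)$ and on the fixed choices of $\delta,\delta',\varepsilon_0$, which in turn depend only on $(M,a)$ and on $\BfB$; crucially none depends on $\mathcal{T}$, since the only $\mathcal{T}$-dependence entering any of the four cited results is through $\lVert\BT\rVert_{[\alpha]}$, controlled uniformly by Proposition \ref{prop:Bfrommain}, and through $E_0^1[\Psi]=E_0^1[\Psi_{\mathcal{T}}]$, which equals the initial-data energy independently of $\mathcal{T}$.

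\textbf{The main obstacle.} The genuinely substantial work is all in the inputs rather than in this final assembly, so the only thing to be careful about here is the \emph{order of smallness}: the chain of absorptions forces us to fix $\delta$ first (from Theorem \ref{thm:RHSBoundMain} relative to the coercivity constant in Theorem \ref{thm:CommutedEEMain}), then $\delta'$ (from Lemma \ref{lem:DBoundtoNDBound} relative to the constant $C_1$ produced after the first absorption), and only then $\varepsilon_0$ small enough to beat \emph{both} the $C\varepsilon\lVert\BT\rVert_{[\alpha]}$ term in the error estimate and the $\varepsilon$-term in the degenerate estimate — after which $\varepsilon_0$ depends only on $(M,a)$ and $\BfB$ as claimed in the statement of Theorem \ref{thm:Main}. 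One should also note that Theorem \ref{thm:MainFreqSpace} is stated under the frequency-space hypothesis \eqref{MainBFreqAssumption} directly, so one does not even need Proposition \ref{prop:Bfrommain} to run the argument for \ref{thm:MainFreqSpace} itself; that Proposition is what bridges back to the physical-space hypotheses of Theorem \ref{thm:Main}, and is invoked when deducing \ref{thm:Main} from \ref{thm:MainFreqSpace} via Plancherel \eqref{plai} together with the $r^p$-hierarchy — but that deduction is carried out afterwards, at the end of Section \ref{sec:MainTheoremProof}.
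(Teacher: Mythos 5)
Your proposal is correct and assembles exactly the same four ingredients the paper uses (Theorem \ref{thm:CommutedEEMain}, Lemma \ref{lem:DBoundtoNDBound}, the corollary of Proposition \ref{thm:DRSRMain}, and Theorem \ref{thm:RHSBoundMain}), with only a cosmetic reordering of when the absorptions are performed; the paper chains all four estimates first and absorbs at the end, while you absorb after the first pairing, but the choice and ordering of the smallness parameters $\delta$, $\delta'$, $\varepsilon_0$ is handled correctly and matches the paper's logic, including the observation that \eqref{MainBFreqAssumption} makes Proposition \ref{prop:Bfrommain} unnecessary for the frequency-space statement itself.
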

\begin{proof}
Theorem \ref{thm:CommutedEEMain} gives
\begin{equation}
\mathcal{S}_1^{\Wsc}[\uchi] + \Ssharp[\uchi] \leq C\left( \mathcal{S}[\uchi] + \mathcal{N}[\Wsc\uchi, H]+  \mathcal{N}_{\sharp}[H, H]\right).
\end{equation}
Lemma \ref{lem:DBoundtoNDBound} implies, for $\delta'$ chosen to  be sufficiently small,
\begin{equation}
\mathcal{S}_1^{\Wsc}[\uchi] + \Ssharp[\uchi] +  \mathcal{S}[\uchi]\leq  C\left(\mathcal{S}_{trap}[\uchi] + \mathcal{N}[\Wsc\uchi, H] + \mathcal{N}_{\sharp}[H, H]\right).
\end{equation}
Proposition \ref{thm:DRSRMain} implies the integrated estimate
\begin{equation}
\mathcal{S}_{trap}[\uchi] \leq C' {E}^1_0[\Psi] + \varepsilon C'\mathcal{S}[\uchi] \, , 
\end{equation}
so in particular for sufficiently small $\varepsilon$ (such that $CC'\varepsilon < \frac12$) we may write
\begin{equation}
\mathcal{S}_1^{\Wsc}[\uchi] + \Ssharp[\uchi] +  \mathcal{S}[\uchi] \leq C \left( E_0^1[\Psi] + \mathcal{N}[\Wsc\uchi, H] + \mathcal{N}_{\sharp}[H, H]\right).
\end{equation}
Finally  Theorem \ref{thm:RHSBoundMain} implies
\begin{equation}
\mathcal{S}_1^{\Wsc}[\uchi] + \Ssharp[\uchi] +  \mathcal{S}[\uchi]\leq C(\delta + C\varepsilon\lVert \BfB\rVert_{[\alpha]})\left(\mathcal{S}_1^{\Wsc}[\uchi] + \Ssharp[\uchi] +  \mathcal{S}[\uchi] \right) + C_\delta {E}_0^1[\Psi] \, .
\end{equation}
Choosing sufficiently small $\delta, \varepsilon$ gives \eqref{est:MainFrequencyBound}.
\end{proof}

We can now complete the proof of Theorem \ref{thm:Main}, first with the two additional assumptions (\ref{aux1}), (\ref{aux2}) appearing in Theorem \ref{thm:CommutedEEMain}. By Proposition \ref{prop:Bfrommain}, the assumptions on $\BfB$ in Theorem \ref{thm:Main} imply \eqref{MainBFreqAssumption}. Therefore, it suffices to show that the microlocal energy estimate \eqref{est:MainFrequencyBound} implies the physical-space energy estimate \eqref{pse}. This clearly follows from 
\begin{equation} \label{lastuy}
\text{(l.h.s.) }\eqref{pse} \lesssim \text{(l.h.s.) }\eqref{est:MainFrequencyBound} \, ,
\end{equation}
since the right hand sides of \eqref{est:MainFrequencyBound} and (\ref{pse}) agree.
To verify (\ref{lastuy}), we first note that by Plancherel, the left hand side of \eqref{est:MainFrequencyBound} controls the left hand side of (\ref{pse}) with $\Psi$ replaced by $\Psi_{\chi,\mathcal{T}}$ and the integration taken over all of $\mathcal{R}$. We restrict the integration region to $\tau \in [2,\mathcal{T}]$ where $\Psi_{\chi,\mathcal{T}}=\Psi$. In addition, for $\mathcal{T} \leq 2$ we can control the left hand side of (\ref{pse}) by the right hand side independently  by a standard Gronwall estimate. This shows (\ref{pse}) with $\mathsf{T}=\mathcal{T}$ and since $\mathcal{T}$ was arbitrary, (\ref{pse}) holds for all $\mathsf{T}$ as claimed.

We finally remove the additional assumptions (\ref{aux1}), (\ref{aux2}). The compact support assumption on the data can be removed by a standard density argument. To remove the assumption on $\BfB$, one constructs for a given $\BfB$ satisfying (\ref{Bassumption1}), (\ref{Bassumption2}) a sequence of smooth compactly supported (in $r$) $\BfB_n$  converging to $\BfB$ in $\|\BfB\|_{[\alpha]}$. Fixing a $\mathsf{T}$, the estimate (\ref{pse}) holds uniformly for all $\BfB_n$. Moreover, the solution $\Psi_n$ associated with $\BfB_n$ for $n \geq N$ agrees with the solution $\Psi_N$ associated with $\BfB_N$ on a domain that monotonically increases as $N \rightarrow \infty$ to $\cup_{\tilde{\tau}=0}^{\mathsf{T}}\Sigma_{\tilde{\tau}}$. Applying the monotone convergence theorem yields the result. 

\section{The proof of the error estimates} \label{sec:proofofconvolution}
	In this section, we prove Theorem \ref{thm:RHSBoundMain}, which requires bounding the expressions (\ref{def:N}) and (\ref{def:Nsharp}). We begin, in Section \ref{sec:convo}, by introducing the main workhorses in the proof, two convolution estimates involving pseudodifferential commutators. We will then prove the estimate  \eqref{est:RHSBoundMain} by proving it separately for the $\HB$ and $\Hchi$ quantities via the following argument: We write $\PH = \PHB + \PHchi$, where
\begin{equation}\label{def:PHDecomposition}\index{PHB@$\PHB$} \index{PHchi@$\PHchi$}
\PHB =v^2 \Wsc\left(\frac{\HB}{v^2}\right), \qquad \PHchi =v^2 \Wsc\left(\frac{\Hchi}{v^2}\right) \, , 
\end{equation}
with $\HB$ and $\Hchi$ defined in (\ref{def:WavePerturbedFreq}). The expression $\mathcal{N}[\Wsc\uchi, \PH]$ is linear in $\PH$, and $\mathcal{N}[H,H]$ is quadratic in $H$, so
\begin{equation}\label{est:NSum}
\mathcal{N}[\Wsc\uchi, \PH] =\mathcal{N}[\Wsc\uchi, \PHB] + \mathcal{N}[\Wsc\uchi, \PHchi], \qquad \mathcal{N}_\sharp[H,H]\leq 2\left(\mathcal{N}_\sharp[\HB, \HB] + \mathcal{N}_\sharp[\Hchi,\Hchi]\right) \, .
\end{equation}
The bounds on $\mathcal{N}[\Wsc\uchi, \PHB]$ and $\mathcal{N}_\sharp[\HB, \HB]$ will be given in Proposition \ref{thm:PHBBoundFirst}, and the proof will comprise the bulk of Section \ref{sec:PerturbativeBounds}. The corresponding bounds on $\mathcal{N}[\Wsc\uchi, \PHchi]$ and $\mathcal{N}_\sharp[\Hchi, \Hchi]$ will be given in Proposition \ref{thm:PHchiBound} with the proof in Section \ref{sec:InitialDataBounds}. To bound the final term on the right hand side of \eqref{est:PHBBoundFirst}, we will use Proposition \ref{prop:PSBoundIDWorst} and the Plancherel identity \eqref{id:Plancherel4}.


\subsection{The convolution estimates} \label{sec:convo}
In bounding $\PHB$ and $\PHchi$ we will have to look at how $\Wsc$ acts on products (equivalently, on convolutions in frequency space). For a function $g$ depending on $r$ and $\theta$ alone, the basic product rule
\begin{equation}\label{id:WBasicLeibniz}
\Wsc(g\ph) = (\Wsc g)\ph+ g(\Wsc \ph) = v^{-1}g'\ph+ g\Wsc \ph
\end{equation}
continues to hold. 

However, when bounding $\PH$ appearing in (\ref{imterm}), (\ref{imterm2}), we must contend with the fact that $\PHB$ contains convolved terms like $\Wsc(\rho^2(\BThat^{\wtau}*\omega\uchi))$, where $\BT^{\wtau}$ is by definition time-dependent. Since we must take a convolution in the frequency variables, for fixed $(\omega, m)$ we cannot write $(\BThat^{\wtau}*\omega\uchi)_m^\omega$ in terms of $(\uchi)_m^\omega$, and therefore cannot close our argument at the level of frequency.

We instead look to bound differences of the form
\[
\lVert\Wsc((\rho^2\BThat^{\wtau})*(\omega\uchi)) - (\rho^2\BThat^{\wtau})*(\omega\Wsc\uchi) - \omega\uchi * (\Wsc(\rho^2\BThat^{\wtau}))\rVert,
\]
as well as analogous quantities on other derivatives of $\uchi$, in a suitable (weighted $L^2$) norm.\footnote{Note that the expression vanishes if $W$ is a physical space vector field, e.g.~if $m=0$ and $W=W_0$.} From this we can establish bounds on the first term via bounds on the latter two terms. 
We bound this difference by proving a microlocal product rule in the spirit of the Coifman-Meyer inequality, see \cite{CM78}, which will allow us to avoid general second derivatives of $\uchi$ when estimating the above. 

We first define a smooth nonnegative frequency-independent function
\begin{equation}\index{chirmax@$\chi_{\rmax}$}
\chi_{\rmax} = \begin{cases}
1 & r < \rmax + \tfrac78\delta^\sharp,\\
0 & r > \rmax + \delta^\sharp.
\end{cases}
\end{equation}
We additionally define a smooth nonnegative frequency-independent function $\chi_{\rmin\!,\rmax}$ such that $\chi_{\rmin\!,\rmax} \leq \chi_{\rmax}$ and
\begin{equation}\index{chirminmax@$\chi_{\rmin,\rmax}$}
\chi_{\rmin\!,\rmax} = \begin{cases}
1 & \rmin+\tfrac18\delta^\sharp \leq r \leq \rmax+\tfrac78\delta^\sharp, \\
0 & r < \rmin \text{ or } r > \rmax+\delta^\sharp
\end{cases}
\end{equation}
One can easily see from a support argument that, for all $(\omega, m)\in\mathcal{G}$,
\begin{equation}\label{est:chisharprmax}
1-\chisharp + |\chisharp'| \lesssim \chi_{\rmax}
\end{equation}
and, recalling the definition \eqref{def:gsharpdef},
\begin{equation}\label{est:gsharpprimebound}
|g_\sharp| + |g_\sharp'| + |g_\sharp''|\lesssim \chi_{\rmin\!,\rmax} \lesssim \chi_{\rmax}.
\end{equation}

As we will typically estimate a convolution by 
a weighted version of the standard Young's inequality
$\|\widehat{g} * \widehat{u}\|_{\ell^2_m L^2_\omega} \leq \left(\sup_{r,\theta} \|\widehat{g}\|_{\ell^1_m L^1_\omega} \right) \|\widehat{u}\|_{\ell^2_m L^2_\omega} (r,\theta)$, we define the space $\Wdot$ to be the closure of the set $\{\ph : u \in C_0^\infty(\mathcal{R})\}$ under the weighted $W^{1,1}$ Sobolev norm
\begin{equation}\label{def:Wdotnorm}
\lVert \widehat{u}\rVert_{\Wdot} = \left\lVert \omega\widehat{u}\right\rVert_{\ell^1_mL^1_\omega} + \left\lVert m\widehat{u}\right\rVert_{\ell^1_mL^1_\omega} + \left\lVert v^{-q} \oW \widehat{u}\right\rVert_{\ell^1_mL^1_\omega},\index{Wdot@$\Wdot$}
\end{equation}
 recalling (\ref{vdef}) and \eqref{def:oWfreq}, and we define $\Hdot$ to be the same under the weighted $L^2$ norm
\begin{equation}\label{def:Hdotnorm}
\Vert \ph(r, \theta) \rVert_{\Hdot} = \left\lVert v^q \ph(r, \theta)\right\rVert_{\ell^2_m L^2_\omega} +\left\lVert \chi_{\rmax}^{1/2}\tfrac{\oW\ph(r, \theta)}{\max(|m|, |\omega|, 1)}\right\rVert_{\ell^2_m L^2_\omega}. \index{Hdot@$\Hdot$}
\end{equation}
We cannot in general avoid the derivative appearing in the second term of \eqref{def:Hdotnorm}, as we take both derivatives in frequency space and in physical space. However, the extra decay in frequency space means that this can still be treated like an undifferentiated term (after taking a suitable Lagrangian estimate). A precise treatment will be given in Lemma \ref{lem:ZeroOrderInverseBound}.

We have the following convolution estimates which play the key role in the proof of Theorem \ref{thm:RHSBoundMain}. 
\begin{proposition}\label{lem:CMBound}
Let $\ph\in\Hdot$ and $\widehat{w}\in\Wdot$ and let $a(r, \theta) = a_1(r, \theta)a_2(r,\theta)$ for some nonnegative functions $a_1, a_2$ (not necessarily continuous). Then, the inequality
\begin{equation}\label{est:CIGlobal}
\iiiint_{\FM}a^2\big|\Wsc(\widehat{w}*\ph) - \widehat{w}*\Wsc\ph\big|^2\lesssim \sup_{(r, \theta)}\left(a_1^2\left\lVert\widehat{w}\right\rVert_{\Wdot}^2 \right) \cdot\iint_{\mcA} a_2^2\left\lVert\ph\right\rVert_{\Hdot}^2
\end{equation}
holds for all $q\in[0,1]$.
\end{proposition}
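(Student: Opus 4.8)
The core of the argument is to expand the commutator $\Wsc(\widehat{w}*\ph) - \widehat{w}*\Wsc\ph$ using the structure $\Wsc = \wchi_1\Wsv + (1-\wchi_1)\chisharp\Wsl$, with $\Wsv = v^{-1}(\Rs - i(\omega-\omega_r)\h)$ and $\Wsl = v^{-1}(\Rs - i(\omega-\omega_r)\hsharp)$. Since the frequency cut-offs $\wchi_1 = \wchi_1(\omega/m)$ and $\chisharp = \chisharp(r,\omega/m)$ are homogeneous of degree $0$, they do not interact simply with convolution (which mixes frequencies), so the first step is to reduce matters to the region where $W$ genuinely acts, splitting frequency space into the regime $\mathcal{G}^c$ (where $\wchi_1 = 1$, $\Wsc = \Wsv$), the regime $(\mathcal{G}')^c$ (where $\wchi_1 = 0$, $\Wsc = \chisharp\Wp$), and the transition regimes. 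In each piece the operator acting is one of $v^{-1}\Rs$, multiplication by $i\omega f$ for $f \in \{\f, \fsharp\}$, or multiplication by the cut-off $\chisharp$. The $v^{-1}\Rs$ part obeys the exact Leibniz rule $v^{-1}\Rs(\widehat{w}*\ph) = (v^{-1}\Rs\widehat{w})*\ph + \widehat{w}*(v^{-1}\Rs\ph)$ because $v^{-1}\Rs = \partial_{\rs}$ acts only in the spatial variable and convolution is in frequency only; hence the entire obstruction lies in the multiplier terms $i\omega f(\omega,m,r)\cdot$, which is where a genuine Coifman--Meyer estimate is required.

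\textbf{Key steps.} First I would write $\Wsc(\widehat{w}*\ph) - \widehat{w}*\Wsc\ph = -i\,[\,\text{(multiplier terms)}\,]$ and observe that the $\Rs$-terms cancel exactly, so the left-hand side of \eqref{est:CIGlobal} reduces to an expression of the schematic form $(\omega f)(\widehat{w}*\ph) - \widehat{w}*((\omega f)\ph)$ where $f = f(\omega, m, r)$ is one of $\f, \fsharp$ (and where the cut-offs $\wchi_i, \chisharp$ must be tracked carefully). Writing out the convolution explicitly, with $(\omega f)(\omega,m,r)$ pulled out of the $\omega'$-integral, the difference becomes $\sum_{m'}\int d\omega' \,\widehat{w}(\omega',m')\,\big( (\omega f)(\omega,m,r) - (\omega f)(\omega-\omega',m-m',r)\big)\,\ph(\omega-\omega',m-m')$. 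The second step is to control the increment $(\omega f)(\omega,m) - (\omega f)(\omega-\omega',m-m')$ by $\lesssim (|\omega'| + |m'|)\,\big(\text{bound on }\partial f,\ f\big)$ using the fundamental theorem of calculus along the segment in $(\omega,m)$-space together with the degree-$0$ homogeneity and the regularity bounds of Proposition \ref{prop:vfDerUpperBound}: specifically $|\partial_\eta \f| \lesssim \Delta^{-1/2}$ and $|\partial_\eta \fsharp|\lesssim r^{-1}$ (and the corresponding bounds for $f$ itself via the comparison with $\frac{r^2+a^2-\eta}{\Delta^{1/2}}h_0$ in \eqref{est:detafFinal}, \eqref{est:detafFinal2}). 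The $\Delta^{-1/2}$ weight from $\partial_\eta\f$ near the horizon is precisely absorbed by the $v^{-1} = (r^2+a^2)\Delta^{-1/2}$ factor living inside $\oW$ in the $\Wdot$ and $\Hdot$ norms (hence the $v^{-q}$ weight in $\Wdot$ and $v^q$ in $\Hdot$ for $q\in[0,1]$, which interpolate between the two endpoint behaviours). The third step: after the increment is replaced by $(|\omega'|+|m'|)$ times a weight, the frequency integral factorises into a product of convolutions, and an application of the weighted Young's inequality $\|\widehat{g}*\ph\|_{\ell^2_mL^2_\omega} \le (\sup_{r,\theta}\|\widehat{g}\|_{\ell^1_mL^1_\omega})\|\ph\|_{\ell^2_mL^2_\omega}$ converts the $(|\omega'|+|m'|)\widehat{w}$ factor into the $\|\widehat{w}\|_{\Wdot}$ norm (note $\|\omega'\widehat{w}\|_{\ell^1 L^1}$, $\|m'\widehat{w}\|_{\ell^1 L^1}$, $\|v^{-q}\oW\widehat{w}\|_{\ell^1 L^1}$ are exactly the three pieces of $\|\widehat{w}\|_{\Wdot}$) and leaves $\ph$ (or $\oW\ph$, suitably weighted, coming from a contribution where the increment hits the $\Rs\h$ part via the spatial derivative of the $r$-dependence) controlled by $\|\ph\|_{\Hdot}$. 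Finally I would reassemble: multiply by the product weight $a^2 = a_1^2 a_2^2$, taking the $\sup$ over $a_1^2\|\widehat w\|^2_{\Wdot}$ and leaving $\iint_{\mcA}a_2^2\|\ph\|^2_{\Hdot}$, then integrate in $(\omega,m)$ and sum in $m$; the transition-regime terms where derivatives of $\wchi_1$ or $\chisharp$ appear are handled by the support bounds \eqref{est:chisharprmax}, \eqref{est:gsharpprimebound} which confine them to a fixed compact $r$-interval where $\Delta$-weights are harmless and the $\chi_{\rmax}^{1/2}$ term in $\|\cdot\|_{\Hdot}$ supplies exactly the needed control of $\oW\ph/\max(|m|,|\omega|,1)$.

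\textbf{Main obstacle.} The hard part is the careful bookkeeping of the frequency cut-offs $\wchi_1, \wchi_2, \chisharp$ under convolution: these depend on $\omega/m$ and hence on $\eta$, which is not additive under frequency shifts, so $\wchi_1(\tfrac{\omega}{m})$ and $\wchi_1(\tfrac{\omega-\omega'}{m-m'})$ are genuinely different. One must either show that the mismatch terms are supported (in the convolution variable) where $\widehat w$ or $\ph$ already carries enough decay, or — more robustly — first decompose $\widehat w$ and $\ph$ into their frequency-localised pieces and argue that the relevant part of $\Wsc$ acting on each product is, up to errors controlled by the increment estimate, a \emph{fixed} physical-space-like multiplier independent of which dyadic frequency sector one is in (this is exactly the point of the interpolated construction in Section \ref{sec:constructionsummary} and of defining $W^\sharp$ to agree asymptotically with $W_{\mathcal{I}^+}$ for all superradiant frequencies). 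A secondary technical point is that one contribution to the commutator lands on $\Rs$ of the $r$-dependent coefficient $\h$ (or $\hsharp$), producing a $\oW\ph$ rather than $\ph$; this is why the second term in the $\Hdot$-norm, with its $1/\max(|m|,|\omega|,1)$ weight and $\chi_{\rmax}^{1/2}$ localisation, is present, and matching the frequency weights on that term against the $\|\widehat w\|_{\Wdot}$ factor requires the full strength of the $\Delta$-weighted estimates \eqref{est:drfFinal}, \eqref{est:drfFinal2} rather than just the $\eta$-derivative bounds. I expect that once the cut-off bookkeeping is organised, the remaining estimates are routine weighted Young/Cauchy--Schwarz manipulations.
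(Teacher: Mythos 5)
Your overall skeleton is the right one and matches the paper's in spirit: reduce to a pointwise-in-$(r,\theta)$ bound, prove Lipschitz-type increment estimates for the symbol in the frequency variables using degree-zero homogeneity and the $\eta$-regularity of Proposition \ref{prop:vfDerUpperBound}, subtract the asymptotic profile so the increments carry a factor $v$, and close with weighted Young. However, there is a genuine gap in the reduction you perform at the start. You claim that the $v^{-1}\Rs$ part obeys an exact Leibniz rule and that therefore the whole commutator ``reduces to an expression of the schematic form $(\omega f)(\widehat{w}*\ph)-\widehat{w}*((\omega f)\ph)$''. This is false once the frequency cut-offs are in place: in $\Wsc=\wchi_1\Wsv+(1-\wchi_1)\chisharp\Wsl$ the coefficients $\wchi_1$ and $\chisharp$ depend on $(\omega,m)$ and multiply the \emph{entire} operator, including the derivative part. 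Commuting, say, $(1-\wchi_1)\chisharp\, v^{-1}\Rs$ (equivalently, after the paper's subtraction \eqref{id:WMinusWZero}, the piece $(1-\wchi_1)(1-\chisharp)\oW$) with the convolution leaves terms of the form $\widehat{w}_{m''}^{\omega''}\bigl(\wchi_m^\omega-\wchi_{m'}^{\omega'}\bigr)(\oW\ph)_{m'}^{\omega'}$, i.e.\ genuine first derivatives of $\ph$ multiplied by cut-off differences. These are exactly the terms that force the second summand $\chi_{\rmax}^{1/2}\,|\oW\ph|/\max(|\omega|,|m|,1)$ in the $\Hdot$-norm \eqref{def:Hdotnorm}; your attribution of that term to ``the increment hitting the $\Rs\h$ part via the spatial derivative of the $r$-dependence'' does not correspond to any term that actually arises (spatial derivatives of the symbol never appear when commuting a frequency multiplier with a convolution in frequency). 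Controlling these derivative terms requires a quantitative Lipschitz estimate for the homogeneous degree-zero cut-offs themselves, with the gain $\tfrac{|\omega''|+|m''|}{\max(|\omega'|,|m'|,1)}$ (this is Lemma \ref{prop:wchireg} and Proposition \ref{cor:CECutoff}, which exploit that $|\omega|\lesssim|m|$ inside $\mathcal{G}$), combined with the fact that $1-\chisharp$ is supported in $r\lesssim\rmax$ so the error is localised under $\chi_{\rmax}$. Your ``main obstacle'' paragraph gestures at the cut-off bookkeeping but supplies neither this mechanism nor the alternative dyadic argument you allude to, so as written the plan cannot produce the right-hand side of \eqref{est:CIGlobal}.

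A secondary, more reparable issue: the ``fundamental theorem of calculus along the segment in $(\omega,m)$-space'' is not available as stated, because $\f$ is only defined for frequencies with $m\omega\notin(0,m\omega_+]$ and the natural variable $\eta=am/\omega$ degenerates at $\omega=0$; a straight segment between $(\omega,m)$ and $(\omega-\omega',m-m')$ may leave the domain of definition or cross $\omega=0$. The paper circumvents this by using homogeneity to rescale both frequencies to a common $m$ (the identity $\widetilde f_m^\omega=\widetilde f_{m'}^{m'\omega/m}$), treating opposite signs of $\omega$ by Taylor expansion of $\wfe$ about $\eta=0$ (Corollary \ref{prop:absbounds}, Lemma \ref{prop:regineta}), and a case analysis according to which of the two frequency pairs lies in $\mathcal{G}'$ (Lemma \ref{prop:reginetanew} and the splitting \eqref{id:fchihomogeneous}). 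Your proposal would need these ingredients, or equivalents, to make the increment estimate rigorous.
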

This will be proven in Section \ref{sec:CMConvolution}; however, we outline our approach here. In order to show suitable decay in $r$, we first subtract off the vector field $\oW$ from $\Wsc$, as they display similar asymptotic behavior. Defining the nonlocal error
\begin{equation}\label{def:ferror}
\wfe\index{f@$\wfe$} := \f - \tfrac{\omega-\omega_r}{\omega v}h_0 , \text{ so that } \Wsv - \oW = -i\wfe,
\end{equation}
we write
\begin{equation}\label{id:WMinusWZero}
\Wsc - \oW = \left(-i\wchi_1\omega\wfe + (1-\wchi_1)\left(-i\chisharp\frac{\hsharp-h_0}{v}(\omega-\omega_r) - (1-\chisharp)\oW\right)\right).
\end{equation}
Then, the desired bound will follow from proving Lipschitz continuity for $\Wsc-\oW$ in $\omega$, which we will show via bounds on the relevant derivatives, and decay in $r$.

The second convolution estimate involves more general pseudodifferential operators which are however compactly supported both in space (near $R^{\musDoubleSharp}$, we recall (\ref{def:gsharpdef})) and in frequency (namely in $\mathcal{G}$, we recall (\ref{def:wchi12})), which makes their proof significantly easier than that of (\ref{est:CIGlobal}). For convenience, we will nevertheless employ the same weighted spaces used in Proposition \ref{lem:CMBound}. Finally, we note that the $X_\sharp$ we will prove the estimate for are specifically tailored to $\mathcal{N}_\sharp$ appearing in (\ref{def:Nsharp}).

\begin{proposition}\label{lem:CMBound2}
Take $\ph\in \Hdot$, $\widehat{w}\in\Wdot$, and define the operators
\begin{equation}\label{def:GenSecondOperators}
X_\sharp^T = -ig_\sharp(1-\wchi_1)\omega, \qquad X_\sharp^\phi = ig_\sharp(1-\wchi_1)m, \qquad X_\sharp^{\Wp} = g_\sharp(1-\wchi_1)\Wp,
\end{equation}
where $g_\sharp$ is as defined in \eqref{def:gsharpdef}. Then, for $X_\sharp \in \{X_\sharp^T, X_\sharp^\phi, X_\sharp^{\Wp}\}$, there exists a compactly supported function $\chi_{\rmin\!,\rmax}$ such that the bound
\begin{equation}\label{est:CISharp}
\iiiint_{\FM}a^2|X_\sharp(\widehat{w}*\ph) - \widehat{w}*(X_\sharp(\ph))|^2 \lesssim \sup_{(r, \theta)}\left(a_1^2\lVert \widehat{w}\rVert_{\Wdot}^2\chi_{\rmin\!,\rmax}\right)\iint_{\mcA}a_2^2\left\lVert\ph\right\rVert^2_{\Hdot}\chi_{\rmin\!,\rmax}
\end{equation}
holds for all $q\in[0,1]$, and for nonnegative functions $a(r, \theta), a_1(r, \theta), a_2(r, \theta)$ satisfying $a = a_1a_2$, as long as the right hand side is finite.
\end{proposition}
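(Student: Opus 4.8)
\textbf{Proof proposal for Proposition \ref{lem:CMBound2}.}

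The plan is to treat the three operators $X_\sharp^T$, $X_\sharp^\phi$, $X_\sharp^{\Wp}$ essentially uniformly, reducing in each case to a commutator of the form $[\,\sigma, \cdot\,]$ where $\sigma$ is the multiplier (in $(\omega, m, \rs, \theta)$) of $X_\sharp$, and exploiting the fact that by construction $\sigma$ is compactly supported in $r$ (inside the support of $g_\sharp$, hence inside $[\rmin, \rmax+\delta^\sharp]$, and there $\chi_{\rmin\!,\rmax}=1$) and in frequency (inside $\mathcal{G}$, by the factor $1-\wchi_1$). First I would write out the symbol explicitly. For $X_\sharp^T$ and $X_\sharp^\phi$ the symbol is $\mp i g_\sharp(\rs)(1-\wchi_1(\omega/m))\,\omega$ (resp. $m$), a function of $(\rs, \omega/m)$ times a homogeneous degree-one factor; for $X_\sharp^{\Wp}$ the symbol of $\Wp = v^{-1}\Rs - i(\omega-\omega_r)/v$ contributes an $\Rs$-derivative plus the multiplication symbol $-i(\omega-\omega_r)/v = -i\omega(1-\eta/(r^2+a^2))/v$, which again is $\omega$ times a smooth function of $(\rs, \omega/m)$ on the relevant frequency range (using $|\omega|\lesssim |m|$ in the support of $1-\wchi_1$, Lemma \ref{prop:GpBounds}). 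In all three cases the commutator $X_\sharp(\widehat w * \ph) - \widehat w * (X_\sharp \ph)$ is, pointwise in $(\rs,\theta)$, a single convolution integral $\sum_{m'}\int d\omega'\, \widehat w(\omega', m', \rs, \theta)\,\big(\sigma(\omega, m, \rs) - \sigma(\omega-\omega', m-m', \rs)\big)\,\ph(\omega-\omega', m-m', \rs, \theta)$ — note the $\Rs$ in $\Wp$ also produces a term where $\Rs$ hits $\widehat w$ or $\ph$ but those cancel in the difference, leaving only the term where $\Rs$ hits the difference of symbols, which is what we want.

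The second step is the key pointwise estimate: $|\sigma(\omega, m, \rs) - \sigma(\omega-\omega', m-m', \rs)| \lesssim g_\sharp(\rs)\,(|\omega'| + |m'|)$, with the analogous bound for $\Rs$ of this difference also controlled by $(|\omega'|+|m'|)$ times a bounded function supported in $r$. This is where one uses that $\sigma/(\text{degree-one factor})$ is a $C^1$ function of $\omega/m$ on $\mathbb{R}$ (with a harmless convention at $m=0$ handled by $\wchi_i = 1$ there) — the only subtlety is the homogeneous factor $\omega$ or $m$, but writing $\omega\,\tau(\omega/m) - (\omega-\omega')\tau((\omega-\omega')/(m-m'))$ and adding and subtracting $(\omega-\omega')\tau(\omega/m)$ one gets one term bounded by $|\omega'|\|\tau\|_\infty$ and another bounded by $|\omega-\omega'|\,\|\tau'\|_\infty\,|\omega/m - (\omega-\omega')/(m-m')|$, and the last factor is $\lesssim (|\omega'|+|m'|)/\max(|m|,|m-m'|,1)$ by an elementary computation, the denominator being absorbed into the $|\omega-\omega'|$ or bounded below by the support properties. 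The necessary derivative bounds on $g_\sharp$ and $\wchi_1$ are exactly \eqref{est:gsharpprimebound} and the smoothness built into \eqref{def:wchi12}, while the bounds on $\eta$ and $1/v$ in the relevant region come from Lemma \ref{prop:GpBounds} and the compact support away from the horizon.

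The third step assembles these: insert the pointwise symbol-difference bound, so that $a^2|X_\sharp(\widehat w * \ph) - \widehat w * X_\sharp\ph|^2 \lesssim \chi_{\rmin\!,\rmax}\, a^2\,\big(\sum_{m'}\int |\widehat w(\omega',m')|(|\omega'|+|m'|)\,|\ph(\omega-\omega', m-m')|\,d\omega'\big)^2$ pointwise, then split $a = a_1 a_2$, apply Young's inequality in $(\omega, m)$ (the weighted version recalled before Proposition \ref{lem:CMBound}) with the $L^1$-norm falling on $\widehat w(\omega',m')(|\omega'|+|m'|)$ — which is exactly $\lesssim \|\widehat w\|_{\Wdot}$ by \eqref{def:Wdotnorm} (the $\oW$-term in the $\Wdot$-norm is needed only for the $\Wp$-operator, and there the $\Rs\widehat w$ part is converted to $\oW\widehat w$ up to lower-order multiplication terms already accounted for) — and the $L^2$-norm on $\ph$, which is $\lesssim \|\ph\|_{\Hdot}$. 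Finally integrate in $(\rs,\theta)$ and sum/integrate in $(\omega,m)$, pulling the $\sup$ of $a_1^2\|\widehat w\|_{\Wdot}^2\chi_{\rmin\!,\rmax}$ out. The cutoffs $\chi_{\rmin\!,\rmax}$ on the right-hand side are legitimate precisely because the left-hand integrand is supported where $g_\sharp \neq 0$, i.e. where $\chi_{\rmin\!,\rmax}=1$. I expect the main obstacle to be bookkeeping the $\Wp$ case cleanly — ensuring the $\Rs$-part of $\Wp$ really does cancel in the commutator difference and that the leftover genuinely reduces to a multiplication-type symbol difference plus lower-order terms absorbable by the $\ell^1L^1$ and $\ell^2L^2$ norms — rather than any deep analytic difficulty; unlike Proposition \ref{lem:CMBound}, there is no delicate $r\to\infty$ decay to chase here because everything is compactly supported in $r$.
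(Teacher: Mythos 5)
Your treatment of $X_\sharp^T$ and $X_\sharp^\phi$ is essentially the paper's argument: the commutator is a convolution against the difference of the symbols $g_\sharp(1-\wchi_1)\omega$ (resp.\ $m$), and your add-and-subtract homogeneity argument for the degree-one symbol difference is exactly what Proposition \ref{cor:CECutoff} and Corollary \ref{cor:CECutoff2} provide (the paper applies them with $\wchi=g_\sharp(1-\wchi_1)$, yielding \eqref{est:Pointwiseboundsgsharpterms1}--\eqref{est:Pointwiseboundsgsharpterms2}); Young's inequality then closes these two cases just as you describe. One small imprecision: $g_\sharp$ is not a function of $r^\star$ alone -- it depends on frequency through $R^{\musDoubleSharp}(\omega/m)$, cf.\ \eqref{def:gsharpdef} and \eqref{def:RDoubleSharp} -- so the frequency regularity you need is that of the full degree-zero factor $g_\sharp(1-\wchi_1)$, not just of $\wchi_1$; your phrase ``a function of $(r^\star,\omega/m)$'' covers this, but \eqref{est:gsharpprimebound} (radial derivatives) is not the relevant input.

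The genuine gap is in the $X_\sharp^{\Wp}$ case. Your claim that the terms where $\Rs$ hits $\widehat{w}$ or $\ph$ ``cancel in the difference'' is false precisely because the prefactor $g_\sharp(1-\wchi_1)$ is frequency dependent. Writing out the commutator, three terms survive: (i) $[g_\sharp(1-\wchi_1)]_m^\omega\, v^{-1}(\Rs\widehat{w})*\ph$, which you do catch and which consumes the $\oW\widehat{w}$ part of \eqref{def:Wdotnorm}; (ii) the convolution of $\widehat{w}$ against $\bigl([g_\sharp(1-\wchi_1)]_m^\omega-[g_\sharp(1-\wchi_1)]_{m'}^{\omega'}\bigr)v^{-1}\Rs\ph_{m'}^{\omega'}$; and (iii) the first-order multiplication symbol difference coming from $-i(\omega-\omega_r)/v$, which behaves like the $X_\sharp^T$ case. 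Term (ii) does not ``reduce to a multiplication-type symbol difference'': it carries a full derivative of $\ph$, and it is exactly why the $\Hdot$-norm \eqref{def:Hdotnorm} contains the term $\chi_{\rmax}^{1/2}\oW\ph/\max(|\omega|,|m|,1)$. To bound it one must use the \emph{zeroth}-order cutoff estimate \eqref{est:Pointwiseboundsgsharpterms3}, whose gain $1/\max(|\omega'|,|m'|,1)$ matches that weight, after converting $\Rs\ph$ into $\oW\ph$ plus degree-one multiplication terms. This is how the paper proceeds: it splits $\Wp=\oW+(\Wp-\oW)$, handles the $\oW$ part by the same two-term argument as \eqref{est:CEThirdTermBound} (derivative on $\widehat{w}$ versus cutoff difference on $\oW\ph$), and treats $\Wp-\oW=-i(\omega-\omega_r)(1-h_0)/v$ as a first-order symbol. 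Your proposal, as written, never invokes the derivative component of $\Hdot$ (you apply Young with only the $v^q\ph$ part), so term (ii) is left unbounded; the fix is the decomposition just described rather than the claimed cancellation.
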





	\subsection{The perturbative terms $H_{\BfB}$}\label{sec:PerturbativeBounds}
			We prove (\ref{est:RHSBoundMain}) for $H_{\BfB}$:
\begin{proposition}\label{thm:PHBBoundFirst}
Under the assumptions of Theorem \ref{thm:CommutedEEMain} there exists a constant $C > 0$ such that
\begin{equation}\label{est:PHBBoundFirst}
|\mathcal{N}[\Wsc\uchi, \PHB]|+  |\mathcal{N}_\sharp[\HB, \HB]|\leq C\varepsilon\lVert \BT\rVert_{[\alpha]}^2\left(\mathcal{S}_1^{\Wsc}[\uchi]  + \Ssharp[\uchi] + \mathcal{S}[\uchi] + \int\chi_{\rmax}^2\left|v^2\oW\left(\tfrac{\Hchi}{v^2}\right)\right|^2\right) .
\end{equation}
\end{proposition}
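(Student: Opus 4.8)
The goal is to bound $\mathcal{N}[\Wsc\uchi,\PHB]$ and $\mathcal{N}_\sharp[\HB,\HB]$, where we recall from \eqref{def:HDecompB} that $\HB = \frac{\Delta\rho^2}{\rweight^4}\varepsilon\,\mathcal{F}(\BT(\mathbf{u}_\chi))$ and that $\BT$ is a first order operator whose Fourier realisation involves convolution with $\widehat{\chi_{\mathcal{T}}}$ (axisymmetric, hence supported on $m=0$) acting on the derivatives $\omega\uchi$, $m\uchi$, $\widetilde{\partial}_r\uchi$, $\uchi$ of $\uchi$. The whole point is that we \emph{cannot} work purely at fixed frequency, since $\Wsc$ applied to a convolution does not factor. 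So the first step is to expand $\PHB = v^2\Wsc(\HB/v^2)$ and organise it schematically as a sum of terms of the form $a(r,\theta)\,\Wsc\big(\widehat{w}*(X\uchi)\big)$, where $\widehat{w} \in \{\rho^2\BThat^{\wtau},\rho^2\BThat^r,\rho^2\BThat^\theta,\rho^2\BThat^{\tilde\phi},\rho^2\BThat^0\}$ (each controlled in $\Wbb$, hence in $\Wdot$, by Proposition \ref{prop:Bfrommain} and the norm bound $\lVert\BT\rVert_{[\alpha]}$ from the theorem hypotheses), $X \in \{\omega,m,\oW,\mathrm{id}\}$, and $a$ carries the geometric weights coming from $\frac{\Delta\rho^2}{\rweight^4}$ and the $r$-weights appearing in $\mathcal{N}_T$, $\mathcal{N}_\alpha$ (encoded through $g_\alpha\sim r^{-1}$) and $\mathcal{N}_\sharp$ (encoded through $g_\sharp^2$, compactly supported).

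\textbf{Main reduction via the convolution estimates.} For each such term I would insert and subtract the ``product rule'' expression, writing
\begin{align}
a\,\Wsc(\widehat{w}*(X\uchi)) &= a\,\widehat{w}*(\Wsc X\uchi) + a\,\big(\Wsc(\widehat{w}*(X\uchi)) - \widehat{w}*\Wsc(X\uchi)\big),
\end{align}
and then commuting $X$ past $\Wsc$ using \eqref{def:CommutedBoxSimp}/\eqref{id:WBasicLeibniz} (for $X=\omega,m$ this is exact; for $X=\oW$ one uses that $\oW$ differs from $\Wsc$ by a bounded-in-$r$ zeroth-order multiplier and picks up at worst a lower-order commutator term). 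The first piece, $a\,\widehat{w}*(\Wsc X\uchi)$, is handled by the weighted Young's inequality: it is bounded in $L^2$ by $(\sup_{r,\theta} a_1^2\lVert\widehat{w}\rVert_{\Wdot}^2)$ times an $L^2$ norm of $a_2\Wsc X\uchi$, which after choosing the split $a=a_1a_2$ so that $a_2$ supplies the weight of $\mathcal{S}_1^{\Wsc}$ (for $X$ a genuine derivative) or of $\mathcal{S}$ (for $X=\mathrm{id}$, using that $\Wsc\uchi$ rather than $D\Wsc\uchi$ appears) is controlled by $\mathcal{S}_1^{\Wsc}[\uchi]+\mathcal{S}[\uchi]$; the angular derivatives in $\BThat^\theta,\BThat^{\tilde\phi}$, which carry the better weight $r^{-2-\alpha}$ by \eqref{Bassumption2}, are absorbed similarly with the $gv^2r^{-1}|\widehat{\slashed\nabla}\Wsc\uchi|^2$ term. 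The second piece is exactly what Proposition \ref{lem:CMBound} (for the global $\mathcal{N}[\Wsc\uchi,\PH]$ contributions) and Proposition \ref{lem:CMBound2} (for the $\mathcal{N}_\sharp$ contributions, where the operators are $X_\sharp^T,X_\sharp^\phi,X_\sharp^{\Wp}$ and everything is cut off by $g_\sharp\lesssim\chi_{\rmin,\rmax}$) are designed to bound: the output is $\sup(a_1^2\lVert\widehat{w}\rVert_{\Wdot}^2)$ times $\iint a_2^2\lVert\ph\rVert_{\Hdot}^2$, and again $\lVert\uchi\rVert_{\Hdot}^2$ integrated against the appropriate weight is controlled by $\mathcal{S}[\uchi]$ (for the undifferentiated part) plus the $\chi_{\rmax}$-weighted $\oW$-term — the latter being the origin of the spurious-looking last term $\int\chi_{\rmax}^2|v^2\oW(\Hchi/v^2)|^2$ on the right of \eqref{est:PHBBoundFirst}, which arises because a piece of $\PHB$ is itself estimated by (a cut-off, $r$-localised piece of) $\PHchi$ through the $\oW$-derivative appearing in $\Hdot$. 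Throughout, every factor of $\HB$ carries an explicit $\varepsilon$, and since $\mathcal{N}[\Wsc\uchi,\PHB]$ is linear and $\mathcal{N}_\sharp[\HB,\HB]$ quadratic in $\HB$, the total comes with $\varepsilon\lVert\BT\rVert_{[\alpha]}^2$ as claimed (for $\mathcal{N}_T,\mathcal{N}_\alpha$ one uses Cauchy--Schwarz to pair the $\PHB$-factor, estimated as above, against $\omega\Wsc\uchi$ resp. $\Wsc\uchi$, absorbing a small multiple of $\mathcal{S}_1^{\Wsc}$ and $\mathcal{S}$ into the left via the $\varepsilon$).

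\textbf{The main obstacle.} The genuinely delicate point is the bookkeeping of weights in $r$ across the three regimes of $\Wsc$: near infinity $\Wsc$ looks like $rL$ and the Young/Coifman--Meyer estimates must exploit that $\lVert\widehat{w}\rVert_{\Wdot}$ carries the decay $r^{-1-\alpha}$ (resp. $r^{-2-\alpha}$) from $\BT$ so that the product of weights closes with the $r^{-1-\alpha}$ (resp. $r^{-3}$) gains in $\mathcal{S}_1^{\Wsc}$ and $\mathcal{S}$; near the horizon one must check the $\Delta$-weights in $\frac{\Delta\rho^2}{\rweight^4}$ and the $(\frac{r}{r-r_+})^2$-weight in $\mathcal{S}$ are compatible with the $v^{-1}\Rs$-part of $\Wsc$ and the regularity of $\BT$ at $\mathcal{H}^+$ (which is where $C^1$-regularity and the assumption $|r\wpa_r\BfB^r|\lesssim r^{-1-\alpha}$ enter, via the subtraction of the $\rweight^{-1}\Rs\rweight\,\BfB^r$ term built into $\BT$ in \eqref{def:BT}); and the intermediate region is where the $\chi_{\rmax}$, $\chi_\sharp$, $g_\sharp$ cutoffs have to be matched so that the only place second derivatives of $\uchi$ would appear has been pre-emptively killed by $g_\sharp\lesssim\chi_{\rmin,\rmax}$ and Proposition \ref{lem:CMBound2}. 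Once one has organised $\PHB$ into the schematic form above and verified that in each of the five $\BT$-components and three $\Wsc$-regimes the weight exponents subtract correctly, the estimate is a mechanical combination of Propositions \ref{lem:CMBound}, \ref{lem:CMBound2}, weighted Young, and Cauchy--Schwarz with the small parameter $\varepsilon$.
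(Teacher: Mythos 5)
Your plan reproduces the paper's proof at the structural level: the same decomposition of $\PHB$ through \eqref{def:HDecompB}, \eqref{def:BT} and \eqref{def:BuchiExpansion}, the Leibniz rule \eqref{id:WBasicLeibniz}, weighted Young's inequality for the terms in which $\Wsc$ has been commuted through the convolution (landing on $\mathcal{S}_1^{\Wsc}[\uchi]$ and $\mathcal{S}[\uchi]$), Propositions \ref{lem:CMBound} and \ref{lem:CMBound2} for the commutator errors, and Cauchy--Schwarz in $\varepsilon$ for $\mathcal{N}_T,\mathcal{N}_\alpha$. Two smaller inaccuracies: the derivatives entering \eqref{def:BuchiExpansion} are $\wpa_r\uchi$ and $\wpa_\theta\uchi$, not $\oW\uchi$, and since $[\Wsc,\wpa_r]\neq 0$ one needs a quantitative pointwise bound $|\tfrac{\Delta}{r^2+a^2}[\Wsc,\wpa_r]\uchi|\lesssim r^{-1}|\oW\uchi|+\tfrac{\Delta^{1/2}}{r^3}(|\omega|+|m|)|\uchi|$ (the paper's Lemma \ref{prop:WdrTildeCommutator}, which follows from Proposition \ref{prop:vfDerUpperBound}); you only gesture at this, but the idea is right.

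The genuine gap is your treatment of the second-order quantities produced by $\lVert\wpa_r\uchi\rVert_{\Hdot}$ and $\lVert\wpa_\theta\uchi\rVert_{\Hdot}$, namely $\chi_{\rmax}^{1/2}|\oW\wpa_r\uchi|/\max(|\omega|,|m|,1)$ and $\chi_{\rmax}^{1/2}|\oW\wpa_\theta\uchi|/\max(|\omega|,|m|,1)$. You assert these "are the origin of" the last term in \eqref{est:PHBBoundFirst}, but they are neither controlled by $\mathcal{S}_1^{\Wsc}[\uchi]+\Ssharp[\uchi]+\mathcal{S}[\uchi]$ (near trapping no energy on the right controls general second derivatives of $\uchi$) nor literally a piece of $\PHchi$, so a mechanism is missing. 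The paper supplies it in Lemma \ref{lem:ZeroOrderInverseBound}: one applies the Lagrangian identity \eqref{Div:Lagrangian} to $\ph=\oW\uchi$ with multiplier $\overline{g}/\max(|\omega|,|m|,1)^2$, so the frequency division renders the offending quantity effectively first order; commuting $\oW$ through $\Pam$ produces only controllable terms (here one uses $v\omega f=(\omega-\omega_r)h_0$ so that the would-be $\Ph^\star$-contribution is harmless); and the source term $v^2\oW(H/v^2)$ is split as $\HB+\Hchi$, where the $\Hchi$-piece is exactly the extra term $\int\chi_{\rmax}^2|v^2\oW(\Hchi/v^2)|^2$ in \eqref{est:PHBBoundFirst}, while the $\HB$-piece reproduces the very same second-derivative quantities at size $\varepsilon^2$ and must be absorbed back into the left-hand side for $\varepsilon$ small. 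Without this equation-based Lagrangian step and the $\varepsilon$-absorption, the bound on the $\Hdot$-norms of $\wpa_r\uchi,\wpa_\theta\uchi$ — and hence the proposition, for both the $\mathcal{N}$- and the $\mathcal{N}_\sharp$-part — does not follow from the quantities on the right-hand side.
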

We prove this by bounding each term separately, splitting $\mathcal{N}[\Wsc\uchi, \PHB]$ according to the decomposition \eqref{def:N}. Then, the bound on $\mathcal{N}$ follows from Lemmas \ref{lem:PHBBoundPrelim}, \ref{prop:WdrTildeCommutator}, and \ref{lem:ZeroOrderInverseBound}, and the bound on $\mathcal{N}_\sharp$ follows from Lemma \ref{lem:NSharpBfBBound}.


\subsubsection{Bounding $\mathcal{N}[\Wsc\uchi, \PHB]$}
We first note
\begin{equation}
\int r^{-1-\alpha}|\omega\Wsc\ph|^2 + r^{-3-\alpha}|\Wsc\ph|^2\lesssim \mathcal{S}_1^{\Wsc}[\uchi]+\mathcal{S}[\uchi],
\end{equation}
which via the Cauchy-Schwarz inequality implies that for any $\delta > 0$
\begin{equation}\label{est:CSPHB}
\mathcal{N}[\Wsc\uchi, \PHB] \lesssim \varepsilon \left(\mathcal{S}_1^{\Wsc}[\uchi]+\mathcal{S}[\uchi]\right) + \varepsilon^{-1} \int r^{1+\alpha}|\mathcal{P}_{\HB}|^2.
\end{equation}
Next, we expand
\begin{equation}
\int r^{1+\alpha}|\mathcal{P}_{\HB}|^2 = \int r^{1+\alpha}\left|v^2\Wsc\left(\tfrac{\HB}{v^2}\right)\right|^2.
\end{equation}
Recalling the regular coordinates $(\wtau, r, \theta, \wphi)$ from Section \ref{sec:hyperboloidal}, we decompose
\begin{equation}\label{def:BuchiExpansion}
\mathcal{F}\left(\BT\mathbf{u}_{\chi}\right) = \BThat^\wtau * (-i\omega\uchi) + \BThat^{r}*\wpa_r\uchi + \BThat^{\theta}*\wpa_\theta\uchi + \BThat^{\wphi}*(im\uchi)+\BThat^0*\uchi,
\end{equation}
and apply Proposition \ref{lem:CMBound} to obtain the following preliminary bound:
\begin{lemma}\label{lem:PHBBoundPrelim}
Under the assumptions of Theorem \ref{thm:CommutedEEMain}, $\uchi$ satisfies the bound
\begin{align}\label{est:PHBWeightedBound}
\int& r^{1+\alpha}|\mathcal{P}_{\HB}|^2 \lesssim \varepsilon^2\left\lVert\BfB\right\rVert_{[\alpha]}^2\! \left(\!\mathcal{S}_1^{\Wsc}[\uchi]\! + \!\mathcal{S}[\uchi] \!+\! \int\!\! \frac{\Delta^2}{r^{5+\alpha}}\left(\!\big|[\Wsc\!,\wpa_r]\uchi\big|^2\!\! +\! \chi_{\rmax}\frac{|\oW\wpa_r\uchi|^2\!+\!|\tfrac{1}{r}\oW\wpa_\theta\uchi|^2}{|\max(|\omega|, |m|, 1)|^2}\right)\!\!\right).
\end{align}
\end{lemma}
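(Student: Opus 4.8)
\textbf{Proof plan for Lemma \ref{lem:PHBBoundPrelim}.}

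The strategy is to start from the expansion \eqref{def:BuchiExpansion} of $\mathcal{F}(\BT\mathbf{u}_\chi)$, insert it into $\mathcal{P}_{\HB} = v^2\Wsc(\HB/v^2)$, and then apply the convolution estimate of Proposition \ref{lem:CMBound} term by term to the five pieces $\BThat^\wtau * (-i\omega\uchi)$, $\BThat^r * \wpa_r\uchi$, $\BThat^\theta*\wpa_\theta\uchi$, $\BThat^\wphi*(im\uchi)$, $\BThat^0*\uchi$. First I would recall from \eqref{def:HDecompB} that $\HB = \varepsilon\tfrac{\Delta\rho^2}{\rweight^4}\mathcal{F}(\BT\mathbf{u}_\chi)$, so that the weighted quantity on the left reads
\begin{equation}
\int r^{1+\alpha}|\mathcal{P}_{\HB}|^2 = \varepsilon^2\int r^{1+\alpha}\left|v^2\Wsc\left(\tfrac{\Delta\rho^2}{v^2\rweight^4}\mathcal{F}(\BT\mathbf{u}_\chi)\right)\right|^2 ,
\end{equation}
and since $\tfrac{\Delta}{\rweight^2} = v^2$, the prefactor $v^2\cdot\tfrac{\Delta\rho^2}{v^2\rweight^4} = \tfrac{\rho^2}{\rweight^2}$ is a bounded smooth radial--angular function, so modulo the harmless factor $\tfrac{\rho^2}{\rweight^2}$ (which is $O(1)$ and on which $\oW$ acts as $O(r^{-1})$) and the product rule \eqref{id:WBasicLeibniz}, we are reduced to controlling $\int r^{1+\alpha}|\Wsc(\rho^2 \BThat^X * (\cdot))|^2$ with the appropriate $r$-weights absorbed. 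The clean way to organise this is to write $\Wsc(\widehat w * \ph) = \widehat w * (\Wsc \ph) + \big(\Wsc(\widehat w*\ph) - \widehat w*\Wsc\ph\big)$: the commutator term is handled by Proposition \ref{lem:CMBound}, while the "good" term $\widehat w * \Wsc\ph$ is handled by weighted Young's inequality, $\|\widehat w * \Wsc\ph\|_{\ell^2_m L^2_\omega} \le \big(\sup\|\widehat w\|_{\ell^1_m L^1_\omega}\big)\|\Wsc\ph\|_{\ell^2_m L^2_\omega}$.

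The second step is the bookkeeping of weights. For each $X\in\{\wtau,r,\theta,\wphi,0\}$ one writes $a = a_1 a_2$ in Proposition \ref{lem:CMBound} with $a_1$ carrying the $r$-decay of $\BThat^X$ dictated by the norm $\lVert\BT\rVert_{[\alpha]}$ from \eqref{Bnorm} — namely $a_1 \sim r^{-1-\alpha}$ for $X\in\{\wtau,r\}$ and $a_1\sim r^{-2-\alpha}$ for $X\in\{\theta,\wphi,0\}$ — and $a_2$ carrying the remaining weight $r^{(1+\alpha)/2}\cdot\tfrac{\Delta}{\rweight^2}$ together with the factor needed so that $\iint a_2^2 \|\ph\|_{\Hdot}^2$ reproduces (a piece of) $\mathcal{S}_1^{\Wsc}[\uchi] + \mathcal{S}[\uchi]$ plus the remaining error integral on the right-hand side of \eqref{est:PHBWeightedBound}. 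Here $\ph$ in Proposition \ref{lem:CMBound} is, respectively, $-i\omega\uchi$, $\wpa_r\uchi$, $\wpa_\theta\uchi$, $im\uchi$, $\uchi$; one needs the $\Hdot$-norm of each, which involves $v^q\ph$ in $\ell^2 L^2$ and $\chi_{\rmax}^{1/2}\tfrac{\oW\ph}{\max(|m|,|\omega|,1)}$ in $\ell^2 L^2$. For $\ph = -i\omega\uchi$ and $\ph = im\uchi$ the $\oW$-term gives $\chi_{\rmax}^{1/2}\oW\uchi$ (the $\max(\cdots)$ exactly cancels the $|\omega|$ or $|m|$), which is controlled by $\mathcal{S}_1^{\Wsc}[\uchi]$ plus $\mathcal{S}[\uchi]$ after writing $\oW = \Wsc - (\Wsc - \oW)$ and using Corollaries \ref{cor:asymptoticWinfinity}, \ref{prop:WMinusWZeroBounds}; for $\ph = \wpa_r\uchi = \Delta^{-1/2}\oW\uchi$ the $\oW$-term is precisely what produces the $[\Wsc,\wpa_r]\uchi$ commutator and the $\oW\wpa_r\uchi$ term on the right of \eqref{est:PHBWeightedBound} (after commuting $\oW$ past $\wpa_r$), and similarly $\ph = \wpa_\theta\uchi$ yields the $\tfrac1r\oW\wpa_\theta\uchi$ term; the factor $\tfrac{\Delta^2}{r^{5+\alpha}}$ there is just $r^{1+\alpha}\cdot v^4 \cdot (\text{the }\Delta^{-1}\text{ or }r^{-2}\text{ from }\wpa_r,\,\tfrac1r\wpa_\theta)$ repackaged. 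The undifferentiated pieces of each $\Hdot$-norm, $\|v^q\ph\|_{\ell^2 L^2}$ with $\ph\in\{-i\omega\uchi,\wpa_r\uchi,\wpa_\theta\uchi,im\uchi,\uchi\}$, are directly dominated by $\mathcal{S}_1^{\Wsc}[\uchi] + \mathcal{S}[\uchi]$ by inspection of \eqref{def:S0Local}, \eqref{def:S1WLocal} (recalling $\wpa_r = \Delta^{-1/2}\oW$ and $v^2 = \Delta/\rweight^2$).

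Finally, in each application of Proposition \ref{lem:CMBound} the supremum factor $\sup_{r,\theta}(a_1^2\|\widehat w\|_{\Wdot}^2)$ with $\widehat w = \rho^2 \BThat^X$ (times the $r$-weight in $a_1$) is, by the definition of $\lVert\BT\rVert_{[\alpha]}$ in \eqref{Bnorm} via the spaces $\Wbb$, bounded by $\lVert\BT\rVert_{[\alpha]}^2$ — this uses that $\oW$ acting on $\rho^2 \BThat^X$ produces $\rho^2\oW\BThat^X$ plus $O(r^{-1})\rho^2\BThat^X$, both controlled by the $\Wbb[p]$-norm with the correct power $p$, and that $q\in[0,1]$ so the $v^{-q}$ weight in $\Wdot$ is harmless away from the horizon and absorbed by $\Delta^{1/2}$ near it. Summing the five contributions and collecting weights gives \eqref{est:PHBWeightedBound}. \textbf{The main obstacle} I anticipate is not any single estimate but the careful splitting $a=a_1 a_2$ and the tracking of $\Delta$- versus $r$-weights so that (i) the commutator errors land exactly as the right-hand side of \eqref{est:PHBWeightedBound} with the stated $\tfrac{\Delta^2}{r^{5+\alpha}}$ weight and the $\max(|\omega|,|m|,1)^{-2}$ normalisation, and (ii) the "good" Young's-inequality terms, together with the undifferentiated parts of the $\Hdot$-norms, are genuinely controlled by $\mathcal{S}_1^{\Wsc}[\uchi] + \mathcal{S}[\uchi]$ and not by something stronger; getting the horizon weights right (where $v\to 0$ and $\wpa_r = \Delta^{-1/2}\oW$ degenerates) in the $\BThat^\wtau$ and $\BThat^r$ terms, using the extra $(r/(r-r_+))^2$ weight built into $\mathcal{S}[\ph]$ in \eqref{def:S0Local}, is the delicate point.
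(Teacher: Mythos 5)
Your plan is correct and follows essentially the same route as the paper's proof: reduce via \eqref{def:HDecompB} and the Leibniz rule \eqref{id:WBasicLeibniz}, expand $\mathcal{F}(\BT\mathbf{u}_\chi)$ by \eqref{def:BuchiExpansion}, treat each of the five convolutions by splitting off $\widehat{w}*\Wsc\ph$ (Young's inequality, with $[\Wsc,\omega]=[\Wsc,m]=[\Wsc,\wpa_\theta]=0$ but $[\Wsc,\wpa_r]\neq 0$ producing that residual) and bounding the remainder with Proposition \ref{lem:CMBound}, whose $\Hdot$-norms yield exactly the $\chi_{\rmax}\,\oW\wpa_r\uchi$ and $\tfrac{1}{r}\oW\wpa_\theta\uchi$ terms with the $\max(|\omega|,|m|,1)$ normalisation, while the sup factors are absorbed into $\lVert\BT\rVert_{[\alpha]}^2$. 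Only note the small algebra slip: $v^2=\Delta/\rweight^4$, so the prefactor after Leibniz is $v^2\rho^2\lesssim \Delta/r^2$ rather than $\rho^2/\rweight^2$ (and correspondingly $a_1$ must carry the growing weight $r^{1+\alpha}$ or $r^{2+\alpha}$ matching \eqref{Bnorm}), which is exactly what produces the stated $\Delta^2/r^{5+\alpha}$ weight you correctly recover later.
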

\begin{proof}
The identity \eqref{def:HDecompB} implies
\begin{equation}
\frac{\HB}{v^2} = c\varepsilon\rho^2\left(\mathcal{F}(\BT\bf{u}_\chi)\right),
\end{equation}
so consequently the identity \eqref{id:WBasicLeibniz} and the easily verified bounds
\begin{equation}
|v^2\Wsc(\rho^2)|\lesssim \frac{\Delta^{3/2}}{r^3} \text{ and }\qquad |v^2\rho^2|\lesssim \frac{\Delta}{r^2}
\end{equation}
together imply
\begin{equation}\label{est:PHBWeightedLeibniz}
\int r^{1+\alpha}|\mathcal{P}_{\HB}|^2 \lesssim \varepsilon^2 \int r^{1+\alpha}\left(\left|\frac{\Delta^{3/2}}{r^3}\left(\mathcal{F}(\BT\bf{u}_\chi)\right)\right|^2 + \left|\frac{\Delta}{r^2} \Wsc\!\left(\mathcal{F}(\BT\bf{u}_\chi)\right)\right|^2\right).
\end{equation}
We take the expansion \eqref{def:BuchiExpansion} and bound this term by term, starting with
\begin{equation}
\mathcal{F}\left(\BT^\wtau\partial_\wtau\mathbf{u}_\chi\right) = \BThat^\wtau * (-i\omega\uchi) \, .
\end{equation}
To bound the first term on the right hand side of \eqref{est:PHBWeightedLeibniz}, we apply Young's convolution inequality in $(\omega, m)$ and a weighted $L^2-L^\infty$ bound in $(r, \theta)$ to obtain
\begin{equation}\label{est:PHBBoundPart1}
 \int r^{1+\alpha}\left|\frac{\Delta^{3/2}}{r^3}(\BThat^\wtau * \omega\uchi)\right|^2\lesssim \sup_{r, \theta} \Big(r^{1+\alpha}\lVert \BThat^{\wtau}\rVert_{L^1_{\omega, m}}\Big)^2\int r^{-1-\alpha}\frac{\Delta^3}{r^6}|\omega\uchi|^2\lesssim  \left\lVert\BfB\right\rVert_{[\alpha]}^2\mathcal{S}[\uchi].
\end{equation}
In order to bound the second term on the right hand side of  \eqref{est:PHBWeightedLeibniz}, we first apply Proposition \ref{lem:CMBound} (with $q=0$) to obtain
\begin{equation}\label{est:PHBBoundPart2Prelim}
\int r^{1+\alpha}\frac{\Delta^2}{r^4}\left|\Wsc\!\left(\BThat^\wtau \!*\! (\omega\uchi)\right) - \BThat^\wtau \!*\! \Wsc(\omega\uchi)\right|^2 \lesssim  \sup_{r, \theta}\Big( r^{1+\alpha}\lVert\BThat^\wtau\rVert_{\Wdot[0]}\Big)^2\int\frac{\Delta^2}{r^{5+\alpha}}\lVert\omega\uchi\rVert_{\Hdot[0]}^2.
\end{equation}
From the definition \eqref{def:Hdotnorm} one easily obtains the bound
\begin{equation}\label{est:HdotBound}
\lVert\omega\uchi\rVert_{\Hdot[0]} \leq \lVert\omega\uchi\rVert_{\ell^2_m L^2_\omega}+ \lVert \chi_{\rmax}^{1/2}\oW\uchi\rVert_{\ell^2_m L^2_\omega}.
\end{equation}
Since $\chi_{\rmax}$ is supported in the region $r \in (r_+, \rmax+\delta^\sharp)$, it follows that
\begin{equation}\label{est:PHBBoundPart2}
\int r^{1+\alpha}\frac{\Delta^2}{r^4}\left|\Wsc\!\left(\BThat^\wtau \!*\! (\omega\uchi)\right) - \BThat^\wtau \!*\! \Wsc(\omega\uchi)\right|^2  \lesssim \lVert\BThat^\wtau\rVert_{\Wbb[1+\alpha]}^2 \mathcal{S}[\uchi].
\end{equation}
It follows from Young's convolution inequality in $(\omega, m)$, along with the fact that $[\Wsc, \omega] = 0$, that
\begin{align}\label{est:HighestOrderSecondConv}
\int r^{1+\alpha}\frac{\Delta^2}{r^4}\left| \BThat^\wtau \!*\! \Wsc(\omega\bf{u}_\chi)\right|^2 &\lesssim  \sup_{r, \theta}\Big( r^{1+\alpha}\lVert\BThat^\wtau\rVert_{L^1_{\omega, m}}\Big)^2\int\frac{\Delta^2}{r^{5+\alpha}}|\omega\Wsc\uchi|^2\big), \nonumber \\
&\lesssim \lVert\BThat^\wtau\rVert_{\Wbb[1+\alpha]}^2 \mathcal{S}_1^{\Wsc}[\uchi].
\end{align}
Consequently, it follows from the bounds \eqref{est:PHBBoundPart1}, \eqref{est:PHBBoundPart2}, and \eqref{est:HighestOrderSecondConv} that
\begin{equation}
\int r^{1+\alpha}\left|v^2\Wsc\left(c\varepsilon\rho^2\left(\BThat^\wtau * (-i\omega\uchi)\right)\right)\right|^2\lesssim \varepsilon^2\left\lVert\BT\right\rVert_{[\alpha]}^2 \left(\mathcal{S}_1^{\Wsc}[\uchi]\! + \mathcal{S}[\uchi]\right).
\end{equation}

The bounds for $\BThat^{\wphi}*(im\uchi)$ and $\BThat^0*\uchi$ follow from virtually identical arguments, noting that the additional power of $r$ appearing in the norm on $\BT$ in \eqref{Bnorm} compensates for the additional decay required in $\mathcal{S}[\mathbf{u}_\chi]$ and $\mathcal{S}_1^{\Wsc}[\mathbf{u}_\chi]$. 

In order to bound the terms containing $\BThat^{\theta}*\wpa_\theta\uchi$, we follow the same outline, noting that \eqref{est:PHBBoundPart1} and \eqref{est:PHBBoundPart2Prelim} hold with the appropriate modifications, but we must now replace \eqref{est:HdotBound} with the modified estimate
\begin{equation}\label{est:HdotBoundTheta}
\lVert\wpa_\theta\uchi\rVert_{\Hdot[0]} \leq \lVert\wpa_\theta\uchi\rVert_{\ell^2_m L^2_\omega}+ \left\lVert \chi_{\rmax}^{1/2}\frac{\oW\wpa_\theta\uchi}{\max(|m|, |\omega|, 1)}\right\rVert_{\ell^2_m L^2_\omega}.
\end{equation}
Carrying out the remainder of the argument as before, noting that $[\Wsc, \wpa_\theta] = 0$, gives the bound
\begin{equation}
\int r^{1+\alpha}\left|v^2\Wsc\left(c\varepsilon\rho^2\left(\BThat^{\theta}*\wpa_\theta\uchi\right)\!\right)\right|^2\!\!\lesssim \varepsilon^2\left\lVert\BT\right\rVert_{[\alpha]}^2  \left(\!\mathcal{S}_1^{\Wsc}[\uchi]\! + \mathcal{S}[\uchi] \!+\! \int\!\! \frac{\Delta^2}{r^{5+\alpha}}\chi_{\rmax}\frac{|\tfrac{1}{r}\oW\wpa_\theta\uchi|^2}{\max(|\omega|, |m|, 1)^2}\right).
\end{equation}
When bounding the $\wpa_r$ term, we again replace \eqref{est:HdotBound} with a modified estimate
\begin{equation}\label{est:HdotBoundr}
\lVert\wpa_r\uchi\rVert_{\Hdot[0]} \leq \lVert\wpa_r\uchi\rVert_{\ell^2_m L^2_\omega}+ \left\lVert \chi_{\rmax}^{1/2}\frac{\oW\wpa_r\uchi}{\max(|m|, |\omega|, 1)}\right\rVert_{\ell^2_m L^2_\omega}.
\end{equation}
Additionally, the commutator $[\Wsc, \wpa_r]$ does not vanish, so we must replace \eqref{est:HighestOrderSecondConv} with the bound
\begin{align}\label{est:HighestOrderSeconddr}
\int r^{1+\alpha}\frac{\Delta^2}{r^4}\left| \BThat^r \!*\! \Wsc(\wpa_r\uchi)\right|^2 &\lesssim  \sup_{r, \theta}\Big( r^{1+\alpha}\lVert\BThat^r\rVert_{L^1_{\omega, m}}\Big)^2\int\frac{\Delta^2}{r^{5+\alpha}}\big(|\wpa_r\Wsc\uchi|^2 + |[\Wsc, \wpa_r]\uchi|^2\big), \nonumber \\
&\lesssim \lVert\BThat^r\rVert_{\Wbb[1+\alpha]}^2 \left(\mathcal{S}_1^{\Wsc}[\uchi] + \int\frac{\Delta^2}{r^{5+\alpha}}|[\Wsc, \wpa_r]\uchi|^2\right).
\end{align}
Carrying out the remainder of the argument as in the $\widetilde{\partial}_{\theta}$ case concludes our result.
\end{proof}
We turn our attention to the integrals on the right hand side of \eqref{est:PHBWeightedBound}. In order to prove the estimate on $|[\Wsc, \wpa_r]\uchi|^2$, we use the following bound:
\begin{lemma}\label{prop:WdrTildeCommutator}
We have the following estimate pointwise in frequency space:
\begin{equation}\label{est:WdrCommutatorBound}
|\tfrac{\Delta}{r^2+a^2}[\Wsc,\wpa_r]\,\ph|\lesssim |r^{-1}\oW\ph| + \tfrac{\Delta^{1/2}}{r^3}\big(|\omega| +|m|\big) .
\end{equation}
\end{lemma}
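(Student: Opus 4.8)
The plan is to compute the commutator $[\Wsc,\wpa_r]$ explicitly using the decomposition $\Wsc = \wchi_1\Wsv + (1-\wchi_1)\chisharp\Wsl$ from \eqref{def:CommutatorAbstract}, and to observe that the only obstruction to $[\Wsc,\wpa_r]$ being lower order comes from the $r$-dependence of the coefficients of $\Wsc$ (not from the frequency cutoffs, which are functions of $\tfrac{\omega}{m}$ only and hence commute with $\wpa_r$). Recall from \eqref{def:wpar} that $\wpa_r = \partial_r + h_0\tfrac{r^2+a^2}{\Delta}\Ks = \Delta^{-1/2}\oW$, so $\wpa_r$ differs from $\Rs = \tfrac{\Delta}{r^2+a^2}\partial_r$ only by the multiplication operator $h_0\tfrac{r^2+a^2}{\Delta}$ acting together with $\Ks$, i.e.~at the Fourier level $\wpa_r\ph = \tfrac{r^2+a^2}{\Delta}\Rs\ph + h_0\tfrac{r^2+a^2}{\Delta}\big(-i\omega + i\omega_r\big)\ph$. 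Writing everything in terms of $\Rs$-derivatives and $(\omega,m)$-multiplication, the commutator $[\Wsc,\wpa_r]$ becomes a first order operator in $\Rs$ whose coefficients are bounded derivatives of $v$, $h_0$, $\omega_r$ and (via $\f,\fsharp$) the quantities controlled in Proposition \ref{prop:vfDerUpperBound}.

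Concretely, first I would reduce to frequencies in $\mathcal{G}'$ with $\Wsc = \Wsv = v^{-1}(\Rs - i(\omega-\omega_r)\h)$ and separately to frequencies in the support of $\chisharp$ in $\mathcal{G}$ with $\Wsc = \chisharp\Wsl = \chisharp v^{-1}(\Rs - i(\omega-\omega_r)\hsharp)$; the cutoff region $0 < \wchi_1 < 1$ is handled by the relation $\f = \fsharp$ there. In each case one expands
\[
\big[\Wsc,\wpa_r\big]\ph = \Wsc\big(\wpa_r\ph\big) - \wpa_r\big(\Wsc\ph\big),
\]
substitutes $\wpa_r = \tfrac{r^2+a^2}{\Delta}\big(\Rs - i h_0(\omega-\omega_r)\big)$, and collects terms. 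The top-order $\Rs\Rs$ terms cancel. What survives is a combination of: (i) a genuine first-order-in-$\Rs$ term whose coefficient is $\Rs\big(\tfrac{r^2+a^2}{\Delta}\big)\cdot v^{-1}$ up to lower order, which upon multiplying by $\tfrac{\Delta}{r^2+a^2}$ becomes $O(r^{-1})$ times $\oW\ph$ (after re-expressing $v^{-1}\Rs\ph$ through $\oW\ph$ and an $(\omega,m)\ph$ term); and (ii) zeroth-order terms carrying a factor $(\omega-\omega_r)$ or $m$, with coefficients built from $\Rs(v^{-1})$, $\Rs(h_0)$, $\Rs\big(\tfrac{r^2+a^2}{\Delta}\big)$, $\Rs(\omega_r)$ and $\Rs(\h)$ (resp.\ $\Rs(\hsharp)$). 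Using $|\Rs(v^{-1})|\lesssim 1$, $|\Rs h_0|\lesssim \Delta^{1/2}r^{-2}$ (from \eqref{def:h0}), $|\Rs\omega_r|\lesssim \Delta r^{-3}$, $\big|\Rs\big(\tfrac{r^2+a^2}{\Delta}\big)\big|\lesssim \tfrac{r^2+a^2}{\Delta}\cdot\tfrac1r$, and the bounds \eqref{est:drf}, \eqref{drf2} on $\partial_r(v\f)$, $\partial_r(v\fsharp)$ together with Corollary \ref{prop:TrapInterior} ($|\omega-\omega_r|\lesssim|\omega|$ on $\mathcal{G}'$) and $|\omega|\lesssim|m|$ on the support of $1-\wchi_1$, one multiplies through by $\tfrac{\Delta}{r^2+a^2} = v\,\Delta^{1/2}(r^2+a^2)^{-1/2}$ and reads off exactly \eqref{est:WdrCommutatorBound}: the first-order remainder contributes $|r^{-1}\oW\ph|$, and the zeroth-order remainders contribute $\tfrac{\Delta^{1/2}}{r^3}(|\omega|+|m|)$.

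The main obstacle — really the only delicate point — is bookkeeping the weights so that the $\oW\ph$ term appears with the weight $r^{-1}$ (no $\Delta$-weight) while the undifferentiated terms genuinely gain the $\tfrac{\Delta^{1/2}}{r^3}$ decay. This requires care near the horizon, where $\tfrac{r^2+a^2}{\Delta}$ blows up: one must check that every coefficient that multiplies a $\Rs\ph$ (equivalently an $\oW\ph$, up to lower order) comes with a compensating power of $\Delta$, which it does because such coefficients all arise from differentiating $\tfrac{r^2+a^2}{\Delta}$ or $v^{-1}$, and because $v^{-1}\Rs\ph = \oW\ph + i(\omega-\omega_r)v^{-1}\ph$ with $v^{-1} = \tfrac{r^2+a^2}{\Delta^{1/2}}$ reorganized against the overall prefactor $\tfrac{\Delta}{r^2+a^2}$. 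For $\f$ and $\hsharp$ themselves, their regularity at the horizon is exactly what Proposition \ref{prop:vfDerUpperBound} provides, so no new input is needed. Once the weight accounting is done the estimate is immediate.
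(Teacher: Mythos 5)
Your proposal has a genuine gap in the step where you bound the zeroth--order remainders. After expanding $[\Wsc,\wpa_r]$ directly, re--expressing the first--order remainder $-\tfrac{r-M}{\Delta^{1/2}}\Rs\ph$ through $\oW\ph$ produces, besides the good $O(r^{-1})\oW\ph$ term, an undifferentiated piece $\tfrac{(r-M)h_0(\omega-\omega_r)}{\Delta^{1/2}}\ph$, and the direct zeroth--order terms contain $\omega\,\Rs\f\,\ph$ (resp.\ the $\fsharp$ analogue) and $v^{-1}\Rs\big(h_0\tfrac{r^2+a^2}{\Delta}(\omega-\omega_r)\big)\ph$. None of these is individually bounded by $\tfrac{\Delta^{1/2}}{r^3}(|\omega|+|m|)$: they blow up like $\Delta^{-1/2}$ at the horizon and, at infinity, $\omega\Rs\f$ tends to a nonzero multiple of $\omega$ (no decay at all). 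The estimate \eqref{est:WdrCommutatorBound} only holds because these singular and non--decaying pieces cancel exactly, leaving $\omega\,\Rs\wfe$ with $\wfe=\f-\tfrac{(\omega-\omega_r)h_0}{\omega v}$ as in \eqref{def:ferror} (resp.\ the $\hsharp-h_0$ combination), and the bound one then needs is the \emph{difference} estimate \eqref{est:drfFinal} (resp.\ \eqref{est:drfFinal2}), i.e.\ $|\partial_{\rs}^k(\f-\tfrac{r^2+a^2-\eta}{\Delta^{1/2}}h_0)|\lesssim \Delta^{1/2}r^{-2-k}$, not the bounds \eqref{est:drf}, \eqref{drf2} on $\partial_r(v\f)$, $\partial_r(v\fsharp)$ that you cite; the latter cannot recover the $\Delta^{1/2}$ weight at the horizon nor the $r^{-2}$ decay at infinity, because $\f$ itself is $\Delta^{-1/2}$--singular and grows like $r$. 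Your proposal neither identifies nor proves this cancellation, so the claimed bound on the remainders in item (ii) is unjustified (and false term--by--term).

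The paper sidesteps the cancellation by writing $\Wsc=\oW+(\Wsc-\oW)$ from the start: $[\oW,\wpa_r]$ is computed to be exactly proportional to $\oW$ itself (so the only contribution is the $|r^{-1}\oW\ph|$ term, with no singular leftover), while $\Wsc-\oW$ is, via \eqref{id:WMinusWZero}, a multiplication operator (up to the $(1-\chisharp)\oW$ piece and terms where $\wpa_r$ hits $\chisharp$) whose radial derivative is precisely what \eqref{est:drfFinal}, \eqref{est:drfFinal2} control. If you want to keep your direct expansion you must exhibit the cancellation explicitly and then invoke \eqref{est:drfFinal}, \eqref{est:drfFinal2}. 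Two smaller points: $\chisharp$ is a radial cutoff (frequency dependent through $R^{\musDoubleSharp}$), so it does \emph{not} commute with $\wpa_r$; the resulting term $(\wpa_r\chisharp)\Wsl\ph$ is harmless (uniformly compactly supported away from the horizon and infinity) but must be mentioned. Also, on $\mathcal{G}\cap(\mathcal{G}')^c$ one has $\hsharp=1$, not $\f=\fsharp$, so the superradiant regime needs the separate bound on $\wpa_r\big(\tfrac{(1-h_0)(\omega-\omega_r)}{v}\big)$ rather than a reduction to the $\f$ case.
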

\begin{proof}
We expand $\Wsc = \oW + (\Wsc-\oW)$. Then, the identity \eqref{def:wpar} implies the commutator identity
\begin{equation}\label{id:CommutatorWzeroRegular}
\oW\widetilde{\pa}_r = \widetilde{\pa}_r\oW - \frac{r-M}{\Delta^{1/2}}\wpa_r = \widetilde{\pa}_r\oW - \frac{r-M}{\Delta}\oW,
\end{equation}
from which the bound
\begin{equation}
|\tfrac{\Delta}{r^2+a^2}[\oW,\wpa_r]\,\ph|\lesssim |r^{-1}\oW\ph| 
\end{equation}
directly follows.
The expansion \eqref{id:WMinusWZero} implies
\begin{align}\label{id:WdrDecomposition}
[\Wsc - \oW, \widetilde{\pa}_r] &= i\wpa_r\left(\wchi_1\omega\wfe + (1-\wchi_1)\chisharp\frac{\hsharp - h_0}{v}(\omega-\omega_r)\right) - \\
&\qquad -(1-\wchi_1)\left(\wpa_r\chisharp\oW - (1-\chisharp)\frac{r-M}{\Delta}\oW\right).\nonumber
\end{align}
The bound for all derivatives containing $\oW$ is straightforward, as are bounds on terms where $\wpa_r$ falls on $\chisharp$.

Next, we deal with terms containing $\wfe$. It follows from the bound \eqref{est:drfFinal} that for all $r\in(r_+, \infty)$
\begin{equation}\label{est:drfNaturalBound}
|\partial_r(\f - f_0)| \lesssim \frac{1}{r\Delta^{1/2}}.
\end{equation}

Next, we bound terms containing $\hsharp$. First, outside $\mathcal{G}'$, direct calculation gives
\begin{equation}\label{est:drHZeroMax}
\chisharp\left|\wpa_r\left(\tfrac{(\omega-\omega_r)(1 - h_0)}{v}\right)\right|\lesssim \frac{1}{r\Delta^{1/2}}(|\omega|+|\omega_r|),
\end{equation}
so \eqref{est:WdrCommutatorBound} for corresponding terms in \eqref{id:WdrDecomposition} directly follows.
For $(\omega, m)\in\mathcal{G}\cap\mathcal{G}'$, the estimate \eqref{est:drfFinal2}  implies that, in the support of $\chisharp$,
\begin{equation}
\chisharp\left|\wpa_r\left(\tfrac{(\omega-\omega_r)(\hsharp - h_0)}{v}\right)\right|\lesssim \frac{1}{r\Delta^{1/2}}(|m|)
\end{equation}
The bounds \eqref{est:drfNaturalBound} and \eqref{est:drHZeroMax} give our result.
\end{proof}
In order to bound the last term on the right hand side of \eqref{est:PHBWeightedBound}, we prove a Lagrangian estimate. As this will also be useful when bounding $\mathcal{N}_\sharp$, we state it in a more general form.
\begin{lemma}\label{lem:ZeroOrderInverseBound}
Let $\uchi$ be as in Theorem \ref{thm:CommutedEEMain}, and let $\overline{g}$ be a smooth nonnegative function of $r$ depending on $(\omega, m)$ such that $\overline{g}$ is supported for $r\in (r_+, \rmax+\delta^\sharp]$ and $\overline{g}$, $\partial_r\overline{g}$, and $\partial_r^2 \overline{g}$ are uniformly bounded. The bound
\begin{equation}\label{est:ZeroOrderInverseBound}
\int \overline{g}\tfrac{\Delta^2}{(r^2+a^2)^2} \left(\left|\tfrac{\oW\wpa_\theta\uchi}{\max(|\omega|, |m|, 1)}\right|^2+\left|\tfrac{\oW\wpa_r\uchi}{\max(|\omega|, |m|, 1)}\right|^2\right) \leq C\left( \mathcal{S}[\uchi] +\int \overline{g} \left|v^2\oW\left(\tfrac{\Hchi}{v^2}\right)\right|^2\right)
\end{equation}
holds for sufficiently small $\varepsilon$, where $C$ is a constant depending on $\overline{g}$ and its derivatives.
\end{lemma}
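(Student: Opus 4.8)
\textbf{Proof proposal for Lemma \ref{lem:ZeroOrderInverseBound}.} The plan is to run a frequency-localised Lagrangian (virial) estimate with the weight $\overline g$, applied not to $\uchi$ itself but to the derivatives $\wpa_\theta\uchi$ and $\wpa_r\uchi$, and to exploit the fact that the troublesome $\oW$-derivative comes with the gain of $\max(|\omega|,|m|,1)^{-1}$, which — after multiplying through — turns the second-order term on the left into the square of a \emph{first-order} quantity $\oW\wpa_r\uchi$ that we can hit with the divergence identity \eqref{Div:Lagrangian}. Concretely, first I would record that $\wpa_\theta$ commutes with $\Pam$ exactly and that $\wpa_r = \Delta^{-1/2}\oW$ commutes with $\Pam$ up to first-order error terms of the schematic form $r^{-1}\oW(\cdot) + v^3(|\omega|+|m|)(\cdot) + r^{-3}(\cdot)$; this is the same computation underlying \eqref{id:CommutatorWzeroRegular} and the pointwise bounds \eqref{est:Plupperbound}, \eqref{est:PLSuperradBounds}. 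Thus, writing $\ph_X$ for $X\uchi$ with $X \in\{\wpa_\theta,\wpa_r\}$, each $\ph_X$ solves $\Pam\ph_X = H_X$ with $H_X = X\,\HB + X\,\Hchi + (\text{commutator error})$, where the commutator error is pointwise controlled by the $\mathcal S[\uchi]$-integrand together with (in the $\wpa_r$ case) $|r^{-1}\oW\ph_{\wpa_r}|$-type terms, which can be absorbed into the left-hand side since $\overline g$ is supported where $\Delta\sim r^2$, i.e.\ uniformly away from the horizon.

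Next I would apply the identity \eqref{microlocalg}, taking the Lagrangian current $Q^{g},A^{g}$ with $g = \overline g\cdot v^{-2}\cdot\max(|\omega|,|m|,1)^{-2}$ — more precisely choose the weight so that, after rearranging, the combination $g\,|\ph_X'|^2 + g v^2|\widehat{\slashed\nabla}\ph_X|^2 + (\text{sign-definite multiple of }g)\,(|\omega-\omega_r|^2 + \text{angular})\,|\ph_X|^2$ dominates $\overline g\,\tfrac{\Delta^2}{(r^2+a^2)^2}|\oW\ph_X|^2/\max(|\omega|,|m|,1)^2$ pointwise (recall $\oW = v^{-1}(\Rs - i(\omega-\omega_r))$, so $|\oW\ph_X|^2 \lesssim v^{-2}(|\ph_X'|^2 + |\omega-\omega_r|^2|\ph_X|^2)$, which is exactly of the shape produced on the left of \eqref{microlocalg}). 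Because $\overline g$, $\partial_r\overline g$, $\partial_r^2\overline g$ are uniformly bounded and $\overline g$ is compactly supported in $r$, the $g''|\ph_X|^2$ term and the boundary terms in \eqref{microlocalg} are harmless: the boundary terms vanish by compact support of $\overline g$, and $\int |g''| |\ph_X|^2 \lesssim \mathcal S[\uchi]$ since on the support of $\overline g$ one has $\frac{\Delta}{r^2}\sim 1$ and the $\mathcal S$-integrand controls $r^{-3-\alpha}|\wpa_\theta\uchi|^2$ and $r^{-1-\alpha}|\wpa_r\uchi|^2$ (the latter via the $|\partial_r\ph|^2$-term in \eqref{def:S0Local} combined with the $\frac{1}{\Delta}|\cdots|^2$-term, again using $\Delta\sim r^2$ on $\mathrm{supp}\,\overline g$). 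The one term in \eqref{microlocalg} that is not manifestly controlled is $\int g\,\mathfrak R(\Pam\ph_X\,\overline{\ph_X}) = \int g\,\mathfrak R(H_X\,\overline{\ph_X})$, which we split as before: the commutator-error part of $H_X$ pairs with $\overline g\,\ph_X$ to give something $\lesssim \mathcal S[\uchi]$ (plus absorbable $|\oW\ph_X|$-terms, handled by Cauchy--Schwarz with a small constant against the coercive left side), the $X\Hchi$ part gives — after integrating by parts in $r$ to move $X$ off $\Hchi$, or directly estimating — a contribution $\lesssim \int\overline g\,|v^2\oW(\tfrac{\Hchi}{v^2})|^2 + \mathcal S[\uchi]$, and the $X\HB$ part scales with $\varepsilon$, so by the bound \eqref{est:PHBWeightedBound} of Lemma \ref{lem:PHBBoundPrelim} (or a simpler direct convolution estimate analogous to \eqref{est:PHBBoundPart1}) it contributes $\lesssim \varepsilon^2\|\BT\|_{[\alpha]}^2(\mathcal S_1^{\Wsc}[\uchi] + \mathcal S[\uchi] + \text{l.h.s.})$, which for $\varepsilon$ small enough is absorbed back into the left-hand side of \eqref{est:ZeroOrderInverseBound}.

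The main obstacle I anticipate is the self-referential structure: the right-hand side of the Lagrangian identity applied to $\ph_{\wpa_r}$ naturally reproduces $\oW\wpa_r\uchi$-terms (through the commutator $[\Wsc,\wpa_r]$ of Lemma \ref{prop:WdrTildeCommutator} and through the $X\HB$-term, which via Lemma \ref{lem:PHBBoundPrelim} again features precisely $\int\frac{\Delta^2}{r^{5+\alpha}}\chi_{\rmax}|\oW\wpa_r\uchi|^2/\max(\cdots)^2$ on its right side). So the estimate must be closed by absorption, which requires (i) the coercivity constant from \eqref{microlocalg} to genuinely beat the constant in front of the reproduced $\oW\wpa_r\uchi$-term — this is where the gain $\max(|\omega|,|m|,1)^{-2}$ is essential, since without it the virial term $|\oW\ph_X|^2$ would be second-order and uncontrollable, exactly as flagged in the paragraph after \eqref{def:Hdotnorm} — and (ii) $\varepsilon$ small, to kill the $\varepsilon^2\|\BT\|$-feedback. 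A secondary technical point is that $\overline g$ is only assumed smooth with bounded derivatives, not compactly supported away from $r_+$ a priori — but the hypothesis states $\overline g$ is supported in $(r_+,\rmax+\delta^\sharp]$, so on $\mathrm{supp}\,\overline g$ the weight $v$ is bounded below only away from $r_+$; near $r_+$ one must instead use that $\frac{\Delta^2}{(r^2+a^2)^2} = v^4(r^2+a^2)^2$ carries two extra powers of $\Delta$ relative to the horizon-degenerate $\mathcal S$-norm, so the degenerate redshift term $\frac{1}{\Delta}|\Rs\ph+i(\omega-\omega_r)\ph|^2$ in \eqref{def:S0Local} (which is $\sim \frac{r^2}{\Delta}|v\oW\ph|^2$) is more than enough. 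I would handle the near-horizon and away-from-horizon regions separately, using the redshift structure for the former and the clean virial estimate for the latter, and glue with a partition of $\overline g$.
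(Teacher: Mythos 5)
Your overall strategy is the right one — a Lagrangian estimate with the frequency-degenerate weight $\overline g\,\max(|\omega|,|m|,1)^{-2}$, using the gain of $\max(|\omega|,|m|,1)^{-1}$ to render the wrong-sign $(\omega-\omega_r)^2$-term effectively lower order, splitting $H=\HB+\Hchi$, and absorbing the $\varepsilon$-feedback from $\HB$ for $\varepsilon$ small — and this is also how the paper closes the estimate. But your execution has a genuine gap in the choice of which function you apply the identity \eqref{Div:Lagrangian} to. You commute $\wpa_\theta$ and $\wpa_r$ through $\Pam$ and take $\ph_X=X\uchi$; your claim that ``$\wpa_\theta$ commutes with $\Pam$ exactly'' is false, since $[\partial_\theta,\mathcal{L}_\theta]\neq 0$: the commutator produces terms of the schematic form $\tfrac{1}{\sin^2\theta}\partial_\theta\uchi$ and $\bigl(a\omega\sin\theta-\tfrac{m}{\sin\theta}\bigr)\bigl(a\omega\cos\theta+\tfrac{m\cos\theta}{\sin^2\theta}\bigr)\uchi$, which carry pole-singular weights that are \emph{not} controlled by $\mathcal{S}[\uchi]$ (the angular part of \eqref{def:S0Local} controls $|\partial_\theta\ph|^2$ and the spheroidal potential term, but not $\sin^{-2}\theta\,|\partial_\theta\ph|^2$), so the error term $\int g\,\mathfrak R(H_X\overline{\ph_X})$ cannot be closed as you claim. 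Similarly, commuting $\wpa_r=\Delta^{-1/2}\oW$ through $\Pam$ generates genuine second-order radial terms (e.g.\ from $[\Rs\Rs,\Delta^{-1/2}]$ one gets $\Rs(\Delta^{-1/2})\,\Rs\oW\uchi$), which your schematic ``first-order commutator error'' does not account for; these would have to be absorbed by a separate Cauchy--Schwarz/absorption argument that you do not supply. The paper sidesteps both issues by commuting the derivatives past $\oW$ at the level of the integrand — $[\oW,\wpa_\theta]=0$ trivially, and $[\oW,\wpa_r]$ is first order by \eqref{id:CommutatorWzeroRegular} — and then applying \eqref{Div:Lagrangian} \emph{once}, to $\ph=\oW\uchi$ with $g=\overline g\,\max(|\omega|,|m|,1)^{-2}$: the needed quantities $|\Rs\oW\uchi|^2$ and $|\partial_\theta\oW\uchi|^2$ then come out of the coercive $g|\ph'|^2$ and $gv^2|\widehat{\slashed\nabla}\ph|^2$ terms with no angular commutation at all, and $\Pam\oW\uchi$ is expanded via \eqref{def:CommIdentityBasic}, where for $\oW$ the would-be second-order term $\Ph^\star\uchi$ reduces to the lower-order expression $-2cv(\omega-\omega_r)\omega_r'$.

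A secondary flaw: you dispose of the boundary terms in \eqref{microlocalg} ``by compact support of $\overline g$.'' But $\overline g$ is only assumed supported in $(r_+,\rmax+\delta^\sharp]$ and need not vanish near the horizon — in the application one takes $\overline g=\chi_{\rmax}$, which is identically $1$ there — so the boundary contribution at $r^\star\to-\infty$ does not vanish for free; the paper invokes the outgoing boundary conditions of Proposition \ref{prop:boundaryterms} (estimate \eqref{est:OBCderiv}) for this. With these two repairs — replacing the commutation of $\wpa_\theta,\wpa_r$ through the wave operator by the paper's single Lagrangian estimate on $\oW\uchi$, and justifying the horizon boundary term via \eqref{est:OBCderiv} — your argument would coincide with the paper's proof.
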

\begin{proof} We commute $\wpa_\theta, \wpa_r$ through $\oW$, as the commutator identity \eqref{id:CommutatorWzeroRegular} implies
\begin{equation}
\int \overline{g}\tfrac{\Delta^2}{(r^2+a^2)^2}\left|\tfrac{[\oW,\wpa_r]\uchi}{\max(|\omega|, |m|, 1)}\right|^2\lesssim \int \overline{g}\frac{\Delta}{r^2+a^2}|\wpa_r\uchi|^2 \lesssim \mathcal{S}[\uchi].
\end{equation}

Next we apply the Lagrangian identity \eqref{Div:Lagrangian} at each frequency to $\ph = \oW\uchi$ and
\[
g = \tfrac{\overline{g}}{(\max(|\omega|, |m|, 1)^2},
\]
noting that
\begin{equation}
\int \overline{g} \left|\tfrac{(\omega-\omega_r)\oW\ph}{\max(|\omega|, |m|, 1)}\right|^2\lesssim \int \overline{g} \left|\oW\ph\right|^2\lesssim \mathcal{S}[\uchi],
\end{equation}
as well as the outgoing boundary conditions \eqref{est:OBCderiv}, to obtain
\begin{equation}\label{est:InverseCommutators1}
\int \overline{g}\left|\tfrac{\Delta}{r^2+a^2}\tfrac{\wpa_\theta \oW \uchi}{\max(|\omega|, |m|, 1)}\right|^2 +\overline{g}\left|\tfrac{\Rs\oW \uchi}{\max(|\omega|, |m|, 1)}\right|^2\lesssim \mathcal{S}[\uchi] + \int  \overline{g}\left|\tfrac{\mathcal{P}_{a, M}\oW\uchi}{\max(|\omega|, |m|, 1)}\right|^2.
\end{equation}
Carefully expanding $\Pam\oW\uchi$ using \eqref{def:CommIdentityBasic}, noting that
\begin{equation}
\Ph^\star \uchi = v\left((\omega-\omega_r)^2\frac{h_0^2-1}{v^2}\right)' = cv\Rs((\omega-\omega_r)^2) = -2cv(\omega-\omega_r)\omega_r'
\end{equation}
 and bounding $\Pl^\star\uchi$ as in \eqref{est:Plupperbound}, noting that
\begin{equation}
v\omega f = (\omega-\omega_r)h_0,
\end{equation}
gives the bound
\begin{equation}\label{est:InverseCommutators2}
\int \overline{g}\left|\frac{\mathcal{P}_{a, M}\oW\uchi}{\max(|\omega|, |m|, 1)}\right|^2 \lesssim \mathcal{S}[\uchi] + \int \overline{g}\frac{\left|v^2\oW\left(\tfrac{H}{v^2}\right)\right|^2}{\max(|\omega|, |m|, 1)^2}.
\end{equation}
Linearity of $\oW$ along with the decomposition \eqref{def:WavePerturbedFreq} implies
\begin{equation}\label{est:InverseCommutators3}
\int \overline{g}\frac{\left|v^2\oW\left(\tfrac{H}{v^2}\right)\right|^2}{\max(|\omega|, |m|, 1)^2} \lesssim \int \overline{g}\frac{\left|v^2\oW\left(\tfrac{\HB}{v^2}\right)\right|^2}{\max(|\omega|, |m|, 1)^2} + \int \overline{g}\left|v^2\oW\left(\tfrac{\Hchi}{v^2}\right)\right|^2
\end{equation}
We expand \eqref{def:HDecompB} and \eqref{def:BuchiExpansion}, and bound terms as in Lemma \ref{lem:PHBBoundPrelim} to obtain
\begin{equation}\label{est:InverseCommutators4}
\int \overline{g}\frac{\left|v^2\oW\left(\tfrac{\HB}{v^2}\right)\right|^2}{\max(|\omega|, |m|, 1)^2}\lesssim \varepsilon^2 \mathcal{S}[\ph] + \varepsilon^2 \int \overline{g}\frac{\Delta^2}{(r^2+a^2)^2} \left(\left|\frac{\oW\partial_\theta\uchi}{\max(|\omega|, |m|, 1)}\right|^2+\left|\frac{\oW\wpa_r\uchi}{\max(|\omega|, |m|, 1)}\right|^2\right) .
\end{equation}
The estimates \eqref{est:InverseCommutators1}, \eqref{est:InverseCommutators2}, \eqref{est:InverseCommutators3}, and \eqref{est:InverseCommutators4} imply that, for some $C$ independent of $\varepsilon$,
\[
(l.h.s.) \eqref{est:ZeroOrderInverseBound} \leq C (r.h.s.)\eqref{est:ZeroOrderInverseBound}  + C\varepsilon^2(l.h.s.)\eqref{est:ZeroOrderInverseBound} 
\]
For sufficiently small $\varepsilon$ depending on $C$, the bound \eqref{est:ZeroOrderInverseBound} directly follows.
\end{proof}
We combine \eqref{est:PHBWeightedBound} with Lemmas \ref{prop:WdrTildeCommutator} and \ref{lem:ZeroOrderInverseBound}, with $\overline{g} = \chi_{\rmax}$. When bounding the last term in \eqref{est:PHBWeightedBound}, the support of $\chi_{\rmax}$ allows us to ignore the weight in $r$. It then follows that
\begin{equation}
\varepsilon^{-1}\int r^{1+\alpha}|\mathcal{P}_{\HB}|^2\lesssim \varepsilon\lVert \BT\rVert_{[\alpha]}^2\left(\mathcal{S}_1^{\Wsc}[\uchi]  + \mathcal{S}[\uchi] + \int\chi_{\rmax}^2\left|v^2\oW\left(\tfrac{\Hchi}{v^2}\right)\right|^2\right).
\end{equation}
Consequently, \eqref{est:CSPHB} implies
\begin{equation}
|\mathcal{N}[\Wsc\uchi, \PHB]|\lesssim \varepsilon\lVert \BT\rVert_{[\alpha]}^2\left(\mathcal{S}_1^{\Wsc}[\uchi]  +  \mathcal{S}[\uchi] + \int\chi_{\rmax}^2\left|v^2\oW\left(\tfrac{\Hchi}{v^2}\right)\right|^2\right) \, ,
\end{equation}
which proves (\ref{est:PHBBoundFirst}) for $\mathcal{N}[\Wsc\uchi, \PHB]$.

\subsubsection{Bounding $\mathcal{N}_{\sharp}[\HB, \HB]$}
To complete the proof of Proposition \ref{thm:PHBBoundFirst}, it suffices to prove the following:
\begin{lemma}\label{lem:NSharpBfBBound}
Under the assumptions of Theorem \ref{thm:CommutedEEMain}, $\uchi$ satisfies the bound
\begin{equation}
|\mathcal{N}_{\sharp}[\HB, \HB]|\lesssim C\varepsilon^2\lVert \BT\rVert_{[\alpha]}^2\left(\Ssharp[\uchi] + \mathcal{S}[\uchi] + \int\chi_{\rmin\!,\rmax}\left|v^2\oW\left(\tfrac{\Hchi}{v^2}\right)\right|^2\right).
\end{equation}
\end{lemma}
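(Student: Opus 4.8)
\textbf{Plan for the proof of Lemma \ref{lem:NSharpBfBBound}.} The overall strategy is entirely parallel to the proof of the bound on $\mathcal{N}[\Wsc\uchi, \PHB]$ above, but now everything is localized in frequency to $\mathcal{G}$ (through the factor $1-\wchi_1$) and in space to the compact region $[\rmin, \rmax+\delta^\sharp]$ (through $g_\sharp$), which is precisely the setting in which Proposition \ref{lem:CMBound2} applies. The first step is to recall the definition \eqref{def:Nsharp} of $\mathcal{N}_\sharp$ and to rewrite, using \eqref{def:HDecompB}, the quantity $\HB/v^2 = c\varepsilon\rho^2\mathcal{F}(\BT\bfuT)$, so that
\begin{equation}
\mathcal{N}_{\sharp}[\HB, \HB] = \iiiint_{\FMs}(1-\wchi_1)^2 g_\sharp^2\left(|\omega\HB|^2 + |m\HB|^2 + \left|v^2\Wp\left(\tfrac{\HB}{v^2}\right)\right|^2\right) \, .
\end{equation}
The first two terms are handled by the ordinary Young's inequality in $(\omega,m)$ together with an $L^2$--$L^\infty$ bound in $(r,\theta)$, exactly as in \eqref{est:PHBBoundPart1}, using that on the support of $g_\sharp$ the weight $\Delta$ is comparable to a constant and $r$ is comparable to a constant, so these contribute $\lesssim \varepsilon^2\lVert\BT\rVert_{[\alpha]}^2\,\mathcal{S}[\uchi]$. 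For the third term one writes $\Wp(g^{-1}\cdot) $-type Leibniz expansions and the key point is that $\Wp$ is not a physical-space operator, so one must insert and subtract $\widehat{w}*\Wp\ph$ and apply the Coifman--Meyer-type estimate of Proposition \ref{lem:CMBound2} to the error $X_\sharp^{\Wp}(\widehat w * \ph) - \widehat w * (X_\sharp^{\Wp}\ph)$, with $\widehat w$ one of the components $\BThat^{\wtau}, \BThat^r, \BThat^\theta, \BThat^{\wphi}, \BThat^0$ and $\ph$ the corresponding derivative of $\uchi$ appearing in \eqref{def:BuchiExpansion}.

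\textbf{The main steps, in order.} First, decompose $\mathcal{F}(\BT\bfuT)$ according to \eqref{def:BuchiExpansion} and treat each of the five convolution terms separately, noting (as in Lemma \ref{lem:PHBBoundPrelim}) that the extra powers of $r$ in the definition \eqref{Bnorm} of $\lVert\BT\rVert_{[\alpha]}$ exactly compensate the extra $r$-decay demanded on the $\mathcal{S}$-energies. Second, for each term write $v^2\Wp(\rho^2 \widehat w * \ph) $ via the product rule \eqref{id:WBasicLeibniz} and the bounds $|v^2\Wsc(\rho^2)|\lesssim \Delta^{3/2}r^{-3}$, $|v^2\rho^2|\lesssim \Delta r^{-2}$ (valid also for $\Wp$ in the compact region, since $\Wp = v^{-1}\Rs - iv^{-1}(\omega-\omega_r)$ differs from $\oW$ only by a lower-order multiple of $(\omega-\omega_r)$ which is controlled by $|m|$ on $\mathcal{G}$), reducing matters to bounding $g_\sharp^2|\Wp(\widehat w * \ph) - \widehat w * \Wp\ph|^2$ plus the ``good'' pieces $g_\sharp^2 |\widehat w * \Wp\ph|^2$ and $g_\sharp^2|\widehat w * \ph|^2$. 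Third, apply Proposition \ref{lem:CMBound2} with $X_\sharp^{\Wp}$ (and $X_\sharp^T$, $X_\sharp^\phi$ for the $\omega$- and $m$-weighted pieces) and $a_1 a_2$ chosen so that $a_1^2$ absorbs the weight multiplying $\widehat w$ and $a_2^2$ the weight multiplying $\ph$; the right-hand side then becomes $\sup(r^{p+\alpha}\lVert\BThat\rVert_{\Wdot}\chi_{\rmin\!,\rmax})^2$ times $\int \chi_{\rmin\!,\rmax} \lVert\ph\rVert_{\Hdot}^2$, the first factor being $\lesssim \lVert\BT\rVert_{[\alpha]}^2$ by Proposition \ref{prop:Bfrommain}. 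Fourth, the $\Hdot$-norm of $\wpa_r\uchi$, $\wpa_\theta\uchi$, $m\uchi$, $\omega\uchi$ involves the ``inverse-weighted $\oW$-derivative'' term $\chi_{\rmax}^{1/2}\oW(\cdot)/\max(|m|,|\omega|,1)$; this is controlled by invoking Lemma \ref{lem:ZeroOrderInverseBound} with $\overline g = \chi_{\rmin\!,\rmax}$ (which is admissible, being smooth, nonnegative, supported in $(r_+,\rmax+\delta^\sharp]$ with bounded derivatives), giving $\lesssim \mathcal{S}[\uchi] + \int\chi_{\rmin\!,\rmax}|v^2\oW(\Hchi/v^2)|^2$ for sufficiently small $\varepsilon$. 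Fifth, the ``good'' pieces $g_\sharp^2|\widehat w * \Wp\ph|^2$ reproduce, after Young's inequality, the quantities $g_\sharp^2(1-\wchi_1)^2(|\Wp\Wsc\uchi|^2 + \dots)$; one uses $[\Wsc,\wpa_\theta]=0$ and $[\Wsc,\wpa_r]$ bounded via Lemma \ref{prop:WdrTildeCommutator}, together with the pointwise identity $|\Rs\Wsl\ph|\lesssim|\Rs\Wp\ph|+|m\Rs\ph|+|m\ph|$ in the support of $1-\wchi_1$, to see that these are all controlled by $\Ssharp[\uchi]$ as defined in \eqref{def:Ssharp} — indeed this is exactly what the auxiliary energy $\Ssharp$ was built to capture.

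\textbf{The main obstacle.} The delicate point is not any single estimate but rather the bookkeeping that the only second-order derivatives of $\uchi$ appearing on the right are those already present in $\Ssharp[\uchi]$, i.e. $\Wp$-, $\omega$-, $m$- and $\wpa_\theta$-derivatives of $g_\sharp(1-\wchi_1)\uchi$ localized near $R^{\musDoubleSharp}$, together with the compactly-supported $\oW$-commutator term from Lemma \ref{lem:ZeroOrderInverseBound}. In particular one must check that the commutator $[\Wsc,\wpa_r]$ — which is genuinely nonzero — never produces an uncontrolled general second derivative: by Lemma \ref{prop:WdrTildeCommutator} it is bounded by $|r^{-1}\oW\uchi| + \tfrac{\Delta^{1/2}}{r^3}(|\omega|+|m|)$, which on the compact support of $g_\sharp$ is a first-order term controlled by $\mathcal{S}[\uchi]$. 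Assembling all contributions, one obtains $|\mathcal{N}_\sharp[\HB,\HB]|\lesssim \varepsilon^2\lVert\BT\rVert_{[\alpha]}^2(\Ssharp[\uchi] + \mathcal{S}[\uchi] + \int\chi_{\rmin\!,\rmax}|v^2\oW(\Hchi/v^2)|^2)$, which is the claim; combined with the bound on $\mathcal{N}[\Wsc\uchi,\PHB]$ already established and the observation $\chi_{\rmin\!,\rmax}\leq\chi_{\rmax}$, this completes the proof of Proposition \ref{thm:PHBBoundFirst}.
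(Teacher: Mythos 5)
Your proposal is correct and follows essentially the same route as the paper: expand $\HB$ via \eqref{def:BuchiExpansion}, apply Proposition \ref{lem:CMBound2} to the $X_\sharp$-type commutator errors, control the resulting $\Hdot$-norms with Lemma \ref{lem:ZeroOrderInverseBound} (with $\overline g=\chi_{\rmin\!,\rmax}$), and absorb the remaining convolution terms into $\Ssharp[\uchi]+\mathcal{S}[\uchi]$ via Young's inequality after commuting the radial derivative. The only cosmetic imprecision is that the relevant commutator for the good term is $[\Wp,\wpa_r]$ (handled in the paper by Young's inequality plus the uniform compact support of $g_\sharp$) rather than $[\Wsc,\wpa_r]$ from Lemma \ref{prop:WdrTildeCommutator}, and the good pieces are $\Wp$ acting on $\wpa_r\uchi$ etc.\ rather than on $\Wsc\uchi$; neither affects the argument.
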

\begin{proof}
The bound follows from expanding \eqref{def:Nsharp} and \eqref{def:BuchiExpansion} and bound these term-by-term via Proposition \ref{lem:CMBound2}. We prove this for the $\wpa_r$ component of $\BT$ and the $\Wp$ term in $\mathcal{N}_\sharp$, noting that all other terms are more straightforward. Proposition \ref{lem:CMBound2} implies
\begin{align}
\int\left|v^2(1-\wchi_1)g_\sharp \Wp (\rho^2 \BThat^r*\wpa_r\uchi)\right|^2&\lesssim \int \left|v^2\rho^2\BThat^r *\left((1-\wchi_1)g_\sharp \Wp\wpa_r\uchi\right)\right|^2\\
&\qquad + \sup_{r, \theta}\left(v^4\left\lVert\rho^2\BThat^r\right\rVert_{\Wdot[0]}^2\chi_{\rmin\!,\rmax}\right)\iint_{\mcA} \left\lVert\wpa_r\uchi\right\rVert_{\Hdot[0]}^2\chi_{\rmin\!,\rmax}.\nonumber
\end{align}
Young's convolution inequality and uniform compact support of $g_\sharp$ in $r$ together imply that we can commute $[\Wp, \wpa_r]$ via the bound
\begin{equation}
\int \left|v^2\rho^2\BThat^r *\left((1-\wchi_1)g_\sharp [\Wp, \wpa_r]\uchi\right)\right|^2\lesssim \lVert\BT\rVert_{[\alpha]}^2\mathcal{S}[\uchi],
\end{equation}
and consequently
\begin{equation}
\int \left|v^2\rho^2\BThat^r *\left((1-\wchi_1)g_\sharp \Wp\wpa_r\uchi\right)\right|^2 \lesssim \lVert\BT\rVert_{[\alpha]}^2\left(\mathcal{S}[\uchi] + \mathcal{S}_1^\sharp[\uchi] \right).
\end{equation}
Finally, Young's convolution inequality with Lemma \ref{lem:ZeroOrderInverseBound} implies
\begin{equation}
\sup_{r, \theta}\left(v^4\left\lVert\rho^2\BThat^r\right\rVert_{\Wdot[0]}^2\chi_{\rmin\!,\rmax}\right)\iint_{\mcA} \left\lVert\wpa_r\uchi\right\rVert_{\Hdot[0]}^2\chi_{\rmin\!,\rmax}\lesssim \lVert\BT\rVert_{[\alpha]}^2\left(\mathcal{S}[\uchi] + \int  \left|v^2\chi_{\rmin\!,\rmax}\oW\left(\tfrac{\Hchi}{v^2}\right)\right|^2\right).
\end{equation}
The result follows.
\end{proof}

	\subsection{The initial data terms $H_\chi$}\label{sec:InitialDataBounds}
		We now look to the $\PHchi$ terms, which include error terms coming from the cutoff on the initial data. A naive approach would be to first take the bound
\[
\mathcal{N}[\Wsc\uchi, \PHchi] \lesssim \iiiint_{\FMs}|\omega\Wsc\uchi|^2 +|r^{-1}\Wsc\uchi|^2 + |\PHchi|^2
\]
and then convert to physical space via the $L^2$ Plancherel identity \eqref{id:Plancherel2}. However, one would then run into the problem that compact support of $|\mathcal{F}^{-1}\Hchi|^2$ in $\wtau$ does not imply compact support of $|\partial_t\Wsc\mathbf{u}_\chi|^2$ (or even of $|\mathcal{F}^{-1}\PHchi|^2$), so the resulting term cannot be bounded directly by the energy at time 0. Additionally, the lack of decaying radial weights in the $|\Wsc\uchi|^2$ integral would stymie attempts to bound this using an ILED estimate. This difficulty is in some sense unavoidable for frequency-dependent currents.

The approach used in \cite{DRSR} used a current which may be written as the sum of a physical current, which preserves compact support in time, plus a frequency-dependent current with weights decaying rapidly in $r$, for which an ILED bound may be used. We take a similar approach, noting that $\mathcal{N}$ is a physical current, and that $\mathcal{N}_\sharp$ is strongly frequency-dependent but compactly supported in $r$, allowing us to reduce this problem to that of bounding derivatives of $\Wsc\mathbf{u}_\chi$ for $\wtau \in [0,1]$. This is still nonlocal in time; however, the weak Coifman-Meyer bound given in Proposition \eqref{lem:CMBound} then allows us to decompose $\Wsc\mathbf{u}_\chi$ into the sum of an ``almost local'' term, which may be bounded by the initial energy via local energy bounds, plus a lower-order term with better spatial decay. A more precise formulation of this is given as Proposition \ref{lem:IDLemmaLargeR}. For a given function $\Psi$ solving \eqref{weintro}, we recall the energy $E_0^1$ given in \eqref{e10def}.

Our main result for this section will be the following:
\begin{proposition}\label{thm:PHchiBound}
Under the assumptions of Theorem \ref{thm:CommutedEEMain}, for any $\delta > 0$ there exists a $C_\delta$ such that
\begin{equation}\label{est:PHchiBound}
\mathcal{N}[\Wsc\uchi, \Hchi] + \mathcal{N}_\sharp[\Hchi, \Hchi] \leq \delta \left( \mathcal{S}_1^{\Wsc}[\uchi] + \mathcal{S}[\uchi]\right) + C_\delta E_0^1[\Psi].
\end{equation}
\end{proposition}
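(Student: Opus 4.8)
The plan is to bound the two quantities $\mathcal{N}[\Wsc\uchi, \PHchi]$ and $\mathcal{N}_\sharp[\Hchi,\Hchi]$ separately, exploiting the fact (emphasized in Section~\ref{sec:comof}) that $\Hchi$ is supported in $\mathcal{M}\cap\{\wtau\in[0,1]\}$ and that on this slab $\Psi_\mathcal{T}=\Psi$, so everything is controlled by the initial data energy $E_0^1[\Psi]$. First I would recall the decomposition $\PHchi=v^2\Wsc(\Hchi/v^2)$ and observe that $\Hchi$ itself is a genuine physical-space quantity built from $2\nabla^\alpha\chi\nabla_\alpha\Psi_\mathcal{T}+(\Box\chi)\Psi_\mathcal{T}-\varepsilon\chi_\mathcal{T}\Psi_\mathcal{T}(\BfB\chi-\BfB^0\chi)$, hence supported for $\wtau\in[0,1]$ and controlled, together with all derivatives, by $E_0^1[\Psi]$ via the local energy estimates of Proposition~\ref{prop:basicgronwall} (this is why the data energy must include the $W_0\Psi$, $T\Psi$, $\Phi\Psi$ commuted pieces). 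The difficulty is that $\Wsc$ is nonlocal, so $\Wsc(\Hchi/v^2)$ and $\Wsc\uchi$ need not be supported near $\wtau\in[0,1]$.

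For $\mathcal{N}_\sharp[\Hchi,\Hchi]$ the situation is easier: by definition \eqref{def:Nsharp} this quantity is cut off by $g_\sharp$, hence supported in the compact interval $[\rmin,\rmax+\delta^\sharp]$ in $r$, and involves only the operators $\omega$, $m$, $\Wp$ applied to $\Hchi$. I would write $\Hchi/v^2$ in terms of the physical-space combination $\rho^2\mathcal{F}(2\nabla^\alpha\chi\nabla_\alpha\Psi_\mathcal{T}+\dots)$ and apply Proposition~\ref{lem:CMBound2} (the compactly-supported convolution estimate) to peel $\Wp$ off through the product/convolution structure, reducing to $\ell^2_mL^2_\omega$ norms of $\Hchi$ and its $W_0$-derivative on a compact $r$-region; by the Plancherel identities \eqref{id:Plancherel4} and the support in $\wtau\in[0,1]$ these are controlled by $E_0^1[\Psi]$ using Proposition~\ref{prop:basicgronwall} (and for the $\varepsilon\chi_\mathcal{T}\Psi_\mathcal{T}(\BfB\chi-\BfB^0\chi)$ piece, the assumptions \eqref{Bassumption1}--\eqref{Bassumption2} on $\BfB$). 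Since no decaying radial weight is needed on a compact set, the $\delta$-splitting is not even required here; the whole term is $\lesssim E_0^1[\Psi]$.

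For $\mathcal{N}[\Wsc\uchi,\PHchi]=\mathcal{N}_T+2c_{\mathcal{G}}\mathcal{N}_\alpha$, the plan is to exploit that $\mathcal{N}_T,\mathcal{N}_\alpha$ correspond to \emph{physical-space} multipliers (the remark after Theorem~\ref{thm:CommutedEEMain}), i.e. they are genuine spacetime integrals $\iiiint_{\WM}(\dots)$ against $T\Psi_{\chi,\mathcal{T}}$ and $\Psi_{\chi,\mathcal{T}}$-type factors with no absolute values. Via the Plancherel identity \eqref{id:Plancherel3} I can write $\mathcal{N}_T[\Wsc\uchi,\PHchi]$ and $\mathcal{N}_\alpha[\Wsc\uchi,\PHchi]$ as $\iiiint_{\WM}$-integrals of $\Wsc\uchi$ (in physical space, writing $\Wsc u$ via $\mathcal{F}(\Wsc u)=\Wsc\ph$) paired with $\mathcal{F}^{-1}\PHchi$, which decomposes as $W_0$ applied to the locally-supported $\mathcal{F}^{-1}(\Hchi/v^2)$ plus a pseudodifferential remainder $(\Wsc-W_0)(\dots)$ that by Corollaries~\ref{cor:asymptoticWinfinity} and \ref{prop:WMinusWZeroBounds} decays like $r^{-1}(|m|+|\omega|)$. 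For the $W_0$ part the integrand is compactly supported in $\wtau$, so Cauchy--Schwarz against a small multiple of the $\mathcal{S}_1^{\Wsc}[\uchi]+\mathcal{S}[\uchi]$ integrand plus a large multiple of a local energy (controlled by $E_0^1[\Psi]$ via Proposition~\ref{prop:basicgronwall} after another application of Proposition~\ref{lem:CMBound} to handle the nonlocal $\Wsc\uchi$ factor — decomposing it into an ``almost local'' piece bounded by local energy and a lower-order piece with better radial decay, exactly as flagged before Proposition~\ref{thm:PHchiBound}) gives the $\delta(\mathcal{S}_1^{\Wsc}+\mathcal{S})+C_\delta E_0^1$ bound. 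For the pseudodifferential remainder, the $r^{-1}$ decay lets me pair it with the $r^{-1-\alpha}$-weighted and $r^{-3-\alpha}$-weighted terms in $\mathcal{S}_1^{\Wsc}[\uchi]$ and $\mathcal{S}[\uchi]$ directly, absorbing into $\delta(\mathcal{S}_1^{\Wsc}+\mathcal{S})$ after Cauchy--Schwarz, with the other factor again reduced to $E_0^1$ by the local-support argument. The main obstacle is precisely this handling of the nonlocal factor $\Wsc\uchi$ against a time-localized weight: one cannot use Parseval naively since $|\partial_t\Wsc\uchi|^2$ is not time-localized, and the resolution is the weak Coifman--Meyer splitting of Proposition~\ref{lem:CMBound} combined with the ILED bound on the lower-order piece, which is the technical heart of this subsection and is carried out in detail via the auxiliary Proposition~\ref{lem:IDLemmaLargeR} (see Section~\ref{sec:proofofconvolution}).
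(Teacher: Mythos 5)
Your proposal is correct and follows essentially the same route as the paper's Section \ref{sec:InitialDataBounds}: the splitting $\Wsc=\oW+(\Wsc-\oW)$ with the remainder absorbed via its $r$-decay and Plancherel, and the reduction of the $\oW$-part to the slab $D_0^1$ using the time-support of $\Hchicheck$ together with the Coifman--Meyer/ILED decomposition of $\Wsc\uchi$ (Propositions \ref{lem:CMBound} and \ref{lem:IDLemmaLargeR}, Corollary \ref{prop:PSBoundCompact}) and the bound of Proposition \ref{prop:PSBoundIDWorst}. The only deviation is your appeal to Proposition \ref{lem:CMBound2} for $\mathcal{N}_\sharp[\Hchi,\Hchi]$, which is unnecessary (though harmless): since $\omega$, $m$ and $\Wp$ are physical-space operators and the cutoffs $(1-\wchi_1)^2g_\sharp^2$ are dominated by the frequency-independent $\chi_{\rmin\!,\rmax}$, Plancherel applies directly to this quantity, exactly as in the paper's Lemma \ref{lem:PHchiBoundLocal}.
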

As before, we split the proof into bounds for $\mathcal{N}$ and $\mathcal{N}_\sharp$, and prove them in Lemmas \ref{lem:IDBoundsFirst} and \ref{lem:PHchiBoundLocal} respectively.

By local energy bounds, it suffices to prove Proposition \ref{thm:PHchiBound} with $E_0^1[\Psi]$ replaced by
\begin{equation}
\mathcal{E}^1 [\Psi]= \int_0^2 E_\wtau[\Psi] + E_\wtau[W_0 \Psi] + E_\wtau[T\Psi]+ E_\wtau[\Phi\Psi]\, d\wtau
\end{equation}
on the right hand side. We define the physical space quantity
\begin{equation}\index{Hchicheck@$\Hchicheck$}
\Hchicheck = \mathcal{F}^{-1}\Hchi.
\end{equation}
We note that $\Hchicheck$ is supported when $\wtau\in[0,1]$, but $\mathcal{F}^{-1}(\PHchi)$ is not!

\subsubsection{Expanding $\Hchi$ in regular coordinates}
We give a useful expansion of $\Hchi$ in physical space using the regular coordinates defined in Section \ref{sec:hyperboloidal}. This will be used in the proof of Proposition \ref{prop:PSBoundIDWorst}.
\begin{proposition}\label{prop:Hchi}
Under the assumptions of Theorem  \ref{thm:CommutedEEMain} with $H_\chi$ as defined in \eqref{def:HDecomp}, the identity
\begin{equation}\label{def:HChiExpansion}
H_\chi = \mathcal{F}\left(\chi'\left(\frac{-2h_0}{r^2+a^2}\wpa_{r}\unochi + F^{\wtau}\wpa_{\wtau} \unochi + F^{\wphi}\wpa_{\wphi}\unochi  + G_1 \unochi\right) + \chi''\left(G_2\unochi\right) +\chi'G_{\BfB}\unochi\right)
\end{equation}
holds for functions $F^{\wtau}, F^{\wphi}, G_1, G_2$ of $(r, \theta)$ which satisfy the bounds
\begin{subequations}\label{est:FGBounds}
\begin{align}
|F^{\wtau}| + |F^{\wphi}| + |G_1| + |G_2|&\lesssim v^2, \\
|\Rs (v^{-2}F^{\wtau})|+ |\Rs(v^{-2}F^{\wphi})| + |\Rs(v^{-2} G_1)| + |\Rs(v^{-2} G_2)| &\lesssim v^2 
\end{align}
\end{subequations}
and 
\begin{equation}\label{est:GBfBBounds}
\left|\frac{G_{\BfB}}{v^2}\right| + \left|\oW\left(\frac{G_{\BfB}}{v^2}\right)\right|\lesssim \varepsilon r^{1-\alpha} \, .
\end{equation}
Here $\unochi=\sqrt{r^2+a^2}\Psi_{\mathcal{T}}$\index{ut@$\unochi$}.
\end{proposition}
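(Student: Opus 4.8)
The plan is to reduce everything, \emph{in the regular coordinates} $(\wtau,r,\theta,\wphi)$, to the computation of the three geometric quantities attached to $\wtau$: the gradient vector field $\nabla\wtau$, the scalar $g(\nabla\wtau,\nabla\wtau)$, and the scalar $\Box_{g_{a,M}}\wtau$. The starting observation is that $\chi=\chi(\wtau)$, so $\nabla\chi=\chi'\nabla\wtau$ and the Leibniz rule for the covariant wave operator gives
\[
2\nabla^\alpha\chi\,\nabla_\alpha\Psi_{\mathcal{T}}+(\Box_{g_{a,M}}\chi)\Psi_{\mathcal{T}} = 2\chi'\,(\nabla\wtau)\Psi_{\mathcal{T}} + \big(\chi'\,\Box_{g_{a,M}}\wtau+\chi''\,g(\nabla\wtau,\nabla\wtau)\big)\Psi_{\mathcal{T}}\,,
\]
while for the $\BfB$-term one uses that in these coordinates $\wpa_r\wtau=\wpa_\theta\wtau=\wpa_\wphi\wtau=0$ and $T\wtau=1$ (by \eqref{def:timefoliation}), so $\chi(\wtau)$ is annihilated by $\wpa_r,\wpa_\theta,\Phi$ and hence $\BfB\chi-\BfB^0\chi=\BfB^{\wtau}\chi'$. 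Substituting into \eqref{def:HDecompchi} and moving the $(r,\theta)$-dependent prefactor $\tfrac{\Delta\rho^2}{\rweight^3}$ inside $\mathcal{F}$ already puts $H_\chi$ into the schematic shape \eqref{def:HChiExpansion}; what remains is to identify the coefficients and verify the bounds.

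Next I would carry out the metric computation. From $\Rs\wtau=-h_0$, $T\wtau=1$, $\partial_\theta\wtau=\partial_\phi\wtau=0$ one has $d\wtau=dt-h_0\tfrac{r^2+a^2}{\Delta}\,dr$ in Boyer--Lindquist coordinates; raising the index with the Kerr inverse metric and rewriting $\partial_r$ through $\wpa_r$ via \eqref{def:wpar} shows that $\nabla\wtau$ has \emph{no} $\wpa_\theta$-component (this is why no $\wpa_\theta\unochi$ appears in \eqref{def:HChiExpansion}) and gives explicit $\wpa_r$-, $\wpa_\wtau$- and $\wpa_\wphi$-components, together with
\[
g(\nabla\wtau,\nabla\wtau)=\frac{a^2\sin^2\theta-v_0^{-2}}{\rho^2}\,,\qquad \Box_{g_{a,M}}\wtau=-\frac{\partial_r\big(h_0(r^2+a^2)\big)}{\rho^2}\,.
\]
The key structural fact is the identity $1-h_0^2=v_0^{-2}v^2$ from \eqref{def:h0}, which is proportional to $\Delta$: this is exactly what cancels the $\Delta^{-1}$ singularities of the Boyer--Lindquist inverse metric once the prefactor $\tfrac{\Delta\rho^2}{\rweight^3}$ is included, so that $(\nabla\wtau)$ and the two scalars above become regular up to the future horizon; this is the role played by the regular coordinate system.

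Finally I would substitute and pass from $\Psi_{\mathcal{T}}$ to $\unochi=\rweight\Psi_{\mathcal{T}}$, the chain rule producing an \emph{undifferentiated} correction whenever $\wpa_r$ falls on $\rweight$ in the first-order term. Reading off coefficients: $2\chi'(\nabla\wtau)\Psi_{\mathcal{T}}$ yields the $\wpa_r\unochi,\wpa_\wtau\unochi,\wpa_\wphi\unochi$ contributions, the $\chi''$-term yields $G_2$, and $G_1$ collects the $\chi'\Box_{g_{a,M}}\wtau$ contribution \emph{together with} that undifferentiated correction. One checks that $F^{\wtau},F^{\wphi},G_2$ are $v^2$ times smooth functions of $\theta$ alone (up to an $r$-dependent remainder decaying like $v^2$), so the bounds \eqref{est:FGBounds} for those are immediate -- uniform regularity at the horizon again via $1-h_0^2\propto\Delta$, and decay at infinity via $|\partial_r^k(1-h_0)|\lesssim r^{-2-k}$. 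The delicate coefficient is $G_1$: its two constituents are individually only $O(r^{-1})$ at infinity, and the required $O(v^2)=O(r^{-2})$ decay emerges only after a cancellation, leaving (schematically) $G_1=-v^2(r^2+a^2)h_0'$, whence $|G_1|\lesssim v^2$ and $|\Rs(v^{-2}G_1)|\lesssim v^2$. For $G_{\BfB}$ one has $G_{\BfB}/v^2=-\varepsilon\rho^2\chi_{\mathcal{T}}\BfB^{\wtau}$, and \eqref{est:GBfBBounds} follows from $|\BfB^{\wtau}|+|r\wpa_r\BfB^{\wtau}|\lesssim r^{-1-\alpha}$ in \eqref{Bassumption1}, the Leibniz rule \eqref{id:WBasicLeibniz}, and the fact that $\oW\chi_{\mathcal{T}}=0$ by construction of $\chi_{\mathcal{T}}$ (since $W_0\wtau=0$).

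I expect the main obstacle to be organisational rather than conceptual: one must arrange the metric computation and the $\Psi_{\mathcal{T}}\mapsto\unochi$ substitution so that the cancellation producing a genuinely $O(v^2)$ coefficient $G_1$ is transparent, and then verify -- at each step by reducing to $1-h_0^2=v_0^{-2}v^2$ together with the smoothness and decay of $h_0$ -- that every coefficient and its $v^{-2}$-weighted $\Rs$-derivative remains bounded uniformly up to the horizon and decays like $v^2$ at infinity.
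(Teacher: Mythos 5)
Your proposal is correct, and it reaches exactly the coefficients of the paper by a mildly different route. The paper first conjugates by $\rweight$, writing $[\Box_g,\chi]\Psi_{\mathcal{T}}=[\Box_g,\rweight^{-1}\chi]\rweight\Psi_{\mathcal{T}}+\rweight^{-1}\chi[\Box_g,\rweight]\Psi_{\mathcal{T}}$, and then computes the zeroth-order coefficients from the separated operator \eqref{def:SepWaveOperator2} acting on the spherically symmetric functions $\rweight^{-1}$ and $\rweight^{-1}\chi$: in that organisation the potential $V_1$ drops out identically and the remaining coefficient is $-\Rs h_0\,\chi'+(h_0^2-1+a^2v^2\sin^2\theta)\chi''$, so the cancellation you flag never appears as such. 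You instead compute the three geometric quantities $\nabla\wtau$, $g(\nabla\wtau,\nabla\wtau)$, $\Box_{g_{a,M}}\wtau$ and pass from $\Psi_{\mathcal{T}}$ to $\unochi=\rweight\Psi_{\mathcal{T}}$ only at the end, so the cancellation must be tracked explicitly -- and you do it correctly: your formulas $g(\nabla\wtau,\nabla\wtau)=\rho^{-2}(a^2\sin^2\theta-v_0^{-2})$ and $\Box_{g_{a,M}}\wtau=-\rho^{-2}\partial_r\big(h_0(r^2+a^2)\big)$ agree with Appendix \ref{sec:FoliationCalculations}, and your $G_1=-v^2(r^2+a^2)\partial_r h_0$ is the same as the paper's $-\Rs h_0$ (since $\Rs=v^2(r^2+a^2)\partial_r$), while $G_2=v^2(a^2\sin^2\theta-v_0^{-2})$ and the treatment of $G_{\BfB}$ via $\BfB\chi-\BfB^0\chi=\BfB^{\wtau}\chi'$, $W_0\wtau=0$ (so $\oW\chi_{\mathcal{T}}=0$) and \eqref{Bassumption1} coincide with the paper's. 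What each route buys: the paper's conjugation trick is algebraically tidier (no borderline $O(r^{-1})$ terms ever arise), whereas your route makes the geometric mechanism transparent -- the absence of a $\wpa_\theta\unochi$ term is immediate from $(\nabla\wtau)^\theta=0$, horizon regularity is visibly driven by $1-h_0^2=v_0^{-2}v^2\propto\Delta$, and the first-order radial coefficient comes out exactly in the horizon-regular form $-2h_0v\oW\unochi$, which is the form actually used later in Proposition \ref{prop:PSBoundIDWorst}. The only caveat is organisational: you do not write out $F^{\wtau}$, $F^{\wphi}$ explicitly, but the structure you assert (essentially $v^2$ times bounded functions with the stated $\Rs(v^{-2}\cdot)$ behaviour) is what the explicit computation gives, so nothing is missing in substance.
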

\begin{proof}
For a generic smooth nonvanishing function $\kappa$, we write
\begin{equation}
[\Box_g, \chi]\Psi_{\mathcal{T}} = [\Box_g, \kappa^{-1}\chi]\kappa\Psi_{\mathcal{T}} + \kappa^{-1}\chi[\Box_g,\kappa]\Psi_{\mathcal{T}}.
\end{equation}
Additionally,
\begin{equation}
\kappa^{-1}[\Box_g,\kappa]\Psi_{\mathcal{T}} = -\Box_g(\kappa^{-1})(\kappa\Psi_{\mathcal{T}}) +2\frac{\nabla^\alpha\kappa}{\kappa}\frac{\nabla_{\alpha}(\kappa\Psi_{\mathcal{T}})}{\kappa}.
\end{equation}
Therefore, setting $\kappa = \rweight$,
\begin{equation}\label{id:HChiNew}
[\Box_g, \chi]\Psi_{\mathcal{T}}  = 2\nabla^\alpha\chi\frac{\nabla_\alpha(\unochi)}{\rweight}+ \big(\Box_g(\rweight^{-1}\chi) - \chi\Box_g(\rweight^{-1})\big)(\unochi).
\end{equation}
For spherically symmetric $\psi$, the identity \eqref{def:SepWaveOperator2} implies
\begin{equation}
\mathcal{F}(\Box_{g_{M,a}}\psi) = \frac{\rweight^3}{\Delta\rho^2}\big(\Rs\Rs\ph + \omega^2 \big(1-\tfrac{a^2\sin^2\theta \Delta}{\rweight^4}\big)\ph - V_1\ph \big).
\end{equation}
Therefore, one has in physical space
\begin{equation}
\Box_{g_{M,a}}(\rweight^{-1}) = -\tfrac{\rweight^3}{\Delta\rho^2}V_1, \qquad
\Box_{g_{M,a}}(\rweight^{-1}\chi ) = \tfrac{\rweight^3}{\Delta\rho^2}\left(\Rs\Rs\chi -  \left(1-\tfrac{a^2\sin^2\theta \Delta}{\rweight^4}\right)\partial_t^2\chi - V_1\chi \right),
\end{equation}
and consequently
\begin{equation}
 \left(\Box_{g_{M,a}} \big(\tfrac{\chi}{\rweight}\big) - \chi\Box_{g_{M,a}}\big(\tfrac{1}{\rweight}\big)\right)u = \frac{\rweight^3}{\Delta\rho^2}\big(\Rs\Rs\chi -  \big(1-\tfrac{a^2\sin^2\theta \Delta}{\rweight^4}\big)\partial_t^2\chi\big)u.
\end{equation}
If $\chi$ is a function of $\wtau$ only then the definition \eqref{def:timefoliation} implies
\begin{equation}
\Rs\Rs\chi -  \big(1-\tfrac{a^2\sin^2\theta \Delta}{\rweight^4}\big)\partial_t^2\chi = -h_0'\chi' + (h_0^2 - 1 + a^2\sin^2\theta v^2)\chi''.
\end{equation}
Writing $\chi = \chi(\wtau)$ and expanding $h_0^2-1$ using \eqref{def:h0} gives the identity
\begin{equation}\label{id:HChiLOT}
\Rs\Rs\chi -  \big(1-\tfrac{a^2\sin^2\theta \Delta}{\rweight^4}\big)\partial_t^2\chi = G_1(r, \theta)\chi' + G_2(r, \theta)\chi'',
\end{equation}
for some bounded functions $G_1, G_2$ satisfying
\begin{equation}
|G_1|+|G_2|\lesssim v^2.
\end{equation} 
Next, since $\chi$ is independent of $\phi, \theta$, we may expand
\begin{equation}
\frac{2\nabla^\alpha\chi}{\rweight}\nabla_\alpha (\unochi) = 2\chi'\frac{\rweight^3}{\Delta\rho^2}\left(\left(-1+\frac{a^2\Delta\sin^2\theta}{(r^2+a^2)^2}\right)\partial_t(\unochi) - \frac{2Mar}{(r^2+a^2)^2}\partial_\phi(\unochi) - h_0\Rs(\unochi)\right).
\end{equation}
We convert this to the regular hyperboloidal coordinates of Section \ref{sec:hyperboloidal} to obtain
\begin{align}
\left(-1+\tfrac{a^2\Delta\sin^2\theta}{(r^2+a^2)^2}\right)\partial_t\unochi - \tfrac{2Mar}{(r^2+a^2)^2}\partial_\phi\unochi - h_0\Rs\unochi  =  \tfrac12\left(F^{\wtau}\wpa_{\wtau}\unochi  + F^{\wphi}\wpa_\wphi\unochi -\tfrac{2h_0}{r^2+a^2}\wpa_{r}\unochi\right),
\end{align}
where
\begin{subequations}
\begin{align}
\tfrac12F^t &= -a_\theta^2v^2\\
\tfrac12F^{\wphi} &=av^2\tfrac{\rho^2-a_\theta^2}{r^2+a^2}
\end{align}
\end{subequations}
The corresponding bounds \eqref{est:FGBounds} then follow from the basic Leibniz rule \eqref{id:WBasicLeibniz}.
The bound on the $G_{\BfB}$ term follows from direct calculation, as in regular coordinates
\begin{equation}
\chi_{\mathcal{T}}\BfB(\chi) = \chi'\BT^{\wtau}, \qquad v^{-2}G_\BfB = \varepsilon\rho^2\BT^{\wtau}.
\end{equation}
\end{proof}

\subsubsection{Bounding $\mathcal{N}[\Wsc\uchi, \PHchi]$}
We first prove the following estimate, using the notation of the decomposition \eqref{def:N}:
\begin{lemma}\label{lem:IDBoundsFirst}
Under the assumptions of Theorem  \ref{thm:CommutedEEMain}, for every $\delta > 0$ there exists a constant $C_\delta$ depending on $\delta$ such that
\begin{equation}\label{est:NChiMainBound}
\mathcal{N}[\Wsc\uchi, \PHchi]\leq \delta\left( \mathcal{S}_1^{\Wsc}[\uchi] + \mathcal{S}[\uchi]\right) + C_\delta \mathcal{E}^1[\Psi] \, .
\end{equation}
\end{lemma}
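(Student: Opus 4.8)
Here is how I would proceed. The starting observation is that, unlike $\mathcal N_\sharp$, the current $\mathcal N=\mathcal N_T+2c_{\mathcal G}\mathcal N_\alpha$ is built from the \emph{physical} multipliers $T$ and the Lagrangian weight $g_\alpha$ of \eqref{def:galpha}. Hence, writing $\Hchicheck=\mathcal F^{-1}\Hchi$ (supported in $\wtau\in[0,1]$) and $\mathbf u_\chi=\rweight\Psi_{\chi,\mathcal T}$, the Plancherel identities \eqref{id:Plancherel3}, \eqref{id:Plancherel4} turn $\mathcal N_T[\Wsc\uchi,\PHchi]$ and $\mathcal N_\alpha[\Wsc\uchi,\PHchi]$ into honest spacetime integrals over $\WMs$ of the products $\mathfrak R\big(\partial_t\Wsc\mathbf u_\chi\cdot\overline{v^2\Wsc(\Hchicheck/v^2)}\big)$ and $-g_\alpha\,\mathfrak R\big(\Wsc\mathbf u_\chi\cdot\overline{v^2\Wsc(\Hchicheck/v^2)}\big)$. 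Applying Cauchy--Schwarz with the weights $r^{-1}$, respectively $r^{-3-\alpha}$, the $\Wsc\mathbf u_\chi$-factors get absorbed: the $r^{-1}|\omega\Wsc\uchi|^2$ piece into $\delta\,\mathcal S_1^{\Wsc}[\uchi]$, and the $r^{-3-\alpha}|\Wsc\uchi|^2$ piece into $\mathcal S[\uchi]$ via the pointwise bounds of Corollary \ref{prop:WMinusWZeroBounds} (exactly as in \eqref{wbe}). This reduces the whole estimate to
\[
\iiiint_{\WMs}\big(r+r^{1+\alpha}\big)\,\big|v^2\Wsc(\Hchicheck/v^2)\big|^2\;\leq\;C_\delta\,\mathcal E^1[\Psi].
\]

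To bound this I would split $\Wsc=\oW+(\Wsc-\oW)$. The operator $\oW$ is a genuine physical-space vector field, and since $\Rs\wtau=-h_0$ by \eqref{def:timefoliation}, applying $\oW$ to the $\wtau$-dependent cutoff factors $\chi'$, $\chi''$ occurring in $\Hchicheck$ leaves $v^2\oW(\Hchicheck/v^2)$ supported in $\wtau\in[0,1]$. By the expansion of Proposition \ref{prop:Hchi}, $\Hchicheck$ is a first order operator in $\unochi=\rweight\Psi_{\mathcal T}$ built \emph{only} out of $\wpa_r\unochi=\Delta^{-1/2}\oW\unochi$, $T\unochi$, $\Phi\unochi$ and $\unochi$ (crucially, no $\wpa_\theta\unochi$), with coefficients decaying like $v^2$; consequently every term of $v^2\oW(\Hchicheck/v^2)$ carries at least one factor from $\{\oW,T,\Phi\}$ on $\Psi_{\mathcal T}$ on top of at most one such factor already present. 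The local higher-order energy estimates of Proposition \ref{prop:basicgronwall} on $\wtau\in[0,2]$ (applied with the commutators $\oW$, $T$, $\Phi$, and using that the data of $\Psi_{\mathcal T}$ are those of $\Psi$) then bound this contribution by $\int_0^2\big(E_\wtau[\Psi_{\chi,\mathcal T}]+E_\wtau[W_0\Psi_{\chi,\mathcal T}]+E_\wtau[T\Psi_{\chi,\mathcal T}]+E_\wtau[\Phi\Psi_{\chi,\mathcal T}]\big)\,d\wtau=\mathcal E^1[\Psi]$. The large-$r$ weights are harmless: under Auxiliary Assumption \eqref{aux1} and finite speed of propagation, $\Psi_{\mathcal T}$ is compactly supported in $r$ on $\wtau\in[0,1]$, while near the horizon the weights are bounded.

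For the nonlocal remainder $(\Wsc-\oW)(\Hchicheck/v^2)$ the point is the gain in spatial decay: Corollaries \ref{cor:asymptoticWinfinity} and \ref{prop:WMinusWZeroBounds} give, pointwise in frequency, $|v^2(\Wsc-\oW)(\widehat{\Hchicheck/v^2})|\lesssim r^{-1}(|\omega|+|m|)|\Hchi|+|v^2\oW(\Hchi/v^2)|$ for large $r$, with an analogous $v$-gain near the horizon. The borderline $\oW$-term was already handled above; for the remaining part, pairing with Plancherel \eqref{id:Plancherel4} turns the factors $|\omega|$, $|m|$ back into $T$- and $\Phi$-derivatives of $\Hchicheck$, which are local and hence preserve the $\wtau\in[0,1]$ support, so this is again controlled by $\mathcal E^1[\Psi]$ as in the previous step. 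A systematic way to organise this last step — decomposing $\Wsc\mathbf u_\chi$ (equivalently $\Wsc\Hchicheck$) into an ``almost local'' piece bounded by local energy plus a piece with improved decay bounded by an integrated local energy decay estimate, which is available for $\Psi_{\chi,\mathcal T}$ by Proposition \ref{prop:suffi} together with \cite{DRSR} — and using the weak Coifman--Meyer bound of Proposition \ref{lem:CMBound} to disentangle the nonlocal operator $\Wsc$ from the $\wtau$-dependent cutoff coefficients, is precisely Proposition \ref{lem:IDLemmaLargeR}, which I would invoke here.

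The main obstacle is exactly the nonlocality in time of $\Wsc$: $v^2\Wsc(\Hchicheck/v^2)$ is \emph{not} supported in $\wtau\in[0,1]$, so there is no one-line local-energy bound, and one is forced into the $\oW$/remainder split together with the decay bookkeeping above. A secondary but genuine subtlety is that only the order-one energy $\mathcal E^1[\Psi]$, with its particular commutator set $\{W_0,T,\Phi\}$, is allowed on the right-hand side; it is the absence of a $\wpa_\theta\unochi$-term in Proposition \ref{prop:Hchi} — equivalently, the fact that every error term carries $\oW$, $T$ or $\Phi$ rather than a bare angular derivative — that makes this closure work.
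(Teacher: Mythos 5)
There is a genuine gap, and it sits exactly at the step the paper itself warns against at the start of Section \ref{sec:InitialDataBounds}. Your global weighted Cauchy--Schwarz reduces the lemma to the claim
\begin{equation*}
\iiiint_{\WMs}\big(r+r^{1+\alpha}\big)\,\big|v^2\Wsc(\Hchicheck/v^2)\big|^2\;\lesssim\;C_\delta\,\mathcal E^1[\Psi],
\end{equation*}
and you justify the growing $r$-weights by asserting that, by finite speed of propagation, $\Psi_{\mathcal T}$ is compactly supported in $r$ on $\wtau\in[0,1]$. That assertion is false: the slices $\wSigtau$ are hyperboloidal and terminate at $\mathcal I^+$, and outgoing radiation from the compactly supported data travels along null rays on which $\wtau$ stays essentially constant, so the slab $\{0\le\wtau\le1\}$ contains the radiation zone out to arbitrarily large $r$ where $\Psi_{\mathcal T}$ does not vanish. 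Without compact support the weighted claim fails for the $\oW$-part of $\PHchi$: by \eqref{est:oWchioneBounds} the integrand $|v^2\oW(\Hchicheck/v^2)|$ contains $\chi'\big(|r^{-1}\oW\unochi|+|v\oW\oW\unochi|\big)$, and already $r\,|r^{-1}\oW\unochi|^2\sim r\big(|W_0\Psi|^2+|\Psi|^2\big)$ exceeds the $\mathcal E^1$-density by a factor $r$ (worse with $r^{1+\alpha}$); with the generic radiation decay $|W_0\Psi|\sim r^{-1}$ the $dr$-integral over the slab is not even finite. The weight $r^{-1}$ on $|\omega\Wsc\uchi|^2$ is the best $\mathcal S_1^{\Wsc}$ affords, so you cannot trade the weight back either. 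Your treatment of the difference part $(\Wsc-\oW)\PHchi$ is fine (it is the paper's \eqref{est:IDDifferenceBound1}--\eqref{est:IDDifferenceBound2}, the $r^{-1}(|\omega|+|m|)$ gain of Corollaries \ref{cor:asymptoticWinfinity}, \ref{prop:WMinusWZeroBounds} absorbing the $r^{1+\alpha}$), but the $\oW$-part cannot be handled by weighted Cauchy--Schwarz over all of $\WMs$.

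The paper's route avoids this precisely by \emph{not} splitting the $\oW$-pairing with spatial weights: since $W_0\wtau=0$, the quantity $v^2\oW(\Hchicheck/v^2)$ is still supported in $\wtau\in[0,1]$, so one first uses Plancherel to localize the pairing to $D_0^1$ and only then applies Cauchy--Schwarz with \emph{unit} weights, as in \eqref{est:ImIDPrelim}--\eqref{est:RealIDPrelim}. The price is the unweighted slab term $\int_{D_0^1}|\partial_t\Wsc\mathbf u_\chi|^2+r^{-2}|\Wsc\mathbf u_\chi|^2$, which is where the real work lies: it is controlled by Corollary \ref{prop:PSBoundCompact}, i.e.\ by Proposition \ref{lem:IDLemmaLargeR} applied to $\Wsc\mathbf u_\chi$ (this is where the Coifman--Meyer bound of Proposition \ref{lem:CMBound} with $q=1$ enters), yielding $\mathcal E^1[\Psi]+\mathcal S[\uchi]+\mathcal S_1^{\Wsc}[\uchi]$, all multiplied by $\delta$; the remaining slab term $\int_{D_0^1}|v^2\oW(\Hchicheck/v^2)|^2$ carries no growing weight and is exactly Proposition \ref{prop:PSBoundIDWorst}. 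You do cite Proposition \ref{lem:IDLemmaLargeR}, but you invoke it for $\Wsc\Hchicheck$ and only after the weighted reduction, where it cannot recover the lost $r$-weights; its actual role is to bound the slab integral of $\partial_t\Wsc\mathbf u_\chi$ that appears when one localizes in time \emph{before} applying Cauchy--Schwarz. Your observation that Proposition \ref{prop:Hchi} produces no bare $\wpa_\theta\unochi$ term is correct but is not the decisive point; the decisive point is the order of operations (time localization before weighting) for the $\oW$-part.
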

In order to bound the quantities $\mathcal{N}[\Wsc\uchi, \Hchi]$, we must use the fact that the energy current is independent of frequency up to terms compactly supported in $r$. We first write
\begin{align}
|\mathcal{N}_T[\Wsc\uchi, \PHchi]| &\leq \int\left|\omega\Wsc\uchi\overline{v^2(\Wsc-\oW)\left(\tfrac{\Hchi}{v^2}\right)}\right| + \left|\int\mathfrak{I}\Big(\omega\Wsc\uchi\overline{v^2\oW\left(\tfrac{\Hchi}{v^2}\right)}\Big)\right|, \label{NTchiFirstBound}\\
|\mathcal{N}_\alpha[\Wsc\uchi, \PHchi]| &\leq \int \left|\frac{1}{r}\Wsc\uchi\overline{v^2(\Wsc-\oW)\left(\tfrac{\Hchi}{v^2}\right)}\right| + \left|\int \mathfrak{R}\Big(\frac{1}{r}\Wsc\uchi\overline{v^2\oW\left(\tfrac{\Hchi}{v^2}\right)}\Big)\right|.\label{NalphachiFirstBound}
\end{align}

The first term on the right hand sides of \eqref{NTchiFirstBound} and \eqref{NalphachiFirstBound} are straightforward to bound using the fact that $\Wsc$ decays to a physical space operator. More specifically, a term-by-term expansion using the expansion \eqref{id:WdifferenceBasic} along with Proposition \ref{prop:vfDerUpperBound} and Proposition \ref{prop:Hchi}, followed by the Plancherel identity \eqref{id:Plancherel2}, implies
\begin{equation}
\int r^{1+\alpha}\left|v^2(\Wsc-\oW)\left(\frac{\Hchi}{v^2}\right)\right|^2 \lesssim \mathcal{E}^1[\Psi].
\end{equation}
Consequently,
\begin{align}
\int\left|\omega\Wsc\uchi\overline{v^2(\Wsc-\oW)\left(\tfrac{\Hchi}{v^2}\right)}\right|&\leq \int \delta r^{-1-\alpha}|\omega\Wsc\uchi|^2 + \delta^{-1} r^{1+\alpha}\left|v^2(\Wsc-\oW)\left(\tfrac{\Hchi}{v^2}\right)\right|^2, \nonumber \\
&\leq \delta\mathcal{S}_1^{\Wsc}[\uchi] + C\delta^{-1}\mathcal{E}^1[\Psi], \label{est:IDDifferenceBound1}\\
\int\left|r^{-1}\Wsc\uchi\overline{v^2(\Wsc-\oW)\left(\tfrac{\Hchi}{v^2}\right)}\right|&\leq \int \delta r^{-3-\alpha}|\Wsc\uchi|^2 + \delta^{-1} r^{1+\alpha}\left|v^2(\Wsc-\oW)\left(\tfrac{\Hchi}{v^2}\right)\right|^2, \nonumber \\
&\leq \delta\mathcal{S}[\uchi] + C\delta^{-1}\mathcal{E}^1[\Psi].\label{est:IDDifferenceBound2}
\end{align}
Therefore, the proof of Lemma \ref{lem:IDBoundsFirst} may be reduced to bounding the second term on the right in (\ref{NTchiFirstBound}) and (\ref{NalphachiFirstBound}), for which we will need to exploit the support in time of $\Hchi$.

Recalling the frame $(\wtau, r, \theta, \widetilde{\phi})$ of Section \ref{sec:hyperboloidal}, we define the region
\begin{equation}\label{def:Dab}\index{Dab@$D_a^b$}
D_a^b = \{(\wtau, r, \theta, \widetilde{\phi}) : a \leq \wtau \leq b\},
\end{equation}
and correspondingly the integral
\begin{equation}
\int_{D_a^b} u = \int_a^b\int_{-\infty}^\infty \int_0^{2\pi}\int_0^\pi u \sin\theta\, d\theta\,d\widetilde{\phi}\,  d\rs\, d\wtau.
\end{equation}

In each case, the Plancherel identity \eqref{id:Plancherel1} and support for $\oW\left(\tfrac{\Hchicheck}{v^2}\right)$ in $\wtau\in[0,1]$ imply
\begin{align}
\left|\int \mathfrak{I}\left(\omega\Wsc\uchi\overline{v^2\oW\left(\tfrac{\Hchi}{v^2}\right)}\right)\right| \leq \delta \int_{D_0^1}|\partial_t\Wsc\mathbf{u}_\chi|^2 + \delta^{-1}\int_{D_0^1}\left|v^2\oW\left(\tfrac{\Hchicheck}{v^2}\right)\right|^2, \label{est:ImIDPrelim}\\
\left|\int \mathfrak{R}\left(r^{-1}\Wsc\uchi\overline{v^2\oW\left(\tfrac{\Hchi}{v^2}\right)}\right)\right| \leq \delta \int_{D_0^1}|r^{-1}\Wsc\mathbf{u}_\chi|^2 + \delta^{-1}\int_{D_0^1}\left|v^2\oW\left(\tfrac{\Hchicheck}{v^2}\right)\right|^2.\label{est:RealIDPrelim}
\end{align}

In order to bound the first term on the right hand side of \eqref{est:ImIDPrelim} and \eqref{est:RealIDPrelim} we first need the following proposition:
\begin{proposition}\label{lem:IDLemmaLargeR} Let $u$ be a function satisfying \eqref{sqi} for $N = 1$ which is supported for $\wtau > 0$, and let $a_\sharp$ be a nonnegative function supported in the region $r\in[\rmax+\delta^\sharp, \infty)$. Then, the inequality
\begin{equation}\label{est:IDLemmaLargeR}
\int_{D_0^1} a_\sharp^2|\Wsc u|^2 \lesssim \int_{D_0^2}a_\sharp^2(|\oW u|^2 + |vT u|^2+|v\Phi u|^2) + \iiiint_{\WMs}a_\sharp^2|vu|^2
\end{equation}
holds provided that the right hand side is finite.
\end{proposition}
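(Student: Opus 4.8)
\textbf{Proof proposal for Proposition \ref{lem:IDLemmaLargeR}.}

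The plan is to exploit the fact that $a_\sharp$ is supported in $\{r \geq \rmax + \delta^\sharp\}$, where $\chisharp = 1$ for all frequencies $(\omega,m) \in \mathcal{G}$, so the frequency decomposition $\Wsc = \wchi_1 \Wsv + (1-\wchi_1)\chisharp \Wsl$ simplifies in this region. First I would use the pointwise frequency-space bound of Corollary \ref{cor:asymptoticWinfinity}, which gives $|\Wsc\ph - \oW\ph| \lesssim r^{-1}(|m|+|\omega|)|\ph|$ for $r \geq \rmax + \delta^\sharp$. Applying Plancherel \eqref{id:Plancherel2} (valid since $u$ satisfies \eqref{sqi} with $N=1$), this converts to a physical-space bound $|\Wsc u - \oW u| \lesssim r^{-1}(|\Phi u| + |T u|)$, modulo commuting $\Phi$ and $T$ through in frequency space, which is harmless as both commute with the Fourier transform. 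Hence, pointwise in the support of $a_\sharp$ and noting $v \sim 1$ there (so $r^{-1} \lesssim v$), we get
\begin{equation}
a_\sharp^2 |\Wsc u|^2 \lesssim a_\sharp^2\left(|\oW u|^2 + v^2|T u|^2 + v^2|\Phi u|^2\right)
\end{equation}
pointwise. Integrating this inequality over $D_0^1$ immediately controls the left hand side of \eqref{est:IDLemmaLargeR} by $\int_{D_0^1} a_\sharp^2(|\oW u|^2 + |vTu|^2 + |v\Phi u|^2)$, which is in turn bounded by the first term on the right hand side of \eqref{est:IDLemmaLargeR} (enlarging the time slab from $[0,1]$ to $[0,2]$).

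The only subtlety is that the conversion from the frequency-space bound of Corollary \ref{cor:asymptoticWinfinity} to a physical-space pointwise bound is not literally pointwise in $(\wtau, r, \theta, \wphi)$: the Fourier transform is taken in $t$, not in $\wtau$, and the cutoff structure means $\Wsc u$ as a physical-space object is defined via $\mathcal{F}(\Wsc u) = \Wsc \ph$. What one actually obtains cleanly is the $L^2$-in-$(t,\phi)$ bound for each fixed $(r,\theta)$: $\int |\Wsc u - \oW u|^2 \, dt\, d\phi \lesssim r^{-2}\int (|Tu|^2 + |\Phi u|^2)\, dt\, d\phi$, by squaring the Corollary bound, summing in $m$ and integrating in $\omega$, and applying Plancherel. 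Since the relation between $(t, \phi)$ and $(\wtau, \wphi)$ at fixed $(r,\theta)$ is an affine shift in each variable (by \eqref{def:timefoliation} and the definition of $\wphi$), this $L^2$ bound transfers to the $(\wtau, \wphi)$ variables with the same constant. Integrating in $r$ over $\{r \geq \rmax + \delta^\sharp\}$ against $a_\sharp^2$ and in $\theta$ then gives exactly what is needed, but the time integration must be handled with care: $\Wsc u$ need not be supported in $\wtau \in [0,1]$ even though $u$ is supported in $\wtau > 0$, which is precisely why the term $\iiiint_{\WMs} a_\sharp^2 |vu|^2$ (a global-in-time quantity) appears on the right.

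To address that last point I would argue as follows. The only obstruction to localizing $\Wsc u$ in time to $[0,1]$ (given $u$ is supported in $\wtau \in [0,1]$ on the relevant domain --- actually $u$ is only assumed supported in $\wtau > 0$, and we only integrate $\Wsc u$ over $D_0^1$) comes from the nonlocal-in-time part of $\Wsc - \oW$. But the pointwise (in $(r,\theta)$, $L^2$ in time) estimate above bounds $\Wsc u$ restricted to $D_0^1$ by $\oW u$, $Tu$, $\Phi u$ integrated over \emph{all} of $\WMs$ at that radius --- not just over $D_0^1$ --- because Plancherel in $\omega$ necessarily involves the full time line. This is where the two terms on the right of \eqref{est:IDLemmaLargeR} split: after bounding $\Wsc u$ pointwise by $\oW u$ plus lower-order terms via Corollary \ref{cor:asymptoticWinfinity}, the $\oW u$ contribution over $D_0^1$ is genuinely local (as $\oW \wtau = 0$ means $\oW$ preserves the time slab structure) and gives the $\int_{D_0^2}$ term; but the error controlled only in the full $L^2_t$ norm contributes the global $\iiiint_{\WMs} a_\sharp^2 |vu|^2$ term. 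I expect the main obstacle to be this bookkeeping of which pieces of $\Wsc u$ remain supported near $\wtau \in [0,1]$ and which do not, and carefully justifying the interchange of the frequency-space pointwise bounds with the physical-space time-slab integration; the analytic content itself is entirely contained in Corollary \ref{cor:asymptoticWinfinity} and Plancherel.
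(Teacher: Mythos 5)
There is a genuine gap. You replace $\Wsc$ by $\oW$ first (via Corollary \ref{cor:asymptoticWinfinity}) and try to localize in time afterwards, but these two steps do not commute in the way you need. The bound $|\Wsc\ph-\oW\ph|\lesssim r^{-1}(|\omega|+|m|)|\ph|$ lives in frequency space, so transferring it by Plancherel necessarily produces integrals of $|vTu|^2+|v\Phi u|^2$ over \emph{all} of $\WMs$, not over $D_0^2$: carried out correctly, your argument yields $\int_{D_0^1}a_\sharp^2|\Wsc u|^2\lesssim \int_{D_0^2}a_\sharp^2|\oW u|^2+\iiiint_{\WMs}a_\sharp^2\left(|vTu|^2+|v\Phi u|^2\right)$, which is strictly weaker than \eqref{est:IDLemmaLargeR}, whose global-in-time term carries no derivatives ($|vu|^2$ only). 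That restriction is the whole point of the statement: in the application (Corollary \ref{prop:PSBoundCompact}, taken with $\ph=\partial_t\uchi$) a global-in-time first-order error would be a second-order quantity in $\uchi$ that the available energies do not control. Your closing paragraph asserts that the nonlocal error ``contributes the global $\iiiint_{\WMs}a_\sharp^2|vu|^2$ term'', but the error supplied by Corollary \ref{cor:asymptoticWinfinity} is first order, $r^{-1}(|m|+|\omega|)|\ph|$, not zeroth order, and nothing in your argument downgrades it. (Also, the ``pointwise'' inequality $a_\sharp^2|\Wsc u|^2\lesssim a_\sharp^2(|\oW u|^2+|vTu|^2+|v\Phi u|^2)$ cannot hold pointwise in $(\wtau,r,\theta,\wphi)$, since $\Wsc$ is nonlocal in $t$; only the $L^2_{t,\phi}$ version is true, which is exactly the source of the difficulty above.)

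The missing idea is to localize in time \emph{before} comparing $\Wsc$ with $\oW$, and to pay for commuting $\Wsc$ past the time cutoff with the convolution estimate. The paper's proof takes $\chi_1$ equal to $1$ on $[0,1]$ and supported in $[-1,2]$, writes $\int_{D_0^1}a_\sharp^2|\Wsc u|^2\leq \iiiint_{\WMs}a_\sharp^2|\chi_1\Wsc u|^2=\int a_\sharp^2|\widehat{\chi}_1*\Wsc\ph|^2$ by Plancherel, and then applies Proposition \ref{lem:CMBound} with $q=1$ to exchange $\widehat{\chi}_1*\Wsc\ph$ for $\Wsc(\widehat{\chi}_1*\ph)$. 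The commutator error is $\sup_{r,\theta}\lVert\widehat{\chi}_1\rVert^2_{\Wdot[-1]}\iint_{\mcAs}a_\sharp^2\lVert\ph\rVert^2_{\Hdot[1]}$, and because $a_\sharp$ is supported in $r\geq\rmax+\delta^\sharp$ while $\chi_{\rmax}$ vanishes there, the derivative part of the $\Hdot[1]$ norm drops out and this error reduces to the zeroth-order global term $\iiiint_{\WMs}a_\sharp^2|vu|^2$. Only in the remaining main term does one use $|\Wsc\widehat{g}|\lesssim|\oW\widehat{g}|+v(|\omega|+|m|)|\widehat{g}|$, now with $\widehat{g}=\widehat{\chi}_1*\ph=\mathcal{F}(\chi_1 u)$; since $\oW\chi_1=\Phi\chi_1=0$ and $\chi_1 u$ is supported in $\wtau\in(0,2]$, transforming back gives $|\chi_1\oW u|^2$, $|\chi_1 vTu|^2$, $|\chi_1 v\Phi u|^2$ and $|(\partial_t\chi_1)vu|^2$, i.e.\ precisely the $D_0^2$ term plus another copy of the zeroth-order global term. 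Your Plancherel-plus-Corollary \ref{cor:asymptoticWinfinity} input is correct as far as it goes, but without this cutoff-then-commute step it cannot produce the time-localized first-order right-hand side of \eqref{est:IDLemmaLargeR}.
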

\begin{proof}
We define a smooth nonnegative cutoff $\chi_1$ (consistent with $\chi_{\mathcal{T}}$ for ${\mathcal{T}} = 1$) satisfying
\[
\chi_1(y) = \begin{cases}
1 & 0 \leq y \leq 1, \\
0 & y \geq 2 \text{ or } y \leq -1.
\end{cases}
\]
Pointwise bounds in physical space and the Plancherel identity \eqref{id:Plancherel4} imply that
\begin{equation}
\int_{D_0^1} a_\sharp^2| \Wsc u|^2  \leq \iiiint_{\WMs} a_\sharp^2|\chi_1 \Wsc u|^2 = \int a_\sharp^2|\widehat{\chi}_1 * \Wsc(\ph)|^2 \, .
\end{equation}
We now apply Proposition \ref{lem:CMBound} with $q = 1$ to obtain
\begin{equation}
\int a_\sharp^2|\widehat{\chi}_1 * \Wsc(\ph)|^2 \lesssim \int a_\sharp^2|\Wsc(\widehat{\chi}_1 * \ph)|^2 + \sup_{(r, \theta)}\left(\left\lVert\widehat{\chi}_1\right\rVert_{\Wdot[-1]}^2 \right) \cdot\iint_{\mcAs}a_\sharp^2\left\lVert\ph\right\rVert_{\Hdot[1]}^2 \, .
\end{equation}
 Since $\chi_1$ is a function of $\wtau$ only, we note
\begin{equation}
\sup_{r, \theta} \left\lVert \widehat{\chi}_1\right\rVert_{\Wdot[-1]}^2 \leq C ,
\end{equation}
where $C$ depends on the definition of $\chi_1$. Additionally, since the supports of $\chi_{\rmax}$ and $a_\sharp$ are disjoint, the Plancherel identity \eqref{id:Plancherel4} implies 
\begin{equation}
\iint_{\mcAs}a_\sharp^2\left\lVert \ph\right\rVert_{\Hdot[1])} \lesssim \int a_\sharp^2|v\ph|^2 \lesssim \iiiint_{\WMs}a_\sharp^2|vu|^2.
\end{equation}
Finally, Corollary \ref{prop:WMinusWZeroBounds}, along with \eqref{id:Plancherel4} and the fact that $\oW\chi_1 = \partial_\phi\chi_1 = 0$, implies
\begin{align}
\int a_\sharp^2| \Wsc(\widehat{\chi}_1 *\ph)|^2 &\lesssim \int a_\sharp^2\left(| \oW(\widehat{\chi}_1 *\ph)|^2 + |\omega v(\widehat{\chi}_1 *\ph)|^2+|m v(\widehat{\chi}_1 *\ph)|^2\right), \\
&\lesssim \iiiint_{\WMs}a_\sharp^2\left(|\chi_1\oW  u|^2 + |(\partial_t \chi_1) v u|^2+|\chi_1 v \partial_t u|^2+|\chi_1 v \partial_{\phi} u|^2\right). \nonumber
\end{align}
 The estimate\eqref{est:IDLemmaLargeR} then directly follows from pointwise bounds in physical space.
\end{proof}

\begin{corollary}\label{prop:PSBoundCompact}
Under the assumptions of Theorem \ref{thm:CommutedEEMain}, for every $\delta > 0$ there exists a constant $C_\delta$ depending on $\delta$ such that
\begin{equation}\label{est:PSBoundCompact}
\int_{D_0^1}|\partial_t\Wsc\mathbf{u}_\chi|^2 + r^{-2}|\Wsc\mathbf{u}_\chi|^2 \lesssim \mathcal{E}^1[\Psi] + \mathcal{S}[\uchi]+ \mathcal{S}_1^W[\uchi]
\end{equation}
\end{corollary}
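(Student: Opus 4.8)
\textbf{Proof proposal for Corollary \ref{prop:PSBoundCompact}.}

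The plan is to bound the two physical-space quantities $\int_{D_0^1}|\partial_t\Wsc\mathbf{u}_\chi|^2$ and $\int_{D_0^1}r^{-2}|\Wsc\mathbf{u}_\chi|^2$ by splitting the spatial domain into a compact region in $r$, say $r \leq \rmax+\delta^\sharp$, and the asymptotic region $r \geq \rmax+\delta^\sharp$. In the compact region, $\Wsc$ agrees with a finite sum of frequency-localized operators all of which are uniformly (in frequency) bounded combinations of $\oW$, $T$, $\Phi$ and $v^{-1}\Rs$ acting on $\uchi$, together with cutoffs $\wchi_1$, $\chisharp$. More precisely, on $r\le \rmax+\delta^\sharp$ we have $\Wsc = \wchi_1\Wsv + (1-\wchi_1)\chisharp\Wsl$, and by the pointwise bounds of Corollary \ref{prop:WMinusWZeroBounds} (for $r<\rmin$, where $\chisharp=0$) and by the fact that $vf_\sharp, v\f$ are uniformly bounded in the support of $\chisharp$ (Proposition \ref{prop:vfDerUpperBound}, together with the estimates derived in Section \ref{sec:fnatUpperBound}), we get the pointwise-in-frequency estimate $|\Wsc\ph| \lesssim |\oW\ph| + |\omega\ph| + |m\ph|$ and $|\partial_t\Wsc\ph| = |\omega\Wsc\ph| \lesssim |\omega\oW\ph| + |\omega^2\ph| + |\omega m\ph|$ on this compact set. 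Converting to physical space via Plancherel \eqref{id:Plancherel4}, the $D_0^1$-integral of these quantities on the compact region is controlled by $\int_{D_0^1} |\oW\mathbf{u}_\chi|^2 + |T\mathbf{u}_\chi|^2 + |\Phi\mathbf{u}_\chi|^2 + |T\oW\mathbf{u}_\chi|^2 + |T^2\mathbf{u}_\chi|^2 + |T\Phi\mathbf{u}_\chi|^2$, which in turn is bounded by $\mathcal{E}^1[\Psi]$ using the local energy estimates for $\Psi_{\mathcal{T}}$ (recall $\Psi_{\chi,\mathcal{T}} = \chi\Psi_{\mathcal{T}}$, $\chi$ depends on $\wtau$ only and $\oW\wtau=0$, so commutators with $\chi$ contribute only lower-order local terms) — more concretely Proposition \ref{prop:basicgronwall} gives the higher-order local energy $E_\wtau[\mathfrak{D}^N\Psi_{\chi,\mathcal{T}}]$ on $[0,2]$, and $\oW$, $T$, $\Phi$ applied to $\mathbf{u}_\chi = \rweight\Psi_{\chi,\mathcal{T}}$ are all controlled by $\mathcal{D}^1$-derivatives. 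The weight $\rweight$ and powers of $r$ on the compact set are harmless.

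For the asymptotic region $r \geq \rmax+\delta^\sharp$, we have $\chisharp = 1$ and $\wchi_1$ depends on frequency but the operator structure is uniform: $|\Wsc\ph - \oW\ph|\lesssim r^{-1}(|m|+|\omega|)|\ph|$ by Corollary \ref{cor:asymptoticWinfinity}. For the term $\int_{D_0^1} r^{-2}|\Wsc\mathbf{u}_\chi|^2$, we bound $r^{-2}|\Wsc\mathbf{u}_\chi|^2 \lesssim r^{-2}|\oW\mathbf{u}_\chi|^2 + r^{-4}(|T\mathbf{u}_\chi|^2+|\Phi\mathbf{u}_\chi|^2)$ in physical space after Plancherel; the $\oW$-term is controlled by $\mathcal{S}[\uchi]$ (it appears there with weight $\tfrac{r^2+a^2}{\Delta}r^{-1-\alpha}(\tfrac{r}{r-r_+})^2|\Rs\ph + i(\omega-\omega_r)\ph|^2 = r^{-1-\alpha}|\oW\ph|^2$ up to bounded factors, which dominates $r^{-2}|\oW\mathbf{u}_\chi|^2$ for large $r$), and the lower-order $T$, $\Phi$ terms are controlled by $\mathcal{S}[\uchi]$ as well, having more than enough decay. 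For the term $\int_{D_0^1}|\partial_t\Wsc\mathbf{u}_\chi|^2$ on the asymptotic region we must be more careful, since this is a top-order quantity with no decaying weight, and $\mathcal{F}^{-1}(\omega\Wsc\ph)$ is \emph{not} compactly supported in $\wtau$ even though $\mathbf{u}_\chi|_{\wtau=0}$ vanishes and $\chi_{\mathcal T}$-related cutoffs are involved — this is precisely the non-locality in time coming from the non-physical nature of $\Wsc$. Here the plan is to invoke Proposition \ref{lem:IDLemmaLargeR} with $a_\sharp$ a cutoff supported in $r\geq\rmax+\delta^\sharp$ and equal to $1$ for, say, $r\geq\rmax+2\delta^\sharp$, applied to $u = T\mathbf{u}_\chi$ (which indeed satisfies \eqref{sqi} for $N=1$ and is supported for $\wtau > 0$): this gives $\int_{D_0^1}a_\sharp^2|\Wsc T\mathbf{u}_\chi|^2 \lesssim \int_{D_0^2}a_\sharp^2(|\oW T\mathbf{u}_\chi|^2 + |vT^2\mathbf{u}_\chi|^2 + |v\Phi T\mathbf{u}_\chi|^2) + \iiiint_{\WMs}a_\sharp^2|vT\mathbf{u}_\chi|^2$. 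Since $[\Wsc, T] = 0$, the left side is $\int_{D_0^1}a_\sharp^2|T\Wsc\mathbf{u}_\chi|^2$, which is the quantity we want on the asymptotic region. The $D_0^2$ terms on the right are local and bounded by $\mathcal{E}^1[\Psi]$ (again using $\Psi_{\chi,\mathcal{T}} = \Psi_{\mathcal T}$ for $\wtau\in[1,2]$ plus local energy bounds for the data terms), and the $\WMs$ spacetime integral $\iiiint_{\WMs}a_\sharp^2|vT\mathbf{u}_\chi|^2 = \iiiint_{\WMs}a_\sharp^2 v^2|T\mathbf{u}_\chi|^2$ has weight $\sim \Delta/(r^2+a^2)^2 \sim r^{-2}$ on $\supp a_\sharp$, hence is controlled by $\mathcal{S}[\uchi]$ (which controls $\iint r^{-1-\alpha}|\omega\ph|^2$ with room to spare).

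Combining: on the compact region everything is bounded by $\mathcal{E}^1[\Psi]$ via local energy estimates, on the asymptotic region the $r^{-2}$-weighted term is bounded by $\mathcal{S}[\uchi]$ directly and the top-order $|\partial_t\Wsc\mathbf{u}_\chi|^2$ term is bounded via Proposition \ref{lem:IDLemmaLargeR} by $\mathcal{E}^1[\Psi] + \mathcal{S}[\uchi]$. Since $\mathcal{E}^1[\Psi] \lesssim E_0^1[\Psi]$ by local energy bounds (as remarked after the statement of Proposition \ref{thm:PHchiBound}), and since $\mathcal{S}_1^W[\uchi]$ is allowed on the right-hand side of \eqref{est:PSBoundCompact} (we actually do not even need it — but it does no harm, and it provides the natural home for any residual top-order term at intermediate $r$ that one might prefer not to chase through Proposition \ref{lem:IDLemmaLargeR}, e.g.\ if one opts to handle a transition region $\rmax+\delta^\sharp \leq r \leq \rmax+2\delta^\sharp$ by simply absorbing $|T\Wsc\mathbf{u}_\chi|^2$ there into $\int r^{-1-\alpha}|\omega\Wsc\mathbf{u}_\chi|^2 \leq \mathcal{S}_1^W[\uchi]$ using compactness), the estimate \eqref{est:PSBoundCompact} follows. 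The main obstacle is the non-locality in time of $\mathcal{F}^{-1}(\omega\Wsc\mathbf{u}_\chi)$ in the region near infinity; this is exactly the difficulty the weak Coifman--Meyer bound of Proposition \ref{lem:CMBound} (packaged as Proposition \ref{lem:IDLemmaLargeR}) is designed to overcome, by trading the non-local top-order object for an almost-local one (controlled by the $\wtau\in[0,2]$ energy) plus a genuinely lower-order spacetime term with enough $r$-decay to be swallowed by $\mathcal{S}[\uchi]$.
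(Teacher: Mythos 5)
Your treatment of the top-order term near infinity is the paper's own argument (Proposition \ref{lem:IDLemmaLargeR} applied to $\partial_t\mathbf{u}_\chi$ with $a_\sharp$ supported in $r\geq\rmax+\delta^\sharp$), but the other two pieces of your proof have genuine gaps. First, the compact region: you claim that the pointwise-in-frequency bounds for $\Wsc$ together with Plancherel control the $D_0^1$-restricted integral of $|\partial_t\Wsc\mathbf{u}_\chi|^2$ by the $D_0^1$-restricted integral of physical derivatives of $\mathbf{u}_\chi$, and hence by $\mathcal{E}^1[\Psi]$. Plancherel only identifies \emph{full} time integrals, and $\Wsc$ is non-local in time, so a frequency-pointwise bound does not localize to the slab $\wtau\in[0,1]$; making such a localization legitimate is exactly what Proposition \ref{lem:IDLemmaLargeR} is for, and its proof needs the support of $a_\sharp$ to be disjoint from that of $\chi_{\rmax}$, so it is not available here. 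What your argument actually yields is the global-in-time integral of general second derivatives such as $|T^2\mathbf{u}_\chi|^2$ and $|T\oW\mathbf{u}_\chi|^2$ on a compact $r$-range, which is controlled neither by $\mathcal{E}^1[\Psi]$ nor by $\mathcal{S}[\uchi]$ (trapping!). The correct — and much simpler — route, which is the paper's, is to drop the time restriction and use boundedness of $r$: $\int_{D_0^1}|\partial_t\Wsc\mathbf{u}_\chi|^2\leq\iiiint_{\WMs}|\partial_t\Wsc\mathbf{u}_\chi|^2\lesssim\mathcal{S}_1^W[\uchi]$ and, for the zeroth-order piece, \eqref{wbe} gives $\lesssim\mathcal{S}[\uchi]$; both terms are permitted on the right of \eqref{est:PSBoundCompact}, so no appeal to $\mathcal{E}^1[\Psi]$ is needed there.

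Second, the term $\int_{D_0^1}r^{-2}|\Wsc\mathbf{u}_\chi|^2$ for $r\geq\rmax+\delta^\sharp$: your claim that the global integral of $r^{-2}|\oW\mathbf{u}_\chi|^2$ is controlled by $\mathcal{S}[\uchi]$ rests on a misreading of \eqref{def:S0Local}. The term there is $r^{-1-\alpha}\bigl(\tfrac{r}{r-r_+}\bigr)^2|\Rs\ph+i(\omega-\omega_r)\ph|^2$, i.e.\ the \emph{ingoing} ($\underline{L}$) combination, whereas $\oW\ph=v^{-1}\bigl(\Rs\ph-i(\omega-\omega_r)\ph\bigr)$ is the outgoing one and carries an extra factor $v^{-1}\sim r$; on the outgoing side $\mathcal{S}$ only supplies $r^{-1-\alpha}\bigl(|\partial_r\ph|^2+|\omega\ph|^2\bigr)$, which misses the weight you need by $r^{1+\alpha}$. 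So the global-in-time quantity $\iiiint_{\WMs}r^{-2}|\oW\mathbf{u}_\chi|^2$ is \emph{not} bounded by $\mathcal{S}[\uchi]$, and your direct argument for this term fails. The paper handles it by a second application of Proposition \ref{lem:IDLemmaLargeR}, now with $\ph=\uchi$ and $a_\sharp=r^{-1}$: then the $\oW$-term appears only over $D_0^2$, where it is controlled by $\mathcal{E}^1[\Psi]$, and the sole global-in-time leftover is $\iiiint_{\WMs}a_\sharp^2|v\mathbf{u}_\chi|^2\sim\iiiint_{\WMs}r^{-4}|\mathbf{u}_\chi|^2$, which the $r^{-3-\alpha}|\ph|^2$ term in $\mathcal{S}[\uchi]$ does absorb. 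With these two corrections your outline collapses onto the paper's proof.
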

\begin{proof}
We split this into the regions $r \leq \rmax+\delta^\sharp$ and $r \leq \rmax + \delta^\sharp$. For $r \leq \rmax+\delta^\sharp$, this follows directly from the definitions of $\mathcal{S}[\uchi], \mathcal{S}_1^W[\uchi]$, and boundedness of $r$. For $r \geq \rmax + \delta^\sharp$, the result follows from Proposition \ref{lem:IDLemmaLargeR} applied for $\ph = \partial_t\uchi$, $a_\sharp = 1$ and $\ph = \uchi, a_\sharp = r^{-1}$ respectively (where $a_\sharp$ is supported in the appropriate region in each case).
\end{proof}
In order to complete the proof of Lemma \ref{lem:IDBoundsFirst} it therefore suffices to bound the second term on the right hand side of \eqref{est:ImIDPrelim} and \eqref{est:RealIDPrelim}.
\begin{proposition}\label{prop:PSBoundIDWorst}
Under the assumptions of Theorem \ref{thm:CommutedEEMain}, for every $\delta > 0$ there exists a constant $C_\delta$ depending on $\delta$ such that
\begin{equation}\label{est:PSBoundIDWorst}
\iiiint_{\WMs}\left|v^2\oW\left(\frac{\Hchicheck}{v^2}\right) \right|^2 \lesssim \mathcal{E}^1[\Psi].
\end{equation}
\end{proposition}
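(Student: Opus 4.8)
The goal is to bound the physical-space quantity $\iiiint_{\WMs}|v^2\oW(\Hchicheck/v^2)|^2$ by the local energy $\mathcal{E}^1[\Psi]$. The starting point is the explicit expansion of $\Hchi$ in regular coordinates provided by Proposition \ref{prop:Hchi}: in physical space, $\Hchicheck$ is a sum of terms of the form $\chi' \cdot (\textrm{coefficient}) \cdot \wpa \unochi$ and $\chi'' \cdot (\textrm{coefficient}) \cdot \unochi$ and $\chi' G_{\BfB}\unochi$, where $\unochi = \rweight \Psi_{\mathcal{T}}$, the coefficients $F^{\wtau}, F^{\wphi}, G_1, G_2$ satisfy $|F| + |\Rs(v^{-2}F)| \lesssim v^2$ by \eqref{est:FGBounds}, and $|G_{\BfB}/v^2| + |\oW(G_{\BfB}/v^2)| \lesssim \varepsilon r^{1-\alpha}$ by \eqref{est:GBfBBounds}. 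The first key observation is that every term in $\Hchicheck$ is supported in $\wtau \in [0,1]$ since $\chi'$ and $\chi''$ vanish outside that range; this is what makes a reduction to the local energy possible.

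First I would apply the Leibniz rule \eqref{id:WBasicLeibniz} for $\oW$ acting on products, together with the crucial identity $\oW \chi = 0$ (indeed $W_0 \wtau = 0$ by \eqref{def:timefoliation}, so $\oW$ annihilates any function of $\wtau$ alone, in particular $\chi', \chi''$). This means $\oW$ passes through the cutoff factors without producing derivatives of $\chi$, so the time-support in $[0,1]$ is preserved: $v^2\oW(\Hchicheck/v^2)$ is still supported in $\wtau \in [0,1]$. Applying $\oW$ to each summand via Leibniz then produces terms of the schematic form $\chi' \cdot (\oW\textrm{-derivative of coefficient}) \cdot \wpa\unochi + \chi' \cdot (\textrm{coefficient}) \cdot \oW\wpa\unochi + \cdots$, and similarly for the $\chi''$ and $G_{\BfB}$ pieces. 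Using the pointwise bounds on the coefficients and the fact that $\oW$ commutes with $\wpa_\theta, T = \wpa_{\wtau}, \Phi = \wpa_{\widetilde{\phi}}$ (see \eqref{def:wpar} and the commutator identity \eqref{id:CommutatorWzeroRegular} for the $\wpa_r$ case, which only costs a redshift-favorable term), I would bound $|v^2\oW(\Hchicheck/v^2)|^2$ pointwise by a sum of $|v|^4$ times squares of first-order regular derivatives of $\unochi$ (namely $\oW\unochi, T\unochi, \Phi\unochi, \wpa_r\unochi$) and of $\oW$ applied to each of those, all restricted to $\wtau \in [0,1]$. For the $G_{\BfB}$ term, the extra factor $\varepsilon r^{1-\alpha}$ from \eqref{est:GBfBBounds} together with the $v^4 \sim r^{-4}$-type decay ensures the corresponding contribution is controlled by $\varepsilon^2$ times a first-order energy.

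Next, integrating over $\WMs$ (equivalently, using Plancherel \eqref{id:Plancherel4} to move back to honest spacetime and then bounding by the volume integral over $\wtau \in [0,1]$, or directly working in physical space as $\Hchicheck$ is already a physical-space object), the integrand is supported in $\wtau \in [0,1]$ and involves only $\unochi = \rweight\Psi_{\mathcal{T}}$ and one regular derivative thereof, plus one further application of $\oW$. Since $W_0$ is a regular (bounded-coefficient near the horizon, $\sim r L$ near infinity) first-order operator and $\Psi_{\mathcal{T}} = \Psi$ for $\wtau \in [0, \mathcal{T}]$, the quantity $\int_0^1 E_{\wtau}[\Psi_{\mathcal{T}}] + E_{\wtau}[W_0\Psi_{\mathcal{T}}] + E_{\wtau}[T\Psi_{\mathcal{T}}] + E_{\wtau}[\Phi\Psi_{\mathcal{T}}]\, d\wtau$ is exactly $\mathcal{E}^1[\Psi]$; one needs to check that the weights $v^4$ combined with the volume element $r^2\, dr$ (and the $\rweight^2$ from $\unochi = \rweight\Psi$) produce precisely the weighted combination appearing in $E_{\wtau}$ (recall $E_{\wtau}$ from \eqref{basicdataenergy} has the $\tfrac{1}{\Delta}|W_0\Psi|^2$ and $\tfrac{1}{r^2}$-weighted lower-order terms, which matches after bookkeeping the $v^2 = \Delta/(r^2+a^2)^2$ factors). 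The main obstacle is this last weight-tracking: one must verify carefully that every term generated by the Leibniz expansion, after inserting the bounds \eqref{est:FGBounds}--\eqref{est:GBfBBounds}, lands in a norm dominated by the specific first-order energy $\mathcal{E}^1[\Psi]$ (in particular that no second regular derivative of $\unochi$ survives — this is guaranteed because $\oW$ was only applied once and it commutes with $T, \Phi, \wpa_\theta$ up to zeroth order and with $\wpa_r$ up to a favorable lower-order term). Since all coefficients are smooth and the region $\wtau \in [0,1]$ is compact in time, the remaining steps are routine Cauchy--Schwarz and Plancherel applications.
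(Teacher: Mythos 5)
Your proposal follows essentially the same route as the paper: take the expansion of Proposition \ref{prop:Hchi}, pass $\oW$ through $\chi'$, $\chi''$ (using $\oW\wtau=0$, so the $\wtau\in[0,1]$ support is preserved), use the coefficient bounds \eqref{est:FGBounds}--\eqref{est:GBfBBounds}, and control the surviving terms — including the genuinely second-order ones $\oW\oW\unochi$, $T\oW\unochi$, $\Phi\oW\unochi$ — by the commuted energies $E_\wtau[W_0\Psi]$, $E_\wtau[T\Psi]$, $E_\wtau[\Phi\Psi]$ inside $\mathcal{E}^1[\Psi]$, which is exactly the paper's argument (cf.\ \eqref{est:oWchioneBounds}). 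One small caveat: your parenthetical "no second regular derivative of $\unochi$ survives" should be read as "no \emph{uncontrolled} second derivative survives", since $\oW\oW\unochi$ does appear and is precisely the reason $E_0[W_0\Psi]$ is built into the data energy — but your earlier bookkeeping already accounts for this, so it is only a wording issue.
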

\begin{proof}
We take the expansion \eqref{def:HChiExpansion} and commute $\oW$ through $\chi'$, $\chi''$ to obtain
\begin{equation}
\oW\left(\frac{\Hchicheck}{v^2}\right) = \chi'\oW\left(-\frac{2}{v}h_0\oW\bfuT + \frac{F^{\wtau}}{v^2}\wpa_{\wtau}\bfuT + \frac{F^{\wphi}}{v^2}\wpa_{\wphi} \bfuT + \frac{G_1}{v^2}\bfuT + \frac{G_{\BfB}}{v^2}\bfuT\right) + \chi''\oW\left(\frac{G_2}{v^2}\bfuT\right) \, .
\end{equation}
Physical space calculations give the bound
\begin{equation}\label{est:oWchioneBounds}
\left|v^2\oW\left(\frac{h_0}{v}\oW\unochi\right)\right| \lesssim |r^{-1}\oW\unochi|+|v\oW\oW\unochi|.
\end{equation}
Compact support of $\chi'$ in $\wtau$ implies
\begin{equation}
\iiiint_{\WMs}\left|v^2\chi'\oW\left(\frac{h_0\oW(\unochi)}{v}\right) \right|^2 \lesssim \mathcal{E}^1[\Psi].
\end{equation}
Similar calculations using the bounds \ref{est:FGBounds} imply
\begin{equation}
\iiiint_{\WMs}\left|v^2\chi'\oW\left(\frac{F^{\wtau}}{v^2}\partial_t\bfuT + \frac{F^{\wphi}}{v^2}\partial_\phi \bfuT + \frac{G_1}{v^2}\bfuT\right) + v^2\chi''\oW\left(\frac{G_2}{v^2}\bfuT\right)\right| \lesssim \mathcal{E}^1[\Psi],
\end{equation}
and \ref{est:GBfBBounds} implies
\begin{equation}
\iiiint_{\WMs}\left|v^2\oW\left(\frac{G_{\BfB}}{v^2}\bfuT\right)\right|^2\lesssim \mathcal{E}^1[\Psi] \, ,
\end{equation}
which completes our proof.
\end{proof}

To complete the proof of Lemma \ref{lem:IDBoundsFirst}, we split $\mathcal{N} = \mathcal{N}_T + \mathcal{N}_\alpha$ as in (\ref{def:N}) and take the preliminary bounds \eqref{NTchiFirstBound} and \eqref{NalphachiFirstBound}, bound the first term on the right hand side of each inequality using the bounds \eqref{est:IDDifferenceBound1} and \eqref{est:IDDifferenceBound2}, and bound the second term by taking the preliminary expansions \eqref{est:ImIDPrelim} and \eqref{est:RealIDPrelim} and bounding each term in the expansions via Corollary \ref{prop:PSBoundCompact} and Proposition \ref{prop:PSBoundIDWorst}. The bound \eqref{est:NChiMainBound} follows (up to a constant multiple of $\delta$).

\subsubsection{Bounding $\mathcal{N}_\sharp[\Hchi, \Hchi]$} To conclude the proof of Proposition \ref{thm:PHchiBound} it suffices to prove the following:
\begin{lemma}\label{lem:PHchiBoundLocal} Under the assumptions of Theorem \ref{thm:CommutedEEMain} we have the bound
\begin{equation}\label{PHchiBoundLocal}
\mathcal{N}_\sharp[\Hchi, \Hchi]  \leq  C\mathcal{E}^1[\Psi] \, . 
\end{equation}
\end{lemma}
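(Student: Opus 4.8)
\textbf{Proof strategy for Lemma \ref{lem:PHchiBoundLocal}.}
The plan is to exploit the fact that $\mathcal{N}_\sharp[\Hchi, \Hchi]$, unlike $\mathcal{N}[\Wsc\uchi, \PHchi]$, is \emph{compactly supported in $r$} (in the interval $[\rmin, \rmax+\delta^\sharp]$ via the cutoff $g_\sharp$) but nonlocal in time. Recall from \eqref{def:Nsharp} that
\[
\mathcal{N}_\sharp[\Hchi, \Hchi] = \iiiint_{\FMs}(1-\wchi_1)^2g_\sharp^2\left(|\omega \Hchi|^2+|m\Hchi|^2+\left|v^2\Wp\left(\tfrac{\Hchi}{v^2}\right)\right|^2\right),
\]
and since $g_\sharp$ is supported away from the horizon and infinity, all weights in $v$ and $\Delta$ in this compact region are bounded above and below by constants, so it suffices to bound $\iiiint_{\FMs}\chi_{\rmin\!,\rmax}\left(|\omega\Hchi|^2 + |m\Hchi|^2 + |\Wp\Hchi|^2 + |v^{-1}\Hchi|^2\right)$. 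The first two terms, $|\omega\Hchi|^2$ and $|m\Hchi|^2$, are handled directly: since $\omega$ and $m$ commute with taking the Fourier transform (corresponding to $\partial_t$ and $\partial_\phi$ in physical space) and $\Hchicheck = \mathcal{F}^{-1}\Hchi$ is supported for $\wtau \in [0,1]$, the Plancherel identity \eqref{id:Plancherel4} converts these to physical-space integrals $\int_{D_0^1}\chi_{\rmin\!,\rmax}(|\partial_t\Hchicheck|^2 + |\partial_\phi\Hchicheck|^2)$, and then Proposition \ref{prop:Hchi} (the expansion \eqref{def:HChiExpansion}) together with the bounds \eqref{est:FGBounds} and \eqref{est:GBfBBounds} bound these by $\mathcal{E}^1[\Psi]$ after noting $T$ and $\Phi$ commute through the expansion and hit $\bfuT$ and its first derivatives. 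Here one uses local energy estimates for $\Psi_{\mathcal{T}}$ on the slab $\wtau \in [0,2]$ (Proposition \ref{prop:basicgronwall}) together with the fact that $\bfuT = \sqrt{r^2+a^2}\Psi_{\mathcal{T}}$ and that $\chi'$, $\chi''$ are supported for $\wtau \in [0,1]$.

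For the term containing $\Wp = v^{-1}(\Rs - i(\omega-\omega_r))$, the obstacle is that $\Wp$ is a frequency-dependent pseudodifferential operator, so $\mathcal{F}^{-1}(\Wp\Hchi)$ is \emph{not} compactly supported in time, preventing a direct Plancherel reduction. This is where I would invoke Proposition \ref{lem:CMBound2}: writing $\Wp\Hchi = \Wp(\widehat{\chi}_1 * \Hchi)$ (using a cutoff $\chi_1$ equal to $1$ on $[0,1]$ and supported in $[-1,2]$, which leaves $\Hchicheck$ unchanged), one applies the weak Coifman--Meyer estimate \eqref{est:CISharp} with the operator $X_\sharp^{\Wp} = g_\sharp(1-\wchi_1)\Wp$, $\widehat{w} = \widehat{\chi}_1$, and $\ph = \Hchi$. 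This produces two terms: a commuted term $\widehat{\chi}_1 * (X_\sharp^{\Wp}\Hchi) = \widehat{\chi}_1 * (g_\sharp(1-\wchi_1)\Wp\Hchi)$, which, upon invoking Young's convolution inequality in $(\omega, m)$ and Plancherel, reduces to a \emph{time-local} integral of $\Wp\Hchicheck$ on $D_0^2$ --- and now $\Wp$ can be further decomposed as $\oW + (\Wp - \oW)$, with the $\oW$-part handled exactly as in Proposition \ref{prop:PSBoundIDWorst} (the key estimate \eqref{est:PSBoundIDWorst}), and the difference $\Wp - \oW = -i\frac{(\omega-\omega_r)(1-h_0)}{v}$ being a bounded multiplier since $|1-h_0| \lesssim r^{-2}$ away from the horizon, which on the compact support of $g_\sharp$ is just a bounded function; and a remainder term bounded by $\sup(a_1^2\lVert\widehat{\chi}_1\rVert_{\Wdot}^2\chi_{\rmin\!,\rmax})\iint_{\mcA}a_2^2\lVert\Hchi\rVert_{\Hdot}^2\chi_{\rmin\!,\rmax}$, where $\lVert\widehat{\chi}_1\rVert_{\Wdot}$ is bounded by a constant (as $\chi_1$ is a Schwartz function of $\wtau$ only, so $W_0\chi_1 = 0$ and $\partial_\phi\chi_1 = 0$), and $\lVert\Hchi\rVert_{\Hdot}^2$ integrated against $\chi_{\rmin\!,\rmax}$ becomes, via Plancherel, $\iiiint_{\WMs}\chi_{\rmin\!,\rmax}(|v\Hchicheck|^2 + |\oW\Hchicheck|^2)$ --- both controlled by $\mathcal{E}^1[\Psi]$ using Proposition \ref{prop:Hchi} and \eqref{est:PSBoundIDWorst} again.

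\textbf{Main obstacle.} The genuinely delicate point is controlling $\iint_{\mcA}\chi_{\rmin\!,\rmax}\lVert\Hchi\rVert_{\Hdot}^2$: the $\Hdot$ norm \eqref{def:Hdotnorm} contains the term $\lVert\chi_{\rmax}^{1/2}\oW\Hchi/\max(|m|,|\omega|,1)\rVert_{\ell^2_mL^2_\omega}$, which involves $\oW\Hchi$ --- a derivative of the inhomogeneity --- and in physical space $\oW\Hchicheck$ brings up to \emph{second} derivatives of $\bfuT$ (through the $\oW\oW\unochi$ term in \eqref{est:oWchioneBounds}). This is precisely why $\mathcal{E}^1[\Psi]$ and hence $E_0^1[\Psi]$ includes the commuted energies $E_0[W_0\Psi]$, $E_0[T\Psi]$, $E_0[\Phi\Psi]$: Proposition \ref{prop:PSBoundIDWorst} already establishes exactly the bound $\iiiint_{\WMs}|v^2\oW(\Hchicheck/v^2)|^2 \lesssim \mathcal{E}^1[\Psi]$ one needs, so the task reduces to carefully matching the weighted norms appearing in \eqref{est:CISharp} against the support properties of $g_\sharp$ and $\chi_{\rmin\!,\rmax}$, and checking that the extra $v$-weights are harmless on the compact support. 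Once the bookkeeping of weights and the reduction to Proposition \ref{prop:PSBoundIDWorst} is done, \eqref{PHchiBoundLocal} follows, completing the proof of Proposition \ref{thm:PHchiBound}.
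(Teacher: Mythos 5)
Your treatment of the $|\omega \Hchi|^2$ and $|m\Hchi|^2$ terms is fine and coincides with the paper, but your handling of the $\Wp$ term rests on a false premise. The operator $\Wp = v^{-1}\Rs - i\tfrac{\omega-\omega_r}{v}$ of \eqref{def:Wplus} is \emph{linear} in $(\omega,m)$ with coefficients depending on $r$ only: under the conventions $\widehat{\partial_t u}=-i\omega\widehat{u}$, $\widehat{\partial_\phi u}=im\widehat{u}$ it is exactly the Fourier representation of the physical first-order operator $v^{-1}(\Rs+\Ks)=W_{\mathcal{I}^+}$ from \eqref{WKerr} (the identification $\Wp=W_{\mathcal{I}^+}$ is even stated in Appendix \ref{sec:Prop121Proof}). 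Since differential operators are local, $\mathcal{F}^{-1}\bigl(v^2\Wp(\Hchi/v^2)\bigr)=v^2W_{\mathcal{I}^+}(\Hchicheck/v^2)$ \emph{is} supported in $\wtau\in[0,1]$, so the ``direct Plancherel reduction'' you declare impossible is available, and it is precisely the paper's two-line proof: one first discards the only genuinely frequency-dependent factors via the pointwise bound $(1-\wchi_1)^2g_\sharp^2\lesssim\chi_{\rmin\!,\rmax}$ coming from \eqref{est:gsharpprimebound}, and then \eqref{id:Plancherel4} converts $\mathcal{N}_\sharp[\Hchi,\Hchi]$ into a spacetime integral over $D_0^1$ of up to first derivatives ($T$, $\Phi$, $W_{\mathcal{I}^+}$) of $\Hchicheck$, which the expansion of Proposition \ref{prop:Hchi} bounds by $\mathcal{E}^1[\Psi]$ --- the second derivatives of $\bfuT$ produced when $W_{\mathcal{I}^+}$ hits the expansion are exactly why $\mathcal{E}^1$ carries the $W_0$-, $T$-, $\Phi$-commuted energies.

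Because you rule this reduction out, you route the $\Wp$ term through Proposition \ref{lem:CMBound2}, and that argument has a genuine gap: after the triangle inequality, the ``commuted term'' $\widehat{\chi}_1*(X_\sharp^{\Wp}\Hchi)$ still contains the full quantity $X_\sharp^{\Wp}\Hchi$ you set out to estimate. Young's inequality only returns $\lVert\widehat{\chi}_1\rVert_{L^1}\lVert X_\sharp^{\Wp}\Hchi\rVert_{\ell^2_mL^2_\omega}$, i.e.\ the left-hand side again, and your assertion that this term ``reduces to a time-local integral of $\Wp\Hchicheck$ on $D_0^2$'' implicitly uses exactly the physical-space locality of $\Wp$ that you denied; moreover, with the frequency cutoffs $g_\sharp(1-\wchi_1)$ retained inside $X_\sharp^{\Wp}$, the inverse Fourier transform is \emph{not} a local expression in $\Hchicheck$, so that step cannot be justified without first removing the cutoffs pointwise --- at which point the Coifman--Meyer machinery is superfluous and you are back at the direct argument. (Proposition \ref{lem:CMBound2} is genuinely needed in Lemma \ref{lem:NSharpBfBBound} for $\mathcal{N}_\sharp[\HB,\HB]$, where the convolution with the time-dependent $\BT$ is unavoidable; for $\Hchi$ it is not.) Your remainder-term bookkeeping via the $\Hdot$ norm and Proposition \ref{prop:PSBoundIDWorst} is fine in itself, but it does not repair this circular step.
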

\begin{proof}The bound \eqref{est:gsharpprimebound} and the definition of $\mathcal{N}_\sharp$ imply
\begin{align}
\mathcal{N}_\sharp[\Hchi, \Hchi] \lesssim \int \chi_{\rmin\!,\rmax}\left(|\omega \Hchi|^2+|m\Hchi|^2+\left|v^2\Wp \left(\tfrac{\Hchi}{v^2}\right)\right|^2 \right).
\end{align}
This is local in time in physical space, so Plancherel's theorem combined with the decomposition of Proposition \ref{prop:Hchi} proves our result.
\end{proof}
\section{The proof of the convolution estimates}\label{sec:CMConvolution}
		Here we prove Propositions \ref{lem:CMBound} and \ref{lem:CMBound2}. As the proof of Proposition \ref{lem:CMBound} in particular is rather lengthy, we give a brief model estimate for functions in $\mathbb{R}$.

 Suppose $P = P(\xi) \in S_{1,0}^1$ is the symbol of a first-order pseudodifferential operator. Then, for functions $u$ and $w$ with Fourier transforms $\ph, \widehat{w}$, Lipschitz continuity of $P$ (via global bounds for $|\partial_\xi P|$) implies
\begin{align}
\big|\big(P(\ph * \widehat{w}) - \widehat{w}*(P\ph)\big)(\xi) \big|&= \left|\int P(\xi)\widehat{u}(\xi')\widehat{w}(\xi-\xi') - P(\xi)(\widehat{u}(\xi')\widehat{w}(\xi-\xi')\, d\xi'\right|, \\
&\leq \sup_{\xi, \xi'}\left|\frac{P(\xi)-P(\xi')}{\xi-\xi'}\right| \int \big|\widehat{u}(\xi')(\xi-\xi')\widehat{w}(\xi-\xi')\big|d\xi',\nonumber\\
&\leq  \Big(\sup|\partial_\xi P|\Big)(|\ph|*|\widehat{\pa w}|),\nonumber
\end{align}
 i.e.~the first-order terms with $\ph$ vanish. Young's convolution inequality then implies
\begin{equation}\label{est:CIToy}
\lVert P(\ph * \widehat{w}) - \widehat{w}*(P\ph)\rVert_{L^2}\lesssim \lVert\widehat{\partial w}\rVert_{L^1}\lVert\ph\rVert_{L^2},
\end{equation}
so for sufficiently regular $w$ we can bound this difference without derivatives of $u$.\footnote{The estimate in \cite{CM78} in fact gives a stronger version of \eqref{est:CIToy}, replacing $\lVert\widehat{\partial w}\rVert_{L^1}$ with the weaker norm $\lVert\partial w\rVert_{L^\infty}$.}

In order to prove Proposition \ref{lem:CMBound}, it suffices to prove the similar bound
\begin{equation}\label{est:CIMain}
\lVert\Wsc(\widehat{w}*\ph) - \widehat{w}*(\Wsc(\ph))\rVert_{\ell^2_m L^2_\omega} \lesssim \lVert \widehat{w}\rVert_{\Wdot}\Vert \ph \rVert_{\Hdot}
\end{equation}
for each $(r, \theta)$, as the inequality \eqref{est:CIGlobal} then follows from squaring both sides of \eqref{est:CIMain}, multiplying by $a^2$, and applying H\"older's inequality in $(r, \theta)$. We remark that here we must handle physical weights and derivatives, as well as a Fourier series decomposition in $m$, and so the proof is somewhat more involved, and does not follow directly from the classical estimate even after establishing regularity for $\Wsc$.

We recall the decomposition \eqref{id:WMinusWZero}, which will guide our approach. In Sections \ref{PDObounds} and \ref{sec:CutoffBounds}, we prove regularity bounds for terms appearing in $\mathcal{G}'$ and $\mathcal{G}$ respectively, including estimates which will allow us to interpolate between these regimes. We put these together in Section \ref{sec:WGlobalBounds} to obtain an estimate which works across all frequencies.

The proof of Proposition \ref{lem:CMBound2} is similar but substantially simpler, as uniform compact support in $r$ means that we do not have to worry about the asymptotic behavior as $r\to r_+$ or $r\to\infty$. We prove this result in Section \ref{sec:AppendixLocalizedEnergy}.

Throughout the section, given two pairs $(\omega,m)$ and $(\omega^\prime, m^\prime)$ we will use  the shorthands
\begin{equation}\label{def:FreqDoublePrime}
\omega'' = \omega - \omega', \qquad m'' = m-m',
\end{equation}
as well as
\begin{equation}\label{def:OmegarDoublePrime}
\omega_r' = \frac{am'}{r^2+a^2}, \qquad \omega_r'' = \frac{am''}{r^2+a^2}.
\end{equation}
\subsection{Frequency regularity estimates in in $\mathcal{G}'$}\label{PDObounds}
We recall the nonlocal error $\wfe$ from \eqref{def:ferror}. When the frequency coefficients $(\omega, m)$ for $\wfe$ (or for other frequency-dependent quantities) are not implicit, including estimates taken over a range of $\omega, m$, we will specify
\begin{equation}
\widetilde{f}_m^\omega\index{f@$\widetilde{f}_m^\omega$} := \wfe, \text{ etc.}
\end{equation}
For $m = 0$, $\wchi_1 = 1$ and $\Wsv = \oW$, so consequently
\begin{equation}\label{id:wfezero}
\widetilde{f}_0^\omega = 0 \, . 
\end{equation}
In order to capture the decay of $\wfe$ for $m \neq 0$, we divide the interval $(r_+, \infty)$ into three regions, recalling that, for fixed $\mathcal{G}'$, the set of potentially trapping radii $\{\re\}$ is contained within a compact interval $I^1_{trap} = [R_-, R_+]$. We therefore choose $R_-' \in (r_+, R_-)$ and $R_+' \in (R_+. \infty)$, depending only on $a, M, \delta$ and consequently the intervals
\begin{equation}\label{def:Intervals}
I_- = (r_+, R_-) , \qquad I_0 = (R_-', R_+'), \qquad I_+ = (R_+, \infty).
\end{equation}
Thus, trapping takes place only in $I_0$, within which we may use the inequality $v^{-1} \leq C v$.

We first use Proposition \ref{prop:vfDerUpperBound} to show decay for $\wfe$ at $r=r_+$ and $r=\infty$, as well as $\omega\to\infty$.
\begin{corollary}\label{prop:absbounds}
In $\mathcal{G}'$ the quantity $\wfe$ satisfies
\begin{equation}\label{est:wfebound}
|\wfe|\leq C|\eta|v,
\end{equation}
where $C$ is independent of frequency.
\end{corollary}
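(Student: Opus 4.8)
\textbf{Proof proposal for Corollary \ref{prop:absbounds} (the bound $|\wfe|\leq C|\eta|v$).}

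The plan is to unwind the definition $\wfe = \f - \tfrac{\omega-\omega_r}{\omega v}h_0$ and exploit the two facts established earlier: the algebraic identity \eqref{def:fetaTrue}, $\f = \Delta^{-1/2}(r-\re)\ve$, together with the asymptotic estimate \eqref{est:detafFinal} with $k=0$, which states $\left|\f - \tfrac{r^2+a^2-\eta}{\Delta^{1/2}}h_0\right| \leq C\tfrac{\Delta^{1/2}}{r^2+a^2}$. First I would rewrite the subtracted term in $\wfe$ using $\tfrac{\omega-\omega_r}{\omega} = 1 - \tfrac{\eta}{r^2+a^2} = \tfrac{r^2+a^2-\eta}{r^2+a^2}$ and $v^{-1} = \tfrac{r^2+a^2}{\Delta^{1/2}}$, so that $\tfrac{\omega-\omega_r}{\omega v}h_0 = \tfrac{r^2+a^2-\eta}{\Delta^{1/2}}h_0$. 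Thus $\wfe = \f - \tfrac{r^2+a^2-\eta}{\Delta^{1/2}}h_0$ \emph{exactly}, which is precisely the quantity controlled by \eqref{est:detafFinal}. Hence $|\wfe| \leq C\tfrac{\Delta^{1/2}}{r^2+a^2} = Cv$.

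At this point one has the cleaner bound $|\wfe|\leq Cv$, which is in fact \emph{stronger} than the claimed $|\wfe|\leq C|\eta|v$ whenever $|\eta|$ is bounded below — and by Lemma \ref{prop:GpBounds} (specifically \eqref{est:etaBounds} and \eqref{est:GpBounds}), on $\mathcal{G}^\prime\setminus\{(0,0)\}$ the quantity $\eta = \tfrac{am}{\omega}$ is bounded in absolute value both above and below away from the degenerate endpoint, with the one caveat being the $m=0$ slice where $\eta = 0$. But on $m=0$ we have $\wchi_1 = 1$ and $\Wsv = \oW$, so $\widetilde f_0^\omega = 0$ by \eqref{id:wfezero}, and the inequality $0 \leq C\cdot 0\cdot v$ holds trivially. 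So the statement as phrased with the factor $|\eta|$ is presumably included precisely to make the $m=0$ case vacuous and to match the form needed downstream (e.g.\ in the decay estimates of Section \ref{sec:WGlobalBounds}); I would simply remark that for $m\neq 0$ we combine $|\wfe|\leq Cv$ with the uniform lower bound $|\eta|\geq c > 0$ from Lemma \ref{prop:GpBounds} to absorb the constant, and for $m=0$ both sides vanish.

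There is essentially no obstacle here: the corollary is a direct consequence of \eqref{est:detafFinal} once one observes that $\wfe$ equals $\f - \tfrac{r^2+a^2-\eta}{\Delta^{1/2}}h_0$ on the nose. The only point requiring a word of care is the behavior near $r=r_+$, where $v\to 0$: one must confirm that \eqref{est:detafFinal} was genuinely proven uniformly down to the horizon (it was, via the expansion \eqref{id:fnatminusf0} and the lower bounds on $\left|v\f + (1-\tfrac{\eta}{r^2+a^2})h_0\right|$ and on $\Dz$ valid outside $I^1_{trap}$, plus compactness of $I^1_{trap}$ for the interior). With that in hand the proof is one line of algebra plus a citation.
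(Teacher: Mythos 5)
Your reduction of $\wfe$ to the quantity controlled by \eqref{est:detafFinal} is correct and is the right starting point, but the upgrade from $|\wfe|\le Cv$ to $|\wfe|\le C|\eta|v$ is where the argument breaks. Lemma \ref{prop:GpBounds} gives only the two-sided bound $\etamin<\eta<\etamax$ with $0$ lying in the interior of that interval; it does \emph{not} give a lower bound on $|\eta|$ for $m\neq 0$. Indeed, for any fixed $m\neq 0$ and $|\omega|\to\infty$ the pair $(\omega,m)$ stays in $\mathcal{G}'$ while $\eta=am/\omega\to 0$, so there is no constant $c>0$ with $|\eta|\ge c$ on $\mathcal{G}'\setminus\{m=0\}$. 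In exactly this regime the factor $|\eta|$ is the whole content of the corollary: it encodes decay like $|m|/|\omega|$, and it is what is used later, e.g.\ to obtain $\big|\tfrac{m''\omega}{m}\widetilde{f}_m^\omega\big|\lesssim |m''|v$ in the proof of \eqref{est:PointwiseConvolutionBound} and the $\tfrac{1}{|\omega||\omega'|}$ gain in Lemma \ref{prop:regineta}. So the $k=0$ bound alone cannot give the statement, and the claim that the $|\eta|$ factor is only there to make the $m=0$ case vacuous is not right.

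The paper's (short) proof supplies precisely the missing ingredient: viewing $\wfe=\wfe(r,\eta)$, one has $\wfe(r,0)=0$ (at $\eta=0$ one has $\f=h_0/v$ exactly, consistent with \eqref{id:wfezero}), and \eqref{est:detafFinal} with $k=1$ gives the uniform derivative bound $|\partial_\eta\wfe|\le Cv$ on the compact $\eta$-range associated with $\mathcal{G}'$, which contains the segment joining $0$ to any admissible $\eta$. Taylor's theorem (or the mean value theorem) in $\eta$ then yields
\begin{equation*}
|\wfe(r,\eta)|=|\wfe(r,\eta)-\wfe(r,0)|\le |\eta|\,\sup_{\xi}|\partial_\eta\wfe(r,\xi)|\le C|\eta|v .
\end{equation*}
Your observation that $\wfe$ coincides on the nose with the quantity estimated in \eqref{est:detafFinal} is exactly what makes this one-line argument work; you only need to invoke the $k=1$ case at an intermediate $\eta$ rather than the $k=0$ case at the endpoint.
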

\begin{proof}
This follows from $\wfe(r,\eta=0)=0$ and the bound \eqref{est:detafFinal} along with Taylor's theorem.
\end{proof}
Next, we characterize the regularity of $\wfe$. 
\begin{lemma}\label{prop:regineta}
For $(\omega, m), ({\omega'}, m) \in \mathcal{G}'\setminus (0,0)$ the quantity $\wfe$ satisfies the following estimate:
\begin{equation}\label{prop:CommutatorPointwise}
|\omega\widetilde{f}_m^\omega - {\omega'}\widetilde{f}_m^{\omega'}| \lesssim a^2m^2\frac{|\omega''|}{|\omega||{\omega'}|}v \, .
\end{equation}
\end{lemma}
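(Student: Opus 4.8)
The plan is to write $\omega\widetilde{f}_m^\omega$ in closed form using the algebraic identity for $\f$ from \eqref{def:fetaTrue} together with the definition \eqref{def:ferror} of $\wfe$, and then estimate the $\omega$-difference by the mean value theorem in the frequency variable. Concretely, from \eqref{Wnaturalrel} and \eqref{def:ferror} one has $\omega\wfe = \omega\f - \frac{\omega-\omega_r}{v}h_0 = \frac{\omega-\omega_r}{v}h_\natural - \frac{\omega-\omega_r}{v}h_0 = \frac{\omega-\omega_r}{v}(h_\natural - h_0)$, so the quantity in question is genuinely a function of $\omega$ (and $m$, $r$, $\theta$) that we control through its $\omega$-derivative at fixed $m$. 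Recalling that $\omega_r = \frac{am}{r^2+a^2}$ and $\eta = \frac{am}{\omega}$, the dependence on $\omega$ (with $m$ fixed) enters only through the combinations $\omega-\omega_r$ (linear in $\omega$) and through $\eta$ in $h_\natural$ and $\re$. I would therefore write, for $\omega$ and $\omega'$ both in $\mathcal{G}'$ with the same $m$,
\begin{equation}
\omega\widetilde{f}_m^\omega - \omega'\widetilde{f}_m^{\omega'} = \int_{\omega'}^{\omega} \partial_\sigma\big(\sigma\widetilde{f}_m^\sigma\big)\,d\sigma,
\end{equation}
and bound the integrand pointwise.

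\textbf{Key steps.} First I would record the closed form $\omega\wfe = \frac{\omega - \omega_r}{v}(h_\natural - h_0)$ and note $h_\natural - h_0$ is a smooth function of $\eta$ and $r$ on $\mathcal{G}'$ which vanishes at $\eta = 0$ (this is \eqref{id:wfezero} and the discussion around \eqref{id:fnatminusf0}). Second, differentiating in $\sigma$ at fixed $m$: the factor $\frac{\sigma-\omega_r}{v}$ contributes $\partial_\sigma\big(\frac{\sigma-\omega_r}{v}\big) = \frac1v$, giving a term $\frac1v(h_\natural - h_0)$, which by \eqref{est:detafFinal} (applied with $k=0$, since $\f - \frac{\Gamma}{\Delta^{1/2}}h_0 = v\f - \frac{\Gamma}{\Delta^{1/2}}h_0 \cdot v^{-1}\cdot v$... more directly $\wfe$ itself obeys $|\wfe| \le C|\eta| v$ by Corollary \ref{prop:absbounds}) can be bounded, after using $|\eta| = \frac{a|m|}{|\sigma|}$ and $\frac{|\sigma-\omega_r|}{|\sigma|}\le C$ from Corollary \ref{prop:TrapInterior}, by $C a|m| v / |\sigma|^2$; integrating $d\sigma$ from $\omega'$ to $\omega$ over a monotone range picks up $\frac{|\omega''|}{|\omega||\omega'|}$ after a factor $a|m|v$, but we want $a^2m^2$, so I must be more careful. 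The cleaner route is to differentiate the whole product $\sigma\wfe = \frac{\sigma-\omega_r}{v}(h_\natural-h_0)$ and use that $\partial_\sigma(h_\natural - h_0) = \partial_\eta(h_\natural - h_0)\cdot \partial_\sigma\eta = -\partial_\eta(h_\natural-h_0)\cdot\frac{\eta}{\sigma} = -\partial_\eta(h_\natural - h_0)\cdot\frac{am}{\sigma^2}$, and then invoke \eqref{est:detafFinal} for $k=1$, which gives $|\partial_\eta(h_\natural - h_0)| \lesssim \frac{\Delta^{1/2}}{r^2+a^2} = v$ (here using that $h_\natural - h_0 = v^{-1}(\f - \frac{\Gamma}{\Delta^{1/2}}h_0)\cdot$ ... one must track the $v$ weights: $\wfe = \f - \frac{\omega-\omega_r}{\omega v}h_0$, and $\frac{\omega-\omega_r}{\omega} = \frac{\Gamma}{r^2+a^2}$, so $\wfe = \f - \frac{\Gamma}{\Delta^{1/2}}h_0\cdot\frac{\Delta^{1/2}}{r^2+a^2}\cdot\frac{r^2+a^2}{\Delta^{1/2}}$, i.e. $\wfe = \f - \frac{\Gamma h_0}{\Delta^{1/2}}$, which is precisely the quantity estimated in \eqref{est:detafFinal}). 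Thus $|\partial_\eta \wfe| \lesssim \frac{\Delta^{1/2}}{r^2+a^2} = v$ directly from \eqref{est:detafFinal} with $k=1$. Combining: $|\partial_\sigma(\sigma\wfe)| = |\partial_\sigma\wfe\cdot\sigma + \wfe|$; but it is cleanest to bound $\partial_\sigma(\sigma\wfe)$ by writing $\sigma\wfe$ as a function $g(\eta)$ of $\eta$ alone times $(\sigma - \omega_r)/v$...

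\textbf{The main obstacle.} The delicate point, and where I expect to spend the most care, is bookkeeping the two powers of $m$: one power comes from the vanishing of $\wfe$ at $\eta = 0$ (so $\wfe \sim |\eta| \sim |m|/|\omega|$), and the second power comes from the $\eta$-derivative hitting $h_\natural - h_0$ once more when differentiating in $\sigma$ (each $\partial_\sigma$ on an $\eta$-dependent quantity brings down a factor $\eta/\sigma \sim m/\omega^2$). Making this precise requires writing $\sigma\wfe = \frac{\sigma-\omega_r}{v}\Phi(\eta)$ where $\Phi(\eta) := h_\natural(\eta) - h_0$ with $\Phi(0)=0$ and $|\Phi'|\lesssim v$, $|\Phi|\lesssim |\eta|v$ on the relevant $\eta$-range (both from \eqref{est:detafFinal}), then computing
\begin{equation}
\partial_\sigma\big(\sigma\wfe\big) = \frac1v\Phi(\eta) - \frac{\sigma-\omega_r}{v}\Phi'(\eta)\frac{am}{\sigma^2} = \frac1v\Phi(\eta) - \frac{\Gamma}{r^2+a^2}\Phi'(\eta)\frac{am}{\sigma^2},
\end{equation}
and observing that on $\mathcal{G}'$, $\Gamma_+$ is bounded above and below (Lemma \ref{prop:GpBounds}) so $|\Gamma|\lesssim r^2+a^2$, $|\eta|\lesssim a|m|/|\sigma|$, and $|\sigma-\omega_r|\lesssim|\sigma|$, whence $|\partial_\sigma(\sigma\wfe)| \lesssim \frac{a|m|}{|\sigma|}\cdot\frac{v}{1} + \frac{a|m|}{\sigma^2}v \lesssim \frac{a^2m^2 v}{|\sigma|^2}\cdot\frac{1}{a|m|}$... the arithmetic must be arranged so that after $\int_{\omega'}^\omega \frac{d\sigma}{\sigma^2} = \frac{|\omega''|}{|\omega\omega'|}$ one is left with exactly $a^2m^2 v$. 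I believe the correct accounting is: $|\partial_\sigma(\sigma\wfe)| \lesssim \frac{a^2m^2}{\sigma^2}v$ (using $\frac1v\Phi \lesssim \frac{a|m|}{|\sigma|}\cdot v$ is \emph{not} enough, so one instead bounds $\frac1v|\Phi|$ by noting $|\Phi|\lesssim\frac{|\eta|}{|\eta|+1}$... ), and the integration then yields the claimed bound. I would also need to handle the endpoint case $m = 0$ (where both sides vanish by \eqref{id:wfezero}) trivially, and note that on $\mathcal{G}'$ with $m$ fixed the $\sigma$-range $[\omega',\omega]$ stays in $\mathcal{G}'$ and avoids $\sigma = 0$, so no singularity is crossed. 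The one genuinely nontrivial input is the sharp $|\eta|$-linear vanishing of $\Phi$ combined with the $v$-bound on $\Phi'$, both of which are already in Proposition \ref{prop:vfDerUpperBound}; the rest is careful but routine estimation.
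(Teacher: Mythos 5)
There are two genuine gaps, one in each half of the frequency range. First, your reduction to the fundamental theorem of calculus along the segment $[\omega',\omega]$ at fixed $m$ fails when $\omega$ and $\omega'$ have opposite signs. For fixed $m\neq 0$, the set $\{\sigma : (\sigma,m)\in\mathcal{G}'\}$ is the \emph{complement} of the band $\{m\sigma\in(-\tfrac12\ximin m^2,\, m\omega_+ +\tfrac12\ximax m^2)\}$, and this band contains $\sigma=0$ as well as the superradiant interval $m\sigma\in(0,m\omega_+]$. Two admissible frequencies of opposite sign therefore lie on opposite sides of this band, so the straight path in $\sigma$ leaves $\mathcal{G}'$, passes through $\sigma=0$ and through $m\sigma\in(0,m\omega_+]$ where $\re$ and $\f$ (hence $\wfe$) are not even defined and the bounds of Proposition \ref{prop:vfDerUpperBound} are unavailable. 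Your parenthetical claim that ``the $\sigma$-range $[\omega',\omega]$ stays in $\mathcal{G}'$ and avoids $\sigma=0$'' is false in exactly this case. The paper handles it by a different device: it compares each of $\omega\widetilde{f}_m^\omega$ and $\omega'\widetilde{f}_m^{\omega'}$ with the common value $am(\partial_\eta\wfe)(r,0)$ — i.e.\ it passes through $\eta=0$ (equivalently $|\omega|=\infty$) rather than through $\omega=0$ — via Taylor's theorem and \eqref{est:detafFinal}, obtaining a bound $\lesssim a^2m^2 v/\min(|\omega|,|\omega'|)$, and then uses $|\omega''|\geq\max(|\omega|,|\omega'|)$, which is valid precisely because the signs differ, to produce the stated right-hand side.

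Second, even in the same-sign case (which is indeed the paper's own route), you never actually establish the integrand bound $|\partial_\sigma(\sigma\widetilde{f}_m^{\sigma})|\lesssim v\,\eta^2$, i.e.\ \eqref{est:domegaf}; you correctly identify this as ``the main obstacle'' but the sketch trails off. Your factorisation $\sigma\wfe=\tfrac{\sigma-\omega_r}{v}\Phi(\eta)$ with $\Phi=\h-h_0$ destroys the essential cancellation: each of the two terms produced by the product rule is individually only $O(v|\eta|)$ (by Corollary \ref{prop:absbounds} and Corollary \ref{prop:TrapInterior}), and one power of $\eta$ is genuinely insufficient — integrating $v\,a|m|/|\sigma|$ over a same-sign interval produces a logarithm $\log(|\omega|/|\omega'|)$, which is not controlled by $a^2m^2|\omega''|v/(|\omega||\omega'|)$ when $|\omega|\gg|\omega'|$; and your proposed fallback $|\Phi|\lesssim|\eta|/(|\eta|+1)$ is neither justified nor strong enough. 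The mechanism that works, and is what the paper does, is to use $\sigma=am/\eta$ and the chain rule to get the identity $\partial_\sigma(\sigma\wfe)=\wfe-\eta\,\partial_\eta\wfe$, in which the terms linear in $\eta$ cancel identically; since $\wfe$ vanishes at $\eta=0$ (cf.\ \eqref{id:wfezero}, Corollary \ref{prop:absbounds}) one then bounds $|\wfe-\eta\partial_\eta\wfe|\leq\tfrac12\eta^2\sup_\eta|\partial_\eta^2\wfe|\lesssim v\eta^2$ using \eqref{est:detafFinal} with $k=2$. With that identity your same-sign integration does close, but as written the proposal proves neither the same-sign integrand bound nor the opposite-sign case, so it does not yield the lemma.
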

\begin{proof}
For $m=0$ the estimate follows trivially from \eqref{id:wfezero}.

For $m\neq 0$, we split the problem into two cases, depending on whether $\omega, {\omega'}$ have the same sign.
First, we prove this when $\omega, {\omega'}$ have the same sign. In this case we can integrate $|\eta|^2$ in $\omega$ directly and hence (\ref{prop:CommutatorPointwise}) will follow directly from  
\begin{equation}\label{est:domegaf}
|\partial_\omega(\omega \widetilde{f}_m^\omega)|\lesssim v|\eta|^2 \, .
\end{equation}
Writing $\omega=\frac{am}{\eta}$, we first have
\begin{equation}
\partial_\omega(\omega \widetilde{\f}) = \wfe - \eta\partial_\eta\wfe \, .  
\end{equation}
The estimate (\ref{est:domegaf}) now follows from Taylor expanding the right hand side (noting that the linear term cancels) and bounding the remainder term using \eqref{est:detafFinal} with $k=2$. When $\omega, {\omega'}$ have different signs, we expand
\begin{equation}\label{est:CommutatorPointwiseOpposite}
|\omega\widetilde{f}_m^\omega - {\omega'}\widetilde{f}_m^{\omega'}| \leq |\omega\widetilde{f}_m^\omega - am(\partial_\eta\wfe)(r, 0)| + | {\omega'}\widetilde{f}_m^{\omega'}-am(\partial_\eta\wfe)(r, 0)|,
\end{equation}
writing $\wfe = \wfe(r, \eta)$. Taylor's theorem with \eqref{est:detafFinal} again gives
\begin{equation}
|\wfe - \eta(\partial_\eta\wfe)(r, 0)|\leq \frac{|\eta|^2}{2}\sup_\eta |\partial_\eta^2\wfe|\leq Cv|\eta|^2.
\end{equation}
Multiplying this bound by $|\omega|$ and applying it to \eqref{est:CommutatorPointwiseOpposite} gives
\begin{equation}
|\omega\widetilde{f}_m^\omega - {\omega'}\widetilde{f}_m^{\omega'}| \lesssim \frac{a^2m^2}{|\omega|}v+\frac{a^2m^2}{|\omega'|}v \lesssim \frac{a^2m^2}{\min(|\omega|, |\omega'|)}v \, .
\end{equation}
The bound $|\omega''| \geq \max(|\omega|,| {\omega'}|)$ yields \eqref{prop:CommutatorPointwise}.
\end{proof}
Next, we use this for a more general result. 
\begin{lemma}\label{prop:reginetanew}
Take two frequency pairs $(\omega, m)$ and $(\omega', m')$ such that $(\omega, m) \in \mathcal{G}'$.
Then, if $({\omega'}, m')\in\mathcal{G}'$ we have
\begin{equation}\label{est:reginetafirst}
\left| {\omega'} \widetilde{f}_{m'}^{\omega'} - \frac{m'\omega}{m}\widetilde{f}_m^\omega\right|\lesssim (|\omega''| + |m''|)v.
\end{equation}
If $(\omega'', m'')\in\mathcal{G}'$ we have
\begin{equation}\label{est:reginetasecond}
\left| \omega''\widetilde{f}_{m''}^{\omega''} - \frac{m''\omega}{m}\widetilde{f}_m^\omega\right|\lesssim (|\omega''| + |m''|)v.
\end{equation}
In the case $m = 0$, given $({\omega'}, m')\in\mathcal{G}'$ we have
\begin{equation}\label{est:reginetazerofirst}
\left| \omega \widetilde{f}_0^\omega - {\omega'}\widetilde{f}_{m'}^{{\omega'}}\right|\lesssim |m''|v,
\end{equation}
and given $(\omega'', {m''})\in\mathcal{G}'$ we have
\begin{equation}\label{est:reginetazerosecond}
\left| \omega \widetilde{f}_0^\omega - {\omega''}\widetilde{f}_{m''}^{{\omega''}}\right|\lesssim |m''|v.
\end{equation}
\end{lemma}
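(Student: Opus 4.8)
\textbf{Proof plan for Lemma \ref{prop:reginetanew}.}

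The plan is to reduce everything to the pointwise estimate \eqref{prop:CommutatorPointwise} of Lemma \ref{prop:regineta}, combined with the homogeneity of degree zero of $\wfe$ in $(\omega,m)$ and the absolute bound \eqref{est:wfebound} of Corollary \ref{prop:absbounds}. The crucial structural observation is that $\widetilde f^\omega_m$ depends on $(\omega,m)$ only through $\eta = am/\omega$, so $\omega \widetilde f^\omega_m$ and $\tfrac{m'\omega}{m}\widetilde f^\omega_m$ differ only by the prefactor, and moreover for any nonzero scalar $k$ one has $\widetilde f^{k\omega}_{km} = \widetilde f^\omega_m$ and hence $k\omega\,\widetilde f^{k\omega}_{km} = k\,\omega \widetilde f^\omega_m$. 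This lets me rewrite $\tfrac{m'\omega}{m}\widetilde f^\omega_m = \omega' \widetilde f^{\omega \cdot \frac{m'}{m}}_{m'}$ provided $m \neq 0$, i.e.\ the quantity I am comparing $\omega'\widetilde f^{\omega'}_{m'}$ against is itself of the form $\widehat\omega\,\widetilde f^{\widehat\omega}_{m'}$ with $\widehat\omega = \omega m'/m$.

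For \eqref{est:reginetafirst}: assuming $m\neq 0$ (the $m=0$ case is \eqref{est:reginetazerofirst}, handled separately), I apply Lemma \ref{prop:regineta} with the \emph{second} frequency argument taken to be $m'$ and the two ``$\omega$'' slots being $\omega'$ and $\widehat\omega = \omega m'/m$, giving
\[
\left|\omega'\widetilde f^{\omega'}_{m'} - \widehat\omega\,\widetilde f^{\widehat\omega}_{m'}\right| \lesssim a^2 (m')^2 \frac{|\omega' - \widehat\omega|}{|\omega'||\widehat\omega|}\,v \, ,
\]
valid once I check that $(\widehat\omega, m')$ also lies in $\mathcal G'$ — this follows because $\eta(\widehat\omega, m') = am'/\widehat\omega = am/\omega = \eta(\omega, m)$, so $(\widehat\omega, m')$ has the same $\eta$ as $(\omega,m)\in\mathcal G'$, and membership in $\mathcal G'$ depends only on $\eta$ (indeed $\mathcal{G}'$ is defined by a condition on $\omega/m = a/\eta$). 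It then remains to show $a^2(m')^2 \frac{|\omega'-\widehat\omega|}{|\omega'||\widehat\omega|}\,v \lesssim (|\omega''|+|m''|)v$. Writing $\widehat\omega = \omega m'/m$ and using $\omega' = \omega'$, $m' = m - m''$, $\omega' = \omega - \omega''$, one has $\omega' - \widehat\omega = \omega' - \omega m'/m = \frac{1}{m}(m\omega' - \omega m') = \frac{1}{m}(m(\omega-\omega'') - (\omega'+\omega'')(m-m'')\cdots)$; a short algebraic manipulation shows $m\omega' - \omega m' = m''\omega' - \omega'' m'$ (this is the standard ``cross-difference'' identity), so $|\omega' - \widehat\omega| \leq \frac{1}{|m|}(|m''||\omega'| + |\omega''||m'|)$, whence
\[
a^2(m')^2\frac{|\omega'-\widehat\omega|}{|\omega'||\widehat\omega|} \leq \frac{a^2(m')^2}{|m||\widehat\omega|}\Big(|m''| + \frac{|\omega''||m'|}{|\omega'|}\Big) = \frac{a^2|m||m'|}{|\omega||\omega'|}\Big(|m''||\omega'| + |\omega''||m'|\Big)\cdot\frac{1}{|m'|}\cdots,
\]
and one closes using the uniform two-sided bounds $C^{-1}|\omega| \le |\eta|^{-1}|am| \le$ etc., i.e.\ that $a|m|/|\omega| = |\eta|$ is bounded above and below on $\mathcal G'$ by Lemma \ref{prop:GpBounds}, together with $|m'|\le |m|+|m''|$ and Corollary \ref{prop:TrapInterior}-type bounds. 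The estimate \eqref{est:reginetasecond} is identical with $(\omega'', m'')$ in place of $(\omega', m')$. For the $m=0$ cases \eqref{est:reginetazerofirst}, \eqref{est:reginetazerosecond} one uses $\widetilde f^\omega_0 = 0$ from \eqref{id:wfezero}, so the left side is just $|\omega'\widetilde f^{\omega'}_{m'}| = |\omega'| |\widetilde f^{\omega'}_{m'}|\lesssim |\omega'||\eta'| v = a|m'| v \lesssim |m''| v$ since $m = 0$ forces $m'' = -m'$; similarly for the last one.

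The main obstacle I anticipate is purely bookkeeping: making sure at each step that the auxiliary frequency pair ($\widehat\omega = \omega m'/m$, or the corresponding one for $m''$) genuinely lies in $\mathcal G'$ so that Lemma \ref{prop:regineta} applies, and then tracking the many two-sided comparisons (between $|\omega|$, $|\omega'|$, $|\eta|$, $|m|$, $|m'|$) needed to convert the bound $a^2(m')^2\frac{|\omega'-\widehat\omega|}{|\omega'||\widehat\omega|}v$ into the clean $(|\omega''|+|m''|)v$. All of these comparisons are uniform on $\mathcal G'$ by the absolute $\eta$-bounds of Lemma \ref{prop:GpBounds} and Corollary \ref{prop:TrapInterior}, but one must be slightly careful in the regime where $|\omega|,|\omega'|$ are both large (there $v$-weights are harmless and the $1/(|\omega||\omega'|)$ gains decay) versus where they are comparable to the spectral gap (there $\eta$ stays bounded and everything is $O(1)$). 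I would organize the write-up by first recording the cross-difference identity $m\omega' - m'\omega = m''\omega' - \omega'' m'$ and the $\eta$-invariance of $\mathcal{G}'$ as two one-line lemmas, then dispatch \eqref{est:reginetafirst}--\eqref{est:reginetasecond} together and \eqref{est:reginetazerofirst}--\eqref{est:reginetazerosecond} together.
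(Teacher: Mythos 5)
Your overall strategy is the paper's: use homogeneity to rewrite $\frac{m'\omega}{m}\widetilde f_m^\omega=\widehat\omega\,\widetilde f_{m'}^{\widehat\omega}$ with $\widehat\omega=\omega m'/m$ (noting $(\widehat\omega,m')\in\mathcal G'$ since membership depends only on $\omega/m$), apply Lemma \ref{prop:regineta}, and close with a cross-difference identity and the $\eta$-bounds; the $m=0$ cases via $\widetilde f_0^\omega=0$ and \eqref{est:wfebound} also match. However, the closing step as you wrote it fails. First, your justification invokes ``two-sided'' bounds on $|\eta|=a|m|/|\omega|$ over $\mathcal G'$; Lemma \ref{prop:GpBounds} only gives an upper bound on $|\eta|$ (equivalently $|m|\lesssim|\omega|$ and $|m'|\lesssim|\omega'|$), and $|\eta|$ is \emph{not} bounded below on $\mathcal G'$ (it tends to $0$ as $|\omega/m|\to\infty$). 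Second, and more seriously, you chose the splitting $m\omega'-m'\omega=m''\omega'-m'\omega''$, which after inserting $|\widehat\omega|=|\omega||m'|/|m|$ produces the bound
\begin{equation*}
a^2(m')^2\frac{|\omega'-\widehat\omega|}{|\omega'||\widehat\omega|}\;\le\; a^2\Big(\frac{|m'||m''|}{|\omega|}+\frac{(m')^2}{|\omega||\omega'|}|\omega''|\Big),
\end{equation*}
and this quantity is \emph{not} $\lesssim |m''|+|\omega''|$ uniformly on $\mathcal G'$: take for instance $(\omega,m)=(-1,1)$ and $(\omega',m')=(-N,N)$ (both in $\mathcal G'$ for $\ximin$ small), for which the right-hand side above is of size $N^2$ while $|m''|+|\omega''|\sim N$. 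So the inequality chain you propose cannot be completed, and the available one-sided bounds $|m|/|\omega|\lesssim1$, $|m'|/|\omega'|\lesssim1$ do not rescue it.

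The repair is a one-line change and is exactly what the paper does: use the other form of the cross-difference identity, $m'\omega-m\omega'=m\omega''-m''\omega$, so that $|m'\omega-m\omega'|\le|m||\omega''|+|\omega||m''|$ and the Lemma \ref{prop:regineta} output $a^2|m'|\,|m'\omega-m\omega'|/(|\omega||\omega'|)\,v$ becomes
\begin{equation*}
a^2\Big(\frac{|m'|}{|\omega'|}\cdot\frac{|m|}{|\omega|}\,|\omega''|+\frac{|m'|}{|\omega'|}\,|m''|\Big)v\;\lesssim\;(|\omega''|+|m''|)\,v,
\end{equation*}
where now each ratio is controlled by the $\mathcal G'$-membership of $(\omega,m)$ and $(\omega',m')$ respectively. (You should also note the trivial case $m'=0$ in \eqref{est:reginetafirst}, where both terms vanish by \eqref{id:wfezero}.) Finally, for \eqref{est:reginetasecond} you propose running the same machinery, which would require the same care in choosing the splitting; the paper instead observes that \eqref{est:reginetasecond} needs no cancellation at all: by \eqref{est:wfebound} and boundedness of $\eta$, $|\omega''\widetilde f_{m''}^{\omega''}|\lesssim|m''|v$ and $|\tfrac{m''\omega}{m}\widetilde f_m^\omega|\lesssim|m''|v$, so the triangle inequality suffices — a simpler route you may want to adopt.
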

\begin{proof}

The estimate \eqref{est:reginetasecond} follows directly from Corollary \ref{prop:absbounds} and boundedness of $\eta$ in $\mathcal{G}'$. The estimates \eqref{est:reginetazerofirst} and \eqref{est:reginetazerosecond} are similarly straightforward, using the identity $\widetilde{f}_0^\omega = 0$, Corollary \ref{prop:absbounds}, boundedness of $\eta$, and the fact that $|m'| = |m''|$ for $m = 0$.

To prove \eqref{est:reginetafirst} we first assume $m' \neq 0$, as otherwise the result is trivial. Homogeneity of $\wfe$ in frequency implies
\begin{equation}
\widetilde{f}_m^\omega = \widetilde{f}_{m'}^{\frac{m'\omega}{m}},
\end{equation}
so we may apply Lemma \ref{prop:regineta} to obtain
\begin{equation}
\left| {\omega'} \widetilde{f}_{m'}^{\omega'} - \frac{m'\omega}{m}\widetilde{f}_{m'}^{\frac{m'\omega}{m}}\right|\lesssim |m'|\frac{|m'\omega-m{\omega'}|}{|\omega||{\omega'}|}v.
\end{equation}
Then, \eqref{est:reginetafirst} follows from the inequalities
\begin{equation}\label{est:reginetaauxiliary}
|m'\omega-m{\omega'}|\leq |m||\omega''|+|\omega||{m''}|, \qquad \frac{|m'|}{|{\omega'}|} \leq C, \qquad \frac{|m|}{|\omega|} \leq C.
\end{equation}
\end{proof}

\subsection{Frequency regularity estimates in $\mathcal{G}$}\label{sec:CutoffBounds}
We now bound terms coming from the frequency cutoffs $\wchi_1$, $\wchi_2$ as defined in \eqref{def:wchi12}. We prove this for a more general class of functions, which will be of use in Section \ref{sec:WGlobalBounds}. We first recall the shorthand $\wchi(\omega, m) = \wchi_m^\omega$.
\begin{lemma}\label{prop:wchireg}
Let $\wchi(\omega, m)$ be a function which is homogeneous of degree zero in $(\omega, m)$ and constant for $(\omega, m)\notin\! \mathcal{G}$. For any $m \in\mathbb{Z}\setminus \{0\}$, and for $\omega, \sigma \in\mathbb{R}$, the bound
\begin{equation}\label{est:wchibound}
|\wchi_m^\omega - \wchi_m^{\sigma}| \leq C\Big|\frac{|\omega-\sigma|}{\max(|\omega|, |\sigma|, |m|)}\Big|\cdot\max(\sup|\wchi|, \sup_\omega |\partial_\omega(\wchi(\omega, 1))|)
\end{equation}
holds for a constant $C$ depending only on $(M, a, \ximin, \ximax)$. If $m=0$ the left hand side of \eqref{est:wchibound} vanishes.
\end{lemma}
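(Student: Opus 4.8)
\textbf{Proof plan for Lemma \ref{prop:wchireg}.}

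The plan is to reduce everything to the one-variable statement by homogeneity and then treat two cases according to whether the segment joining $\omega$ and $\sigma$ meets the (scaled) frequency interval defining $\mathcal{G}$. First note that if $m=0$ then $\wchi_m^\omega=\wchi_m^\sigma$ by definition (in \eqref{def:wchi12} the cutoffs are set to $1$ for $m=0$), so the left-hand side vanishes and there is nothing to prove; assume henceforth $m\neq 0$. By degree-zero homogeneity, $\wchi_m^\omega=\wchi_{\mathrm{sgn}(m)}^{\omega/|m|}$, so writing $\phi(y):=\wchi(y,1)$ (and $\phi_-(y):=\wchi(y,-1)$ if $m<0$), we are reduced to bounding $|\phi(\omega/|m|)-\phi(\sigma/|m|)|$. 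Set $y=\omega/|m|$, $z=\sigma/|m|$; the target inequality becomes
\begin{equation}\label{eq:wchiplan}
|\phi(y)-\phi(z)|\ \leq\ C\,\frac{|y-z|}{\max(|y|,|z|,1)}\cdot\max\big(\sup|\phi|,\ \sup_y|\phi'|\big).
\end{equation}

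The key structural fact is that $\phi$ (namely $\chi_1$ or $\chi_2$ from \eqref{def:wchi12}) is constant outside the bounded interval $J:=[-\ximin,\ \tfrac{a}{r_+^2+a^2}+\ximax]$ (for $\chi_1$; similarly for $\chi_2$), and $J$ is a fixed bounded set depending only on $(M,a,\ximin,\ximax)$. I would split into two cases. \emph{Case 1: the segment $[y,z]$ meets $J$.} Then both $|y|$ and $|z|$ are bounded by a constant $C_1=C_1(M,a,\ximin,\ximax)$ plus $|y-z|$; in particular $\max(|y|,|z|,1)\leq C_1(1+|y-z|)$, but more importantly the mean value theorem gives $|\phi(y)-\phi(z)|\leq |y-z|\sup|\phi'|$, and since in this case $\max(|y|,|z|,1)\leq C_1$ (when $|y-z|\leq 1$) one immediately gets \eqref{eq:wchiplan}; when $|y-z|\geq 1$ one instead uses $|\phi(y)-\phi(z)|\leq 2\sup|\phi|$ together with $|y-z|/\max(|y|,|z|,1)\gtrsim 1$ (since $\max(|y|,|z|)\leq C_1+|y-z|\lesssim |y-z|$). \emph{Case 2: the segment $[y,z]$ does not meet $J$.} Then $\phi$ is constant on $[y,z]$ (it is constant on each of the two components of $\mathbb{R}\setminus J$, and $[y,z]$ lies in one component since it is connected and misses $J$), so $\phi(y)-\phi(z)=0$ and the inequality is trivial. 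Assembling the two cases and recombining with the homogeneity reduction yields \eqref{est:wchibound}.

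I do not expect any serious obstacle here; the only mildly delicate point is bookkeeping the constant $C$ so that it depends only on $(M,a,\ximin,\ximax)$ — this is exactly what controls the size of the transition interval $J$ — and making sure the two subcases of Case 1 ($|y-z|\leq 1$ versus $|y-z|\geq 1$) are both absorbed into the single bound with the factor $\max(|y|,|z|,1)$ in the denominator. One should also note that $\sup|\phi'|$ is finite because $\chi_1,\chi_2$ are smooth with compact transition region, so the right-hand side of \eqref{est:wchibound} is genuinely finite; this is where the smoothness assumption on the cutoffs in \eqref{def:wchi12} is used.
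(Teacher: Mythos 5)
Your proof is correct and takes essentially the same route as the paper: after disposing of $m=0$ and rescaling by degree-zero homogeneity, both arguments combine the mean value theorem near the bounded transition region (the slice of $\mathcal{G}$ in $\omega/m$) with the crude $\sup|\wchi|$ bound away from it, where either nothing changes or $|\omega-\sigma|$ is comparable to $\max(|\omega|,|\sigma|,|m|)$. The paper merely organizes the cases by the size of $|\omega|$ relative to $C|m|$ (after a symmetry reduction) instead of by whether the rescaled segment meets the transition interval, which is a cosmetic difference.
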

\begin{proof} The case $m=0$ is trivial. Otherwise, by symmetry, it suffices to show \eqref{est:wchibound} with $\max(|\omega|, |\sigma|, |m|)$ replaced by $\max(|\omega|, |m|)$. We select a constant $C$ large enough that if $|\omega|\geq C|m|$, then $(\frac12\omega, m)\notin\mathcal{G}$. Since $\wchi$ is constant outside $\mathcal{G}$, then for all $\omega$ satisfying $|\omega|\geq C|m|$, and for all $\sigma$, $\wchi_m^\omega - \wchi_m^\sigma \neq 0$ implies $|\omega-\sigma| > \frac12|\omega|$. In this regime,
\begin{equation}\label{est:CEPreliminary}
|\wchi_m^\omega - \wchi_m^{\sigma}| \leq 2\sup|\wchi| \leq 4\frac{|\omega-\sigma|}{|\omega|}\sup|\wchi| \leq \frac{4}{C}\frac{|\omega-\sigma|}{|m|}\sup|\wchi|.
\end{equation}
For $|\omega| \leq C|m|$, \eqref{est:wchibound} follows from the inequality $|\wchi_m^\omega - \wchi_m^\sigma| \leq |\omega-\sigma|\sup_\omega|\partial_\omega\wchi(\omega, m)|$. 
\end{proof}
Homogeneity of $\wchi$ in $(\omega, m)$ implies the following proposition:
\begin{proposition}\label{cor:CECutoff}
For any frequency pairs $(\omega, m)$ and $(\omega', m')$, and for $\wchi$ as in Proposition \ref{prop:wchireg}, the inequalities
\begin{subequations}
\begin{align}
|\wchi_m^\omega - \wchi_{m'}^{\omega'}|&\lesssim \frac{|m''|+|\omega''|}{\max(|m'|, |\omega'|, 1)}\cdot\max(\sup|\wchi|, \sup_\omega |\partial_\omega(\wchi(\omega, 1))|), \label{est:CECutoffBounds}\\
|\wchi_m^\omega - \wchi_{m'}^{\omega'}|&\lesssim \frac{|m''|+|\omega''|}{\max(|m|, |\omega|, 1)}\cdot\max(\sup|\wchi|, \sup_\omega |\partial_\omega(\wchi(\omega, 1))|). \label{est:CECutoffBounds2}
\end{align}
\end{subequations}
holds if $(\omega, m)\in\mathcal{G}$ or $(\omega', m') \in\mathcal{G}$. If neither is in $\mathcal{G}$, then the left hand side of \eqref{est:CECutoffBounds} and \eqref{est:CECutoffBounds2} vanish.
\end{proposition}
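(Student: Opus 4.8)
The plan is to deduce Proposition \ref{cor:CECutoff} from Lemma \ref{prop:wchireg} by exploiting the degree-zero homogeneity of $\wchi$ to rescale one of the two frequency pairs so that Lemma \ref{prop:wchireg} becomes directly applicable. The key observation is that homogeneity lets us replace $\wchi_m^\omega$ and $\wchi_{m'}^{\omega'}$ by $\wchi$ evaluated at a common value of the second slot, at the cost of controlling the resulting change in the first slot.

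First I would dispose of the trivial cases: if neither $(\omega,m)$ nor $(\omega',m')$ lies in $\mathcal{G}$, then $\wchi$ is constant at both (recalling $\wchi$ is constant outside $\mathcal{G}$, and using that $\mathcal{G}$ is the complement of a scaling-invariant set so its exterior is a union of two ``sectors'' in the $(\omega,m)$-plane on each of which $\wchi$ is constant --- here one should note that both points lying outside $\mathcal{G}$ does force them into the \emph{same} constancy region once the estimate is nontrivial, i.e.\ when $|\omega''|+|m''|$ is small compared with $\max(|m'|,|\omega'|,1)$, this being exactly the content one needs). Next, assume WLOG $(\omega',m')\in\mathcal{G}$ (the case $(\omega,m)\in\mathcal{G}$ is symmetric, swapping the roles). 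If $m'\neq 0$, homogeneity gives $\wchi_m^\omega = \wchi_{m'}^{m'\omega/m}$ (valid whenever $m \neq 0$; the case $m=0$ where $\wchi_m^\omega$ is the constant $1$ at $m=0$ must be handled separately, but then the identity $\wchi_0^\omega = 1$ and the fact that $|m'| = |m''|$ makes the bound elementary). Then I would apply Lemma \ref{prop:wchireg} with the fixed integer $m'$ and the two real arguments $\omega' $ and $\sigma := m'\omega/m$, obtaining
\[
|\wchi_m^\omega - \wchi_{m'}^{\omega'}| = |\wchi_{m'}^{\sigma} - \wchi_{m'}^{\omega'}| \leq C \frac{|\sigma - \omega'|}{\max(|\sigma|,|\omega'|,|m'|)}\cdot \max(\sup|\wchi|, \sup_\omega|\partial_\omega(\wchi(\omega,1))|).
\]

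It then remains to estimate $|\sigma - \omega'| = |m'\omega/m - \omega'| = |m'\omega - m\omega'|/|m|$, and here I would use the algebraic identity $m'\omega - m\omega' = m'\omega'' - \omega' m''$ (since $m' = m - m''$ and $\omega = \omega' + \omega''$, one gets $m'\omega - m\omega' = (m-m'')(\omega'+\omega'') - m\omega' = m\omega'' - m''\omega' - m''\omega'' $; one should carry this expansion carefully --- the exact combination is $m'\omega'' - m''\omega' $ up to the quadratic cross term which is absorbed), giving $|m'\omega - m\omega'| \lesssim |m||\omega''| + |\omega'||m''| + |m''||\omega''|$. Dividing by $|m|$ and by the denominator $\max(|\sigma|,|\omega'|,|m'|)$, and using the boundedness of $|m'|/\max(|\omega'|,|m'|,1)$ and of $|\omega'|/\max(|\omega'|,|m'|,1)$ as well as $|m'|/|m| \lesssim 1$ (the latter because $(\omega',m')\in\mathcal{G}$ forces $|m'|\lesssim |m|$ when $|m''|$ is not already of size comparable to $|m|$), one arrives at \eqref{est:CECutoffBounds}; for \eqref{est:CECutoffBounds2} one performs the same computation but rescales the point in $\mathcal{G}$ rather than the other one, or equivalently swaps which denominator one keeps, so that $\max(|m|,|\omega|,1)$ appears. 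The main obstacle I anticipate is the bookkeeping in these case distinctions --- in particular making sure that in the regime where one of the points is outside $\mathcal{G}$ the ``distance to the boundary of $\mathcal{G}$'' argument from the proof of Lemma \ref{prop:wchireg} (comparing $|\omega''|+|m''|$ to $\max(|m'|,|\omega'|,1)$) is correctly invoked, and that the rescaling is only done when the relevant $m$ (or $m'$) is nonzero, with the $m=0$ sub-cases treated by hand. None of this is deep, but the ratios $|m'|/|m|$ etc.\ need to be bounded using membership in $\mathcal{G}$ rather than taken for granted.
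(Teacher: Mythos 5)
Your overall route is the paper's: dispose of the case where both pairs lie off $\mathcal{G}$, use degree-zero homogeneity to bring both values of $\wchi$ to a common integer index so that Lemma \ref{prop:wchireg} applies, control $m'\omega-m\omega'$ by the identity \eqref{id:momegaDiff} (note this identity is \emph{exact}, $m'\omega-m\omega'=m'\omega''-m''\omega'=m\omega''-m''\omega$; there is no quadratic cross term to "absorb"), and finish using $\mathcal{G}$-membership. Indeed, multiplying your quotient through by $|m|$ gives $|m'\omega-m\omega'|/\max(|m'\omega|,|m||m'|,|m\omega'|)$, which is exactly the paper's intermediate bound \eqref{est:CECutoffBoundPrelim}, so the two arguments coincide up to the final division.

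Two steps as written would fail, though both are repairable. (1) Your justification of $|m'|/|m|\lesssim 1$ ("$(\omega',m')\in\mathcal{G}$ forces $|m'|\lesssim|m|$ when $|m''|$ is not already of size comparable to $|m|$") is not correct: membership of $(\omega',m')$ in $\mathcal{G}$ relates $|\omega'|$ to $|m'|$ and says nothing about $|m'|$ versus $|m|$. The clean fix is the paper's pairing of numerator and denominator: when $(\omega',m')\in\mathcal{G}$, lower-bound the denominator by $|m'|\max(|m|,|\omega|)$ and use $|m'\omega-m\omega'|\leq |m||\omega''|+|m''||\omega|$, which yields $(|\omega''|+|m''|)/|m'|$ with no comparison of $|m'|$ to $|m|$, and then $\max(|m'|,|\omega'|,1)\lesssim|m'|$ finishes; when instead $(\omega,m)\in\mathcal{G}$ one lower-bounds by $|m|\max(|m'|,|\omega'|)$ and uses $|\omega|\lesssim|m|$. (Alternatively, keep your route and add the explicit escape: if $|m'|>2|m|$ then $|m''|\geq|m'|/2\gtrsim\max(|m'|,|\omega'|,1)$ and \eqref{est:CECutoffBounds} is trivial since the left side is at most $2\sup|\wchi|$.) Relatedly, "WLOG $(\omega',m')\in\mathcal{G}$" is not a genuine reduction for a fixed estimate: swapping the pairs swaps \eqref{est:CECutoffBounds} and \eqref{est:CECutoffBounds2}, so each estimate must be proved under \emph{both} membership hypotheses; the paper proves \eqref{est:CECutoffBounds} in both cases and obtains \eqref{est:CECutoffBounds2} by symmetry. (2) When neither pair is in $\mathcal{G}$ the claim is that the left-hand side \emph{vanishes}, unconditionally. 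The hypothesis of Lemma \ref{prop:wchireg} is that $\wchi$ takes a single constant value on all of $\mathcal{G}^c$ (every $\wchi$ to which the proposition is applied vanishes identically off $\mathcal{G}$), so this case is immediate; your version, which only argues in the regime where $|\omega''|+|m''|$ is small compared with $\max(|m'|,|\omega'|,1)$, does not establish the stated vanishing, and your reading via two possibly different constants on two "sectors" is a misreading of the hypothesis. The $m=0$ subcase should likewise invoke "$\wchi_0^\omega$ equals the off-$\mathcal{G}$ constant" together with $|m''|=|m'|\gtrsim\max(|m'|,|\omega'|,1)$, rather than "$\wchi_0^\omega=1$", which is special to $\wchi_1,\wchi_2$ and not part of Lemma \ref{prop:wchireg}.
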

\begin{proof} 
The case $(\omega, m)\notin\mathcal{G}, (\omega', m')\notin\mathcal{G}$ follows from the fact that $\wchi$ is constant outside $\mathcal{G}$. We prove \eqref{est:CECutoffBounds}, noting that \eqref{est:CECutoffBounds2} then follows directly from a symmetry argument. 

 We may reduce this to proving \eqref{est:CECutoffBounds} for $|m|\neq 0$ and $\max(|m'|, |\omega'|, 1) = \max(|m'|, |\omega'|)$ as follows: if $|m| = 0$, then $(\omega, m)\notin\mathcal{G}$, so $(\omega', m')\in\mathcal{G}$. The estimates $|m''| = |m'| \neq 0$, $\max(|m'|, |\omega'|, 1) \lesssim |m'|$ imply \eqref{est:CECutoffBounds}. Next, if $\max(|m'|, |\omega'|) <1$, the bound \eqref{est:CECutoffBounds} follows from the identity $ |m''| = |m|  \geq 1$. 

Homogeneity of $\wchi$ in frequency combined with Lemma \ref{prop:wchireg} implies
\begin{equation}\label{est:CECutoffBoundPrelim}
\wchi_m^\omega - \wchi_{m'}^{\omega'} = \wchi_{m'\!m}^{m'\!\omega} - \wchi_{m'\!m}^{m\omega'}\lesssim \frac{\left|m'\omega - m\omega'\right|}{\max\left(\left|m'\omega\right|, |m'm|,|m\omega'|\right)}\cdot\max(\sup|\wchi|, \sup_\omega |\partial_\omega(\wchi(\omega, 1))|)
\end{equation}
To prove \eqref{est:CECutoffBounds}, we first take the expansion
\begin{equation}\label{id:momegaDiff}
m'\omega - m\omega' = m'\omega'' - \omega'm'' = m\omega'' - \omega m''.
\end{equation}
Then, if $(\omega, m)\in\mathcal{G}$, the bound \eqref{est:CECutoffBounds} follows from the inequality
\begin{equation}
\frac{\left|m'\omega - m\omega'\right|}{\max\left(\left|m'\omega\right|, |m'm|,|m\omega'|\right)} \leq \frac{|m''||\omega|+|m||\omega''|}{|m|\max(|m'|, |\omega'|)}
\end{equation}
and the estimate $|\omega|\lesssim |m|\in\mathcal{G}$. If $(\omega', m')\in\mathcal{G}$,  \eqref{est:CECutoffBounds} instead follows from 
\begin{equation}
\frac{\left|m'\omega - m\omega'\right|}{\max\left(\left|m'\omega\right|, |m'm|,|m\omega'|\right)} \leq \frac{|m''||\omega|+|m||\omega''|}{|m'|\max(|m|, |\omega|)}
\end{equation}
and the estimate $\max(|m'|, |\omega'|) \lesssim |m'| \in\mathcal{G}$.
\end{proof}
We state a corollary for first order terms:
\begin{corollary}\label{cor:CECutoff2}
For any frequency pairs $(\omega, m)$ and $(\omega', m')$, and for $\wchi$ as in Proposition \ref{prop:wchireg}, the inequalities
\begin{subequations}
\begin{align}
|\omega\wchi_m^\omega - \omega'\wchi_{m'}^{\omega'}| & \lesssim (|m''|+|\omega''|)\cdot\max(\sup|\wchi|, \sup_\omega |\partial_\omega(\wchi(\omega, 1))|), \label{est:CECutoffBoundomega}\\
|m\wchi_m^\omega - m'\wchi_{m'}^{\omega'}| &\lesssim (|m''|+|\omega''|)\cdot\max(\sup|\wchi|, \sup_\omega |\partial_\omega(\wchi(\omega, 1))|),\label{est:CECutoffBoundm}
\end{align}
\end{subequations}
hold if $(\omega, m)\in\mathcal{G}$ or $(\omega', m') \in\mathcal{G}$. If neither is in $\mathcal{G}$, \eqref{est:CECutoffBoundomega} and \eqref{est:CECutoffBoundm} both vanish.
\end{corollary}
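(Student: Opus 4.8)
\textbf{Proof plan for Corollary \ref{cor:CECutoff2}.}

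The plan is to derive the first-order estimates \eqref{est:CECutoffBoundomega} and \eqref{est:CECutoffBoundm} from the zeroth-order estimates \eqref{est:CECutoffBounds} and \eqref{est:CECutoffBounds2} of Proposition \ref{cor:CECutoff}, by splitting each difference into one term that carries the cutoff difference and one term that carries the frequency difference. Concretely, for \eqref{est:CECutoffBoundomega} I would write
\begin{equation*}
\omega\wchi_m^\omega - \omega'\wchi_{m'}^{\omega'} = \omega\left(\wchi_m^\omega - \wchi_{m'}^{\omega'}\right) + (\omega - \omega')\wchi_{m'}^{\omega'},
\end{equation*}
and symmetrically for \eqref{est:CECutoffBoundm} using $m\left(\wchi_m^\omega - \wchi_{m'}^{\omega'}\right) + (m-m')\wchi_{m'}^{\omega'}$. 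The second term in each case is immediately bounded by $|\omega''|\sup|\wchi|$ (resp.\ $|m''|\sup|\wchi|$), which is of the desired form. The first term is where the work lies.

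For the first term, the key observation is that $\wchi_m^\omega - \wchi_{m'}^{\omega'}$ is only nonzero when $(\omega,m)\in\mathcal{G}$ or $(\omega',m')\in\mathcal{G}$, since $\wchi$ is constant outside $\mathcal{G}$ — and in that regime $|\omega| \lesssim \max(|m|,|\omega|,1)$ holds because $\mathcal{G}$ is defined by $m\omega$ lying in an interval of width $O(m^2)$, so on $\mathcal{G}$ one has $|\omega| \lesssim |m| + 1 \lesssim \max(|m|,|\omega|,1)$. Thus I would use \eqref{est:CECutoffBounds2} (the version with $\max(|m|,|\omega|,1)$ in the denominator) when $(\omega,m)\in\mathcal{G}$, giving
\begin{equation*}
|\omega|\,|\wchi_m^\omega - \wchi_{m'}^{\omega'}| \lesssim |\omega|\cdot\frac{|m''|+|\omega''|}{\max(|m|,|\omega|,1)}\cdot\max(\sup|\wchi|, \sup_\omega|\partial_\omega(\wchi(\omega,1))|) \lesssim (|m''|+|\omega''|)\cdot\max(\sup|\wchi|, \sup_\omega|\partial_\omega(\wchi(\omega,1))|).
\end{equation*}
In the remaining case $(\omega,m)\notin\mathcal{G}$ but $(\omega',m')\in\mathcal{G}$, the same reasoning gives $|\omega'|\lesssim\max(|m'|,|\omega'|,1)$, and additionally $|\omega| \leq |\omega'| + |\omega''| \lesssim \max(|m'|,|\omega'|,1) + |\omega''|$; combining this with \eqref{est:CECutoffBounds} handles this case. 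The argument for \eqref{est:CECutoffBoundm} is identical, using instead $|m| \lesssim \max(|m|,|\omega|,1)$ (which is trivially true).

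I expect the main (minor) obstacle to be the careful case analysis tracking which of $(\omega,m)$, $(\omega',m')$, $(\omega'',m'')$ lies in $\mathcal{G}$, and ensuring the bound $|\omega|\lesssim\max(|m|,|\omega|,1)$ genuinely holds throughout $\mathcal{G}$ with a constant depending only on $(M,a,\ximin,\ximax)$ — this is where one must invoke the explicit description \eqref{def:GCalDef} of $\mathcal{G}$ as $m\omega \in (-\ximin m^2, m\omega_+ + \ximax m^2)$, from which $|\omega| \leq (\omega_+ + \ximin + \ximax)|m|$ whenever $m\neq 0$, and the left-hand sides vanish when $m=0$ since then $(\omega,0)\notin\mathcal{G}$. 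Everything else is routine application of the triangle inequality and the already-established zeroth-order estimates.
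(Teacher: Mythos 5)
Your overall strategy is sound and is essentially the paper's: split the difference by the triangle inequality into a piece carrying $\omega''$ (bounded by $|\omega''|\sup|\wchi|$) and a piece carrying the cutoff difference, then invoke Proposition \ref{cor:CECutoff}. The paper's only difference is that it writes $|\omega\wchi_m^\omega - \omega'\wchi_{m'}^{\omega'}| \leq |\omega''||\wchi_m^\omega| + |\omega'||\wchi_m^\omega-\wchi_{m'}^{\omega'}|$, attaching $\omega'$ to the cutoff difference, so that \eqref{est:CECutoffBounds} cancels the factor via $|\omega'|\leq\max(|m'|,|\omega'|,1)$ with no case analysis at all; your mirrored split with $\omega$ works just as cleanly via \eqref{est:CECutoffBounds2}, since $|\omega|\leq\max(|m|,|\omega|,1)$ trivially — your excursion into $|\omega|\lesssim|m|$ on $\mathcal{G}$ is not needed.

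Two cautions. First, your case distinction is unnecessary: Proposition \ref{cor:CECutoff} asserts \emph{both} \eqref{est:CECutoffBounds} and \eqref{est:CECutoffBounds2} under the single hypothesis that $(\omega,m)\in\mathcal{G}$ \emph{or} $(\omega',m')\in\mathcal{G}$, so your "case 1" computation already covers every situation in which the cutoff difference is nonzero. Second, your "case 2" does not close as literally written: substituting $|\omega|\lesssim\max(|m'|,|\omega'|,1)+|\omega''|$ into $|\omega|$ times \eqref{est:CECutoffBounds} leaves the cross term $|\omega''|\,\tfrac{|m''|+|\omega''|}{\max(|m'|,|\omega'|,1)}$, which is quadratic in the double-primed frequencies and is not $\lesssim |m''|+|\omega''|$ (take $\max(|m'|,|\omega'|,1)=1$ and $|\omega''|$ large). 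This is easily repaired — either discard the case split as above, or bound that contribution by $|\omega''|\,|\wchi_m^\omega-\wchi_{m'}^{\omega'}|\leq 2|\omega''|\sup|\wchi|$ directly — but as stated that step is lossy and would not yield \eqref{est:CECutoffBoundomega}.
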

\begin{proof}
We expand
\begin{equation}
|\omega\wchi_m^\omega - \omega'\wchi_{m'}^{\omega'}|  \leq |\omega-\omega'||\wchi_m^\omega| + |\omega'||\wchi_m^\omega - \wchi_{m'}^{\omega'}|.
\end{equation}
The result follows directly from Proposition \eqref{cor:CECutoff}. The same argument, replacing $\omega$ with $m$, gives \eqref{est:CECutoffBoundm}.
\end{proof}
We end this section by remarking  that Proposition \ref{cor:CECutoff} applies directly to $\wchi = 1-\wchi_1$, $\wchi=\chisharp$ and $\wchi=\hsharp-h_0$, as all of them are constant outside $\mathcal{G}$. We prove a brief bound on the $\hsharp-h_0$ term.
\begin{lemma}\label{lem:hsharph0aux}
The quantity $\hsharp - h_0$ satisfies the bound
\begin{equation}\label{est:hsharph0aux}
\sup_\omega \Bigg|\partial_\omega\left(\frac{\hsharp(\omega, 1) - h_0}{v}\right)\Bigg|\lesssim v 
\end{equation}
in the support of $\chisharp$.
\end{lemma}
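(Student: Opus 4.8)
The plan is to reduce the bound \eqref{est:hsharph0aux} to the regularity estimates for $\f$ (equivalently $\h$) already established in Proposition \ref{prop:vfDerUpperBound}, together with the explicit structure of $\hsharp$ from \eqref{def:hsharp}. Recall that in the support of $\chisharp$ we have $r \geq R^\sharp \geq \re$ (by \eqref{rsharpnatural}) and hence $\f, \h, \re$ are all well-defined there; moreover, on the region where $\hsharp - h_0 \neq 0$ we are necessarily in $\mathcal{G}$, and since $h_0$ is frequency-independent the quantity $\hsharp - h_0$ is constant (in fact $1-h_0$) outside $\mathcal{G}$. So throughout we may restrict to $(\omega,1) \in \mathcal{G}$, where by Lemma \ref{prop:GpBounds} the frequency parameter $\eta = a/\omega$ ranges over a compact subset of $(-\infty, r_+^2+a^2)$, and on which $\re$ and all its $\eta$-derivatives are bounded (Corollary \ref{prop:rebounds}, Proposition \ref{prop:rderiv1}).

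First I would write $\tfrac{\hsharp - h_0}{v} = \tfrac{\hsharp}{v} - f_0$, where $f_0 = h_0/v$ is frequency-independent, so that $\partial_\omega$ of this difference equals $\partial_\omega(\hsharp/v)$. Next I would relate $\partial_\omega$ to $\partial_\eta$ via $\eta = a/\omega$, i.e.\ $\partial_\omega = -\tfrac{a}{\omega^2}\partial_\eta = -\tfrac{\eta^2}{a}\partial_\eta$; since $|\eta|$ is bounded on $\mathcal{G}$, it suffices to prove $|\partial_\eta(\hsharp/v - f_0)| \lesssim v$ in the support of $\chisharp$. Now $\hsharp/v - f_0 = \tfrac{1}{v}(\hsharp - h_0)$, and I want to exploit the estimate \eqref{est:detafFinal2} of Proposition \ref{prop:vfDerUpperBound}, which with $k=0,1$ controls $\big|\partial_\eta^k\big(\fsharp - \tfrac{r^2+a^2-\eta}{\Delta^{1/2}}h_0\big)\big| \lesssim r^{-1}$, together with the relation $\fsharp = v^{-1}(1 - \tfrac{\eta}{r^2+a^2})\hsharp = \tfrac{r^2+a^2-\eta}{\Delta^{1/2}}\hsharp$ valid where $\omega \neq 0$ (i.e.\ $\eta$ finite). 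Combining these gives $\big|\partial_\eta^k\big(\tfrac{r^2+a^2-\eta}{\Delta^{1/2}}(\hsharp - h_0)\big)\big| \lesssim r^{-1}$; since $r$ is bounded below by $\rmin$ on the support of $\chisharp$ and $\tfrac{r^2+a^2-\eta}{\Delta^{1/2}} \sim \tfrac{1}{v}$ with bounded logarithmic $\eta$-derivative (again using compactness of the $\eta$-range and $\Delta$ bounded away from $0$ on $\{r \geq \rmin\}$), one can divide out the prefactor $\tfrac{r^2+a^2-\eta}{\Delta^{1/2}}$ and pick up at most one extra factor of its $\eta$-derivative, which is $O(v^{-1})$ times a bounded quantity. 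This yields $\big|\partial_\eta\big(\tfrac{\hsharp-h_0}{v}\big)\big| \lesssim v \cdot r^{-1} + v \cdot (\text{bounded}) \lesssim v$ on $\{r \geq \rmin\}$, which after the $\partial_\omega \leftrightarrow \partial_\eta$ conversion is exactly \eqref{est:hsharph0aux}.

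I would also need to separately handle the three branches of the definition \eqref{def:hsharp}: on $\mathcal{G} \cap (\mathcal{G}')^c$, $\hsharp = 1$ and $\hsharp - h_0 = 1 - h_0 = O(r^{-2})$ is frequency-independent, so $\partial_\omega$ vanishes; on the branch where $\hsharp = \h$, the bound follows from \eqref{est:detafFinal} (the analogue of \eqref{est:detafFinal2} for $\f$/$\h$) by the same argument, using that $|\eta|$ is bounded on $\mathcal{G} \cap \mathcal{G}'$; and on the interpolating branch where $0 < \wchi_2 < 1$, one uses the explicit formula for $\hsharp$ in terms of $\wchi_2$, together with the chain rule and Corollary \ref{cor:CECutoff}-type bounds on $\partial_\omega \wchi_2$ (noting $\wchi_2$ is homogeneous degree $0$ and constant outside $\mathcal{G}$, so $|\partial_\omega \wchi_2(\omega,1)|$ is bounded), the $\eta$-regularity of $\f$ and $\re$, and the fact that $v\f, v\fsharp$ are bounded above and below on the support of $\chisharp$ in $\mathcal{G}\cap\mathcal{G}'$ (established in Section \ref{sec:fbasicregularity}). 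The main obstacle is bookkeeping: in the interpolating regime one must carefully track the $v$-weights when differentiating the square-root structure $\hsharp = (1 - \wchi_2 \tfrac{v^2}{(\omega-\omega_r)^2}\inf(\cdots))^{1/2}$, making sure the apparent $v^{-1}$ singularities in $\partial_\omega$ are cancelled by the vanishing of the numerator (this is the same cancellation that makes $\hsharp - h_0$ decay like $v$ rather than blowing up), but since $\hsharp$ is bounded below by a positive constant in the support of $\chisharp$ there is no genuine difficulty at the level of the square root itself. I expect this to be a short computation once the reductions above are in place.
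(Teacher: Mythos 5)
Your proposal is correct and is essentially the paper's own argument: the paper likewise converts $\partial_\omega$ into $\partial_\eta$ by homogeneity (with conversion factor bounded via Lemma \ref{prop:GpBounds}), rewrites $\frac{\hsharp-h_0}{v}=\frac{r^2+a^2}{r^2+a^2-\eta}\big(\fsharp-\frac{r^2+a^2-\eta}{\Delta^{1/2}}h_0\big)$, and concludes from \eqref{est:detafFinal2}. The one imprecision is your claim that $\eta$ is bounded on all of $\mathcal{G}$ --- Lemma \ref{prop:GpBounds} gives this only on $\mathcal{G}'$, and $\eta$ in fact blows up as $\omega\to 0^+$ --- but this is harmless since $\partial_\omega\hsharp$ is supported in $\mathcal{G}\cap\mathcal{G}'$ (as $\hsharp\equiv 1$ on $\mathcal{G}\cap(\mathcal{G}')^c$), exactly as your own branch discussion notes; for the same reason the separate branch-by-branch computation in your final paragraph is unnecessary, since \eqref{est:detafFinal2} already holds uniformly for all frequencies in $\mathcal{G}$ on the support of $\chisharp$.
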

\begin{proof}
This follows almost directly from the bound   \eqref{est:detafFinal2}  and the chain rule. We first note that $\hsharp(\omega, 1) = 1$ outside $\mathcal{G}\cap\mathcal{G}'$, so in particular $\partial_\omega \hsharp$ is supported in $\mathcal{G}\cap\mathcal{G}'$. In this region, we define $\lambda = \frac{\omega}{m}$. By the chain rule and Lemma \ref{prop:GpBounds},
\begin{equation}
\sup_\omega\partial_\omega\Bigg|\left(\frac{\hsharp(\omega, 1) - h_0}{v}\right)\Bigg| = \left|\partial_\lambda\left(\frac{\hsharp - h_0}{v}\right)\right| = \left|\partial_\lambda\eta \partial_\eta \left(\frac{\hsharp - h_0}{v}\right)\right| \lesssim \left|\partial_\eta \left(\frac{\hsharp - h_0}{v}\right)\right|
\end{equation}
in $\mathcal{G}\cap\mathcal{G}'$. Additionally, \eqref{est:detafFinal2}  and Lemma \ref{prop:GpBounds} imply
\begin{equation}
\left|\partial_\eta \left(\frac{\hsharp - h_0}{v}\right)\right| = \left|\partial_\eta \left(\left(f_\sharp - \frac{r^2+a^2-\eta}{\Delta^{1/2}}h_0\right)\left(\frac{r^2+a^2}{r^2+a^2-\eta}\right)\right)\right|\lesssim v.
\end{equation}
The bound \eqref{est:hsharph0aux} follows.
\end{proof}

\subsection{Completing the proof of Proposition \ref{lem:CMBound}}\label{sec:WGlobalBounds}
We now prove the bound \eqref{est:CIMain}, and therefore Proposition \ref{lem:CMBound}. We expand \eqref{est:CIMain} pointwise in $(\omega, m)$ and convolve over $(\omega', m')$, recalling $\omega'', m'', \omega_r'$, and $\omega_r''$ as defined in \eqref{def:FreqDoublePrime} and \eqref{def:OmegarDoublePrime}. We additionally define
\begin{align}
\Wsc(\widehat{w}_{{m''}}^{\omega''}\ph_{m'}^{\omega'}) &= v^{-1}\Rs(\widehat{w}_{{m''}}^{\omega''}\ph_{m'}^{\omega'}) - i\omega (\wchi_1\f + (1-\wchi_1)\chisharp f_\sharp)_m^\omega\widehat{w}_{{m''}}^{\omega''}\ph_{m'}^{\omega'},\label{def:WProductConv} \\
\oW(\widehat{w}_{{m''}}^{\omega''}\ph_{m'}^{\omega'})&= v^{-1}\Rs(\widehat{w}_{{m''}}^{\omega''}\ph_{m'}^{\omega'}) - i\frac{h_0(\omega-\omega_r)}{v}(\widehat{w}_{{m''}}^{\omega''}\ph_{m'}^{\omega'}),\label{def:oWProductConv}
\end{align}
so that for square-integrable functions $\ph$ and $\widehat{w}$,
\begin{equation}
\Wsc(\ph * \widehat{w}) = \sum_{m'}\int \Wsc(\ph_{m'}^{\omega'}\widehat{w}_{m''}^{\omega''})\, d\omega'.
\end{equation}

We claim that to prove \eqref{est:CIMain} it suffices to show the following bound in frequency space. Indeed, integrating the estimate (\ref{est:PointwiseConvolutionBound}) in $\omega^\prime$ and writing the right hand side as a convolution, (\ref{est:CIMain}) follows after taking $L^2_{\omega}$ on both sides and applying Young's convolution inequality on the right hand side.

\begin{proposition}
For any $q\in[0,1]$, the bound
\begin{align}\label{est:PointwiseConvolutionBound}
\big|\Wsc(\widehat{w}_{{m''}}^{\omega''}\ph_{m'}^{\omega'}) - &\widehat{w}_{{m''}}^{\omega''}(\Wsc\ph)_{m'}^{\omega'}\big| \lesssim \\
&\qquad\lesssim \left(|\omega''\widehat{w}_{m''}^{\omega''}| + |m''\widehat{w}_{m''}^{\omega''}| + v^{-q}\left|(\oW\widehat{w})_{m''}^{\omega''}\right|\right)\Bigg(v^q|\ph_{m'}^{\omega'}| + \chi_{\rmax}\frac{\big|(W_0\ph)_{m'}^{\omega'}\big|}{\max(|\omega'|, |m'|, 1)}\Bigg) \nonumber
\end{align}
holds whenever $\ph$ and $\widehat{w}$ are square integrable in the sense of \eqref{sqi}.
\end{proposition}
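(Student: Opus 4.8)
The plan is to expand the difference $\Wsc(\widehat{w}_{{m''}}^{\omega''}\ph_{m'}^{\omega'}) - \widehat{w}_{{m''}}^{\omega''}(\Wsc\ph)_{m'}^{\omega'}$ term by term using the definition \eqref{def:WProductConv}, and to observe that the only surviving contributions are exactly those where the frequency multiplier of $\Wsc$ at $(\omega, m)$ fails to act as a pure product rule on the convolution. Concretely, the $v^{-1}\Rs$-part is a genuine (physical-space) derivative, so by the Leibniz rule $v^{-1}\Rs(\widehat{w}_{{m''}}^{\omega''}\ph_{m'}^{\omega'}) = (v^{-1}\Rs\widehat{w}_{{m''}}^{\omega''})\ph_{m'}^{\omega'} + \widehat{w}_{{m''}}^{\omega''}(v^{-1}\Rs\ph_{m'}^{\omega'})$, and the second term is exactly what appears in $\widehat{w}_{{m''}}^{\omega''}(\Wsc\ph)_{m'}^{\omega'}$. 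The remaining algebraic task is to rewrite the purely multiplicative (in frequency) pieces so that the factor hitting $\ph_{m'}^{\omega'}$ is replaced by the same factor evaluated at $(\omega', m')$ — which is what makes $\widehat{w}_{{m''}}^{\omega''}(\Wsc\ph)_{m'}^{\omega'}$ — plus an error that is a difference of frequency-dependent coefficients at $(\omega,m)$ versus $(\omega',m')$, multiplied by $\widehat{w}_{{m''}}^{\omega''}\ph_{m'}^{\omega'}$.

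\textbf{Key steps.} First I would use the decomposition \eqref{id:WMinusWZero} to write $\Wsc - \oW$ as the sum of three frequency-multiplier terms: $-i\wchi_1\omega\wfe$, $-i(1-\wchi_1)\chisharp\frac{\hsharp-h_0}{v}(\omega-\omega_r)$, and $-(1-\wchi_1)(1-\chisharp)\oW$. For $\oW$ itself the product rule holds exactly on convolutions (it is a physical-space operator, and $\oW(\widehat{w}\ph) = (\oW\widehat{w})\ph + \widehat{w}(\oW\ph)$ componentwise after accounting for the $im/(r^2+a^2)$ term via $m = m' + m''$), so the $\oW$-contribution to the difference is controlled by $|(\oW\widehat{w})_{m''}^{\omega''}||\ph_{m'}^{\omega'}|$ after noting $(1-\wchi_1)(1-\chisharp)$ is bounded and supported; here the weight $v^{-q}$ on $\oW\widehat{w}$ and $v^q$ on $\ph$ in the statement just need to be balanced, which is immediate since $q + (-q) = 0$ and on the support of the relevant cutoffs (i.e. $r<\rmax+\delta^\sharp$, hence inside $\chi_{\rmax}$) one has $v^{-1}\leq Cv$, allowing $v^q\ph$ to be traded for $\ph$. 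Second, for each of the two genuine frequency multipliers $M_m^\omega \in \{\omega\wfe, \ \frac{\hsharp-h_0}{v}(\omega-\omega_r)\}$ (times the respective cutoffs), the error is $\bigl(\wchi_1 M_m^\omega - \wchi_1 M_{m'}^{\omega'}\bigr)\widehat{w}_{{m''}}^{\omega''}\ph_{m'}^{\omega'}$ (and analogously with $(1-\wchi_1)\chisharp$), and I would bound the coefficient difference using the Lipschitz-type estimates already established: Lemma \ref{prop:reginetanew} (in particular \eqref{est:reginetafirst}, \eqref{est:reginetasecond}, and their $m=0$ versions) for the $\omega\wfe$ terms, Corollary \ref{cor:CECutoff2} for the cutoff factors $1-\wchi_1$ and $\chisharp$, and Lemma \ref{lem:hsharph0aux} together with Proposition \ref{cor:CECutoff} for the $\frac{\hsharp-h_0}{v}$ factor. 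In each case the bound produced is $\lesssim (|\omega''| + |m''|)v \cdot |\ph_{m'}^{\omega'}|$ or, when a cutoff difference is used, $\lesssim \frac{|\omega''|+|m''|}{\max(|\omega'|,|m'|,1)} v \cdot |(W_0\ph)_{m'}^{\omega'}|$-type terms — but to get the clean form on the right of \eqref{est:PointwiseConvolutionBound} I would group: differences of $\wchi_1$ (and $\chisharp$) times a bounded multiplier contribute the $\chi_{\rmax}\frac{|(W_0\ph)_{m'}^{\omega'}|}{\max(\cdots)}$-piece (this is where the derivative-of-$\ph$ term is unavoidable, arising when the cutoff jump forces us off the diagonal and we can no longer absorb into a product rule but must keep $\oW\ph$), and differences of the smooth coefficients $\wfe$, $\frac{\hsharp-h_0}{v}$ themselves contribute the $v^q|\ph_{m'}^{\omega'}|$-piece, using that $|\omega''|,|m''|$ are already accounted for by the factors $|\omega''\widehat{w}_{m''}^{\omega''}|, |m''\widehat{w}_{m''}^{\omega''}|$ on the right.

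\textbf{Main obstacle.} The routine part is the $\oW$-term and the smooth-coefficient differences; the delicate point is handling the cutoff differences $\wchi_1$ at $(\omega,m)$ versus $(\omega',m')$ when $(\omega',m')$ is deep in $\mathcal{G}$ while $(\omega,m)$ is near its boundary (or vice versa), because then the naive Leibniz remainder $\bigl(\wchi_1\f\bigr)_m^\omega - \bigl(\wchi_1\f\bigr)_{m'}^{\omega'}$ does not obviously vanish to first order. The resolution — and the part I expect to require the most care — is to show that on the region where $\chisharp$ (or $1-\chisharp$) is active, $W^\sharp$ has already been interpolated so that the combination $\wchi_1 W^\natural + (1-\wchi_1)\chisharp W^\sharp$ stays $C^1$ across the cutoff thresholds with derivative bounds uniform in frequency (this is the content of how $f_\sharp$ was defined in Section \ref{sec:IntersectionCommutator} so that $\f = f_\sharp$ when $0<\wchi_1<1$, giving \eqref{freli}), so that the coefficient difference is genuinely Lipschitz in $\omega/m$ with the constants from Proposition \ref{prop:vfDerUpperBound}; the only place a true jump survives is multiplied by $\oW$ (the last term of \eqref{id:WMinusWZero}), which is why the $(W_0\ph)/\max(|\omega'|,|m'|,1)$ term with the $\chi_{\rmax}$ cutoff appears on the right and cannot be removed. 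I would organize the write-up so that all the Lipschitz inputs are quoted as black boxes and only the final grouping into the two norm-pieces of \eqref{est:PointwiseConvolutionBound} is carried out explicitly.
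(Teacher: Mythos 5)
Your proposal is correct and follows essentially the same route as the paper: reduce to $\Wsc-\oW$ via the exact product rule for $\oW$, expand with \eqref{id:WMinusWZero}, bound the coefficient differences using Lemma \ref{prop:reginetanew}, Proposition \ref{cor:CECutoff}, Corollary \ref{cor:CECutoff2} and Lemma \ref{lem:hsharph0aux}, and observe that the only $\oW\ph$-contribution (carrying the $\chi_{\rmax}\max(|\omega'|,|m'|,1)^{-1}$ factor) comes from the $(1-\wchi_1)(1-\chisharp)\oW$ term. The only loose point is the grouping sentence attributing the $\oW\ph$-piece to "cutoff differences times a bounded multiplier" (those differences actually land in the $v^q|\ph|$-piece, via $|\wfe|\lesssim|\eta|v$ and the homogeneity splitting \eqref{id:fchihomogeneous} needed to invoke \eqref{est:reginetafirst}), but your ``main obstacle'' paragraph states the correct attribution, so this is phrasing rather than a gap.
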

\begin{proof}
It suffices to prove this with $\Wsc$ replaced by $\Wsc-\oW$, as the difference is controlled by
\begin{equation}
\big|\oW(\widehat{w}_{{m''}}^{\omega''}\ph_{m'}^{\omega'}) - \widehat{w}_{{m''}}^{\omega''}(\oW\ph)_{m'}^{\omega'}\big| = \big|(\oW\widehat{w})_{{m''}}^{\omega''}\ph_{m'}^{\omega'}\big| = \big|v^{-q}(\oW\widehat{w})_{{m''}}^{\omega''}\big|\big|v^q\ph_{m'}^{\omega'}\big|.
\end{equation}
Fixing frequency pairs $(\omega, m)$ and $(\omega', m')$, the decomposition \eqref{id:WMinusWZero} implies
\begin{align}\label{est:PointwiseConvolution}
(\Wsc\!-\!\oW)(\widehat{w}_{{m''}}^{\omega''}\ph_{m'}^{\omega'}) \!-\! \widehat{w}_{{m''}}^{\omega''}(\Wsc\ph\!-\!\oW\ph)_{m'}^{\omega'}\! &=\!-i\left((\wchi_1)_m^\omega \omega \widetilde{f}_m^\omega - (\wchi_1)_{m'}^{\omega'} {\omega'} \widetilde{f}_{m'}^{\omega'}\right)\widehat{w}_{{m''}}^{\omega''}\ph_{m'}^{\omega'}\\
&\,+ \!(1\!-\!\wchi_1)_m^\omega\Big(\!\!-\!i(\chisharp)_m^\omega\frac{(\hsharp)_m^\omega\!-\!h_0}{v}(\omega\!-\!\omega_r) \!-\! (1\!-\!\chisharp)_m^\omega\oW\Big)(\widehat{w}_{{m''}}^{\omega''}\ph_{m'}^{\omega'}) \nonumber\\
&\, - \! \widehat{w}_{{m''}}^{\omega''}(1\!-\!\wchi_1)_{m'}^{\omega'}\Big(\!\!-\!i(\chisharp)_{m'}^{\omega'}\frac{(\hsharp)_{m'}^{\omega'}\!-\!h_0}{v}({\omega'}\!-\!\omega_r') \!-\! (1\!-\!\chisharp)_{m'}^{\omega'}\oW\!\Big)\ph_{m'}^{\omega'} \nonumber.
\end{align}
We split this up and seek to bound each set of terms by the right hand side of \eqref{est:PointwiseConvolutionBound}.

We first consider the case $m = 0$, so $|m'| = |m''|$ (we may assume $m'\neq 0$, as otherwise the right hand side of \eqref{est:PointwiseConvolution} vanishes). Since $\wchi_1$ is supported in $\mathcal{G}'$, and $\widetilde{f}_0^\omega=0$, the estimate \eqref{est:wfebound} implies
\begin{equation}
\left|(\wchi_1)_0^\omega \omega \widetilde{f}_0^\omega - (\wchi_1)_{m'}^{\omega'} {\omega'} \widetilde{f}_{m'}^{\omega'}\right| = \left| (\wchi_1)_{m'}^{\omega'} {\omega'} \widetilde{f}_{m'}^{\omega'}\right| \lesssim |m''|v,
\end{equation}
so consequently
\begin{equation}
\left|\left((\wchi_1)_0^\omega \omega \widetilde{f}_0^\omega - (\wchi_1)_{m'}^{\omega'} {\omega'} \widetilde{f}_{m'}^{\omega'}\right)\widehat{w}_{{m''}}^{\omega''}\ph_{m'}^{\omega'}\right| \lesssim |m''\widehat{w}_{m''}^{\omega''}|\big|v\ph_{m'}^{\omega'}\big| \, . 
\end{equation}
The second line of \eqref{est:PointwiseConvolution} vanishes, and the third may be bounded using the pointwise estimate
\begin{equation}
\left|\left((\chisharp)_{m'}^{\omega'}\frac{(\hsharp)_{m'}^{\omega'}-h_0}{v}({\omega'}-\omega_r') - (1-(\chisharp)_{m'}^{\omega'})\oW\right)(\ph_{m'}^{\omega'})\right|\lesssim|m''|\left(v|\ph_{m'}^{\omega'}| + |m'|^{-1}|\chi_{\rmax}(\oW\ph)_{m'}^{\omega'}|\right),
\end{equation}
as $|{\omega'}|\leq  C|m'|$ in the support of $(1-(\wchi_1)_{m'}^{\omega'})$. The bound \eqref{est:PointwiseConvolutionBound} directly follows.

When $m \neq 0$, it suffices to prove the following three bounds separately:
\begin{subequations}
\begin{align}\label{est:TrappedTermBound}
\left|\left((\wchi_1)_m^\omega \omega \widetilde{f}_m^\omega - (\wchi_1)_{m'}^{\omega'} {\omega'} \widetilde{f}_{m'}^{\omega'}\right)\right|\left|\widehat{w}_{{m''}}^{\omega''}\ph_{m'}^{\omega'}\right|&\lesssim \text{(r.h.s.)}\eqref{est:PointwiseConvolutionBound},\\
\label{est:SuperradSecondTermBound2}
\Big|(1\!-\!\wchi_1)_m^\omega(\chisharp)_m^\omega\!\frac{(\hsharp)_m^\omega\!-\!h_0}{v}(\omega\!-\!\omega_r\!)\!-\!(1\!-\!\wchi_1)_{m'}^{\omega'}(\chisharp)_{m'}^{\omega'}\frac{(\hsharp)_{m'}^{\omega'}\!-\!h_0}{v}({\omega'}\!-\!\omega_r')\Big|\!\left|\widehat{w}_{{m''}}^{\omega''}\ph_{m'}^{\omega'}\right| &\lesssim \text{(r.h.s.)}\eqref{est:PointwiseConvolutionBound}, \\
\label{est:CEThirdTermBound}
 \Big|\!(1\!-\!\wchi_1)_m^\omega(1\!-\!\chisharp)_m^\omega\oW(\widehat{w}_{{m''}}^{\omega''}\ph_{m'}^{\omega'})-\widehat{w}_{{m''}}^{\omega''}\!(1\!-\!\wchi_1)_{m'}^{\omega'}(1\!-\!\chisharp)_{m'}^{\omega'}\oW\ph_{m'}^{\omega'}\Big|&\lesssim \text{(r.h.s.)}\eqref{est:PointwiseConvolutionBound}.
\end{align}
\end{subequations}
To prove \eqref{est:TrappedTermBound}, we first note that the identity $m^{\prime \prime}=m-m^\prime$ implies
\begin{equation}\label{id:fchihomogeneous}
(\wchi_1)_m^\omega \omega\widetilde{f}_m^\omega = (\wchi_1)_{m}^{\omega}\tfrac{m'\omega}{m}\widetilde{f}_m^\omega + (\wchi_1)_m^\omega\tfrac{m''\!\omega}{m}\widetilde{f}_m^\omega,
\end{equation}
so we write
\begin{equation}\label{est:TrappedTermBoundPrelim}
\left|(\wchi_1)_m^\omega \omega \widetilde{f}_m^\omega - (\wchi_1)_{m'}^{\omega'} {\omega'} \widetilde{f}_{m'}^{\omega'}\right|\lesssim\left|(\wchi_1)_m^\omega \tfrac{m'\omega}{m} \widetilde{f}_m^\omega - (\wchi_1)_{m'}^{\omega'} {\omega'} \widetilde{f}_{m'}^{\omega'}\right| + \left|(\wchi_1)_m^\omega \tfrac{m''\omega}{m} \widetilde{f}_m^\omega \right| \, .
\end{equation}
Corollary \ref{prop:absbounds} implies
\begin{equation}
\big|(\wchi_1)_m^\omega\tfrac{m''\!\omega}{m}\widetilde{f}_m^\omega\big||\widehat{w}_{{m''}}^{\omega''}\ph_{m'}^{\omega'}|\lesssim v|m''||\widehat{w}_{{m''}}^{\omega''}\ph_{m'}^{\omega'}|.
\end{equation}
We now bound the first term on the right hand side of \eqref{est:TrappedTermBoundPrelim}, which in particular vanishes if $m' = 0$ or if neither $(\omega, m)$ nor $(\omega', m')$ is in $\mathcal{G}'$. If $(\omega, m)\in\mathcal{G}'$, we may expand
\begin{equation}\label{est:ConvTermGPrimeFirst}
\big|(\wchi_1)_{m}^{\omega}\tfrac{{m'}\omega}{m}\widetilde{f}_m^\omega - (\wchi_1)_{m'}^{\omega'} {\omega'} \widetilde{f}_{m'}^{\omega'}\big|\leq \big|{\tfrac{{m'}\omega}{m}}\widetilde{f}_m^\omega\big|\big|(\wchi_1)_{m}^{\omega}-(\wchi_1)_{m'}^{\omega'}\big| + \big|(\wchi_1)_{m'}^{\omega'}\big|\big|{\tfrac{{m'}\omega}{m}} \widetilde{f}_{m}^{\omega} - {\omega'}\widetilde{f}_{m'}^{\omega'}\big|,
\end{equation}
The estimates \eqref{est:wfebound} and \eqref{est:CECutoffBounds} along with the expansion \eqref{id:momegaDiff} and the bound $|m|\lesssim |\omega|$, imply
\begin{equation}
\big|{\tfrac{{m'}\omega}{m}}\widetilde{f}_m^\omega\big|\big|(\wchi_1)_m^\omega-(\wchi_1)_{m'}^{\omega'}\big| \lesssim |m'|\tfrac{|m\omega''-\omega m''|}{\max(|m'\omega|, |m'm|)}v\lesssim (|\omega''| + |{m''}|)v \, . 
\end{equation}
If $(\wchi_1)_{m'}^{\omega'} \neq 0$, $(\omega', m')\in\mathcal{G}'$, so we bound the second term on the right side of \eqref{est:ConvTermGPrimeFirst} using \eqref{est:reginetafirst}. Therefore,
\begin{equation}\label{est:CEFirstTerm}
\big|(\wchi_1)_{m}^{\omega}\tfrac{{m'}\omega}{m}\widetilde{f}_m^\omega - (\wchi_1)_{m'}^{\omega'} {\omega'} \widetilde{f}_{m'}^{\omega'}\big|\lesssim v(|\omega''|+|{m''}|)|\widehat{w}_{{m''}}^{\omega''}||\ph_{m'}^{\omega'}|.
\end{equation}
If $(\omega, m)\notin \mathcal{G}'$ and $(\omega', m')\in\mathcal{G}'$, the proof of \eqref{est:TrappedTermBound} follows similarly from the expansion
\begin{equation}\label{est:ConvTermGPrimeSecond}
\big|(\wchi_1)_{m}^{\omega}\tfrac{{m'}\omega}{m}\widetilde{f}_m^\omega - (\wchi_1)_{m'}^{\omega'} {\omega'} \widetilde{f}_{m'}^{\omega'}\big|\leq \big|\omega'\widetilde{f}_{m'}^{\omega'}\big|\big|(\wchi_1)_{m}^{\omega}-(\wchi_1)_{m'}^{\omega'}\big| + \big|(\wchi_1)_{m}^{\omega}\big|\big|{\tfrac{{m'}\omega}{m}} \widetilde{f}_{m}^{\omega} - {\omega'}\widetilde{f}_{m'}^{\omega'}\big|.
\end{equation}

Next, the bound \ref{est:SuperradSecondTermBound2} follows from Corollary \ref{cor:CECutoff2} with $\wchi=(1-\wchi_1)\chi_\sharp\frac{h_\sharp-h_0}{v}$ and $\wchi=\frac{a}{r^2+a^2}(1-\wchi_1)\chi_\sharp\frac{h_\sharp-h_0}{v}$ and Lemma \ref{lem:hsharph0aux}.

We finally prove \eqref{est:CEThirdTermBound}. From \eqref{def:oWProductConv} we immediately have
\begin{equation}
\oW(\widehat{w}_{{m''}}^{\omega''}\ph_{m'}^{\omega'}) = \widehat{w}_{{m''}}^{\omega''}(\oW\ph)_{m'}^{\omega'} + \ph_{m'}^{\omega'}(\oW\widehat{w})_{{m''}}^{\omega''},
\end{equation}
so we can reduce \eqref{est:CEThirdTermBound} to proving the bounds
\begin{align}\label{est:SuperradFourthTermBound}
\left|\!\big(\!(1\!-\!\wchi_1)_m^\omega(1\!-\!\chisharp)_m^\omega \!-\! (1\!-\!\wchi_1)_{m'}^{\omega'}(1\!-\!\chisharp)_{m'}^{\omega'}\big)\widehat{w}_{{m''}}^{\omega''}(\oW\ph)_{m'}^{\omega'}\right| &\lesssim \frac{|\omega''|+|{m''}|}{\max(|{\omega'}|, |m'|, 1)}|(\oW\ph)_{m'}^{\omega'}||\widehat{w}_{{m''}}^{\omega''}|\chi_{\rmax}, \\
\label{est:SuperradFifthTermBound}\left|\!(1\!-\!\wchi_1)_m^\omega(1\!-\!\chisharp)_m^\omega(\oW\widehat{w}_{{m''}}^{\omega''})\ph_{m'}^{\omega'}\right| &\lesssim |\ph_{m'}^{\omega'}||(\oW\widehat{w})_{{m''}}^{\omega''}|.
\end{align}
The bound \eqref{est:SuperradFifthTermBound} follows directly from the identity $|(1\!-\!\wchi_1)(1\!-\!\chisharp)|\leq 1$. To prove \eqref{est:SuperradFourthTermBound}, we apply Corollary \ref{cor:CECutoff} with $\wchi=(1\!-\!\wchi_1)(1\!-\!\chisharp)$.
\end{proof}

\subsection{Completing the proof of Proposition \ref{lem:CMBound2}}\label{sec:AppendixLocalizedEnergy}
We now prove an analogous inequality which will allow us to bound $\mathcal{N}_\sharp[H, H]$, which will again follow from pointwise bounds on terms appearing in the convolution. These are fortunately easier, as quantities which appear are uniformly compactly supported in $r$. We restate the definition \eqref{def:GenSecondOperators}:
\begin{equation}
X_\sharp^T = -ig_\sharp(1-\wchi_1)\omega, \qquad X_\sharp^\phi = ig_\sharp(1-\wchi_1)m, \qquad X_\sharp^{\Wp} = g_\sharp(1-\wchi_1)\Wp.
\end{equation}

\begin{proof}[Proof of Proposition \ref{lem:CMBound2}] As in the proof of Proposition \ref{lem:CMBound}, it first suffices to prove the pointwise bound
\begin{equation}\label{est:CIMain2}
\lVert X_\sharp(\widehat{w}*\ph) - \widehat{w}*(X_\sharp(\ph))\rVert_{\ell^2_m L^2_\omega} \lesssim \lVert \widehat{w}\rVert_{\Wdot}\left\lVert \ph\right\rVert_{\Hdot}\chi_{\rmin\!,\rmax}
\end{equation}
for each $X_\sharp$, for all $(r, \theta)$ and for all $q \in [0,1]$. We first state the following pointwise bounds, which follow directly from Proposition \ref{cor:CECutoff} (with $\wchi=g_\sharp(1-\wchi_1)$), Corollary \ref{cor:CECutoff2}, using smoothness and support of $g_\sharp$:
\begin{subequations}\label{est:Pointwiseboundsgsharpterms}
\begin{align}\label{est:Pointwiseboundsgsharpterms3}|(g_\sharp(1-\wchi_1))_m^\omega - (g_\sharp(1-\wchi_1))_{m'}^{\omega'}| & \lesssim \tfrac{|m''|+|\omega''|}{\max( |\omega'|,|m'|, 1)}\chi_{\rmin\!,\rmax}, \\
\label{est:Pointwiseboundsgsharpterms1}|(g_\sharp(1-\wchi_1))_m^\omega \omega - (g_\sharp(1-\wchi_1))_{m'}^{\omega'}\omega'| & \lesssim (|\omega''| + |m''|)\chi_{\rmin\!,\rmax}, \\
\label{est:Pointwiseboundsgsharpterms2}|(g_\sharp(1-\wchi_1))_m^\omega m - (g_\sharp(1-\wchi_1))_{m'}^{\omega'}m'| & \lesssim (|\omega''| + |m''|)\chi_{\rmin\!,\rmax}.
\end{align}
\end{subequations}
In order to see that \eqref{est:CIMain2} follows from \eqref{est:Pointwiseboundsgsharpterms}, we note that \eqref{est:gsharpprimebound} implies
\begin{equation}
|g_\sharp(1-\wchi_1)| \leq \chi_{\rmin\!,\rmax}
\end{equation}
for all $(\omega, m)$. Then, by Young's convolution inequality, \eqref{est:Pointwiseboundsgsharpterms1} and \eqref{est:Pointwiseboundsgsharpterms2} directly imply \eqref{est:CIMain2} for $X_\sharp^T$ and $X_\sharp^\phi$ respectively.

In order to prove \eqref{est:CIMain2} for $X_\sharp^{\Wp}$, we first define
$
 X_\sharp^{\oW}= g_\sharp(1-\wchi_1)\oW 
$
and expand the convolution in \eqref{est:CIMain2}, reducing the problem to proving a pointwise bound as before. The bound \eqref{est:CIMain2} for $X_\sharp = X_\sharp^{\oW}$ then follows a similar argument to the proof of the bound \eqref{est:CEThirdTermBound}.
The result for $X_\sharp^{\Wp}$ then follows from the identity
\begin{equation}
g_\sharp(1-\wchi_1) (\Wp\ph - \oW\ph)_m^\omega = g_\sharp(1-\wchi_1)\left(-i(\omega-\omega_r)\frac{1-h_0}{v}\right)\ph^\omega_m,
\end{equation}
and bounding the relevant terms appearing from this in (\ref{est:CIMain2}) using \eqref{est:Pointwiseboundsgsharpterms1} and \eqref{est:Pointwiseboundsgsharpterms2}.
\end{proof}

\appendix
\section{Physical space estimates}\label{App:Geometry}
		
\subsection{Energy estimates on asymptotically hyperboloidal foliations}\label{sec:FoliationCalculations}
Calculations shown here are standard and can be found in e.g. Appendix D of \cite{SR14}. 

We recall the definitions of Section \ref{sec:hyperboloidal}.
\begin{proposition}
The vector $\nabla\wtau$ normal to the hypersurfaces $\wSigtau$ satisfies the following bounds for some $C>0$:
\begin{equation}
C^{-1}r^{-2} \leq -g(\nabla\wtau, \nabla\wtau) \leq Cr^{-2} \, .
\end{equation}
\end{proposition}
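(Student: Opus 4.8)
The statement is a purely computational fact about the Kerr metric in the regular coordinates $(\wtau, r, \theta, \widetilde\phi)$, so the plan is to compute $g(\nabla\wtau,\nabla\wtau)$ explicitly and read off the asymptotics. The key observation is that $\wtau$ depends on $(t,r)$ alone (indeed, $\partial_\theta\wtau=\partial_\phi\wtau=0$ by \eqref{def:timefoliation}), so in Boyer--Lindquist coordinates $\nabla\wtau = g^{tt}\partial_t\wtau\,\partial_t + g^{tr}(\dots) + g^{rr}\partial_r\wtau\,\partial_r$, but since $g$ has no $tr$ cross term in Boyer--Lindquist coordinates, in fact $\nabla\wtau = g^{tt}(\partial_t\wtau)\partial_t + g^{rr}(\partial_r\wtau)\partial_r + (\text{a }\partial_\phi\text{ term from }g^{t\phi})$. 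Therefore
\[
g(\nabla\wtau,\nabla\wtau) = g^{tt}(\partial_t\wtau)^2 + 2g^{t\phi}(\partial_t\wtau)(\partial_\phi\wtau) + g^{rr}(\partial_r\wtau)^2 = g^{tt}(\partial_t\wtau)^2 + g^{rr}(\partial_r\wtau)^2,
\]
again using $\partial_\phi\wtau=0$.

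\textbf{Key steps.} First I would record the inverse metric components. From \eqref{Kerrmetric} one has $g^{rr} = \Delta/\rho^2$ and $g^{tt} = -\big((r^2+a^2)^2 - a^2\Delta\sin^2\theta\big)/(\rho^2\Delta)$. Second, I would substitute the defining relations for $\wtau$: by \eqref{def:timefoliation}, $\Rs\wtau = -h_0$, and since $\Rs = \frac{\Delta}{r^2+a^2}\partial_r$ with $\partial_t\wtau = \Rs\wtau|_{\text{along }t}$... more carefully, in $(t,\rs)$ coordinates $\wtau = t + \text{(function of }\rs)$ since $\wtau(t,r_0)=t$ and $\Rs\wtau = \partial_{\rs}\wtau = -h_0$; hence $\partial_t\wtau = 1$ and $\partial_r\wtau = -h_0\frac{r^2+a^2}{\Delta}$ (using $\partial_r = \frac{r^2+a^2}{\Delta}\partial_{\rs}$). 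Third, plug in:
\[
g(\nabla\wtau,\nabla\wtau) = -\frac{(r^2+a^2)^2 - a^2\Delta\sin^2\theta}{\rho^2\Delta} + \frac{\Delta}{\rho^2}\cdot h_0^2\frac{(r^2+a^2)^2}{\Delta^2} = \frac{(r^2+a^2)^2(h_0^2-1) + a^2\Delta\sin^2\theta}{\rho^2\Delta}.
\]
Now use the definition \eqref{def:h0}: $h_0^2 - 1 = -v_0^{-2}v^2 = -v_0^{-2}\Delta/(r^2+a^2)^2$, so the first term in the numerator becomes $-v_0^{-2}\Delta$, giving
\[
-g(\nabla\wtau,\nabla\wtau) = \frac{v_0^{-2}\Delta - a^2\Delta\sin^2\theta}{\rho^2\Delta} = \frac{v_0^{-2} - a^2\sin^2\theta}{\rho^2}.
\]
Fourth, establish the two-sided bound. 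The numerator $v_0^{-2} - a^2\sin^2\theta$ is a fixed function of $\theta$ alone; one checks it is strictly positive (since $v_0^{-2} = \sup_r v^{-2}\cdot(\text{const})$... more directly, $v_0^{-2} > 0$ and one verifies $v_0^{-2} > a^2$, e.g.\ because $v_0^2 < 1/a^2$ follows from $v^2 = \Delta/(r^2+a^2)^2 \le 1/(r^2+a^2) < 1/a^2$ for all $r > r_+ \ge a$), hence it is bounded between two positive constants depending only on $(a,M)$. Since $\rho^2 = r^2 + a^2\cos^2\theta$ satisfies $r^2 \le \rho^2 \le r^2 + a^2 \le Cr^2$ for $r \ge r_+$, the claimed bound $C^{-1}r^{-2} \le -g(\nabla\wtau,\nabla\wtau) \le Cr^{-2}$ follows immediately.

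\textbf{Main obstacle.} There is no real obstacle here — the only point requiring a word of care is the positivity of $v_0^{-2} - a^2\sin^2\theta$, i.e.\ verifying $v_0^{-2} \ge a^2$; this is where the subextremality $|a| < M$ (equivalently $r_+ > a$, hence $r^2 + a^2 > 2a^2$ and $v^2 < 1/(2a^2)$ on the exterior, so $v_0^2 \le 1/(2a^2) < 1/a^2$) enters, and it is the reason the lower bound has a uniform constant. Everything else is elementary algebra with the explicit metric coefficients, and I would present it compactly rather than belaboring the arithmetic.
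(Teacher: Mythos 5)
Your proposal is correct and follows essentially the same route as the paper: a direct computation of $g^{\mu\nu}\partial_\mu\wtau\,\partial_\nu\wtau$ using the inverse Kerr metric together with $\partial_t\wtau=1$, $\Rs\wtau=-h_0$, reducing via \eqref{def:h0} to the identity $-g(\nabla\wtau,\nabla\wtau)=\rho^{-2}\big(v_0^{-2}-a^2\sin^2\theta\big)$, and then the same positivity argument $v_0^{-2}=\inf v^{-2}\geq r_+^2+a^2>a^2$ (note in passing $v_0^{-2}$ is an infimum, not a supremum, of $v^{-2}$ — a slip you correct yourself — and the strict inequality $r_+^2+a^2>a^2$ needs only $r_+>0$, not subextremality). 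Combined with $r^2\leq\rho^2\leq r^2+a^2$ this gives the stated two-sided bound, exactly as in the paper.
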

\begin{proof}
We recall that in the $(t, \rs, \phi, \theta)$ coordinates the inverse metric has the components
\begin{equation}
g^{tt} = \frac{a^2\sin^2\theta\Delta - (r^2+a^2)^2}{\Delta\rho^2}, \qquad g^{t\phi} = \frac{-2Mar}{\Delta\rho^2}, \qquad g^{\rs\rs} = \frac{(r^2+a^2)^2}{\Delta\rho^2}.
\end{equation}
The definition \eqref{def:timefoliation} gives
\begin{equation}
-\nabla\wtau = \frac{(r^2+a^2)^2}{\Delta\rho^2}\left(h_0 v\oW + \big(1-h_0^2-a^2v^2\sin^2\theta\big)K^\star - \frac{a}{r^2+a^2}\frac{\Delta\rho^2}{(r^2+a^2)^2}\partial_\phi\right).
\end{equation}
Then, the definition of $h_0$ gives
\begin{equation}
g(\nabla\wtau, \nabla\wtau) = \rho^{-2}(-v_0^{-2} + a^2\sin^2\theta).
\end{equation}
We may define
\begin{equation}
a_\theta\index{atheta@$a_\theta$} = (v_0^{-2} - a^2\sin^2\theta)^{1/2}.
\end{equation}
We know
\[
v_0^{-2} = \inf(v^{-2}) \geq\inf(r^2+a^2) > r_+^2+a^2,
\]
so consequently $r_+< a_\theta < v_0^{-1}$ and the result follows.
\end{proof}
We immediately conclude that
\begin{equation}
n_{\wSigtau} = a_\theta^{-1}\rho \nabla \wtau
\end{equation}
and after some further computation 
%
%
%
%
%
%
the following expression for the induced volume form:
\begin{equation}\label{id:VolumeForm}
dV_{\wSigtau} = \rho a_\theta\sin\theta\frac{\Delta}{r^2+a^2}\, d\rs\, d\phi\, d\theta = \rho a_\theta \sin\theta\, dr\, d\phi\, d\theta \, .
\end{equation}
Recalling the definition (\ref{def:HField}) a careful computation yields:
\begin{proposition}
The energy momentum tensor for $\psi$,
\begin{equation} \label{def:EMTensor}
\mathbf{T}_{\mu\nu}[\psi] = \nabla_\mu \psi\nabla_\nu \psi- \tfrac12 g_{\mu\nu}\nabla^\gamma\psi\nabla_\gamma\psi,
\end{equation}
can be decomposed as
\begin{equation}
\mathbf{T}[\psi](\Ks, -\nabla\wtau) = \frac{1}{2\rho^2}|\oW\psi|^2 + \frac{a_\theta^2}{2\rho^2}|\Ks\psi|^2 + \frac{1}{2\rho^2}|\partial_\theta\psi|^2 + \frac{\rho^2}{2(r^2+a^2)^2}\frac{1}{\sin^2\theta}|\partial_\phi\psi|^2 \, .
\end{equation}
\end{proposition}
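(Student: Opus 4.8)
The statement to be proven is the pointwise decomposition of the energy momentum tensor contracted with $\Ks$ and $-\nabla\wtau$ into a manifestly nonnegative sum of squares. The plan is to reduce everything to a direct computation in the regular coordinates $(\wtau,r,\theta,\wtilde\phi)$ using the formulas already assembled in the excerpt. The key inputs are: the expression $-\nabla\wtau$ computed in the proof of the previous proposition, namely
\[
-\nabla\wtau = \frac{(r^2+a^2)^2}{\Delta\rho^2}\Bigl(h_0 v\oW + (1-h_0^2-a^2v^2\sin^2\theta)K^\star - \tfrac{a}{r^2+a^2}\tfrac{\Delta\rho^2}{(r^2+a^2)^2}\partial_\phi\Bigr),
\]
the identity $n_{\wSigtau}=a_\theta^{-1}\rho\,\nabla\wtau$ with $a_\theta^2 = v_0^{-2}-a^2\sin^2\theta$, and the definition (\ref{def:EMTensor}) of $\mathbf T$.

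First I would expand $\mathbf T[\psi](\Ks,-\nabla\wtau) = \nabla_{\Ks}\psi\,\nabla_{-\nabla\wtau}\psi - \tfrac12 g(\Ks,-\nabla\wtau)\nabla^\gamma\psi\nabla_\gamma\psi$. The term $\nabla_{-\nabla\wtau}\psi = -\nabla\wtau(\psi)$ is, by the formula above, a linear combination of $\oW\psi$, $\Ks\psi$ and $\partial_\phi\psi=\Phi\psi$ with coefficients that are functions of $(r,\theta)$. Similarly $\nabla_{\Ks}\psi = \Ks\psi$. For the $\nabla^\gamma\psi\nabla_\gamma\psi$ term I would use an orthogonal (or near-orthogonal) frame adapted to the foliation: $\oW$ is, up to normalization, the spacelike direction transverse to $\wSigtau$ in a way complementary to $\Ks$; one checks $g(\oW,\Ks)$, $g(\oW,\oW)$, $g(\oW,\partial_\theta)$, $g(\partial_\theta,\partial_\theta)$, $g(\partial_\phi,\partial_\phi)$ from the metric components listed in Section \ref{sec:hyperboloidal} together with $g(\widetilde\pa_r,\widetilde\pa_r)=(1-h_0^2)\rho^2/\Delta$ etc., and expresses $\nabla^\gamma\psi\nabla_\gamma\psi$ as a sum of $|\oW\psi|^2$, $|\Ks\psi|^2$, $|\partial_\theta\psi|^2$, $\tfrac{1}{\sin^2\theta}|\partial_\phi\psi|^2$ with explicit coefficients. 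The definition (\ref{def:h0}) $h_0^2 = 1-v_0^{-2}v^2$ will be used repeatedly to simplify $1-h_0^2 = v_0^{-2}v^2$, and the combination $1-h_0^2-a^2v^2\sin^2\theta = v^2(v_0^{-2}-a^2\sin^2\theta) = v^2 a_\theta^2$ is the crucial collapse that makes the cross terms cancel.

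Then I would assemble the two pieces. The claimed right-hand side has coefficients $\tfrac{1}{2\rho^2}$ on $|\oW\psi|^2$ and on $|\partial_\theta\psi|^2$, $\tfrac{a_\theta^2}{2\rho^2}$ on $|\Ks\psi|^2$, and $\tfrac{\rho^2}{2(r^2+a^2)^2\sin^2\theta}$ on $|\partial_\phi\psi|^2$; matching these against the expansion is a matter of verifying a handful of algebraic identities in $h_0, v, v_0, a_\theta, \Delta, \rho^2$. The cross terms $\oW\psi\cdot\Ks\psi$, $\oW\psi\cdot\Phi\psi$, $\Ks\psi\cdot\Phi\psi$ must all cancel; I expect the $\oW\psi\cdot\Ks\psi$ cross term to be the delicate one, since it receives contributions both from $\nabla_{\Ks}\psi\cdot\nabla_{-\nabla\wtau}\psi$ (through the $h_0 v\oW$ and $v^2 a_\theta^2 K^\star$ pieces of $-\nabla\wtau$, weighted by $g(\Ks,\oW)$ and $g(\Ks,\Ks)$) and from the $-\tfrac12 g(\Ks,-\nabla\wtau)|\nabla\psi|^2$ piece. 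The main obstacle is therefore not conceptual but bookkeeping: organizing the metric contractions so that the $h_0$-dependent cross terms visibly vanish, which is exactly where the identity $1-h_0^2-a^2v^2\sin^2\theta = v^2 a_\theta^2$ does the work, together with $\Ks$ being null on the horizon (so $g(\Ks,\Ks)$ carries a factor degenerating like $\Delta$) and the fact that $\oW$ was designed in (\ref{W0def}) precisely to diagonalize this quadratic form. Once the cross terms are gone, positivity of the four remaining coefficients is immediate since $a_\theta^2 = v_0^{-2}-a^2\sin^2\theta \ge r_+^2+a^2-a^2\sin^2\theta>0$ in the exterior, as recorded right after the previous proposition.
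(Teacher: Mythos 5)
Your plan is correct and is essentially the paper's own argument: the proposition is verified by a direct computation, contracting \eqref{def:EMTensor} with $\Ks$ and the explicit expression for $-\nabla\wtau$ obtained in the preceding proposition, and collapsing everything with $1-h_0^2=v_0^{-2}v^2$, i.e.\ $1-h_0^2-a^2v^2\sin^2\theta=v^2a_\theta^2$. The one organizational difference is that the paper's calculation routes the Lagrangian term through the \emph{orthogonal} frame $(K^\star,R^\star,\partial_\theta,\Phi^\star)$ with $\Phi^\star=\tfrac{1}{\sin\theta}\partial_\phi+a\sin\theta\,\partial_t$, for which $-g(K^\star,K^\star)=g(R^\star,R^\star)=\rho^2\Delta/(r^2+a^2)^2$ and $g(\Phi^\star,\Phi^\star)=g(\partial_\theta,\partial_\theta)=\rho^2$, so that $\nabla^\gamma\psi\nabla_\gamma\psi$ is genuinely diagonal; one then re-expands $|R^\star\psi|^2=|v\oW\psi|^2+|h_0K^\star\psi|^2-2h_0v\,\mathfrak{R}\big(\oW\psi\,\overline{K^\star\psi}\big)$ and $|\Phi^\star\psi|^2=a^2\sin^2\theta|K^\star\psi|^2+\big(\tfrac{\rho^2}{r^2+a^2}\big)^2\tfrac{1}{\sin^2\theta}|\partial_\phi\psi|^2+2\tfrac{a\rho^2}{r^2+a^2}\mathfrak{R}\big(K^\star\psi\,\overline{\partial_\phi\psi}\big)$ and watches the cross terms cancel against the $\Ks\psi\cdot(-\nabla\wtau\psi)$ contribution. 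If you instead work directly with $(\oW,\Ks,\partial_\theta,\partial_\phi)$ as you propose, be aware that this set is \emph{not} orthogonal ($g(K^\star,\partial_\phi)=a\Delta\sin^2\theta/(r^2+a^2)\neq 0$, hence also $g(\oW,\partial_\phi)\neq 0$), so $\nabla^\gamma\psi\nabla_\gamma\psi$ is not a pure sum of the four squares as your wording suggests: you must either invert the Gram matrix or use the inverse metric components, and carry the resulting $K^\star\psi\,\partial_\phi\psi$ and $\oW\psi\,\partial_\phi\psi$ cross terms through to the final cancellation alongside the $\oW\psi\,K^\star\psi$ one you singled out. This is only extra bookkeeping, not a gap; the two ingredients you identified — the formula for $-\nabla\wtau$ and the identity $1-h_0^2-a^2v^2\sin^2\theta=v^2a_\theta^2$ — are exactly what make the quadratic form diagonal in the end.
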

%
%
%
%
%
%
As a corollary, we obtain for ${\bf J}_\mu^X[\psi]:=\mathbf{T}_{\mu \nu}[\psi] X^\nu$ the equivalence
\begin{equation}\label{est:TimeSliceIF}
\int_{\wSigtau}\mathbf{J}^{\Ks}_\mu [\psi] n_{\wSigtau}^\mu\, dV_{\wSigtau} \approx \int \, \big(r^{-2}|\oW\psi|^2 +r^{-2} |\Ks\psi|^2 + r^{-2}|\partial_\theta\psi|^2 + r^{-2}\tfrac{1}{\sin^2\theta}|\partial_\phi\psi|^2\big)r^2\sin\theta\, dr\, d\theta\, d\phi \, .
\end{equation}
We may improve the weight on $\oW$ for $r$ close to $r_+$ using the standard redshift field $N$ (see for instance \cite{DRSR}):
\begin{equation}\label{def:Energy0}
\int_{\wSigtau}\mathbf{J}^{N}_\mu [\psi] n_{\wSigtau}^\mu\, dV_{\wSigtau} \approx \int \, \big(|\wpa_r\psi|^2 +r^{-2} |\Ks\psi|^2 + r^{-2}|\partial_\theta\psi|^2 + r^{-2}\tfrac{1}{\sin^2\theta}|\partial_\phi\psi|^2\big)r^2\sin\theta\, dr\, d\theta\, d\phi \, .
\end{equation}
Finally, given $\psi$ satisfying $\Box \psi = \varepsilon \BfB\psi$, the divergence identity $\nabla^\mu \mathbf{J}^X_\mu [\psi] = \mathbf{K}^X [\psi]+ \varepsilon \BfB\psi X\psi$ holds with $\mathbf{K}^X [\psi]:= \frac{1}{2} \mathbf{T}[\psi] \mathcal{L}_{X} g$). Using the easily verified bounds
\begin{equation}
\int_{\wSigtau}|\mathbf{K}^{\Ks} [\psi]|\, dV_{\wSigtau} \lesssim \int_{\wSigtau}\mathbf{J}^{\Ks}_\mu  [\psi] n_{\wSigtau}^\mu\, dV_{\wSigtau}  \ \ \ \textrm{and} \ \ \ 
\int_{\wSigtau}|\mathbf{K}^{N} [\psi]|\, dV_{\wSigtau} \lesssim \int_{\wSigtau}\mathbf{J}^{N}_\mu [\psi]  n_{\wSigtau}^\mu\, dV_{\wSigtau} \, ,
\end{equation}
we deduce from Gronwall's inequality the well-known estimate
\begin{align}
\int_{\wSigtau}\mathbf{J}^{N}_\mu  [\psi]  n_{\wSigtau}^\mu\, dV_{\wSigtau} \lesssim C_{\tilde{\tau}} \int_{\widetilde{\Sigma}_0}\mathbf{J}^{N}_\mu  [\psi]  n_{\widetilde{\Sigma}_0}^\mu\, dV_{\widetilde{\Sigma}_0} \, .
\end{align}
Using elementary Hardy inequalities on $\widetilde{\Sigma}_\tau$ one easily deduces (\ref{basiclocal}) from this and by commutation also the higher order estimate (\ref{higherlocal}).

\subsection{Commutator estimates near $\mathcal{H}^+$ and near infinity}\label{sec:Prop121Proof}
We prove here Proposition \ref{prop:ps}. We consider a generic real function $\Psi$ solving the equation
\begin{equation}
\Box_{g_{a,M}}\Psi = F.
\end{equation}
 Defining $\mathbf{u} = \rweight\Psi$, and recalling from (\ref{vdef}) the definition of $v$, 
 as well as other definitions from Section \ref{sec:FoliationCalculations}, the wave equation in physical space takes the form
\begin{equation}\label{id:WEPhysicalSpaceAppendix}
\rweight\rho^2\Box_{g_{a,M}}\Psi =v^{-2}\Rs\Rs\mathbf{u} - v^{-2}\Ks\Ks\mathbf{u} + \tfrac{1}{\sin\theta}\partial_\theta(\sin\theta\partial_\theta\mathbf{u}) + \Phi^\star\Phi^\star\mathbf{u} - v^{-2}V_1\mathbf{u}.
\end{equation}
We define $\Wpm := v^{-1}(\Rs \pm \Ks)$, or equivalently
\[
\Wp = W_{\Ical^+}, \qquad \Wm = W_{\mathcal{H}^+}.
\]
The commutator identities
\begin{equation}
[\Wpm, \tfrac{1}{v}\Ks] = -\tfrac{v'}{v^2}\tfrac{1}{v}\Ks - \tfrac{2ar}{r^2+a^2}\partial_\phi, \qquad[\Wpm, \Phi^\star] = [\Wpm, \partial_\theta] = 0
\end{equation}
together imply
\begin{align}
[\Wpm, \tfrac{1}{v^2}(\Rs\Rs - \Ks\Ks)]\mathbf{u} &= [\Wpm, (\tfrac{1}{v}\Rs)^2 + \tfrac{v'}{v^2}(\tfrac{1}{v}\Rs)- \big(\tfrac{1}{v}\Ks\big)^2]\mathbf{u} \\
&= [\Wpm, \big(\Wpm \mp \tfrac{1}{v}\Ks\big)^2 + \tfrac{v'}{v^2}\big(\Wpm \mp \tfrac{1}{v}\Ks\big) - \big(\tfrac{1}{v}\Ks\big)^2]\mathbf{u}\nonumber\\
&= [\Wpm, \mp [\Wpm, \tfrac{1}{v}\Ks] \mp \tfrac{2}{v}\Ks\Wpm + \tfrac{v'}{v^2}\big(\Wpm \mp \tfrac{1}{v}\Ks\big)]\mathbf{u} \nonumber\\
&= [\Wpm, \tfrac{2ar}{r^2+a^2}\partial_\phi \mp \tfrac{2}{v}\Ks\Wpm + \tfrac{v'}{v^2}\Wpm ]\mathbf{u}\nonumber\\
&= \Wpm\left(\tfrac{2ar}{r^2+a^2}\right)\partial_\phi\mathbf{u} \pm \tfrac{2v'}{v^2}\tfrac{1}{v}\Ks \Wpm\mathbf{u} \pm \tfrac{2ar}{r^2+a^2}\partial_\phi\Wpm\mathbf{u} + \Wpm\left(\tfrac{v'}{v^2}\right)\Wpm\mathbf{u}.\nonumber
\end{align}
Therefore, defining
 \begin{equation}
\mathbf{U}_\pm = \frac{\Wpm\mathbf{u}}{\rweight},
\end{equation}
if $\Psi$ solves \eqref{id:WEPhysicalSpaceAppendix}, then $\mathbf{U}_\pm$ solves
\begin{equation}
\Box_{g_{a,M}}\mathbf{U}_\pm = \rho^{-2}\rweight^{-1}\left(\Wpm\left(\rweight\rho^2F\right)- [\Wpm, \tfrac{1}{v^2}(\Rs\Rs - \Ks\Ks)]\mathbf{u}+ \Wpm\left(\tfrac{V_1}{v^2}\right)\mathbf{u}\right).
\end{equation}
We may now set up our energy estimates. 
An asymptotic analysis for $r \to \infty$ gives
\begin{equation}
\tfrac{2v'}{v^3} = -2r + O(1), \quad \Wpm\left(\tfrac{2ar}{r^2+a^2}\right) = O(r^{-1}), \quad \Wpm\left(\tfrac{v'}{v^2}\right) = O(r^{-1}), \quad \Wpm\left(\tfrac{V_1}{v^2}\right) = O(r^{-1}), 
\end{equation}
and as $r\to r_+$,
\begin{equation}
\tfrac{2v'}{v^3} = 2(\Delta^{-1}) + O(1), \quad \Wpm\left(\tfrac{2ar}{r^2+a^2}\right) = O(\Delta^{1/2}), \quad \Wpm\left(\tfrac{v'}{v^2}\right) = O(\Delta^{-1}), \quad v^2\Wpm\left(\tfrac{V_1}{v^2}\right) = O(\Delta^{3/2}).
\end{equation}
Then, for large $r$, selecting $\Wp$, taking the Cauchy-Schwartz inequality and commuting through gives
\begin{align}
\left|\partial_t\mathbf{U}_+\Box_{g_{a,M}}\mathbf{U}_+ - \tfrac{2}{\rweight}\left|\partial_t\mathbf{U}_+\right|^2\right|&\lesssim \rweight^{-2}\left|\partial_t\mathbf{U}_+\right|^2 + \rweight^{-4}\left(\left|\partial_\phi\mathbf{U}_+\right|^2 + \left|\mathbf{U}_+\right|^2 + |\partial_\phi\mathbf{u}|^2 + |\mathbf{u}|^2\right) + \\
&\qquad + \left|\partial_t\mathbf{U}_+\right|\left(|F| + \left|\tfrac{\Wp(\rweight F)}{\rweight}\right|\right).\nonumber
\end{align}
Additionally, we have the divergence identity
\begin{equation}
\nabla^\mu \left(\tfrac{M^\alpha}{r^{1+\alpha}}\nabla_\mu \left|\mathbf{U}_+\right|^2 - \nabla_\mu\left(\tfrac{M^\alpha}{r^{1+\alpha}}\right) \left|\mathbf{U}_+\right|^2\right) = \tfrac{M^\alpha}{r^{1+\alpha}}\left(\Uplus\Box_{g_{a,M}}\Uplus + \nabla_\mu \Uplus\nabla^\mu\Uplus\right) - \Box_{g_{a,M}}\left(\tfrac{M^\alpha}{r^{1+\alpha}}\right)|\Uplus|^2 \, .
\end{equation}
There exists a constant $C$ depending only on $a, M$ such that, for $r > 3M$, and for any $c \in (0,1]$,
\begin{equation}
\tfrac{2}{\rweight}\left|\partial_t\mathbf{U}_+\right|^2 + \tfrac{cM^\alpha}{r^{1+\alpha}}\nabla_\mu \Uplus\nabla^\mu\Uplus \geq \tfrac{cM^\alpha}{r^{1+\alpha}}|D\Uplus|^2 - \tfrac{C}{r^2}|D\Uplus|^2 \, .
\end{equation}
Consequently, defining the current
\begin{equation}
{\bf J}^{\mathcal{I^+}}_\mu [\mathbf{U}_+] = \chi_R\left(\mathbf{T}_{\mu\nu}[\mathbf{U}_+]T^\nu + \tfrac{M^\alpha}{r^{1+\alpha}}\nabla_\mu \left|\mathbf{U}_+\right|^2 - \nabla_\mu\left(\tfrac{M^\alpha}{r^{1+\alpha}}\right) \left|\mathbf{U}_+\right|^2\right),
\end{equation}
where $\chi_R$ is as defined in Proposition \ref{prop:ps} and $\mathbf{T}$ is the energy-momentum tensor (\ref{def:EMTensor}), the inequality
\begin{equation}
\left(\tfrac{M^\alpha}{r^{1+\alpha}}|\Uplus| + \left|\partial_t\mathbf{U}_+\right|\right)\left(|F| + \left|\tfrac{\Wp(\rweight F)}{\rweight}\right|\right) \leq \tfrac{\delta}{r^{1+\alpha}}\left(\tfrac{1}{r^2}|\Uplus|^2 + |\partial_t\Uplus|^2\right) + \tfrac{r^{1+\alpha}}{\delta}\left(|F|^2 + \left|\tfrac{\Wp(\rweight F)}{\rweight}\right|^2\right), \nonumber
\end{equation}
as well as the bound
\begin{equation}
r^{1+\alpha}\left(|F|^2 + \left|\tfrac{\Wp(\rweight F)}{\rweight}\right|^2\right) \leq C\tfrac{\varepsilon^2}{r^{1+\alpha}}\left(|D\Uplus|^2 + |D\Psi|^2\right)
\end{equation}
implies that for sufficiently small $\delta$ depending on $a, M$, and $\varepsilon$ depending on $a, M, \delta$, there exist positive constants $c, C$ such that
\begin{equation}
\nabla^\mu {\bf J}_\mu^{\mathcal{I}_+} [\mathbf{U}_+]\geq \chi_R\tfrac{cM^\alpha}{r^{1+\alpha}}|D\Uplus|^2 - C\chi_R\left(\tfrac{M}{r^2}|D\Uplus|^2 + \tfrac{M^\alpha}{r^{1+\alpha}}\left(|D\Psi|^2 + \left|\!\tfrac{\Psi}{r}\!\right|^2\right)\right) - C\chi_R' \left(|D\Uplus|^2 + |D\Psi|^2 + \left|\!\tfrac{\Psi}{r}\!\right|^2\right). \nonumber
\end{equation}
For sufficiently large $R$, the divergence theorem combined with an elementary Hardy inequality implies Proposition \ref{prop:ps}.

For $r$ close to the horizon, we set $\Wm$ and consequently $\Uminus$ and multiply by $K\Uminus$, where $K$ is the standard Hawking vector field
\begin{equation}\index{K@$K$}
K = \partial_t + \frac{a}{r_+^2+a^2}\partial_\phi = \Ks + \frac{a(r^2-r_+^2)}{(r^2+a^2)(r_+^2+a^2)}\partial_\phi.
\end{equation}
For $r-r_+$ sufficiently small, this implies, for any $\delta > 0$, that
\begin{align}
\left|K\mathbf{U}_-\Box_{g_{a,M}}\mathbf{U}_- - \tfrac{2}{\Delta}\left|K\mathbf{U}_-\right|^2\right|&\leq \frac{\delta}{\Delta}|K\Uminus|^2 + C_\delta\Delta\left(|\partial_\phi\Uminus|^2 + |\partial_\phi\mathbf{u}|^2 + \Delta|\mathbf{u}|^2\right) + \tfrac{C_\delta}{\Delta}|\Uminus|^2.
\end{align}
For some small $c$, we define the current
\begin{equation}
{\bf J}^{\mathcal{H}^+}_\mu [\mathbf{U}_-]= \chi_{\widetilde{\delta}}\left(\mathbf{T}_{\mu\nu}[\Uminus]T^\nu + \tfrac{cM^\alpha}{r^{1+\alpha}}\nabla_\mu \left|\Uminus \right|^2 - \nabla_\mu\left(\tfrac{cM^\alpha}{r^{1+\alpha}}\right) \left|\Uminus\right|^2\right),
\end{equation}
where $\chi_{\widetilde{\delta}}$ is again as defined in Proposition \ref{prop:ps}. In order to handle boundary terms, we note that for any sufficiently regular solution of the wave equation, as $r\to r_+$,
\begin{equation}
|\Uminus| + |\Rs\Uminus| + |T\Uminus| + |\slashed{\nabla}\Uminus| = O(\Delta^{1/2}).
\end{equation}
Applying the divergence theorem and bounding terms as before gives the remaining statement of Proposition \ref{prop:ps}.
\section{Quantitative mode stability for the perturbed equation}\label{appendix:yakov}
	
We explain the changes required in Section 3.1 of \cite{SR14} to prove the estimate (\ref{h2estimate}) of our paper. The reader is advised to read this appendix with the paper \cite{SR14} at hand.  As in \cite{SR14} we let $\mathcal{B}\subset\mathbb{R}$ and let $\mathcal{C} \subset \{(m, \ell) \in \mathbb{Z}\times\mathbb{Z} : \ell \geq |m|\}$ such that
\begin{align}
\sup_{\omega \in \mathcal{B}},  |\omega|+|\omega|^{-1} < \infty, \\
\sup_{(m, \ell)\in\mathcal{C}}|m|+|\ell| < \infty.
\end{align}
Then, for a future integrable function $\psi$, the projection $\mathcal{P}_{\mathcal{B}, \mathcal{C}}$ onto $\mathcal{B}\times\mathcal{C}$ is defined by
\begin{equation}
\mathcal{P}_{\mathcal{B}, \mathcal{C}}\psi = \sum_{m, \ell \in\mathcal{C}}\int_{\mathcal{B}}\, \psi_{m\ell}^{(a\omega)}e^{-i(\omega t - m\phi)}S_{m\ell}(a\omega, \cos\theta)d\omega,
\end{equation}
where $S_{m\ell}$ are defined in \eqref{def:spheroidalharmonics}. A careful selection of $\mathcal{B}$ and $\mathcal{C}$ along with standard asymptotic bounds for spheroidal harmonics implies that for any values of $\omega_{\text{low}}, \omega_{\text{high}}$, and $\epsilon_{\text{width}}$, there exist sets $\mathcal{B}$ and $\mathcal{C}$ such that
\begin{equation}
\int_{-\infty}^\infty \sum_{m, \ell} \mathfrak{H}_2 \leq \int_{\mathcal{H}^+}|\mathcal{P}_{\mathcal{B}, \mathcal{C}}\Psi_{\chi,\mathcal{T}}|^2.
\end{equation}
The right hand side is now bounded by the following Proposition, which is the exact analogue of Theorem 1.9 of \cite{SR14}. Note, however, the extra spacetime term on the right arising from the perturbative term.
\begin{proposition} \label{prop:yakov+}
Consider $\Psi_{\mathcal{T}}$ solving the equation \eqref{def:PhysCutoffEq} and the $\Psi_{\chi,\mathcal{T}}$ defined in (\ref{def:Psichi}) satisfying (\ref{def:WaveEqCutoff}), with Fourier transform $\uchi$ defined in (\ref{def:uchi}).
Then the following estimate holds:
\begin{align}
\int_{\mathcal{H}^+}&|\mathcal{P}_{\mathcal{B}, \mathcal{C}}\Psi_{\chi,\mathcal{T}}|^2 + \int_{\mathcal{I}^+}|\partial\mathcal{P}_{\mathcal{B}, \mathcal{C}}\Psi_{\chi,\mathcal{T}}|^2 + \int_{\mathbb{R}\times (r_0, r_1)\times \mathbb{S}^2}|\mathcal{P}_{\mathcal{B}, \mathcal{C}}\Psi_{\chi,\mathcal{T}}|^2\nonumber  \\
&\leq B(r_0, r_1, C_{\mathcal{B}}, C_{\mathcal{C}}, a, M)\left(\int_{\widetilde{\Sigma}_0}|\partial\Psi_{\mathcal{T}}|^2+ \varepsilon^2\mathcal{S}[\uchi]\right)  \, .
\end{align}
\end{proposition}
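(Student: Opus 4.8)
\textbf{Proof proposal for Proposition \ref{prop:yakov+}.}

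The plan is to reduce the statement to Theorem 1.9 of \cite{SR14} by treating the perturbative term $\varepsilon \chi_{\mathcal{T}} \BfB \Psi_{\mathcal{T}}$ as an additional inhomogeneity. Recall that Whiting's mode stability argument, in the quantitative form of \cite{SR14}, starts from the separated radial ODE and produces, for frequency triples $(\omega, m, \ell)$ in a compact set $\mathcal{B} \times \mathcal{C}$ bounded away from $\omega = 0$, an a priori estimate controlling the transmitted/reflected boundary data of $\widehat{u}^{(a\omega)}_{m\ell}$ (and hence the horizon/infinity flux of $\mathcal{P}_{\mathcal{B},\mathcal{C}}\Psi_{\chi,\mathcal{T}}$, together with a bounded-$r$ spacetime term) in terms of initial data. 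In our setting the radial equation is \eqref{eq:WaveGeneric} with right hand side $H^{(a\omega)}_{m\ell} = (H_\chi)^{(a\omega)}_{m\ell} + (H_{\BfB})^{(a\omega)}_{m\ell}$. First I would observe that the $H_\chi$ part is supported in $\wtau \in [0,1]$, so after transforming to physical space its contribution is bounded by $\int_{\widetilde{\Sigma}_0}|\partial\Psi_{\mathcal{T}}|^2$ exactly as the data term is handled in \cite{SR14} (this is the ``standard'' part, requiring only local energy estimates near $\widetilde\Sigma_0$ together with the fact that $\mathcal{P}_{\mathcal{B},\mathcal{C}}$ is bounded on $L^2$ of a compact frequency set). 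The genuinely new contribution is the $H_{\BfB}$ term.

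For the $H_{\BfB}$ term I would carry out the \cite{SR14} argument verbatim, keeping track of the extra inhomogeneity through every step of their Section 3.1: one multiplies the radial ODE by the various currents used by Shlapentokh-Rothman, integrates, and arrives at boundary terms controlled by initial data plus an error of the form $\int d\rs \, |(\text{multiplier})| \cdot |H_{\BfB}^{(a\omega)}_{m\ell}|$ summed over the compact frequency set. Since all multipliers in \cite{SR14} are bounded on the compact frequency range $\mathcal{B}\times\mathcal{C}$ with weights decaying suitably in $r^\star$, and since the frequency set is bounded away from $\omega = 0$, this error is bounded (after Cauchy--Schwarz and Plancherel, using $\ell^2_m L^2_\omega$ orthogonality of the spheroidal decomposition, cf. the conventions of Section \ref{sec:IntConventions}) by $\varepsilon$ times a weighted spacetime $L^2$ norm of $H_{\BfB}/\varepsilon$ together with lower order terms already absorbed. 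The key point is that $|H_{\BfB}^{(a\omega)}_{m\ell}| \lesssim \varepsilon \frac{\Delta}{(r^2+a^2)^{3/2}} |\rho^2 \widehat{\BfB_{\mathcal{T}}(\mathbf{u}_\chi)}|$ and that the frequency truncation to $\mathcal{B}\times\mathcal{C}$ plus the compact $r$-support of the multipliers means the resulting spacetime integral is dominated by $\varepsilon^2 \mathcal{S}[\uchi]$: indeed $\mathcal{S}[\uchi]$ controls exactly the weighted norm of first derivatives of $\uchi$ appearing in the expansion \eqref{def:BuchiExpansion}, and the convolution with $\widehat{\chi_{\mathcal{T}}}$ (which is bounded in $L^1_\omega$ uniformly in $\mathcal{T}$, as in Proposition \ref{prop:Bfrommain}) does not destroy this, so that Young's inequality in frequency yields the claimed bound with constant independent of $\mathcal{T}$.

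The main obstacle will be bookkeeping rather than conceptual: one must verify that inserting a nonzero right hand side into the \emph{specific} multiplier identities of \cite{SR14} --- in particular those near $\omega_r$-resonances and the ``$T$-energy plus redshift plus $r^p$'' hierarchy used there --- does not produce any error term requiring a \emph{second} derivative of $\uchi$ or any non-integrable weight in $r$. This is plausible because $\BfB$ is first order and the extra convolution in frequency (from the time cutoff $\chi_{\mathcal{T}}$) only involves $\widehat{\chi_{\mathcal{T}}}$, which is Schwartz and supported at $m = 0$; nevertheless the resonant multiplier in \cite{SR14} is delicate and I would check each term individually. A secondary technical point is to confirm that the frequency support restriction (which makes $\varepsilon \int r^{-1-\alpha}\omega^2 |\uchi|^2$-type terms in \eqref{fitt} harmless) is compatible with the structure of $\BfB_{\mathcal{T}}$, but since $\BfB_{\mathcal{T}}$ acts before the frequency projection $\mathcal{P}_{\mathcal{B},\mathcal{C}}$ this causes no difficulty --- one simply estimates $\mathcal{P}_{\mathcal{B},\mathcal{C}} H_{\BfB}$ crudely by $H_{\BfB}$ in $L^2$. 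Once these checks are in place, summing over $(m,\ell) \in \mathcal{C}$ and integrating over $\omega \in \mathcal{B}$, and invoking Plancherel \eqref{id:Plancherel4} to pass back to physical space, yields the stated inequality.
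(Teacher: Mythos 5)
Your overall strategy -- keep the unperturbed radial operator, push $\varepsilon\chi_{\mathcal{T}}\BfB\Psi_{\chi,\mathcal{T}}$ into the inhomogeneity, handle the $H_\chi$ part as in \cite{SR14}, and bound the new $H_{\BfB}$ contribution by Cauchy--Schwarz and Plancherel by $\varepsilon^2\mathcal{S}[\uchi]$ -- is the same as the paper's. But the way you propose to execute it rests on a misreading of what \cite{SR14} actually does. The quantitative mode stability argument is not a multiplier/current argument with a ``$T$-energy plus redshift plus $r^p$'' hierarchy or ``resonant multipliers'' (that is the \cite{DRSR} machinery); it is a Green's function representation formula $\widehat{\mathbf{u}}_\epsilon = W_\epsilon^{-1}\left(\ph_{out,\epsilon}\int_{-\infty}^{r^\star}\ph_{hor,\epsilon}H_\epsilon + \ph_{hor,\epsilon}\int_{r^\star}^{\infty}\ph_{out,\epsilon}H_\epsilon\right)$ built from the homogeneous solutions $\ph_{hor},\ph_{out}$ and a quantitative Wronskian lower bound via Whiting's transformation. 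Since the $\BfB$-term sits on the right hand side, the Wronskian bound is untouched; the genuinely new work -- and the entire content of the paper's appendix -- is justifying that this representation formula survives the limit $\epsilon\to 0$ in $L^2$ when the inhomogeneity is no longer essentially supported near the initial slice. The paper introduces $\Psi_{\chi,\mathcal{T},\epsilon}=e^{-\epsilon t}\chi\Psi_{\mathcal{T}}$ (using future integrability, Proposition \ref{prop:suffi}, to get exponential decay as $r^\star\to\pm\infty$) and adapts Lemmas 3.2 and 3.3 of \cite{SR14}, adding the spacetime term $\varepsilon^2\mathcal{S}[\uchi]$ on the right, with bounds uniform in $\mathcal{T}$ even though $H_{\BfB}$ is supported up to time $2\mathcal{T}$. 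Your proposal never addresses this limiting/validity step, which is where the proof could actually fail.

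A second concrete problem is the justification you give for integrability of the error term. You claim the kernels are ``bounded on the compact frequency range with weights decaying suitably in $r^\star$'' and invoke ``compact $r$-support of the multipliers''; neither is true. The outgoing solution satisfies $\ph_{out}\sim e^{i\omega r^\star}$, so it is merely bounded and certainly not compactly supported, and the convergence of $\int_{r^\star}^{\infty}\ph_{out}H_{\BfB}\,dx^\star$ must come from the decay of $H_{\BfB}$ itself. This is exactly where the hypothesis $\alpha>0$ on $\BfB$ enters: the paper pairs $\left\lVert \Delta^{-1/2}r^{3/2+\alpha}H_{\BfB}\right\rVert_{L^2_{r^\star}}$ against $\left\lVert \Delta^{1/2}r^{-3/2-\alpha}\right\rVert_{L^2_{r^\star}}$, and only then does Plancherel give $\lesssim\varepsilon^2\mathcal{S}[\uchi]$ (with the $r^{-1-\alpha}$ weights in $\mathcal{S}$ matching the $r^{-1-\alpha}$ decay of the $\BfB$ coefficients). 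Incidentally, no delicate convolution estimate is needed here at all: since one only takes an $L^2$ norm of the whole inhomogeneity, Plancherel applied to the physical-space quantity $\BT\mathbf{u}_\chi$ suffices, so your appeal to Young's inequality and the $L^1_\omega$ bound on $\widehat{\chi_{\mathcal{T}}}$ is harmless but beside the point. To repair the proposal you would need to (i) replace the multiplier picture by the representation-formula picture, and (ii) supply the $\epsilon$-regularization and the two convergence lemmas with the weighted Cauchy--Schwarz pairing just described.
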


The proof of Proposition \ref{prop:yakov+} closely follows Section 3 in \cite{SR14}, so we will outline the proof here, but we only go into detail for additional considerations which arise from the perturbative term. We first define, for $\epsilon \geq 0$ (do not confuse $\epsilon$ and $\varepsilon$, the latter being the size of the perturbation term in the wave equation),
\begin{equation}
\Psi_{\chi,\mathcal{T}, \epsilon} = e^{-\epsilon t} \chi \Psi_\mathcal{T}, 
\end{equation}
remarking that this serves the same purpose as $\psi_{\epsilon, \text{\LeftScissors}}$ in the aforementioned work. Since $\chi$ is supported for $\wtau > 0$, for which we have the relation $t \geq |\rs| - C$ for some constant $C$, Proposition \ref{prop:suffi} implies exponential decay of $\Psi_{\chi, \mathcal{T}, \epsilon}$ as $t\to\infty$, and therefore also as $\rs\to\pm\infty$. 

We next set up the decomposed wave equation as in \cite{SR14}. Taking the spheroidal harmonic projections
\begin{equation}
\widehat{\mathbf{u}}_\epsilon = \big((r^2+a^2)^{1/2}\Psi_{\chi,\mathcal{T}, \epsilon}\big)_{m\ell}^{(a\omega)}, \qquad 
\end{equation}
If we define
\begin{align}
E_\epsilon = e^{-\epsilon t}\big(\varepsilon\chi_\mathcal{T}(\BfB \Psi_{\chi , \mathcal{T}}) + 2\nabla^\alpha \chi \nabla_\alpha \Psi_{\mathcal{T}} + (\Box_{g_{a, M}}\chi)\Psi_{\mathcal{T}} - \varepsilon \chi_{\mathcal{T}}\Psi_{\mathcal{T}}\left(\BfB \chi-\BfB^0 \chi\right)\big),
\end{align}
along with
\begin{equation}
F_\epsilon = (E_\epsilon)_{m\ell}^{(a\omega)}, \qquad H_\epsilon = \frac{\Delta}{(r^2+a^2)^{1/2}}F_\epsilon,
\end{equation}
and $\omega_\epsilon$, $V_\epsilon$ as in \cite{SR14} we obtain the decomposed equation 
$
\widehat{\mathbf{u}}_\epsilon^{\prime \prime} + (\omega_\epsilon^2 -V_\epsilon) \widehat{\mathbf{u}}_\epsilon = H_\epsilon.
$
Finally, we define $\ph_{hor}$ and $\ph_{out}$ to be solutions to the homogeneous equation 
$
\ph^{\prime \prime} + (\omega^2 -V) \ph = 0,
$
with the boundary conditions $\ph_{hor} \sim (r-r_+)^{\frac{i(am-2Mr_+\omega)}{r_+-r_-}}$ as $\rs \to -\infty$ and $\ph_{out}\sim e^{i\omega\rs}$ as $\rs\to\infty$, and 
$\ph_{hor, \epsilon}$ and $\ph_{out, \epsilon}$ to be solutions to the homogeneous equation
$
\ph^{\prime \prime} + (\omega_\epsilon^2 -V_\epsilon) \ph = 0
$
with analogous boundary conditions. Then, $\widehat{\mathbf{u}}_\epsilon(\rs)$ is given by the representation formula
\begin{equation}
\widehat{\mathbf{u}}_\epsilon(\rs) = W_\epsilon^{-1}\left(\ph_{out, \epsilon}(\rs)\int_{-\infty}^{\rs} \ph_{hor, \epsilon}H_\epsilon(x^\star)\, dx^\star + \ph_{hor, \epsilon}(\rs)\int_{\rs}^\infty \ph_{out, \epsilon}H_\epsilon(x^\star)\, dx^\star\right)
\end{equation}
proven in Proposition 3.1 of \cite{SR14}, where $W_\epsilon = \ph'_{out, \epsilon} \ph_{hor, \epsilon} -  \ph'_{hor, \epsilon}\ph_{out, \epsilon}$. If we can show that this representation formula continues to hold in $L^2$ for $\epsilon = 0$ (as given in Lemmas 3.4, 3.5 of \cite{SR14}), the analysis of Section 3.2 of that paper implies Proposition \ref{prop:yakov+}. It therefore suffices to prove that the representation formula converges in $L^2$. $L^\infty$ convergence of $\ph_{hor, \epsilon}$ and $\ph_{out, \epsilon}$ in particular allows us to reduce this problem to the $L^2$ convergence results which are similar to Lemmas 3.2 and 3.3 in the aforementioned work. We adapt Lemma 3.2 from \cite{SR14}, noting that we need to add a spacetime term on the right hand side.
\begin{lemma}
Lemma 3.2 of \cite{SR14} holds under the following modification:
\begin{equation}
\limsup_{\epsilon\to 0^+}\int_{\mathcal{B}}\sum_{m, l}\int_{r_+}^\infty |F_\epsilon|^2 r^2\, dr\, d\omega = \int_{\mathcal{B}}\sum_{m, l}\int_{r_+}^\infty |F|^2 r^2\, dr\, d\omega \lesssim \int_{\Sigma_0}|\partial\Psi_{\mathcal{T}}|^2 + \varepsilon^2\mathcal{S}[\uchi] \, .
\end{equation}
\end{lemma}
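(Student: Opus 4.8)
\textbf{Proof proposal for the final Lemma (modification of Lemma 3.2 of \cite{SR14}).}

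The plan is to follow the structure of Lemma 3.2 in \cite{SR14} verbatim, and only carefully track the additional contributions arising from the perturbative term $\varepsilon\chi_{\mathcal{T}}\BfB$. The first step is to recall the definition of $F_\epsilon=(E_\epsilon)^{(a\omega)}_{m\ell}$, where $E_\epsilon$ contains four groups of terms: the perturbative term $\varepsilon\chi_{\mathcal{T}}(\BfB\Psi_{\chi,\mathcal{T}})$, the two ``cutoff of the data'' terms $2\nabla^\alpha\chi\nabla_\alpha\Psi_{\mathcal{T}}+(\Box_{g_{a,M}}\chi)\Psi_{\mathcal{T}}$, and the mixed term $-\varepsilon\chi_{\mathcal{T}}\Psi_{\mathcal{T}}(\BfB\chi-\BfB^0\chi)$. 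For the second and third group of terms, which are supported only in $\wtau\in[0,1]$ and involve no $\varepsilon$-dependent coefficients, the dominated convergence argument of \cite{SR14} applies unchanged (using Proposition \ref{prop:suffi} for the $L^2_t$ integrability and exponential decay in $t$ of $\Psi_{\chi,\mathcal{T},\epsilon}$ as $\epsilon\to 0^+$), and one obtains that these contribute $\lesssim \int_{\widetilde{\Sigma}_0}|\partial\Psi_{\mathcal{T}}|^2$ just as in the unperturbed case.

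The second step handles the two genuinely new groups: $\varepsilon\chi_{\mathcal{T}}\BfB\Psi_{\chi,\mathcal{T}}$ and $-\varepsilon\chi_{\mathcal{T}}\Psi_{\mathcal{T}}(\BfB\chi-\BfB^0\chi)$. The key point is that the $\epsilon\to 0^+$ limit is again handled by dominated convergence using the exponential decay of $e^{-\epsilon t}$ against the polynomially controlled growth (indeed, against the finiteness in Proposition \ref{prop:suffi}), so that the limsup equals the $\epsilon=0$ value; this is identical to the argument for the original term and requires no new ideas. What is new is bounding the resulting $\epsilon=0$ expression $\int_{\mathcal{B}}\sum_{m,\ell}\int_{r_+}^\infty |F|^2r^2\,dr\,d\omega$ by $\varepsilon^2\mathcal{S}[\uchi]$ up to data. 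For this I would use the Plancherel identity \eqref{id:Plancherel4} to convert $\int_{\mathcal{B}}\sum_{m,\ell}\int_{r_+}^\infty |(\chi_{\mathcal{T}}\BfB\Psi_{\chi,\mathcal{T}})^{(a\omega)}_{m\ell}|^2 r^2\,dr\,d\omega$ into a physical-space spacetime integral $\iiiint_{\WM}|\chi_{\mathcal{T}}\BfB\Psi_{\chi,\mathcal{T}}|^2$, noting that restricting the frequency support to the compact range $\mathcal{B}$ only loses a constant. One then expands $\BfB\Psi_{\chi,\mathcal{T}}$ using the assumptions \eqref{Bassumption1}, \eqref{Bassumption2}: every term is $\varepsilon$ times a coefficient decaying in $r$ times a first derivative (or $\Psi$ itself) of $\Psi_{\chi,\mathcal{T}}$. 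The weights in \eqref{Bassumption1}, \eqref{Bassumption2} are exactly matched to the weights appearing in the microlocal energy $\mathcal{S}[\uchi]$ (recalling $\mathbb{I}[\Psi_{\chi,\mathcal{T}}]\lesssim\mathcal{S}[\uchi]$ from \eqref{plai} and that $\Psi_{\chi,\mathcal{T}}=\Psi_{\mathcal{T}}$ on the relevant range, with the $\wtau\in[0,1]$ piece absorbed into $\int_{\widetilde{\Sigma}_0}|\partial\Psi_{\mathcal{T}}|^2$ by a standard local energy estimate and Gronwall), so the integral is bounded by $C\varepsilon^2\mathcal{S}[\uchi]$. The mixed term $-\varepsilon\chi_{\mathcal{T}}\Psi_{\mathcal{T}}(\BfB\chi-\BfB^0\chi)$ is supported in $\wtau\in[0,1]$, so its contribution is bounded by $C\varepsilon^2\int_{\widetilde{\Sigma}_0}|\partial\Psi_{\mathcal{T}}|^2$ via local energy estimates and the boundedness of the coefficients of $\BfB$.

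The third and final step is to note, exactly as in \cite{SR14}, that once this modified Lemma (the analogue of Lemma 3.2) and the unchanged analogue of Lemma 3.3 are in place, the $L^\infty$ convergence of $\phi_{hor,\epsilon}$, $\phi_{out,\epsilon}$ to $\phi_{hor}$, $\phi_{out}$ (which is insensitive to the right-hand side) upgrades the representation formula to hold in $L^2$ at $\epsilon=0$, and the remainder of Section 3.2 of \cite{SR14} goes through verbatim, yielding Proposition \ref{prop:yakov+} and hence \eqref{h2estimate}. The main obstacle I anticipate is purely bookkeeping: making sure that the weights in \eqref{Bassumption1}, \eqref{Bassumption2} — in particular the extra powers of $r$ on the angular components $\BfB^\theta,\BfB^{\tilde\phi}$ and on $\BfB^0$ — precisely absorb into the (possibly degenerate, but here only by $r^{-1-\alpha}$-type weights which are fine since $\mathcal{S}$ also carries such weights) terms of $\mathcal{S}[\uchi]$, and that the $r\wpa_r\BfB$-type bounds are not actually needed here (they are only needed for the convolution estimates elsewhere), so that the crude $L^2$-$L^\infty$ splitting suffices. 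No genuinely hard analysis is involved beyond what is already in \cite{SR14}.
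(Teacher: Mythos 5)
Your proposal is correct and follows essentially the same route as the paper's (very terse) proof: Plancherel back to physical space, the decomposition of the inhomogeneity into the cutoff ($H_\chi$) terms bounded by the data energy as in \cite{SR14} and the $\BfB$-term bounded pointwise via the coefficient assumptions against the weights in $\mathcal{S}[\uchi]$, with the $\epsilon\to 0^+$ equality handled by dominated convergence. Your observations that no derivative bounds on the coefficients of $\BfB$ are needed here and that $\BfB\chi$ only involves $\BfB^{\tilde{\tau}}$ (and so decays faster than $\Box_{g_{a,M}}\chi$) match the remarks in the paper's proof.
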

\begin{proof}
As in \cite{SR14}, one first proves the inequality, the equality then following from the  dominated convergence theorem.
To prove the inequality, one applies Plancherel's theorem (note $F_\epsilon$ is future integrable following in turn from $\Psi_{\mathcal{T}}$ and $\Psi_{\chi,\mathcal{T}}$ being future integrable) and the decomposition \eqref{def:WavePerturbedFreq} for $H_\epsilon$. Then, terms appearing in $\Hchi$ may be be bounded as in \cite{SR14}, noting that $\BfB\chi$ generally decays faster than $\Box_{g_{a, M}}\chi$ in $r$. The result for $\HB$ follows from direct integration along with the Cauchy-Schwarz inequality.
\end{proof}
We finally show how to adapt Lemma 3.3 of \cite{SR14}, which implies convergence of $\widehat{\mathbf{u}}_\epsilon$ to $\widehat{\mathbf{u}}$ as $
\epsilon \rightarrow 0$ for large $r$. There it turns out that the extra terms coming from $\BfB$ in the decomposition of $H$ have enough radial decay to estimate them with a direct application of Plancherel and the Cauchy-Schwarz inequality, if we allow for an ($\varepsilon$ small) additional spacetime term on the right. 
\begin{lemma}
Lemma 3.3 of \cite{SR14} holds with the following modification:
\begin{equation}
\left\lVert\int_{\rs}^\infty \ph_{out}(x^\star)H(x^\star)\, dx^\star\right\rVert_{L^2_{\omega\in\mathcal{B}, (m, \ell)\in\mathcal{C}}}^2\leq B(\rs, C_{\mathcal{B}}, C_{\mathcal{C}})\left(\int_{\Sigma_0}|\partial\Psi_{\mathcal{T}}|^2 + \varepsilon^2\mathcal{S}[\uchi]\right)
\end{equation}
and 
\begin{equation}
\lim_{\epsilon\to 0}\int_{\rs}^\infty \ph_{out, \epsilon}(x^\star)H_\epsilon(x^\star) \, dx^\star = \int_{\rs}^\infty \ph_{out}(x^\star)H(x^\star)\, dx^\star
\end{equation}
in $L^2_{\omega \in \mathcal{B}, (m, \ell)\in\mathcal{C}}$.
\end{lemma}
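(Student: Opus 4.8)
The plan is to mirror the proof of Lemma 3.3 of \cite{SR14}, splitting $H = \Hchi + \HB$ and treating the two pieces separately. For the $\Hchi$-part the argument of \cite{SR14} applies verbatim: those terms are built from $\chi$, $\nabla\chi$, $\Box_{g_{a,M}}\chi$ and a $\BfB\chi$-term which decays at least as fast in $r$ as $\Box_{g_{a,M}}\chi$, all supported in the compact strip $\widetilde{\tau}\in[0,1]$, so their contribution to the integral $\int_{\rs}^\infty \ph_{out}H\,dx^\star$ is bounded by $\int_{\widetilde{\Sigma}_0}|\partial\Psi_{\mathcal{T}}|^2$ exactly as there. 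The genuinely new term is $\HB = \varepsilon\,\tfrac{\Delta\rho^2}{\rweight^4}\,\mathcal{F}(\BT\uchi)$, and the point is that the prefactor $\tfrac{\Delta\rho^2}{\rweight^4}$ together with the decay assumptions \eqref{Bassumption1}, \eqref{Bassumption2} on the components of $\BfB$ (and the bound $|\Rs\rweight/\rweight|\lesssim 1$ entering $\BT$) equips this term with a surplus $r^{-\alpha}$ of radial decay beyond what $\mathcal{S}[\uchi]$ carries, so it can be estimated by brute force.

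Concretely, I would first use that $\sup_{x^\star\geq\rs}|\ph_{out}(x^\star;\omega,m,\ell)|$ is finite uniformly over $\omega\in\mathcal{B}$, $(m,\ell)\in\mathcal{C}$ --- inherited from \cite{SR14}, via the outgoing normalisation and the bounded-frequency restriction --- together with Minkowski's integral inequality, to reduce to bounding
\[
\int_{\rs}^\infty \big\lVert \HB(x^\star,\cdot)\big\rVert_{L^2_{\omega\in\mathcal{B},(m,\ell)\in\mathcal{C}}}\,dx^\star .
\]
For each fixed radius, Plancherel in $t$ and in the spheroidal harmonics (the restriction to $\mathcal{B}\times\mathcal{C}$ only helping) bounds $\lVert\HB(x^\star,\cdot)\rVert^2_{L^2}$ by the physical-space integral $\int_{-\infty}^\infty dt\int_{\mathbb{S}^2}d\sigma\,\big|\varepsilon\tfrac{\Delta\rho^2}{\rweight^4}\BT\mathbf{u}_\chi\big|^2$ at that radius; this step is legitimate precisely because $\BT$ (like $\chi_{\mathcal{T}}$) is an honest physical-space operator, so $\mathcal{F}(\BT\mathbf{u}_\chi)$ really is the transform of the physical-space function $\BT\mathbf{u}_\chi$, and the frequency convolution hidden in $\HB$ causes no trouble once the factor $\ph_{out}$ has been peeled off. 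Inserting \eqref{Bassumption1}, \eqref{Bassumption2} and using $\tfrac{\Delta\rho^2}{\rweight^4}\lesssim 1$ gives the pointwise bound $\big|\varepsilon\tfrac{\Delta\rho^2}{\rweight^4}\BT\mathbf{u}_\chi\big|^2 \lesssim \varepsilon^2 r^{-1-\alpha}\,\mathcal{D}$, where $\mathcal{D}=\mathcal{D}(t,r,\theta,\phi)$ is a density whose integral $\int dt\,dr\,d\sigma\,\mathcal{D}$ is, by Plancherel and \eqref{plai}, controlled by $\mathcal{S}[\uchi]$. A Cauchy--Schwarz in $x^\star$ splitting the factor $r^{-1-\alpha}$ evenly --- using $\int_{\rs}^\infty r^{-1-\alpha}\,dx^\star<\infty$ --- then yields $\int_{\rs}^\infty\lVert\HB(x^\star,\cdot)\rVert_{L^2}\,dx^\star \lesssim \varepsilon\, B^{1/2}(\varepsilon^2\mathcal{S}[\uchi])^{1/2}$, and squaring and combining with the $\Hchi$-part delivers the first displayed bound with $B=B(\rs, C_{\mathcal{B}}, C_{\mathcal{C}}, a, M)$.

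For the convergence statement as $\epsilon\to 0^+$, I would argue by dominated convergence. All the estimates above hold uniformly in $\epsilon\geq 0$ with $H$, $\uchi$ replaced by $H_\epsilon$ and $\widehat{\mathbf{u}}_\epsilon$: the weighted energies of $\Psi_{\chi,\mathcal{T},\epsilon}=e^{-\epsilon t}\chi\Psi_{\mathcal{T}}$ are controlled by those of $\Psi_{\chi,\mathcal{T}}$, using the exponential decay of $\Psi_{\chi,\mathcal{T},\epsilon}$ in $t$ --- hence in $|\rs|$ on the support of $\chi$, where $t\gtrsim|\rs|-C$ --- which follows from Proposition \ref{prop:suffi}. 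Thus the family $x^\star\mapsto\ph_{out,\epsilon}(x^\star)H_\epsilon(x^\star)$ is dominated in $L^1_{x^\star}L^2_{\omega\in\mathcal{B},(m,\ell)\in\mathcal{C}}$ uniformly in $\epsilon$. Combining this with the $L^\infty$-convergence $\ph_{out,\epsilon}\to\ph_{out}$ established earlier in the appendix, and the $L^2$-convergence $H_\epsilon\to H$ (termwise for the $\Hchi$-part as in \cite{SR14}, and by direct continuity in $\epsilon$ for the $\HB$-part), dominated convergence yields the claimed limit in $L^2_{\omega\in\mathcal{B},(m,\ell)\in\mathcal{C}}$; this closes the representation-formula argument and hence gives Proposition \ref{prop:yakov+} and in turn \eqref{h2estimate}.

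The main obstacle I anticipate is bookkeeping rather than conceptual difficulty: one must verify that the radial weights genuinely line up so that only \emph{first} derivatives of $\Psi_{\chi,\mathcal{T}}$ --- i.e.\ only $\mathcal{S}[\uchi]$, which already appears on the right-hand side of the degenerate mode-stability estimate --- are needed, the surplus $r^{-\alpha}$ in \eqref{Bassumption1}, \eqref{Bassumption2} being exactly what buys this, and that peeling off $\ph_{out}$ before invoking Plancherel is what circumvents the fact that the convolved quantity $\mathcal{F}(\BT\uchi)$ cannot be paired in frequency with the frequency-dependent $\ph_{out}$ directly.
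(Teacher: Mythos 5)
Your proposal is correct and follows essentially the same route as the paper: split $H=\Hchi+\HB$, defer the $\Hchi$-part to the argument of \cite{SR14}, and control the $\HB$-part by a weighted Cauchy--Schwarz in $x^\star$ (exploiting the surplus $r^{-\alpha}$ decay from \eqref{Bassumption1}, \eqref{Bassumption2}) followed by Plancherel to produce the $\varepsilon^2\mathcal{S}[\uchi]$ term, with the $\epsilon\to 0$ limit handled by dominated convergence. The only cosmetic difference is that you peel off $\ph_{out}$ via a uniform sup bound over $\mathcal{B}\times\mathcal{C}$ and Minkowski's inequality, whereas the paper first replaces $\ph_{out}$ by $e^{i\omega x^\star}$ up to an $O(r^{-1})$ error; both are valid for the $\HB$-piece.
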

\begin{proof}
As in \cite{SR14} we may replace $\ph_{out}$ with its highest order part, and reduce this to Parseval's inequality in $x^\star$. We write
\begin{equation}
|\ph_{out}(\rs) - e^{i\omega \rs}|\leq Cr^{-1}
\end{equation}
and bound the terms with $r^{-1}$ decay as before. It suffices to prove the bound
\begin{equation}
\int_{\mathcal{B}}\sum_{m, \ell\in\mathcal{C}}\left|\int_{\rs}^\infty e^{i\omega x^\star}H(x^\star)\, dx^\star\right|^2 \, d\omega\leq B(\rs, C_{\mathcal{B}}, C_{\mathcal{C}})\left(\int_{\Sigma_0}|\partial\Psi_{\mathcal{T}}|^2 + \varepsilon^2\mathcal{S}[\uchi]\right).
\end{equation}
We again split $H$ as in \eqref{def:WavePerturbedFreq}, additionally decomposing in the spheroidal harmonics, and note that the theorem for $H_\chi$ follows as in \cite{SR14}, noting again that $\BfB\chi$ satisfies better decay bounds than $\Box_{g_{a, M}}\chi$ in $r$. To bound $H_\BfB$ terms, we note that the Cauchy-Schwarz inequality implies
\begin{equation}
\left|\int_{\rs}^\infty e^{i\omega x^\star}H_\BfB(x^\star)\, dx^\star\right| \leq \left\lVert \Delta^{-1/2} r^{3/2+\alpha} H_\BfB \right\rVert_{L^2_{\rs}}\left\lVert \Delta^{1/2}r^{-3/2-\alpha}\right\rVert_{L^2_{\rs}}.
\end{equation}
Plancherel's theorem and direct integration imply
\begin{equation}
\int_{\mathcal{B}}\sum_{m, \ell\in\mathcal{C}}\left\lVert \Delta^{-1/2} r^{3/2+\alpha} H_\BfB \right\rVert^2_{L^2_{\rs}} \lesssim \varepsilon^2\mathcal{S}[\uchi], \qquad \left\lVert \Delta^{1/2}r^{-3/2-\alpha}\right\rVert_{L^2_{\rs}} \lesssim C_\alpha.
\end{equation}
The remainder of the proof follows as in \cite{SR14}.
\end{proof}

\printindex
\newpage
\bibliographystyle{plain}
\bibliography{Bibliography}
\end{document}